\definecolor{amethyst}{rgb}{0.6, 0.4, 0.8}
\definecolor{alizarin}{rgb}{0.82, 0.1, 0.26}
\definecolor{darkorchid}{rgb}{0.6, 0.2, 0.8}
\definecolor{lcolor}{rgb}{0.6,0.3,0.3}
\definecolor{caribbeangreen}{rgb}{0.0, 0.8, 0.6}
\definecolor{flamingopink}{rgb}{0.99, 0.56, 0.67}
\definecolor{hollywoodcerise}{rgb}{0.96, 0.0, 0.63}
\definecolor{airforceblue}{rgb}{0.36, 0.54, 0.66}
\definecolor{applegreen}{rgb}{0.55, 0.71, 0.0}
\definecolor{ballblue}{rgb}{0.13, 0.67, 0.8}
\definecolor{brinkpink}{rgb}{0.98, 0.38, 0.5}
\definecolor{bluegray}{rgb}{0.4, 0.6, 0.8}
\definecolor{impx}{rgb}{0.96, 0.0, 0.63}
\definecolor{junglegreen}{rgb}{0.16, 0.67, 0.53}
\definecolor{applegreen}{rgb}{0.55, 0.71, 0.0}
\definecolor{darkpastelgreen}{rgb}{0.01, 0.75, 0.24}
\definecolor{parisgreen}{rgb}{0.31, 0.78, 0.47}
\definecolor{ufogreen}{rgb}{0.24, 0.82, 0.44}
\definecolor{bluebell}{rgb}{0.64, 0.64, 0.82}
\definecolor{cornflowerblue}{rgb}{0.39, 0.4, 0.93}
\definecolor{pastelred}{rgb}{1.0, 0.41, 0.38}
\definecolor{carminepink}{rgb}{0.92, 0.3, 0.26}
\definecolor{darkpastelpurple}{rgb}{0.59, 0.44, 0.84}
\definecolor{cerulean}{rgb}{0.0, 0.48, 0.65}
\definecolor{darkcerulean}{rgb}{0.03, 0.27, 0.49}
\definecolor{ceruleanblue}{rgb}{0.16, 0.32, 0.75}
\definecolor{green(munsell)}{rgb}{0.0, 0.66, 0.47}
\definecolor{bleudefrance}{rgb}{0.19, 0.55, 0.91}
\definecolor{antiquefuchsia}{rgb}{0.57, 0.36, 0.51}
\definecolor{deeplilac}{rgb}{0.6, 0.33, 0.73}
\definecolor{ferngreen}{rgb}{0.31, 0.47, 0.26}
\definecolor{qcolor}{rgb}{0.19,0.55,0.91}
\definecolor{hcolor}{rgb}{0.9,0.2,0.5}
\definecolor{scolor}{rgb}{0.9,0.5,0.2}
\definecolor{triangleHor}{rgb}{0.93, 0.53, 0.18}
\definecolor{triangleVer}{rgb}{0.0, 0.5, 1.0}
\newcommand{\tc}[2]{#2}
\newcommand{\NEW}[1]{\tc{lcolor}{#1}}
\newcommand{\step}{\vskip 3mm}
\newcommand{\p}{\partial}
\newcommand{\Z}{\mathbbm{Z}}
\newcommand{\N}{\mathbbm{N}}
\newcommand{\R}{\mathbbm{R}}
\newcommand{\C}{\mathbbm{C}}
\newcommand{\one}{\mathbbm{1}}
\newcommand{\eps}{\epsilon}
\newcommand{\vareps}{\varepsilon}
\DeclareMathOperator{\Hom}{Hom}
\DeclareMathOperator{\so}{so}
\DeclareMathOperator{\End}{End}
\DeclareMathOperator{\Der}{Der}
\renewcommand{\div}{\textnormal{div}}
\newcommand{\DIV}{\textnormal{DIV}}
\newcommand{\curl}{\textnormal{curl}}
\newcommand{\supp}{\textnormal{supp}}
\renewcommand{\subset}{\subseteq}
\renewcommand{\supset}{\supseteq}
\newcommand{\RNum}[1]{\mathrm{\uppercase\expandafter{\romannumeral #1\relax}}}
\newcommand{\proofheader}[1]{\textit{#1}}
\newcommand{\claimheader}[1]{#1}
\DeclareMathOperator{\Lie}{\mathcal{L}}
\renewcommand{\diamond}{\mathcal{D}}
\newcommand{\diamondaux}{\smash{\slashed{\mathcal{D}}}}
\newcommand{\nullinfaux}{\smash{\slashed{\nullinf}}}
\newcommand{\diamonddata}{\underline{\diamond\mkern-4mu}\mkern4mu }
\newcommand{\Udata}{\underline{U\mkern-4mu}\mkern4mu }
\newcommand{\diamondy}{\mathcal{D}'}
\newcommand{\gS}{g_{S^3}}
\newcommand{\gmink}{\eta}
\newcommand{\nullinf}{\new{\mathscr{I}}}
\newcommand{\pnullinf}{\mathscr{I}_{-}}
\newcommand{\fnullinf}{\mathscr{I}_{+}}
\newcommand{\fpnullinf}{\mathscr{I}_{\pm}}
\newcommand{\fptimeinf}{i_{\pm}}
\newcommand{\nullgen}{\tc{caribbeangreen}{h}}
\newcommand{\ynullgen}{\tc{caribbeangreen}{\smash{\slashed{h}}}}
\newcommand{\s}{\tc{caribbeangreen}{\mathfrak{s}}}
\newcommand{\muW}{\mu_{\Dspop}}
\newcommand{\muWs}{\mu_{\Dspop}'}
\newcommand{\muDu}{\mu_{\Dspdata}}
\newcommand{\cyl}{\mathbb{E}}
\newcommand{\gcyl}{g_{\cyl}}
\newcommand{\mucyl}{\mu_{\cyl}}
\newcommand{\mucylform}{\tilde\mu_{\cyl}}
\newcommand{\mucylS}{\mu_{S^3}}
\newcommand{\Kil}{\tc{cornflowerblue}{\mathfrak{K}}}
\newcommand{\KilBasis}{\tc{blue}{\zeta}}
\newcommand{\KilEl}{\tc{blue}{\zeta}}
\newcommand{\ConfKil}{\tc{cornflowerblue}{\mathfrak{K}_{\textnormal{conf}}}}
\newcommand{\transl}{\tc{cornflowerblue}{\mathfrak{t}}}
\newcommand{\boosts}{\tc{cornflowerblue}{\mathfrak{b}}}
\newcommand{\dens}[1]{|\Omega|^{#1}}
\newcommand{\I}{\mathcal{I}}
\newcommand{\It}{\tilde{\mathcal{I}}}
\newcommand{\gx}{\mathfrak{g}}
\newcommand{\lx}{\mathfrak{L}}
\newcommand{\anchor}{\rho}
\newcommand{\anchorg}{\rho_{\gx}}
\newcommand{\anchorL}{\rho_{\lx}}
\newcommand{\OmegaC}{\Omega_{\C}}
\newcommand{\otimesCinf}{\otimes_{C^\infty}}
\newcommand{\dg}{d_{\gx}}
\newcommand{\dI}{d_{\I}}
\newcommand{\ddR}{d}
\newcommand{\injaux}{\tilde{\inj}}
\newcommand{\dl}{d_{\lx}}
\newcommand{\uo}{u_0}
\newcommand{\uI}{u_{\I}}
\newcommand{\up}{u_{+}}
\newcommand{\um}{u_{-}}
\newcommand{\upm}{u_{\pm}}
\newcommand{\upbar}{\bar{u}_{+}}
\newcommand{\Scal}{S}
\newcommand{\Scalg}{\Scal^{\gx}}
\newcommand{\ScalL}{\Scal^{\lx}}
\newcommand{\ScalI}{\Scal^{\I}}
\newcommand{\gxdata}{\underline{\mathfrak{g}}}
\newcommand{\Pconstraints}{\underline{P}}
\newcommand{\PconstraintsG}{\underline{P}_{\textnormal{G}}}
\newcommand{\Thom}{\mathrm{T}_{\mathrm{h}}}
\newcommand{\ghom}{g_{\mathrm{h}}}
\newcommand{\Iinter}{\tc{cornflowerblue}{\mathsf{i}}}
\newcommand{\Iipr}[3]{\langle #2,#3 \rangle_{\I^{#1}}}
\newcommand{\Iiprpm}[3]{\langle #2,#3 \rangle_{\I_{\pm}^{#1}}}
\newcommand{\Iiprp}[3]{\langle #2,#3 \rangle_{\I_{+}^{#1}}}
\newcommand{\Iiprm}[3]{\langle #2,#3 \rangle_{\I_{-}^{#1}}}
\newcommand{\Oipr}[3]{\langle #2,#3 \rangle_{\Omega^{#1}}}
\DeclareMathOperator{\sgn}{sgn}
\newcommand{\eGT}{\mathbf{t}}
\newcommand{\eGB}{\mathbf{b}}
\newcommand{\eGI}{\mathbf{i}}
\newcommand{\eGO}{\mathbf{o}}
\newcommand{\eGg}{\mathbf{e}}
\newcommand{\eT}{\smash{\slashed{\mathbf{t}}}}
\newcommand{\eB}{\smash{\slashed{\mathbf{b}}}}
\newcommand{\eO}{\smash{\slashed{\mathbf{o}}}}
\newcommand{\eI}{\smash{\slashed{\mathbf{i}}}}
\newcommand{\eg}{\smash{\slashed{\mathbf{e}}}}
\newcommand{\cyclind}{\textnormal{cycl}}
\newcommand{\Ieps}{}
\newcommand{\jj}{\tc{orange}{j}}
\newcommand{\fctfol}{\tc{brinkpink}{\phi}}
\newcommand{\framechange}[1]{\tc{darkpastelpurple}{#1}}
\newcommand{\sig}{\tc{blue}{\lambda}}
\newcommand{\epsO}{\tc{orange}{\eps_0}}
\newcommand{\Csmallmain}{\tc{lcolor}{\eps}}
\newcommand{\Cinmain}{\tc{lcolor}{b}}
\newcommand{\CinmainHigher}{\tc{lcolor}{b'}}
\newcommand{\slCinmainHigher}{\NEW{\tilde{b}}}
\newcommand{\Clargemain}{\tc{lcolor}{C}}
\newcommand{\ClargemainHigher}{\tc{lcolor}{C'}}
\newcommand{\Clargecpt}{\Clarge_{\textnormal{cpt}}}
\newcommand{\Csmallcpt}{\Csmall_{\textnormal{cpt}}}
\newcommand{\CFsob}{\tc{amethyst}{C_0}}
\newcommand{\Clk}{\tc{amethyst}{C_1}}
\newcommand{\Ca}{\tc{amethyst}{C_2}}
\newcommand{\Caprime}{\tc{amethyst}{C_2'}}
\newcommand{\zzeta}{\tc{lcolor}{\mathfrak{z}}}
\newcommand{\zzetadata}{\tc{orange}{\underline{\smash{\mathfrak{z}}}}}
\newcommand{\zz}{\tc{lcolor}{z}}
\renewcommand{\tt}{\tc{lcolor}{t}}
\newcommand{\ttcoord}{\tc{lcolor}{\mathfrak{t}}}
\newcommand{\nn}{\tc{magenta}{n}}
\renewcommand{\frame}[1]{\tc{ballblue}{F_{#1}}}
\newcommand{\dualframe}[1]{\tc{ballblue}{F_{#1}^*}}
\DeclareMathOperator{\Ric}{Ric}
\newcommand{\Omegafut}{\Omega_{\vee}^1}
\newcommand{\cone}{\Gamma}
\newcommand{\zzmax}{\zz_{m}}
\newcommand{\spaceinf}{i_0}
\newcommand{\timeinf}{i_+}
\newcommand{\new}[1]{\tc{red}{#1}}
\newcommand{\orient}[1]{\tc{magenta}{#1}}
\newcommand{\SPAN}{\textnormal{span}}
\newcommand{\diag}{\textnormal{diag}}
\newcommand{\intermult}{\iota}
\newcommand{\tint}{\textstyle\int}
\newcommand{\tsum}{\textstyle\sum}
\newcommand{\Bil}{\beta}
\newcommand{\gxG}{\mathfrak{g}_{\textnormal{G}}}
\newcommand{\gxdataG}{\gxdata_{\textnormal{G}}}
\newcommand{\OmegaG}{\Omega_{\textnormal{G}}}
\newcommand{\BilG}{\mathrm{B}_{\mathrm{G}}}
\newcommand{\IG}{\I_{\textnormal{G}}}
\newcommand{\OmegaBil}{\Bil_{\Omega}}
\newcommand{\OmegaTBil}{\Bil_{\transl}}
\newcommand{\OmegaBBil}{\Bil_{\boosts}}
\newcommand{\IBil}{\Bil_{\I}}
\newcommand{\gBil}{\Bil_{\gx}}
\newcommand{\ngG}{\tc{lcolor}{n}}
\renewcommand{\ng}{\tc{lcolor}{m}}
\newcommand{\etay}{\eta'}
\newcommand{\V}{V}
\newcommand{\Vd}{V_*}
\newcommand{\Vdpm}{V_{\pm}}
\newcommand{\Vdp}{V_{+}}
\newcommand{\Xd}{X_*}
\newcommand{\homMfd}{M}
\newcommand{\homMfddata}{\underline{M}}
\newcommand{\Mcpt}{\mathcal{C}}
\newcommand{\NN}{\tc{gray}{N}}
\newcommand{\Cpos}{\tc{bluegray}{q}}
\newcommand{\muM}{\mu_{\homMfd}}
\newcommand{\muMform}{\tilde\mu_{\homMfd}}
\newcommand{\muMz}{\mu'_{\homMfd}}
\newcommand{\CcomTang}[1]{\kappa^{\tang}_{#1}}
\newcommand{\CcomTransv}[1]{\kappa^{0}_{#1}}
\newcommand{\Ccomp}{\slashed{\kappa}}
\newcommand{\current}{\mathbf{j}}
\newcommand{\countzero}{n_0}
\newcommand{\indset}{\mathsf{I}}
\newcommand{\indI}{\tc{orange}{I}}
\newcommand{\indK}{\tc{orange}{K}}
\newcommand{\indJ}{\tc{orange}{J}}
\renewcommand{\a}{\tc{cornflowerblue}{a}}
\newcommand{\AmatLin}{\a}
\newcommand{\AmatBil}{\tc{cornflowerblue}{A}}
\newcommand{\Amat}{\tc{cornflowerblue}{a}_{\tc{cornflowerblue}{0}}}
\newcommand{\LMat}{\tc{cornflowerblue}{L}}
\newcommand{\Fvec}{\tc{cornflowerblue}{F}}
\newcommand{\BSHS}{\tc{cornflowerblue}{B}}
\newcommand{\Fconst}{\mathfrak{F}}
\newcommand{\udata}{\underline{u}}
\newcommand{\kprop}{\mathsf{P}}
\newcommand{\kk}{k}
\newcommand{\kerr}{\mathrm{K}}
\newcommand{\kerrdata}{\underline{\mathrm{K}}}
\newcommand{\Cb}{C_b}
\newcommand{\sCb}{\smash{\slashed{C}}_b}
\newcommand{\nosCb}{\tc{brinkpink}{C}_b}
\newcommand{\Hb}{H_b} 
\newcommand{\sHb}{\smash{\slashed{H}}_b}
\newcommand{\nosHb}{\tc{brinkpink}{H}_b}
\newcommand{\sC}{\smash{\slashed{C}}}
\newcommand{\nosC}{\tc{brinkpink}{C}}
\newcommand{\sH}{\smash{\slashed{H}}}
\newcommand{\nosH}{\tc{brinkpink}{H}}
\renewcommand{\H}{\tc{lcolor}{H}}
\newcommand{\Ht}{H_{\tang}}
\newcommand{\Ct}{C_{\tang}}
\newcommand{\sCX}{\smash{\slashed{\tc{cornflowerblue}{C}}}} %checked 3
\newcommand{\CX}{\tc{cornflowerblue}{C}} %checked 3
\newcommand{\sHX}{\smash{\slashed{\tc{cornflowerblue}{H}}}} %checked 3
\newcommand{\HX}{\tc{cornflowerblue}{H}} %checked 3
\newcommand{\Dspcl}{\tc{darkpastelpurple}{\slashed{\Delta}}}
\newcommand{\Dspop}{\tc{darkpastelpurple}{\Delta}}
\newcommand{\Dspdata}{\tc{darkpastelpurple}{\underline{\Delta}}}
\newcommand{\CM}{\tc{orange}{c_{\homMfd}}}
\newcommand{\LdivestNEW}[1]{\tc{red}{\ell}_{#1}}
\newcommand{\tang}{\tc{cornflowerblue}{T}}
\newcommand{\CLA}{\tc{orange}{b}}
\definecolor{babyblue}{rgb}{0.54, 0.81, 0.94}
\newcommand{\inj}{\tc{babyblue}{\sigma}}
\newcommand{\otimesRR}{\otimes_{\R}}
\newcommand{\cc}{\tc{ballblue}{c}}
\newcommand{\Vconst}{\mathcal{H}}
\newcommand{\epspower}{\tc{brinkpink}{\gamma}}
\newcommand{\CsmallGR}{\tc{junglegreen}{\eps}}
\newcommand{\ClargeGR}{\tc{junglegreen}{C}}
\newcommand{\CinGR}{\tc{junglegreen}{b}}
\newcommand{\HdataNEW}{\tc{bleudefrance}{H}_{\tc{bleudefrance}{\textnormal{data}}}}
\newcommand{\CsmallCpt}{\tc{bleudefrance}{\eps}}
\newcommand{\ClargeCpt}{\tc{bleudefrance}{C}}
\newcommand{\CHigherCpt}{\tc{bleudefrance}{b}}
\newcounter{counterqa}
\newcommand{\Pext}{\tc{junglegreen}{\mathcal{E}}}
\newcommand{\Pextbulk}{\tc{junglegreen}{\mathcal{E}_{\textnormal{bulk}}}}
\newcommand{\zzfix}{\zz_*}
\newcommand{\sfix}{s_*}
\newcommand{\refl}{\mathrm{R}}
\newcommand{\reflg}{\mathrm{R}^{\gx}}
\newcommand{\refldatag}{\underline{\mathrm{R}}^{\gx}}
\newcommand{\qmink}{\tc{flamingopink}{q_{0}}}
\newcommand{\deltamink}{\tc{flamingopink}{\delta_{0}}}
\newcommand{\bbulk}{\tc{flamingopink}{b}}
\newcommand{\Cbulk}{\tc{brinkpink}{C}}
\newcommand{\taufix}{\tc{brinkpink}{\tau_*}}
\newcommand{\PP}{\mathrm{P2}}
\newcommand{\taum}{\tau_{m}}
\newcommand{\taumax}{\tau_{1}}
\newcommand{\degu}{k}
\newcommand{\degv}{k'}
\newcommand{\degz}{k''}
\newcommand{\degomega}{q}
\newcommand{\degomegap}{q'}
\newcommand{\smallconstant}{\tc{orange}{\rho}}
\newcommand{\Cname}[1]{\tc{blue}{#1}}
\newcommand{\Capply}[1]{\mathscr{#1}}
\newcommand{\epsapply}[1]{\tc{blue}{\vareps}}
\newcommand{\CqEE}{\tc{lcolor}{C}}
\newcommand{\CqEEApp}{\Cname{\mathscr{C}}}
\newcommand{\Csmall}{\tc{darkcerulean}{\eps}}
\newcommand{\Clarge}{\tc{darkcerulean}{C}}
\newcommand{\CHigherIn}{\tc{junglegreen}{b}'}
\newcommand{\slashCHigherIn}{\NEW{\tilde{b}}}
\newcommand{\zcut}{{\tc{blue}{\zz_{*}}}}
\newcommand{\ALinM}{\tc{amethyst}{a}}
\newcommand{\amink}{\tc{amethyst}{a}}
\newcommand{\Amink}{\tc{amethyst}{A}}
\newcommand{\Lmink}{\tc{amethyst}{L}}
\newcommand{\Bmink}{\tc{amethyst}{B}}
\newcommand{\SFmink}{\tc{amethyst}{\beta}}
\newcommand{\ginj}{\tc{scolor}{G}}
\newcommand{\conj}{\tc{scolor}{J}}
\newcommand{\ideg}{k}
\newcommand{\outind}{i}
\newcommand{\deltabulk}{\tc{brinkpink}{\delta_0}}
\newcommand{\qq}{\tc{junglegreen}{q}}
\newcommand{\uf}{\underline{\smash{f}}}
\newcommand{\Cauxmain}{\tc{brinkpink}{C}}
\newcommand{\LL}{L}
\newcommand{\VV}{\tc{orange}{H}}
\newcommand{\MdimNEW}{\tc{orange}{m}} % \MdimNEW = \Mdim-1
\newcommand{\mataux}{\alpha}
\newcommand{\qminkaux}{q_0'}
\newcommand{\bb}{\tc{red}{b}'}
\newcommand{\xivec}{|\vec{\smash{\xi}\mathclap{\phantom{i}}}|}
\newtheoremstyle{mytheoremstyle}
{14pt}% space above
{14pt}% space below
{\itshape}% body font
{20pt}% indent amount
{\bfseries}% theorem head font
{.}% punctuation after theorem head
{.5em}% space after theorem head
{}% theorem head spec
\newtheoremstyle{myremarkstyle}
{10pt}% space above
{10pt}% space below
{}% body font
{20pt}% indent amount
{\itshape}% theorem head font
{.}% punctuation after theorem head
{.5em}% space after theorem head
{}% theorem head spec
\newtheoremstyle{myproofstyle}
{12pt}% space above
{12pt}% space below
{}% body font
{20pt}% indent amount
{\bf\itshape}% theorem head font
{.}% punctuation after theorem head
{.5em}% space after theorem head
{}% theorem head spec
\newtheoremstyle{mycommentstyle}
{5pt}% space above
{5pt}% space below
{\footnotesize\sffamily\color{airforceblue}}% body font
{0pt}% indent amount
{}% theorem head font
{}% punctuation after theorem head
{.5em}% space after theorem head
{}% theorem head spec
\newtheoremstyle{myissuestyle}
{14pt}% space above
{14pt}% space below
{\itshape\color{magenta}}% body font
{20pt}% indent amount
{\bfseries}% theorem head font
{.}% punctuation after theorem head
{.5em}% space after theorem head
{}% theorem head spec
\newtheoremstyle{myquestionstyle}
{14pt}% space above
{14pt}% space below
{\color{airforceblue}}% body font
{20pt}% indent amount
{\bfseries}% theorem head font
{.}% punctuation after theorem head
{.5em}% space after theorem head
{}% theorem head spec
\theoremstyle{mytheoremstyle}
\newtheorem{theorem}{Theorem}
\newtheorem*{theorem*}{Theorem}
\newtheorem{definition}{Definition}
\newtheorem{lemma}{Lemma}
\newtheorem{prop}[theorem]{Proposition}
\newtheorem{corollary}[lemma]{Corollary}
\newtheorem{convention}{Convention}
\theoremstyle{myissuestyle}
\theoremstyle{myquestionstyle}
\theoremstyle{myremarkstyle}
\newtheorem{remark}{Remark}
\theoremstyle{myproofstyle}
\newcommand{\coverfigureintro}{
\begin{tikzpicture}[inner sep=0pt,scale=1]
\def\coordleft{
(-3.25133,0.748668) (-3.25133,0.748668) (-3.15133,0.68591) (-3.05133,0.622794) (-2.95133,0.559559) (-2.85133,0.496437) (-2.75133,0.433666) (-2.65133,0.371487) (-2.55133,0.310152) (-2.45133,0.249926) (-2.35133,0.191088) (-2.25133,0.133942) (-2.15133,0.0788104) (-2.05133,0.0260449) (-2.,0.)
}
\def\coordright
{(3.25133,0.748668) (3.25133,0.748668) (3.15133,0.68591) (3.05133,0.622794) (2.95133,0.559559) (2.85133,0.496437) (2.75133,0.433666) (2.65133,0.371487) (2.55133,0.310152) (2.45133,0.249926) (2.35133,0.191088) (2.25133,0.133942) (2.15133,0.0788104) (2.05133,0.0260449) (2.,0.)}
\def\coordleftDs{(-3.78973,0.210274) (-3.78973,0.210274) (-3.68973,0.159174) (-3.58973,0.10807) (-3.48973,0.0571956) (-3.38973,0.00678628) (-3.37617,0.)}
\def\coordrightDs{(3.37617,0.) (3.38973,0.00678628) (3.48973,0.0571956) (3.58973,0.10807) (3.68973,0.159174) (3.78973,0.210274) (3.78973,0.210274)}
\node (tip) at (0,4) {}; % 
\node (l0) at (-4,0) {}; % 
\node (r0) at (4,0) {}; % 
\fill[gray!60] plot coordinates {\coordleftDs \coordrightDs (0,4)};
\draw [thick,gray] plot [smooth] coordinates {\coordleft}; 
\draw [thick,gray] plot [smooth] coordinates {\coordright}; 

\draw[color=black,thick, fill=white] (tip) circle (.05);
\draw[color=black,thick, fill=white] (r0) circle (.05);
\node[anchor=south,yshift=1mm] at (tip) {$\timeinf$};
\node[anchor=south west,yshift=1mm] at (2,2) {$\fnullinf$};

\draw[color=black,thick, fill=white] (l0) circle (.05);
\node[anchor=south west,xshift=-3mm,yshift=1mm] at (l0) {$\spaceinf$};
\node[anchor=south east,xshift=3mm,yshift=1mm] at (r0) {$\spaceinf$};
\draw[->] (-4.5,-0.2) -- (-4.5,4.5) node[anchor=south east,yshift=1mm] {$\tau$};
  % Add ticks at every 1 unit interval
  \foreach \y in {0,4}
    \draw (-4.6,\y) -- (-4.4,\y);
	\node[left,thick] at (-4.65,0) {$0$};
	\node[left,thick] at (-4.65,4) {$\pi$};
\draw[->] (-4.65,0) -- (5,0) 
	node[anchor=north,xshift=2mm,yshift=-1.1mm] {$\arccos\xi^4$};
  % Add ticks at every 1 unit interval
 \draw[line width=1.5pt, black] (-3.95,0) -- (3.95,0);
 %
%\draw[line width=1.pt,black!80] (-3.95,-0.025)--(-2,-0.025);
%\draw[line width=1.pt,black!80] (3.95,-0.025)--(2,-0.025);
 %
  \foreach \x in {-4,0,4}
    \draw (\x,-0.1) -- (\x,0.1);
	\node[below,thick] at (-4,-0.2) {$0$};
	\node[below,thick] at (0,-0.2) {$\pi$};
	\node[below,thick] at (4,-0.2) {$0$};
\fill[pattern={Dots[distance=1.7pt]}, pattern color=black]
  plot[smooth] coordinates {(-4,0) \coordleft};
\fill[pattern={Dots[distance=1.7pt]}, pattern color=black]
  plot[smooth] coordinates {(4,0) \coordright};
\draw[line width=1.5pt,ufogreen] (-3.95,0.05)--(-0.05,3.95);
\draw[line width=1.5pt,ufogreen] (3.95,0.05)--(0.05,3.95);
\end{tikzpicture}}
\newcommand{\IntroLevelsetsS}{
\begin{tikzpicture}[inner sep=0pt]
\node (i0) at (0,0) {};
\node (idown) at (4.4,0) {};
\node (iup) at (2.,2.) {};

\node (s1up) at (1.4,1.4) {} ;
\node (s1down) at (4.2,0) {} ;

\node (s2up) at (0.7,0.7) {} ;
\node (s2down) at (2.1,0) {} ;

\node (s3up) at (0.35,0.35) {} ;
\node (s3down) at (1.05,0) {} ;

\node (s4up) at (0.175,0.175) {} ;
\node (s4down) at (0.525,0) {} ;

%{3., 0.6}, {2.33333, 0.933333}, {1.90909, 1.14545}, {1.61538,1.29231}
\node (t0) at ({3., 0.6}) {} ;
\node (t1) at (2.33333, 0.933333) {} ;
\node (t2) at (1.90909, 1.14545) {} ;
\node (t3) at (1.61538,1.29231) {} ;

\node(chartup) at (0,2.45) {};
\node(chartright) at (4.6,0) {};

\path[draw,line width=1 pt] (iup.center)--(i0.center)--(idown.center);
\fill[pattern={Dots[distance=2.6pt,radius=0.6pt]},pattern color=black] (i0.center)--(s1up.center)--(s1down.center)--(i0.center);

\path[draw,->] (i0.center)--(chartup.center) node [anchor=south east] {$y^0$};
\path[draw,->] (i0.center)--(chartright.center) node [anchor=north west] {$|\vec{y}|$};

\path[draw,line width=1 pt,ceruleanblue] (s1up.center)--(s1down.center) 
	node [midway, above,yshift=1.5mm,xshift=1.5mm,black] {\footnotesize $\s=1$};
\path[draw,line width=1 pt,ceruleanblue] (s2up.center)--(s2down.center);
%	node [midway, above,xshift=1.9mm,black] {\footnotesize $\s=\frac12$};
\path[draw,line width=1 pt,ceruleanblue] (s3up.center)--(s3down.center);
\path[draw,line width=1 pt,ceruleanblue] (s4up.center)--(s4down.center);

\draw[line width=1.pt,ufogreen] (i0)--(iup.center);

\draw[color=black, fill=white, thick] (i0.center) circle (.03) 
	node[anchor=north east] (i0) {$\spaceinf$};

\node[anchor=south east,yshift=1mm] at (1,1) {$\fnullinf$};
\end{tikzpicture}
}
\newcommand{\IntroLevelsetsTau}{
\begin{tikzpicture}[inner sep=0pt,scale=0.55]
\def\coordleftDs{(-3.78973,0.210274) (-3.78973,0.210274) (-3.68973,0.159174) (-3.58973,0.10807) (-3.48973,0.0571956) (-3.38973,0.00678628) (-3.37617,0.)}
\def\coordrightDs{(3.37617,0.) (3.38973,0.00678628) (3.48973,0.0571956) (3.58973,0.10807) (3.68973,0.159174) (3.78973,0.210274) (3.78973,0.210274)}
\node (tip) at (0,4) {}; % 
\node (l0) at (-4,0) {}; % 
\node (r0) at (4,0) {}; % 
\draw[->] (-4.5,-0.2) -- (-4.5,4.5) node[anchor=south east,yshift=0.5mm] {$\tau$};
  % Add ticks at every 1 unit interval
  \foreach \y in {0,4}
    \draw (-4.6,\y) -- (-4.4,\y);
	\node[left,thick] at (-4.65,0) {$0$};
	\node[left,thick] at (-4.65,4) {$\pi$};
	
\draw[->] (-4.65,0) -- (5,0) node[anchor=north,yshift=0mm,xshift=-4mm] {$\arccos\xi^4$};
  % Add ticks at every 1 unit interval
  \foreach \x in {-4,0,4}
    \draw (\x,-0.1) -- (\x,0.1);
	\node[below] at (-4,-0.2) {$0$};
	\node[below] at (0,-0.2) {$\pi$};
%	\node[below] at (4,-0.15) {$0$};
%  
\fill[color=gray!60]
  plot coordinates {\coordleftDs \coordrightDs (0,4)};
\draw[thick,color=gray!80]
  plot coordinates {\coordleftDs \coordrightDs (0,4) \coordleftDs};
%\draw [thick,gray!80,opacity=0.6] plot [smooth] coordinates {\coordleftDs}; 
%\draw [thick,gray!80,opacity=0.6] plot [smooth] coordinates {\coordrightDs}; 

\draw[color=black,thick, fill=white] (tip) circle (.05);
\draw[color=black,thick, fill=white] (r0) circle (.05);
\node[anchor=south,yshift=1mm] at (tip) {$\timeinf$};
%\node[anchor=south west,yshift=1mm] at (2,2) {$\fnullinf$};

\draw[color=black,thick, fill=white] (l0) circle (.05);
%\node[anchor=north east] at (l0) {$\spaceinf$};
\node[anchor=south west,xshift=-2.3mm,yshift=1mm] at (l0) {$\spaceinf$};
\node[anchor=south east,xshift=3mm,yshift=1mm] at (r0) {$\spaceinf$};
\foreach \x in {0.5,1,1.5,2,2.5,3,3.5}
    \path[draw,line width=0.75 pt,ceruleanblue] (-4+\x,\x) -- (4-\x,\x);
    
    \path[draw,line width=0.75 pt,ceruleanblue] (-3.37617,0) -- (3.37617,0);
    \path[draw,line width=0.75 pt,ceruleanblue] (-3.62288,0.125) -- (3.62288,0.125);
    \path[draw,line width=0.75 pt,ceruleanblue] (-4+0.25,0.25) -- (4-0.25,0.25);
    \path[draw,line width=0.75 pt,ceruleanblue] (-4+0.375,0.375) -- (4-0.375,0.375);
   \draw[line width=1.pt,ufogreen] (-3.95,0.05)--(-0.05,3.95);
   \draw[line width=1.pt,ufogreen] (3.95,0.05)--(0.05,3.95);
   \draw[line width=1.pt,black] (-3.95,0)--(3.95,0);
\node[anchor=south west,yshift=1mm] at (2,2) {$\fnullinf$};
\end{tikzpicture}
}
\title{Perturbations of Minkowski spacetime \\ 
with regular conformal compactification}
\author{Andrea N\"utzi}
\date{}
\begin{document}
\maketitle
\begin{abstract}
We construct perturbations of Minkowski spacetime in general relativity, when given initial data that decays inverse polynomially to initial data of a Kerr spacetime towards spacelike infinity. We show that the perturbations admit a regular conformal compactification at null and timelike infinity, where the degree of regularity increases linearly with the rate of decay of the initial data to Kerr initial data. In particular, the compactification is smooth if the initial data decays rapidly to Kerr initial data. This generalizes results of Friedrich, who constructed spacetimes with a smooth conformal compactification in the case when the initial data is identical to Kerr initial data on the complement of a compact set. Our results rely on a novel formulation of the Einstein equations about Minkowski spacetime introduced by the author, that allows one to formulate the dynamic problem as a quasilinear, symmetric hyperbolic PDE that is regular at null infinity and with null infinity being at a fixed locus. It is not regular at spacelike infinity, due to the asymptotics of Kerr. Thus the main technical task is the construction of solutions near spacelike infinity, using tailored energy estimates. To accomplish this, we organize the equations according to homogeneity with respect to scaling about spacelike infinity, which identifies terms that are leading, respectively lower order, near spacelike infinity, with contributions from Kerr being lower order.
\end{abstract}

\tableofcontents

\section{Introduction}
\label{sec:Introduction}

We study small perturbations of Minkowski spacetime,
as solutions of the vacuum Einstein equations in general relativity.
Minkowski spacetime is stable \cite{ChristodoulouKlainermanStability},
that is, small perturbations of the initial data
yield solutions of the Einstein equations that are globally close to Minkowski spacetime.
For such perturbations it is interesting to understand  
the null and timelike asymptotics, 
which carry information
about the scattering of gravitational waves.
In particular, it is a long standing question if,
and under what conditions, 
the perturbations admit (like Minkowski itself) 
a smooth conformal compactification \cite{PenroseAsymptoticSimplicity}.

A simple class of initial data for the Einstein equations
is given by solutions of the constraint equations
that are identical to the initial data of a Kerr spacetime 
on the complement a compact set,
and everywhere close to Minkowski initial data 
\cite{ChruscielDelayGluing,CorvinoGluing,CorvinoSchoenGluing}.
(Due to the positive mass theorem \cite{SchoenYauPositiveMass,WittenPositiveMass}, there is no nontrivial initial data
that is identical to Minkowski on the complement of a compact set.)
Solutions of the Einstein equations with such initial data do indeed 
admit a smooth conformal compactification at null and timelike infinity \cite{ChruscielDelayAsymptoticallySimpleSpacetimes,
CorvinoStabPenrose,
Friedrichgeodesicallycomplete} 
(see \cite{FriedrichConformalEinsteinEvolution,FriedrichPeelingQuestion} for review articles).
This is obtained as follows:
By finite speed of propagation, the metric
is identical to a Kerr spacetime in a neighborhood of spacelike infinity,
and Kerr itself admits a smooth conformal compactification at null infinity (not at spacelike infinity).
Away from spacelike infinity the metric is then constructed
using Friedrich's conformal field equations \cite{FriedrichCFE1,FriedrichCFE2,Friedrichgeodesicallycomplete}. 
In this formulation of the Einstein equations,
the dynamic problem is hyperbolic including 
along null and timelike infinity,
and thus one must only solve 
a hyperbolic PDE with small initial data on a compact domain.
The conformal field equations are formulated in terms of
a conformally rescaled smooth metric and the conformal factor,
and thus readily imply that the physical metric
admits a smooth conformal compactification.

More general asymptotically flat
initial data was considered in the stability results 
\cite{BieriZipserStability,
ChristodoulouKlainermanStability,
HintzVasyGlobalStability,
KlainermanNicoloEvolution,
KlainermanNicoloPeeling,
LindbladRodnianskiGlobalStability}.
Under these more general assumptions, 
sharp decay rates and precise asymptotics of the solutions towards null infinity
are obtained, in different kinds of gauges.
However, the solutions are not shown to admit a regular
conformal compactification.
In \cite{KlainermanNicoloPeeling},
using a double null gauge,
it was shown that for a large class of
initial data, the corresponding metrics satisfy peeling \cite{PenroseAsymptoticSimplicity}, 
which is necessary but not sufficient for existence
of a regular conformal compactification.
In \cite{LindbladRodnianskiGlobalStability} the solutions are constructed in a harmonic gauge,
with sharp decay estimates given in \cite{LindbladAsymptotics}.
In \cite{HintzVasyGlobalStability}, using a generalized harmonic gauge, 
it was shown in particular that polyhomogeneous initial data 
yield metrics that are polyhomogeneous at null infinity,
and that a leading logarithmic term at null infinity 
is nonzero, 
by expressing its spherical average in terms of the Bondi mass.
However, this logarithmic term is a consequence of the gauge,
namely it is not there when using an 
improved gauge \cite{HintzImprovedGauge}.

Here we consider initial data 
that asymptote to Kerr initial data towards spacelike infinity
(but that are not necessarily equal to Kerr near spacelike infinity),
and show that also this class of initial data yields solutions of the Einstein equations 
that admit a regular conformal compactification at null and timelike infinity.
Informally, the decay of the initial data and the regularity 
of the conformal compactification are related as follows:
There exists a positive integer $a$ such that for every sufficiently 
large positive integer $\ell$:
\begin{quote}
If the initial data decays to Kerr inverse polynomially with rate $\ell$,
then the corresponding Ricci-flat metric admits a 
regular conformal compactification 
at null and timelike infinity with $\ell-a$ derivatives.
If the data decays to Kerr rapidly, then the 
compactification is smooth.
\end{quote}
This is made precise in Theorem \ref{thm:main} 
at the end of this introduction,
and also in Theorem \ref{thm:mainpointwise} below,
which is a simplified version of Theorem \ref{thm:main}.

The results in this paper are 
conditional on the existence of solutions of the
constraint equations with specific asymptotics at infinity.
We plan to construct these solutions in a subsequent article
based on \cite{HomotopyPaper} (which was motivated by this
particular application), see Remark \ref{rem:Plinearized}.

The results in this paper rely in particular on:
\begin{itemize}
\item 
A new formulation of the Einstein equations about Minkowski spacetime
in which the dynamic problem is quasilinear symmetric hyperbolic including 
at null and timelike infinity, with null infinity being at a fixed locus \cite{Thesis}.
It is independent of, but inspired by,
the conformal field equations \cite{FriedrichCFE1,FriedrichCFE2}.
\item 
Energy estimates near spacelike infinity for the dynamic problem, 
which is mildly singular due to the asymptotics of Kerr
(we cannot appeal to finite speed of propagation since
our initial data is not equal to Kerr near spacelike infinity).
This is the main novelty of this paper (Section \ref{sec:Abstract}, \ref{sec:SpaceinfConstruction}).
\end{itemize}

In the remainder of this introduction we 
review the formulation of the Einstein equations from \cite{Thesis};
state the simplified Theorem \ref{thm:mainpointwise}
which only uses pointwise estimates;
state the main Theorem \ref{thm:main};
and outline the strategy of the proof.
\step
\textbf{Geometric conformal compactification.}
The conformal compactification of Minkowski spacetime is given by the Einstein cylinder. 
This is the manifold $\cyl=\R\times S^3$ together with the 
conformal Lorentzian metric $[\gcyl]$, where $\gcyl=-d\tau^{\otimes2} +\gS$. 
Here $\tau$ is the standard coordinate on $\R$,
and $\gS$ is the round metric on the three-sphere $S^3$.
We view $S^3$ as the unit sphere in $\R^4$
and denote the standard coordinates on $\R^4$
by $\xi=(\xi^1,\xi^2,\xi^3,\xi^4)$.
%On $\cyl$ define the smooth function 
Define
\begin{equation}\label{eq:defh}
\nullgen\;=\;\cos(\tau)-\xi^4 \;\in\; C^\infty(\cyl)
\end{equation}
Then Minkowski spacetime is isometric to $(\diamond,\eta)$ defined as follows:
\begin{equation}\label{eq:defdiamondIntro}
\diamond 
\;=\;
\{ (\tau,\xi) \in \cyl \mid 
-\pi<\tau<\pi\;,\;
0  < \nullgen(\tau,\xi) \}
\qquad
\gmink = \nullgen^{-2} \gcyl|_{\diamond}
\end{equation}
We refer to $\diamond$ as the Minkowski diamond. 
Its boundary has five components
given by 
spacelike infinity $\spaceinf=(0,(0,0,0,1))$,
future and past timelike infinity $\fptimeinf=(\pm\pi,(0,0,0,-1))$,
and future and past null infinity $\fpnullinf$,
see Figure \ref{fig:cylinderintro}. 

Let $\diamondy\subset\cyl$ be the image of $\diamond$ under the 
spacial reflection $(\tau,\xi)\mapsto(\tau,-\xi)$, which is an open neighborhood
of $\spaceinf$.
We will use two sets of coordinates:
\begin{align}\label{eq:xyintro}
\text{Cartesian coordinates $x$ on $\diamond$}
\qquad
\text{Cartesian coordinates $y$ on $\diamondy$}
\end{align}
defined in \eqref{eq:xxcoords} respectively \eqref{eq:yycoords}. 
On $\diamond$ one has
$\eta = \eta_{\mu\nu}dx^\mu\otimes dx^\nu$ where $\eta_{\mu\nu}$
are the components of the matrix $\diag(-1,1,1,1)$
(we use the convention that 
repeated Greek indices are implicitly summed over $0,1,2,3$).
On the common domain of definition the coordinates 
are related by Kelvin inversion,
$y = \frac{x}{\eta_{\mu\nu}x^\mu x^\nu}$.
The coordinates $y$ are in particular regular near $\spaceinf$,
and $\spaceinf$ is the origin $y=0$.

%---------------------------------%
\begin{figure}
\centering 
\includegraphics[width=0.28\linewidth]{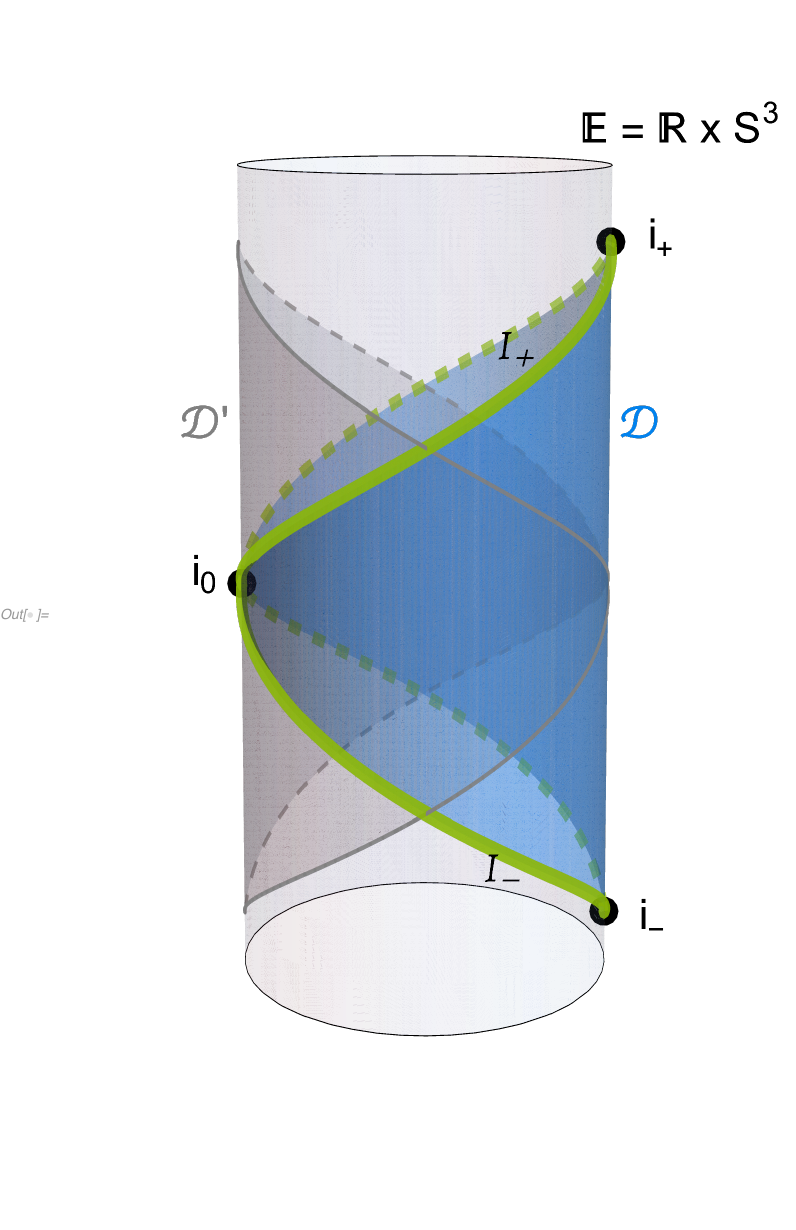}
\captionsetup{width=115mm}
\caption{
The figure illustrates the Minkowski diamond $\diamond\subset\cyl$,
where the sphere $S^3$ is drawn as an $S^1$.
The boundary of $\diamond$ is given by spacelike infinity $\spaceinf$,
future and past timelike infinity $\fptimeinf$, and
future and past null infinity $\fpnullinf$.
The set $\diamondy\subset\cyl$ is given by the image of $\diamond$
under the reflection $(\tau,\xi)\mapsto(\tau,-\xi)$,
and is an open neighborhood of $\spaceinf$.}
\label{fig:cylinderintro}
\end{figure}
%---------------------------------%
\step
%---------------------------------%
\textbf{Equivalent reformulation of the Einstein equations.}
We recall the formulation of the Einstein equations introduced in \cite{Thesis},
defined on the conformal compactification of Minkowski spacetime.
This formulation is effectively a repackaging of 
the Newman Penrose orthonormal frame formalism \cite{NewmanPenroseSpinCoeff},
respectively of the differential graded Lie algebra formalism \cite{RTgLa1,RTgLa2},
but written in a way which makes the regularity properties at the boundary clear.
See Remark \ref{rem:Friedrich} for a comparison to 
Friedrich's  conformal field equations.

The Einstein equations take the form
\begin{align}\label{eq:MC}
\dg u + \tfrac12[u,u] = 0
\end{align}
The unknown $u$ describes an orthonormal frame,
a connection, and a Weyl curvature on $\diamond$.
The operators $\dg$ and $[\cdot,\cdot]$ are linear respectively
bilinear first order differential operators, where $\dg$ describes
linearized gravity about Minkowski, and $[\cdot,\cdot]$ describes
the gravitational interaction. Informally one has
\begin{align}\label{eq:11withmetrics}
\frac{\text{small $u$ on $\diamond$ that solve \eqref{eq:MC}}}{\sim}
\;\;\overset{1:1}{=}\;\;
\frac{\text{Ricci-flat metrics on $\diamond$}}{\text{diffeomorphisms}}
\end{align}
where the denominator $\sim$ stands for the gauge group,
given by diffeomorphisms and orthonormal frame transformations,
see also \cite{RTgLa2}.
Via this correspondence, the trivial solution $u=0$ of \eqref{eq:MC}
yields the Minkowski metric.

The equation \eqref{eq:MC} has the form of a Maurer-Cartan (MC) equation.

The unknown $u$ consists of two components, $u=\uo\oplus\uI$,
corresponding to orthonormal frame and connection
respectively to Weyl curvature, 
which are elements in the first respectively second direct summand of a space
\begin{align}\label{eq:gx1Dintro}
\gx^1(\diamond) \;=\; (\Omega^1(\diamond)\otimesRR\Kil) \oplus \I^2(\diamond)
\end{align}
This space is the module of sections, over $\diamond$, 
of a (trivial) rank 50 vector bundle 
$\gx^1$ 
defined globally on $\cyl$.
We denote the space of sections of $\gx^1$ over $\cyl$ by
\begin{align*}
\gx^1(\cyl) \;=\; (\Omega^1(\cyl)\otimesRR\Kil) \oplus \I^2(\cyl)
\end{align*}
We will see that $\dg$, $[\cdot,\cdot]$ 
are smooth differential operators on $\cyl$.
Here:
\begin{itemize}
\item 
\textit{Weyl curvature.}
$\I^2(\cyl)$ is the rank 10 submodule 
of $S^2(\Omega^2(\cyl))[-1]$ of all sections that
satisfy the algebraic symmetries and traceless condition 
(relative to the conformally flat cylinder metric $[\gcyl]$) of Weyl tensors.
Here $S^2$ is the symmetric tensor product over $C^\infty$,
and $\Omega^2(\cyl)$ are the smooth differential two-forms on $\cyl$,
and $[-1]$ denotes tensoring with a density bundle
(Definition \ref{def:It}).
The Weyl curvature $\uI$ is an element of $\I^2(\diamond)$.
\item 
\textit{Orthonormal frame and connection.}
$\Omega^1(\cyl)$ are the smooth differential one-forms,
a module of rank $4$.
Further $\Kil$ is isomorphic to the Lie  algebra of the Poincar\'e group,
and explicitly given by the 10-dimensional vector space of 
conformal Killing fields on $\cyl$ that restrict to 
ordinary Killing fields with respect to the Minkowski metric $\eta$ on $\diamond$.
The element $\uo$ lies in the rank 40 module $\Omega^1(\diamond)\otimesRR\Kil$,
where 16 degrees of freedom describe 
an orthonormal frame and hence a metric,
and where 24 degrees of freedom describe an affine connection that is
compatible with that metric.
Concretely, expand
\[ 
u_0 = \tsum_{i=1}^{10}\omega_i\otimes \KilBasis_i
\]
with $\omega_i\in\Omega^1(\diamond)$
and $\KilBasis_1,\dots,\KilBasis_{10}$ a basis of $\Kil$.
Define the $C^\infty$-linear map
\begin{align}\label{eq:Fuodef}
\frame{\uo}: \Gamma(T\diamond)\to\Gamma(T\diamond)
&&
\frame{\uo}(X) =  \tsum_{i=1}^{10}\omega_i(X)\KilBasis_i
\end{align}
If the endomorphism $\one+\frame{\uo}$ is invertible at every point on $\diamond$, then 
\begin{align}\label{eq:metricdefIntro}
g^{-1} 
= 
(\one + \frame{\uo})^{\otimes2} \eta^{-1}
\end{align}
defines a smooth Lorentzian metric $g$ on $\diamond$,
where $\eta$ is the Minkowski metric \eqref{eq:defdiamondIntro}.
This formula is to be understood as follows:
$\eta^{-1}$ is an element in the second tensor power of $\Gamma(T\diamond)$, 
and the endomorphism $\one + \frame{\uo}$ is applied to each factor.
With this definition of the metric $g$, 
the map $\one + \frame{\uo}$ is an orthonormal frame for $g$, 
in the sense that
\begin{equation}\label{eq:1Fframe}%
g((\one+\frame{\uo})X,(\one+\frame{\uo})Y) = \eta(X,Y)
\end{equation}
for $X,Y\in\Gamma(T\diamond)$.
See \eqref{eq:h^2gcyl} and Proposition \ref{prop:metricregularity}
for the regularity properties of $g$ at the boundary.
See Remark \ref{rem:ginbasis,connection} 
for the definition of the connection. 
\end{itemize}

\begin{remark}\label{rem:ginbasis,connection}
One can define \eqref{eq:metricdefIntro} equivalently using a basis,
which will also allow us to specify the associated metric compatible affine connection.
We use the coordinates $x$ in \eqref{eq:xyintro},
and the basis of $\Kil$ given by boosts $B^{\mu\nu}=-B^{\nu\mu}$ and translations $T_\mu$
in \eqref{eq:TB}. 
On the open Minkowski diamond $\diamond$, expand 
\begin{align}\label{eq:uobasis}
\begin{aligned}
\uo 
	&= (E_\mu^\nu-\delta_\mu^\nu) dx^\mu \otimes T_\nu\\
	&\quad
	-\tfrac12 \eta_{\alpha\beta}\Gamma^{\beta}_{\mu\nu} 
	\big(dx^\mu\otimes B^{\nu\alpha} 
	- 
	( x^\nu dx^\mu \otimes (\eta^{\alpha\kappa} T_\kappa)
	-
	x^\alpha dx^\mu  \otimes (\eta^{\nu\kappa} T_\kappa))\big) 
\end{aligned}
\end{align}
for unique functions $E_\mu^\nu$ and $\Gamma^{\beta}_{\mu\nu}$
such that $\eta_{\alpha\beta}\Gamma^{\beta}_{\mu\nu}$ is
antisymmetric in $\nu\alpha$,
and with $\delta_{\mu}^\nu$ the Kronecker delta.
Define the four vector fields $E_\mu = E_\mu^\nu \p_{x^\nu}$ 
(one has $E_\mu = (\one+\frame{\uo})\p_{x^\mu}$).
These are pointwise linearly independent iff 
$\one+\frame{\uo}$ in \eqref{eq:metricdefIntro} is pointwise invertible.
In this case, the metric $g$ and the metric compatible affine connection $\nabla$ 
associated to $\uo$ are
$g^{-1} = \eta^{\mu\nu} E_\mu \otimes E_\nu$, 
$\nabla_{E_{\mu}} E_\nu =\Gamma_{\mu\nu}^{\beta} E_\beta$.
\end{remark}

The left hand side of the Einstein equations \eqref{eq:MC} takes values in the space
\begin{equation}\label{eq:g2D}
\gx^2(\diamond) \;=\; (\Omega^2(\diamond)\otimesRR\Kil) \oplus \I^3(\diamond)
\end{equation}
where $\I^3(\diamond)\subset (\Omega^3(\diamond)\otimesCinf\Omega^2(\diamond))[-1]$ 
is a submodule of rank 16, specified in Definition \ref{def:It}.
The components of the equation in the first direct summand 
of \eqref{eq:g2D} are the conditions
that the connection is torsion-free, and that the Riemann curvature is
equal to the Weyl curvature; the components in the second direct
summand of \eqref{eq:g2D} are the equations of motion for the Weyl curvature.
The space $\gx^2(\diamond)$ is again the module of sections, over $\diamond$, 
of a (trivial) rank 76 vector bundle $\gx^2$ defined on $\cyl$.
We denote the space of sections of $\gx^2$ over $\cyl$ by
\begin{align*}
\gx^2(\cyl) \;=\; (\Omega^2(\cyl)\otimesRR\Kil) \oplus \I^3(\cyl)
\end{align*}
The operators $\dg$, $[\cdot,\cdot]$ are 
smooth differential operators on $\cyl$, 
given as follows:
\begin{itemize}
\item 
$\dg:\gx^1(\cyl)\to\gx^2(\cyl)$ is a 
smooth first order linear differential operator
\[ 
\dg u = \big((\ddR\otimes\one) \uo - \inj\uI\big) \oplus (\dI\uI)
\]
where $\ddR$ is the de Rham differential;
$\dI$ is a first order differential operator
that is conformally invariant
(i.e.~commutes with conformal isometries of the Einstein cylinder,
which act on the modules $\I^2(\cyl)$ and $\I^3(\cyl)$);
and $\inj:\I^2(\cyl)\to\Omega^1(\cyl)\otimesRR\Kil$ 
is a $C^\infty$-linear map that is only Poincar\'e invariant.
See Definition \ref{def:dI} and \ref{def:inj}.
\item 
$[\cdot,\cdot]:\gx^1(\cyl)\times \gx^1(\cyl)\to\gx^2(\cyl)$
is a smooth first order bilinear differential operator, 
see \eqref{eq:[]gdef}.
Here we only define its restriction to $\Omega^1(\cyl)\otimesRR\Kil$
in both inputs.
This takes values in $\Omega^2(\cyl)\otimesRR\Kil$, and is given by 
\begin{align*}
[\omega\otimes \KilEl,\omega'\otimes \KilEl']
	&= \omega\wedge \omega' \otimes [\KilEl,\KilEl']
	+ \omega\wedge (\Lie_{\KilEl}\omega')\otimes \KilEl'
	- (\Lie_{\KilEl'}\omega) \wedge \omega'\otimes \KilEl
\end{align*}
where $\Lie$ denotes the Lie derivative.
\end{itemize}

The correspondence \eqref{eq:11withmetrics} is given as follows.
If $u=\uo\oplus\uI \in\gx^1(\diamond)$ solves \eqref{eq:MC},
and if $\one+\frame{\uo}$ is pointwise invertible on $\diamond$, 
then the metric $g$ in \eqref{eq:metricdefIntro} 
is Ricci-flat:
\[ 
\Ric(g) = 0
\]
Conversely, given a Ricci-flat metric, 
after choosing an
orthonormal frame one obtains a solution $u=\uo\oplus\uI$ of \eqref{eq:MC}
by defining $\uo$ as in \eqref{eq:uobasis}, 
with \smash{$E_\mu^\nu$} the components of the orthonormal frame relative to 
$\p_{x^\nu}$
and with \smash{$\Gamma^\beta_{\mu\nu}$}
the coefficients of the Levi-Civita connection of $g$
relative to the orthonormal frame,
and by defining $\uI$ to be the Weyl curvature.
See Section \ref{sec:recoveryofmetric}.
%--------------------------------%
\step
%--------------------------------%
The Einstein equations \eqref{eq:MC} are regular including along
the boundary of $\diamond$.
In particular, under appropriate gauge fixing conditions, 
\eqref{eq:MC} contains a necessary square subsystem that
is quasilinear symmetric hyperbolic on $\overline{\diamond}$,
with a principal symbol that does not degenerate along $\p\diamond$.
The remaining equations are the constraints, which themselves solve a linear homogeneous symmetric hyperbolic system, i.e.~the constraints propagate.
See Section \ref{sec:gauge_i0} and \ref{sec:Gauge_cyl}.

We will construct solutions $u=\uo\oplus\uI$ of \eqref{eq:MC} on $\diamond$
that extend $C^k$-regularly to null and timelike infinity
(where $k$ depends on the decay of the initial data to Kerr),
and such that $\frame{\uo}$ is uniformly small.
It then follows that the associated metric $g$ 
defined in \eqref{eq:metricdefIntro} admits a 
$C^k$-regular conformal compactification
at null and timelike infinity, specifically,
$\nullgen^2g$ extends as a $C^k$-metric, where
$\nullgen$ is the conformal factor \eqref{eq:defh}.
This can easily be seen using 
(by \eqref{eq:metricdefIntro} and \eqref{eq:defdiamondIntro})
%To see this, one uses 
\begin{equation}\label{eq:h^2gcyl}
(\nullgen^2g)^{-1} 
= 
(\one + \frame{\uo})^{\otimes2} \gcyl^{-1} 
\end{equation}

To understand the causal structure of $g$, note that 
every vector field in $\Kil$ is tangential to $\p\diamond$,
which implies that the one-form 
$\dualframe{\uo}(d\nullgen)/\nullgen$
is regular along $\p\diamond\setminus\spaceinf$
(where $\dualframe{\uo}$ is the linear dual of  $\frame{\uo}$), 
and for our solutions $u$ it will be 
uniformly small (using, near spacelike infinity, 
a homogeneous basis, see below).
This will imply that the metric $g$ on $\diamond$ is null geodesically complete,
and that the locus of future and past null infinity of $g$
is equal to $\fnullinf$ respectively $\pnullinf$,
i.e., to the locus of 
future respectively past 
null infinity of the Minkowski metric.

See Proposition \ref{prop:metricregularity} for a detailed
account about how properties of the endomorphism $\frame{\uo}$ 
imply properties of the metric $g$.
%--------------------------------%
\begin{remark}\label{rem:Friedrich}
We compare the formulation of the Einstein equations \eqref{eq:MC}
to Friedrich's conformal field equations \cite{FriedrichCFE1,FriedrichCFE2}.
The conformal field equations are formulated in terms
of a smooth conformal factor $C$ and a smooth metric $\tilde{g}$.
The physical spacetime is then given by the domain 
$C>0$ and the metric $C^{-2}\tilde{g}$.
The vanishing locus $C=0$, $\ddR C\neq0$ is the null infinity
locus of the physical spacetime.
The Einstein equation are then
\begin{equation}\label{eq:RicCg}
\Ric(C^{-2}\tilde{g})=0
\end{equation}
and are apparently singular when $C=0$.
Using an orthonormal frame formalism similar to Newman Penrose,
and gauge fixing, Friedrich reduces \eqref{eq:RicCg} 
to a first order quasilinear symmetric hyperbolic system,
and shows that the constraints propagate.
Our approach \eqref{eq:MC} differs from Friedrich's approach
\cite[Remark 1]{Thesis}:
\begin{itemize}
\item 
The characteristic feature of Friedrich's approach is that
the conformal factor $C$ is used as an unknown, 
and determined by the Einstein equations. 
In particular the locus of null infinity, determined by $C=0$, $\ddR C\neq0$,
depends on the unknown. 
In our approach the conformal factor is not used as an unknown,
and null infinity of the physical spacetime is always equal to 
the null infinity locus $\pnullinf\cup\fnullinf$ of Minkowski spacetime.
\item 
Friedrich's approach is background independent. 
Our approach is background dependent,
designed for perturbation theory about Minkowski.
This allows us for example to construct solutions on the fixed manifold $\diamond$.
\end{itemize}
\end{remark}
%--------------------------------%
It is useful to observe that the spaces $\gx^1(\cyl)$ and $\gx^2(\cyl)$
are the degree one respectively two components of a 
differential graded Lie algebra (dgLa)
$$
\gx(\cyl) = \oplus_{k=0}^{4}\gx^k(\cyl)
$$
with differential and Lie bracket given by $\dg$ respectively $[\cdot,\cdot]$,
see Section \ref{sec:dgLa}.
Then \eqref{eq:MC} is called a Maurer-Cartan equation.
This perspective is useful to keep track of identities, 
for gauge fixing, 
and to implement constraint propagation.
We note that, on the open Minkowski diamond $\diamond$,
this differential graded Lie algebra coincides with the
construction in \cite{RTgLa1,RTgLa2},
see also \cite[Remark 2]{Thesis}.
%--------------------------------%
\step
%--------------------------------%
\textbf{Homogeneous basis.}
There is a natural $\R_+$-action on $\gx(\cyl)$
that acts on the base manifold
$\cyl$ by scaling 
$x\mapsto \lambda x$ on $\diamond$, equivalently $y\mapsto \lambda^{-1}y$ on $\diamondy$,  
for each $\lambda>0$ 
(where we use the coordinates in \eqref{eq:xyintro})
and that commutes with the operators $\dg$ and $[\cdot,\cdot]$,
see Section \ref{sec:R+action}.
For the analysis near $\spaceinf$ we will use a basis
of the space of sections $\gx^1(\diamondy\setminus\spaceinf)$
that is homogeneous,
in the sense that it is 
given by sections that are homogeneous of degree zero
under this $\R_+$-action.
For example, a homogeneous basis of 
$\Omega^1(\diamondy\setminus\spaceinf)\otimesRR\Kil$ is given by 
\begin{align}\label{eq:hombasis|y|}
\frac{dy^{\mu}}{|y|} \otimes B^{\alpha\beta}
\qquad
\frac{1}{|y|}\frac{dy^{\mu}}{|y|} \otimes T_\nu
\qquad\qquad
\begin{aligned}
&\mu,\alpha,\beta,\nu=0\dots3\\
&\alpha<\beta
\end{aligned}
\end{align}
where $B^{\alpha\beta}, T_\nu\in\Kil$
are the boosts and translations \eqref{eq:TB} 
and $|y|^2=\sum_{i=0}^3|y^i|^2$,
similarly for $\I^2(\diamondy\setminus\spaceinf)$.
Another homogeneous basis, which we will use in the initial value problem below, 
is given by replacing $|y|$ in \eqref{eq:hombasis|y|} by $2y^0+|\vec{y}|$
where 
$|\vec{y}|^2=\sum_{i=1}^3|y^i|^2$.
This basis is regular when $y^0\ge0$ and $|\vec{y}|>0$,
and it is defined in Section \ref{sec:basis_i0}.
%--------------------------------%
\step
%--------------------------------%
\textbf{Kerr near spacelike infinity.} 
Via the correspondence \eqref{eq:11withmetrics}, the family of Kerr spacetimes
may be viewed as a family of solutions $\kerr(m,\vec{a})$ of \eqref{eq:MC}.
More precisely \cite[Lemma 113, Remark 59, Theorem 20]{Thesis}:
For every $m\in[0,2^{-10}]$ and $\vec{a}\in\R^3$ with $|\vec{a}|\le\frac12$,
there exists a smooth section $\kerr(m,\vec{a})$ of $\gx^1$ on 
the portion of $\diamond$ where $|y|\le\frac{1}{100}$
(an open neighborhood of $\spaceinf$ in $\diamond$), 
that satisfies\footnote{%
Explicit formulas for $\kerr(m,\vec{a})$ are in \cite[Section 4.8]{Thesis}.}%
\textsuperscript{,}%
\footnote{%
By applying boosts and translations to $\kerr(m,\vec{a})$, 
one obtains a ten-parameter family of Kerr elements.
The properties in the three items also
hold for this 10-parameter family.}:
\begin{itemize}
\item 
$\kerr(m,\vec{a})$ solves \eqref{eq:MC},
and the associated metric \eqref{eq:metricdefIntro} 
on $\diamond\cap\{|y|\le\frac{1}{100}\}$
is the standard Kerr metric with mass $m$ and angular momentum vector $\vec{a}$.
\item 
$\kerr(m,\vec{a})$ extends smoothly to future and 
past null infinity (not to $\spaceinf$).
\item 
For every integer $\ell\ge0$ one has the pointwise estimate
\begin{equation}\label{eq:KerrDecay}
|(|y|\p_y)^{\le\ell}\kerr(m,\vec{a})|
\;\lesssim_{\ell}\;
m |y|(1+|\log|y||)
\end{equation}
on $\diamond\cap\{|y|\le\frac{1}{100}\}$ (the notation $\lesssim_{\ell}$
is explained in Remark \ref{rem:lesssim}).
This estimate is understood as follows:
It holds for each component of $\kerr(m,\vec{a})$
relative to a homogeneous basis;
the notation $(|y|\p_y)^{\le\ell}$ means that
the components are differentiated at most $\ell$ times
with respect to the homogeneous of degree zero
vector fields $|y|\p_{y^0},\dots,|y|\p_{y^3}$.
\end{itemize}
%-------------------------%
Obtaining Kerr elements with these properties requires choosing
an appropriate gauge, that is, appropriate coordinates and orthonormal frame.
In \cite[Section 4.8]{Thesis} 
they are constructed by using an interpolation of 
Kerr in Kerr-Schild coordinates and 
Kerr in time reflected Kerr-Schild coordinates.

Theorem \ref{thm:mainpointwise} and \ref{thm:main} below
do not make explicit reference to a Kerr spacetime.
Instead, they only require a smooth section $\kerr$ of $\gx^1$ on 
a neighborhood of $\spaceinf$ in $\diamond$, 
that must satisfy certain assumptions.
One may take $\kerr$ to be equal to $\kerr(m,\vec{a})$ with small $m$,
which will satisfy the assumptions.

\step
\textbf{Initial data.}
Initial data for the Einstein equations \eqref{eq:MC} will be given 
on the $\tau=0$ time slice on $\diamond$,
that is, on
$$
\diamonddata \;=\; \diamond\cap (\{0\}\times S^3)
$$
This is equivalently the $x^0=0$ time slice, 
in particular $\diamonddata\simeq\R^3$ via the coordinates $x^1,x^2,x^3$.
Denote by $\gxdata^1$ the (trivial) vector bundle
on $\{0\}\times S^3$ whose fiber at $p\in \{0\}\times S^3$
is given by the fiber of $\gx^1$ at $p$, i.e.~$\gxdata^1$
is the pullback bundle of $\gx^1$ 
under the inclusion map $\{0\}\times S^3\hookrightarrow\cyl$.
Denote by $\gxdata^1(\diamonddata)$ the space of smooth 
sections over $\diamonddata$.
Initial data for \eqref{eq:MC} is given by a section
$\udata\in\gxdata^1(\diamonddata)$ that solves the constraint equations,
which are the necessary and sufficient conditions 
on $\udata$ for the local existence of a solution to \eqref{eq:MC}
that restricts to $\udata$ along $\diamonddata$.
The constraints are a nonlinear
first order PDE along $\diamonddata$, that we denote by 
$$
\Pconstraints(\udata) \;=\; 0
$$
The operator $\Pconstraints$ is in Definition \ref{def:Pconstraints}.
%------------------------------
\step
%------------------------------
\textbf{Main theorem (simplified version).}
We construct solutions in the future of the initial hypersurface
$\diamonddata$, that is, on the subset of $\diamond$ where $\tau\ge0$:
\[ 
\diamond_+ \;=\; \diamond\cap\{\tau\ge0\} 
\]
c.f.~Appendix \ref{ap:ConstructionOnD}.
Define 
\begin{equation*}
\Dspcl \;=\; (\diamondy\cap \overline{\diamond}_+) \setminus\spaceinf
\end{equation*}
This contains a portion of $\fnullinf$, but not $\spaceinf$.
On $\Dspcl$ define the smooth function 
\begin{equation}\label{eq:sdefintro}
\s = 2y^0 + |\vec{y}|
\end{equation}
The factor $2$ is chosen so that its level sets are 
safely spacelike for the Minkowski metric. 
%------------------------------
\begin{figure}
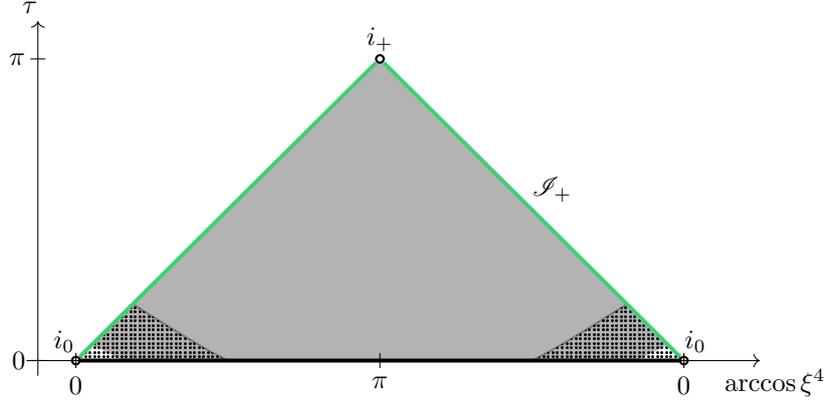

\centering
\coverfigureintro
\captionsetup{width=115mm}
\caption{The figure shows a cross-section of $\diamond_+$,
using $\tau$ and $\arccos\xi^4$ as coordinates.
Recall that $\diamond_+$ is the intersection 
of the open Minkowski diamond $\diamond$ with $\tau\ge0$.
It may be covered by $\Dspop_{\le1}$, which is the dotted region, 
and by $\diamond_+\setminus\Dspop_{<1/6}$, which is the gray shaded region.
The intersection of $\diamond_+$ with the $\tau=0$ time slice
is the initial hypersurface $\diamonddata$, which 
has a corresponding covering by 
$\Dspdata_{\le1}$ and by
$\diamonddata \setminus \Dspdata_{<1/6}$.}
\label{fig:DspCl}
\end{figure}
%------------------------------
For $\sfix>0$ define (see Figure \ref{fig:DspCl})
\begin{align}\label{eq:Dspopcl}
\begin{aligned}
\Dspcl_{\le\sfix} &= \{ p\in\Dspcl \mid \s(p)\le\sfix \}\\ 
\Dspop_{\le\sfix} &= 
		\Dspcl_{\le\sfix} \cap\diamond_+\\
\Dspdata_{\le\sfix} 
				 &= \Dspcl_{\le\sfix} \cap \diamonddata
\end{aligned}
\end{align}
Analogously for $<$.
The sets $\Dspcl_{\le\sfix}$ and $\Dspop_{\le\sfix}$ have
nonempty respectively empty intersection with $\fnullinf$.
The set $\Dspdata_{\le\sfix}$ is the portion of 
$\diamonddata$ where $|\vec{y}|\le\sfix$.

These definitions yield a cover of
$\diamond_+$ respectively $\diamonddata$
consisting of a neighborhood of spacelike infinity,
and a set away from spacelike infinity: 
\begin{subequations}
\begin{align}
\diamond_+ 
	&= \Dspop_{\le\sfix} \cup (\diamond_+ \setminus \Dspop_{<\frac{\sfix}{6}})
		\label{eq:coverD+Thm1}\\
\diamonddata 
	&= \Dspdata_{\le\sfix} \cup (\diamonddata \setminus \Dspdata_{<\frac{\sfix}{6}})
		\label{eq:coverDataThm1}
\end{align}
\end{subequations}
In Theorem \ref{thm:mainpointwise} below, all inequalities are understood componentwise.
For the inequalities near spacelike infinity 
(concretely \ref{item:KPts}, \ref{item:K-uPts}, \eqref{eq:uconclNptwsp}, Part 2)
we use the components relative to the 
homogeneous basis \new{in Section \ref{sec:basis_i0}}.
For the inequalities away from spacelike infinity
(concretely \ref{item:uxptw}, \eqref{eq:uconclNptwCN})
we use the components relative to the 
basis in Section \ref{sec:basis_bulk},
which is regular on $\overline\diamond$.
In the norm 
\smash{$\|{\cdot}\|_{C^{\NN-3}(\diamond_+\setminus\Dspop_{<\sfix/6})}$},
the components are differentiated 
at most $\NN-3$ times 
with respect to the vector fields \eqref{eq:defV}, 
which are a regular frame on $\overline\diamond$.

Denote by $\gx^1(\Dspcl_{\le\sfix})$ the space of 
smooth sections of $\gx^1$ over $\Dspcl_{\le\sfix}$,
and by $\gx^1(\diamond_+)$ the space of smooth sections over $\diamond_+$.

%--------------------------------
\begin{theorem}[Simplified version of Theorem \ref{thm:main}]\label{thm:mainpointwise}
For all $\NN\in\Z_{\ge7}$ and $\sfix\in(0,1]$ there exist
$C>0$ and $\eps_0\in(0,1]$ such that for all
$\eps\in(0,\eps_0]$ and all 
\begin{align}\label{eq:kerrandudata}
\kerr \in \gx^1(\Dspcl_{\le\sfix})
&&
\udata \in \gxdata^1(\diamonddata)
\end{align}
the following holds. 
Abbreviate $\kerrdata = \kerr|_{\tau=0}$.
If
\begin{enumerate}[label=(a\arabic{enumi}),leftmargin=10mm]
\item \label{item:kerrmc}
$\dg\kerr + \frac12[\kerr,\kerr]=0$
\item \label{item:ucons}
$\udata$ solves the constraints $\Pconstraints(\udata)=0$, 
see Definition \ref{def:Pconstraints}
\item \label{item:KPts}
$|(|y|\p_{y})^{\le\NN+3}\kerr|\le\eps |y|(1+|\log|y||)$ on $\Dspop_{\le\sfix}$
\item \label{item:uxptw}
$|\p_{\vec x}^{\le \NN+1}\udata| \le \eps$ on $\diamonddata\setminus\Dspdata_{<\frac{\sfix}{\new{6}}}$,
where $\vec{x}=(x^1,x^2,x^3)$
\item \label{item:K-uPts}
$|(|\vec{y}|\p_{\vec{y}})^{\le\NN+3}(\udata-\kerrdata)|
\le \eps |\vec{y}|^{\NN+5}$ on $\Dspdata_{\le\sfix}$,
where $\vec{y}=(y^1,y^2,y^3)$
\setcounter{counterqa}{\value{enumi}}
\end{enumerate}
then there exists $u\in\gx^1(\diamond_+)$ that satisfies
\begin{equation}\label{eq:MCT1}
\dg u +\tfrac12[u,u] = 0
\;,\qquad
u|_{\tau=0} = \udata
\end{equation}
and:
\begin{itemize}
\item 
\textbf{Part 1 (decay and regularity).}
$u$ extends in $C^{\NN-3}$ to $\overline{\diamond}_+\setminus\spaceinf$
and
\begin{subequations}\label{eq:uconclNptw}
\begin{align}
|(|y|\p_y)^{\le\NN}(u-\kerr)| 
	&\le C\eps |y|^{\NN+4}
	\qquad\quad\text{on $\new{\Dspop}_{\le\frac\sfix2}$}\label{eq:uconclNptwsp}
	\\
\|u\|_{C^{\NN-3}(\diamond_+\setminus\Dspop_{\new{<}\frac\sfix6})} 
	&\le C \eps \label{eq:uconclNptwCN}
\end{align}
\end{subequations}
\item 
\textbf{Part 2 (higher decay and regularity).}
For all $k\in\Z_{\ge\NN}$, if
\begin{subequations}\label{eq:SimpHigher}
\begin{align}
\|\tfrac{(|y|\p_{y})^{\le k+3}\kerr}{|y|(1+|\log|y||)}\|_{L^\infty(\Dspop_{\le\sfix})} 
&\;<\; \infty\label{eq:KPtsHigher}\\
%&&
\|\tfrac{(|\vec{y}|\p_{\vec{y}})^{\le k+3}(\udata-\kerrdata)}{|\vec{y}|^{k+5}}\|_{L^\infty(\Dspdata_{\le\sfix})}
&\;<\;\infty\label{eq:K-uPtsHigher}
\end{align}
\end{subequations}
then $u$ extends in $C^{k-3}$ to $\overline{\diamond}_+\setminus\spaceinf$
and 
$\|\tfrac{(|y|\p_y)^{\le k}(u-\kerr)}{|y|^{k+4}}\|_{L^\infty(\Dspop_{\le\frac\sfix2})} <\infty$.
\item 
\textbf{Part 3 (metric).}
Decompose $u=\uo\oplus\uI$ using \eqref{eq:gx1Dintro}.
The frame $\one+\frame{\uo}$ is invertible at every point on $\diamond_+$.
The smooth metric $g$ on $\diamond_+$ defined by
\begin{equation}\label{eq:gthm}
g^{-1} 
= 
(\one + \frame{\uo})^{\otimes2} \eta^{-1}
\end{equation}
is Ricci-flat, future null geodesically complete, 
and the future null infinity locus of $g$ equals $\fnullinf$.
Moreover, $\nullgen^2g$ extends to an everywhere nondegenerate Lorentzian $C^{\NN-3}$-metric 
(respectively $C^{k-3}$ under the assumptions of Part 2)
on $\overline\diamond_+\setminus\spaceinf$.
More generally, the assumptions and conclusions of Proposition \ref{prop:metricregularity}
below hold with parameters \eqref{eq:ksu0} given by $\NN-3$, $\sfix$, $\uo$.
\end{itemize}
\end{theorem}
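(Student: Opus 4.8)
The plan is to construct $u$ in two overlapping pieces matching the cover \eqref{eq:coverD+Thm1} — one in a neighborhood of spacelike infinity and one in the bulk — and to glue them by uniqueness. I would first impose a gauge (a generalized wave condition together with a normalization of the orthonormal frame, as in Sections~\ref{sec:gauge_i0} and~\ref{sec:Gauge_cyl}) that turns a necessary square subsystem of \eqref{eq:MC} into a quasilinear symmetric hyperbolic system, arrange that $\udata$ satisfies the associated gauge conditions on $\diamonddata$ (the Kerr element $\kerr$ is already compatible with the gauge by \ref{item:kerrmc}), and solve that reduced system with Cauchy data $\udata$. The remaining components of \eqref{eq:MC} are the constraints; they propagate because they themselves solve a linear homogeneous symmetric hyperbolic system, so $\Pconstraints(\udata)=0$ from \ref{item:ucons} upgrades any solution of the reduced system to a solution of the full equation \eqref{eq:MCT1}. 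On the open manifold $\diamond_+$ the reduced system has smooth coefficients, so smoothness of $u$ there is automatic once existence up to the boundary is established; the substance of the theorem is the existence together with the estimates \eqref{eq:uconclNptwsp}, \eqref{eq:uconclNptwCN} and the metric statements of Part~3.

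For the neighborhood of spacelike infinity I would write $u=\kerr+v$. Since $\kerr$ solves \eqref{eq:MC} by \ref{item:kerrmc}, the unknown $v$ satisfies $\dg v+[\kerr,v]+\tfrac12[v,v]=0$ with Cauchy data $\udata-\kerrdata$ on $\Dspdata_{\le\sfix}$, to be solved on $\Dspop_{\le\sfix}$ between the spacelike slices $\{\tau=0\}$ and $\{\s=\sfix\}$ (the level sets of $\s$ are spacelike by the choice of the factor $2$ in \eqref{eq:sdefintro}). I would pass to the homogeneous basis of Section~\ref{sec:basis_i0} adapted to $\s=2y^0+|\vec y|$, in which the $\R_+$-scaling about $\spaceinf$ acts diagonally and $\dg$, $[\cdot,\cdot]$ have coefficients homogeneous of degree zero, and organize every term by its homogeneity degree: the \emph{leading} part is a scale-invariant model operator, while the Kerr coefficient $[\kerr,\cdot]$ — of size $\eps|y|(1+|\log|y||)$ by \ref{item:KPts} — and the quadratic term $\tfrac12[v,v]$ are subleading perturbations that vanish at $\spaceinf$. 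I would then apply the weighted energy estimates of Section~\ref{sec:Abstract} (carried out for this system in Section~\ref{sec:SpaceinfConstruction}): from the data bound $|\vec y|^{\NN+5}$ of \ref{item:K-uPts} one obtains a solution $v$ on $\Dspop_{\le\sfix/2}$ with $|(|y|\p_y)^{\le\NN}v|\le C\eps|y|^{\NN+4}$, hence extending in $C^{\NN-3}$ to $\overline{\diamond}_+\setminus\spaceinf$, which is \eqref{eq:uconclNptwsp}. Part~2 is the same construction run at order $k\ge\NN$ in place of $\NN$, using the a priori smallness from Part~1 to close the higher-order estimate under the additional hypotheses of Part~2.

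Away from spacelike infinity I would work in the basis of Section~\ref{sec:basis_bulk}, in which $\dg$ and $[\cdot,\cdot]$ are smooth and symmetric hyperbolic on all of $\overline{\diamond}$ with non-degenerate principal symbol along $\p\diamond\setminus\spaceinf$. I would solve the reduced system on $\diamond_+\setminus\Dspop_{<\sfix/6}$ forward from the spacelike Cauchy surface formed by the portion $\{\tau=0\}\cap\{\s\ge\sfix/6\}$ of $\diamonddata$, carrying the data $\udata$ — small by \ref{item:uxptw} — and the slice $\{\s=\sfix/6\}\cap\diamond_+$, on which the near-$\spaceinf$ solution provides the data. A standard energy estimate on this region, with null and timelike infinity at a fixed regular locus, followed by Sobolev embedding (this accounts for the loss of three derivatives), gives $\|u\|_{C^{\NN-3}(\diamond_+\setminus\Dspop_{<\sfix/6})}\le C\eps$, i.e.\ \eqref{eq:uconclNptwCN}. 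On the overlap $\{\tfrac{\sfix}{6}\le\s\le\sfix\}\cap\diamond_+$ both constructions solve the same reduced system with the same Cauchy data, so uniqueness identifies them and produces a single $u\in\gx^1(\diamond_+)$; the numerical factors $6$ and $2$ supply the room needed for this domain-of-dependence gluing. Finally, Part~3 follows from Proposition~\ref{prop:metricregularity} applied with parameters $\NN-3$, $\sfix$, $\uo$: the estimates make $\frame{\uo}$ uniformly small, so $\one+\frame{\uo}$ is invertible and \eqref{eq:gthm} defines a metric, which is Ricci-flat by the correspondence \eqref{eq:11withmetrics}; \eqref{eq:h^2gcyl} with $u\in C^{\NN-3}(\overline{\diamond}_+\setminus\spaceinf)$ gives the $C^{\NN-3}$ conformal compactification, and smallness of $\dualframe{\uo}(d\nullgen)/\nullgen$ along $\p\diamond\setminus\spaceinf$ yields future null geodesic completeness and identifies the future null infinity locus with $\fnullinf$.

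The main obstacle is the weighted energy estimate near spacelike infinity, where the reduced system is genuinely singular: the scaling vector field degenerates at $\spaceinf$ and the Kerr coefficient carries the borderline weight $|y|(1+|\log|y||)$, so one cannot appeal to finite speed of propagation (the data is not equal to Kerr near $\spaceinf$). One must choose the energy weights — powers of $\s$ for the behavior at $\spaceinf$ and of $\nullgen$ for the behavior at $\fnullinf$ — so finely that the top-order energy is coercive, the model operator contributes favorably signed bulk and boundary terms, and all subleading contributions, crucially the Kerr terms and the quasilinear terms, are absorbed, and so that the stated powers $|y|^{\NN+4}$, $|y|^{k+4}$ and the derivative count come out, uniformly in $\eps$. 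This is precisely the content of Sections~\ref{sec:Abstract} and~\ref{sec:SpaceinfConstruction}.
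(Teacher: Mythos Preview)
Your overall architecture matches the paper's: construct near $\spaceinf$ using the homogeneity structure and weighted energy estimates (Sections~\ref{sec:Abstract},~\ref{sec:SpaceinfConstruction}), construct away from $\spaceinf$ using that the gauged system is symmetric hyperbolic on $\overline{\diamond}$ (Section~\ref{sec:bulk}), and invoke Proposition~\ref{prop:metricregularity} for Part~3. The paper in fact proves Theorem~\ref{thm:mainpointwise} as a straightforward corollary of the norm-based Theorem~\ref{thm:main}, by checking that the pointwise hypotheses \ref{item:kerrmc}--\ref{item:K-uPts} imply the norm hypotheses \ref{item:KMCmain}--\ref{item:diffsmallmain} and then reading off \eqref{eq:uconclNptw} from \eqref{eq:umain}; your sketch is closer to the proof of Theorem~\ref{thm:main} itself.

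Two points deserve correction. First, the gauge is \emph{not} imposed on $\udata$ or on $\kerr$; neither is required to lie in the gauge submodule $\gxG^1$, and your claim that ``$\kerr$ is already compatible with the gauge by \ref{item:kerrmc}'' conflates solving the MC equation with membership in $\gxG^1$. The paper instead writes $u=v+c$ with $v$ an \emph{explicit} extension (near $\spaceinf$: $v_0=\kerr+\Pext(\udata-\kerrdata)$, see \eqref{eq:u=v+c}; in the bulk: $v=\phi_0 u_0+\psi\phi_1\Pextbulk(\udata)$, see \eqref{eq:vmaindef}) and imposes the gauge only on the correction $c\in\gxG^1$; the necessary subsystem \eqref{eq:necSHS} for $c$ is then symmetric hyperbolic, and the constraints \eqref{eq:constraintsprop} propagate. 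Second, the paper does \emph{not} pose a mixed Cauchy problem with data on $\{\tau=0\}\cup\{\s=\sfix/6\}$. Instead, it first obtains the near-$\spaceinf$ solution $u_0$ on $\Dspop_{\le\sfix}$, then uses cutoffs to build the bulk background $v$ incorporating $u_0$, and solves forward from $\{\tau=0\}$ alone; since $\dg v+\tfrac12[v,v]=0$ on $\Dspop_{\le\sfix/2}$ and $c|_{\tau=0}=0$, finite speed of propagation forces $c\equiv 0$ on $\Dspop_{\le\sfix/2}$ (see \eqref{eq:czeroD}, \eqref{eq:uu0}), which is how the two pieces agree on the overlap. Your mixed-surface formulation could likely be made to work, but the cutoff-plus-correction mechanism is what the paper actually uses and avoids the subtleties of a genuine initial--boundary value problem.
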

%------------------------------
We prove Theorem \ref{thm:mainpointwise} in Section \ref{sec:ProofTheoremMain}
as a corollary of Theorem \ref{thm:main}.

Note that \eqref{eq:kerrandudata} requires 
that $\kerr$ is smooth including along future null infinity,
not at $\spaceinf$.
One may choose $\kerr$ to be equal to a Kerr element $\kerr(m,\vec{a})$,
in fact for every choice of $\NN$, $\sfix\le\frac{1}{100}$ 
the assumptions \eqref{eq:kerrandudata}, \ref{item:kerrmc}, \ref{item:KPts}
are satisfied for $\kerr=\kerr(m,\vec{a})$ provided that $m$ is sufficiently small,
using \eqref{eq:KerrDecay}.
Also note that if $\kerr=\kerr(m,\vec{a})$ then \eqref{eq:KPtsHigher} 
is satisfied for all $k$, using \eqref{eq:KerrDecay}.

The assumptions \ref{item:KPts}, \ref{item:uxptw}, \ref{item:K-uPts}
require in particular that the initial data $\udata$ is 
small, as dictated by $\eps$, on $\diamonddata$.
Further \ref{item:K-uPts} requires that
$\udata$ decays to $\kerrdata$ inverse polynomially 
in $x$-coordinates (one has $|\vec{y}| = |\vec{x}|^{-1}$ on $\diamonddata$), at a sufficiently fast rate.
Part 1 states that one then obtains regularity of
the solution $u$, and hence of the conformally rescaled metric
$\nullgen^2g$, at null and timelike infinity.
Part 2 is the statement that faster decay of the initial data 
$\udata$ to $\kerrdata$ implies higher regularity of the solution $u$,
and hence of $\nullgen^2g$, at null and timelike infinity.

In particular, we obtain:
%--------------------------------%
\begin{corollary}
In Theorem \ref{thm:mainpointwise}, 
if 
\ref{item:kerrmc}-%
%%, \ref{item:ucons}, \ref{item:KPts}, \ref{item:uxptw}, 
\ref{item:K-uPts} hold, and if:
\begin{itemize}
\item The components of $\udata-\kerrdata$
decay rapidly towards $\spaceinf$.
\item 
For all $k\in\Z_{\ge0}$ one has
$\|\frac{(|y|\p_{y})^{\le k}\kerr}{|y|(1+|\log|y||)}\|_{L^\infty(\Dspop_{\le\sfix})}<\infty$.\footnote{%
This is automatic when $\kerr=\kerr(m,\vec{a})$.}
\end{itemize}
then $u$ extends smoothly to $\overline{\diamond}_+\setminus\spaceinf$, 
and $\nullgen^2 g$ extends as a smooth metric to
$\overline{\diamond}_+\setminus\spaceinf$, 
and the components of $u-\kerr$ vanish to infinite order at $\spaceinf$.
\end{corollary}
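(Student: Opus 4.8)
\proofheader{Proof sketch.}\quad The plan is to deduce the corollary from Part~2 of Theorem~\ref{thm:mainpointwise} by letting the regularity index $k$ tend to infinity. Work in the setting of Theorem~\ref{thm:mainpointwise}, with the given $\NN,\sfix$, a choice of $\eps\le\eps_0$, and $\kerr,\udata$ satisfying \ref{item:kerrmc}--\ref{item:K-uPts}. The first step is to observe that the two extra hypotheses say precisely that \eqref{eq:SimpHigher} holds for every $k\in\Z_{\ge\NN}$: the second bullet is the finiteness condition \eqref{eq:KPtsHigher} (asserted for all $k\ge0$, hence in particular with $k+3$ in place of $k$), and unwinding what it means for the components of $\udata-\kerrdata$ to decay rapidly towards $\spaceinf$ --- namely that $(|\vec{y}|\p_{\vec{y}})^{\le j}(\udata-\kerrdata)=O(|\vec{y}|^{N})$ on $\Dspdata_{\le\sfix}$ for all $j,N\ge0$ --- gives \eqref{eq:K-uPtsHigher} for every $k$ upon taking $j=k+3$ and $N=k+5$.

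The second step is to apply Theorem~\ref{thm:mainpointwise} to produce $u\in\gx^1(\diamond_+)$ satisfying \eqref{eq:MCT1} together with Parts~1--3, and then to invoke Part~2 for each $k\ge\NN$. Since \eqref{eq:SimpHigher} holds for all such $k$, this one $u$ extends in $C^{k-3}$ to $\overline{\diamond}_+\setminus\spaceinf$ and satisfies $\|(|y|\p_y)^{\le k}(u-\kerr)/|y|^{k+4}\|_{L^\infty(\Dspop_{\le\frac{\sfix}{2}})}<\infty$ for every $k\ge\NN$. Letting $k\to\infty$ shows that $u$ extends in $C^m$ to $\overline{\diamond}_+\setminus\spaceinf$ for every $m$, i.e.\ smoothly. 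For the infinite-order vanishing at $\spaceinf$: for a fixed $j$ and any $k\ge j$ the above weighted bound contains the estimate $(|y|\p_y)^{\le j}(u-\kerr)=O(|y|^{k+4})$ on $\Dspop_{\le\frac{\sfix}{2}}$, and letting $k\to\infty$ yields $(|y|\p_y)^{\le j}(u-\kerr)=O(|y|^{N})$ near $\spaceinf$ for every $N$; since $j$ was arbitrary, the components of $u-\kerr$ in the homogeneous basis, together with all of their homogeneous derivatives, decay faster than any power of $|y|$ as $y\to0$, which is the asserted infinite-order vanishing at $\spaceinf$.

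The third step handles the metric, and follows directly from Part~3 of Theorem~\ref{thm:mainpointwise}: under the assumptions of Part~2 for a given $k$ it already gives that the metric $g$ defined by \eqref{eq:gthm} is Ricci-flat and future null geodesically complete with future null infinity locus $\fnullinf$, and that $\nullgen^2g$ extends to an everywhere nondegenerate Lorentzian $C^{k-3}$-metric on $\overline{\diamond}_+\setminus\spaceinf$; applying this for all $k\ge\NN$ yields that $\nullgen^2g$ extends as a smooth everywhere nondegenerate Lorentzian metric. I do not anticipate a genuine obstacle: the corollary is a formal consequence of Theorem~\ref{thm:mainpointwise}, and the only points requiring a sentence of care are translating the informal ``rapid decay'' hypothesis into the weighted finiteness condition \eqref{eq:K-uPtsHigher}, and extracting the infinite-order vanishing from the family of $k$-dependent weighted estimates supplied by Part~2.
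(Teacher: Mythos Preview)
Your proposal is correct and matches the intended argument: the paper states the corollary without proof precisely because it follows immediately from Part~2 of Theorem~\ref{thm:mainpointwise} by applying \eqref{eq:SimpHigher} for every $k\ge\NN$ and letting $k\to\infty$, exactly as you outline. Your translation of the two bullet hypotheses into \eqref{eq:KPtsHigher} and \eqref{eq:K-uPtsHigher}, and your extraction of infinite-order vanishing from the $k$-dependent weighted bounds, are both handled correctly.
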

%-----------------------------%
In the following proposition we 
state how properties of the endomorphism $\frame{\uo}$
in \eqref{eq:Fuodef} imply 
properties of the associated metric $g$ in \eqref{eq:metricdefIntro}.
The assumptions and conclusions 
hold in particular for the solution in 
Theorem \ref{thm:mainpointwise} (see Part 3).
%-----------------------------%
\begin{prop}\label{prop:metricregularity}
Let 
\begin{equation}\label{eq:ksu0}
k\in\Z_{\ge2} 
\qquad
s\in (0,1]
\qquad
\uo\in\Omega^1(\diamond_+)\otimesRR\Kil
\end{equation}
Assume that $\uo$ extends in $C^k$ to $\overline\diamond_+\setminus\spaceinf$,
and assume that:
\begin{enumerate}[label=(b\arabic{enumi}),leftmargin=10mm]
\item \label{item:F1/16assp}
The $\ell^2$-matrix norm of the endomorphism $\frame{\uo}$
satisfies:
At every point on $\Dspop_{\le s}$ it is bounded by 
$\frac{1}{16}$, using the basis $\s\p_{y^0},\dots,\s\p_{y^3}$;
at every point on $\diamond_{+}$ it is bounded by 
$\frac{1}{16}$, using the basis \eqref{eq:defV} which is regular on $\overline\diamond_+$.
\item \label{item:Fdh1/16assp}
The $\ell^2$-vector norm of the one-form ${\dualframe{\uo}(d\nullgen)}/{\nullgen}$
satisfies: 
At every point on $\Dspop_{\le s}$ it is bounded by 
$\frac{1}{16}$, using the basis ${dy^0}/{\s},\dots,{dy^3}/{\s}$;
at every point on $\diamond_{+}\setminus\Dspop_{<\frac{s}{6}}$ 
it is bounded by 
$\frac{1}{16}$, using the basis dual to \eqref{eq:defV}.
\end{enumerate} 
Then 
$\one+\frame{\uo}$ is invertible at every point on $\diamond_+$,
and the metric $g$ on $\diamond_+$ defined by \eqref{eq:metricdefIntro} 
has the following properties:
\begin{enumerate}[label=(c\arabic{enumi}),leftmargin=10mm]
\item \label{item:metric_regularextension}
	$\nullgen^2g$ extends to a Lorentzian $C^{k}$-metric on $\overline\diamond_+\setminus\spaceinf$.
	In particular, the extension is everywhere nondegenerate,
	including along $\fnullinf\cup\timeinf$.
\item \label{item:metric_eikonal}
	$(\nullgen^2g)^{-1}(d\nullgen,d\nullgen) = \nullgen f$
	for a function $f\in C^{k}(\overline\diamond_+\setminus\spaceinf)$,\\
	in particular $(\nullgen^2g)^{-1}(d\nullgen,d\nullgen)=0$ on $\fnullinf$.
\item\label{item:metric_nullcompleteness}
	The metric $g$ on $\diamond_+$ is
	future null geodesically complete, 
	and the null infinity locus is $\fnullinf$.
	More precisely, for every $p_0\in\diamonddata$,
	and every $v_*\in T_{p_0}\diamond_+$ 
	that is null with respect to $g$ and 
	normalized such that $d\tau(v_*)=1$,
	there exist
	$\taumax>0$ and 
	$\gamma\in C^{\new{\infty}}([0,\taumax),\diamond_+)$
	of the form
	\begin{equation}\label{eq:gammapartau_intro}
	\gamma:\; \tau\mapsto (\tau,\xi(\tau))
	\end{equation}
	with $\xi(\tau)\in S^3$, 
	that satisfies the null geodesic initial value problem
	\begin{align}\label{eq:nullgeodeq_intro}
	\nabla^g_{\dot\gamma}\dot\gamma  &\ \propto\  \dot\gamma
	&
	\gamma(0) &= p_0
	&
	\dot\gamma(0) &= v_*
	\end{align}
	that further extends in $C^k$ to $[0,\taumax]$, and that satisfies:
	\begin{enumerate}[label=(\roman*)]
	\item \label{item:null_intro}
	$\dot\gamma(\tau)$ is null for all $\tau\in[0,\taumax]$.
	\item \label{item:gammaD_intro}
	$\gamma(\taumax)\in\fnullinf$ and $\dot\gamma(\taumax)$ is transversal to null infinity: $d\nullgen(\dot\gamma(\taumax))\neq 0$.
	\item \label{item:gammaAff_intro}
	The affine parameter (relative to $g$) goes to infinity along $\gamma$ as $\tau\uparrow\tau_1$.
	\end{enumerate}
	Moreover, every maximal null geodesic of $g$ in $\diamond_+$
	is given by such a $\gamma$,
	and every point in $\fnullinf$ is reached by such a $\gamma$.
\item \label{item:metric_nullinfinity}
	Let $\LL$ be the field of lines on $\fnullinf$ 
	spanned at each $p\in \fnullinf$ 
	by the nonzero vector $(\nullgen^2g)^{-1}(d\nullgen,\,\cdot\,)|_{p}$,
	which is tangential to $\fnullinf$ and null with respect to $\nullgen^2g$.
	For every $v_*\in T_{\timeinf}\cyl$
	that is null with respect to $\nullgen^2g$
	and normalized such that $d\tau(v_*)=1$,
	there exists $\gamma\in C^{\new{k}}((0,\pi],\fnullinf\cup\timeinf)$
	of the form\footnote{%
	\new{By $C^k$ we mean that $\gamma$ is 
	$C^{k}$ as a map $(0,\pi]\to \cyl$.}
	}
	\begin{equation}\label{eq:gammatau_intro}
	\gamma:\; \tau\mapsto(\tau,\xi(\tau))
	\end{equation}
	with $\xi(\tau)\in S^3$,
	that is an integral curve of $\LL$ when $\tau\in(0,\pi)$,
	and satisfies
	\begin{equation}\label{eq:gammaini_intro}
	\gamma(\pi) = \timeinf
	\qquad\qquad
	\dot\gamma(\pi)=v_*
	\end{equation}
	The union of these integral curves is $\fnullinf$.
	Every such $\gamma$ is a null geodesic for $\nullgen^2g$
	(not affinely parametrized in general).
\end{enumerate}
\end{prop}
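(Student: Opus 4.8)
Throughout I would work with the conformally rescaled metric $\tilde g := \nullgen^2 g$, which by \eqref{eq:h^2gcyl} satisfies $\tilde g^{-1} = (\one+\frame{\uo})^{\otimes2}\gcyl^{-1}$, and recall that the null (pre-)geodesics of $g$ and of $\tilde g$ coincide as unparametrized curves. \textbf{Step 1 (invertibility and \ref{item:metric_regularextension}).} Assumption \ref{item:F1/16assp} bounds the $\ell^2$-operator norm of $\frame{\uo}$ by $\tfrac1{16}$ pointwise in a frame regular on $\overline\diamond_+\setminus\spaceinf$ (the frame \eqref{eq:defV} on all of $\diamond_+$, and $\s\p_{y^0},\dots,\s\p_{y^3}$ on $\Dspop_{\le s}$), so the Neumann series $\sum_{n\ge0}(-\frame{\uo})^n$ converges to a $C^k$ inverse of $\one+\frame{\uo}$, whose singular values lie in $[\tfrac{15}{16},\tfrac{17}{16}]$ in that frame. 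Inserting into \eqref{eq:h^2gcyl}, $\tilde g^{-1}$ is a $C^k$ symmetric $2$-tensor on $\overline\diamond_+\setminus\spaceinf$ whose determinant relative to the chosen frame is bounded away from zero, hence nondegenerate, and of Lorentzian signature by continuity along $t\mapsto(\one+t\frame{\uo})^{\otimes2}\gcyl^{-1}$; inverting gives the $C^k$ Lorentzian metric $\tilde g=\nullgen^2g$, which is \ref{item:metric_regularextension}.

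\textbf{Step 2 (\ref{item:metric_eikonal}).} The transpose of $\one+\frame{\uo}$ is $\one+\dualframe{\uo}$, so $\tilde g^{-1}(d\nullgen,d\nullgen)=\gcyl^{-1}(\beta,\beta)$ with $\beta:=d\nullgen+\dualframe{\uo}(d\nullgen)$. A direct computation from $\nullgen=\cos\tau-\xi^4$, $\gcyl=-d\tau^{\otimes2}+\gS$ and $\gS^{-1}(d\xi^4,d\xi^4)=1-(\xi^4)^2$ gives $\gcyl^{-1}(d\nullgen,d\nullgen)=\cos^2\tau-(\xi^4)^2=\nullgen(\cos\tau+\xi^4)$. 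Writing $\uo=\sum_i\omega_i\otimes\KilBasis_i$ one has $\dualframe{\uo}(d\nullgen)=\sum_i(\KilBasis_i\nullgen)\,\omega_i$; since $\KilBasis_i$ is conformal Killing for $\gcyl$ and Killing for $\eta=\nullgen^{-2}\gcyl$ on $\diamond$, the identity $\Lie_{\KilBasis_i}(\nullgen^{-2}\gcyl)=0$ forces $\Lie_{\KilBasis_i}\gcyl=2\phi_i\gcyl$ with $\phi_i:=\nullgen^{-1}(\KilBasis_i\nullgen)\in C^\infty(\cyl)$, i.e.\ $\KilBasis_i\nullgen=\nullgen\phi_i$; hence $\dualframe{\uo}(d\nullgen)=\nullgen\sum_i\phi_i\omega_i$ is $\nullgen$ times a $C^k$ one-form on $\overline\diamond_+\setminus\spaceinf$. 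Expanding $\gcyl^{-1}(\beta,\beta)$ then gives $\tilde g^{-1}(d\nullgen,d\nullgen)=\nullgen f$ with $f\in C^k(\overline\diamond_+\setminus\spaceinf)$, proving \ref{item:metric_eikonal}; in particular $\fnullinf=\{\nullgen=0,\ d\nullgen\neq0\}\cap\overline\diamond_+$ is a null hypersurface for $\tilde g$.

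\textbf{Step 3 (\ref{item:metric_nullcompleteness}).} I would first extend $\uo$ $C^k$-smoothly (Seeley) to an open neighborhood $\mathcal N\subset\cyl\setminus\spaceinf$ of $\overline\diamond_+\setminus\spaceinf$, shrinking $\mathcal N$ so that $\one+\frame{\uo}$ stays invertible; then $\tilde g$ is a $C^k$ Lorentzian metric on $\mathcal N$ with a well-posed null-geodesic flow ($k\ge2$), and $\uo$ is smooth on $\diamond_+$. Next I would confine future-directed null geodesics of $\tilde g$ issued from $\diamonddata$: on $\diamond_+\setminus\Dspop_{<s/6}$ the frame \eqref{eq:defV} is uniform for $\gcyl$, so \ref{item:F1/16assp} makes $\tilde g$ uniformly close to $\gcyl$ there while $\gcyl^{-1}(d\tau,d\tau)=-1$; near $\spaceinf$, the spacelike level sets of $\s$ (whose Minkowski-spacelike margin is the reason for the factor $2$ in $\s=2y^0+|\vec y|$, together with \ref{item:F1/16assp}, \ref{item:Fdh1/16assp}) act as barriers forcing $\s$ to be nondecreasing along future causal curves, so such geodesics stay in a compact subset of $\overline\diamond_+\setminus\spaceinf$ (the Minkowski picture, in which future null geodesics from $\diamonddata$ reach $\fnullinf$ and never $\spaceinf$ or $\timeinf$, being the model). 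Hence $\tau$ is strictly increasing along every such $g$-null geodesic $\gamma$, which is therefore of the form $\tau\mapsto(\tau,\xi(\tau))$ and stays in $\diamond_+$ until $\tau$ reaches some $\taumax\in(0,\pi)$; the exit point lies on $\fnullinf$ and is transversal, $d\nullgen(\dot\gamma(\taumax))\neq0$, asymptotic tangency to $\fnullinf$ being excluded via $\tilde g^{-1}(d\nullgen,d\nullgen)=\nullgen f$ and the geodesic equation as in Friedrich's argument. Smoothness of $\uo$ on $\diamond_+$ gives $\gamma\in C^\infty([0,\taumax),\diamond_+)$, the $C^k$ extension of $\tilde g$ across $\fnullinf$ gives $\gamma\in C^k$ up to $\taumax$, and \ref{item:gammaAff_intro} follows since along $\gamma$ the $g$- and $\tilde g$-affine parameters satisfy $d\lambda_g/d\lambda_{\tilde g}\propto\nullgen^{-2}$ while $\nullgen$ vanishes to first order at $\gamma(\taumax)$ in $\lambda_{\tilde g}$. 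That every maximal null geodesic of $g$ in $\diamond_+$ has this form follows by running it to the past onto $\diamonddata$ (again a barrier argument), and surjectivity onto $\fnullinf$ from a continuity/degree argument based on the Minkowski case.

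\textbf{Step 4 (\ref{item:metric_nullinfinity}) and the main obstacle.} On $\fnullinf$ set $K:=\tilde g^{-1}(d\nullgen,\cdot)=\grad_{\tilde g}\nullgen$; it is null, tangent to $\fnullinf$ (because $d\nullgen(K)=\tilde g^{-1}(d\nullgen,d\nullgen)=\nullgen f$ vanishes there) and nonzero ($d\nullgen\neq0$). Differentiating $\tilde g(K,K)=\nullgen f$ and restricting to $\{\nullgen=0\}$ gives $2\,\mathrm{Hess}^{\tilde g}\nullgen(K,\cdot)=f\,d\nullgen=f\,\tilde g(K,\cdot)$, hence $\nabla^{\tilde g}_K K=\tfrac12 f\,K$ on $\fnullinf$, so the integral curves of $K$ are null pre-geodesics of $\tilde g$, i.e.\ unparametrized null geodesics of $g$, and they span $\LL$. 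Since $\gcyl^{-1}(d\nullgen,d\tau)=\sin\tau>0$ for $\tau\in(0,\pi)$ and $\tilde g$ is $\gcyl$-close on compacta of $\overline\diamond_+\setminus\spaceinf$, $\tau$ increases strictly along each generator, so it is of the form $\gamma(\tau)=(\tau,\xi(\tau))$; comparison with Minkowski shows the generator through any point of $\fnullinf$ limits to $\spaceinf$ as $\tau\downarrow0$ (whence the domain $(0,\pi]$) and to $\timeinf$ as $\tau\uparrow\pi$. The step I expect to be most delicate is showing that $\gamma$ extends in $C^k$ through $\tau=\pi$ with prescribed $\dot\gamma(\pi)=v_*$: at $\timeinf$ the field $K$ vanishes and $\nullgen$ has a nondegenerate (Lorentzian) critical point, so $\timeinf$ behaves like the vertex of the $\tilde g$-null cone and the generators are its radial null geodesics in the $\tau$-parametrization; this has to be analyzed directly near $\timeinf$, where $\tilde g$ is a genuine $C^k$ Lorentzian metric, via $C^k$ dependence of the $\tilde g$-exponential map, and dovetailed carefully with the confinement estimates near the corner $\overline{\fnullinf}\cap\{\spaceinf\}$, where \ref{item:F1/16assp} and \ref{item:Fdh1/16assp} are posed in the degenerating frames $\s\p_{y^\mu}$, $dy^\mu/\s$ rather than in a $\gcyl$-uniform one.
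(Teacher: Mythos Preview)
Your Steps 1 and 2 are essentially identical to the paper's proof of \ref{item:metric_regularextension} and \ref{item:metric_eikonal}: the same Neumann-series invertibility, the same expansion of $\tilde g^{-1}(d\nullgen,d\nullgen)$, and the same computation $\gcyl^{-1}(d\nullgen,d\nullgen)=\nullgen(\cos\tau+\xi^4)$. Your argument that $\KilBasis_i\nullgen/\nullgen\in C^\infty(\cyl)$ via the conformal Killing equation is exactly the content the paper records in Remark~\ref{rem:Kil(h)}.

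For \ref{item:metric_nullinfinity} and \ref{item:metric_nullcompleteness} the paper itself gives only a detailed sketch (explicitly deferring full proofs), so your sketches are comparable in depth, but the key mechanisms differ. For \ref{item:metric_nullinfinity}, the paper's central observation is that every vector field in $\Kil$ vanishes at $\timeinf$, hence $\dualframe{\uo}|_{\timeinf}=0$ and the Jacobian of the vector field $H:=(\nullgen^2 g)^{-1}(d\nullgen,\cdot)$ at the critical point $\timeinf$ equals exactly $-\one$; with this in hand the paper argues that the rescaled field $H/H(\tau)$ lifts to a $C^{k-1}$ vector field on the blowup $(0,\pi]\times S^2$ of $\timeinf$ in $\fnullinf\cup\timeinf$, from which the $C^k$ extension of generators through $\tau=\pi$ with prescribed $\dot\gamma(\pi)=v_*$ follows directly. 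Your proposed route via $C^k$ dependence of the $\tilde g$-exponential map at $\timeinf$ is plausible but misses this structural simplification; without the exact Jacobian you would need an additional argument to match the prescribed null direction $v_*$ with a generator. For \ref{item:metric_nullcompleteness}, the paper's confinement argument invokes the specific function $\fctfol=((1-\cos\tau)/(1-\xi^4))^{1/2}$ and the cones $\Gamma^{\Cpos}_{\zz_0,\tt_0}$ of \eqref{eq:eqcone} rather than your $\s$-level-set barriers, and it derives transversality \ref{item:gammaD_intro} from \ref{item:metric_nullinfinity} (a null geodesic arriving tangentially to $\fnullinf$ would have to be a generator, contradiction) rather than from a direct eikonal argument as you propose.
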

The proof of Proposition \ref{prop:metricregularity}
is at the end of Section \ref{sec:recoveryofmetric},
where for \ref{item:metric_nullcompleteness}
and \ref{item:metric_nullinfinity} we will only give a detailed sketch
(since it is somewhat off topic for this paper,
we plan to write out the full proofs in an upcoming paper).

Note that if $g$ is a metric on $\diamond$
that satisfies \ref{item:metric_nullcompleteness} on $\diamond_+$,
and such that the pullback of $g$ along the time reflection
$(\tau,\xi)\mapsto(-\tau,\xi)$ also satisfies \ref{item:metric_nullcompleteness} on $\diamond_+$, then the metric $g$ on $\diamond$ is null geodesically complete,
and the locus of future and past null infinity is $\fnullinf$
respectively $\pnullinf$ (c.f.~Appendix \ref{ap:ConstructionOnD}).
%-----------------------------%
\step
%-----------------------------%
\textbf{Main theorem.}
We now foliate the two sets of the cover \eqref{eq:coverD+Thm1}
by level sets of $\s$ respectively $\tau$, 
see Figure \ref{fig:s_tau_foliations}.
For $s\in (0,\sfix]$
denote by $\Dspop_{s}$ the portion of  $\Dspop_{\le\sfix}$
where $\s=s$. 
For $\tau\in [0,\pi)$
denote by $\diamond_{\tau,\sfix}$ the intersection
of $\diamond_+\setminus \Dspop_{<\sfix/6}$ 
and $\{\tau\}\times S^3$.
We use the following norms for sections $u$ of $\gx^1$
(some of them are actually seminorms, but we refer to them
as norms for simplicity):
\begin{itemize}
\item 
\textit{Homogeneous norms near $\spaceinf$ (Definition \ref{def:norms_i0}).}
For $k\in\Z_{\ge0}$ define
\begin{equation}\label{eq:bnorms_intro}
\|u\|_{\nosCb^k(\Dspop_{\le\sfix})}
\qquad
\|u\|_{\nosHb^k(\Dspop_{\le\sfix})}
\qquad 
\|u\|_{\sCb^k(\Dspop_{s})}
\qquad
\|u\|_{\sHb^k(\Dspop_{s})}
\end{equation}
as follows:
The components of $u$ with respect to the homogeneous basis
\new{in Section \ref{sec:basis_i0}} are differentiated at most $k$ times with respect to
\[ 
\s \p_{y^0},\ \s \p_{y^1},\ \s \p_{y^2},\ \s \p_{y^3}
\]
For $\nosCb^k$ and $\sCb^k$ we then take the supremum over 
$\Dspop_{\le\sfix}$ respectively $\Dspop_{s}$. 
For $\nosHb^k$ and $\sHb^k$ we take the $L^2$-norm 
over $\Dspop_{\le\sfix}$ respectively $\Dspop_{s}$
with respect to measures that are homogeneous of degree zero.
\item 
\textit{Norms away from $\spaceinf$ (Definition \ref{def:bulknorms}).}
For $k\in\Z_{\ge0}$ define 
\begin{equation}\label{eq:bulknorms_intro} 
\|u\|_{\sC^{k}(\diamond_{\tau,\sfix})}
\qquad\qquad
\|u\|_{\sH^{k}(\diamond_{\tau,\sfix})}
\end{equation}
using the basis in Section \ref{sec:basis_bulk}
which is regular on $\overline\diamond$;
derivatives are with respect to the vector fields
\eqref{eq:defV} which are a regular frame on $\overline\diamond$;
and for $\sH^k$ the $L^2$-norm with respect to the 
standard measure on $S^3$ is used.
\end{itemize}
Beware that the slashed norms over $\Dspop_{s}$ in \eqref{eq:bnorms_intro},
and over $\diamond_{\tau,\sfix}$ in 
\eqref{eq:bulknorms_intro}, are not determined by the
restriction of $u$ to $\Dspop_{s}$ respectively to $\diamond_{\tau,\sfix}$.

%-----------------------------%
\begin{figure}
%----------------------------
\begin{subfigure}[b]{.5\linewidth}
%\vspace{.6pt}
\centering
\IntroLevelsetsS
\captionsetup{width=53mm}
\caption{Depicted is $\Dspop_{\le1}$ in $y$-coordinates,
spherical directions suppressed.
Spacelike infinity $\spaceinf$ is the origin $y=0$.
The four tilted lines are the sets $\Dspop_{0.125},\Dspop_{0.25},\Dspop_{0.5},\Dspop_{1}$,
given by the level sets 
$\s=0.125,0.25,0.5,1$.}
\label{fig:s}
\end{subfigure}
%----------------------------
\begin{subfigure}[b]{.47\linewidth}
\centering
\IntroLevelsetsTau
\captionsetup{width=53mm}
\caption{Depicted is $\diamond_+\setminus\Dspop_{<1/6}$.
The horizontal lines are the sets $\diamond_{\tau,1}$
with $\tau=0,0.125,0.25,0.375,0.5,1,\dots,3.5$,
given by the intersection of 
$\diamond_+\setminus\Dspop_{<1/6}$ and $\{\tau\}\times S^3$.
\newline $ $}
\end{subfigure}
\captionsetup{width=113mm}
\caption{
The two figures depict $\Dspop_{\le1}$ and $\diamond_+\setminus\Dspop_{<1/6}$,
and the respective foliations by $\s$-level sets respectively $\tau$-level sets.
Their union is $\diamond_+$, see Figure \ref{fig:DspCl}.}
\label{fig:s_tau_foliations}
\end{figure}
%-----------------------------%

The cover \eqref{eq:coverDataThm1} of the initial hypersurface 
$\diamonddata$
is equivalently given by 
$\diamonddata = \Dspdata_{\le\sfix} \cup \diamond_{0,\sfix}$.
We use the following norms for sections $\udata$ of $\gxdata^1$:
\begin{itemize}
\item 
\textit{Homogeneous norms for data near $\spaceinf$ (Definition \ref{def:normsdatasp}).}
For $k\in\Z_{\ge0}$ the norm 
$\|\udata\|_{\Cb^k(\Dspdata_{\le \sfix})}$
is defined using the homogeneous basis \new{in Section \ref{sec:basis_i0}}
and derivatives are with respect to the homogeneous vector fields 
\begin{equation}\label{eq:spVF}
|\vec{y}| \p_{y^1},\ |\vec{y}| \p_{y^2},\ |\vec{y}| \p_{y^3}
\end{equation}
For $k\in\Z_{\ge1}$ and $a\ge0$ define
\begin{align*}
\|\udata\|_{\HdataNEW^{a,k}(\Dspdata_{\le\sfix})}
&=\textstyle
\int_{0}^{\sfix} \big(\frac{\sfix}{s}\big)^{a+(k-1)} 
\left(1+|\log(\tfrac{\sfix}{s})|\right)^{k-1}
\|\udata\|_{\Hb^{k}(\Dspdata_{\frac{s}{\new{3}},s})} \frac{ds}{s}
\end{align*}
where $\smash{\Dspdata_{\frac{s}{\new{3}},s}}$ is the set given by $\smash{\frac{s}{\new{3}}}\le|\vec{y}|\le s$,
and $\|\udata\|_{\Hb^{k}(\Dspdata_{\frac{s}{\new{3}},s})}$ 
is defined using the homogeneous basis 
in Section \ref{sec:basis_i0},
the vector fields \eqref{eq:spVF},
and the $L^2$-norm with respect to a homogeneous of degree zero measure.
\item 
\textit{Norms for data away from $\spaceinf$ (Definition \ref{def:normdataawayfromi0}).}
For $k\in\Z_{\ge0}$ the norm 
$\|\udata\|_{\H^{k}(\diamond_{0,\sfix})}$
is defined
using the basis in Section \ref{sec:basis_bulk},
the frame of vector fields $\V_1,\V_2,\V_3$ on $S^3$ defined in \eqref{eq:defV},
and the standard measure on $S^3$.
\end{itemize}
%-----------------------------%
\begin{theorem}\label{thm:main}
For all 
\begin{equation}\label{eq:mainconst}
\NN\in\Z_{\ge\new{7}}
\qquad
\epspower\in(0,1]
\qquad
\sfix\in(0,1]
\qquad
\Cinmain>0
\end{equation}
there exist $\Clargemain>0$ and $\Csmallmain\in(0,1]$ 
such that for all 
\begin{align}\label{eq:KerrMain}
\kerr\in\gx^1(\Dspcl_{\le\sfix}) && \udata\in\gxdata^1(\diamonddata)
\end{align}
the following holds.
Abbreviate $\Dspop = \Dspop_{\le\sfix}$ and 
$\Dspdata=\Dspdata_{\le\sfix}$
and $\kerrdata = \kerr|_{\tau=0}$.
If
\begin{multicols}{2}
\begin{enumerate}[label=(d\arabic{enumi}),leftmargin=10mm]
\item \label{item:KMCmain}
	$\dg\kerr+\frac12[\kerr,\kerr]=0$
\item \label{item:KBoundMain}
	$\|\kerr\|_{\nosCb^{\NN+3}(\Dspop)}\le\Cinmain$
\item \label{item:KL1Main}
	$\int_{0}^{\sfix}\|\kerr\|_{\sCb^{1}(\Dspop_{s})} \frac{ds}{s}\le\Cinmain$
\item \label{item:KsmallMain}
	$\|\kerr\|_{\nosCb^{\NN+1}(\Dspop)}\le\Csmallmain$
%%%%%%%%%%%%%%%%
\item \label{item:uconstrMain}
	$\Pconstraints(\udata)=0$, see Definition \ref{def:Pconstraints}
\item \label{item:uBoundMain}
	$\|\udata\|_{\Cb^{\NN+3}(\Dspdata)}\le\Cinmain$ 
\item \label{item:usmallMain}
	$\|\udata\|_{\H^{\NN+1}(\diamond_{0,\sfix})} \le\Csmallmain$ 
%%%%%%%%%%%%%%%
\item \label{item:diffsmallmain}
$\|\udata-\kerrdata\|_{\HdataNEW^{\frac52+\epspower,\NN+3}(\Dspdata)}
	\le\Csmallmain$ 
\setcounter{counterqa}{\value{enumi}}
\end{enumerate}
\end{multicols}
\noindent 
then there exists $u\in \gx^1(\diamond_+)$ that satisfies
\begin{align}
\dg u +\tfrac12[u,u] 
	=0 
\;,\qquad
u|_{\tau=0} 
	= \udata \label{eq:MCandudatamain} 
\end{align}
and:
\begin{itemize}
\item 
\textbf{Part 1 (decay and regularity).}
The solution $u$ extends in $C^{\NN-3}$ to $\overline{\diamond}_+\setminus\spaceinf$,
and the following estimates hold: For all $s\in(0,\tfrac\sfix2]$,
\begin{subequations}\label{eq:umain}
\begin{align}\label{eq:uspBoundMain}
\begin{aligned}
&\|u-\kerr\|_{\sHb^{\NN+2}(\Dspop_{s})}+\|u-\kerr\|_{\sCb^{\NN}(\Dspop_{s})} \\
	&\hspace{2.5cm}
	\le \Clargemain\,
	\big(\tfrac{s}{\sfix}\big)^{\frac92+\epspower+\NN}
	\|\udata-\kerrdata\|_{\HdataNEW^{\frac52+\epspower,\NN+3}(\Dspdata)}
\end{aligned}
\end{align}
and
\begin{align}\label{eq:uCptBoundMain}
\hspace{-2mm}
\begin{aligned}
&\sup_{\tau\in[0,\pi)}\|u\|_{\sH^{\NN}(\diamond_{\tau,\sfix})}
+
\sup_{\tau\in[0,\pi)}\|u\|_{\sC^{\NN-3}(\diamond_{\tau,\sfix})}
	\\
	&
	\quad\le 
	\Clargemain\, 
	\big(\|\udata-\kerrdata\|_{\HdataNEW^{\frac52+\epspower,\NN+3}(\Dspdata)}
	+
	\|\kerr\|_{\nosCb^{\NN+1}(\Dspop)}
	+
	\|\udata\|_{\H^{\NN+1}(\diamond_{0,\sfix})}\big)
\end{aligned}
\end{align}
\end{subequations}
\item 
\textbf{Part 2 (higher decay and regularity).}
For all $k\in\Z_{\ge\NN}$ and $\CinmainHigher\new{>}0$,
if 
\begin{multicols}{2}
\begin{enumerate}[label=(d\arabic{enumi}),leftmargin=10mm]
\setcounter{enumi}{\value{counterqa}}
\item \label{item:KHigherBoundMain}
	$\|\kerr\|_{\nosCb^{k+3}(\Dspop)}\le\CinmainHigher$
\item 
\label{item:uCkHigherBoundMain}
	$\|\udata\|_{\Cb^{k+3}(\Dspdata)}\le\CinmainHigher$
\item 
\label{item:uHkHigherBoundMain}
$\|\udata\|_{\H^{k+1}(\diamond_{0,\sfix})}\le\CinmainHigher$
\item 
\label{item:diffHighersmallmain}
$\smash{\|\udata-\kerrdata\|_{\HdataNEW^{\frac52+\epspower,k+3}(\Dspdata)}
	\le\CinmainHigher}$
\end{enumerate}
\end{multicols}
then $u$ extends in $C^{k-3}$ to $\overline{\diamond}_+\setminus\spaceinf$,
and there exists a constant $\ClargemainHigher>0$ that depends only on 
$k,\epspower,\sfix,\Cinmain,\CinmainHigher$, such that 
the estimates \eqref{eq:umain}
hold verbatim with $\NN$ and $\Clargemain$ 
replaced by $k$ and $\ClargemainHigher$, respectively.
\item 
\textbf{Part 3 (metric).}
Part 3 of Theorem \ref{thm:mainpointwise} holds verbatim.
\end{itemize}
\end{theorem}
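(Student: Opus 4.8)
The plan is to build $u$ on the two sets of the cover \eqref{eq:coverD+Thm1} and glue them. On each set we do not solve the Maurer--Cartan equation \eqref{eq:MC} directly, but a gauge-fixed square subsystem that is quasilinear symmetric hyperbolic, using the gauge of Section \ref{sec:gauge_i0} near $\spaceinf$ and the gauge of Section \ref{sec:Gauge_cyl} away from it. One then recovers that all of \eqref{eq:MC} holds by constraint propagation: the dgLa identity $\dg(\dg u+\tfrac12[u,u])+[u,\dg u+\tfrac12[u,u]]=0$ forces the constraint quantities to solve a linear homogeneous symmetric hyperbolic system, which vanishes identically because the constraints hold on $\diamonddata$ by \ref{item:uconstrMain}.

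Near spacelike infinity I would write $u=\kerr+v$. Since $\kerr$ solves Maurer--Cartan by \ref{item:KMCmain}, the small deviation $v$ satisfies $\dg v+[\kerr,v]+\tfrac12[v,v]=0$, which under gauge fixing is quasilinear symmetric hyperbolic on $\Dspop_{\le\sfix}$. The structural point --- and the reason the homogeneous basis of Section \ref{sec:basis_i0} and the $\s$-foliation are introduced --- is that, after organizing all terms by homogeneity under the $\R_+$-action $y\mapsto\lambda^{-1}y$, the operator $\dg$ splits into a scale-covariant leading part plus genuinely lower-order corrections (due to the merely Poincar\'e, not conformal, invariance of $\inj$), while the Kerr coupling $[\kerr,\,\cdot\,]$ is, by the bound $m|y|(1+|\log|y||)$ of \eqref{eq:KerrDecay}, also lower order near $\spaceinf$. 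Feeding this into the abstract weighted energy estimate of Section \ref{sec:Abstract}, foliating $\Dspop_{\le\sfix}$ by the spacelike level sets $\Dspop_s$ of $\s=2y^0+|\vec y|$ and propagating the $\s$-weighted energy off the initial shells $\Dspdata_{s/3,s}$, a Picard iteration made contractive by the smallness in \ref{item:KsmallMain} and \ref{item:diffsmallmain} produces $v$ on $\Dspop_{\le\sfix}$ together with the weighted estimate \eqref{eq:uspBoundMain}; the exponent $\tfrac92+\epspower+\NN$ and the logarithmic corrections in the $\HdataNEW$-norm record exactly how far the scale weight can be raised before the lower-order and Kerr terms obstruct closure.

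Away from spacelike infinity the gauge-fixed system is a regular first-order quasilinear symmetric hyperbolic system on the compact spatial slices $\diamond_{\tau,\sfix}$, with small data by \ref{item:uBoundMain} and \ref{item:usmallMain}. Standard hyperbolic theory --- with the near-$\spaceinf$ solution just constructed restricting to an inner spacelike hypersurface that, together with $\udata$ on $\diamond_{0,\sfix}$, furnishes Cauchy data for the remaining region --- yields a solution on $\diamond_+\setminus\Dspop_{<\sfix/6}$ and the uniform-in-$\tau$ bound \eqref{eq:uCptBoundMain}; uniqueness for symmetric hyperbolic systems makes the two solutions agree on the overlap, so they patch to $u\in\gx^1(\diamond_+)$ with $u|_{\tau=0}=\udata$ and \eqref{eq:MCandudatamain}. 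Part 2 follows by commuting additional homogeneous vector fields through both estimates and rerunning the same iteration at order $k\ge\NN$ using \ref{item:KHigherBoundMain}--\ref{item:diffHighersmallmain}, the weighted scheme having been arranged to close at every order. Part 3 follows from Proposition \ref{prop:metricregularity}: taking $\Csmallmain$ small, the estimates of Parts 1--2 make the $\ell^2$-norms of $\frame{\uo}$ and of $\dualframe{\uo}(d\nullgen)/\nullgen$ satisfy the $\tfrac1{16}$-bounds \ref{item:F1/16assp} and \ref{item:Fdh1/16assp} in the relevant bases, so conclusions \ref{item:metric_regularextension}--\ref{item:metric_nullinfinity} apply verbatim.

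The main obstacle is the weighted energy estimate near $\spaceinf$: the system there is only mildly singular, but closing the estimate at the sharp weight requires showing that the scale-covariant leading operator admits a positive-definite energy whose flux through the $\s$-slices carries a weight tunable above the threshold set by the lower-order and Kerr terms, and that the borderline logarithmic losses can be absorbed into the $\HdataNEW$-norm; this is precisely what the abstract machinery of Section \ref{sec:Abstract}, together with its verification for the Einstein system in Section \ref{sec:SpaceinfConstruction}, is built to provide.
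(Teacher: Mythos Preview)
Your high-level strategy matches the paper's: construct near $\spaceinf$ via weighted hyperbolic estimates organized by homogeneity (Section~\ref{sec:SpaceinfConstruction}), construct away from $\spaceinf$ via standard hyperbolic theory on a compact domain (Section~\ref{sec:bulk}), then combine. You correctly identify that $[\kerr,\,\cdot\,]$ is lower-order by homogeneity, that constraint propagation closes the full MC system, and that Part~3 reduces to Proposition~\ref{prop:metricregularity}. Two technical points, however, diverge from the paper in ways that matter.

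First, near $\spaceinf$ you write $u=\kerr+v$ and solve directly for $v$ with initial data $\udata-\kerrdata$. The paper instead writes $u=\kerr+\Pext(\udata-\kerrdata)+c_0$, where $\Pext$ is the explicit extension operator of Definition~\ref{def:extensionspinf} and the new unknown $c_0$ has \emph{zero} initial data. This is not cosmetic: the abstract existence theorem (Theorem~\ref{thm:nonlinEE}) is stated for trivial data $u|_{\ttcoord=0}=0$, and the source term $\Fvec=\SFmink(\dg v_0+\tfrac12[v_0,v_0])$ with $v_0=\kerr+\Pext(\udata-\kerrdata)$ is what feeds the initial-data information into the weighted estimate. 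The continuity bounds for $\Pext$ (Lemma~\ref{lem:PextCont}) are precisely what convert the $\HdataNEW$-norm of $\udata-\kerrdata$ into control on $\Vconst_{\NN+2}(v_0)$, which is the input hypothesis \ref{item:MC(v)small} for Proposition~\ref{prop:ApplySHS}. Your direct approach would require either a different abstract theorem or this same reduction.

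Second, your gluing argument---``uniqueness for symmetric hyperbolic systems makes the two solutions agree on the overlap''---does not work as stated, because the two constructions use \emph{different} gauges (Definition~\ref{def:gauges_i0} versus Definition~\ref{def:gauge_bulk}), so the two gauge-fixed subsystems are not the same system. The paper's mechanism is cleaner: it defines $v=\phi_0 u_0+\psi\phi_1\Pextbulk(\udata)$ globally on $\diamond_+$ using cutoffs, then Proposition~\ref{prop:MainCpt} produces a correction $c$ in the bulk gauge with $c|_{\tau=0}=0$ and, crucially, $c|_{\Dspop_{\le\sfix/2}}=0$ by finite speed of propagation (since $\dg v+\tfrac12[v,v]=0$ there). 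Hence $u=v+c=u_0$ on $\Dspop_{\le\sfix/2}$ automatically, with no cross-gauge uniqueness needed; this is also why \eqref{eq:uspBoundMain} is stated only for $s\le\sfix/2$.
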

%-----------------------------%

The proof of Theorem \ref{thm:main} is in Section \ref{sec:ProofTheoremMain}.

One may choose $\kerr$ to be equal to a Kerr element $\kerr(m,\vec{a})$,
in fact for every choice of 
$\NN$, $\epspower$, $\sfix\le\frac{1}{100}$, $\Cinmain$
the assumptions \eqref{eq:KerrMain}, \ref{item:KMCmain}, \ref{item:KBoundMain}, \ref{item:KL1Main}, \ref{item:KsmallMain}
are satisfied for $\kerr=\kerr(m,\vec{a})$ provided that $m$ is sufficiently small,
using \eqref{eq:KerrDecay}.
Also, if $\kerr=\kerr(m,\vec{a})$ then \ref{item:KHigherBoundMain}
is satisfied for all $k$ and sufficiently large $\CinmainHigher$, 
using \eqref{eq:KerrDecay}.

Theorem \ref{thm:main} is not sharp in terms of differentiability,
for example the loss of three derivatives with respect to Sobolev
norms in \eqref{eq:uCptBoundMain} 
is for technical convenience and can certainly be improved.

Regularity along null infinity has also been studied 
for certain scattering problems
on a Minkowski or Schwarzschild background,
see e.g.~\cite{KadarKehrberger,ACaseAgainst5,Kroon}.
We note that our result does not exclude that there
exist spacetimes that
do not admit, in a gauge invariant sense,
a regular conformal compactification.
%------------------------%
\begin{remark}\label{rem:Plinearized}
Theorem \ref{thm:mainpointwise} and \ref{thm:main} are conditional
on the existence of solutions of the constraints 
$\Pconstraints(\udata)=0$ with specific asymptotics towards $\spaceinf$.
The zero initial data $\udata=0$ solves $\Pconstraints(0)=0$
and corresponds to Minkowski initial data
(recall that under the correspondence \eqref{eq:11withmetrics}
the zero solution is the Minkowski metric).
The linearization of $\Pconstraints$ at the zero solution
is, in a basis, a map $C^\infty(\R^{3},\R^{50})\to C^\infty(\R^{3},\R^{36})$
given by the matrix differential operator (c.f.~Remark \ref{rem:Constraints36})
\begin{align}\label{eq:ConstraintsLinearized}
\begin{pmatrix}
\tc{gray}{0}	& \curl^{\oplus10} & \ast  & \ast \\
\tc{gray}{0}	& \tc{gray}{0} & \DIV & \tc{gray}{0} \\
\tc{gray}{0}	& \tc{gray}{0} & \tc{gray}{0}& \DIV
\end{pmatrix}
\end{align}
The block $\curl^{\oplus10}$ has size $30\times30$,
and is a $10\times10$ block diagonal matrix where
each diagonal block is given by $\curl$, 
a $3\times3$ matrix differential operator.
The block $\DIV$ has size $3\times 5$, and, 
identifying $C^\infty(\R^3,\R^5)$ with the space of symmetric traceless matrices
whose entries are smooth functions, it is given by 
applying the divergence to each column.
%the columnwise divergence.
%
The blocks $\ast$ are $C^\infty$-linear, not constant coefficient.
In \cite{HomotopyPaper} we constructed right inverses of 
the operators $\curl$ and $\DIV$,
up to necessary integrability conditions,
that have optimal asymptotic properties at infinity.
Via back-substitution one then obtains 
a right inverse of \eqref{eq:ConstraintsLinearized},
again up to integrability conditions.
We expect that using this right inverse one can construct
solutions of the constraints $\Pconstraints(\udata)=0$,
with the asymptotics required by Theorem \ref{thm:main}.
Like in \cite{MaoOhTao}, the construction will use renormalization 
of charges using the Kerr parameters, and a Banach fixed point argument.
\end{remark}
%----------------------------%

\textbf{Proof outline.}
The construction of $u$ in Theorem \ref{thm:main} has three parts:
\begin{itemize}
\item 
Construction on $\Dspop_{\le\sfix}$
(Section \ref{sec:Abstract} and \ref{sec:SpaceinfConstruction}).
We set 
\begin{equation}\label{eq:u=v+c}
u \;=\; v + c 
\qquad \text{where} \qquad v = \kerr + \Pext(\udata-\kerrdata)
\end{equation}
Here $\Pext$ is an extension operator (Definition \ref{def:extensionspinf}),
and the correction $c$ is the new unknown.
Using (bi)linearity of $\dg$ and $[\cdot,\cdot]$, 
the equation for $c$ reads:
\begin{equation}\label{eq:MCforc}
(\dg + [v,\cdot\,])c 
+ \tfrac12[c,c] + 
(\dg v + \tfrac12[v,v]) = 0\ , 
\qquad c|_{\tau=0}=0
\end{equation}
We impose ten pointwise gauge fixing conditions on $c$, 
which means that we require that $c$ lie
in a $C^\infty$-submodule of $\gx^1(\Dspop_{\le\sfix})$ with corank ten
(Section \ref{sec:gauge_i0}).
One then considers a necessary (and sufficient up to constraints) 
square subsystem of \eqref{eq:MCforc}
that is quasilinear symmetric hyperbolic, including along future null infinity $\fnullinf\subset\cyl$.
The linear part of the symmetric hyperbolic system 
is given by the operator $\dg + [v,\cdot\,] $.
When written in a homogeneous basis, 
the Minkowski differential $\dg$
is a matrix differential operator that is
homogeneous of degree zero,
\new{explicitly given} in Section \ref{sec:reformSHS_i0}.
The operator $[v,\cdot\,]$ is lower order in terms of homogeneity,
by \ref{item:KL1Main} and \ref{item:diffsmallmain}.
The causal structure is qualitatively the same as that of Minkowski
spacetime.
Existence of $c$ is shown in 
Proposition \ref{prop:ApplySHS} (Section \ref{sec:SpaceinfConstruction}).
This is in turn proven as an application of Theorem \ref{thm:nonlinEE} 
(Section \ref{sec:Abstract})
where we study a more general class of symmetric hyperbolic systems.
\item
Construction on $\diamond_+\setminus \Dspop_{<\frac\sfix6}$ 
(Section \ref{sec:bulk}).
Here the solution on $\Dspop_{\le\sfix}$ is 
extended to a solution on $\diamond_+$.
Gauge fixing (similar to the first item) yields
a symmetric hyperbolic system that is regular including 
at null and timelike infinity (Section \ref{sec:Gauge_cyl}).
By finite speed of propagation, 
the problem studied here is causally separated from spacelike infinity.
Thus one must only solve a symmetric hyperbolic system with small data on a compact domain, which is more routine.
See Proposition \ref{prop:MainCpt}.
\item 
The constructions are combined in 
the proof of Theorem \ref{thm:main} in
Section \ref{sec:ProofTheoremMain}.
\end{itemize}
%----------------------------%

Theorem \ref{thm:main} is stated as an initial value problem,
and the solution $u$ is only constructed 
for positive time, $\tau\ge0$.
However,
if $\kerr$ is defined also for negative time
(which is the case for $\kerr=\kerr(m,\vec{a})$), 
then applying the construction of $u$ 
in Theorem \ref{thm:main} once to $\udata$ and $\kerr$,
and once to the time reflection of $\udata$ and $\kerr$,
one obtains a smooth solution on $\diamond$, 
with control over the regularity also at past null infinity.
The associated metric on $\diamond$
is asymptotically simple and satisfies peeling
\cite[Section 9, 14]{PenroseAsymptoticSimplicity}.
The metric is null geodesically complete, 
with future and past null infinity locus
given by $\fnullinf$ respectively $\pnullinf$.
See Appendix \ref{ap:ConstructionOnD}.
%--------------------------------%
\step
%--------------------------------%
\textbf{Acknowledgement.}
I thank Rafe Mazzeo and Michael Reiterer for discussions about this project.
During this research, the author was supported by the 
Swiss National Science Foundation, project number P500PT-214470.
%--------------------------------%
\section[A dgLa for general relativity about Minkowski spacetime]{A dgLa for general relativity about Minkowski}
\label{sec:dgLa}

We recall the formulation of the Einstein equations introduced in \cite{Thesis}.

\subsection{Geometric conformal compactification}
\label{sec:geomconfcpt}
The Einstein cylinder is the oriented conformally flat manifold
\[ 
\Big(
\cyl = \R\times S^3 \;,\;\; \big[\gcyl 
= -d\tau^{\otimes2} + \gS \big]
\Big)
\]
where $\tau$ is the standard coordinate on $\R$ and
$\gS$ is the round metric on $S^3$.
We view $S^3$ as the unit sphere in $\R^4$
and denote the standard coordinates on $\R^4$
by $\xi=(\xi^1,\xi^2,\xi^3,\xi^4)$.
We fix the following global frame of vector fields on $\cyl$:
\begin{align}\label{eq:defV}
\begin{aligned}
\V_0 &= \p_{\tau}\\
\V_1 &= 
	(\xi^1\p_{\xi^4}-\xi^4\p_{\xi^1})-(\xi^2\p_{\xi^3}-\xi^3\p_{\xi^2})\\
\V_2 &= 
	(\xi^2\p_{\xi^4}-\xi^4\p_{\xi^2})-(\xi^3\p_{\xi^1}-\xi^1\p_{\xi^3})\\
\V_3 &= 
	(\xi^3\p_{\xi^4}-\xi^4\p_{\xi^3})-(\xi^1\p_{\xi^2}-\xi^2\p_{\xi^1})
\end{aligned}
\end{align}
%%%%%%%%%%%%%%%%%%%%%%%%%%%%%
%\[Tau]\[Xi]s=Join[{\[Tau]},\[Xi]s];
%\[Xi]s=Array[\[Xi],4];
%Format[\[Xi][i_]]:=Subscript["\[Xi]",ToString[i]];
%r\[Xi]=Sqrt[Most[\[Xi]s].Most[\[Xi]s]];
%VS3[i_,j_][f_]:=\[Xi][i]*D[f,\[Xi][j]]-\[Xi][j]*D[f,\[Xi][i]];
%v[0][f_]:=D[f,\[Tau]];
%v[1][f_]:=VS3[1,4][f]-VS3[2,3][f];
%v[2][f_]:=VS3[2,4][f]-VS3[3,1][f];
%v[3][f_]:=VS3[3,4][f]-VS3[1,2][f];
%%%%%%%%%%%%%%%%%%%%%%%%%%%%%
%h=Cos[\[Tau]]-\[Xi][4];
%hslash=Cos[\[Tau]]+\[Xi][4];
%X[0]=Sin[\[Tau]]/h;
%X[i:1|2|3]:=\[Xi][i]/h;
%Y[0]=Sin[\[Tau]]/hslash;
%Y[i:1|2|3]:=\[Xi][i]/hslash;
%(Det[Table[v[\[Nu]][X[\[Alpha]]],{\[Nu],0,3},{\[Alpha],0,3}]]//Simplify[#,\[Xi]s.\[Xi]s==1]&)
%(Det[Table[v[\[Nu]][Y[\[Alpha]]],{\[Nu],0,3},{\[Alpha],0,3}]]//Simplify[#,\[Xi]s.\[Xi]s==1]&)
%%%%%%%%%%%%%%%%%%%%%%%%%%%%%
which is orthonormal with respect to $\gcyl$.
The orientation on $\cyl$ is fixed so that this frame is positive.
Note that $\V_1,\V_2,\V_3$ are vector fields on $S^3$.
We denote by $\Vd^0,\dots,\Vd^3$ the frame of one-forms that is dual to \eqref{eq:defV}.

Define the smooth functions
\begin{equation}\label{eq:nullgendef}
\nullgen = \cos(\tau)-\xi^4 
\qquad\qquad
\ynullgen = \cos(\tau)+\xi^4 
\end{equation}
Then Minkowski spacetime is isometric to $(\diamond,\eta)$ defined by
\begin{equation}\label{eq:defdiamond}
\diamond 
\;=\;
\big\{ (\tau,\xi) \in \cyl \mid 
-\pi<\tau<\pi\;,\;
0  < \nullgen(\tau,\xi) \big\}
\qquad
\gmink = \nullgen^{-2} \gcyl|_{\diamond}
\end{equation}
%See Figure \ref{fig:cylinder}.
Note that $\diamond$ is equivalently given by the set of all points 
$(\tau,\xi)\in\cyl$ for which $|\tau|$ is strictly smaller then the 
$S^3$-distance from $\xi$ to $(0,0,0,1)\in S^3$.
We refer to $\diamond$ as the Minkowski diamond. Its boundary
has five components:
\begin{itemize}
\item 
Future/past null infinity $\fpnullinf$, given by 
the set of all $(\tau,\xi)$ with $\nullgen(\tau,\xi)=0$ and
$\tau\in (0,\pi)$ respectively $\tau\in(-\pi,0)$.
Observe $d\nullgen\neq0$ pointwise on $\fpnullinf$.
\item 
Future/past timelike infinity 
$\fptimeinf= (\pm\pi,(0,0,0,-1))$. 
Observe $d\nullgen|_{\fptimeinf}=0$.
\item 
Spacelike infinity 
$\spaceinf = (0,(0,0,0,1))$.
%Here  $d\nullgen=0$.
Observe $d\nullgen|_{\spaceinf}=0$.
\end{itemize}
%------------------------------------%

Define the smooth functions 
\begin{align}\label{eq:xxcoords}
x
	=(x^0,\dots,x^3) 
	&= \left( \tfrac{\sin\tau}{\nullgen},
	\tfrac{\xi^1}{\nullgen},
	\tfrac{\xi^2}{\nullgen},
	\tfrac{\xi^3}{\nullgen} \right) 
	\quad \text{on}\; \quad \cyl\setminus\{\nullgen=0\}
\end{align}
They satisfy
\begin{align}
\nullgen^{-2}\gcyl 
&= -(dx^0)^{\otimes2} + (dx^1)^{\otimes2}+(dx^2)^{\otimes2}+(dx^3)^{\otimes2}
\label{eq:minkmetric}
\end{align}
and restrict to smooth coordinates on every connected component of $\cyl\setminus\{\nullgen=0\}$.
In particular, they restrict to coordinates on $\diamond$
that establish the isometry of \eqref{eq:defdiamond} with Minkowski spacetime.

We introduce coordinates that are regular near spacelike infinity.
Define
\begin{equation}\label{eq:diamondy}
\diamondy
\;=\;
\{ (\tau,\xi) \in \cyl \mid 
-\pi<\tau<\pi\;,\;
0  < \ynullgen(\tau,\xi) \}
\qquad
\etay = \ynullgen^{-2}\gcyl|_{\diamondy}
\end{equation}
The set $\diamondy$
is equivalently given by the image of $\diamond$ under $(\tau,\xi)\mapsto(\tau,-\xi)$, and is an 
open neighborhood of $\spaceinf$ (see Figure \ref{fig:cylinderintro}).
Define the smooth functions
\begin{align}\label{eq:yycoords}
y
	=(y^0,\dots,y^3) 
	&=\left(\tfrac{\sin\tau}{\ynullgen},
	\tfrac{\xi^1}{\ynullgen},
	\tfrac{\xi^2}{\ynullgen},
	\tfrac{\xi^3}{\ynullgen} \right) 
	\quad \text{on}\; \quad\cyl\setminus\{\ynullgen=0\}
\end{align}
They satisfy $\ynullgen^{-2}\gcyl 
	=\eta_{\mu\nu}dy^\mu\otimes dy^\nu$
and $\nullgen = \ynullgen (\eta_{\mu\nu}y^\mu y^\nu)$,
and they restrict to smooth coordinates on every connected component
of $\cyl\setminus\{\ynullgen=0\}$.
In particular, 
they restrict to coordinates on $\diamondy$,
where $\spaceinf$ is the origin $y=0$,
and where $\diamondy\cap\diamond$ and $\diamondy \cap \p\diamond$
are given by 
$\eta_{\mu\nu}y^\mu y^\nu>0$ respectively $\eta_{\mu\nu}y^\mu y^\nu=0$.

On their common domain of definition,
the functions $x$ and $y$ are related by Kelvin inversion,
\begin{equation}\label{eq:Kelvin}
x=\tfrac{y}{\eta_{\mu\nu}y^\mu y^\nu}
\qquad
y = \tfrac{x}{\eta_{\mu\nu}x^\mu x^\nu}
\end{equation}
and, on 
$\diamond\cap\diamondy$, the representatives $\eta$ and $\etay$
of $[\gcyl]$ satisfy $\eta = (\eta_{\mu\nu}y^\mu y^\nu)^{-2} \etay$.
%-------------------------------%
\begin{remark}[Orientation]\label{rem:orientation}
Recall that the orientation on $\cyl$ is fixed so that 
the frame \eqref{eq:defV} is positive. 
Relative to this orientation,
the frame $\p_{x^0},\dots,\p_{x^3}$ is positively oriented,
and the frame $\p_{y^0},\dots,\p_{y^3}$ is negatively oriented.
\end{remark}
%-------------------------------%
The space of conformal Killing fields on the Einstein cylinder
is given by 
\[ 
\ConfKil
\;=\;
\{ X\in\Gamma(T\cyl) \mid \exists f\in C^\infty(\cyl) :\; \Lie_X\gcyl=f\gcyl \}
\]
This is a 15-dimensional real Lie algebra, isomorphic to $\so(2,4)$.
Define 
\[ 
\Kil 
\;=\;
\{X\in\ConfKil \mid \Lie_{X|_{\diamond}}\gmink = 0\} 
\]
which is the set of all conformal Killing fields on the Einstein cylinder
that restrict to ordinary Killing fields for the Minkowski metric on $\diamond$.
This is a real Lie algebra of dimension 10, isomorphic to the Lie algebra
of the Poincar\'e group. 

A basis of $\Kil$ is given by the boosts and translations
\begin{equation}\label{eq:Kilbas}
B^{01},B^{02},B^{03},B^{12},B^{23},B^{31},T^0,T^1,T^2,T^3
\end{equation}
which on the dense subset $\cyl\setminus\{\nullgen=0\}$ 
are given by
\begin{subequations}
\begin{align}
B^{\mu \nu} &=  
	(x^\mu \eta^{\nu\sigma}-x^\nu \eta^{\mu \sigma})\p_{x^\sigma}\;, 
&
	T_\mu &= \p_{x^\mu} 
\label{eq:TB}
\intertext{%
On $\cyl\setminus\{\ynullgen=0\}$ one has, using \eqref{eq:Kelvin},}
B^{\mu \nu} &=  (y^\mu \eta^{\nu\sigma} 
- y^\nu \eta^{\mu \sigma}) \p_{y^\sigma}\;, &
T_\mu &= 
y^{\nu} y^{\sigma}(\eta_{\nu\sigma} \p_{y^\mu}-2 \eta_{\mu\nu} \p_{y^\sigma})
%\end{aligned}
\label{eq:TBy}
\end{align}
\end{subequations}
%%%%%%%%%%%%%%%%%%%%%%%%%%%%%%%%%%
%ys=Array[y,4,0];
%Format[y[i_]]:=Subscript["y",ToString[i]];
%xs=Array[x,4,0];
%ETA=DiagonalMatrix[{-1,1,1,1}];
%eta[i_,j_]:=ETA[[i+1,j+1]];
%Q[v_]:=v.ETA.v
%kelvin[v_]:=v/Q[v];
%Bx[mu_,nu_][f_]:=
%Sum[x[mu]*eta[nu,sigma]*D[f,x[sigma]]-x[nu]*eta[mu,sigma]*D[f,x[sigma]],{sigma,0,3}];
%Tx[mu_][f_]:=D[f,x[mu]];
%By[mu_,nu_][f_]:=
%Sum[y[mu]*eta[nu,sigma]*D[f,y[sigma]]-y[nu]*eta[mu,sigma]*D[f,y[sigma]],{sigma,0,3}];
%Ty[mu_][f_]:=-2*Sum[eta[mu,nu]*y[nu]*y[sigma]*D[f,y[sigma]],{nu,0,3},{sigma,0,3}]+Q[ys]*D[f,y[mu]];
%fx=f@@xs;
%fy=f@@ys;
%Table[Bx[mu,nu][fx]-(By[mu,nu][fx/.Thread[xs->kelvin[ys]]]/.Thread[ys->kelvin[xs]]),{mu,0,3},{nu,0,3}]//Simplify
%Table[Tx[mu][fx]-(Ty[mu][fx/.Thread[xs->kelvin[ys]]]/.Thread[ys->kelvin[xs]]),{mu,0,3}]//Simplify
%%%%%%%%%%%%%%%%%%%%%%%%%%%%%%%
%-------------------------------%
\begin{remark}\label{rem:Kil(h)}
For all $i,j=1,2,3$ one has,
by direct calculation using \eqref{eq:xxcoords},
\begin{align*}
\tfrac{B^{0i}(\nullgen)}{\nullgen} &= -\xi^i\sin\tau &
\tfrac{B^{ij}(\nullgen)}{\nullgen} &= 0 &
\tfrac{T_0(\nullgen)}{\nullgen} &= \xi^4\sin\tau &
\tfrac{T_i(\nullgen)}{\nullgen} &= -\xi^i\cos\tau
\end{align*}
%%%%%%%%%%%%%%%%%%%%%%%%%%%%%%%
%L[i_,j_][f_]:=\[Xi][i]*D[f,\[Xi][j]]-\[Xi][j]*D[f,\[Xi][i]];
%P[i_][f_]:=Cos[\[Tau]]*\[Xi][i]*D[f,\[Tau]]+Sin[\[Tau]]*Sum[\[Xi][l]*L[l,i][f],{l,1,4}];
%Pp[i_][f_]:=Sin[\[Tau]]*\[Xi][i]*D[f,\[Tau]]-Cos[\[Tau]]*Sum[\[Xi][l]*L[l,i][f],{l,1,4}];
%T[0][f_]:=D[f,\[Tau]]-P[4][f];
%T[i:1|2|3][f_]:=L[i,4][f]-Pp[i][f];
%B[0,i:1|2|3][f_]:=P[i][f];
%B[i_,j_][f_]:=L[i,j][f];
%
%h=Cos[\[Tau]]-\[Xi][4];
%\[Xi]s=Array[\[Xi],4,1];
%Table[T[i][h]/h,{i,0,3}]//Simplify[#,\[Xi]s.\[Xi]s==1]&
%Table[B[0,i][h]/h,{i,1,3}]//Simplify[#,\[Xi]s.\[Xi]s==1]&
%Table[B[i,j][h]/h,{i,1,3},{j,1,3}]//Simplify[#,\[Xi]s.\[Xi]s==1]&
%%%%%%%%%%%%%%%%%%%%%%%%%%%%%%%
\end{remark}
%-------------------------------%

\subsection{Definition of the dgLa}
\label{sec:dgLa_def}

We state the definition of the differential graded Lie algebra $\gx(\cyl)$,
see Theorem \ref{thm:gaxioms}.
This is a summary of \cite[Section 3.3]{Thesis},
where one can find more details.
%-------------------------------%

%-------------------------------%
Let $\Omega(\cyl)$ be the real smooth differential forms on $\cyl$.
Recall that the three-sphere $S^3$ and hence $\cyl$ are parallelizable,
hence $\Omega(\cyl)$ is a free $C^\infty$-module.

Let $\Der^k(\Omega(\cyl))$ be
the space of derivations of $\Omega(\cyl)$ with degree $k$,
given by all $\R$-linear maps $\Omega(\cyl)\to \Omega(\cyl)$
that restrict to $\Omega^{i}(\cyl)\to \Omega^{i+k}(\cyl)$ for all $i$,
and that satisfy the Leibniz rule with signs.
%-------------------------------%
\begin{definition}\label{def:Ldef}
Define
\[ 
\lx(\cyl) = \Omega(\cyl)\otimesRR \Kil
\]
This carries a grading given by 
$\lx(\cyl)=\oplus_{k=0}^4\lx^k(\cyl)$ with 
$\lx^k(\cyl) = \Omega^k(\cyl)\otimesRR\Kil$.
Further it is a graded $\Omega(\cyl)$-module
where the module multiplication is given by 
\begin{subequations}
\begin{equation}\label{eq:lmod}
\omega(\omega'\otimes \KilEl) =(\omega\wedge\omega')\otimes \KilEl
\end{equation}
for all $\omega,\omega'\in\Omega(\cyl)$ and $\KilEl\in\Kil$.
Define the operations
\begin{align*}
\dl:\;&\lx^k(\cyl)\to\lx^{k+1}(\cyl)\\
[\cdot,\cdot]:\;&\lx^{k}(\cyl)\times\lx^{k'}(\cyl)\to\lx^{k+k'}(\cyl)\\
\anchorL:\;&\lx^k(\cyl)\to \Der^k(\Omega(\cyl))
\end{align*}
where $\dl$ is $\R$-linear,
$[\cdot,\cdot]$ is $\R$-bilinear,
and $\anchorL$ is $\R$-linear, by
\begin{align}
\dl(\omega\otimes \KilEl) 
	&= (\ddR\omega)\otimes \KilEl \\
[\omega\otimes \KilEl,\omega'\otimes \KilEl']
	&= \omega\wedge\omega' \otimes [\KilEl,\KilEl']
	+ \omega\wedge(\Lie_{\KilEl}\omega')\otimes \KilEl'
	- (\Lie_{\KilEl'}\omega)\wedge\omega'\otimes \KilEl
	\label{eq:llbracket}\\
\anchorL(\omega\otimes \KilEl)(\omega') 
	&= \omega \wedge(\Lie_{\KilEl}\omega')\label{eq:anchorLdef}
\end{align}
\end{subequations}
for all $\omega,\omega'\in\Omega(\cyl)$ and $\KilEl,\KilEl'\in\Kil$.
Here $\ddR$ denotes the de Rham differential
and $\Lie$ denotes the Lie derivative.
\end{definition}
The operations $\dl$, $[\cdot,\cdot]$, $\anchorL$
are called differential, bracket and anchor, respectively,
and satisfy various algebraic identities (see \cite[Lemma 19]{Thesis}). 
The anchor $\anchorL$ is important because
it encodes the principal part of the bracket,
which can be seen from the Leibniz rule
\[ 
[\ell,\omega \ell'] = \anchorL(\ell)(\omega) \ell' +(-1)^{qk} \omega[\ell,\ell']
\]
which holds for all 
$\ell\in\lx^k(\cyl)$, $\ell'\in\lx^{k'}(\cyl)$, $\omega\in\Omega^{q}(\cyl)$,
where juxtaposition stands for the module multiplication \eqref{eq:lmod}.
%-------------------------------%
\step
%-----------------------------%
For $s\in\R$ let $\dens{s}(\cyl)$ be the module of 
sections of the $s$-density bundle on $\cyl$, where we use
the convention detailed in Remark \ref{rem:densityconv}.
%-----------------------------%
\begin{remark}\label{rem:densityconv}
The fiber of $\dens{s}(\cyl)$ at $p\in\cyl$ is given by 
all $\mu:\Lambda^4(T_p\cyl)\setminus\{0\}\to\R$
such that for all $\lambda\in\R\setminus\{0\}$ and $X\in \Lambda^4(T_p\cyl)\setminus\{0\}$
one has 
$$\mu(\lambda X) = |\lambda|^{\frac{s}{4}}\mu(X)$$
Note that in this convention one integrates $4$-densities.
For a representative $g$ of $[\gcyl]$ we denote by
$\mu_{g}^{s}$ the associated density in $\dens{s}(\cyl)$,
whose value at $p\in\cyl$ is given by 
$\mu_{g}^s|_{p}(e_0\wedge\dots\wedge e_3)=1$ for any 
$g$-orthonormal basis $e_0,\dots,e_3$ of $T_p\cyl$.
If $f\in C^\infty(\cyl)$ is nowhere vanishing then 
$\mu^s_{f^2g} = |f|^s \mu_{g}^s$.
\end{remark}
%----------------------------%
Let $\OmegaC(\cyl)$ be the complex smooth differential forms.
Canonically 
$\OmegaC^2(\cyl)=\Omega^2_+(\cyl)\oplus\Omega^2_-(\cyl)$ where 
$\Omega^2_\pm(\cyl) = \{\omega\in\OmegaC^2(\cyl) \mid 
\star_{[\gcyl]}\omega = \pm i\omega\}$.
Here $\star_{[\gcyl]}$ is the Hodge dual for two-forms
that is associated to the conformal metric $[\gcyl]$,
using the orientation on $\cyl$, see
Remark \ref{rem:orientation}.
%-----------------------------%
Recall the $\gcyl$-orthonormal frame $\V_\mu$ in \eqref{eq:defV}.
\begin{definition}\label{def:It}
Define the following $C^\infty(\cyl,\C)$-modules:
\begin{itemize}
\item 
$\It^2_{\pm}(\cyl) \subset S^2(\Omega^2_{\pm}(\cyl))$ is given by 
all $u$ that satisfy
$$
\eta^{\alpha\beta}\eta^{\mu\nu}u(\V_{\alpha},\V_{\mu},\V_{\beta},\V_{\nu}) = 0
$$
where $S^2$ is the symmetric tensor product over $C^\infty$.
\item 
$\It^3_{\pm}(\cyl)\subset \OmegaC^3(\cyl)\otimesCinf\Omega^2_{\pm}(\cyl)$
is given by all $u$ that satisfy
$$
\eta^{\alpha\beta}\eta^{\mu\nu}u(\,\cdot\,,\V_{\alpha},\V_{\mu},\V_{\beta},\V_{\nu})= 0
$$
\item 
$\It^4_{\pm}(\cyl) = \OmegaC^4(\cyl)\otimesCinf\Omega^2_{\pm}(\cyl)$
\end{itemize}
For $k=2,3,4$
define $\I^k_{\pm}(\cyl) = \dens{-1}(\cyl)\otimesCinf\It^k_{\pm}(\cyl)$, 
and define the real subspace 
\[ 
\I^k(\cyl)= \left(\I^k_+(\cyl)\oplus \I^k_-(\cyl)\right)_{\R}
\]
given by all $\up\oplus\upbar$ with $\up\in\I^k_+(\cyl)$.
Here the bar denotes complex conjugation,
which maps $\I^k_\pm(\cyl)\to \I^k_\mp(\cyl)$.
Define
$\I(\cyl) = \I^2(\cyl)\oplus\I^3(\cyl)\oplus\I^4(\cyl)$.
\end{definition}
%------------------------%
Each $\I^k(\cyl)$ is the module of sections of a trivial vector bundle
on $\cyl$, of rank $10,16,6$ when $k=2,3,4$,
that we denote by $\I^k$.
Elements in $\I^2(\cyl)$ satisfy the symmetry and 
traceless conditions (relative to $[\gcyl]$) of Weyl curvatures.
%------------------------%
\begin{remark}[Sweedler notation]\label{rem:Sweedler}
Every element $u\in \I^k_{\pm}(\cyl)$ can be written as a finite
sum of product elements, that is, 
$$u=\tsum_{i=1}^n
\mu_i\otimes\omega_{i}\otimes\omega_i'$$ for some $n\in\N$
and elements $\mu_i\in\dens{-1}(\cyl)$, 
$\omega_i\in\OmegaC^k(\cyl)$, $\omega_i'\in\Omega_{\pm}^2(\cyl)$.
We will abbreviate this sum by $u=\mu\otimes\omega\otimes\omega'$,
which is known as Sweedler's notation.
\end{remark}
%------------------------%
\begin{definition}\label{def:Imod}
Define the multiplication $\OmegaC^q(\cyl)\times\I_{\pm}^{k}(\cyl)\to \I_{\pm}^{q+k}(\cyl)$ by 
\begin{equation}
\label{eq:multIP}
\omega u_{\pm} = \mu\otimes(\nu\wedge\omega)\otimes\omega'
\end{equation}
where we write $u_\pm=\mu\otimes\omega\otimes\omega'$ using
Sweedler's notation in Remark \ref{rem:Sweedler}.
Define the multiplication $\Omega^q(\cyl)\times\I^{k}(\cyl)\to \I^{q+k}(\cyl)$ by 
\begin{equation*}
\omega (u_+\oplus u_-) = \omega u_+\oplus \omega u_-
\end{equation*}
\end{definition}
It is easy to see from Definition \ref{def:It} that 
the right hand side of \eqref{eq:multIP}
indeed is in $\I_{\pm}^{q+k}(\cyl)$.
Definition \ref{def:Imod} equips $\I(\cyl)$
with the structure of a graded $\Omega(\cyl)$-module.
%------------------------%
\begin{definition}\label{def:dI}
Define the $\C$-linear map $\dI: \I^{k}_{\pm}(\cyl)\to\I^{k+1}_{\pm}(\cyl)$ by 
\begin{equation}\label{eq:dIpm}
\dI(u_\pm) 
	= \Vd^\alpha \left( \mu_{\gcyl}^{-1} \otimes 
	 \nabla^{\gcyl}_{\V_\alpha} (\mu_{\gcyl}^1\otimes u_{\pm} )\right)
\end{equation}
where $\mu_{\gcyl}^{s}$ is the $s$-density associated to $\gcyl$ (see Remark \ref{rem:densityconv}), where $\nabla^{\gcyl}$ is the Levi-Civita connection 
of $\gcyl$, and where $\Vd^0,\dots,\Vd^3$ is the frame of one-forms 
dual to $\V_0,\dots,\V_3$.
This formula is to be understood as follows:
One has $\mu_{\gcyl}^1\otimes u_{\pm}\in \It_{\pm}^k(\cyl)$
using the canonical isomorphism 
$\dens{1}(\cyl)\otimesCinf\dens{-1}(\cyl)\simeq C^\infty(\cyl)$;
then the covariant derivative produces an element in 
$\It_{\pm}^k(\cyl)$ (by \cite[Lemma 22]{Thesis});
then tensoring with $\mu_{\gcyl}^{-1}$ produces an element in $\I_{\pm}^{k}(\cyl)$;
then multiplication with the one-form $\Vd^\alpha$, using \eqref{eq:multIP},
yields an element in $\I_{\pm}^{k+1}(\cyl)$.

Define the $\R$-linear map $\dI: \I^{k}(\cyl)\to\I^{k+1}(\cyl)$ by 
\[ 
\dI(\up \oplus \um) = \dI\up \oplus \dI\um
\]
\end{definition}
%--------------------------%
\begin{remark}\label{rem:dIprop}
The map $\dI$ has the following properties \cite[Lemma 24]{Thesis}:
It is independent of the chosen representative metric $\gcyl$ of $[\gcyl]$, that is, in \eqref{eq:dIpm}
one can replace $\gcyl$ by any other representative of $[\gcyl]$.
It is independent of the chosen frame $\V_\alpha$,
that is, one can replace $\V_\alpha$, $\Vd^\alpha$ by any other frame
and associated dual frame of one-forms.
% and associated dual frame $\Vd^\alpha$. 
Furthermore it is a differential, $\dI \circ \dI = 0$.
It is compatible with the $\Omega$-module structure in Definition \ref{def:Imod},
in the sense that it satisfies the Leibniz rule 
$\dI(\omega u) = (\ddR\omega)u + (-1)^q\omega(\dI u)$
for all $\omega\in\Omega^q(\cyl)$.
\end{remark}
%-------------------------------%
If $\KilEl\in\Kil$ is a Killing field then the Lie derivative
with respect to $\KilEl$ 
is a map $\Lie_{\KilEl}:\I_{\pm}(\cyl)\to\I_{\pm}(\cyl)$ 
\cite[Lemma 25]{Thesis} 
(this also holds more generally when $\KilEl$ is a conformal Killing field). 
This allows the following definition.
%-------------------------------%
\begin{definition}\label{def:bracketIL}
Define the $\R$-bilinear map
$[\cdot,\cdot]_{\I}:\lx^{q}(\cyl)\times\I^{k}(\cyl)\to \I^{q+k}(\cyl)$ by 
\begin{equation}
\label{eq:LIbracket}
[\ell,u]_{\I} = \omega(\Lie_{\KilEl} \up) \oplus \omega(\Lie_{\KilEl} \um)
\end{equation}
for all $\ell=\omega\otimes \KilEl\in\lx^q(\cyl)$ and 
$u=\up\oplus\um\in \I^{k}(\cyl)$, 
where we use the module multiplication in Definition \ref{def:Imod}.
Further define
$[u,\ell]_{\I} = 
-(-1)^{q k} [\ell,u]_{\I}
$.
\end{definition}
%-------------------------------%
By \cite[Lemma 27]{Thesis}, 
there exists a unique $C^\infty(\cyl)$-linear map
\begin{equation}\label{eq:injaux}
\injaux:\; \dens{-1}(\cyl)\otimesCinf \Omega^2(\cyl) \;\to\; \lx^0(\cyl)
\end{equation}
whose restriction to the dense subset $\cyl\setminus\{\nullgen=0\}$
is given by
\begin{align}\label{eq:injauxformula}
\nullgen \mu_{\gcyl}^{-1}\otimes(dx^\mu\wedge dx^\nu) 
\mapsto 1\otimes B^{\mu\nu} - 
(x^\mu\otimes (\eta^{\nu\alpha}T_\alpha)-x^\nu\otimes (\eta^{\mu\alpha}T_\alpha))
\end{align}
where we use $x$ in \eqref{eq:xxcoords},
which are coordinates on every connected component of $\cyl\setminus\{\nullgen=0\}$,
the boosts and translations \eqref{eq:Kilbas},
and where $\mu_{\gcyl}^{-1}$ is the $-1$-density associated
to $\gcyl$, see Remark \ref{rem:densityconv}
(on $\diamond$ one has
$\nullgen \mu_{\gcyl}^{-1}=\mu_{\gmink}^{-1}$ using \eqref{eq:defdiamond}).

Let $\injaux_\C$ be the $\C$-linear extension of \eqref{eq:injaux} 
and let $\lx_\C(\cyl) = \OmegaC(\cyl)\otimesRR\Kil$.
%-----------------------------%
\begin{definition}\label{def:inj}
Define the $C^\infty(\cyl,\C)$-linear maps $\inj_{\pm}:\I_{\pm}^k(\cyl)\to \lx_{\C}^k(\cyl)$ by 
\[ 
\upm
\;\mapsto\;
\omega \injaux_{\C} (\mu \otimes \omega')
\]
where we write $\upm=\mu \otimes \omega \otimes \omega'$
using Sweedler's notation in Remark \ref{rem:Sweedler}.
Define the $C^\infty(\cyl)$-linear map $\inj:\I^k(\cyl)\to\lx^k(\cyl)$ by 
\[ 
\up\oplus\um \;\mapsto\; \inj_+(\up) + \inj_-(\um)
\]
\end{definition}
Note that $\inj$ is not only linear over $C^\infty(\cyl)$ 
but also linear over $\Omega(\cyl)$.
%-----------------------------%

%-----------------------------%
We now define the dgLa $\gx(\cyl)$,
using Definition \ref{def:Ldef}, \ref{def:It}, \ref{def:Imod}, \ref{def:dI},
\ref{def:bracketIL}, \ref{def:inj}.
%-----------------------------%
\begin{theorem}[{\cite[Theorem 9]{Thesis}}]\label{thm:gaxioms}
Define $\gx(\cyl) = \oplus_{k=0}^4\gx^k(\cyl)$ where
$$
\gx^k(\cyl) \;=\; \lx^k(\cyl)\oplus\I^{k+1}(\cyl)\Ieps
$$ 
This is a graded $\Omega(\cyl)$-module where the multiplication
$\Omega^{q}(\cyl)\times\gx^{k}(\cyl)\to\gx^{q+k}(\cyl)$ is given by,
using \eqref{eq:lmod} respectively Definition \ref{def:Imod},
\begin{subequations}\label{eq:gop} 
\begin{align}\label{eq:gmod}
\omega (\uo\oplus\uI) = (\omega \uo) \oplus (\omega\uI) 
\end{align}
for all $\omega\in\Omega^q(\cyl)$
and $\uo\oplus\uI\in\gx^k(\cyl)$.
Define the operations
\begin{align*}
\dg:&\;\gx^k(\cyl)\to\gx^{k+1}(\cyl)\\
[\cdot,\cdot]:&\;\gx^k(\cyl)\times\gx^{k'}(\cyl)\to\gx^{k+k'}(\cyl)\\
\anchorg:&\; \gx^k(\cyl) \to \Der^k(\Omega(\cyl))
\end{align*}
by
\begin{align}
%%%%%%%%%%%%%%%%%
\dg u
&= 
\big(\dl \uo -(-1)^{\new{k+1}} \inj(\uI) \big)\oplus (\dI \uI)\Ieps
\label{eq:dgdef}\\
%%%%%%%%%%%%%%%%%
[u,u']
&=
[\uo,\uo']_{\lx} \oplus \big([\uo,\uI']_{\I} + (-1)^{\new{k'}}[\uI,\uo']_{\I}\big)\Ieps
\label{eq:[]gdef}\\
%%%%%%%%%%%%%%%%%
\anchorg(u) &= \anchorL(\uo)\label{eq:anchorgdef}
%\end{aligned}
\end{align}
\end{subequations}
for all
$u=\uo\oplus \uI\Ieps\in \gx^{k}(\cyl)$ 
and $u'=\uo'\oplus \uI'\Ieps\in \gx^{k'}(\cyl)$.
The operation $\dg$ is $\R$-linear and called differential,
$[\cdot,\cdot]$ is $\R$-bilinear and called bracket,
and $\anchorg$ is $\R$-linear and called anchor.
They satisfy:
\begin{subequations}\label{eq:dgLaax}
\begin{align}
&\dg\circ\dg=0\label{eq:dgdifferential}\\
&\dg(\omega u) = (\ddR\omega)u + (-1)^{\degomega} \omega \dg u\label{eq:dgmodule}\\
&\dg[u,v] = [\dg u,v] + (-1)^{\degu}[u,\dg v]\label{eq:dgbracket}\\
&[u,\omega v] = \anchorg(u)(\omega)v + (-1)^{\degomega \degu}\omega[u,v]\label{eq:anchorbracket}\\
&\anchorg(u)(\omega\wedge\omega') 
= \anchorg(u)(\omega)\wedge\omega' + (-1)^{\degomega\degu} \omega\wedge \anchorg(u)(\omega') \label{eq:anchorleib}\\
&\anchorg(\omega u) = \omega\anchorg(u) \label{eq:anchorlin}\\
&\anchorg(\dg u) = \ddR\circ \anchorg(u)-(-1)^{\degu} \anchorg(u)\circ\ddR\\
&\anchorg([u,v]) = \anchorg(u)\circ\anchorg(v)-(-1)^{\degu\degv}\anchorg(v)\circ\anchorg(u)\\
&[u,v] = -(-1)^{\degu\degv}[v,u]\label{eq:bracketas}\\
&[u,[v,z]] +(-1)^{\degz(\degu+\degv)}[z,[u,v]] + (-1)^{\degu(\degv+\degz)}[v,[z,u]] = 0\label{eq:bracketjacobi}
\end{align}
\end{subequations}
for all 
$u\in\gx^{\degu}(\cyl)$, 
$v\in\gx^{\degv}(\cyl)$, 
$z\in\gx^{\degz}(\cyl)$ and
$\omega\in \Omega^{\degomega}(\cyl)$,
$\omega'\in \Omega^{\degomegap}(\cyl)$.
Algebraically, this means that $\gx(\cyl)$ is an
$\Omega(\cyl)$-differential graded Lie algebroid.
\end{theorem}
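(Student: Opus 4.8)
The plan is to verify the identities \eqref{eq:dgdifferential}--\eqref{eq:bracketjacobi} by decomposing every term along the direct sum $\gx^k(\cyl)=\lx^k(\cyl)\oplus\I^{k+1}(\cyl)$ and reducing to three packages of facts: (A) that $(\lx(\cyl),\dl,[\cdot,\cdot]_{\lx},\anchorL)$ is itself an $\Omega(\cyl)$-differential graded Lie algebroid; (B) that $(\I(\cyl),\dI,[\cdot,\cdot]_{\I})$ is a flat representation of it, i.e.\ $\dI^2=0$, $\dI$ obeys the $\Omega$-Leibniz rule, and $[\cdot,\cdot]_{\I}$ together with $\dI$ satisfies the module-Leibniz, $\dI$-compatibility and mixed Jacobi identities; and (C) that $\inj$ is $\Omega(\cyl)$-linear, is $\Kil$-equivariant (intertwines $[\cdot,\cdot]_{\I}$ with $[\cdot,\cdot]_{\lx}$), is a chain map in the sense $\dl\circ\inj=\inj\circ\dI$, and is annihilated by the anchor, $\anchorL\circ\inj=0$.

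Package (A) is routine: since $\dl=\ddR\otimes\mathrm{id}_{\Kil}$, both $\dl^2=0$ and the $\Omega$-Leibniz rule for $\dl$ are immediate from the de Rham calculus, while the bracket and anchor identities follow by unwinding \eqref{eq:llbracket} and \eqref{eq:anchorLdef}, using only that $\Kil$ is a Lie algebra, that $\KilEl\mapsto\Lie_{\KilEl}$ is a Lie-algebra action on $\Omega(\cyl)$ by degree-zero derivations, and Cartan's identities; this is \cite[Lemma 19]{Thesis}. For package (B), the two substantive inputs $\dI^2=0$ and the $\Omega$-Leibniz rule for $\dI$ are recorded in Remark \ref{rem:dIprop} (\cite[Lemma 24]{Thesis}) and reflect that $\dI$ is built from the Levi-Civita connection of a conformally flat metric after the correct density twist; the remaining identities for $[\cdot,\cdot]_{\I}$ follow because $\Lie_{\KilEl}$ commutes with $\dI$ (naturality of $\dI$ under conformal isometries of $\gcyl$, $\KilEl$ being conformal Killing), is a derivation of the $\Omega(\cyl)$-module structure on $\I(\cyl)$, and satisfies $[\Lie_{\KilEl},\Lie_{\KilEl'}]=\Lie_{[\KilEl,\KilEl']}$.

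Package (C) is the crux, and it is where the explicit design of $\inj$ enters. Its $\Omega(\cyl)$-linearity is built into Definition \ref{def:inj}. For the other three properties I would pass to the dense open subset $\cyl\setminus\{\nullgen=0\}$, on which \eqref{eq:injauxformula} gives $\injaux$ explicitly in the coordinates $x$, and on which — because $\dI$ is conformally invariant — one may compute $\dI$ in the flat metric $\eta=\nullgen^{-2}\gcyl$ of \eqref{eq:minkmetric}, where it becomes a plain (weighted) divergence and $\dl$ is just $\ddR$. Equivariance is then essentially formal: the formula \eqref{eq:injauxformula} is manifestly assembled out of the $\Kil$-objects $B^{\mu\nu}$ and $T_\mu$, which the $\Kil$-action permutes in the expected way. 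The chain-map identity $\dl\circ\inj=\inj\circ\dI$ and the vanishing $\anchorL\circ\inj=0$ then become finite coordinate computations: the first is, in essence, the contracted second Bianchi identity reorganised so that $\inj$ is precisely the operator making ``linearised Riemann $=$ linearised Weyl, plus Bianchi'' self-consistent; the second holds because the tracelessness conditions cut out by the $\eta^{\alpha\beta}\eta^{\mu\nu}$-contractions in Definition \ref{def:It}, together with the antisymmetry of $B^{\mu\nu}$, make the relevant degree-$k$ derivation vanish identically.

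With (A), (B) and (C) in hand the axioms follow by bookkeeping. Expanding \eqref{eq:dgdef} twice reduces $\dg\circ\dg=0$ to $\dl^2=0$, $\dI^2=0$ and $\dl\circ\inj=\inj\circ\dI$; \eqref{eq:dgmodule} reduces to the $\Omega$-Leibniz rules for $\dl$ and $\dI$ together with $\Omega$-linearity of $\inj$; the bracket identities \eqref{eq:dgbracket}, \eqref{eq:anchorbracket}, \eqref{eq:bracketas}, \eqref{eq:bracketjacobi} reduce to the corresponding $\lx$-identities, the representation identities for $[\cdot,\cdot]_{\I}$, and $\Kil$-equivariance of $\inj$ (the absence of an $\I\times\I$-component of $[\cdot,\cdot]$ in \eqref{eq:[]gdef} being consistent precisely because $[\inj a,b]_{\I}$ is suitably symmetric in $a,b$); and the anchor identities \eqref{eq:anchorleib}--\eqref{eq:bracketjacobi} reduce to those for $\anchorL$ plus $\anchorL\circ\inj=0$. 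The one genuinely new phenomenon — and the main obstacle — is the pair of $\inj$-identities $\dl\circ\inj=\inj\circ\dI$ and $\anchorL\circ\inj=0$ in package (C); everything else is either a structural fact about $\lx$ and $\I$ that is already available, or a routine sign-chase.
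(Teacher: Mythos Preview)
The paper does not give a proof of this theorem; it is stated with the citation \cite[Theorem 9]{Thesis} and not argued further. Your proposal is therefore not being compared against an in-paper proof but against an external reference. That said, your outline is the correct architecture: organising the verification into (A) the $\lx$-algebroid axioms, (B) the flat $\lx$-module structure on $\I$, and (C) the compatibility properties of $\inj$ is exactly how such a semidirect-product construction is checked, and your identification of $\dl\circ\inj=\inj\circ\dI$ and $\anchorL\circ\inj|_{\I}=0$ as the two genuinely new ingredients is right.

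One point of precision: your justification for $\anchorL\circ\inj=0$ is slightly off. The derivation $\anchorL(\injaux(\mu_\eta^{-1}\otimes dx^\mu\wedge dx^\nu))$ does \emph{not} vanish on forms (on $dx^\sigma$ it gives $\eta^{\nu\sigma}dx^\mu-\eta^{\mu\sigma}dx^\nu$); what makes $\anchorL(\inj(W))$ vanish for $W\in\I^k$ is the algebraic Bianchi identity $W_{[\alpha\beta\gamma]\delta}=0$, which is encoded in the $S^2(\Omega^2_\pm)$ symmetry of Definition~\ref{def:It}, not the trace condition. So the mechanism is the cyclic symmetry, not tracelessness. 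This does not affect the validity of your strategy, only the explanation of why that particular identity holds.
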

%-----------------------------%
\begin{remark}\label{rem:gdiamond}
The module $\gx(\cyl)$ is the module of smooth sections of a
trivial vector bundle on $\cyl$, that we denote by $\gx$.
We denote by $\gx(\diamond)$ the module of smooth sections of $\gx$ over $\diamond$.
Since all operations \eqref{eq:gop} are local, they restrict to maps on $\gx(\diamond)$.
Analogously for other sufficiently nice subsets of $\cyl$.
\end{remark}
%-----------------------------%
\subsection{$\R_+$-action on the dgLa}
\label{sec:R+action}

In this section we introduce a natural $\R_+$-action on $\gx(\cyl)$,
that on the base manifold $\cyl$ is given by the flow 
of a conformal Killing vector field that scales about $\spaceinf$,
and that commutes with the operations \eqref{eq:gop}.
%-----------------------------%

%-----------------------------%
Concretely, for $\lambda>0$ let 
\begin{equation}\label{eq:scall}
\Scal_\lambda:\cyl\to\cyl
\end{equation}
be the diffeomorphism 
that on the dense subset $\cyl\setminus\{\nullgen=0\}$ is given by 
\begin{equation*}
\Scal_\lambda^{\new{\ast}}(x) = \lambda x
\end{equation*}
where we use the functions $x$ in \eqref{eq:xxcoords},
which are coordinates on every connected component of $\cyl\setminus\{\nullgen=0\}$.
One then has 
$\Scal_\lambda^{\new{\ast}}(y) = \lambda^{-1} y$
using \eqref{eq:Kelvin}.
In particular, $\Scal_{\lambda}$ restricts to a diffeomorphism
on each of $\diamond$, $\diamond_+$, $\diamondy$.
%----------------------------%
\begin{definition}\label{def:scalg}
For $\lambda>0$ define the $\R$-linear map
$\Scalg_\lambda:\gx(\cyl)\to\gx(\cyl)$ by 
\begin{align}\label{eq:scalg}
\Scalg_{\lambda}\left((\omega\otimes \KilEl)\oplus (\up\oplus\um)\Ieps\right)
&=
(\Scal^*_{\lambda}\omega\otimes \Scal^*_{\lambda}\KilEl)\oplus 
(\lambda^{-1}\Scal^*_{\lambda}\up \oplus \lambda^{-1}\Scal^*_{\lambda}\um)\Ieps
\end{align}
where $\omega\otimes \KilEl\in\lx(\cyl)$ and $\up\oplus\um\in\I(\cyl)$.
Here $\Scal^*_{\lambda}$ is the pullback along $\Scal_{\lambda}$,
and in the case of $\up$ and $\um$ this also involves the pullback of densities.
For the translations and boosts \eqref{eq:TB} one has
$\Scal^*_\lambda T_{\mu} = \lambda^{-1}T_\mu$
and 
$\Scal^*_\lambda B^{\mu\nu} = B^{\mu\nu}$.
\end{definition}
%----------------------------%
Note that the $\R_+$-action in Definition \ref{def:scalg} restricts
both to $\lx(\cyl)$ and to $\I(\cyl)$.
%----------------------------%
\begin{lemma}\label{lem:homogeneity}
One has:
\begin{subequations}
\begin{align}
\Scal_{\lambda}^* \ddR 
	&= 
	\ddR \Scal_\lambda^*  
%	\label{eq:dR_hom}\\
	&
\Scalg_{\lambda} \dg 
	&= 
	\dg \Scalg_{\lambda}
	&
%	\label{eq:dg_hom}\\
\Scalg_{\lambda}[\,\cdot\,,\,\cdot\,]
	&= 
	[\Scalg_{\lambda}\,\cdot\,,\Scalg_{\lambda}\,\cdot\,] 
	\label{eq:ddbr_hom}
\end{align}
Furthermore, for all $u\in\gx(\cyl)$ and $\omega\in\Omega(\cyl)$:
\begin{align}
\Scalg_{\lambda}(\omega u) 
	&= (\Scal_{\lambda}^*\omega) (\Scalg_{\lambda}u)
	\label{eq:modmult_hom}\\
\Scal^*_{\lambda}\big(\anchorg(u)(\omega)\big)
	&= 
	\anchorg(\Scalg_{\lambda}u)(\Scal_{\lambda}^*\omega) 
	\label{eq:anchor_hom}
\end{align}
\end{subequations}
\end{lemma}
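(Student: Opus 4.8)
The plan is to reduce all five identities to the naturality of elementary operations (de Rham differential, wedge product, Lie derivative of forms, Lie bracket of vector fields, Levi-Civita connection) under pullback by the diffeomorphism $\Scal_\lambda$, the one genuinely non-formal point being the compatibility of $\inj$ with $\Scalg_\lambda$. First I would record the geometric input. On the dense open set $\cyl\setminus\{\nullgen=0\}$ the functions $x$ of \eqref{eq:xxcoords} are coordinates, $\nullgen^{-2}\gcyl$ is the flat metric $\eta=-(dx^0)^{\otimes2}+(dx^1)^{\otimes2}+(dx^2)^{\otimes2}+(dx^3)^{\otimes2}$, and $\Scal_\lambda^\ast x=\lambda x$; hence $\Scal_\lambda^\ast\eta=\lambda^2\eta$, so $\Scal_\lambda$ preserves $[\gcyl]$ there and, by density, on all of $\cyl$, i.e.\ it is a conformal isometry. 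Moreover $\Scal_\lambda$ maps each connected component of $\cyl\setminus\{\nullgen=0\}$ to itself, so on each such component $\sgn(\nullgen)$ is pullback-invariant; combined with $\nullgen\mu_{\gcyl}^{-1}=\sgn(\nullgen)\,\mu_\eta^{-1}$ and $\Scal_\lambda^\ast\mu_\eta^{-1}=\mu_{\lambda^2\eta}^{-1}=\lambda^{-1}\mu_\eta^{-1}$ this gives $\Scal_\lambda^\ast(\nullgen\mu_{\gcyl}^{-1})=\lambda^{-1}\nullgen\mu_{\gcyl}^{-1}$. Since $\Scal_\lambda$ is a conformal isometry, $\Scal_\lambda^\ast\gcyl$ is again a representative of $[\gcyl]$, $\Scal_\lambda^\ast\V_\alpha$ a $\Scal_\lambda^\ast\gcyl$-orthonormal frame, $\Scal_\lambda^\ast\mu_g^s=\mu_{\Scal_\lambda^\ast g}^s$ for any representative $g$, and $\Scal_\lambda^\ast$ preserves $\Kil$ (directly, by Definition \ref{def:scalg} it sends the spanning set $\{T_\mu,B^{\mu\nu}\}$ to $\{\lambda^{-1}T_\mu,B^{\mu\nu}\}$).

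Next the formal identities. The first, $\Scal_\lambda^\ast\ddR=\ddR\Scal_\lambda^\ast$, is naturality of the exterior derivative. For \eqref{eq:modmult_hom} I would unwind \eqref{eq:gmod}, \eqref{eq:lmod}, \eqref{eq:multIP} in Sweedler notation and use $\Scal_\lambda^\ast(\omega\wedge\omega')=(\Scal_\lambda^\ast\omega)\wedge(\Scal_\lambda^\ast\omega')$; the scalar $\lambda^{-1}$ occurring in the $\I$-summand of $\Scalg_\lambda$ commutes past everything. For \eqref{eq:anchor_hom} I would combine \eqref{eq:anchorgdef}, \eqref{eq:anchorLdef} with $\Scal_\lambda^\ast(\Lie_X\omega)=\Lie_{\Scal_\lambda^\ast X}(\Scal_\lambda^\ast\omega)$. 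For $\Scalg_\lambda\dg=\dg\Scalg_\lambda$, split $\dg$ via \eqref{eq:dgdef} into its $\dl$-, $\inj$- and $\dI$-pieces: the $\dl$-piece is immediate from naturality of $\ddR$ and Definition \ref{def:scalg}; for the $\dI$-piece, by Remark \ref{rem:dIprop} the operator $\dI$ is independent of the representative metric and of the frame, so computing $\dI(\Scal_\lambda^\ast u_\pm)$ from \eqref{eq:dIpm} with the metric $\Scal_\lambda^\ast\gcyl$ and the frame $\Scal_\lambda^\ast\V_\alpha$, using naturality of $\nabla^{\gcyl}$ and of densities, exhibits it as $\Scal_\lambda^\ast(\dI u_\pm)$, while the extra $\lambda^{-1}$ is a scalar. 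For $\Scalg_\lambda[\cdot,\cdot]=[\Scalg_\lambda\cdot,\Scalg_\lambda\cdot]$, split via \eqref{eq:[]gdef}: the $\lx$-part reduces through \eqref{eq:llbracket} to naturality of $\wedge$, of the vector-field Lie bracket ($\Scal_\lambda^\ast[\KilEl,\KilEl']=[\Scal_\lambda^\ast\KilEl,\Scal_\lambda^\ast\KilEl']$) and of $\Lie$; the $\I$-part reduces through \eqref{eq:LIbracket} to the module identity \eqref{eq:modmult_hom} together with the intertwining $\Scalg_\lambda\circ\Lie_\KilEl=\Lie_{\Scal_\lambda^\ast\KilEl}\circ\Scalg_\lambda$ on $\I(\cyl)$, which is once more naturality of $\Lie$ with $\lambda^{-1}$ commuting past it.

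The remaining, and main, point is $\inj\circ\Scalg_\lambda=\Scalg_\lambda\circ\inj$: here $\Scal_\lambda$ is a dilation, not a Poincar\'e transformation, so the Poincar\'e invariance of $\inj$ is of no help, and one must use the explicit formula \eqref{eq:injauxformula}. Because $\inj$ is $C^\infty(\cyl)$-linear, is built pointwise from $\injaux$, and $\Scalg_\lambda$ is semilinear over $\Scal_\lambda^\ast$ and compatible with module multiplication (by \eqref{eq:modmult_hom}, already established), it suffices to check, on the dense subset $\cyl\setminus\{\nullgen=0\}$ where the $C^\infty$-spanning elements $\nullgen\mu_{\gcyl}^{-1}\otimes(dx^\mu\wedge dx^\nu)$ of $\dens{-1}(\cyl)\otimesCinf\Omega^2(\cyl)$ are available, that $\injaux_\C(\lambda^{-1}\Scal_\lambda^\ast(\,\cdot\,))=\Scalg_\lambda(\injaux_\C(\,\cdot\,))$ on these elements. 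On the left, $\Scal_\lambda^\ast(dx^\mu\wedge dx^\nu)=\lambda^2\,dx^\mu\wedge dx^\nu$ and $\Scal_\lambda^\ast(\nullgen\mu_{\gcyl}^{-1})=\lambda^{-1}\nullgen\mu_{\gcyl}^{-1}$, so the prefactor $\lambda^{-1}$ returns the argument of $\injaux_\C$ unchanged, giving $1\otimes B^{\mu\nu}-\big(x^\mu\otimes(\eta^{\nu\alpha}T_\alpha)-x^\nu\otimes(\eta^{\mu\alpha}T_\alpha)\big)$ by \eqref{eq:injauxformula}. On the right, $\Scal_\lambda^\ast B^{\mu\nu}=B^{\mu\nu}$ and $\Scal_\lambda^\ast(x^\mu\otimes(\eta^{\nu\alpha}T_\alpha))=(\lambda x^\mu)\otimes(\lambda^{-1}\eta^{\nu\alpha}T_\alpha)=x^\mu\otimes(\eta^{\nu\alpha}T_\alpha)$, so $\Scalg_\lambda$ fixes the same element; the two sides agree. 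Thus the density weight $-1$ in the $\I$-summand of Definition \ref{def:scalg} is precisely tuned so that $\inj$ is $\Scal_\lambda$-equivariant, and assembling this with the $\dl$- and $\dI$-pieces yields $\Scalg_\lambda\dg=\dg\Scalg_\lambda$; together with the bracket and module/anchor computations this completes the proof. I expect this last step to be the only real obstacle: everything else is bookkeeping of pullback naturality, whereas here one must keep the density weights and the dilation weights $\lambda^{\pm 1}$ exactly consistent.
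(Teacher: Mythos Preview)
Your proof is correct and follows essentially the same approach as the paper: reduce $\dg$ to its $\dl$, $\dI$, and $\inj$ pieces, use naturality and conformal invariance for the first two, and check $\inj$-compatibility directly from the explicit formula \eqref{eq:injauxformula} using that the density weight supplies exactly the right power of $\lambda$. Your writeup is considerably more detailed than the paper's (which dispatches the $\inj$ step in one line), but the structure and key observations are the same.
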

%----------------------------%
\begin{proof}
First of \eqref{eq:ddbr_hom}: 
Clear.
Second of \eqref{eq:ddbr_hom}:
Denote the restrictions of $\Scalg_\lambda$
to $\lx(\cyl)$ and $\I(\cyl)$ by 
$\ScalL_\lambda$ and $\ScalI_\lambda$, respectively.
It suffices to show:
\begin{align*}
\ScalL_\lambda \dl &= \dl \ScalL_\lambda
&
\ScalI_\lambda \dI &= \dI \ScalI_\lambda
&
\ScalL_\lambda \inj &= \inj \ScalI_\lambda
\end{align*}
The first is clear.
The second holds by \cite[Remark 27]{Thesis} and the fact that
$\Scal^*_\lambda$ is a conformal isometry 
(more explicitly, use \eqref{eq:dIpm}, the fact that 
$\Scal_\lambda^*\gcyl$ is again a representative of $[\gcyl]$,
and Remark \ref{rem:dIprop}).
The third follows from \eqref{eq:injauxformula}, using the
explicit $\lambda^{-1}$ factor in \eqref{eq:scalg}.
Third of \eqref{eq:ddbr_hom}:
Use \eqref{eq:llbracket}, \eqref{eq:LIbracket}, compatibility
of the pullback with wedge product and Lie derivative, 
and $\Scal_{\lambda}^*[\KilEl,\KilEl']=[\Scal_{\lambda}^*\KilEl,\Scal_{\lambda}^*\KilEl']$
for $\KilEl,\KilEl'\in\Kil$.
\eqref{eq:modmult_hom}:
Use \eqref{eq:lmod} and \eqref{eq:multIP}. 
\eqref{eq:anchor_hom}:
Use \eqref{eq:anchorgdef} and \eqref{eq:anchorLdef}. 
\qed
\end{proof}
%----------------------------%

\subsection{Relation to Ricci-flat metrics,
proof of Proposition \ref{prop:metricregularity}}
	\label{sec:recoveryofmetric}
We recall the relation between 
solutions of \eqref{eq:MC}
and Ricci-flat metrics on $\diamond$.
For simplicity, in the following definitions and statements
we use $\diamond$ as the underlying domain.
It is understood that the same definitions and statements
can be made for other domains, particularly $\diamond_+$,
since all operations are local.
\begin{definition}\label{def:Fu}
Let $\uo\in\Omega^1(\diamond)\otimesRR\Kil$. Expand 
\[ 
\uo = \tsum_{i=1}^{10}\omega_i\otimes \KilBasis_i
\]
where $\KilBasis_1,\dots,\KilBasis_{10}$ is a basis of $\Kil$.
Define the $C^\infty$-linear maps 
\begin{align*}
\frame{\uo}: 
\Gamma(T\diamond)&\to\Gamma(T\diamond)
&
\frame{\uo}(X) &=  \tsum_{i=1}^{10}\omega_i(X)\KilBasis_i\\
\dualframe{\uo}: 
\Omega^1(\diamond)&\to\Omega^1(\diamond)
&
\dualframe{\uo}(\theta) &=  \tsum_{i=1}^{10}\omega_i \theta(\KilBasis_i)
\end{align*}
They are independent of the chosen basis $\KilBasis_1,\dots,\KilBasis_{10}$,
and $\dualframe{\uo}$ is the dual of $\frame{\uo}$,
in the sense that 
$\theta(\frame{\uo}(X)) = (\dualframe{\uo}(\theta))(X)$ for all $X,\theta$.
We say that $\uo$ is nondegenerate if and only if
$\one+\frame{\uo}$ is invertible at every point on $\diamond$,
and that $u=\uo\oplus \uI \Ieps\in \gx^1(\diamond)$ is
nondegenerate if and only if $\uo$ is nondegenerate.

Note that $\frame{\uo}$ and $\dualframe{\uo}$ are $C^\infty$-linear in $\uo$.
\end{definition}
%-----------------------------------%
%-----------------------------------%
\begin{prop}[{\cite[Prop.~10]{Thesis}}]\label{prop:metricdef}
Let $u=\uo\oplus\uI\in\gx^1(\diamond)$ be nondegenerate.
If 
$\dg u+\frac12[u,u]=0$
then the smooth Lorentzian metric $g$ on $\diamond$ defined by 
\begin{align}\label{eq:metricdef}
g^{-1} 
= 
(\one + \frame{\uo})^{\otimes2} \eta^{-1}
\end{align}
is Ricci-flat, $\Ric(g)=0$.
The formula \eqref{eq:metricdef} is to be understood as follows:
$\eta^{-1}$ is a section of the second tensor power of $T\diamond$, 
and we apply $\one + \frame{\uo}$ to each factor
(explicitly 
$g^{-1} = \eta^{\mu\nu} (\one + \frame{\uo})(\p_{x^\mu})\otimes(\one + \frame{\uo})(\p_{x^\nu})$).
\end{prop}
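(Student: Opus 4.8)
The idea is to decode the Maurer--Cartan equation \eqref{eq:MC} into the two Cartan structure equations for the metric $g$ of \eqref{eq:metricdef} together with a metric-compatible affine connection, and then to extract $\Ric(g)=0$ from the purely algebraic structure of the module $\I^2$. (Alternatively, on $\diamond$ one may identify $\gx(\diamond)$ with the differential graded Lie algebra of \cite{RTgLa1,RTgLa2}, where the correspondence with Ricci-flat metrics is already set up, and then only match the two formulas for the metric; below I sketch the direct route.) Note that only the $\lx^2(\diamond)$-component of $\dg u+\tfrac12[u,u]=0$ will be used; the $\I^3(\diamond)$-component, which reads $\dI\uI+[\uo,\uI]_{\I}=0$, encodes the Weyl equation of motion and is not needed for Ricci-flatness.

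\textbf{Step 1 (frame and connection).} Using Remark \ref{rem:ginbasis,connection} I would first expand $\uo$ in the form \eqref{eq:uobasis}, extracting functions $E_\mu^\nu$ and $\Gamma^\beta_{\mu\nu}$ on $\diamond$ with $\eta_{\alpha\beta}\Gamma^\beta_{\mu\nu}$ antisymmetric in $\nu\alpha$. By Definition \ref{def:Fu}, nondegeneracy of $\uo$ is exactly the pointwise linear independence of the vector fields $E_\mu = E_\mu^\nu\p_{x^\nu} = (\one+\frame{\uo})(\p_{x^\mu})$, so that $(E_0,\dots,E_3)$ is a global frame on $\diamond$; and since $\eta^{-1}=\eta^{\mu\nu}\p_{x^\mu}\otimes\p_{x^\nu}$ on $\diamond$, the definition \eqref{eq:metricdef} gives $g^{-1}=\eta^{\mu\nu}E_\mu\otimes E_\nu$, hence $g(E_\mu,E_\nu)=\eta_{\mu\nu}$, i.e. $E$ is a $g$-orthonormal frame. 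Declaring $\nabla$ to be the connection with $\nabla_{E_\mu}E_\nu=\Gamma^\beta_{\mu\nu}E_\beta$, the antisymmetry condition is precisely compatibility of $\nabla$ with $g$.

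\textbf{Step 2 (decode the $\lx^2$-component).} Specializing \eqref{eq:dgdef} and \eqref{eq:[]gdef} to $u=\uo\oplus\uI$ in degree one, the $\lx^2(\diamond)$-component of the equation is $\dl\uo+\tfrac12[\uo,\uo]_{\lx}=\inj(\uI)$. I would substitute \eqref{eq:uobasis}, expand the left side using the bracket formula \eqref{eq:llbracket}, the Lie-derivative action of the Poincar\'e fields \eqref{eq:TB} on one-forms, and the Kelvin relations, and use the explicit formula \eqref{eq:injauxformula} for $\inj$ on the right. After passing from the fixed Poincar\'e basis \eqref{eq:Kilbas} of $\Kil$ to the pointwise frame-adapted basis — which is exactly what the matching translation-correction terms in \eqref{eq:uobasis} and \eqref{eq:injauxformula} encode — the $\R^4$-valued part of the equation becomes the first structure equation (vanishing torsion of $\nabla$), which together with Step 1 forces $\nabla$ to be the Levi-Civita connection of $g$; and the $\mathfrak{so}(1,3)$-valued part becomes the second structure equation, identifying the Riemann tensor of $g$ in the frame $E$ with the components of $\uI$ (relative to $E$ and the associated density). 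I expect this step to be the main obstacle: keeping track of the position-dependence of the fields \eqref{eq:TB} through these corrections, and of the $\pm1$-density weights of Definition \ref{def:It}, is the technical heart of the argument.

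\textbf{Step 3 (Ricci-flatness).} Finally I would use that $\uI\in\I^2(\diamond)$ is, fiberwise, a genuine Weyl-type tensor. By Definition \ref{def:It} its underlying four-tensor, read in any $\eta$-orthonormal frame, has pair symmetry, decomposes as an element of $S^2(\Lambda^2_+)$ plus an element of $S^2(\Lambda^2_-)$ with no mixed term, is doubly trace-free, and satisfies the reality condition $\I^2=(\I^2_+\oplus\I^2_-)_{\R}$. In four dimensions these conditions are equivalent to the full list of algebraic symmetries of a Weyl tensor: the first Bianchi identity and vanishing of the entire Ricci trace — the trace-free Ricci part of an algebraic curvature tensor lying in the excluded cross-block $\Lambda^2_+\otimes\Lambda^2_-$, while the scalar and top-form parts are the real and imaginary combinations of the two trace elements of $S^2(\Lambda^2_\pm)$ and are killed by the trace-free and reality conditions. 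Applying this to the identity ``Riemann tensor of $g$ $=\uI$'' from Step 2 and contracting in the frame $E$ gives $\Ric(g)=0$. All operations being local, the identical argument applies on $\diamond_+$ or any other open subset.
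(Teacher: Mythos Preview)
The paper does not prove this proposition; it is stated with a citation to \cite[Prop.~10]{Thesis} and no proof is given here. Your sketch is the expected route and matches the paper's informal description in the introduction (the paragraph after \eqref{eq:g2D}): the $\lx^2$-component of the Maurer--Cartan equation encodes torsion-freeness and ``Riemann equals Weyl'', while the algebraic constraints on $\I^2$ force the Ricci trace to vanish. Your identification of Step~2 as the technical heart (tracking the position-dependent translation corrections in \eqref{eq:uobasis} and \eqref{eq:injauxformula}, and the density weight) is accurate; this is where the bookkeeping lives, though it is ultimately a mechanical computation once the frame-adapted basis is set up.

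One small sharpening in Step~3: the single scalar condition defining $\I^2_\pm\subset S^2(\Omega^2_\pm)$ is exactly the vanishing of the trace of the $3\times 3$ block in the Singer--Thorpe decomposition, which for an element with no mixed $\Lambda^2_+\otimes\Lambda^2_-$ part simultaneously enforces the first Bianchi identity and kills the scalar curvature; the absence of the mixed block is what kills the trace-free Ricci. So ``doubly trace-free'' is slightly misleading phrasing --- it is one complex scalar condition on each chirality --- but your conclusion and dimension count are correct.
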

The map $\one+\frame{\uo}$ is an orthonormal frame
for the metric \eqref{eq:metricdef}, see also \eqref{eq:1Fframe}.

We refer to \cite[Proposition 11]{Thesis} for the converse statement, 
i.e., that every Ricci-flat metric on $\diamond$ defines,
up to the choice of an orthonormal frame,
a nondegenerate solution of \eqref{eq:MC}.
%-----------------------------------%
\begin{remark}\label{rem:metricdefinitions}
Using \eqref{eq:minkmetric} we can equivalently rewrite \eqref{eq:metricdef} as
\begin{align*}
(\nullgen^2g)^{-1} 
= 
(\one + \frame{\uo})^{\otimes2} \gcyl^{-1}
\end{align*}
In particular, for all 
one-forms $\theta,\theta'$
and all vector fields $X,X'$:
\begin{align*}
(\nullgen^2g)^{-1}(\theta,\theta') 
	&=
	\gcyl^{-1}\big( (\one+\dualframe{\uo})\theta,(\one+\dualframe{\uo})\theta' \big)\\
(\nullgen^2g)(X,X') 
	&=
	\gcyl\big( (\one+\frame{\uo})^{-1}X,(\one+\frame{\uo})^{-1}X' \big)
\end{align*}
\end{remark}
In the following we state basic properties of $\frame{\uo}$
(see also Section \ref{sec:estframe} for more quantitative statements),
and prove Proposition \ref{prop:metricregularity}.
%-----------------------------------%
\begin{lemma}\label{lem:frameglobalbasis}
Let $(e_i)_{i=1\dots40}$ be the basis of $\Omega^1(\diamond)\otimesRR\Kil$
given by the elements
\begin{align}\label{eq:auxglobbasis}
\Vd^\mu \otimes B^{\alpha\beta}
\qquad
\Vd^\mu \otimes T_\nu
\qquad\quad 
\begin{aligned}
\mu,\alpha,\beta,\nu=0\dots3,\ 
\alpha<\beta
\end{aligned}
\end{align}
where we recall that $\Vd^{0},\dots,\Vd^3$
is the basis of one-forms dual to \eqref{eq:defV}.
For each $i$, the components of $\frame{e_i}:\Gamma(T\diamond)\to\Gamma(T\diamond)$ with respect 
to the basis $\V_0,\dots,\V_3$ are smooth on $\overline\diamond$.
Moreover, the components of the one-form
\begin{align}\label{eq:frameedh}
\tfrac{1}{\nullgen}\dualframe{e_i}(d\nullgen) 
\end{align}
with respect to the basis $\Vd^{0},\dots,\Vd^3$ are smooth on $\overline\diamond$.
\end{lemma}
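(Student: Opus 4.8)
The plan is to reduce both assertions to the fact that the elements of $\Kil$ are globally smooth vector fields on the cylinder $\cyl$, together with the explicit formulas of Remark \ref{rem:Kil(h)}.

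First I would unwind the definitions. By Definition \ref{def:Fu}, for a product element $\uo=\theta\otimes\KilEl$ with $\theta\in\Omega^1(\diamond)$ and $\KilEl\in\Kil$ one has $\frame{\uo}(X)=\theta(X)\,\KilEl$ and $\dualframe{\uo}(\omega)=\omega(\KilEl)\,\theta$. Applying this to a basis element $e_i=\Vd^\mu\otimes\KilEl$, where $\KilEl$ is one of the boosts $B^{\alpha\beta}$ or translations $T_\nu$ of \eqref{eq:Kilbas}, and using that $\Vd^0,\dots,\Vd^3$ is dual to $\V_0,\dots,\V_3$, I obtain
\[
\frame{e_i}(\V_\nu)=\delta^\mu_\nu\,\KilEl,
\qquad
\tfrac{1}{\nullgen}\,\dualframe{e_i}(d\nullgen)=\tfrac{\KilEl(\nullgen)}{\nullgen}\,\Vd^\mu .
\]
Thus the components of $\frame{e_i}$ relative to $\V_0,\dots,\V_3$ are, up to a Kronecker delta, the components of the single fixed vector field $\KilEl$; and the only nonzero component of $\tfrac{1}{\nullgen}\dualframe{e_i}(d\nullgen)$ relative to $\Vd^0,\dots,\Vd^3$ is the single function $\KilEl(\nullgen)/\nullgen$ (in slot $\mu$). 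So the lemma reduces to two statements about each fixed $\KilEl\in\{B^{\alpha\beta},T_\nu\}$.

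For the claim about $\frame{e_i}$ I would simply invoke that $\KilEl\in\Kil\subset\ConfKil\subset\Gamma(T\cyl)$, so $\KilEl$ is a smooth vector field on all of $\cyl$; since $\V_0,\dots,\V_3$ is a global smooth frame on $\cyl$, the components of $\KilEl$ relative to it lie in $C^\infty(\cyl)$, hence restrict to smooth functions on $\overline\diamond\subset\cyl$. This step needs nothing beyond the setup of Section \ref{sec:geomconfcpt}.

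The only genuine point — the step I would flag as the main obstacle, modest as it is — concerns $\tfrac{1}{\nullgen}\dualframe{e_i}(d\nullgen)$: the function $\KilEl(\nullgen)/\nullgen$ is a priori defined and smooth only on $\diamond$, where $\nullgen>0$, and it looks singular along $\fpnullinf\cup\fptimeinf\cup\{\spaceinf\}\subset\p\diamond$, the zero set of $\nullgen$. To resolve this I would use Remark \ref{rem:Kil(h)}: on the dense open set $\cyl\setminus\{\nullgen=0\}\supset\diamond$ the quotient $\KilEl(\nullgen)/\nullgen$ equals $-\xi^i\sin\tau$ if $\KilEl=B^{0i}$, equals $0$ if $\KilEl=B^{ij}$ with $i,j\in\{1,2,3\}$ (which includes $B^{31}=-B^{13}$), equals $\xi^4\sin\tau$ if $\KilEl=T_0$, and equals $-\xi^i\cos\tau$ if $\KilEl=T_i$; these four cases exhaust the basis \eqref{eq:Kilbas}. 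Each of these functions is smooth on all of $\cyl$, so $\KilEl(\nullgen)/\nullgen$ extends smoothly across $\{\nullgen=0\}$ — equivalently, Remark \ref{rem:Kil(h)} says precisely that $\KilEl(\nullgen)$ is divisible by $\nullgen$ in $C^\infty(\cyl)$ — and in particular restricts to a smooth function on $\overline\diamond$. This yields the asserted smoothness of the components of \eqref{eq:frameedh} and completes the plan.
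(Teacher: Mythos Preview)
Your proof is correct and takes essentially the same approach as the paper: compute $\frame{e_i}(\V_\nu)=\delta^\mu_\nu\KilEl$ and $\tfrac{1}{\nullgen}\dualframe{e_i}(d\nullgen)=\tfrac{\KilEl(\nullgen)}{\nullgen}\Vd^\mu$, then invoke smoothness of $\KilEl$ on $\cyl$ for the first part and Remark~\ref{rem:Kil(h)} for the second. The paper's proof is just a terser version of yours.
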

\begin{proof}
The first statement follows from the definition of $\frame{e_i}$
and the fact that $B^{\alpha\beta}$, $T_\nu$
are smooth vector fields on $\cyl$.
For smoothness of \eqref{eq:frameedh}, note that
\begin{align*}
\tfrac{1}{\nullgen}\dualframe{\Vd^\mu \otimes B^{\alpha\beta}}(d\nullgen) 
	&=
	\tfrac{B^{\alpha\beta}(\nullgen)}{\nullgen}\Vd^\mu &
\tfrac{1}{\nullgen}\dualframe{\Vd^\mu \otimes T_\nu}(d\nullgen) 
	&=
	\tfrac{T_\nu(\nullgen)}{\nullgen}\Vd^\mu
\end{align*}
and use Remark \ref{rem:Kil(h)}.
\qed
\end{proof}
%-----------------------------%
\begin{lemma}\label{lem:Fglobbasic}
For all $\uo\in \Omega^1(\diamond)\otimesRR\Kil$ one has:
\begin{itemize}
\item 
For every $k\in\Z_{\ge0}$, 
if $\uo$ extends in $C^k$ to $\overline{\diamond}\setminus\spaceinf$ then 
the components of $\frame{\uo}$ with respect to the basis $\V_0,\dots,\V_3$
extend in $C^k$ to $\overline{\diamond}\setminus\spaceinf$.
\item 
At every point on $\diamond$:
Denoting by $\|\frame{\uo}\|$ the $\ell^2$-matrix
norm with respect to the basis $\V_0,\dots,\V_3$,
and by $\|\uo\|$ the $\ell^2$-vector norm 
with respect to the basis \eqref{eq:auxglobbasis}, 
then one has $\|\frame{\uo}\| \lesssim \|\uo\|$
(the notation $\lesssim$ is in Remark \ref{rem:lesssim}).
\end{itemize}
\end{lemma}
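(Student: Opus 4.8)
The plan is to deduce both items from the $C^\infty$-linearity of the assignment $\uo\mapsto\frame{\uo}$ (noted at the end of Definition \ref{def:Fu}) combined with Lemma \ref{lem:frameglobalbasis}. Expand $\uo=\sum_{i=1}^{40}f^i e_i$ in the basis $(e_i)$ of \eqref{eq:auxglobbasis}, with $f^i\in C^\infty(\diamond)$. Since the $e_i=\Vd^\mu\otimes B^{\alpha\beta}$, $\Vd^\mu\otimes T_\nu$ are smooth sections of $\Omega^1\otimesRR\Kil$ over all of $\cyl$, and in particular form a smooth frame over $\overline\diamond$, the section $\uo$ extends in $C^k$ to $\overline\diamond\setminus\spaceinf$ precisely when each coefficient $f^i$ does, and by $C^\infty$-linearity one has $\frame{\uo}=\sum_i f^i\,\frame{e_i}$.

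For the first item: Lemma \ref{lem:frameglobalbasis} gives that the components of each $\frame{e_i}$ relative to $\V_0,\dots,\V_3$ are smooth on all of $\overline\diamond$. Hence each component of $\frame{\uo}$ relative to $\V_0,\dots,\V_3$ is a finite sum of products of a $C^k$ function on $\overline\diamond\setminus\spaceinf$ with a $C^\infty$ function on $\overline\diamond$, and therefore extends in $C^k$ to $\overline\diamond\setminus\spaceinf$.

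For the second item: at a point $p\in\diamond$, the triangle inequality for the $\ell^2$-matrix norm gives $\|\frame{\uo}\|\le\sum_i|f^i(p)|\,\|\frame{e_i}\|_p$. Each $\|\frame{e_i}\|_p$ is bounded by $M:=\max_i\sup_{\overline\diamond}\|\frame{e_i}\|<\infty$, the supremum being finite because $\overline\diamond$ is compact and the matrix entries of $\frame{e_i}$ are continuous there. Since the $\ell^2$-vector norm $\|\uo\|$ with respect to \eqref{eq:auxglobbasis} is exactly $\big(\sum_i|f^i(p)|^2\big)^{1/2}$, Cauchy--Schwarz yields $\sum_i|f^i(p)|\le\sqrt{40}\,\|\uo\|$, so $\|\frame{\uo}\|\le\sqrt{40}\,M\,\|\uo\|$, which is the asserted estimate $\|\frame{\uo}\|\lesssim\|\uo\|$.

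There is no real obstacle in this argument; the two points to keep straight are that the matrix norm of $\frame{\uo}$ and the vector norm of $\uo$ in the statement are taken with respect to the specific bases $\V_0,\dots,\V_3$ and \eqref{eq:auxglobbasis}, so that $\|\uo\|$ coincides with the Euclidean norm of the component vector $(f^i)$ and the passage $\frame{\uo}=\sum_i f^i\frame{e_i}$ is exact, and that the finiteness of the constant $M$ rests on compactness of $\overline\diamond$ (equivalently, on the smoothness up to the boundary provided by Lemma \ref{lem:frameglobalbasis}).
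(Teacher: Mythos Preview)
Your proof is correct and follows essentially the same approach as the paper: expand $\uo$ in the basis \eqref{eq:auxglobbasis}, use $C^\infty$-linearity to write $\frame{\uo}=\sum_i f^i\frame{e_i}$, and invoke Lemma \ref{lem:frameglobalbasis} for both the $C^k$-extension (smoothness of the $\frame{e_i}$ components on $\overline\diamond$) and the pointwise bound (uniform boundedness by compactness). Your version spells out a few more details (Cauchy--Schwarz, the explicit constant), but the argument is the same.
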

%-----------------------------------%
\begin{proof}
Expand 
$\uo = \sum_{i=1}^{40} u_{0i}e_i$ 
where $(e_i)_{i=1\dots40}$ is the basis \eqref{eq:auxglobbasis}. 
Then
$\frame{\uo} = \tsum_{i=1}^{40} u_{0i} \frame{e_i}$.
By Lemma \ref{lem:frameglobalbasis} the components of $\frame{e_i}$
are smooth on $\overline{\diamond}$, hence the first item follows.
Since the components of $\frame{e_i}$ are smooth on $\overline{\diamond}$, 
they are also uniformly bounded, hence the second item follows.
\qed
\end{proof}
%-----------------------------%
\begin{proof}[of Proposition \ref{prop:metricregularity},
proofs of 
\ref{item:metric_nullcompleteness}, \ref{item:metric_nullinfinity}
are only sketched]
\proofheader{Proof of \ref{item:metric_regularextension}.}
By the first item of Lemma \ref{lem:Fglobbasic} and the regularity assumption on $\uo$,
the components of $\frame{\uo}$ with respect to $\V_\mu$
extend in $C^{k}$ to $\overline{\diamond}_{\new{+}}\setminus\spaceinf$.
By \ref{item:F1/16assp}, $\one+\frame{\uo}$ is invertible,
and the components of the inverse with respect to $\V_\mu$
also extend in $C^{k}$ to $\overline{\diamond}_{\new{+}}\setminus\spaceinf$.
The claim now follows from the formula (see Remark \ref{rem:metricdefinitions})
\begin{equation}\label{eq:metricaltNEW}
(\nullgen^2g)(\V_\mu,\V_\nu)
= 
\gcyl\big( (\one + \frame{\uo})^{-1}\V_\mu , (\one + \frame{\uo})^{-1}\V_\nu  \big)
\end{equation}

\proofheader{Proof of \ref{item:metric_eikonal}.}
Using Remark \ref{rem:metricdefinitions},
\begin{align}
&\tfrac{1}{\nullgen}(\nullgen^2g)^{-1}(d\nullgen,d\nullgen) 
\label{eq:gnullgen}\\
&\qquad\qquad= 
\tfrac{1}{\nullgen}\gcyl^{-1}\big( (\one + \dualframe{\uo})d\nullgen , (\one + \dualframe{\uo})d\nullgen\big) \nonumber\\
&\qquad\qquad=
\tfrac{1}{\nullgen}\gcyl^{-1}(d\nullgen,d\nullgen)
+
2\gcyl^{-1}(\tfrac{1}{\nullgen}\dualframe{\uo}(d\nullgen),d\nullgen)
+
\gcyl^{-1}(\tfrac{1}{\nullgen}\dualframe{\uo}(d\nullgen),\dualframe{\uo}(d\nullgen))
\nonumber
\end{align}
We show that this function extends in $C^{k}$ to $\overline{\diamond}_{\new{+}}\setminus\spaceinf$.
By direct calculation $\tfrac{1}{\nullgen}\gcyl^{-1}(d\nullgen,d\nullgen)=\cos(\tau)+\xi^4$,
which is smooth on $\cyl$. 
The components of $\tfrac{1}{\nullgen}\dualframe{\uo}(d\nullgen)$
with respect to $\Vd^\mu$ extend in $C^{k}$ to 
$\overline{\diamond}_{\new{+}}\setminus\spaceinf$
by Lemma \ref{lem:frameglobalbasis}, 
$C^\infty$-linearity of $\dualframe{\uo}$ in $\uo$,
and the assumption that $\uo$ extends in $C^{k}$.
Thus the claim follows.

\proofheader{Proof sketch of \ref{item:metric_nullinfinity}.}
By Remark \ref{rem:metricdefinitions} we have
\begin{align}\label{eq:vvf}
(\nullgen^2g)^{-1}(d\nullgen,\,\cdot\,) 
= \gcyl^{-1} ((\one+\dualframe{\uo})d\nullgen,(\one+\dualframe{\uo})\,\cdot\,)
\end{align}
By \ref{item:metric_regularextension} this is a $C^k$-vector field
on $\overline\diamond_+\setminus\spaceinf$,
and by \ref{item:metric_eikonal} this is tangential to $\fnullinf$,
and null for the metric $\nullgen^2g$ along $\fnullinf$.
The point $\timeinf$ is a critical point for \eqref{eq:vvf} (meaning that
it vanishes there), because $d\nullgen|_{\timeinf}=0$.
A computation shows that its Jacobian at $\timeinf$ equals $-\one$,
using $\dualframe{\uo}|_{\timeinf}=0$ (because
every vector field in $\Kil$ vanishes at $\timeinf$).
Together with \ref{item:F1/16assp}, this implies
that \eqref{eq:vvf} is future directed along $\fnullinf$.
In summary, denoting the vector field \eqref{eq:vvf} by $\VV$:
\begin{align*}
\VV|_{\timeinf}=0
\qquad
(\p_{y^\nu}\VV(y^\mu))|_{\timeinf} = -\delta_{\nu}^{\mu}
\qquad
\VV(\tau)>0\;\text{along $\fnullinf$}
\end{align*}
From these properties one can conclude that the
rescaled vector field $\VV/\VV(\tau)$, viewed as a vector field
along $\fnullinf$, lifts to a $C^{k-1}$-vector field on the blowup
of $\timeinf$ in $\fnullinf\cup\timeinf$, given 
by $(0,\pi]\times S^2$.
Then the statement easily follows.

\proofheader{Proof sketch of \ref{item:metric_nullcompleteness}.}
Recall that $g$ and $\nullgen^2g$ have the same null geodesics.
The metric $\nullgen^2g$ is $C^k$-regular on $\overline\diamond_+\setminus\spaceinf$
by \ref{item:metric_regularextension}.
Every future directed null geodesic can be non-affinely parametrized by $\tau$, 
that is, in the form \eqref{eq:gammapartau_intro}
(the level sets of $\tau$ are spacelike
by \ref{item:F1/16assp}; future directed means that
$\tau$ is increasing).
To construct $\gamma$ in \eqref{eq:gammapartau_intro}, 
we equivalently reformulate the null geodesic equation for
$\nullgen^2 g$ as a first order ODE for $\tau\mapsto (\gamma(\tau),\dot\xi(\tau))$, where $\dot\xi(\tau)$
is viewed as an element in $\R^3$ using the vector fields 
$\V_1,\V_2,\V_3$.
This is an autonomous ODE on the 7-dimensional manifold with boundary 
$(\overline\diamond_+\setminus\spaceinf) \times \R^3$.
The ODE is given by a $C^{k-1}$ vector field
(smooth in the interior) by \ref{item:metric_regularextension},
and $\dot\xi(\tau)$ stays in the ball $|\dot\xi(\tau)|\le2$
by \ref{item:F1/16assp}.
Then the maximal integral curve with initial condition 
given by \eqref{eq:nullgeodeq_intro}
yields 
$\tau_1>0$ and $\gamma\in C^\infty([0,\tau_1),\diamond_+)$ 
that satisfies \eqref{eq:nullgeodeq_intro}.
One shows that $\gamma$ uniquely extends in $C^{\new{k}}$ to $\tau_1$,
and $\gamma(\tau_1)\in \fnullinf\cup\timeinf$.
By construction \ref{item:null_intro} holds.
By uniqueness of null geodesics and \ref{item:metric_nullinfinity}
one obtains \ref{item:gammaD_intro}.
One obtains  \ref{item:gammaAff_intro}, using the fact that
$(\nullgen\circ\gamma)(\tau)$ vanishes first order as $\tau\uparrow\tau_1$,
by \ref{item:gammaD_intro}.

For the last statement in \ref{item:metric_nullcompleteness}, 
it suffices to check that for every 
$p_0=(\tau_0,\xi_0)\in\diamond_+$ and every 
$v_*\in T_{p_0}\diamond_+$ that is null with respect to $g$ and 
normalized such that $d\tau(v_*)=1$,
the null geodesic that at $p_0$ has velocity $v_*$
reaches $\diamonddata$
when going in the negative $\tau$ direction,
in particular it 
does not go to either $\fnullinf$ or $\spaceinf$.
The null geodesic may be constructed using the same ODE as above,
to see that it reaches $\diamonddata$ use
\ref{item:F1/16assp}, \ref{item:Fdh1/16assp}
and \eqref{eq:eqcone}, \eqref{eq:phidef}.
To see that every point $p\in\fnullinf$ is reached 
by a null geodesic \eqref{eq:gammapartau_intro},
choose $v\in T_p\cyl$ that is null with respect to $\nullgen^2g$,
transversal to $\fnullinf$, and normalized such that $d\tau(v)=1$, 
and solve the ODE with initial data given by $p$ and $v$, in the negative $\tau$ direction.
\qed
\end{proof}
%-----------------------------------%

\subsection{Initial data and constraint equations}
\label{sec:constraints}

%-----------------------------------%
Initial data for \eqref{eq:MC} is given by a section on the 
initial hypersurface 
\begin{equation}\label{eq:diamonddatadef}
\diamonddata = \diamond\cap (\{0\}\times S^3)
\end{equation}
In this section we formulate the constraint equations
for the initial data, that is, 
the necessary and sufficient 
conditions for local solvability of \eqref{eq:MC}.
%-----------------------------------%
\begin{definition}\label{def:gxdata}
For $k=0\dots4$ let $\gxdata^k$ be the trivial vector bundle on $\{0\}\times S^3$
that is given by the pullback of the bundle $\gx^k$ under  
the inclusion $\{0\}\times S^3 \hookrightarrow \cyl$.
\end{definition}
%-----------------------------------%
We denote by $\gxdata^k(\diamonddata)$ the space of smooth sections 
of $\gxdata^k$ over $\diamonddata$, 
analogously for other sufficiently nice subsets of $\{0\}\times S^3$. 
Note that $\gxdata^k(\diamonddata)=\frac{\gx^{k}(\diamond)}{\tau\gx^{k}(\diamond)}$.
%---------------------------%
\begin{definition}\label{def:Pconstraints}
Define the non-linear first order differential operator 
\[ 
\Pconstraints: \;\;
\gxdata^1(\diamonddata) \;\to\;
\gxdata^3(\diamonddata) 
\]
by 
\begin{align}\label{eq:Pconstraintsdef}
\Pconstraints(\udata) \;=\; 
\left(d\tau + \anchorg(u)(\tau)\right)\left( \dg u + \tfrac12[u,u] \right)\big|_{\tau=0}
\end{align}
where $\udata\in\gxdata^1(\diamonddata)$,
and where $u\in\gx^1(\diamond_+)$ is any element that 
satisfies $u|_{\tau=0}=\udata$
(see Lemma \ref{lem:pext} for independence of the choice of $u$).
This formula is to be understood as follows:
The elements $d\tau + \anchorg(u)(\tau) \in \Omega^1(\diamond_+)$
and $\dg u + \tfrac12[u,u]\in \gx^2(\diamond_+)$ are multiplied
using 
\eqref{eq:gmod},
which is $C^\infty$-bilinear;
then the product is restricted to $\tau=0$, which
gives an element in $\gxdata^{3}(\diamonddata)$.
\end{definition}
%----------------------------------%
\begin{lemma}\label{lem:pext}
The operator $\Pconstraints$ is independent of:
\begin{itemize}
\item  
The choice of extension $u$:
If $u,u'\in\gx^1(\diamond_+)$ satisfy
$u|_{\tau=0}=u'|_{\tau=0}$ then
\begin{align*}
(d\tau + \anchorg(u)(\tau)) ( \dg u + \tfrac12[u,u] )\big|_{\tau=0}
=
(d\tau + \anchorg(u')(\tau))( \dg u' + \tfrac12[u',u'] )\big|_{\tau=0}
\end{align*}
\item 
The choice of time function $\tau$, in the following sense:
If $f\in C^\infty(\diamond_+)$ satisfies $f|_{\tau=0}=0$ 
and if $df\neq0$ at every point on $\diamonddata$ then
\[ 
\Pconstraints(\udata) = 
(\tfrac{\tau}{f})|_{\tau=0} \left(df + \anchorg(u)(f)\right)
( \dg u + \tfrac12[u,u] )\big|_{\tau=0}
\]
where $(\tfrac{\tau}{f})|_{\tau=0}$
is a nowhere vanishing smooth function on $\diamonddata$.
\end{itemize}
\end{lemma}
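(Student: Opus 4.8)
The plan is to reduce both statements to one elementary fact: a smooth section of $\gx^k$ (or a smooth function) on $\diamond_+$ that vanishes on $\diamonddata=\{\tau=0\}$ is divisible by $\tau$, by Taylor expansion in the boundary defining function $\tau$ near $\diamonddata$ — this suffices, since both claims are pointwise on $\diamonddata$ — together with the observation that, under the $C^\infty$-bilinear module multiplication $\Omega^q(\diamond_+)\times\gx^k(\diamond_+)\to\gx^{q+k}(\diamond_+)$, any term carrying an explicit factor of $\tau$ restricts to zero on $\diamonddata$. Throughout I would abbreviate $\theta:=d\tau+\anchorg(u)(\tau)\in\Omega^1(\diamond_+)$ and $Q:=\dg u+\tfrac12[u,u]\in\gx^2(\diamond_+)$, so that $\Pconstraints(\udata)=(\theta\,Q)|_{\tau=0}$ by definition.

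For the first bullet I would write $u'=u+\tau w$ with $w\in\gx^1$ defined near $\diamonddata$, and expand $\dg u'+\tfrac12[u',u']$ and $d\tau+\anchorg(u')(\tau)$ using the Leibniz rules \eqref{eq:dgmodule}, \eqref{eq:anchorbracket}, \eqref{eq:anchorlin} and graded antisymmetry \eqref{eq:bracketas} (both arguments of degree one, so $[u,\tau w]=[\tau w,u]$). The outcome is
\begin{align*}
\dg u'+\tfrac12[u',u'] &= Q+\theta\,w+\tau R, & d\tau+\anchorg(u')(\tau) &= \theta+\tau\,\anchorg(w)(\tau),
\end{align*}
for some $R\in\gx^2(\diamond_+)$. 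Multiplying these two expressions and using graded associativity of the module multiplication, every new term is either a multiple of $\tau$ or a multiple of $\theta(\theta\,w)=(\theta\wedge\theta)\,w=0$; hence the product restricted to $\tau=0$ equals $(\theta\,Q)|_{\tau=0}$, which is the claim (and in particular shows $\Pconstraints$ is well defined). The only non-mechanical point, which I expect to be the crux of the argument, is noticing that the potentially surviving cross terms are all killed by the factor $\theta\wedge\theta$, which vanishes simply because $\theta$ is a one-form; everything else is bookkeeping with the dgLa identities.

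For the second bullet, given $f\in C^\infty(\diamond_+)$ with $f|_{\tau=0}=0$ and $df\neq0$ on $\diamonddata$, I would Taylor-expand $f=\tau g$; then $df|_{\diamonddata}=g|_{\diamonddata}\,d\tau|_{\diamonddata}$ forces $g$ to be nowhere zero on $\diamonddata$, so $(\tau/f)|_{\tau=0}=(1/g)|_{\tau=0}$ is the asserted nowhere vanishing smooth function. Using $d(\tau g)=g\,d\tau+\tau\,dg$ together with the derivation property \eqref{eq:anchorleib} of $\anchorg(u)$ applied to the product of the two $0$-forms $\tau$ and $g$, one obtains $df+\anchorg(u)(f)=g\,\theta+\tau\big(dg+\anchorg(u)(g)\big)$; multiplying by $Q$ and restricting to $\tau=0$ removes the $\tau$-term and yields $\big(df+\anchorg(u)(f)\big)Q\big|_{\tau=0}=g|_{\tau=0}\cdot\Pconstraints(\udata)$, and dividing by $g|_{\tau=0}$ gives the stated formula. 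Since the first bullet already establishes independence of $\Pconstraints(\udata)$ from the chosen extension $u$, the right-hand side of this formula is automatically independent of $u$ as well, consistently with the statement. The only technical care needed anywhere is the Taylor-division step near $\diamonddata$, which is a standard property of the boundary defining function $\tau$ on $\diamond_+$.
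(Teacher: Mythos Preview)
Your proposal is correct and follows essentially the same approach as the paper: write the difference of extensions as $\tau$ times a section, expand using the dgLa Leibniz rules \eqref{eq:dgmodule}, \eqref{eq:anchorbracket}, \eqref{eq:anchorlin}, and observe that the surviving cross term is killed by $\theta\wedge\theta=0$; the second item likewise proceeds by writing $f=\tau g$ and applying the derivation property. The only cosmetic difference is that the paper expands around $u'$ (writing $u-u'=\tau v$) while you expand around $u$, but the computations are otherwise identical.
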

\begin{proof}
\proofheader{Proof of first item.}
Since $u-u'$ vanishes along $\diamonddata$
we have $u-u' = \tau v$ for some $v\in\gx^1(\diamond_+)$. 
By linearity and bilinearity of $\dg$ respectively $[\cdot,\cdot]$,
\begin{align*}
\dg u + \tfrac12[u,u]
&=
\dg u' + \tfrac12[u',u']
+ 
\dg(\tau v)
+
[u',\tau v]
+
\tfrac12[\tau v,\tau v]
\intertext{
where we also use \eqref{eq:bracketas}.
Using \eqref{eq:dgmodule}, \eqref{eq:anchorbracket}, \eqref{eq:anchorlin}
we obtain}
\dg u + \tfrac12[u,u]
&=
\dg u' + \tfrac12[u',u']
+ 
(d\tau+\anchorg(u')(\tau))v\\
&\qquad
+
\tau\big( \dg v
+
[u',v]
+
 \tfrac12\anchorg(v)(\tau v)
+
\tfrac12[ v,\tau v]\big)
\end{align*}
Thus, using the fact that 
the module multiplication \eqref{eq:gmod} is $C^\infty$-bilinear,
\begin{align*}
\omega \big(\dg u + \tfrac12[u,u]\big)|_{\tau=0}
&=
\omega \big(\dg u' + \tfrac12[u',u']
+
\omega' v\big)|_{\tau=0}
\end{align*}
where we abbreviate $\omega = d\tau + \anchorg(u)(\tau)$ and
$\omega' = d\tau + \anchorg(u')(\tau)$.
Using
$\omega = \omega' + \tau \anchorg(v)(\tau)$
by \eqref{eq:anchorlin},
we can replace $\omega$ on the right hand side by $\omega'$.
With $\omega'\wedge\omega'=0$ (and associativity of 
the module multiplication), the claim follows.

\proofheader{Proof of second item.}
We have $f = \tau g$ where $g\in C^\infty(\diamond_+)$ is 
nowhere zero on $\diamonddata$. 
Then by the Leibniz rule for the de Rham differential 
and \eqref{eq:anchorleib},
\[ 
df + \anchorg(u)(f)
=
g (d\tau + \anchorg(u)(\tau))
+
\tau (dg + \anchorg(u)(g))
\] 
Thus the claim follows, using 
$g|_{\tau=0} = (f/\tau)|_{\tau=0}$.\qed
\end{proof}
%-------------------------------%
We will refer to
\begin{equation}\label{eq:1constraints}
\Pconstraints(\udata) = 0
\end{equation}
as the constraint equations. 
This is a first order nonlinear partial differential equation along $\diamonddata$,
and the nonlinearity is at most cubic.
Clearly \eqref{eq:1constraints} is necessary for local solvability of
the initial value problem
\begin{equation}\label{eq:MCconstr}
\dg u + \tfrac12[u,u]=0 \qquad u|_{\tau=0}=\udata
\end{equation}
In Lemma \ref{lem:Pnecsuf} below we show that it is also sufficient
for local solvability.

The remainder of this section is not logically 
used to prove Theorem \ref{thm:main}.
We thus allow ourselves to use gauges that will 
only be introduced later on.

So let $\gxG^k(\cyl)\subset\gx^k(\cyl)$ be the gauge submodules,
and $\gBil^k:\gxG^k(\cyl)\times \gx^{k+1}(\cyl)\to C^\infty(\cyl)$ the 
$C^\infty$-bilinear forms, introduced in Definition \ref{def:gauge_bulk}.
Each $\gxG^k(\cyl)$ is the module of sections of a trivial vector bundle 
$\gxG^k$ on $\cyl$.

%-------------------------------%
\begin{remark}\label{rem:Constraints36}
The constraint equations \eqref{eq:1constraints} are 46 equations for
$\udata$, because $\gxdata^3(\diamonddata)$ is a module of rank 46.
However it turns out that already 36 equations are necessary and sufficient.
To see this, 
let $\gxdataG^k$ be the bundle on $\{0\}\times S^3$
that is given by the pullback of $\gxG^k$ along $\{0\}\times S^3\hookrightarrow\cyl$, this has rank $36,10$ for $k=2,3$ respectively.
Let $\gxdataG^k(\diamonddata)$ be the sections over $\diamonddata$.
Define the composition
\[ 
\PconstraintsG:\;\;
\gxdata^1(\diamonddata) 
\;\xrightarrow{\;\Pconstraints\;}\;
\gxdata^3(\diamonddata) 
\;\twoheadrightarrow\;
\frac{\gxdata^3(\diamonddata)}{\gxdataG^3(\diamonddata)}
\]
One has:
\begin{itemize} 
\item 
The codomain $ \gxdata^3(\diamonddata)/\gxdataG^3(\diamonddata)$
is a free module of rank 36. 
Proof: The map $\gxdataG^2(\diamonddata)\to \gxdata^3(\diamonddata)/\gxdataG^3(\diamonddata)$,
$\underline{v}\mapsto (d\tau) \underline{v}$ is an 
isomorphism by Lemma \ref{lem:gauge_mainprop_bulk}.
\item 
For all $\udata\in\gxdata^1(\diamonddata)$ such that
$d\tau + \anchorg(\udata)(\tau)$ is timelike with respect to $\gcyl$
at every point on $\diamonddata$, one has:
\[ 
\PconstraintsG(\udata)=0
\;\;\Leftrightarrow\;\;
\Pconstraints(\udata)=0
\]
Proof: 
Abbreviate $\omega=d\tau+\anchorg(\udata)(\tau)$.
If $\PconstraintsG(\udata)=0$ then $\Pconstraints(\udata)\in\gxdataG^3(\diamonddata)$,
furthermore $\omega\Pconstraints(\udata)=0$ using $\omega\wedge\omega=0$,
and thus $\Pconstraints(\udata)=0$ by Lemma \ref{lem:gauge_mainprop_bulk} using the
fact that $\omega$ is timelike. The converse direction is immediate.
\end{itemize}
See Remark \ref{rem:Plinearized} for the linearization of $\Pconstraints_{G}$
about the zero solution $\udata=0$.
\end{remark}
%---------------------------%
\begin{lemma}\label{lem:Pnecsuf}
Let $p\in\diamonddata$, let $\Udata_p\subset\diamonddata$
be an open neighborhood of $p$, and 
let $\udata\in\gxdata^1(\Udata_p)$ be such that
\begin{equation} \label{eq:udatatime}
\text{the one-form $d\tau + \anchorg(\udata)(\tau)$ is timelike 
with respect to $\gcyl$ at $p$}
\end{equation}
Then the following are equivalent:
\begin{enumerate}[label=(\roman*)]
\item \label{item:Pu=0}
$\Pconstraints(\udata)=0$ on an open neighborhood of $p$.
\item \label{item:existssol}
There exists an open neighborhood $V_p\subset\diamond_+$ of $p$
and $u\in \gx^1(V_p)$ such that
\begin{align*}
\dg u + \tfrac12[u,u]=0 
\qquad
u|_{\tau=0}=\udata
\end{align*}
\item \label{item:existsgaugedsol}
If  $U_p\subset\diamond_+$ is an open neighborhood of $p$ 
and $v\in \gx^1(U_p)$ with $v|_{\tau=0}=\udata$ then there exists 
an open neighborhood $V_p\subset U_p$ of $p$ and $c \in \gxG^1(V_p)$ with 
\begin{align*}
\dg(v+c) + \tfrac12[v+c,v+c]&=0 \;\;\text{(on $V_p$)}
&
c|_{\tau=0} &= 0
\end{align*}
\end{enumerate}
\end{lemma}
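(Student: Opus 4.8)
The plan is to prove the chain of implications \ref{item:existsgaugedsol} $\Rightarrow$ \ref{item:existssol} $\Rightarrow$ \ref{item:Pu=0} $\Rightarrow$ \ref{item:existsgaugedsol}. The first implication is trivial: given $c$ as in \ref{item:existsgaugedsol}, set $u=v+c$. The second implication \ref{item:existssol} $\Rightarrow$ \ref{item:Pu=0} follows directly from Definition \ref{def:Pconstraints}: since $\Pconstraints$ is independent of the choice of extension (Lemma \ref{lem:pext}), one may compute $\Pconstraints(\udata)$ using the given solution $u$, and then $\dg u+\tfrac12[u,u]=0$ forces $\Pconstraints(\udata)=0$ on a neighborhood of $p$ where $u$ is defined. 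So the substance of the lemma is the implication \ref{item:Pu=0} $\Rightarrow$ \ref{item:existsgaugedsol}: the constraints are \emph{sufficient} for local solvability in the gauged formulation.

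For that implication, I would argue as follows. Start from any local extension $v\in\gx^1(U_p)$ with $v|_{\tau=0}=\udata$. Shrinking $U_p$, the timelikeness assumption \eqref{eq:udatatime} propagates, so $d\tau+\anchorg(v)(\tau)$ is timelike with respect to $\gcyl$ on $U_p$; combined with the nondegeneracy of the associated frame (again after shrinking, using continuity from $\tau=0$ where it holds by assumption on $\udata$ describing valid data), the hypersurface $\{\tau=0\}$ is spacelike for the metric built from $v+c$ for all small $c\in\gxG^1$. Now invoke the gauge-fixing machinery of Section \ref{sec:Gauge_cyl}: the gauge submodule $\gxG^1(\cyl)$ and the bilinear forms $\gBil^k$ are designed so that, for $c\in\gxG^1(V_p)$, the equation $\dg(v+c)+\tfrac12[v+c,v+c]=0$ splits into a necessary square quasilinear symmetric hyperbolic subsystem for $c$ (with $\{\tau=0\}$ spacelike, hence noncharacteristic) plus the remaining constraint components. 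Solve the symmetric hyperbolic subsystem with initial data $c|_{\tau=0}=0$ by standard local existence theory, obtaining $c\in\gxG^1(V_p)$ on a possibly smaller neighborhood $V_p\subset U_p$ of $p$. It remains to show the full equation $\dg(v+c)+\tfrac12[v+c,v+c]=0$ holds, i.e. that the constraint components also vanish.

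The key step, and the main obstacle, is \emph{constraint propagation}: showing that the constraint quantity $\cc:=\dg(v+c)+\tfrac12[v+c,v+c]$, which lies in $\gx^2(V_p)$ and whose gauged (square) part vanishes by construction, is in fact identically zero. The mechanism is the dgLa Bianchi identity: applying $\dg+[(v+c),\cdot\,]$ to $\cc$ and using $\dg\circ\dg=0$, the graded Jacobi identity \eqref{eq:bracketjacobi}, and the Leibniz rule \eqref{eq:dgbracket}, one finds $(\dg+[(v+c),\cdot\,])\cc=0$, since $[u,[u,u]]=0$ for the Maurer--Cartan element $u=v+c$. This says $\cc$ solves a linear, homogeneous, first-order PDE. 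One then checks — this is where the gauge construction is used a second time — that, modulo the already-vanishing gauged components, this linear system for $\cc$ is itself symmetric hyperbolic with $\{\tau=0\}$ spacelike. Finally, the initial value $\cc|_{\tau=0}$ is determined by $\udata$: its gauged part vanishes by the choice of $c|_{\tau=0}=0$ and the structure of the subsystem, and its transversal (constraint) part is, up to multiplication by the timelike covector $d\tau+\anchorg(v)(\tau)$, exactly $\Pconstraints(\udata)$, which vanishes by hypothesis \ref{item:Pu=0}; timelikeness lets one conclude $\cc|_{\tau=0}=0$ outright (as in the argument of Remark \ref{rem:Constraints36}). By uniqueness for the linear symmetric hyperbolic system, $\cc\equiv0$ on $V_p$, completing the proof. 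The delicate points to get right are the precise way the 46 equations decompose into the $36$ gauged (evolution) equations and the $10$ constraint equations, and verifying that both the nonlinear subsystem for $c$ and the linear subsystem for $\cc$ are genuinely symmetric hyperbolic with the common initial hypersurface noncharacteristic — both of which I would extract from Section \ref{sec:Gauge_cyl} rather than redo here.
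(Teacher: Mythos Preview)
Your proposal is correct and follows essentially the same approach as the paper: the same cycle of implications, with the substantive step \ref{item:Pu=0} $\Rightarrow$ \ref{item:existsgaugedsol} handled by solving the gauged symmetric hyperbolic subsystem for $c$, then propagating the constraint quantity $R=\dg(v+c)+\tfrac12[v+c,v+c]$ via the Bianchi-type identity $\dg R+[v+c,R]=0$ and the vanishing initial data $R|_{\tau=0}=0$ coming from $\Pconstraints(\udata)=0$ together with timelikeness. The paper's proof uses precisely this argument, citing the gauge properties \ref{item:gaugekernelNEW_bulk} and Lemma \ref{lem:gauge_mainprop_bulk} for the structural facts you defer to Section \ref{sec:Gauge_cyl}.
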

%---------------------------%
\begin{proof} 
We show:
\begin{itemize}
\item 
\ref{item:existsgaugedsol} $\Rightarrow$ \ref{item:existssol}: 
Use $V_p$ from \ref{item:existsgaugedsol} and set $u=v+c$.
\item 
\ref{item:existssol} $\Rightarrow$ \ref{item:Pu=0}:
On $V_p\cap\diamonddata$ one has
$$\Pconstraints(\udata)
= \Pconstraints(u|_{\tau=0})
=
(\dg\tau + \anchorg(u)(\tau)) (\dg u + \tfrac12[u,u])|_{\tau=0} 
= 0$$
\item 
\ref{item:Pu=0} $\Rightarrow$ \ref{item:existsgaugedsol}:
Consider the necessary subsystem
\begin{align}
\BilG^1\left(\,\cdot\,,d(v+c) + \tfrac12[v+c,v+c]\right) =0 
\qquad
c|_{\tau=0}=0 \label{eq:SHSforc}
\end{align}
This is a quasilinear symmetric hyperbolic system 
(Lemma \ref{lem:translationofeqcyl}, \ref{lem:aAminkcpt}, \ref{lem:deltacausalcpt},
see also \cite[Lemma 42]{Thesis}).
The fact that $\tau=0$ is an admissible initial hypersurface
(i.e.~that the positivity condition for symmetric hyperbolic systems is satisfied)
follows from \eqref{eq:udatatime} (see \eqref{eq:posiffofut_bulk}).
Thus by well-posedness of symmetric
hyperbolic systems \cite[Section 16.1-16.2]{Taylor3}
there exists a solution $c$ on an open neighborhood $V_p\subset U_p$ of $p$.
Define 
\[ 
R = \dg(v+c) + \tfrac12[v+c,v+c] 
\]
This satisfies:
\begin{align*}
R\in\gxG^2(V_p)
&&
\dg R + [v+c,R] = 0
&&
R|_{\tau=0}=0
\end{align*}
The first holds by \eqref{eq:SHSforc} 
and \ref{item:gaugekernelNEW_bulk};
the second holds by \eqref{eq:dgdifferential}, \eqref{eq:dgbracket}, \eqref{eq:bracketas}, \eqref{eq:bracketjacobi};
for the third note that by \ref{item:Pu=0} we have
\[ 
0 = \Pconstraints(\udata) = (d\tau + \anchorg(\udata)(\tau))R |_{\tau=0}
\]
(make $V_p$ smaller if necessary), 
which implies $R|_{\tau=0}=0$ by $R\in\gxG^2(V_p)$, \eqref{eq:udatatime}, 
Lemma \ref{lem:gauge_mainprop_bulk}.
By Lemma \ref{lem:translationofeqcyl}, \ref{lem:aAminkcpt}, \ref{lem:deltacausalcpt}
the linear homogeneous equation 
\[ 
\BilG^2(\,\cdot\,,\dg R + [v+c,R]) = 0
\]
is symmetric hyperbolic, thus $R=0$ (make $V_p$ smaller if necessary).\qed
\end{itemize}
\end{proof}
%----------------------------%

\section{Abstract semiglobal existence theorem}
	\label{sec:Abstract}

The main task in proving Theorem \ref{thm:main} is to 
control the solution near spacelike infinity $\spaceinf$.
For this it is useful to introduce homogeneous coordinates
$$
\smash{\log(\s),\;\tfrac{y^0}{|\vec{y}|},\;\tfrac{\vec{y}}{|\vec{y}|}}
$$
where $\s=2y^0+|\vec{y}|$ was introduced in \eqref{eq:sdefintro},
which identify (see Figure \ref{fig:HomogeneousCoordinates})
\begin{equation}\label{eq:IIS2}
\Dspop_{\le1}\;\simeq\;(-\infty,0] \times [0,1) \times S^2
\end{equation}
In these coordinates, \eqref{eq:scall} acts by 
translation in the first factor $(-\infty,0]$.

In this section we analyze a class of 
inhomogeneous (i.e.~with source term)
quasilinear symmetric hyperbolic systems on \eqref{eq:IIS2},
with trivial initial data along $(-\infty,0] \times \{0\} \times S^2$,
and where the coefficients satisfy uniformity assumptions 
in the factor $(-\infty,0]$.
We prove existence and uniqueness, and 
relate the asymptotics of the source term
towards $-\infty$ (i.e.~towards $\spaceinf$) to the regularity of the solution 
along $(-\infty,0] \times \{1\} \times S^2$ (i.e.~along $\fnullinf$),
see Theorem \ref{thm:nonlinEE} and \ref{thm:AbstractUniqueness}.

In Section \ref{sec:SpaceinfConstruction},
the result will be applied to the equation \eqref{eq:MCforc}.
For flexibility of the results, 
we will work with a general closed manifold instead of $S^2$.

\begin{figure}
\begin{subfigure}{.5\linewidth}
\centering
\begin{tikzpicture}[inner sep=0pt]
\node (i0) at (0,0) {};
\node (idown) at (4.4,0) {};
\node (iup) at (2.,2.) {};

\node (s1up) at (1.4,1.4) {} ;
\node (s1down) at (4.2,0) {} ;

\node (s2up) at (0.7,0.7) {} ;
\node (s2down) at (2.1,0) {} ;

\node (s3up) at (0.35,0.35) {} ;
\node (s3down) at (1.05,0) {} ;

\node (s4up) at (0.175,0.175) {} ;
\node (s4down) at (0.525,0) {} ;

%{3., 0.6}, {2.33333, 0.933333}, {1.90909, 1.14545}, {1.61538,1.29231}
\node (t0) at ({3., 0.6}) {} ;
\node (t1) at (2.33333, 0.933333) {} ;
\node (t2) at (1.90909, 1.14545) {} ;
\node (t3) at (1.61538,1.29231) {} ;

\node(chartup) at (0,2.3) {};
\node(chartright) at (4.6,0) {};

\fill[gray!50] (i0.center)--(s1up.center)--(s1down.center)--(i0.center);

\path[draw,->] (i0.center)--(chartup.center) node [anchor=south east] {$y^0$};
\path[draw,->] (i0.center)--(chartright.center) node [anchor=north west] {$|\vec{y}|$};

\path[draw,line width=1 pt,blue] (s1up.center)--(s1down.center);
\path[draw,line width=1 pt,blue] (s2up.center)--(s2down.center);
\path[draw,line width=1 pt,blue] (s3up.center)--(s3down.center);
\path[draw,line width=1 pt,blue] (s4up.center)--(s4down.center);

\path[draw,line width=1 pt] (0,0)--(idown.center);
\draw[line width=1.pt,ufogreen] (0,0)--(iup.center);
\node[anchor=south east,yshift=1mm] at (1,1) {$\fnullinf$};

\path[draw,line width=1 pt,red,densely dashed] (i0.center)--(t0.center);
\path[draw,line width=1 pt,red,densely dashed] (i0.center)--(t1.center);
\path[draw,line width=1 pt,red,densely dashed] (i0.center)--(t2.center);
\path[draw,line width=1 pt,red,densely dashed] (i0.center)--(t3.center);

\draw[color=black, fill=white] (i0.center) circle (.05) node[anchor=north east] (i0) {$\spaceinf$};

\end{tikzpicture}
\end{subfigure}
%-----------------------
\begin{subfigure}{.5\linewidth}
\centering
\begin{tikzpicture}[inner sep=0pt,scale=0.85]
\node (ldown) at (0,0) {};
\node (0down) at (4.5,0) {};
\node (rdown) at (5,0) {};
\node (lup) at (0,2) {};
\node (0up) at (4.5,2) {};
\node (rup) at (5,2) {};

\node (s1up) at (4.5,2) {} ;
\node (s1down) at (4.5,0) {} ;

\node (s2up) at (3.3,2) {} ;
\node (s2down) at (3.3,0) {} ;

\node (s3up) at (2.1,2) {} ;
\node (s3down) at (2.1,0) {} ;

\node (s4up) at (0.9,2) {} ;
\node (s4down) at (0.9,0) {} ;

\node (t0right) at (4.5, 0.4) {} ;
\node (t0left) at (0, 0.4) {} ;

\node (t1right) at (4.5, 0.8) {} ;
\node (t1left) at (0, 0.8) {} ;

\node (t2right) at (4.5, 1.2) {} ;
\node (t2left) at (0, 1.2) {} ;

\node (t3right) at (4.5, 1.6) {} ;
\node (t3left) at (0, 1.6) {} ;

\path[draw,line width=1 pt] (ldown.center)--(rdown.center);
\path[draw,line width=1 pt,fill=gray!50] (ldown.center)--(0down.center)--(0up.center)--(lup.center);

\node (chartcen) at (0,0) {};
\node (chart0down) at (4.5,0) {};
\node (chart0up) at (4.5,2.5) {};
\node (chartup) at (0,2.3) {};
\node (chartright) at (5.3,0) {};

\path[draw,->] (4.5,-0.1)--(chart0up.center) node [anchor=south east] {$\frac{y^0}{|\vec{y}|}$};
\path[draw,->] (chartcen.center)--(chartright.center) node [anchor=north west] {$\log\s$};

\path[draw,line width=1 pt,blue] (s1up.center)--(s1down.center);
\path[draw,line width=1 pt,blue] (s2up.center)--(s2down.center);
\path[draw,line width=1 pt,blue] (s3up.center)--(s3down.center);
\path[draw,line width=1 pt,blue] (s4up.center)--(s4down.center);

\path[draw,line width=1 pt,ufogreen] (lup.center)--(rup.center);

\path[draw,line width=1 pt,red,densely dashed] (t0right.center)--(t0left.center);
\path[draw,line width=1 pt,red,densely dashed] (t1right.center)--(t1left.center);
\path[draw,line width=1 pt,red,densely dashed] (t2right.center)--(t2left.center);
\path[draw,line width=1 pt,red,densely dashed] (t3right.center)--(t3left.center);
\end{tikzpicture}
\end{subfigure}
\captionsetup{width=115mm}
\caption{The gray shaded domain depicts $\Dspop_{\le1}$, 
with spherical directions suppressed.
On the left it is shown using $y^0$ and $|\vec y|$ as coordinates,
where spacelike infinity $\spaceinf$ is the origin,
on the right it is shown using  $\log\s$ and $y^0/|\vec y|$ as coordinates,
where $\spaceinf$ corresponds to $\log\s=-\infty$.
The four solid lines (blue) are the level sets $\s=0.125,0.25,0.5,1$,
the four dashed lines (red) are the level sets $y^0/|\vec y|=0.2,0.4,0.6,0.8$.}
\label{fig:HomogeneousCoordinates}
\end{figure}
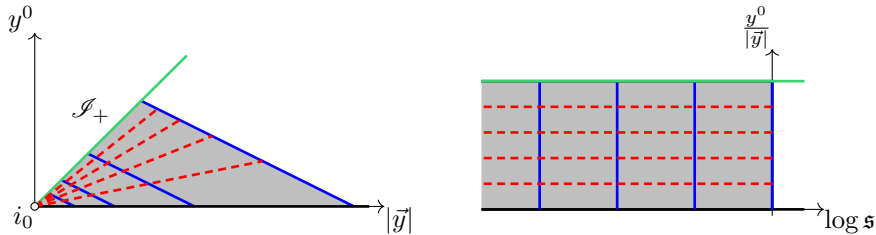

\subsection{Abstract geometric setup}
\label{sec:AbstractGeom}
We introduce the geometric background that 
will be used throughout Section \ref{sec:Abstract}.
Let $\Mcpt$ be a smooth orientable closed manifold
of dimension $\MdimNEW-1\ge1$. 
Define
\begin{equation}\label{eq:homMfddef}
\homMfd =  (-\infty,0]\times [0,1\new{)}\times \Mcpt
\end{equation}
This has dimension $\MdimNEW+1$.
We assume that $\homMfd$ is parallelizable.
We use standard coordinates $(\zzeta,\ttcoord)$ on the first two factors
$(-\infty,0]\times [0,1]$.
For every $\zz\le0$ let 
\begin{align}\label{eq:Mz}
\homMfd_{\zz} \;=\; \zzeta^{-1}(\{\zz\})
\qquad\qquad
\homMfd_{\le\zz} \;=\; \zzeta^{-1}((-\infty,\zz])
\end{align}
be the portion of $\homMfd$ where $\zzeta=\zz$ respectively $\zzeta\le\zz$.
Fix:
\begin{itemize}
\item 
A frame of vector fields 
\begin{equation}\label{eq:abstractX}
X_{0},\dots,X_{\MdimNEW}\in\Gamma(T\bar{\homMfd})
\end{equation}
that are smooth and linearly independent on the closure $\bar{\homMfd}$,
and satisfy:
\begin{subequations}
\label{eq:Xproperties}
\begin{align}
&\text{$X_0,\dots,X_{\MdimNEW}$ are translation invariant in $\zzeta$}
	\label{eq:translinv}\\
&\text{$X_{1},\dots,X_{\MdimNEW}$ are tangential to $\ttcoord^{-1}(\{0\})$}
	\label{eq:X1..Tang}\\
&\text{$X_{1},\dots,X_{\MdimNEW}$ are closed under commutators}
	\label{eq:X1...Comm}
\end{align}
\end{subequations}
where \eqref{eq:X1...Comm} means 
$[X_i,X_j]\in\SPAN_{C^\infty(\bar\homMfd)}\{X_{1},\dots,X_{\MdimNEW}\}$
for $i,j=1\dots\MdimNEW$.
\item 
A density\footnote{
We use the convention that on an $n$-dimensional manifold,
one integrates $n$-densities.
See Remark \ref{rem:densityconv} for the definition when $n=4$.} 
\begin{align}\label{eq:AbstractmuM}
\muM \;\in\; \dens{\MdimNEW+1}(\bar{\homMfd})
\end{align}
that is smooth on the closure $\bar{\homMfd}$
and translation invariant in $\zzeta$. 
Let $\muMform$ be a volume form whose associated density is $\muM$,
unique up to sign. That is, 
$\muMform\in\Omega^{\MdimNEW+1}(\bar{\homMfd})$
and $|\muMform|=\muM$.
On each $\bar{\homMfd}_{\zz}$ we define the density 
\begin{align}\label{eq:AbstractmuM'}
\muMz = |\intermult_{\p_{\zzeta}}\muMform| \in \dens{\MdimNEW}(\bar{\homMfd}_{\zz})
\end{align}
where $\intermult$ denotes the interior multiplication.
\end{itemize}
See Convention \ref{conv:ztXmu_NEW} for a concrete, admissible, choice
in the case of \eqref{eq:IIS2}.

%-----------------------------------%
In Section \ref{sec:Abstract}, 
repeated indices $i,j$ will be summed implicitly over $0,\dots,\MdimNEW$,
unless indicated otherwise.

\subsection{Norms}
\label{sec:AbstractNorms}
We introduce the norms that will be used in Section \ref{sec:Abstract}
(some of them are actually seminorms, but we refer to them
as norms for simplicity).

For $\kk\in\Z_{\ge0}$ define 
$$
\indset_{\kk} = \{0,1,\dots,\MdimNEW \}^{\kk}
\qquad
\indset_{\le k} = \cup_{k'\le k}\indset_{k'}$$
Given an index $\indI=(i_1,\dots,i_k)\in\indset_k$ 
and a function $f$ we denote
\begin{align}\label{eq:indexnotation}
X^{\indI} &= X_{i_1}\cdots X_{i_k} & 
|\indI| &= k & 
f_{\indI} = X^{\indI} f = X_{i_1}\cdots X_{i_k}f
\end{align}
One has $\indset_{0}=\{()\}$ where $()$
is the empty tuple, for which $f_{()}=X^{()}f=f$.
We use the convention that if $k<0$ then $\indset_{k}=\{\}$ and
$\indset_{\le k}=\{\}$.

We denote by $\countzero(\indI)$ the number of $i_{1},\dots,i_{k}$
that are equal to $0$ (for example $\countzero((1,3,0,0,1)) = 2$).
For $k_0,k \in\Z_{\ge0}$ define
\begin{align}\label{eq:indsetdef}
\indset_{k_0,k} 
&= 
\{ \indI \in \indset_{k_0+k} \mid \countzero(\indI)=k_0 \}
\end{align}
and $\indset_{k_0,\le k} =\cup_{k'\le k} \indset_{k_0,k'}$ and
$\indset_{\le k_0,\le k} =\cup_{k_0'\le k_0} \indset_{k_0',\le k}$.

%-----------------------------------%
\begin{definition}\label{def:sHXDefinition}
For every $k\in\Z_{\ge0}$ 
and $\zz\le0$ and $f\in C^\infty(\homMfd_{\le\zz})$ define:
\begin{align}
\label{eq:tnorms}
\begin{aligned}
\|f\|_{\Ht^k(\homMfd_{\le\zz})}^2 
&= 
\textstyle\sum_{\indI\in\indset_{0,\le k}}\int_{\homMfd_{\le\zz}} |f_{\indI}|^2\,\muM\\
\|f\|_{\Ht^k(\homMfd_{\zz})}^2 
&= 
\textstyle\sum_{\indI\in\indset_{0,\le k}}\int_{\homMfd_{\zz}} |f_{\indI}|^2\,\muMz\\
\|f\|_{\Ct^k(\homMfd_{\le\zz})}
&=
\textstyle\sum_{\indI\in\indset_{0,\le k}} \sup_{p\in\homMfd_{\le\zz}}|f_{\indI}(p)|\\
\|f\|_{\Ct^k(\homMfd_{\zz})}
&=
\textstyle\sum_{\indI\in\indset_{0,\le k}} \sup_{p\in\homMfd_{\zz}}|f_{\indI}(p)|
\end{aligned}
\end{align}
and define
\begin{align}
\label{eq:Xnorms}
\begin{aligned}
\|f\|_{\HX^k(\homMfd_{\le\zz})}^2 
&= 
\textstyle\sum_{\indI\in\indset_{\le k}}\int_{\homMfd_{\le\zz}} |f_{\indI}|^2\, \muM\\
\|f\|_{\sHX^k(\homMfd_{\zz})}^2 
&= 
\textstyle\sum_{\indI\in\indset_{\le k}}\int_{\homMfd_{\zz}} |f_{\indI}|^2\, \muMz\\
\|f\|_{\CX^k(\homMfd_{\le\zz})}
&=
\textstyle\sum_{\indI\in\indset_{\le k}} \sup_{p\in\homMfd_{\le\zz}}|f_{\indI}(p)|\\
\|f\|_{\sCX^k(\homMfd_{\zz})}
&=
\textstyle\sum_{\indI\in\indset_{\le k}} \sup_{p\in\homMfd_{\zz}}|f_{\indI}(p)|
\end{aligned}
\end{align}
using the notation \eqref{eq:indexnotation}.
For the norms on $\homMfd_{\zz}$ we make the same
definition when $f$ is only defined near $\homMfd_{\zz}$.
We make analogous definitions for vector- and matrix-valued functions,
where we apply the norms componentwise 
and then take the $\ell^2$-sum of the components;
and for matrix differential
operators of the form $a^i X_i$,
where we apply the norms to the matrices $a^i$
and then sum over $i$.
For $k\in\Z_{<0}$ we declare \eqref{eq:tnorms}, \eqref{eq:Xnorms} to be zero.
\end{definition}
%--------------------------------%

The norms in \eqref{eq:tnorms}, 
decorated with a $\tang$, 
measure differentiability with respect to
the vector fields $X_1,\dots,X_{\MdimNEW}$
(but not $X_0$), 
which are tangential to $\ttcoord=0$, see \eqref{eq:X1..Tang}.
The norms in \eqref{eq:Xnorms}
measure differentiability with respect
to all vector fields $X_0,\dots,X_{\MdimNEW}$,
which are not all tangential to $\homMfd_{\zz}$.
In particular, the slashed 
norms over the level sets $\homMfd_{\zz}$
are not determined by the restriction of $f$ to $\homMfd_{\zz}$.
%--------------------------------%
\begin{remark}\label{rem:k=0norms}
For $k=0$, the norms in \eqref{eq:tnorms} and the norms
in \eqref{eq:Xnorms} are equal.
Further they are then equal to the standard $L^2$-norms
with respect to the given measures,
respectively the standard $C^0$-norms.
\end{remark}
%--------------------------------%
\begin{remark}\label{rem:lesssim}
We use the standard $\lesssim$ notation from \cite[p.~xiv]{Tao}:
If $X$ and $Y$ are two quantities then the notation $X\lesssim Y$
means that there exists a constant $C>0$ such that $X\le CY$.
If, in addition, $a_1,\dots,a_k$ are parameters then 
$X\lesssim_{a_1,\dots,a_k}Y$ means that there exists a constant
$C(a_1,\dots,a_k)>0$, that depends only on the parameters $a_1,\dots,a_k$,
such that $X\le C(a_1,\dots,a_k)Y$.
\end{remark}
%--------------------------------%
Define the tuple
\begin{equation}\label{eq:CM}
\CM = \left( \Mcpt, X_0,\dots,X_{\MdimNEW}, \muM \right)
\end{equation}
A standard Sobolev estimate, using the fact that $X_0,\dots,X_{\MdimNEW}$ and $\muM$
are translation invariant in $\zzeta$, yields that for all $k\in\Z_{\ge0}$,
and $d\in\Z$ with
$d>\frac{\MdimNEW}{2}$:
\begin{align}
\|f\|_{\sCX^k(\homMfd_{\zz})}
&\lesssim_{\CM,k,d}
\|f\|_{\sHX^{k+d}(\homMfd_{\zz})}
\label{eq:sobolevMz}
\end{align}
where the constant is in particular independent of $\zz$.
%--------------------------------%
\begin{lemma}\label{lem:LinfL2toL1L2}
Let $k\in\Z_{\ge0}$ and $\zz\le0$ and $f\in C^\infty(\bar\homMfd)$.
Then 
\[ 
\|f\|_{\sHX^{k}(\homMfd_{\zz})}
\lesssim_{\CM,k}
\tint_{\zz-1}^{\zz}
\|f\|_{\sHX^{k+1}(\homMfd_{\zz'})}\,d\zz'
\]
If $\zz\le-1$ then the same inequality holds
with $\tint_{\zz-1}^{\zz}$ replaced by $\tint_{\zz}^{\zz+1}$.
\end{lemma}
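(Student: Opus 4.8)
The statement is a "fundamental theorem of calculus in the $\zzeta$-variable" type estimate: an $\sHX^k$-norm on a single slice is controlled by the $L^1$ (in $\zzeta$) of the $\sHX^{k+1}$-norm over an adjacent unit interval of slices. The key structural inputs are that $X_0,\dots,X_{\MdimNEW}$ and $\muM$ are translation invariant in $\zzeta$ (so the geometry is uniform in $\zz$), that $\muMform$ restricts to $\muMz$ on each slice via interior multiplication with $\p_{\zzeta}$, and — crucially — that $\p_{\zzeta}$ lies in the span of $X_0,\dots,X_{\MdimNEW}$ (indeed $X_0$ is transverse to the slices while $X_1,\dots,X_{\MdimNEW}$ are tangential, so $\p_\zzeta = \sum_j c^j X_j$ with smooth, $\zzeta$-independent coefficients $c^j$ that are bounded on $\bar\homMfd$).

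\textbf{First step.} For a fixed multi-index $\indI\in\indset_{\le k}$, write $g = f_{\indI} = X^{\indI} f$. For two values $\zz' \le \zz \le 0$ with $\zz - \zz' \le 1$, I would use the identity
\[
g(\zz,\cdot)^2 = g(\zz',\cdot)^2 + \int_{\zz'}^{\zz} \p_{\zzeta}\big(g^2\big)\,d\zzeta
= g(\zz',\cdot)^2 + 2\int_{\zz'}^{\zz} g\,(\p_{\zzeta} g)\,d\zzeta,
\]
valid pointwise on $\Mcpt$ (using that $X^\indI$ commutes with the slicing appropriately, or simply applying FTC to the smooth function $\zzeta\mapsto g(\zzeta,p)^2$). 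Integrating against $\muMz$ over $\homMfd_{\zz}$ — which is the same as integrating against the $\zzeta$-independent restriction of $\muM$ — and then averaging the resulting inequality over $\zz'\in[\zz-1,\zz]$ gives
\[
\|g\|_{L^2(\homMfd_{\zz})}^2 \;\lesssim\; \int_{\zz-1}^{\zz}\|g\|_{L^2(\homMfd_{\zz'})}^2\,d\zz'
\;+\; \int_{\zz-1}^{\zz}\int_{\homMfd_{\zz'}} |g|\,|\p_{\zzeta} g|\,\muM\,d\zz'.
\]
By Cauchy–Schwarz the cross term is $\lesssim \big(\int_{\zz-1}^{\zz}\|g\|_{L^2(\homMfd_{\zz'})}^2 d\zz'\big)^{1/2}\big(\int_{\zz-1}^{\zz}\|\p_\zzeta g\|_{L^2(\homMfd_{\zz'})}^2 d\zz'\big)^{1/2}$.

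\textbf{Second step.} Now expand $\p_{\zzeta} g = \sum_j c^j X_j g = \sum_j c^j X_j X^\indI f$, so $|\p_\zzeta g| \lesssim_{\CM} \sum_{|\indJ|\le k+1} |f_\indJ|$ pointwise (absorbing the bounded coefficients $c^j$ and noting each $X_j X^\indI$ is a composition of length $\le k+1$). Likewise $|g| \lesssim \sum_{|\indJ|\le k}|f_\indJ|$. Summing over all $\indI\in\indset_{\le k}$ and using these pointwise bounds, both the "zeroth-order" term and the cross term are dominated by $\big(\int_{\zz-1}^{\zz}\|f\|_{\sHX^{k+1}(\homMfd_{\zz'})}^2 d\zz'\big)^{1/2}$ times itself or times a constant. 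This yields
\[
\|f\|_{\sHX^k(\homMfd_{\zz})}^2 \;\lesssim_{\CM,k}\; \int_{\zz-1}^{\zz}\|f\|_{\sHX^{k+1}(\homMfd_{\zz'})}^2\,d\zz'.
\]
Finally I would convert this $L^2$-in-$\zzeta$ bound to the claimed $L^1$-in-$\zzeta$ bound: since the interval has length $1$, Cauchy–Schwarz gives $\int_{\zz-1}^{\zz}\|f\|_{\sHX^{k+1}}^2 d\zz' \le \big(\sup_{\zz'\in[\zz-1,\zz]}\|f\|_{\sHX^{k+1}(\homMfd_{\zz'})}\big)\int_{\zz-1}^{\zz}\|f\|_{\sHX^{k+1}(\homMfd_{\zz'})} d\zz'$; but to get the clean statement one instead argues that a function on $[\zz-1,\zz]$ satisfying such a differential inequality is bounded by its $L^1$-norm — concretely, pick $\zz'_* \in [\zz-1,\zz]$ with $\|f\|_{\sHX^{k+1}(\homMfd_{\zz'_*})}^2 \le \int_{\zz-1}^{\zz}\|f\|_{\sHX^{k+1}(\homMfd_{\zz'})}^2 d\zz'$ (mean value), then run the FTC identity above from $\zz'_*$ to $\zz$ once more, now at level $k+1$ for the $X_0$-derivative handling, to close. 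The cleanest route: apply the already-established $L^2$-type inequality but track that the right side, by the same FTC/Cauchy–Schwarz applied to $\|f\|^2_{\sHX^{k+1}}$ as a function of $\zzeta$, is itself $\lesssim (\int \|f\|_{\sHX^{k+1}} d\zz')^2$ — giving the stated inequality after taking square roots. For the case $\zz\le-1$, everything is symmetric under $\zzeta\mapsto -\zzeta$ translation, so the interval $[\zz,\zz+1]$ works identically (it still lies in $\homMfd$).

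\textbf{Main obstacle.} The only genuine subtlety is the passage from the $L^2$-in-$\zzeta$ estimate to the $L^1$-in-$\zzeta$ estimate as stated: a bare Cauchy–Schwarz loses, since $\|L^1\| \le \|L^2\|$ on a unit interval but not conversely. The resolution is that we are not estimating an arbitrary function of $\zzeta$ but one satisfying a first-order differential inequality ($\p_\zzeta$ of the squared norm is controlled by the norm itself at the next level), so a Grönwall/FTC argument upgrades the $L^2$ control to $L^1$ control with the same right-hand side. Everything else — commuting $X_j$ past $X^\indI$ up to lower-order terms, boundedness of the frame coefficients on $\bar\homMfd$, and $\zzeta$-translation invariance of $\muM$ — is routine bookkeeping with constants depending only on $\CM$ and $k$.
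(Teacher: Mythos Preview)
Your overall strategy---fundamental theorem of calculus in the $\zzeta$-direction plus the fact that $\p_\zzeta$ is a bounded combination of the $X_j$---is correct, but your handling of the ``main obstacle'' (passing from an $L^2$-in-$\zz'$ bound to the claimed $L^1$-in-$\zz'$ bound) does not work as written. Both of your proposed fixes operate on the \emph{squared} norm: the mean-value choice gives $\|f\|_{\sHX^{k+1}(\homMfd_{\zz'_*})}^2 \le \int \|f\|_{\sHX^{k+1}}^2\,d\zz'$, which is still an $L^2$-in-$\zz'$ quantity, and ``running FTC once more at level $k+1$'' would introduce $\|f\|_{\sHX^{k+2}}$ and lose a derivative. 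Likewise, the claim $\int \|f\|_{\sHX^{k+1}}^2 \lesssim (\int \|f\|_{\sHX^{k+1}})^2$ is false for general functions of $\zz'$ and is not implied by the differential inequality you have. The fix is to work with the norm rather than its square: from $\frac{d}{d\zz'}\|g\|_{L^2(\homMfd_{\zz'})} \le \|\p_\zzeta g\|_{L^2(\homMfd_{\zz'})}$ one integrates from a point $\zz'_*$ where $\|g\|_{L^2}$ attains its minimum (hence is at most its average $\int \|g\|_{L^2}\,d\zz'$), giving $\|g\|_{L^2(\homMfd_\zz)} \le \int \|g\|_{L^2}\,d\zz' + \int \|\p_\zzeta g\|_{L^2}\,d\zz'$, which is the desired $L^1$-in-$\zz'$ bound.

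The paper avoids this detour entirely by applying FTC to $f$ itself (not $f^2$), with a cutoff to kill the boundary term: writing $f(\zz,\cdot) = \int_{\zz-1}^{\zz}\frac{d}{d\zz'}(\phi(\zz'-\zz)f(\zz',\cdot))\,d\zz'$ for a smooth $\phi$ supported in $(-1,0]$ with $\phi(0)=1$ gives the pointwise bound $|f(\zz,\cdot)| \lesssim \int_{\zz-1}^{\zz}(|f|+|\p_\zzeta f|)\,d\zz'$, and then Minkowski's inequality for integrals immediately yields $\|f\|_{L^2(\homMfd_\zz)} \lesssim \int_{\zz-1}^{\zz}\|f\|_{\sHX^1(\homMfd_{\zz'})}\,d\zz'$. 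This is both shorter and gives the $L^1$ bound directly, with the reduction to $k=0$ being automatic.
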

%--------------------------------%
\begin{proof}
It suffices to prove the inequality for $k=0$.
Let $\phi:\R\to[0,1]$ be a smooth cutoff function that satisfies
$\phi(x)=0$ for $x\le-\frac23$ and $\phi(x)=1$ when $x\ge-\frac13$. 
Then for all $(\zz,\tt,p)\in \homMfd$, where $p\in\Mcpt$,
\begin{align}
f(\zz,\tt,p) 
	&= \tint_{\zz-1}^{\zz} \tfrac{d}{d\zz'}\left(\phi(\zz'-\zz)f(\zz',\tt,p)\right) d\zz'\nonumber
\intertext{and hence}
|f(\zz,\tt,p)| 
	&\lesssim_{\phi} 
	\tint_{\zz-1}^{\zz} (|f(\zz',\tt,p)|+|(\p_{\zzeta}f)(\zz',\tt,p)|) d\zz'\label{eq:hl1}
\intertext{Now Minkowski's inequality for integrals yields}
%https://math.stackexchange.com/questions/33659
\|f\|_{L^{2}(\homMfd_{\zz})}
	&\lesssim_{\phi} 
	\tint_{\zz-1}^{\zz}
	(\|f\|_{L^{2}(\homMfd_{\zz'})}+\|\p_{\zzeta}f\|_{L^{2}(\homMfd_{\zz'})}) d\zz'
	\nonumber
\intertext{
where the $L^2$-norm is taken with respect to the measure $\muMz$,
and where we use the fact that $\muMz$ is translation invariant in $\zzeta$.
Thus}
\|f\|_{L^{2}(\homMfd_{\zz})}
	&\lesssim_{\phi,\CM} 
	\tint_{\zz-1}^{\zz}
	\|f\|_{\sHX^{1}(\homMfd_{\zz'})}\,d\zz'
	\nonumber
\end{align}
where we use the fact that 
\eqref{eq:abstractX} span the tangent space at every point,
and where the constant is independent of $\zz$ because 
\eqref{eq:abstractX} are translation invariant in $\zzeta$.

The last statement of the lemma is checked analogously.
\qed
\end{proof}
%-----------------------------------%

\subsection{Auxiliary functions associated to linear terms}
\label{sec:AuxQuang}
We introduce linear algebraic quantities 
(Definition \ref{def:ellCcomdef}, \ref{def:defCcom}), 
that will later be used to estimate the propagator
in the energy estimates,
and thus will be important when we relate the 
asymptotics of the source term towards $\zzeta\to-\infty$ 
to the regularity of the solution along the boundary $\ttcoord=1$.

Recall that repeated indices $i,j$ are summed implicitly over $0,\dots,\MdimNEW$, unless indicated otherwise.
For the remainder of this Section \ref{sec:AuxQuang} fix:
\begin{align}\label{eq:fixedauxfct}
\begin{aligned}
\nn&\in\Z_{\ge1}\\
%\new{\Cpos} & \new{\ge1} \\
\Ccomp&\in C^\infty(\bar{\homMfd},\R^{\MdimNEW})\\
\LMat&\in C^\infty(\bar{\homMfd},\End(\R^{\nn}))\\
\a^i &\in C^\infty(\bar{\homMfd},S^2\R^{\nn}) \qquad i=0\dots\MdimNEW\\
\a &= \a^i X_i
\end{aligned}
\end{align}
where $S^2\R^{\nn}$ is the space of symmetric $\nn\times\nn$ matrices.
We assume that:
\begin{align}\label{eq:ALassp}
&\text{$d\zzeta(\a) = \a^i d\zzeta(X_i)$ is a positive matrix at every point on $\bar{\homMfd}$.}
\end{align}
The definitions and statements below apply analogously
when \eqref{eq:fixedauxfct} are only defined on $\bar\homMfd_{\le\zzfix}$
for some $\zzfix<0$, this will be indicated.
We use $\bar\homMfd$ for simplicity.
%--------------------------------%

%--------------------------------%
For $\nn\times \nn$ matrices $S$ and $P$, 
with $P$ symmetric and positive definite, 
define
\begin{equation}\label{eq:sigdef}
\sig(S,P) \;=\; \sup_{\substack{v\in\R^{n}\\ v\neq0}} \frac{v^T S v}{v^T P v}
\end{equation}
The matrix $S$ is not assumed to be symmetric,
but $\sig(S,P)$ depends only on the symmetric part of $S$.
%--------------------------------%
\begin{definition}\label{def:ellCcomdef}
Define the function $\LdivestNEW{\LMat,\a}:(-\infty,0]\to\R$ by 
\[ 
\LdivestNEW{\LMat,\a}(\zz) 
= 
\sup_{p\in \homMfd_{\zz}}
\sig\Big(\LMat|_p+\tfrac12\div_{\muM}(\a)|_p\,,\,d\zzeta(\a)|_p\Big)
\]
where $|_p$ is evaluation at $p$. 
Here $\div_{\muM}$ is the divergence with respect to the density $\muM$,
which applied to $\a=\a^i X_i$
yields a symmetric matrix with components 
$(\div_{\muM}(\a))_{jk} = \div_{\muM}(\a^i_{jk}X_i)$.
The function $\LdivestNEW{\LMat,\a}$ is continuous in $\zz$.

We make an analogous definition when \eqref{eq:fixedauxfct} are only defined 
on $\bar\homMfd_{\le\zzfix}$ for some $\zzfix<0$, then 
$\LdivestNEW{\LMat,\a}$ is a function $(-\infty,\zzfix]\to\R$.
\end{definition}
%--------------------------------%
\begin{definition}\label{def:defCcom}
Define the functions
\[
\CcomTang{\a,\Ccomp}:(-\infty,0]\to\R
\qquad\qquad
\CcomTransv{\a,\Ccomp}:(-\infty,0]\to\R
\]
as follows. For each $\zz\le0$:
\begin{itemize}
\item 
$\CcomTang{\a,\Ccomp}(\zz)$ is the smallest real number such that
for all $w\in C^\infty(\new{\bar\homMfd},\R^{\nn})$ the following inequality
holds at every point on $\homMfd_{\zz}$:
\begin{align}\label{eq:CcomTangdef}
-\sum_{i,j=1}^{\MdimNEW} (X_iw)^T \Xd^j([X_i,\a]-\Ccomp_i\a) X_j w
&\;\le\;
\CcomTang{\a,\Ccomp}(\zz) \sum_{i=1}^{\MdimNEW} \|X_iw\|_{\a}^2
\intertext{%
%----------------
where in $X_iw$ we differentiate componentwise,
where $\Ccomp_i$ are the components of $\Ccomp$,
where $\Xd^0,\dots,\Xd^{\MdimNEW}$ is the frame of 
one-forms that is dual to \eqref{eq:abstractX},
and where we denote $\|v\|_{\a}^2=v^T d\zzeta(\a)v$
for every $v\in C^\infty(\new{\bar\homMfd},\R^{\nn})$.
\item 
$\CcomTransv{\a,\Ccomp}(\zz)$ is the smallest real number such that
for all $w\in C^\infty(\new{\bar\homMfd},\R^{\nn})$ the following inequality
holds at every point on $\homMfd_{\zz}$:
%----------------
}
\label{eq:CcomTransvdef}
-\sum_{i=1}^{\MdimNEW} (X_iw)^T \Xd^0([X_i,\a]-\Ccomp_i\a) X_0 w
&\;\le\;
\CcomTransv{\a,\Ccomp}(\zz) \sum_{i=1}^{\MdimNEW} \|X_iw\|_{\a}\|X_0w\|_{\a}
\end{align}
\end{itemize}
The expressions \eqref{eq:CcomTangdef} and \eqref{eq:CcomTransvdef}
only depend on $w$ through its first derivatives with respect to the frame
\eqref{eq:abstractX}.
The functions $\CcomTang{\a,\Ccomp},\CcomTransv{\a,\Ccomp}$ 
are continuous in $\zz$.

We make analogous definitions when \eqref{eq:fixedauxfct} are only defined 
on $\bar\homMfd_{\le\zzfix}$ for some $\zzfix<0$, then 
$\CcomTang{\a,\Ccomp}$ and $\CcomTransv{\a,\Ccomp}$ are functions $(-\infty,\zzfix]\to\R$.
\end{definition}
%-------------------------------%
\begin{lemma}\label{lem:kappaupperbound}
Let $\Cpos\ge1$.
For every $\zz\le0$, if 
$\tfrac{1}{\Cpos}\one\le d\zzeta(\a)\le\Cpos\one$ on $\homMfd_{\zz}$ 
then 
\begin{align}\label{eq:comTTest}
|\CcomTransv{\a,\Ccomp}(\zz)|, |\CcomTang{\a,\Ccomp}(\zz)| 
	&\lesssim_{\CM,\nn,\Cpos} 
	(1+\|\Ccomp\|_{\sCX^0(\homMfd_{\zz})})\|\a\|_{\Ct^1(\homMfd_{\zz})}
\end{align}
using $\CM=( \Mcpt, X_0,\dots,X_{\MdimNEW}, \muM)$, see \eqref{eq:CM}, 
and the norms in Definition \ref{def:sHXDefinition}.
Recall that the $\sCX^0$-norm and the $\Ct^0$-norm coincide, 
see Remark \ref{rem:k=0norms}.

The lemma holds analogously when \eqref{eq:fixedauxfct} are only defined 
on $\bar\homMfd_{\le\zzfix}$ for some $\zzfix<0$,
then the constant in \eqref{eq:comTTest} is independent of $\zzfix$.
\end{lemma}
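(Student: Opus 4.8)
The plan is to estimate the two quantities $\CcomTang{\a,\Ccomp}(\zz)$ and $\CcomTransv{\a,\Ccomp}(\zz)$ directly from their defining inequalities \eqref{eq:CcomTangdef} and \eqref{eq:CcomTransvdef}, by bounding the bilinear forms appearing on the left in terms of the positive matrix $d\zzeta(\a)$. Fix $\zz\le 0$ and assume $\tfrac{1}{\Cpos}\one\le d\zzeta(\a)\le\Cpos\one$ on $\homMfd_{\zz}$. The key observation is that for any $\nn\times\nn$ matrix $M$ and any vectors $v,v'\in\R^{\nn}$, the bound $\tfrac{1}{\Cpos}\one\le d\zzeta(\a)$ gives $|v^T M v'|\le \Cpos\,\|M\|_{\ell^2}\,\|v\|_{\a}\,\|v'\|_{\a}$, where $\|\cdot\|_{\a}^2 = (\cdot)^T d\zzeta(\a)(\cdot)$ and $\|M\|_{\ell^2}$ is the $\ell^2$-matrix norm; applying this with $v=v'$ handles the tangential case as well. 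So the whole problem reduces to bounding the $\ell^2$-norms of the matrices $\Xd^j([X_i,\a]-\Ccomp_i\a)$, i.e.\ of the matrix coefficients of the commutator $[X_i,\a]$ and of $\Ccomp_i\a$.

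First I would write $\a = \a^k X_k$ and compute $[X_i,\a] = (X_i\a^k)X_k + \a^k[X_i,X_k]$ as a differential operator; then $\Xd^j([X_i,\a])$ extracts matrix coefficients, which are bounded pointwise by $\|\a\|_{\Ct^1(\homMfd_{\zz})}$ times a constant depending only on $\CM$ — here one uses that the structure functions of the frame \eqref{eq:abstractX} (the coefficients in $[X_i,X_k]=c_{ik}^{\ell}X_{\ell}$) are smooth on the closure $\bar\homMfd$ and translation invariant in $\zzeta$, hence uniformly bounded, so the constant is $\zz$-independent. For the term $\Ccomp_i\a$ one bounds $\Xd^j(\Ccomp_i\a)$ by $\|\Ccomp\|_{\sCX^0(\homMfd_{\zz})}\|\a\|_{\sCX^0(\homMfd_{\zz})}\le \|\Ccomp\|_{\sCX^0(\homMfd_{\zz})}\|\a\|_{\Ct^1(\homMfd_{\zz})}$ (recalling $\sCX^0=\Ct^0$ from Remark \ref{rem:k=0norms}). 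Combining, each matrix $\Xd^j([X_i,\a]-\Ccomp_i\a)$ has $\ell^2$-norm $\lesssim_{\CM,\nn}(1+\|\Ccomp\|_{\sCX^0(\homMfd_{\zz})})\|\a\|_{\Ct^1(\homMfd_{\zz})}$ at every point of $\homMfd_{\zz}$. Plugging into the pointwise bilinear estimate and summing over the finitely many indices $i,j\in\{1,\dots,\MdimNEW\}$ and $i\in\{0,\dots,\MdimNEW\}$ gives, for every $w$, the inequalities \eqref{eq:CcomTangdef} and \eqref{eq:CcomTransvdef} with a right-hand side constant of the asserted form; since $\CcomTang{\a,\Ccomp}(\zz)$ and $\CcomTransv{\a,\Ccomp}(\zz)$ are by definition the smallest such constants, the claimed bound follows. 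The lower bounds ($|\,\cdot\,|$ in \eqref{eq:comTTest}) are obtained the same way, or simply by noting that replacing $w$ with a reflection changes the sign of the bilinear forms, so the same constant controls $-\CcomTang{\a,\Ccomp}(\zz)$ and $-\CcomTransv{\a,\Ccomp}(\zz)$ as well. For the last sentence of the lemma, the entire argument is local on $\homMfd_{\zz}$ and uses only data on $\homMfd_{\le\zzfix}$, and all constants came from uniform (translation-invariant) bounds on the frame, dual frame, structure functions, and $\muM$, so nothing depends on $\zzfix$.

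The only mildly delicate point — and the step I expect to be the main obstacle — is the bookkeeping showing that the constant is genuinely independent of $\zz$ (and of $\zzfix$). This rests on the translation invariance in $\zzeta$ of $X_0,\dots,X_{\MdimNEW}$, of the dual frame $\Xd^0,\dots,\Xd^{\MdimNEW}$, of $\muM$ (hence of $\div_{\muM}$ applied to translation-generated expressions), and of the structure functions $c_{ik}^{\ell}$; together with their smoothness on the compact-in-the-transverse-directions closure $\bar\homMfd$, this yields uniform pointwise bounds, so the pointwise estimates above have $\zz$-independent constants depending only on $\CM$ and $\nn$. Everything else is a routine application of Cauchy--Schwarz relative to the inner product $\|\cdot\|_{\a}^2$ together with the two-sided bound on $d\zzeta(\a)$.
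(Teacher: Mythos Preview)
Your approach is correct and is precisely the ``direct inspection'' the paper alludes to (the paper's proof consists of that phrase alone). One small correction: the remark that ``replacing $w$ with a reflection changes the sign of the bilinear forms'' is both wrong (the tangential form is quadratic in the $X_iw$) and unnecessary, since your pointwise bound on the absolute value of the left-hand side already yields $-C\le \CcomTang{\a,\Ccomp}(\zz)\le C$ and likewise for $\CcomTransv{\a,\Ccomp}$.
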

\begin{proof}
By direct inspection.\qed
\end{proof}
%-------------------------------%
With this definition, at every point on $\homMfd_{\zz}$ one has
\begin{align}\label{eq:kappaall}
\begin{aligned}
-\tsum_{i=1}^{\MdimNEW} (X_iw)^T [X_i,\a] w
\le
&\textstyle\sum_{i=1}^{\MdimNEW} |\Ccomp_i| \|X_iw\| \|\a w\|\\
&\textstyle+\CcomTang{\a,\Ccomp}(\zz) \sum_{i=1}^{\MdimNEW} \|X_iw\|_{\a}^2\\
&\textstyle+|\CcomTransv{\a,\Ccomp}(\zz)| \sum_{i=1}^{\MdimNEW} \|X_iw\|_{\a}\|X_0w\|_{\a}
\end{aligned}
\end{align}
This will be useful to control commutators when we derive energy estimates,
essentially $\CcomTransv{\a,\Ccomp}$ and $\CcomTang{\a,\Ccomp}$ extract 
the leading terms with respect to the number of derivatives,
when $w$ solves an equation with principal part $\a w$.
(We note that for Minkowski, $\CcomTransv{\a,\Ccomp}(\zz)$ will be seen to vanish,
and $\CcomTang{\a,\Ccomp}(\zz)$ equals one.)
%-------------------------------%

%-------------------------------%
We define the propagator that will appear in the energy estimates.
\begin{definition}\label{def:PGammaKDef}
Associated to 
$\LMat,\a,\Ccomp$ in \eqref{eq:fixedauxfct},
an integer $\kk\in\Z_{\ge0}$, a real number $C>0$,
and functions $u,\Fvec\in C^\infty(\bar\homMfd,\R^{\nn})$,
for every $\zz_1,\zz_0\le0$ define
\begin{align}\label{eq:PropDef}
\begin{aligned}
&\kprop^{\LMat,\a,\Fvec,\Ccomp}_{\kk,u,C}(\zz_1,\zz_0)
=
\exp\big(\int_{\zz_0}^{\zz_1}
\big(\LdivestNEW{\LMat,\a}(\zz)
+\kk\max\{0,\CcomTang{\a,\Ccomp}(\zz)\}\big)\,d\zz\big)\\
&\quad\times\exp\big(
C\int_{\zz_0}^{\zz_1}
\big(
|\CcomTransv{\a,\Ccomp}(\zz)|
+
\|u\|_{\Ct^{{\lfloor \frac{\kk+1}{2} \rfloor}}(\homMfd_{\zz})}
+
\|\Fvec\|_{\sCX^{\lfloor \frac{\kk+1}{2}\rfloor-1}(\homMfd_{\zz})}
\big)\, d\zz\big)
\end{aligned}
\end{align}
where we use the convention that $\int_{\zz_0}^{\zz_1} =-\int_{\zz_1}^{\zz_0}$
when $\zz_0\ge\zz_1$.

We make an analogous definition when \eqref{eq:fixedauxfct} and $u,\Fvec$ are only defined on $\bar\homMfd_{\le\zzfix}$ for some $\zzfix<0$, then 
\eqref{eq:PropDef} is defined for $\zz_1,\zz_0\le\zzfix$.
\end{definition}
%-------------------------------%
\begin{lemma}\label{lem:ellkappaLip}
Let $\tilde\LMat\in C^\infty(\bar\homMfd,\End(\R^{\nn}))$ and 
$\tilde\a^i\in C^\infty(\bar\homMfd,S^2\R^{\nn})$ and $\Cpos\ge1$.
Denote $\tilde\a=\tilde\a^i X_i$ and assume that this satisfies \eqref{eq:ALassp}.
For all $\zz\le0$, if
\begin{align}\label{eq:posasspaatilde}
\tfrac{1}{\Cpos}\one\le d\zzeta(\a),d\zzeta(\tilde{\a})\le\Cpos\one
\qquad
\text{on $\homMfd_{\zz}$}
\end{align}
then
\begin{subequations}\label{eq:lkappalip}
\begin{align}
|\LdivestNEW{\LMat,\a}(\zz)-\LdivestNEW{\tilde\LMat,\tilde\a}(\zz)|
&\lesssim_{\CM,\nn,\Cpos}
\|\LMat-\tilde\LMat\|_{\sCX^0(\homMfd_{\zz})}\nonumber \\
&\quad +
\big(1+\|\LMat\|_{\sCX^0(\homMfd_{\zz})}+
\|\a\|_{\sCX^1(\homMfd_{\zz})}\big)
\|\a-\tilde\a\|_{\sCX^1(\homMfd_{\zz})} 
\label{eq:LdivestEstimate}\\
%%%%%
|\CcomTang{\a,\Ccomp}(\zz)-\CcomTang{\tilde\a,\Ccomp}(\zz)|
&\lesssim_{\CM,\nn,\Cpos,\|\Ccomp\|_{\sCX^0(\homMfd_{\zz})}}
\big(1 + \|\a\|_{\Ct^1(\homMfd_{\zz})} \big)
\|\a-\tilde\a\|_{\Ct^1(\homMfd_{\zz})} \label{eq:CcomTangEstimate}\\
%%%%%
|\CcomTransv{\a,\Ccomp}(\zz)-\CcomTransv{\tilde\a,\Ccomp}(\zz)|
&\lesssim_{\CM,\nn,\Cpos,\|\Ccomp\|_{\sCX^0(\homMfd_{\zz})}}
\big(1 + \|\a\|_{\Ct^1(\homMfd_{\zz})} \big)
\|\a-\tilde\a\|_{\Ct^1(\homMfd_{\zz})} \label{eq:CcomTransvEstimate}
\end{align}
\end{subequations}
Recall that the $\sCX^0$-norm and the $\Ct^0$-norm coincide, 
see Remark \ref{rem:k=0norms}.

The lemma holds analogously when \eqref{eq:fixedauxfct} and
$\tilde\LMat$, $\tilde\a^i$ are only defined 
on $\bar\homMfd_{\le\zzfix}$ for some $\zzfix<0$,
then the constants in \eqref{eq:lkappalip} are independent of $\zzfix$.
\end{lemma}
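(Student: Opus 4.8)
The plan is to reduce all three inequalities in \eqref{eq:lkappalip} to a single elementary Lipschitz fact for generalized Rayleigh quotients, and then to track how the matrices entering the definitions of $\LdivestNEW{\LMat,\a}$, $\CcomTang{\a,\Ccomp}$, $\CcomTransv{\a,\Ccomp}$ depend on the data $\LMat,\a,\Ccomp$.

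\textbf{Step 1 (reduction to pointwise quadratic forms).} Each of the three functions is a supremum over $p\in\homMfd_{\zz}$ of a quantity determined by the matrices $\LMat|_p$, $\a^i|_p$, $\Ccomp_i|_p$ and the first frame‑derivatives $(X_j\a^i)|_p$. Indeed $\LdivestNEW{\LMat,\a}(\zz)=\sup_{p\in\homMfd_{\zz}}\sig(S|_p,P|_p)$ with $S=\LMat+\tfrac12\div_{\muM}(\a)$ and $P=d\zzeta(\a)$. For $\CcomTang{},\CcomTransv{}$ note that the inequalities \eqref{eq:CcomTangdef}, \eqref{eq:CcomTransvdef} depend on $w$ only through the vectors $X_0w,\dots,X_{\MdimNEW}w$, which at a fixed point range over all of $(\R^{\nn})^{\MdimNEW+1}$. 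Writing $M_{ij}$ for the matrix coefficient of $X_j$ and $N_i$ for the matrix coefficient of $X_0$ in the first‑order operator $[X_i,\a]-\Ccomp_i\a$ — these are $\nn\times\nn$ matrices whose entries are linear combinations, with coefficients depending only on $\CM$, of the first frame‑derivatives of $\a$, of $\a$, and of $\Ccomp$ — one gets that $\CcomTang{\a,\Ccomp}(\zz)$ is the supremum over $p\in\homMfd_{\zz}$ of $\sig$ of the block quadratic form $-(M_{ij})_{i,j=1}^{\MdimNEW}$ on $(\R^{\nn})^{\MdimNEW}$ against the block‑diagonal form with each block equal to $P$, and, after the substitution $X_iw\mapsto P^{-1/2}h_i$ (which makes every $\|X_iw\|_{\a}$ Euclidean),
\[
\CcomTransv{\a,\Ccomp}(\zz)=\sup_{p\in\homMfd_{\zz}}\ \max_{1\le i\le \MdimNEW}\big\|P|_p^{-1/2}\,N_i|_p\,P|_p^{-1/2}\big\|_{\mathrm{op}}.
\]

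\textbf{Step 2 (elementary Lipschitz lemma).} For symmetric $P,\tilde P$ with $\tfrac1{\Cpos}\one\le P,\tilde P\le\Cpos\one$ and arbitrary $\nn\times\nn$ matrices $S,\tilde S$ (only symmetric parts matter) one has $\sig(S,P)=\lambda_{\max}(P^{-1/2}S^{\mathrm{sym}}P^{-1/2})$; since $\lambda_{\max}$ and $\|\cdot\|_{\mathrm{op}}$ are $1$‑Lipschitz in operator norm and the square root is operator‑Lipschitz on $\{\tfrac1\Cpos\one\le P\le\Cpos\one\}$ with modulus $\lesssim_{\Cpos}1$, a telescoping estimate gives
\[
|\sig(S,P)-\sig(\tilde S,\tilde P)|\ \lesssim_{\Cpos}\ \|S-\tilde S\|+\|S\|\,\|P-\tilde P\|,
\]
and likewise $\big|\,\|P^{-1/2}NP^{-1/2}\|_{\mathrm{op}}-\|\tilde P^{-1/2}\tilde N\tilde P^{-1/2}\|_{\mathrm{op}}\,\big|\lesssim_{\Cpos}\|N-\tilde N\|+\|N\|\,\|P-\tilde P\|$; the same bounds hold with $P$ replaced by the block‑diagonal form with $\MdimNEW$ copies of $P$, with $\Cpos$ unchanged.

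\textbf{Step 3 (bookkeeping and conclusion).} It remains to bound the relevant matrices uniformly on $\homMfd_{\zz}$. For \eqref{eq:LdivestEstimate}: $\|S-\tilde S\|\lesssim_{\CM}\|\LMat-\tilde\LMat\|_{\sCX^0(\homMfd_{\zz})}+\|\a-\tilde\a\|_{\sCX^1(\homMfd_{\zz})}$ (the divergence contributes one frame‑derivative plus a $\CM$‑dependent zeroth‑order term), $\|S\|\lesssim_{\CM}\|\LMat\|_{\sCX^0(\homMfd_{\zz})}+\|\a\|_{\sCX^1(\homMfd_{\zz})}$, and $\|P-\tilde P\|\lesssim_{\CM}\|\a-\tilde\a\|_{\sCX^0(\homMfd_{\zz})}$; inserting these into the first inequality of Step 2, using $\|\a-\tilde\a\|_{\sCX^0}\le\|\a-\tilde\a\|_{\sCX^1}$ to absorb the stray term into the prefactor, and taking the supremum over $p\in\homMfd_{\zz}$ via $|\sup f-\sup g|\le\sup|f-g|$, yields \eqref{eq:LdivestEstimate}. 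For \eqref{eq:CcomTangEstimate}, \eqref{eq:CcomTransvEstimate} one has $\|M_{ij}-\tilde M_{ij}\|,\|N_i-\tilde N_i\|\lesssim_{\CM}(1+\|\Ccomp\|_{\sCX^0(\homMfd_{\zz})})\,\|\a-\tilde\a\|_{\Ct^1(\homMfd_{\zz})}$ and $\|M_{ij}\|,\|N_i\|\lesssim_{\CM}(1+\|\Ccomp\|_{\sCX^0(\homMfd_{\zz})})\,\|\a\|_{\Ct^1(\homMfd_{\zz})}$ (since these matrices are linear in the first frame‑derivatives of $\a$, in $\a$, and in $\Ccomp$, and $\Ccomp$ multiplies $\a$), while $\|P-\tilde P\|\lesssim_{\CM}\|\a-\tilde\a\|_{\Ct^0(\homMfd_{\zz})}\le\|\a-\tilde\a\|_{\Ct^1(\homMfd_{\zz})}$; inserting into Step 2, absorbing the $\Ccomp$‑norm into the implicit constant as in the statement, and taking the supremum over $\homMfd_{\zz}$ gives the two remaining estimates. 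The version with data defined only on $\bar\homMfd_{\le\zzfix}$ is the identical argument applied on each $\homMfd_{\zz}$ with $\zz\le\zzfix$, with all constants manifestly independent of $\zzfix$ because they depend only on $\CM,\nn,\Cpos$ and, where indicated, $\|\Ccomp\|_{\sCX^0(\homMfd_{\zz})}$.

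\textbf{Main obstacle.} The conceptual content is light; the effort is in Step 3 — verifying that the divergence term and the commutator coefficients $M_{ij},N_i$ produce precisely the norms $\sCX^1$, $\Ct^1$ and the prefactors $(1+\|\LMat\|_{\sCX^0}+\|\a\|_{\sCX^1})$, resp.\ $(1+\|\a\|_{\Ct^1})$, claimed, with no worse dependence, and that every intermediate constant depends only on the allowed parameters. The only genuinely analytic input is the operator‑Lipschitz bound for $P\mapsto P^{-1/2}$ on $\{\tfrac1\Cpos\one\le P\le\Cpos\one\}$, which is standard.
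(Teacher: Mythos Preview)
Your proof is correct and follows essentially the same route as the paper: both reduce to a Lipschitz estimate for the generalized Rayleigh quotient $\sig(S,P)$ (the paper derives it directly from $|\sup f-\sup g|\le\sup|f-g|$ and an add--subtract, you via the spectral reformulation $\sig(S,P)=\lambda_{\max}(P^{-1/2}S^{\mathrm{sym}}P^{-1/2})$ and operator-Lipschitzness of $P\mapsto P^{-1/2}$), and then do the same bookkeeping on $\div_{\muM}(\a)$ and the commutator coefficients. Your explicit operator-norm formula for $\CcomTransv{}$ is a nice elaboration of what the paper only sketches with ``checked similarly.''
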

%--------------------------------%
\begin{proof}
We show \eqref{eq:LdivestEstimate}:
For $\nn\times\nn$ matrices $S,P$ and $S',P'$ 
with $P,P'$ symmetric and $1/\Cpos\le P,P'\le\Cpos$ one has
\begin{align}\label{eq:sigmadiff}
|\sig(S,P)-\sig(S',P')| 
\le
\Cpos\|S-S'\|+\Cpos^2\|\new{S}\|\|P-P'\|
\end{align}
where $\|{\cdot}\|$ denotes the $\ell^2$-matrix norm.
To check this use \eqref{eq:sigdef},
the inequality $|\sup f-\sup g|\le\sup|f-g|$
for the supremum of two functions $f,g$,
and then add and subtract appropriately.
%%%%%%%%%%%%%%%%%%%%%%%%%
%Detailed proof:
%Recall that for functions $f,g$ one has 
%\begin{equation}\label{eq:fgsup}
%|\sup f - \sup g | \le \sup |f-g|
%\end{equation}
%Together with adding and subtracting this yields
%\begin{align*}
%|\sig(S,P)-\sig(S',P')|
%%&= 
%%\Big|\sup_{\substack{v\in\R^{n}\\ v\neq0}} \frac{v^T B v}{v^T P v}
%%-
%%\sup_{\substack{v\in\R^{n}\\ v\neq0}} \frac{v^T B' v}{v^T P' v}\Big|\\
%&\le
%\sup_{\substack{v\in\R^{n}\\ v\neq0}} \left|\frac{v^T S' v}{v^T P' v} - \frac{v^T S v}{v^T P v}\right|\\
%&\le
%\sup_{\substack{v\in\R^{n}\\ v\neq0}} 
%\left|\frac{v^T (S'-S)v}{v^T P' v}\right|
%+
%\sup_{\substack{v\in\R^{n}\\ v\neq0}} 
%\left|\frac{v^T S v}{v^T P' v}
%- \frac{v^T S v}{v^T P v}\right|
%\end{align*}
%The first term is bounded by $\frac{1}{\Cpos}\|S'-S\|$.
%The second term is equal to
%\begin{align*}
%\sup_{\substack{v\in\R^{n}\\ v\neq0}} 
%\left|\frac{(v^T S v)(v^T(P'-P)v)}{(v^T P' v)(v^T P v)}\right|
%\le \frac{1}{\Cpos^2}\|S\|_{\ell^2}\|P-P'\|_{\ell^2}
%\end{align*}
%%%%%%%%%%%%%%%%%%%%%%%%%
%
Using $|\sup f-\sup g|\le\sup|f-g|$ again, 
\begin{align*}
&|\LdivestNEW{\LMat,\a}(\zz)
-
\LdivestNEW{\tilde\LMat,\tilde\a}(\zz)|\\
&\le
\sup_{p\in \homMfd_{\zz}}
\Big|\sig\Big(\LMat|_p+\tfrac12\div_{\muM}(\a)|_p\,,\,d\zzeta(\a)|_p\Big)
-
\sig\Big(\tilde\LMat|_p+\tfrac12\div_{\muM}(\tilde\a)|_p\,,\,d\zzeta(\tilde\a)|_p\Big)
\Big|
\end{align*}
Now \eqref{eq:sigmadiff}, applicable by \eqref{eq:posasspaatilde},  yields 
\begin{align*}
|\LdivestNEW{\LMat,\a}(\zz)-\LdivestNEW{\tilde\LMat,\tilde\a}(\zz)|
&\le
\Cpos\sup_{p\in\homMfd_{\zz}}
\Big(\|
(\LMat|_p-\tilde\LMat|_p)
+
\tfrac12\div_{\muM}(\a-\tilde\a)|_p\| \Big)\\
%%%%%%%%%%%%%%%%%%%%%%%%
%W_2
&\qquad+
\Cpos^2\sup_{p\in\homMfd_{\zz}}
\Big(
\|\LMat|_p+\tfrac12\div_{\muM}(\a)|_p\|\|d\zzeta(\a-\tilde\a)|_p\|\Big)
\end{align*}
Clearly both terms are bounded by the right hand side of \eqref{eq:LdivestEstimate}.

We show \eqref{eq:CcomTangEstimate}: 
Let $w\in C^\infty(\bar\homMfd,\R^{\nn})$ such that 
$\sum_{i=1}^{\MdimNEW} \|X_iw\|^2$
is nowhere zero on $\homMfd_{\zz}$. 
Abbreviate the two terms in \eqref{eq:CcomTangdef} by
\begin{align*}
f_{\a}(w) &= - \tsum_{i,j=1}^{\MdimNEW} (X_iw)^T \Xd^j([X_i,\a]-\Ccomp_i\a) X_j w\\
h_{\a}(w) &= \tsum_{i=1}^{\MdimNEW} \|X_iw\|_{\a}^2
\end{align*}
and analogously for $f_{\tilde\a}(w)$, $h_{\tilde\a}(w)$.
At every point on $\homMfd_{\zz}$, using \eqref{eq:posasspaatilde},
\begin{align*}
	|f_{\a}(w)-f_{\tilde\a}(w)|,
	|h_{\a}(w)-h_{\tilde\a}(w)|
	\lesssim_{\CM,\nn,\Cpos,\|\Ccomp\|_{\sCX^0(\homMfd_{\zz})}}
	\|\a-\tilde\a\|_{\Ct^1(\homMfd_{\zz})}  h_{\a}(w)
\end{align*}
Set
$R_{\a}(w) = \tfrac{f_{\a}(w)}{h_{\a}(w)}$.
Using $R_{\a}(w)\le \CcomTang{\a,\Ccomp}$, 
and analogously $R_{\tilde\a}(w)\le \CcomTang{\tilde\a,\Ccomp}$,
\begin{align}\label{eq:Rtilde}
\begin{aligned}
R_{\tilde\a}(w) 
&
\le 
|R_{\tilde\a}(w) - R_{\a}(w)| + \CcomTang{\a,\Ccomp}\\
%%%%%
R_{\a}(w) 
&\le
|R_{\tilde\a}(w) - R_{\a}(w)| + \CcomTang{\tilde\a,\Ccomp}
\end{aligned}
\end{align}
Adding and subtracting yields
\begin{align*}
R_{\tilde\a}(w) - R_{\a}(w)
&=\textstyle
\frac{f_{\tilde\a}(w)-f_{\a}(w)}{h_{\tilde\a}(w)}
+
\frac{f_{\a}(w)}{h_{\a}(w)}
\frac{h_{\a}(w)-h_{\tilde\a}(w)}{h_{\tilde\a}(w)}
%%%%%%%%%%%%%%%%%%%%
%Ra=(fa+ga)/ha;
%Rap=(fap+gap)/hap;
%Rap-Ra==(fap-fa)/hap+fa/ha*(ha-hap)/hap+(gap-ga)/hap+ga/ha*(ha-hap)/hap//Expand
%%%%%%%%%%%%%%%%%%%%
\end{align*}
Hence, with \eqref{eq:posasspaatilde}, we obtain
\begin{align}
|R_{\a}(w)-R_{\tilde\a}(w)|
&\lesssim_{\CM,\nn,\Cpos,\|\Ccomp\|_{\sCX^0(\homMfd_{\zz})}}
\big(
\tfrac{|h_{\a}(w)|}{|h_{\tilde\a}(w)|} + \tfrac{|f_{\a}(w)|}{|h_{\tilde\a}(w)|} 
\big)
\|\a-\tilde\a\|_{\Ct^1(\homMfd_{\zz})} \nonumber\\
&\lesssim_{\CM,\nn,\Cpos,\|\Ccomp\|_{\sCX^0(\homMfd_{\zz})}}
(1 + \|\a\|_{\Ct^1(\homMfd_{\zz})} )
\|\a-\tilde\a\|_{\Ct^1(\homMfd_{\zz})} \label{eq:boundkkp}
\end{align}
The right hand side is independent of $w$.
Together with \eqref{eq:Rtilde} this implies 
that $\CcomTang{\tilde\a,\Ccomp}-\CcomTang{\a,\Ccomp}$
respectively $\CcomTang{\a,\Ccomp}-\CcomTang{\tilde\a,\Ccomp}$
are bounded by \eqref{eq:boundkkp}, 
thus implying \eqref{eq:CcomTangEstimate}.

The inequality \eqref{eq:CcomTransvEstimate} is checked similarly.
\qed
\end{proof}
%--------------------------------%

\subsection{Quasilinear energy estimate}
\label{sec:EnergyEstimate}

We derive the a priori energy estimates (Lemma \ref{lem:Nonlindiffenergyineq}), 
that will be used in the proof of Theorem \ref{thm:nonlinEE}.
The energy estimates control the $\Ht^{k}(\homMfd_{\zz})$ norms, 
which only differentiate with respect to the 
vector fields $X_1,\dots,X_{\MdimNEW}$, 
which are tangential to the initial hypersurface $\ttcoord=0$
(see Definition \ref{def:sHXDefinition}).
Derivatives with respect to $X_0$ are then controlled using the equation.
The results in this section are under compact support assumptions
for the source term and solution.
This will be sufficient for Theorem \ref{thm:nonlinEE}
by a finite speed of propagation argument.

In the following lemma we derive auxiliary estimates for a linear system.
Note that $f\in C^\infty_c(\bar\homMfd)$ iff $f$ is smooth on $\bar\homMfd$
and $f|_{\homMfd_{\le\zz}}=0$ for some $\zz\le0$.
Recall that repeated indices $i,j$ are 
summed implicitly over $0\dots\MdimNEW$.
%--------------------------------%
\begin{lemma}[Linear energy estimate]\label{lem:Lindiffenergyineq}
Let $\Mcpt, X_0,\dots,X_{\MdimNEW}, \muM$
be as in Section \ref{sec:AbstractGeom}.
For all 
$\nn\in\Z_{\ge1}$,
all 
$\Cpos\ge1$ 
and all
\begin{align*}
u &\in C^\infty_c(\bar\homMfd,\R^{\nn})\\
\a^i &\in C^\infty(\bar\homMfd,S^2\R^{\nn}) \qquad i=0\dots\MdimNEW \\
\LMat&\in C^\infty(\bar\homMfd,\End(\R^{\nn})) \\
\Fvec&\in C^\infty_c(\bar\homMfd,\R^{\nn})\\
\Ccomp&\in C^\infty(\bar\homMfd,\R^{\MdimNEW})
\end{align*}
the following holds.
%------------------------
If 
\begin{subequations}\label{eq:LinAssp}
\begin{align}
\a^i X_i u &= \LMat u + \Fvec \label{eq:LinearSHS}
\\
\qquad\qquad u|_{\ttcoord=0} &=0\label{eq:LinearUdata}\\
\qquad\qquad\qquad\Cpos^{-1}\one &\le d\zzeta(\a^i X_i) \le \Cpos \one && 
\text{on $\bar\homMfd$}\qquad\label{eq:A(zeta)pos}\\
\qquad\qquad\qquad\Cpos^{-1}\one &\le \a^0 \le \Cpos \one && 
\text{on $\bar\homMfd$}\qquad\label{eq:A(0)pos}\\
0&\le d\ttcoord(\a^i X_i) && \text{on $\ttcoord^{-1}(\{1\})$}
\label{eq:A(t)pos}
\end{align}
\end{subequations}
then:
\begin{itemize}
\item 
\textbf{Part 1.}
For all $k_0\in\Z_{\ge1}$, $k\in\Z_{\ge0}$, $\indI \in \indset_{k_0,k}$,
at every point on $\bar\homMfd$:
\begin{align}\label{eq:indX0Lin}
\begin{aligned}
\|u_{\indI}\|
&\lesssim_{\CM,\nn,k_0,k,\Cpos}
\tsum_{\substack{\indJ \in \indset_{\le k_0,\le k}\\
	|\indJ|\le k_0+k-1}} \|u_{\indJ}\|
+
\tsum_{\indJ\in\indset_{\le k_0-1,\le k}}\|\Fvec_{\indJ}\|
\\
&\qquad+
\big(\tsum_{i=0}^{\MdimNEW}\|\a^i\|\big)
\big(
\sum_{\indJ \in \indset_{\le k_0-1,\le k+1}} \|u_{\indJ}\|\big)\\
&\qquad
+\big(
\tsum_{\indJ\in \indset_{\le k_0-1,\le k}} \|\LMat_{\indJ}\|
\big)
\big(
\tsum_{\indJ\in\indset_{\le k_0-1,\le k}} \|u_{\indJ}\|
\big)\\
&\qquad
+
\tsum_{\substack{
\indJ,\indK\in\indset_{\le k_0,\le k+1}\\
|\indJ|+|\indK|\le k_0+k\\
|\indJ|,|\indK|\le k_0+k-1\\
\countzero(\indJ)+\countzero(\indK) \le k_0
}}
(\tsum_{i=0}^{\MdimNEW}\|\a^i_{\indJ}\|)\|u_{\indK}\|
\end{aligned}
\end{align}
where $\|{\cdot}\|$ is the $\ell^2$-vector respectively matrix norm,
where we use the notation \eqref{eq:indexnotation},
where in 
$u_{\indI} = X^{\indI}u$, 
$\a^{i}_{\indJ}=X^{\indJ}\a^i$, 
$\LMat_{\indJ}=X^{\indJ}\LMat$
we differentiate componentwise,
and where $\CM=(\Mcpt, X_0,\dots,X_{\MdimNEW}, \muM)$, see \eqref{eq:CM}.
\item 
\textbf{Part 2.}
For all $k_0\in\{0,1\}$, $k\in\Z_{\ge0}$ and $\zz\le0$ 
define the energies
$$
E_{k_0,k}(\zz) = 
\sum_{\indI\in\indset_{k_0,k}}\int_{\homMfd_{\zz}}
u_{\indI}^T d\zzeta(\a^i X_i) u_{\indI}\,\muMz
\qquad
E_{k_0,\le k}(\zz) = \sum_{k'\le k} E_{k_0,k'}(\zz)
$$
using the density $\muMz$ in \eqref{eq:AbstractmuM'}.
%------------------------
Then 
\begin{align}
&\tfrac{d}{d\zz}E_{0,\kk}(\zz)
-
2\Big(\LdivestNEW{\LMat,\a}(\zz)+\kk\CcomTang{\a,\Ccomp}(\zz)\Big) E_{0,\kk}(\zz)
	\label{eq:LinearEE}\\
	&\;\;\lesssim_{\CM,\nn,\kk,\Cpos,\|\Ccomp\|_{\sCX^0(\homMfd_{\zz})}}
	\sqrt{E_{0,\kk}}
	\Big(
	|\CcomTransv{\a,\Ccomp}| \sqrt{E\smash{_{1,k-1}}}
	+
	\|\LMat\|_{\Ct^{\kk}(\homMfd_{\zz})} \sqrt{E_{0,\le \kk-1}}\Big)\nonumber \\
	&\phantom{\;\;\lesssim_{\CM,\nn,\kk,\Cpos,\Ccomp}}
	+\sqrt{E_{0,\kk}} \big(\tsum_{\indI\in\indset_{0,\le\kk}}
	\tint_{\homMfd_{\zz}}\|\Fvec_{\indI}\|^2 \muMz\big)^{\frac12}\nonumber\\
	&\phantom{\;\;\lesssim_{\CM,\kk,\nn,\Cpos,\Ccomp}}
	+\tsum_{\substack{
	\indI,\indJ\in \indset_{0,\le k}\\
	\indK\in \indset_{0,\le k}\cup \indset_{1,\le k-1}\nonumber\\
	|\indJ|+|\indK| \le k+1 \\
	|\indI|+|\indK| \le 2k-1
	}}\sum_{i=0}^{\MdimNEW}
	\tint_{\homMfd_{\zz}}\|\a^{i}_{\indJ}\| \|u_{\indI}\| \|u_{\indK}\|\,\muMz\nonumber
\end{align}
where, on the right hand side, the evaluation at $\zz$ is implicit,
and where we abbreviate $\a=\a^i X_i$.
Note that $E_{k_0,-1}=0$ and $E_{k_0,\le -1}=0$.
\end{itemize}
\end{lemma}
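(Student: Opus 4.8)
The plan is to prove the two parts independently: Part 1 is a pointwise algebraic consequence of the equation, and Part 2 is the differential energy identity obtained from a current. For Part 1, fix $\indI\in\indset_{k_0,k}$ with $k_0\ge1$ and first reorder the vector fields in $X^\indI$ so that a copy of $X_0$ sits innermost, $X^\indI=X^{\indI''}X_0+(\text{error})$ with $\indI''\in\indset_{k_0-1,k}$; each transposition introduces a commutator $[X_i,X_0]$, a first-order operator, so every error term has strictly fewer than $k_0+k$ derivatives with index set in $\indset_{\le k_0,\le k}$, hence is absorbed by the first sum on the right of \eqref{eq:indX0Lin}. Now $\a^0$ is invertible with $\|(\a^0)^{-1}\|\le\Cpos$ by \eqref{eq:A(0)pos}, and the equation reads $\a^0X_0u=-\sum_{i=1}^{\MdimNEW}\a^iX_iu+\LMat u+\Fvec$. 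I would then write $\a^0\,X^{\indI''}(X_0u)=X^{\indI''}\big(-\sum_{i\ge1}\a^iX_iu+\LMat u+\Fvec\big)-[X^{\indI''},\a^0](X_0u)$ and expand both terms by the Leibniz rule (together with the lower-order corrections from non-commutativity of the frame); every resulting term is a single derivative of some $\a^i$, or of $\LMat$, times a derivative of $u$, or a derivative of $\Fvec$, and tracking index sets matches them against the remaining four sums of \eqref{eq:indX0Lin}. Multiplying through by $(\a^0)^{-1}$ costs only a factor $\Cpos$; the one point requiring care is that $(\a^0)^{-1}$ is applied exactly once, so no products of several derivatives of $\a^0$ are ever produced, which is what keeps the constant independent of $\|\a\|$. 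The rest of Part 1 is index bookkeeping.

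For Part 2, for each $\indI\in\indset_{0,\kk}$ (so $|\indI|=\kk$, all indices nonzero) I would use the current $J^\indI=(u_\indI^T\a^iu_\indI)X_i$. Since the $\a^i$ are symmetric, $\div_{\muM}(J^\indI)=2u_\indI^T(\a^iX_iu_\indI)+u_\indI^T(\div_{\muM}\a)u_\indI$ with $\a=\a^iX_i$. The flux of $J^\indI$ through a level set $\homMfd_\zz$ equals, up to sign, the $\indI$-summand of $E_{0,\kk}(\zz)$; the flux through $\ttcoord^{-1}(\{0\})$ vanishes because $u$, and hence the tangential derivative $u_\indI$ (by \eqref{eq:X1..Tang}), vanishes there by \eqref{eq:LinearUdata}; and the flux through $\ttcoord^{-1}(\{1\})$ has a favourable sign by \eqref{eq:A(t)pos} and can be discarded. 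Integrating $\div_{\muM}(J^\indI)$ over $\homMfd_{[\zz_0,\zz_1]}$, summing over $\indI$ and differentiating in $\zz_1$ gives $\tfrac{d}{d\zz}E_{0,\kk}(\zz)\le\sum_{\indI\in\indset_{0,\kk}}\int_{\homMfd_\zz}\big(2u_\indI^T(\LMat+\tfrac12\div_{\muM}\a)u_\indI+2u_\indI^T\Fvec_\indI+2u_\indI^T\mathcal C_\indI\big)\muMz$, where $\mathcal C_\indI=[\a,X^\indI]u$ is the commutator generated when $X^\indI$ is applied to \eqref{eq:LinearSHS}. The first integrand is bounded pointwise by $2\LdivestNEW{\LMat,\a}(\zz)\,u_\indI^Td\zzeta(\a)u_\indI$ by Definition \ref{def:ellCcomdef} and \eqref{eq:sigdef}, producing $2\LdivestNEW{\LMat,\a}(\zz)E_{0,\kk}(\zz)$ after summation, and the $\Fvec$-term is handled by Cauchy--Schwarz against $E_{0,\kk}$ (using $\tfrac{1}{\Cpos}\one\le d\zzeta(\a)$).

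The heart of the matter, and what I expect to be the main obstacle, is the commutator term $\mathcal C_\indI$. Writing $[\a,X^\indI]=-\sum_{\ell=1}^\kk X^{\indI_{<\ell}}[X_{i_\ell},\a]X^{\indI_{>\ell}}$ and isolating the top-order piece $-\sum_\ell[X_{i_\ell},\a]u_{\indI\setminus\ell}$, note that every $i_\ell\ge1$ since $\indI\in\indset_{0,\kk}$. Split $[X_{i_\ell},\a]=([X_{i_\ell},\a]-\Ccomp_{i_\ell}\a)+\Ccomp_{i_\ell}\a$: the piece $\Ccomp_{i_\ell}\a\,u_{\indI\setminus\ell}=\Ccomp_{i_\ell}(\LMat u_{\indI\setminus\ell}+\Fvec_{\indI\setminus\ell}+\cdots)$ is genuinely lower order via the equation and feeds the $\|\LMat\|_{\Ct^\kk}\sqrt{E_{0,\le\kk-1}}$ and the $\Fvec$ terms (here $\|\Ccomp\|_{\sCX^0(\homMfd_\zz)}$ enters the constant); while for the remaining piece one writes $u_\indI=X_{i_\ell}u_{\indI\setminus\ell}$ modulo lower order, separates the $X_0$-component of the first-order operator $[X_{i_\ell},\a]-\Ccomp_{i_\ell}\a$ from its $X_j$-components with $j\ge1$, and applies the defining inequalities \eqref{eq:CcomTangdef} and \eqref{eq:CcomTransvdef}. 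The tangential part then contributes $\le 2\kk\,\CcomTang{\a,\Ccomp}(\zz)E_{0,\kk}(\zz)$ (the factor $\kk$ from the $\ell$-sum), and the transversal part $\lesssim|\CcomTransv{\a,\Ccomp}(\zz)|\sqrt{E_{0,\kk}}\sqrt{E_{1,\kk-1}}$, using that $X_0u_{\indI\setminus\ell}$ is a derivative of $u$ with one $X_0$ and $\kk-1$ tangential derivatives, so its energy is $E_{1,\kk-1}$. The lower-order tails of the commutator expansion, together with the zeroth-order part of $[X_{i_\ell},\a]$, are trilinear expressions $\|\a^i_\indJ\|\,\|u_\indI\|\,\|u_\indK\|$ with exactly the index constraints of the last sum of \eqref{eq:LinearEE}, or are subsumed by the $\sqrt{E_{0,\le\kk-1}}$ terms. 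Moving $2(\LdivestNEW{\LMat,\a}+\kk\,\CcomTang{\a,\Ccomp})E_{0,\kk}$ to the left-hand side yields \eqref{eq:LinearEE}. Throughout, the bookkeeping must be organised so that the implicit constants depend only on $\CM,\nn,\kk,\Cpos$ and $\|\Ccomp\|_{\sCX^0(\homMfd_\zz)}$, which is where one must resist fully expanding $(\a^0)^{-1}$ and $d\zzeta(\a)^{-1}$ and use $L^2$ Cauchy--Schwarz rather than crude $L^\infty$ bounds.
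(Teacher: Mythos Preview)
Your proposal is correct and follows essentially the same approach as the paper: Part~1 via reordering to expose $X_0$, inverting $\a^0$, and Leibniz bookkeeping; Part~2 via the current $J^\indI=(u_\indI^T\a^iu_\indI)X_i$, Stokes, the $\LdivestNEW{}$ bound, and the same top-order/lower-order splitting of the commutator $[X^\indI,\a]$ handled through Definitions~\ref{def:ellCcomdef} and~\ref{def:defCcom}. One small slip: when you differentiate \eqref{eq:LinearSHS} by $X^\indI$ you also pick up $[X^\indI,\LMat]u$, which is missing from your displayed divergence identity and is in fact the primary source of the $\|\LMat\|_{\Ct^\kk}\sqrt{E_{0,\le\kk-1}}$ term (the $\Ccomp_{i_\ell}\a$ route only yields $\|\LMat\|_{\Ct^{\kk-1}}$); this is a trivial fix.
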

%--------------------------
\begin{proof}
\proofheader{Proof of Part 1.}
Since $k_0\ge1$ we can write 
\[ 
\|u_{\indI}\|
\lesssim_{\CM,k_0,k}
\textstyle
\sum_{\indJ\in\indset_{k_0-1,k}}\|X_0 u_{\indJ}\|
+
\sum_{\substack{\indJ\in\indset_{\le k_0,\le k}\\
	|\indJ|\le k_0+k-1}}
	\| u_{\indJ}\|
\]
We show that for every $\indJ\in\indset_{k_0-1,k}$
the term $\|X_0 u_{\indJ}\|$ is bounded by the 
right hand side of \eqref{eq:indX0Lin}.
Differentiating the equation \eqref{eq:LinearSHS} with respect to 
$X^{\indJ}$ yields
\begin{align*}
\a u_{\indJ}  
	&= 
	- 
	[X^{\indJ},\a] u
	+
	X^{\indJ} \LMat u
	+ 
	\Fvec_{\indJ}
%-----------------
\intertext{
Write $\a = \a^0X_0 + \sum_{i=1}^{\MdimNEW}\a^i X_i$, 
and put the second term on the right hand side.
By \eqref{eq:A(0)pos} the matrix $\a^0$ is invertible, thus}
%-----------------
X_0 u_{\indJ}  
	&= 
	(\a^0)^{-1}
	\big(
	- 
	[X^{\indJ},\a] u
	-
	\tsum_{i=1}^{\MdimNEW}\a^i X_i u_{\indJ}  
	+
	X^{\indJ} \LMat u
	+ 
	\Fvec_{\indJ}
	\big)
%-----------------
\intertext{By \eqref{eq:A(0)pos} we have $(\a^0)^{-1}\le\Cpos\one$, hence }
%-----------------
\|X_0 u_{\indJ}\|  
	&\lesssim_{\Cpos} 
	\|[X^{\indJ},\a] u\|
	+
	\tsum_{i=1}^{\MdimNEW}\|\a^i\| \|X_i u_{\indJ}\|
	+
	\|X^{\indJ}\LMat u\|
	+ 
	\|\Fvec_{\indJ}\|
\end{align*}
It is now easy to see that 
each term is bounded by the right hand side of \eqref{eq:indX0Lin}.

%--------------------------
\proofheader{Proof of Part 2.}
For $\kk\in\Z_{\ge0}$, $\indI\in\indset_{0,\kk}$ define\footnote{%
Beware that the index $\indI$ is used in two different ways,
in $u_{\indI}$ it stands for the derivative of $u$ (see \eqref{eq:indexnotation}),
while in $\current_{\indI}$ and $E_{\indI}$ it is part of the name.} 
%the vector field (current)
\begin{align*}
\current_{\indI} &= (u_{\indI}^T \a^i u_{\indI}) X_i
&
E_{\indI}(\zz)
=
\tint_{\homMfd_{\zz}}
u_{\indI}^T d\zzeta(\a) u_{\indI}\,\muMz
\end{align*}
The vector field $\current_{\indI}$ is the current.
Note that $E_{0,\kk} = \tsum_{\indI\in\indset_{0,\kk}}E_{\indI}$.

In the following we will use Stokes' theorem.
For this we fix a volume form $\muMform\in\Omega^{\MdimNEW+1}(\homMfd)$
such that the density associated to $\muMform$ is $\muM$, that is, 
\[ 
|\muMform| = \muM
\]
We fix an orientation on $\homMfd$ such that $\muMform$ is positive.
Note that $|\intermult_{\p_{\zzeta}}\muMform| = \muMz$,
see \eqref{eq:AbstractmuM'}.
Then, by definition of the divergence,
\[ 
\div_{\muM}(\current_{\indI})\muMform = \ddR\left( \intermult_{\current_{\indI}}\muMform \right)
\]
where $\intermult_{\current_{\indI}}$ is the interior multiplication
with $\current_{\indI}$.
The current $\current_{\indI}$ has compact support, because $u$ has.
Thus integrating over $\homMfd_{\le\zz}$ and using Stokes' theorem,
\begin{align}
\tint_{\homMfd_{\le\zz}} \div_{\muM}(\current_{\indI})\muMform
&=
\tint_{\homMfd_{\zz}} \intermult_{\current_{\indI}}\muMform \nonumber \\
&\quad+
	\tint_{\bar\homMfd_{\le\zz}\cap\ttcoord^{-1}(\{0\})}
	\intermult_{\current_{\indI}}\muMform \label{eq:t=0term}\\
&\quad+
	\tint_{\bar\homMfd_{\le\zz}\cap\ttcoord^{-1}(\{1\})} 
	\intermult_{\current_{\indI}}\muMform \label{eq:t=1term}
\end{align}
where, on the right hand side, we use the induced orientation.
Observe:
%----------------------
\begin{itemize}
\item 
The left hand side equals 
$\tint_{\homMfd_{\le\zz}} \div_{\muM}(\current_{\indI})\muM$,
an integral relative to the density $\muM$ over the 
unoriented $\homMfd_{\le\zz}$.
\item 
$\tint_{\homMfd_{\zz}} \intermult_{\current_{\indI}}\muMform
=
\tint_{\homMfd_{\zz}} (u_{\indI}^T d\zzeta(\a) u_{\indI})
\intermult_{\p_{\zzeta}}\muMform
=E_{\indI}(\zz)$,
where we use $|\intermult_{\p_{\zzeta}}\muMform|=\muMz$
and the fact that $\intermult_{\p_{\zzeta}}\muMform$ 
is positive with respect to the induced orientation.
\item 
The term \eqref{eq:t=0term} vanishes vanishes
by \eqref{eq:LinearUdata} and \eqref{eq:X1..Tang}. 
\item 
The term \eqref{eq:t=1term} is increasing in $\zz$.
To see this, note that it is equal to
\[ 
\tint_{\homMfd_{\le\zz}\cap\ttcoord^{-1}(\{1\})} 
u_{\indI}^T d\ttcoord(\a) u_{\indI}\, \intermult_{\p_{\ttcoord}}\muMform
\]
The form $\intermult_{\p_{\ttcoord}}\muMform$ is positive
with respect to the induced orientation.
By \eqref{eq:A(t)pos} we have $u_{\indI}^T d\ttcoord(\a) u_{\indI}\ge0$.
Hence \eqref{eq:t=1term} is increasing in $\zz$ as claimed.
\end{itemize}
%----------------------
Thus differentiating in $\zz$, and using Fubini and $\muM=|d\zzeta|\muMz$, yields
\begin{align}\label{eq:diffE}
\tfrac{d}{d\zz}E_{\indI}(\zz)
&\le\textstyle
\int_{\homMfd_{\zz}} \div_{\muM}(\current_{\indI})\,\muMz
\end{align}
By the symmetry assumption $(\a^i)^T=\a^i$, and the 
Leibniz rule for the divergence,
%$\div_{\muM}(f X) = X(f) + f\div_{\muM}(X)$,
\[ 
\div_{\muM}(\current_{\indI})
=
2 u_{\indI}^T \a u_{\indI} + u_{\indI}^T \div_{\muM}(\a) u_{\indI}
\]
Differentiating \eqref{eq:LinearSHS} with respect to $X^{\indI}$ yields
\begin{align}\label{eq:diffeq}
\a u_{\indI}  
	&= 
	\LMat u_{\indI} - [X^{\indI},\a] u + [X^{\indI},\LMat]u + \Fvec_{\indI}
\end{align}
Thus
\begin{align*}
\div_{\muM}(\current_{\indI})
&=
u_{\indI}^T (2\LMat+\div_{\muM}(\a)) u_{\indI} 
- 2u_{\indI}^T [X^{\indI},\a] u 
+ 2u_{\indI}^T [X^{\indI},\LMat]u 
+ 2u_{\indI}^T \Fvec_{\indI}
\end{align*}
Plugging this into \eqref{eq:diffE} and summing over $\indI\in\indset_{0,k}$ yields
\begin{subequations}
\begin{align}
\tfrac{d}{d\zz}E_{k}(\zz)
&\le
\tsum_{\indI\in\indset_{0,k}}\tint_{\homMfd_{\zz}}
u_{\indI}^T (2\LMat+\div_{\muM}(\a)) u_{\indI}\,\muMz\label{eq:LinEE1} \\
&\;\;-
\tsum_{\indI\in\indset_{0,k}}\tint_{\homMfd_{\zz}}
2u_{\indI}^T [X^{\indI},\a] u\,\muMz \label{eq:LinEE2}\\
&\;\;+ 
\tsum_{\indI\in\indset_{0,k}}\tint_{\homMfd_{\zz}}
2u_{\indI}^T [X^{\indI},\LMat]u \,\muMz\label{eq:LinEE3}\\
&\;\;+ 
\tsum_{\indI\in\indset_{0,k}}\tint_{\homMfd_{\zz}}
2u_{\indI}^T \Fvec_{\indI}\,\muMz\label{eq:LinEE4}
\end{align}
\end{subequations}
We estimate the four terms:
\begin{itemize}
\item 
\eqref{eq:LinEE1}:
By Definition \ref{def:ellCcomdef}, 
for each $\indI$ and 
at every point on $\homMfd_{\zz}$ we have
$$u_{\indI}^T (\LMat+\tfrac12\div_{\muM}(\a)) u_{\indI} 
\le
\LdivestNEW{\LMat,\a}(\zz) u_{\indI}^Td\zzeta(\a)u_{\indI}$$
Thus \eqref{eq:LinEE1} is bounded by 
$\le2\LdivestNEW{\LMat,\a}(\zz) E_{0,k}(\zz)$.

\item
\eqref{eq:LinEE4}:
By Cauchy Schwarz, for each $\indI$ we have 
\begin{align*}
\tint_{\homMfd_{\zz}} 2u_{\indI}^T \Fvec_{\indI}\,\muMz
&\le
2 
\big(\tint_{\homMfd_{\zz}} \|u_{\indI}\|^2\,\muMz\big)^{\frac12}
\big(\tint_{\homMfd_{\zz}} \|\Fvec_{\indI}\|^2 \,\muMz\big)^{\frac12}
\intertext{By \eqref{eq:A(zeta)pos} we have 
$\|u_{\indI}\|^2\lesssim_{\Cpos} u_{\indI}^Td\zzeta(\a)u_{\indI}$,
hence this is bounded by}
&\lesssim_{\Cpos}
\sqrt{E_{\indI}(\zz)}
\big(\tint_{\homMfd_{\zz}} \|\Fvec_{\indI}\|^2 \,\muMz\big)^{\frac12}
\end{align*}
Thus \eqref{eq:LinEE4} is bounded by 
$\lesssim_{k,\Cpos} 
\sqrt{E_{0,k}(\zz)} 
\big(
\tsum_{\indI\in\indset_{0,k}}
\tint_{\homMfd_{\zz}}\|\Fvec_{\indI}\|^2 \muMz\big)^{\frac12}$.

\item 
\eqref{eq:LinEE3}:
By Cauchy Schwarz, for each $\indI$ we have
\[ 
\tint_{\homMfd_{\zz}}
2u_{\indI}^T [X^{\indI},\LMat]u\, \muMz
\le
2
\big(\tint_{\homMfd_{\zz}}\|u_{\indI}\|^2 \,\muMz\big)^{\frac12}
\big(\tint_{\homMfd_{\zz}}\|[X^{\indI},\LMat]u\|^2\,\muMz\big)^{\frac12}
\]
The commutator $[X^{\indI},\LMat]$ is lower order, in the sense that
\[ 
\|[X^{\indI},\LMat]u\|
\lesssim_{\CM,\nn,k}
\|\LMat\|_{\Ct^{k}(\homMfd_{\zz})}
\tsum_{\indJ\in\indset_{0,\le k-1}} \|u_{\indJ}\|
\]
Using \eqref{eq:A(zeta)pos} we obtain that 
\eqref{eq:LinEE3} is bounded by 
$$
\lesssim_{\CM,\nn,k,\Cpos}
\|\LMat\|_{\Ct^{k}(\homMfd_{\zz})}
\sqrt{E\smash{_{0,k}}(\zz)} \sqrt{E\smash{_{0,\le k-1}}(\zz)}
$$

\item 
\eqref{eq:LinEE2}:
We first estimate
$- 2 \tsum_{\indI\in\indset_{0,k}} u_{\indI}^T [X^{\indI},\a] u$
pointwise on $\homMfd_{\zz}$,
using Definition \ref{def:defCcom}.
Define the $\R$-trilinear differential operators
\begin{align*}
B_k(\underline{\mataux},v,w)
&=
-2k \tsum_{i=1}^{\MdimNEW}\sum_{\indI\in\indset_{0,k-1}}
(X_{i}v_{\indI})^T[X_{i},\mataux^j X_j]w_{\indI}\\
%%%%%%%%%%%%%%%%
\tilde{B}_k(\underline{\mataux},v,w)
&=
- 2 \tsum_{\indI\in\indset_{0,k}} v_{\indI}^T [X^{\indI},\mataux^j X_j] w 
- B_k(\underline{\mataux},v,w)
\end{align*}
where $\underline{\mataux}=(\mataux^{j})_{j=0\dots\MdimNEW}\in C^\infty(\bar\homMfd,(\R^{\nn\times\nn})^{\MdimNEW+1})$,
$v,w\in C^\infty(\bar\homMfd,\R^{\nn})$. 
Then
\begin{equation}\label{eq:bbtilde}
- 2 \tsum_{\indI\in\indset_{0,k}} u_{\indI}^T [X^{\indI},\a] u
=
B_k(\underline{\a},u,u)
+
\tilde{B}_k(\underline{\a},u,u)
\end{equation}
with $\underline{\a}=(\a^{i})_{i=0\dots\MdimNEW}$.
We estimate the two terms on the right hand side separately.
The term $\tilde{B}_k$ is lower order, more precisely one has
%Checked in LinearEnergyEstimateCheck.nb
\begin{align*}
\tilde{B}_k(\underline{\mataux},v,w)
=\textstyle
\sum_{\substack{
\indI,\indJ\in\indset_{0,\le k}\\
\indK\in \indset_{0,\le k}\cup \indset_{1,\le k-1}\\
|\indJ|+|\indK| \le k+1 \\
|\indI|+|\indK| \le 2k-1
}}
\tilde{B}_{k,\indI\indJ\indK}(\underline{\mataux}_{\indJ},v_{\indI},w_{\indK})
\end{align*}
using \eqref{eq:X1...Comm}, where 
each $\tilde{B}_{k,\indI\indJ\indK}$ is a $C^\infty$-trilinear form
that satisfies
$$
|\tilde{B}_{k,\indI\indJ\indK}(\underline{\mataux},v,w)|
\lesssim_{\CM,\nn,k}
\|\underline{\mataux}\| \|v\| \|w\|
$$
Thus at every point on $\homMfd_{\zz}$,
\begin{equation} \label{eq:Btildeest}
|\tilde{B}_k(\underline{\a},u,u)|
\lesssim_{\CM,\nn,k}\textstyle
\sum_{\substack{
\indI,\indJ\in \indset_{0,\le k}\\
\indK\in \indset_{0,\le k}\cup \indset_{1,\le k-1}\\
|\indJ|+|\indK| \le k+1 \\
|\indI|+|\indK| \le 2k-1
}}
\|\underline{\a}_{\indJ}\| \|u_{\indI}\| \|u_{\indK}\|
\end{equation}
%-------------------------%
Consider $B_k$. By Definition \ref{def:defCcom},
see also \eqref{eq:kappaall},
for each $\indI\in\indset_{0,k-1}$ and 
at every point on $\homMfd_{\zz}$,
\begin{align*}
-\tsum_{i=1}^{\MdimNEW} (X_i u_{\indI})^T [X_i,\a] u_{\indI}
	&\le\textstyle
	\sum_{i=1}^{\MdimNEW} |\Ccomp_i| \|X_iu_{\indI}\| \|\a u_{\indI}\|\\
	&\textstyle\quad+
	\CcomTang{\a,\Ccomp}(\zz) \sum_{i=1}^{\MdimNEW} \|X_iu_{\indI}\|_{\a}^2\\
	&\textstyle\quad+
	|\CcomTransv{\a,\Ccomp}(\zz)| \sum_{i=1}^{\MdimNEW} \|X_iu_{\indI}\|_{\a}\|X_0u_{\indI}\|_{\a}
\end{align*}
We multiply this inequality with $2k$, 
and sum over $\indI\in\indset_{0,k-1}$.
Then the left hand side yields $B_k(\underline{\a},u,u)$,
and the second term on the right hand side yields
$2k\CcomTang{\a,\Ccomp}(\zz) \sum_{\indI\in\indset_{0,k}}\|u_{\indI}\|_{\a}^2$.
Thus 
\begin{align*}
&B_{k}(\underline{\a},u,u) - 2k\CcomTang{\a,\Ccomp}(\zz) 
\tsum_{\indI\in\indset_{0,k}} \|u_{\indI}\|_{\a}^2\\
	&\qquad\le
	2k\tsum_{\indI\in\indset_{0,k-1}}\sum_{i=1}^{\MdimNEW} |\Ccomp_i| \|X_iu_{\indI}\| \|\a u_{\indI}\|\\
	&\qquad\textstyle\quad+
	2k|\CcomTransv{\a,\Ccomp}(\zz)| \tsum_{\indI\in\indset_{0,k-1}}\sum_{i=1}^{\MdimNEW} \|X_iu_{\indI}\|_{\a}\|X_0u_{\indI}\|_{\a}\\
	&\qquad\lesssim_{k,\Cpos}
	\|\Ccomp\|
	(\tsum_{\indI\in\indset_{0,k}} \|u_{\indI}\|) 
	(\tsum_{\indI\in\indset_{0,k-1}}\|\a u_{\indI}\|)\\
	&\qquad\quad+
	|\CcomTransv{\a,\Ccomp}(\zz)|
	(\tsum_{\indI\in\indset_{0,k}} \|u_{\indI}\|)
	(\tsum_{\indI\in\indset_{1,k-1}} \|u_{\indI}\|)
\end{align*}
For all $\indI\in\indset_{0,k-1}$ (see \eqref{eq:diffeq}),
$$\|\a u_{\indI}\|
\le
\|X^{\indI}\LMat u\|
+
\|[X^{\indI},\a] u\| + \|\Fvec_{\indI}\|$$
where, at every point on $\homMfd_{\zz}$,
\begin{align*}
\|X^{\indI}\LMat u\|
&\lesssim_{\CM,\nn,k}
\|\LMat\|_{\Ct^{k-1}(\homMfd_{\zz})}
\tsum_{\indK\in\indset_{0,\le k-1}} \|u_{\indK}\|\\
%%%%%%%%%%%%%%
\|[X^{\indI},\a] u\|
&\lesssim_{\CM,\nn,k}
\tsum_{\substack{\indJ\in\indset_{0,\le k-1}\\
\indK \in\indset_{0,\le k-1}\cup\indset_{1,\le k-2} \\ |\indJ|+|\indK|\le k}} 
\|\underline{\a}_{\indJ}\| \|u_{\indK}\|
\end{align*}
Thus at every point on $\homMfd_{\zz}$,
\begin{align*}
&B_{k}(\underline{\a},u,u)
-
2k \CcomTang{\a,\Ccomp}(\zz) \tsum_{\indI\in\indset_{0,k}} \|u_{\indI}\|_{\a}^2\\
&\;\;\lesssim_{\CM,\nn,k}
	(\tsum_{\indI\in\indset_{0,k}} \|u_{\indI}\|)
	\Big[
	|\CcomTransv{\a,\Ccomp}(\zz)|
	(\tsum_{\indI\in\indset_{1,k-1}} \|u_{\indI}\|)\\
&\qquad\qquad	
	+
	\|\Ccomp\|
	\Big(
	\|\LMat\|_{\Ct^{k-1}(\homMfd_{\zz})} \tsum_{\indI\in\indset_{0,\le k-1}} \|u_{\indI}\| +
	\sum_{\indI\in\indset_{0,k-1}}\|\Fvec_{\indI}\|\Big)
\\
&\qquad\qquad+
	\|\Ccomp\|
	\Big(\tsum_{\substack{\indJ\in\indset_{0,\le k-1}\\
	\indK \in\indset_{0,\le k-1}\cup\indset_{1,\le k-2} \\ |\indJ|+|\indK|\le k}} 
	\|\underline{\a}_{\indJ}\| 
	\|u_{\indK}\|\Big)
	\Big]
\end{align*}
With \eqref{eq:Btildeest}
this gives an estimate for \eqref{eq:bbtilde}.
Integrating over $\homMfd_{\zz}$ and using Cauchy Schwarz and \eqref{eq:A(zeta)pos},
we obtain the following bound for \eqref{eq:LinEE2}:
\begin{align*}
&-   
\tsum_{\indI\in\indset_k} 
\tint_{\homMfd_{\zz}}
2u_{\indI}^T [X^{\indI},\a] u\,\muMz
-
2k \CcomTang{\a,\Ccomp}(\zz)E_{k}(\zz)\\
&\quad\lesssim_{\CM,\nn,k,\Cpos,\|\Ccomp\|_{\sCX^0(\homMfd_{\zz})}}
\sqrt{E\smash{_{0,k}}(\zz)}
|\CcomTransv{\a,\Ccomp}(\zz)| \sqrt{E\smash{_{1,k-1}}}\\
&\quad\phantom{\lesssim_{\CM,k,\nn,\new{\|\Ccomp\|_{\sCX^0(\homMfd_{\zz})}},\Cpos}}+
\sqrt{E\smash{_{0,k}}(\zz)}
\|\LMat\|_{\Ct^{k-1}(\homMfd_{\zz})} \sqrt{E\smash{_{0,\le k-1}}(\zz)}\\
&\quad\phantom{\lesssim_{k,\nn,\new{\|\Ccomp\|_{\sCX^0(\homMfd_{\zz})}},\Cpos,\CM}}+
\sqrt{E\smash{_{0,k}}(\zz)}\big(\tsum_{\indI\in\indset_{0,k-1}}
\tint_{\homMfd_{\zz}}\|\Fvec_{\indI}\|^2\,\muMz\big)^{\frac12}
\\
&\quad\phantom{\lesssim_{k,\nn,\new{\|\Ccomp\|_{\sCX^0(\homMfd_{\zz})}},\Cpos,\CM}}
+\tsum_{\substack{
\indI,\indJ\in \indset_{0,\le k}\\
\indK\in \indset_{0,\le k}\cup \indset_{1,\le k-1}\\
|\indJ|+|\indK| \le k+1 \\
|\indI|+|\indK| \le 2k-1
}}
\tint_{\homMfd_{\zz}}\|\underline{\a}_{\indJ}\| \|u_{\indI}\| \|u_{\indK}\|\,\muMz
\end{align*}
\end{itemize}
Collecting terms yields \eqref{eq:LinearEE}.\qed
\end{proof}
%-------------------------%
In the following lemma we use the norms in Definition \ref{def:sHXDefinition}.
Recall that, by definition,
$\HX^{-1}$, $\sHX^{-1}$, $\CX^{-1}$, $\sCX^{-1}$ are zero.
%-------------------------%
\begin{lemma}[Quasilinear energy estimate]\label{lem:Nonlindiffenergyineq}
Let $\Mcpt, X_0,\dots,X_{\MdimNEW}, \muM$
be as in Section \ref{sec:AbstractGeom}.
For all 
\begin{equation}\label{eq:constEE}
\nn\in\Z_{\ge1}
\qquad
\NN\in\Z_{\ge1}
\qquad
\Cpos\ge1
\qquad
\CLA>0
\end{equation}
there exists a real number $\CqEE>0$ such that for all 
$\zzfix\le0$ and all 
\begin{align}\label{eq:inputQEE}
\begin{aligned}
u &\in  C^\infty_c(\bar\homMfd_{\le\zzfix},\R^{\nn})\\
\AmatLin^i &\in C^\infty(\bar\homMfd_{\le\zzfix},S^2\R^{\nn}) 
	&& i=0\dots\MdimNEW \\
\qquad\qquad
\AmatBil^i &\in C^\infty(\bar\homMfd_{\le\zzfix},\Hom(\R^{\nn}, S^2\R^{\nn}) ) 
	&& i=0\dots\MdimNEW \qquad\\
\LMat&\in C^\infty(\bar\homMfd_{\le\zzfix},\End(\R^{\nn})) \\
\BSHS&\in C^\infty(\bar\homMfd_{\le\zzfix},\Hom(\R^{\nn}\otimes\R^{\nn},\R^{\nn})) \\
\Fvec&\in C^\infty_c(\bar\homMfd_{\le\zzfix}, \R^{\nn})\\
\Ccomp&\in C^\infty(\bar\homMfd_{\le\zzfix}, \R^{\MdimNEW}) 
\end{aligned}
\end{align}
the following holds.
If%
\begin{subequations}\label{eq:nonlinassp}
\begin{align}
&(\AmatLin^{i} + \AmatBil^i(u))X_i u = \LMat u + \BSHS(u,u)+\Fvec \label{eq:SHSBil}\\
& u|_{\ttcoord=0} = 0 \label{eq:DataBil}\\
%%%%%
&\Cpos^{-1}\one \le d\zzeta((\AmatLin^{i} + \AmatBil^i(u))X_i)\le \Cpos \one
\quad\text{on $\bar\homMfd_{\le\zzfix}$}
\label{eq:A(zeta)posBIL}\\
%%%%%
&\Cpos^{-1}\one \le \AmatLin^0 + \AmatBil^0(u)\le \Cpos \one
\qquad\quad\;\quad\text{on $\bar\homMfd_{\le\zzfix}$}
\label{eq:A(0)posBIL}\\
%%%%%
&0 \le d\ttcoord((\AmatLin^{i} + \AmatBil^i(u))X_i) \qquad\qquad\quad \text{on $\ttcoord^{-1}(\{1\})\cap \bar\homMfd_{\le\zzfix}$}\label{eq:A(t)posBIL}\\
%%%%%
&
\new{\|u\|_{\Ct^{\lfloor\frac{\NN+1}{2}\rfloor}(\homMfd_{\le\zzfix})}} \le \CLA 
\label{eq:Cubound}\\
&
\new{\|\Fvec\|_{\CX^{\lfloor\frac{\NN}{2}\rfloor-1}}(\homMfd_{\le\zzfix})}
 \le \CLA 
\label{eq:CFbound}\\
&
\begin{aligned}
&
\|\AmatLin^i\|_{\CX^{\NN}(\homMfd_{\le\zzfix})},
\|\AmatBil^i\|_{\CX^{\NN}(\homMfd_{\le\zzfix})},\\
&\quad
\|\LMat\|_{\CX^{\NN}(\homMfd_{\le\zzfix})},
\|\BSHS\|_{\CX^{\NN}(\homMfd_{\le\zzfix})},
\|\Ccomp\|_{\CX^0(\homMfd_{\le\zzfix})}
\le
\CLA
\end{aligned}
\label{eq:matbounds}
\end{align}
\end{subequations}
then:
\begin{itemize}
\item \textbf{Part 1.}
For all $k\in\Z_{\ge0}$ with $k\le\NN$ and all $\zz\le\zzfix$,
\begin{subequations}
\begin{align}
\|u\|_{\sCX^k(\homMfd_{\zz})} 
	&\le \CqEE \big(
	\|u\|_{\Ct^k(\homMfd_{\zz})}
	+
	\|\Fvec\|_{\sCX^{k-1}(\homMfd_{\zz})}
	\big)
	\label{eq:CallCt}\\
\|u\|_{\sHX^k(\homMfd_{\zz})} 
	&\le \CqEE \big(
	\|u\|_{\Ht^k(\homMfd_{\zz})}
	+
	\|\Fvec\|_{\sHX^{k-1}(\homMfd_{\zz})}
	\big)
	\label{eq:HallHt}
\intertext{
Furthermore, if 
$k+\lfloor\frac{\MdimNEW}{2}\rfloor+1\le\NN$ then}
\|u\|_{\sCX^k(\homMfd_{\zz})} 
	&\le \CqEE \big( \|u\|_{\Ht^{k+\lfloor \frac{\MdimNEW}{2}\rfloor+1}(\homMfd_{\zz})} 
	+ 
	\|\Fvec\|_{\sHX^{k+\lfloor \frac{\MdimNEW}{2}\rfloor}(\homMfd_{\zz})} 
	\big)
	\label{eq:CtSobolev}
\end{align}
\end{subequations}

\item \textbf{Part 2.}
For all $\zz\le\zzfix$,
\begin{align*}
\|u\|_{\Ht^{\NN}(\homMfd_{\zz})}
\le
\CqEE \int_{-\infty}^{\zz} 
\kprop^{\LMat,\Amat(u),\Fvec,\Ccomp}_{\NN,u,\CqEE}(\zz,\zz')
(1 + |\zz-\zz'|)^{\NN}
\|\Fvec\|_{\sHX^{\NN}(\homMfd_{\zz'})}
\,d\zz'
\end{align*}
where we abbreviate $\Amat(u)=(\AmatLin^{i} + \AmatBil^i(u))X_i$,
and use Definition \ref{def:PGammaKDef}.
\end{itemize}
\end{lemma}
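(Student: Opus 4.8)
The plan is to prove Lemma~\ref{lem:Nonlindiffenergyineq} by reducing the quasilinear system \eqref{eq:SHSBil} to the linear setting of Lemma~\ref{lem:Lindiffenergyineq}, applied with the \emph{fixed} coefficients $\a^i = \AmatLin^i + \AmatBil^i(u)$ read off from the given solution $u$ (so that $\a^iX_iu = \LMat u + \BSHS(u,u) + \Fvec$ is an honest linear equation for $u$ with source $\tilde\Fvec := \BSHS(u,u)+\Fvec$). One must first check that the hypotheses \eqref{eq:LinAssp} of Lemma~\ref{lem:Lindiffenergyineq} hold: \eqref{eq:LinearSHS}, \eqref{eq:LinearUdata} are immediate from \eqref{eq:SHSBil}, \eqref{eq:DataBil}, and \eqref{eq:A(zeta)pos}--\eqref{eq:A(t)pos} are exactly \eqref{eq:A(zeta)posBIL}--\eqref{eq:A(t)posBIL}. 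The compact support assumptions are in force by \eqref{eq:inputQEE}. The constant $\CqEE$ will depend only on $\nn,\NN,\Cpos,\CLA$ and on $\CM$; uniformity in $\zzfix$ comes from the translation invariance of $X_i$ and $\muM$, just as in Lemma~\ref{lem:Lindiffenergyineq} and Lemma~\ref{lem:LinfL2toL1L2}.

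For Part~1, the three estimates \eqref{eq:CallCt}, \eqref{eq:HallHt}, \eqref{eq:CtSobolev} are elliptic-type gains: they convert control of tangential derivatives ($\Ht^k$, which only uses $X_1,\dots,X_{\MdimNEW}$) plus control of the source into control of all derivatives ($\sCX^k$ or $\sHX^k$, which also use $X_0$). I would prove \eqref{eq:CallCt} and \eqref{eq:HallHt} simultaneously by induction on $k$, invoking Part~1 of Lemma~\ref{lem:Lindiffenergyineq}: the pointwise bound \eqref{eq:indX0Lin} expresses $\|u_{\indI}\|$ for $\indI\in\indset_{k_0,k}$ (i.e.\ an index with $k_0\ge1$ factors of $X_0$) in terms of lower-order-in-$X_0$ derivatives of $u$, derivatives of $\Fvec$, and the coefficient norms $\|\a^i\|$, $\|\LMat\|$. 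The coefficient norms $\|\AmatLin^i\|_{\CX^\NN}$, $\|\AmatBil^i\|_{\CX^\NN}$, $\|\LMat\|_{\CX^\NN}$, $\|\BSHS\|_{\CX^\NN}$ are all $\le\CLA$ by \eqref{eq:matbounds}, and $\|u\|_{\Ct^{\lfloor(\NN+1)/2\rfloor}}\le\CLA$ by \eqref{eq:Cubound}, so all products appearing in \eqref{eq:indX0Lin} can be closed: a Sobolev/Moser-type argument (using \eqref{eq:sobolevMz} and the algebra property of $\CX^k$) handles the nonlinear terms $\AmatBil^i(u)$ and $\BSHS(u,u)$, with the low-regularity factor always placed in $\Ct^{\lfloor(\NN+1)/2\rfloor}$. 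Since \eqref{eq:SHSBil} already has the form $\a^iX_iu=(\text{stuff})$ with $\a = \Amat(u)$, the source for the linear estimate is $\tilde\Fvec=\BSHS(u,u)+\Fvec$, and $\|\BSHS(u,u)\|_{\sHX^{k-1}}\lesssim_{\CLA}\|u\|_{\sHX^{k-1}}$ (one factor in $\Ct^{\lfloor\NN/2\rfloor}$), which is absorbed into the inductive hypothesis; similarly in $\sCX$. Then \eqref{eq:CtSobolev} follows by combining \eqref{eq:CallCt} with the Sobolev inequality \eqref{eq:sobolevMz} and \eqref{eq:HallHt}.

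For Part~2, the idea is to feed the a priori differential energy inequality \eqref{eq:LinearEE} (from Part~2 of Lemma~\ref{lem:Lindiffenergyineq}, applied with $k=\NN$ and the frozen coefficients) into a Gr\"onwall argument. Writing $E(\zz) = E_{0,\le\NN}(\zz)$ and using \eqref{eq:A(zeta)posBIL} to see $E(\zz)\simeq_{\Cpos}\|u\|^2_{\Ht^\NN(\homMfd_\zz)}$, inequality \eqref{eq:LinearEE} summed over $k\le\NN$ gives $\frac{d}{d\zz}E \le 2\big(\LdivestNEW{\LMat,\a} + \NN\max\{0,\CcomTang{\a,\Ccomp}\}\big)E + \big(C(|\CcomTransv{\a,\Ccomp}| + \|u\|_{\Ct^{\lfloor(\NN+1)/2\rfloor}} + \|\Fvec\|_{\sCX^{\lfloor(\NN+1)/2\rfloor-1}})\big)E + (\text{source})\sqrt{E}$, where the lower-order bulk terms $\sum \int\|\a^i_{\indJ}\|\|u_{\indI}\|\|u_{\indK}\|$ and the $\|\LMat\|_{\Ct^\NN}\sqrt{E_{0,\le\NN-1}}$ term are absorbed after using Part~1 to trade $\sHX$ for $\Ht$, and the $|\CcomTransv|\sqrt{E_{1,k-1}}$ term is likewise handled via Part~1 applied at level $k_0=1$. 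The integrating factor is precisely $\kprop^{\LMat,\Amat(u),\Fvec,\Ccomp}_{\NN,u,\CqEE}$ from Definition~\ref{def:PGammaKDef}, which is why that propagator was defined with exactly those ingredients. Integrating from $-\infty$ (legitimate by compact support of $u$) and applying Lemma~\ref{lem:LinfL2toL1L2} repeatedly to convert the $\sqrt{E}\,(\int\|\Fvec\|^2)^{1/2}$ source contributions and the lower-order terms into the weighted integral $\int_{-\infty}^\zz \kprop\cdot(1+|\zz-\zz'|)^\NN\|\Fvec\|_{\sHX^\NN(\homMfd_{\zz'})}\,d\zz'$ yields the claim; the polynomial factor $(1+|\zz-\zz'|)^\NN$ is generated by the $\NN$ successive applications of Lemma~\ref{lem:LinfL2toL1L2} needed to pass from $\Ht^{\NN-1}$-type remainders up to $\Ht^\NN$. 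I expect the main obstacle to be the bookkeeping in Part~2: carefully checking that every term on the right of \eqref{eq:LinearEE} — especially the lower-order bulk term with $\countzero(\indJ)+\countzero(\indK)\le k_0$ and the $\CcomTransv$-term involving $E_{1,k-1}$ — can be bounded, after using Part~1 and \eqref{eq:Cubound}--\eqref{eq:matbounds}, by $\CqEE$ times $E$ plus an acceptable source, so that the Gr\"onwall factor is exactly the stated propagator and the final power of $(1+|\zz-\zz'|)$ is no worse than $\NN$.
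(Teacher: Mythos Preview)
Your overall approach matches the paper's: freeze the coefficients $\a^i=\AmatLin^i+\AmatBil^i(u)$, feed the quasilinear equation into Lemma~\ref{lem:Lindiffenergyineq} with source $\BSHS(u,u)+\Fvec$, prove Part~1 by lexicographic induction on $(k_0,j)$ using the pointwise bound \eqref{eq:indX0Lin}, and prove Part~2 by summing \eqref{eq:LinearEE} over $k\le\NN$ and applying Gr\"onwall.

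One correction on Part~2: the polynomial factor $(1+|\zz-\zz'|)^{\NN}$ does \emph{not} arise from repeated applications of Lemma~\ref{lem:LinfL2toL1L2}; that lemma is not used here. In the paper, the lower-order terms on the right of \eqref{eq:LinearEE} (in particular the $\|\LMat\|_{\Ct^k}\sqrt{E_{0,\le k-1}}$ term and the bulk terms with $|\indI|+|\indK|\le 2k-1$) are \emph{not} absorbed into $E_{0,\le k}$; instead they contribute a coupling term $\CqEE_0\, e_{\le k-1}(\zz)$ on the right of the scalar inequality for $e_{\le k}$ (see the claim \eqref{eq:EnergyEstimateENEW}). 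This produces a triangular system $\frac{d}{d\zz}\vec e \le (g(\zz)\one + \CqEE_0 Q)\vec e + \text{source}$ for $\vec e = (e_{\le\NN},\dots,e_{\le 0})$, where $Q$ is the nilpotent shift matrix. The propagator is then $\kprop_{\NN}\cdot\exp((\zz_1-\zz_0)\CqEE_0 Q)$, and since $Q^{\NN+1}=0$ this exponential is a polynomial of degree $\NN$ in $\zz_1-\zz_0$, yielding exactly $(1+|\zz-\zz'|)^{\NN}$. So the polynomial weight comes from the hierarchical coupling $e_{\le k}\to e_{\le k-1}$, not from Lemma~\ref{lem:LinfL2toL1L2}.
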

%--------------------------
Beware that in the estimate in Part 2, 
the solution $u$ still appears in the propagator on the right hand side.
In the proof of Theorem \ref{thm:nonlinEE}, under a priori assumptions, 
we estimate this propagator by a term that is independent of $u$.
%--------------------------
\begin{proof}
It suffices to prove the lemma for $\zzfix=0$
(for general $\zzfix\le0$ apply the $\zzfix=0$ statement to the translation
of \eqref{eq:inputQEE} by $\zzfix$).
Instead of specifying $\CqEE$ up front, 
we will make finitely many admissible largeness assumptions on 
$\CqEE$ during the proof, where admissible means that they depend only on 
\eqref{eq:constEE} and on $\CM$, see \eqref{eq:CM}.
%--------------------
We will abbreviate $\lesssim_{\CM,\nn,\NN,\Cpos,\CLA}$ by $\lesssim_*$.

%--------------------
We will use Lemma \ref{lem:Lindiffenergyineq} 
with the parameters in Table \ref{tab:ApplyLinearEE}.
The assumptions \eqref{eq:LinAssp} of 
Lemma \ref{lem:Lindiffenergyineq} hold by 
\eqref{eq:SHSBil},
\eqref{eq:DataBil},
\eqref{eq:A(zeta)posBIL},
\eqref{eq:A(0)posBIL},
\eqref{eq:A(t)posBIL}.

%-----------------------------%
\begin{table}
\centering
\begin{tabular}{c|c}
\begin{tabular}{@{}c@{}}
Parameters \\ 
in Lemma \ref{lem:Lindiffenergyineq}
\end{tabular}
	& 
	\begin{tabular}{@{}c@{}}
	Parameters  \\ used to invoke Lemma \ref{lem:Lindiffenergyineq}
	\end{tabular}\\
\hline
$\Mcpt$, $X_0,\dots,X_{\MdimNEW}$, $\muM$
	& $\Mcpt$, $X_0,\dots,X_{\MdimNEW}$, $\muM$ \\
$\nn$, $\Cpos$
	& $\nn$, $\Cpos$\\
$u$
	& $u$\\
$\a^i$
	& $\AmatLin^i + \AmatBil^i(u)$\\
$\LMat$
	& $\LMat$\\
$\Fvec$
	& $\BSHS(u,u) + \Fvec$\\
$\Ccomp$ 
	& $\Ccomp$ 
\end{tabular}
\captionsetup{width=115mm}
\caption{
The first column lists the input parameters of Lemma \ref{lem:Lindiffenergyineq}.
The second column specifies the choice of these parameters
when invoking Lemma \ref{lem:Lindiffenergyineq} 
in the proof of Lemma \ref{lem:Nonlindiffenergyineq},
in terms of 
the input parameters of Lemma \ref{lem:Nonlindiffenergyineq}.
For example, the input parameter $\a^i$ in Lemma \ref{lem:Lindiffenergyineq}
is chosen to be $\AmatLin^i + \AmatBil^i(u)$, 
using $\AmatLin^i$, $\AmatBil^i$, $u$ from \eqref{eq:inputQEE}.}
\label{tab:ApplyLinearEE}
\end{table}
%-----------------------------%

\proofheader{Proof of Part 1.}
For all $k_0\in\Z_{\ge1}$ and $j\in\Z_{\ge0}$ with $k_0+j\le \NN$,
and all $\indI\in\indset_{k_0,j}$, at every point on $\bar\homMfd$
one has
\begin{align}\label{eq:uptw}
\begin{aligned}
\|u_{\indI}\|
&\lesssim_{*}
\tsum_{\indJ \in \indset_{\le k_0-1,\le j+1}} \|u_{\indJ}\|
+
\tsum_{\indJ\in\indset_{k_0,\le j-1}} \|u_{\indJ}\| 
\\
&\qquad+
\tsum_{\substack{
\indJ,\indK\in\indset_{\le k_0,\le j+1}\\
|\indJ|+|\indK|\le k_0+j\\
|\indJ|,|\indK|\le k_0+j-1\\
\countzero(\indJ)+\countzero(\indK) \le k_0
}}
\|u_{\indJ}\|\|u_{\indK}\|
+
\tsum_{\indJ\in\indset_{\le k_0-1,\le j}}\|\Fvec_{\indJ}\|
\end{aligned}
\end{align}
This follows from Part 1 of Lemma \ref{lem:Lindiffenergyineq} 
(see Table \ref{tab:ApplyLinearEE}),
where we replace the letter $k$ by the letter $j$,
and using \eqref{eq:Cubound} and \eqref{eq:matbounds}.

Proof of \eqref{eq:CallCt}:
It suffices to show that for every element in the set of tuples
\begin{align}\label{eq:tups}
\{(k_0,j)\in\Z_{\ge0}\times\Z_{\ge0} \mid k_0+j\le\NN\}
\end{align}
the following statement $S_{(k_0,j)}$ is true:
\begin{align}
\label{eq:Pk0kt}
S_{(k_0,j)}:\quad
\begin{aligned} 
&\text{For all $\indI\in\indset_{k_0,j}$ and all $\zz\le0$:}\\
&\|u_{\indI}\|_{\sCX^0(\homMfd_{\zz})}\lesssim_{*} \|u\|_{\Ct^{k_0+j}(\homMfd_{\zz})} + \|\Fvec\|_{\sCX^{k_0+j-1}(\homMfd_{\zz})}
\end{aligned}
\end{align}
We prove this by induction, where 
we order the tuples \eqref{eq:tups} lexicographically:
\[ 
(k_0,j) \le (k_0',j')
\quad\Leftrightarrow\quad
k_0<k_0' \; \text{or}\; (k_0=k_0'\;\text{and}\;j\le j')
\]
Clearly $S_{(0,j)}$ holds for all $j\le\NN$.
Now let $(k_0,j)$ be a tuple in \eqref{eq:tups} with $k_0\ge1$,
and assume by induction that $S_{(k_0',j')}$ holds for all $(k_0',j')<(k_0,j)$. Let $\indI\in \indset_{k_0,j}$.
We use \eqref{eq:uptw} for the index $\indI$. 
Observe that on the right hand side of \eqref{eq:uptw}, each $\indJ,\indK$ 
is an element in some $\indset_{(k_0',j')}$ with $(k_0',j')<(k_0,j)$.
Thus, using the induction hypothesis,
\begin{align*}
&\|u_{\indI}\|_{\sCX^0(\homMfd_{\zz})}
\lesssim_{*}
\|u\|_{\Ct^{k_0+j}(\homMfd_{\zz})} + \|\Fvec\|_{\sCX^{k_0+j-1}(\homMfd_{\zz})}\\
&\qquad+
\tsum_{\substack{i,i'\le k_0+j-1\\ 
i+i'\le k_0+j}}
\big(\|u\|_{\Ct^{i}(\homMfd_{\zz})} + \|\Fvec\|_{\sCX^{i-1}(\homMfd_{\zz})}\big)
\big(\|u\|_{\Ct^{i'}(\homMfd_{\zz})} + \|\Fvec\|_{\sCX^{i'-1}(\homMfd_{\zz})}\big)
\end{align*}
The term in the second line is bounded by 
\begin{align*}
&\lesssim_{\NN}
\big(\|u\|_{\Ct^{\lfloor\frac{\NN}{2}\rfloor}(\homMfd_{\zz})} + \|\Fvec\|_{\sCX^{\lfloor\frac{\NN}{2}\rfloor-1}(\homMfd_{\zz})}\big)
\big(\|u\|_{\Ct^{k_0+j-1}(\homMfd_{\zz})} + \|\Fvec\|_{\sCX^{k_0+j-2}(\homMfd_{\zz})}\big)\\
&\lesssim_{\CLA}
\|u\|_{\Ct^{k_0+j-1}(\homMfd_{\zz})} + \|\Fvec\|_{\sCX^{k_0+j-2}(\homMfd_{\zz})}
\end{align*}
using \eqref{eq:Cubound} and \eqref{eq:CFbound} in the last inequality.
This concludes the induction step. 
Now \eqref{eq:CallCt} follows
under an admissible largeness assumption on $\CqEE$.

Proof of \eqref{eq:HallHt}: 
This checked similarly to \eqref{eq:CallCt}.
Here one shows by induction that for each tuple in \eqref{eq:tups},
the following statement is true:
\begin{align}\label{eq:inductionHk}
S_{(k_0,j)}:\quad
\begin{aligned} 
&\text{For all $\indI\in\indset_{k_0,j}$ and all $\zz\le0$:}\\
&\|u_{\indI}\|_{L^2(\homMfd_{\zz})}\lesssim_{*} \|u\|_{\Ht^{k_0+j}(\homMfd_{\zz})} + \|\Fvec\|_{\sHX^{k_0+j-1}(\homMfd_{\zz})}
\end{aligned}
\end{align}
where the $L^2$-norm is defined with respect to the density $\muMz$.
We sketch the induction step.
Let $(k_0,j)$ be in \eqref{eq:tups} with $k_0\ge1$
and assume that $S_{(k_0',j')}$ holds for all $(k_0',j')<(k_0,j)$.
Using \eqref{eq:uptw} and the induction hypothesis,
\begin{align*}
\|u_{\indI}\|_{L^2(\homMfd_{\zz})}
&\lesssim_{*}
\|u\|_{\Ht^{k_0+j}(\homMfd_{\zz})} + \|\Fvec\|_{\sHX^{k_0+j-1}(\homMfd_{\zz})}\\
&\qquad\qquad\qquad+
\tsum_{\substack{
\indJ,\indK\in\indset_{\le k_0,\le j+1}\\
|\indJ|+|\indK|\le k_0+j\\
|\indJ|,|\indK|\le k_0+j-1\\
\countzero(\indJ)+\countzero(\indK) \le k_0
}}
\|\|u_{\indJ}\|\|u_{\indK}\|\|_{L^2(\homMfd_{\zz})}
\end{align*}
The term in the second line is bounded by 
\begin{align*}
&\lesssim_{*}
\|u\|_{\sCX^{\lfloor\frac{\NN}{2}\rfloor}(\homMfd_{\zz})}
\tsum_{\substack{
\indK\in\indset_{\le k_0,\le j+1}\\
|\indK|\le k_0+j-1
}}
\|u_{\indK}\|_{L^2(\homMfd_{\zz})}\\
&\lesssim_{*}
\big(\|u\|_{\Ct^{\lfloor\frac{\NN}{2}\rfloor}(\homMfd_{\zz})}
	+
	\|\Fvec\|_{\sCX^{\lfloor\frac{\NN}{2}\rfloor-1}(\homMfd_{\zz})}\big)
\big(\|u\|_{\Ht^{k_0+j-1}(\homMfd_{\zz})}
+\|F\|_{\sHX^{k_0+j-2}(\homMfd_{\zz})}\big)\\
&\lesssim_{\CLA}
\|u\|_{\Ht^{k_0+j-1}(\homMfd_{\zz})}
+\|F\|_{\Ht^{k_0+j-2}(\homMfd_{\zz})}
\end{align*}
where to bound the first factor we use \eqref{eq:CallCt}
and then \eqref{eq:Cubound} and \eqref{eq:CFbound};
and to bound the second factor we use the induction hypothesis.
This proves \eqref{eq:HallHt}.

We check \eqref{eq:CtSobolev}:
By the Sobolev inequality \eqref{eq:sobolevMz}
and $\lfloor\frac{\MdimNEW}{2}\rfloor+1>\frac{\MdimNEW}{2}$,
\begin{align*}
\|u\|_{\sCX^k(\homMfd_{\zz})} 
\lesssim_{\CM,\nn,\NN}
\|u\|_{\sHX^{k+\lfloor\frac{\MdimNEW}{2}\rfloor+1}(\homMfd_{\zz})} 
\end{align*}
Now \eqref{eq:HallHt} (applicable by the additional assumption on $k$)
yields \eqref{eq:CtSobolev},
under an admissible largeness condition on $\CqEE$.

\proofheader{Proof of Part 2.}
%--------------------
%
For $k_0\in\{0,1\}$ and $\kk\le\NN$ and $\zz\le0$ define
\begin{align*}
E_{k_0,\kk}(\zz) &= \textstyle
\sum_{\indI\in\indset_{k_0,\kk}}\int_{\homMfd_{\zz}}
u_{\indI}^T d\zzeta(\Amat(u)) u_{\indI}\,\muMz
\\
E_{k_0,\le\kk}(\zz) &= \textstyle\sum_{\kk'\le\kk} E_{k_0,\kk'}(\zz)
\\
e_{\le\kk}(\zz) &= \sqrt{E\smash{_{0,\le\kk}}(\zz)}
\end{align*}
Recall that we abbreviate 
$\lesssim_{\CM,\nn,\NN,\Cpos,\CLA}$ by $\lesssim_*$.
By \eqref{eq:A(zeta)posBIL} and \eqref{eq:HallHt},
\begin{align}\label{eq:E1k-1}
\begin{aligned}
\sqrt{E\smash{_{1,\le\kk-1}}(\zz)}
&\lesssim_*
\|u\|_{\sHX^{\kk}(\homMfd_{\zz})}
\lesssim_*
\|u\|_{\Ht^{\kk}(\homMfd_{\zz})}+\|\Fvec\|_{\sHX^{\kk-1}(\homMfd_{\zz})}\\
&\lesssim_*
e_{\le\kk}(\zz)+\|\Fvec\|_{\sHX^{\kk-1}(\homMfd_{\zz})}\\
\sqrt{E\smash{_{1,\le\kk-2}}(\zz)}
&\lesssim_*
e_{\le\kk-1}(\zz)+\|\Fvec\|_{\sHX^{\kk-2}(\homMfd_{\zz})}
\end{aligned}
\end{align}

\claimheader{Claim:}
There exists a real number $\CqEE_0>0$ that depends only on 
\eqref{eq:constEE} and on $\CM$,
such that for all $\kk\le\NN$
and $\zz\le0$:
\begin{align}\label{eq:EnergyEstimateENEW}
\tfrac{d}{d\zz}E_{0,\le\kk}(\zz)
&\le
2\Big(
\LdivestNEW{\LMat,\Amat(u)}(\zz)
+\NN \max\{0, \CcomTang{\Amat(u),\Ccomp}(\zz)\}\Big) E_{0,\le\kk}(\zz)\nonumber\\
&\ +
2\CqEE_0 
\Big(
|\CcomTransv{\Amat(u),\Ccomp}(\zz)|
+
\|u\|_{\Ct^{{\lfloor \frac{\NN+1}{2} \rfloor}}(\homMfd_{\zz})}
+
\|\Fvec\|_{\sCX^{\lfloor \frac{\NN+1}{2}\rfloor-1}(\homMfd_{\zz})}
\Big) E_{0,\le\kk}(\zz)\nonumber\\
&\ +
2\CqEE_0 \Big(e_{\le\kk-1}(\zz)
+
2\CqEE_0 \|\Fvec\|_{\sHX^{\NN}(\homMfd_{\zz})}\Big)e_{\le\kk}(\zz)
\end{align}

\claimheader{Proof of claim:}
By Part 2 of Lemma \ref{lem:Lindiffenergyineq} (see Table \ref{tab:ApplyLinearEE}),
for all $\kk\le\NN$, $\zz\le0$:
\begin{align*}
\tfrac{d}{d\zz}E_{0,\kk}(\zz)
-
&2\left(\LdivestNEW{\LMat,\Amat(u)}(\zz)+\kk\CcomTang{\Amat(u),\Ccomp}(\zz)\right) E_{0,\kk}(\zz)
\;\lesssim_{*}\;
W_0+\dots+W_5
\end{align*}
where, suppressing evaluation at $\zz$,
\begin{align*}
W_0 
&=
\sqrt{E_{0,\kk}} \sqrt{E_{1,\kk-1}} |\CcomTransv{\Amat(u),\Ccomp}|\\
%--------------------%
W_1
&=
\|\LMat\|_{\Ct^{\kk}(\homMfd_{\zz})} \sqrt{E_{0,\kk}}  e_{\le \kk-1} 
\le \CLA \sqrt{E_{0,\kk}}  e_{\le \kk-1} \\
%--------------------%
W_2
&=
\sqrt{E_{0,\kk}} \big(\tsum_{\indI\in\indset_{0,\le\kk}}
\tint_{\homMfd_{\zz}}\|X^{\indI}(\BSHS(u,u))\|^2 \muMz\big)^{\frac12}\\
%--------------------%
W_3
&=
\sqrt{E_{0,\kk}} \big(\tsum_{\indI\in\indset_{0,\le\kk}}
\tint_{\homMfd_{\zz}}\|\Fvec_{\indI}\|^2 \muMz\big)^{\frac12}
\lesssim_{\nn,\NN}
\sqrt{E_{0,\kk}} \|\Fvec\|_{\Ht^{\kk}(\homMfd_{\zz})}
\\
%--------------------%
W_4
&=
\tsum_{\substack{
\indI,\indJ\in \indset_{0,\le \kk} \\
\indK\in\indset_{0,\le k}\cup\indset_{1,\le k-1}\\
|\indJ|+|\indK| \le \kk+1 \\
|\indI|+|\indK| \le 2\kk-1
}}\sum_{i=0}^{\MdimNEW}
\tint_{\homMfd_{\zz}}\|X^{\indJ}(\AmatLin^{i})\| \|u_{\indI}\| \|u_{\indK}\|\,\muMz\\
%--------------------%
W_5
&=
\tsum_{\substack{
\indI,\indJ\in \indset_{0,\le \kk} \\
\indK \in \indset_{0,\le\kk}\cup\indset_{1,\le \kk-1}\\
|\indJ|+|\indK| \le \kk+1 \\
|\indI|+|\indK| \le 2\kk-1
}}\sum_{i=0}^{\MdimNEW}
\tint_{\homMfd_{\zz}}\|X^{\indJ}(\AmatBil^i(u))\| \|u_{\indI}\| \|u_{\indK}\|\,\muMz
\end{align*}
where we use \eqref{eq:matbounds} to absorb the dependency
of the constant on $\Ccomp$ into $\CLA$, 
and where in the estimate for $W_1$ we use \eqref{eq:matbounds}.
We estimate the terms separately:
\begin{itemize}
\item $W_0$:
Using \eqref{eq:E1k-1},
\begin{align*}
W_{0}
&\lesssim_*
e_{\le\kk}|\CcomTransv{\Amat(u),\Ccomp}|(e_{\le\kk}
+
\|\Fvec\|_{\sHX^{\kk-1}(\homMfd_{\zz})})\\
&\lesssim
e_{\le\kk}^2 |\CcomTransv{\Amat(u),\Ccomp}|
+
e_{\le\kk}\|\Fvec\|_{\sHX^{\kk-1}(\homMfd_{\zz})}
\end{align*}
where in the last step we use 
\[ 
|\CcomTransv{\Amat(u),\Ccomp}|
\lesssim_*
(1+\|\Ccomp\|_{\sCX^0(\homMfd_{\zz})})\|\Amat(u)\|_{\Ct^1(\homMfd_{\zz})}
\lesssim_*
1
\]
by Lemma \ref{lem:kappaupperbound}
and \eqref{eq:matbounds} (using $\NN\ge1$) and \eqref{eq:Cubound}.

\item $W_2$: 
For each $\indI\in\indset_{0,\le k}$ and 
at every point on $\homMfd_{\zz}$ we have
\begin{align*}
\|X^{\indI}\BSHS(u,u)\|
&\lesssim_*
\tsum_{\substack{\indJ,\indK\in\indset_{0,\le k} \\ 
|\indJ|+|\indK|\le k}} \|u_{\indJ}\|\|u_{\indK}\|
\lesssim_*
\|u\|_{\Ct^{{\lfloor \frac{\NN}{2} \rfloor}}(\homMfd_{\zz})}
\tsum_{\indJ\in\indset_{0,\le k}} \|u_{\indJ}\|
\end{align*}
using \eqref{eq:matbounds}.
Using \eqref{eq:A(zeta)posBIL} we obtain
$
W_2 
\lesssim_*
\|u\|_{\Ct^{{\lfloor \frac{\NN}{2} \rfloor}}(\homMfd_{\zz})}
e_{\le\kk}^2
$.

\item $W_4$:
By \eqref{eq:matbounds}, Cauchy Schwarz and \eqref{eq:A(zeta)posBIL},
\begin{align*}
W_4
&\lesssim_*
\tsum_{\substack{
\indI,\indK\in \indset_{0,\le \kk} \\
|\indI|+|\indK| \le 2\kk-1
}} 
(\tint_{\homMfd_{\zz}} \|u_{\indI}\|^2\,\muMz)^{\frac12}
(\tint_{\homMfd_{\zz}} \|u_{\indK}\|^2\,\muMz)^{\frac12}\\
&\quad+
\tsum_{\substack{
\indI\in\indset_{0,\le \kk}\\
\indK\in \indset_{1,\le \kk-1} \\
|\indI|+|\indK| \le 2\kk-1
}} 
(\tint_{\homMfd_{\zz}} \|u_{\indI}\|^2\,\muMz)^{\frac12}
(\tint_{\homMfd_{\zz}} \|u_{\indK}\|^2\,\muMz)^{\frac12}\\
%------------------------
&\lesssim_*
e_{\le\kk} e_{\le\kk-1}
+
\sqrt{E_{1,\le\kk-1}} e_{\le\kk-1}
+
\sqrt{E_{1,\le\kk-2}} e_{\le\kk}
%------------------------
\end{align*}
By \eqref{eq:E1k-1},
\begin{align*}
\sqrt{E_{1,\le\kk-1}} e_{\le\kk-1}
&\lesssim_*
e_{\le\kk}e_{\le\kk-1}+\|\Fvec\|_{\sHX^{\kk-1}(\homMfd_{\zz})}e_{\le\kk-1}\\
\sqrt{E_{1,\le\kk-2}} e_{\le\kk}
&\lesssim_*
e_{\le\kk-1}e_{\le\kk}+\|\Fvec\|_{\sHX^{\kk-2}(\homMfd_{\zz})}e_{\le\kk}
\end{align*}
Thus
$
W_4
\lesssim_*
e_{\le\kk} e_{\le\kk-1}
+\|\Fvec\|_{\sHX^{\kk}(\homMfd_{\zz})}e_{\le\kk}
$.

%-----------------------%
\item $W_5$:
Using \eqref{eq:matbounds} we obtain $W_5\lesssim_* W_{5,1}+W_{5,2}$ where
\begin{align*}
W_{5,1}
&=
\tsum_{\substack{
\indI,\indJ,\indK\in \indset_{0,\le \kk} \\
|\indJ|+|\indK| \le \kk+1 
}} 
\tint_{\homMfd_{\zz}}
\|u_{\indJ}\| \|u_{\indI}\| \|u_{\indK}\|\,\muMz\\
W_{5,2}
&=
\tsum_{\substack{
\indI,\indJ\in \indset_{0,\le \kk} \\
\indK\in\indset_{1,\le\kk-1}\\
|\indJ|+|\indK| \le \kk+1 
}} 
\tint_{\homMfd_{\zz}}
\|u_{\indJ}\| \|u_{\indI}\| \|u_{\indK}\|\,\muMz
\end{align*}
Using Cauchy Schwarz and \eqref{eq:A(zeta)posBIL},
\begin{align*}
W_{5,1}&\lesssim_*
e_{\le\kk}
\tsum_{\substack{
\indJ,\indK\in \indset_{0,\le \kk} \\
|\indJ|+|\indK| \le \kk+1
}} 
\big(\tint_{\homMfd_{\zz}}\|u_{\indJ}\|^2  \|u_{\indK}\|^2\,\muMz\big)^{\frac12}\\
&\lesssim_*
e_{\le\kk}^2\|u\|_{\Ct^{\lfloor \frac{\NN+1}{2}\rfloor}(\homMfd_{\zz})}\\
%---------------------------%
%---------------------------%
W_{5,2}
	&\lesssim_*
	e_{\le\kk}
	\tsum_{\substack{
	\indJ\in \indset_{0,\le \kk} \\
	\indK\in\indset_{1,\le\kk-1}\\
	|\indJ|+|\indK| \le \kk+1 
	}} 
	\big(\tint_{\homMfd_{\zz}}
	\|u_{\indJ}\|^2 \|u_{\indK}\|^2\,\muMz\big)^{\frac12}\\
	&\lesssim_*
	e_{\le\kk}
	\big(
	\|u\|_{\Ct^{\lfloor \frac{\NN+1}{2}\rfloor}(\homMfd_{\zz})}
	\sqrt{E_{1,\le\kk-1}}
	+
	\|u\|_{\sCX^{\lfloor \frac{\NN+1}{2}\rfloor}(\homMfd_{\zz})}
	e_{\le\kk}
	\big)
\intertext{
Using \eqref{eq:E1k-1}, \eqref{eq:CallCt} and \eqref{eq:Cubound},}
W_{5,2}
	&\lesssim_*
	e_{\le\kk}
	\|u\|_{\Ct^{\lfloor \frac{\NN+1}{2}\rfloor}(\homMfd_{\zz})}
	\big(e_{\le\kk}+\|\Fvec\|_{\sHX^{\kk-1}(\homMfd_{\zz})}\big)
	\\
	&\quad+
	e_{\le\kk}^2
	\big(\|u\|_{\Ct^{\lfloor \frac{\NN+1}{2}\rfloor}(\homMfd_{\zz})}
	+
	\|\Fvec\|_{\sCX^{\lfloor \frac{\NN+1}{2}\rfloor-1}(\homMfd_{\zz})}
	\big)\\
	&\lesssim_*
	e_{\le\kk}^2
	\big(\|u\|_{\Ct^{\lfloor \frac{\NN+1}{2}\rfloor}(\homMfd_{\zz})}
	+
	\|\Fvec\|_{\sCX^{\lfloor \frac{\NN+1}{2}\rfloor-1}(\homMfd_{\zz})}
	\big)
	+
	e_{\le\kk}
	\|\Fvec\|_{\sHX^{\kk-1}(\homMfd_{\zz})}
\end{align*}
Thus
\[ 
W_5
\lesssim_*
e_{\le\kk}^2
	\big(\|u\|_{\Ct^{\lfloor \frac{\NN+1}{2}\rfloor}(\homMfd_{\zz})}
	+
	\|\Fvec\|_{\sCX^{\lfloor \frac{\NN+1}{2}\rfloor-1}(\homMfd_{\zz})}
	\big)
	+
	e_{\le\kk}
	\|\Fvec\|_{\sHX^{\kk-1}(\homMfd_{\zz})}
\]
\end{itemize}
Collecting terms, we obtain that for each $\kk\le\NN$,
\begin{align*}
&\tfrac{d}{d\zz} E_{0,\kk}(\zz)
-
2\big(\LdivestNEW{\LMat,\Amat(u)}(\zz)+\kk\CcomTang{\Amat(u),\Ccomp}(\zz)\big) E_{0,\kk}(\zz)\\
&\qquad\lesssim_*
%-----------------------------%
\big(
|\CcomTransv{\Amat(u),\Ccomp}(\zz)|
+
\|u\|_{\Ct^{\lfloor \frac{\NN+1}{2}\rfloor}(\homMfd_{\zz})}
+
\|\Fvec\|_{\sCX^{\lfloor \frac{\NN+1}{2}\rfloor-1}(\homMfd_{\zz})}
\big)E_{0,\le\kk}(\zz)\\
%-----------------------------%
&\qquad\quad+
\big(e_{\le\kk-1}(\zz) + \|\Fvec\|_{\sHX^{\kk}(\homMfd_{\zz})}\big)
e_{\le\kk}(\zz)
\end{align*}
On the left hand side replace $\kk\CcomTang{\Amat(u),\Ccomp}(\zz)$
by $\NN \max\{0,\CcomTang{\Amat(u),\Ccomp}(\zz)\}$.
Then replace $\kk$ by $\kk'$ and take the sum $\sum_{\kk'=0}^{\kk}$. 
This yields 
\begin{align*}
&\tfrac{d}{d\zz} E_{0,\le\kk}(\zz)
-
2\big(\LdivestNEW{\LMat,\Amat(u)}(\zz)+\NN \max\{0,\CcomTang{\Amat(u),\Ccomp}(\zz)\}\big) E_{0,\le\kk}(\zz)\\
&\qquad\lesssim_*
%-----------------------------%
\big(
|\CcomTransv{\Amat(u),\Ccomp}(\zz)|
+
\|u\|_{\Ct^{\lfloor \frac{\NN+1}{2}\rfloor}(\homMfd_{\zz})}
+
\|\Fvec\|_{\sCX^{\lfloor \frac{\NN+1}{2}\rfloor-1}(\homMfd_{\zz})}
\big)E_{0,\le\kk}(\zz)\\
%-----------------------------%
&\qquad\quad+
\big(e_{\le\kk-1}(\zz) + \|\Fvec\|_{\sHX^{\kk}(\homMfd_{\zz})}\big)e_{\le\kk}(\zz)
\end{align*} 
Using  $\|\Fvec\|_{\sHX^{\kk}(\homMfd_{\zz})}\le\|\Fvec\|_{\sHX^{\NN}(\homMfd_{\zz})} $
for each $\kk\le\NN$, 
the claim \eqref{eq:EnergyEstimateENEW} follows.

The inequality \eqref{eq:EnergyEstimateENEW} implies\footnote{%
In \eqref{eq:EnergyEstimateEsquareroot}, which is used to
derive \eqref{eq:eNNest}, beware that 
$e_{\le\kk}(\zz)$ may not be differentiable in $\zz$ 
when $E_{0,\le\kk}(\zz)=0$.
To make the derivation rigorous,
one can use the regularized 
$\tilde{e}_{\le\kk}(z) = \sqrt{E_{0,\le\kk}(\zz)+\eps f(\zz)}$
where $\eps\in(0,1]$ and 
where $f(\zz)>0$ is the solution of 
$\frac{d}{d\zz} f(\zz) = 2\big(\LdivestNEW{\LMat,\Amat(u)}(\zz)+\NN \max\{0,\CcomTang{\Amat(u),\Ccomp}(\zz)\}\big)f(\zz)$
with $f(0)=1$.
Note that $\tilde{e}_{\le\kk}(z)$ is differentiable, 
and one can check that it satisfies the same inequality \eqref{eq:EnergyEstimateEsquareroot},
independent of $\eps$.
One then obtains \eqref{eq:eNNest} for $\tilde{e}_{\le\NN}(z)$,
and then takes $\eps\downarrow0$.}
\begin{align}\label{eq:EnergyEstimateEsquareroot}
\tfrac{d}{d\zz}e_{\le\kk}(\zz)
&\le
\Big(
\LdivestNEW{\LMat,\Amat(u)}(\zz)
+\NN \max\{0, \CcomTang{\Amat(u),\Ccomp}(\zz)\}\Big) e_{\le\kk}(\zz)\nonumber\\
&\ +
\CqEE_0 
\Big(
|\CcomTransv{\Amat(u),\Ccomp}(\zz)|
+
\|u\|_{\Ct^{{\lfloor \frac{\NN+1}{2} \rfloor}}(\homMfd_{\zz})}
+
\|\Fvec\|_{\sCX^{\lfloor \frac{\NN+1}{2}\rfloor-1}(\homMfd_{\zz})}
\Big) e_{\le\kk}(\zz)\nonumber\\
&\ +
\CqEE_0 e_{\le\kk-1}(\zz)
+
\CqEE_0 \|\Fvec\|_{\sHX^{\NN}(\homMfd_{\zz})}
\end{align}
Write the system of inequalities 
\eqref{eq:EnergyEstimateEsquareroot} for $k=0,\dots,\NN$ as follows:
\begin{align} \label{eq:ineqsystem}
\tfrac{d}{d\zz}\vec{e}(\zz) 
\le 
	\left(g(\zz)\one + \CqEE_0 Q \right)\vec{e}(\zz)
+ \CqEE_0 \|\Fvec\|_{\sHX^{\NN}(\homMfd_{\zz})} V
\end{align}
where 
\begin{align*}
\vec{e}(\zz)&=\left(e_{\le\NN}(\zz),e_{\le\NN-1}(\zz),\dots,e_{\le 0}(\zz)\right)^T\\
g(\zz)&=\LdivestNEW{\LMat,\Amat(u)}(\zz)
+\NN \max\{0, \CcomTang{\Amat(u),\Ccomp}(\zz)\}\\
&\quad+
\CqEE_0 
\Big(
|\CcomTransv{\Amat(u),\Ccomp}(\zz)|
+
\|u\|_{\Ct^{{\lfloor \frac{\NN+1}{2} \rfloor}}(\homMfd_{\zz})}
+
\|\Fvec\|_{\sCX^{\lfloor \frac{\NN+1}{2}\rfloor-1}(\homMfd_{\zz})}
\Big)\\
V&=(1,1,\dots,1)^T
\end{align*}
and where $Q$ is the $(\NN+1)\times(\NN+1)$-matrix
given by $Q_{i,i+1}=1$ and $Q_{i,j}=0$ if $j\neq i+1$.
Note that $Q^{\NN+1}=0$, and that by Definition \ref{def:PGammaKDef},
\[ 
\kprop^{\LMat,\Amat(u),\Fvec,\Ccomp}_{\NN,u,\CqEE_0}(\zz_1,\zz_0)
=
\exp(\tint_{\zz_0}^{\zz_1}g(\zz')d\zz')
\]
Thus the propagator of the linear system of \eqref{eq:ineqsystem}
is given, for all $\zz_0,\zz_1\le0$, by
\[ 
P_0(\zz_1,\zz_0) 
= 
\kprop^{\LMat,\Amat(u),\Fvec,\Ccomp}_{\NN,u,\CqEE_0}(\zz_1,\zz_0)
\exp((\zz_1-\zz_0) \CqEE_0 Q )
\]

\claimheader{Claim:}
For all $\zz\le0$ the following estimate holds componentwise:
\begin{equation}\label{eq:veest}
\vec{e}(\zz) \le 
\CqEE_0\tint_{-\infty}^{\zz} P_0(\zz,\zz')V\|\Fvec\|_{\sHX^{\NN}(\homMfd_{\zz'})} \,d\zz'
\end{equation}

\claimheader{Proof of claim:}
Abbreviate $G(\zz)=g(\zz)\one + \CqEE_0 Q$.
For all $\zz'\le \zz\le 0$ we have
\begin{align*}
\tfrac{d}{d\zz'} (P_0(\zz,\zz') \vec{e}(\zz'))
	&=
	-P_0(\zz,\zz')G(\zz') \vec{e}(\zz')
	+P_0(\zz,\zz') \tfrac{d}{d\zz'}\vec{e}(\zz')
\end{align*}
All entries of $P_0(\zz,\zz')$ are non-negative using $\zz'\le \zz\le 0$
(this would fail for $\zz'>\zz$).
Thus we can use \eqref{eq:ineqsystem} in the second term on the right, which yields
\begin{align*}
\tfrac{d}{d\zz'} (P_0(\zz,\zz') \vec{e}(\zz'))
	&\le
	\CqEE_0 P_0(\zz,\zz') V \|\Fvec\|_{\sHX^{\NN}(\homMfd_{\zz'})}
\end{align*}
Using compact support of $u$ and $\Fvec$,
integrating over $\int_{-\infty}^{\zz} d\zz'$ yields \eqref{eq:veest}.

Since $Q^{\NN+1}=0$ we have 
$\exp((\zz_1-\zz_0) \CqEE_0 Q ) =
\tsum_{i=0}^{\NN} \frac{1}{i!} ((\zz_1-\zz_0)\CqEE_0 Q)^{i} $, 
thus 
\begin{align}
|P_0(\zz_1,\zz_0)|
&\lesssim_{\NN,\CqEE_0}
\kprop^{\LMat,\Amat(u),\Fvec,\Ccomp}_{\NN,u,\CqEE_0}(\zz_1,\zz_0)
(1+|\zz_1-\zz_0|)^{\NN}
\label{eq:pest}
\end{align}
where the estimate is understood component-wise.
With \eqref{eq:veest} this yields
\begin{align}\label{eq:eNNest}
e_{\le\NN}(\zz) 
	\lesssim_{\NN,\CqEE_0} 
	\tint_{-\infty}^{\zz}
	\kprop^{\LMat,\Amat(u),\Fvec,\Ccomp}_{\NN,u,\CqEE_0}(\zz,\zz') (1+|\zz-\zz'|)^{\NN}
	\|\Fvec\|_{\sHX^{\NN}(\homMfd_{\zz'})} \,d\zz'
\end{align}
Together with \eqref{eq:A(zeta)posBIL} we obtain
\[ 
\|u\|_{\Ht^{\NN}(\homMfd_{\zz})}
	\le
	\CqEE_1 
	\tint_{-\infty}^{\zz}
	\kprop^{\LMat,\Amat(u),\Fvec,\Ccomp}_{\NN,u,\CqEE_0}(\zz,\zz') (1+|\zz-\zz'|)^{\NN}
	\|\Fvec\|_{\sHX^{\NN}(\homMfd_{\zz'})} \,d\zz'
\]
for a constant $\CqEE_1>0$ depending only on \eqref{eq:constEE} and $\CM$.
This proves Part 2 under the 
admissible largeness assumption $\CqEE\ge \max\{\CqEE_0,\CqEE_1\}$.
\qed
\end{proof}
%-----------------------------------%

\subsection{A semiglobal existence and uniqueness theorem}

We state and prove the main result of Section \ref{sec:Abstract},
that is, existence and uniqueness for a 
class of quasilinear symmetric hyperbolic systems on $\homMfd$
(Theorem \ref{thm:nonlinEE}, \ref{thm:AbstractUniqueness}).
The proof is based on the a priori energy estimates in Section \ref{sec:EnergyEstimate}.
The result will be applied to the Einstein equations in 
Section \ref{sec:SpaceinfConstruction}.
%-----------------------------%

Recall that repeated indices $i,j$
are implicitly summed over $0\dots\MdimNEW$.
%-----------------------------%
\begin{theorem}\label{thm:nonlinEE}
Let $\Mcpt, X_0,\dots,X_{\MdimNEW}, \muM$
be as in Section \ref{sec:AbstractGeom}.
For all 
\begin{align}\label{eq:AbstrExConstants}
\nn\in\Z_{\ge1}
\qquad
\NN\in\Z_{\ge\MdimNEW+3}
\qquad
\Cpos \ge1 
\qquad
\CLA >0 
\end{align}
there exists $\Clarge>0$
such that for all $\delta>0$ there exists
$\Csmall\in(0,1]$ such that for all
\begin{align}\label{eq:ATHMMats}
\begin{aligned} 
\AmatLin^i 
	&\in C^\infty(\bar\homMfd,S^2\R^{\nn}) 
	&& i=0\dots\MdimNEW \\
\qquad\qquad
\AmatBil^i 
	&\in C^\infty(\bar\homMfd,\Hom(\R^{\nn}, S^2\R^{\nn}) ) 
	&& i=0\dots\MdimNEW \\
\LMat&\in C^\infty(\bar\homMfd,\End(\R^{\nn})) \\
\BSHS&\in C^\infty(\bar\homMfd,\Hom(\R^{\nn}\otimes\R^{\nn},\R^{\nn})) \\
\Fvec&\in C^\infty(\bar\homMfd, \R^{\nn})\\
\Ccomp
	&\in
	C^\infty(\bar\homMfd,\R^{\MdimNEW})
\end{aligned}
\end{align}
the following holds.
For $\kk\in\Z_{\ge0}$ and $\zz_0,\zz_1\le0$ and $\zz\le0$ define 
\begin{align}
\kprop_{\kk}(\zz_1,\zz_0) 
	&= \textstyle
	\exp\Big(\int_{\zz_0}^{\zz_1} 
	\big(\LdivestNEW{\LMat,\AmatLin}(\zz')
	+\kk\max\{0,\CcomTang{\AmatLin,\Ccomp}(\zz')\}\big)\,d\zz'\Big)
	\label{eq:kpropdef}\\
\Fconst_{\kk}(\zz)
	&= \textstyle
	\int_{-\infty}^{\zz}\kprop_{\kk}(\zz,\zz') 
	(1+|\zz-\zz'|)^{\kk}
	\|\Fvec\|_{\sHX^{\kk}(\homMfd_{\zz'})}
	\,d\zz'
	\label{eq:Fconstdef}
	\;\in\; [0,\infty]
\end{align}
where $\AmatLin=\AmatLin^i X_i$,
and using Definitions \ref{def:ellCcomdef} and \ref{def:defCcom}.
If
\begin{enumerate}[({e}1),leftmargin=10mm]
\item \label{item:Fsmall}
$\sup_{\zz\in(-\infty,0]}\Fconst_{\NN}(\zz)\le\Csmall$
and
$\tint_{-\infty}^{0}\Fconst_{\NN}(\zz) d\zz \le \Csmall$
\item \label{item:InhSmall}
$\sup_{\zz\in(-\infty,0]}\|\Fvec\|_{\sHX^{\NN-1}(\homMfd_{\zz})}\le\Csmall$
and
$\tint_{-\infty}^{0}\|\Fvec\|_{\sHX^{\NN-1}(\homMfd_{\zz})} d\zz \le \CLA$
\item \label{item:KappaTransv}
$\tint_{-\infty}^{0}|\CcomTransv{\AmatLin,\Ccomp}(\zz)| d\zz \le \CLA$
\item \label{item:Apos}
For all $w\in\R^{\nn}$, if $\sqrt{w^Tw}\le\delta$ then 
at every point on $\homMfd$:
\begin{subequations}\label{eq:Apos}
\begin{align}
\Cpos^{-1}\one 
&\le d\zzeta\big((\AmatLin^i+\AmatBil^i(w))X_i\big)
\le \Cpos\one 
\label{eq:Azetapos}\\
%%%%%
\Cpos^{-1}\one 
&\le \AmatLin^0+\AmatBil^0(w)
\le \Cpos\one 
\label{eq:A0pos}\\
%%%%%
(1-\ttcoord)\Cpos^{-1} \one 
&\le 
d\ttcoord\big((\AmatLin^i+\AmatBil^i(w))X_i\big)
\le\Cpos\one
\label{eq:Atpos}
\end{align}
\end{subequations}
\item 
\label{item:CNnormsSHS}
$
\|\AmatLin^i\|_{\CX^{\NN}(\homMfd)},
\|\AmatBil^i\|_{\CX^{\NN}(\homMfd)},
\|\LMat\|_{\CX^{\NN}(\homMfd)},
\|\BSHS\|_{\CX^{\NN}(\homMfd)},
\|\Ccomp\|_{\CX^0(\homMfd)}
\le
\CLA$
\end{enumerate}
Then
there exists $u\in C^\infty(\homMfd,\R^{\nn})$ such that
\begin{subequations}\label{eq:ubasicNEW}
\begin{align}
(\AmatLin^i+\AmatBil^i(u)) X_i u 
	&\;=\; \LMat u + \BSHS(u,u) + \Fvec\label{eq:ueq}\\
u|_{\ttcoord=0} 
	&\;=\; 0\label{eq:udata}\\
\sqrt{u^Tu} 
	&\;\le\; \delta \qquad
\text{on $\homMfd$} \label{eq:uinf}
\end{align}
\end{subequations}
Furthermore:
\begin{itemize}
\item \textbf{Part 0.}
\new{$u$ is unique, in the sense that
for every $u'\in C^\infty(\homMfd,\R^{\nn})$ 
that satisfies \eqref{eq:ueq} and \eqref{eq:udata}
with $u$ replaced by $u'$ one has $u=u'$.}

\item \textbf{Part 1.}
For all $\zz\le0$:
\begin{subequations}\label{eq:uconclNEW}
\begin{align}
\|u\|_{\Ht^{\NN}(\homMfd_{\zz})}
	&\;\le\;
	\Clarge\Fconst_{\NN}(\zz)
	\label{eq:uETangDer}\\
\|u\|_{\sHX^{\NN}(\homMfd_{\zz})}
	&\;\le\;
	\Clarge(\Fconst_{\NN}(\zz) + \|\Fvec\|_{\sHX^{\NN-1}(\homMfd_{\zz})})
	\label{eq:uEAllDer}
\end{align}
\end{subequations}
Furthermore $\Fconst_{\NN}(\zz)\le \kprop_{\NN}(\zz,0)\Fconst_{\NN}(0)$.
\item 
\textbf{Part 2.}
For every $\kk\in\Z_{\ge\NN}$ and every $\CHigherIn\new{>}0$, if 
\begin{enumerate}[({e}1),leftmargin=10mm,resume]
\item 
\label{item:Higherassp12}
\new{$\Fconst_{\kk}(0)<\infty$} 
and 
$\sup_{\zz\in(-\infty,0]}\Fconst_{\kk-1}(\zz)\le\CHigherIn$ 
and 
$\int_{-\infty}^{0}\Fconst_{\kk-1}(\zz)\, d\zz
\le \CHigherIn$ 
\item \label{item:InhSmallHigher}
$\sup_{\zz\in(-\infty,0]}\|\Fvec\|_{\sHX^{\kk-2}(\homMfd_{\zz})}\le \CHigherIn$
and
$\tint_{-\infty}^{0}\|\Fvec\|_{\sHX^{\kk-2}(\homMfd_{\zz})} d\zz \le \CHigherIn$

\item \label{item:Higherassp3}
$
\|\AmatLin^i\|_{\CX^{\kk}(\homMfd)},
\|\AmatBil^i\|_{\CX^{\kk}(\homMfd)},
\|\LMat\|_{\CX^{\kk}(\homMfd)},
\|\BSHS\|_{\CX^{\kk}(\homMfd)}
\le
\CHigherIn$
\end{enumerate}
then for all $\zz\le0$:
\begin{subequations}\label{eq:HigherAbstract}
\begin{align}
\|u\|_{\Ht^{\kk}(\homMfd_{\zz})}
\;&\lesssim_{\CM,\nn,\kk,\Cpos,\CLA,\CHigherIn}\;
\Fconst_{\kk}(\zz)
\label{eq:Ekkest}\\
\|u\|_{\sHX^{\kk}(\homMfd_{\zz})}
\;&\lesssim_{\CM,\nn,\kk,\Cpos,\CLA,\CHigherIn}\;
\Fconst_{\kk}(\zz) + \|\Fvec\|_{\sHX^{\kk-1}(\homMfd_{\zz})}
\label{eq:EkkestX}
\end{align}
\end{subequations}
with $\CM$ as in \eqref{eq:CM}.
Furthermore $\Fconst_{\kk}(\zz)\le \kprop_{\kk}(\zz,0)\Fconst_{\kk}(0)$.
\end{itemize}
\end{theorem}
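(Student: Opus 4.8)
The plan is to obtain $u$ as a limit of solutions of truncated problems, with uniform control supplied by the a priori energy estimate of Lemma \ref{lem:Nonlindiffenergyineq} and a bootstrap, and then to read off Parts 0--2.

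First I would set up \emph{truncated problems}. Fix cutoffs $\chi_R\in C^\infty_c((-\infty,0])$, $R\in\Z_{\ge1}$, with $\chi_R=1$ on $[-R,0]$, $\supp\chi_R\subset[-R-1,0]$, $0\le\chi_R\le1$, and $\sup_R\|\chi_R^{(j)}\|_{L^\infty}<\infty$ for every $j$, and put $\Fvec^R=\chi_R\Fvec$. By \eqref{eq:Atpos} the hypersurface $\ttcoord=0$ is admissible for the quasilinear symmetric hyperbolic system, so by standard theory \cite[Section 16.1--16.3]{Taylor3} there is a maximal $t^R\in(0,1]$ and a smooth solution $u^R$ on $\homMfd\cap\{\ttcoord<t^R\}$ of \eqref{eq:ueq}--\eqref{eq:udata} with $\Fvec$ replaced by $\Fvec^R$, and if $t^R<1$ then a high-order norm of $u^R$ blows up as $\ttcoord\uparrow t^R$. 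Since $\zzeta=\mathrm{const}$ is also spacelike (by \eqref{eq:Azetapos}), and the symbol is uniformly elliptic and the coefficients uniformly bounded (by \ref{item:Apos}, \ref{item:CNnormsSHS}), finite speed of propagation with uniformly bounded speed forces $u^R\equiv0$ for $\zzeta$ sufficiently negative, because there both $\Fvec^R$ and the zero data vanish; hence $u^R$ is compactly supported in $\zzeta$, which makes Lemma \ref{lem:Nonlindiffenergyineq} applicable.

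Next the \emph{bootstrap}. Let $T^R\le t^R$ be the supremum of $t\in(0,t^R]$ with $\sqrt{(u^R)^Tu^R}\le\delta$ on $\{\ttcoord<t\}$; then $T^R>0$. On $\{\ttcoord<T^R\}$, after the affine reparametrization $\ttcoord\mapsto\ttcoord/T^R$ (which preserves all hypotheses of Lemma \ref{lem:Nonlindiffenergyineq}, strengthening \eqref{eq:A(t)posBIL}), the positivity \eqref{eq:A(zeta)posBIL}--\eqref{eq:A(t)posBIL} holds by \ref{item:Apos} along $u^R$, and \eqref{eq:Cubound}--\eqref{eq:matbounds} hold with $\CLA$ replaced by a fixed multiple of itself (the cutoffs contribute uniformly bounded factors to the norms of $\Fvec^R$), using \ref{item:CNnormsSHS} and the bootstrap bound on $\|u^R\|_{\Ct^{\lfloor(\NN+1)/2\rfloor}}$. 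Lemma \ref{lem:Nonlindiffenergyineq}, Part 2, then bounds $\|u^R\|_{\Ht^{\NN}(\homMfd_{\zz})}$ by $\CqEE$ times $\int_{-\infty}^{\zz}\kprop^{\LMat,\Amat(u^R),\Fvec^R,\Ccomp}_{\NN,u^R,\CqEE}(\zz,\zz')(1+|\zz-\zz'|)^{\NN}\|\Fvec^R\|_{\sHX^{\NN}(\homMfd_{\zz'})}\,d\zz'$. The main work is to replace the $u^R$-dependent propagator by $C\,\kprop_{\NN}(\zz,\zz')$ with $C$ depending only on the allowed parameters: one uses Lemma \ref{lem:ellkappaLip} to compare $\LdivestNEW{\LMat,\Amat(u^R)}$, $\CcomTang{\Amat(u^R),\Ccomp}$, $\CcomTransv{\Amat(u^R),\Ccomp}$ with their counterparts for $\AmatLin$ (error $\lesssim\|u^R\|_{\sCX^1}(1+\cdots)$), Lemma \ref{lem:kappaupperbound} and \ref{item:KappaTransv} to bound $\int|\CcomTransv{\AmatLin,\Ccomp}|$, the Sobolev inequality \eqref{eq:sobolevMz} together with \ref{item:InhSmall} to bound $\int\|\Fvec^R\|_{\sCX^{\lfloor(\NN+1)/2\rfloor-1}}$ (using $\NN\ge\MdimNEW+3$), and \eqref{eq:HallHt} with \eqref{eq:sobolevMz} to bound $\int\|u^R\|_{\Ct^{\lfloor(\NN+1)/2\rfloor}}\lesssim\int(\|u^R\|_{\Ht^{\NN}}+\|\Fvec^R\|_{\sHX^{\NN-1}})$, which by \ref{item:Fsmall}, \ref{item:InhSmall} and the estimate being proved is $\lesssim\Csmall+\CLA$. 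Combining (and $\|\Fvec^R\|_{\sHX^{\NN}}\lesssim\|\Fvec\|_{\sHX^{\NN}}$ uniformly in $R$) gives $\|u^R\|_{\Ht^{\NN}(\homMfd_{\zz})}\le\Clarge\Fconst_{\NN}(\zz)\le\Clarge\Csmall$ on $\{\ttcoord<T^R\}$ with $\Clarge$ independent of $R,T^R$; by \eqref{eq:sobolevMz} and \eqref{eq:HallHt} also $\sqrt{(u^R)^Tu^R},\,\|u^R\|_{\Ct^{\lfloor(\NN+1)/2\rfloor}}\le\Clarge'\Csmall$. Choosing $\Csmall\le\delta/(2\Clarge')$ improves these strictly, so $T^R=t^R$, and then the continuation criterion forces $t^R=1$; thus $u^R$ exists on all of $\homMfd$ with these uniform bounds.

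It remains to pass to the limit and prove the claims. For $R'>R$ the difference $u^{R'}-u^R$ solves a linear symmetric hyperbolic system with source vanishing for large $\zzeta$ and (uniformly small) coefficients along $u^{R'}$; the same energy estimate bounds $\|u^{R'}-u^R\|_{\Ht^{\NN}(\homMfd_{\zz})}$ by a propagator integral of $\|\Fvec^{R'}-\Fvec^R\|_{\sHX^{\NN}}$, which tends to $0$ as $R\to\infty$ uniformly for $\zz$ in compact sets since $\sup_{\zz}\Fconst_{\NN}(\zz)<\infty$ forces $\int_{-\infty}^{-R}(\cdots)\to0$. Combined with local regularity and finite speed of propagation on compact subsets (where all $C^k$-norms of the data are finite, since everything is smooth on $\bar\homMfd$), this yields $u^R\to u$ in $C^k_{\mathrm{loc}}$ for every $k$, so $u\in C^\infty(\homMfd,\R^{\nn})$ solves \eqref{eq:ueq}--\eqref{eq:udata}, and $\sqrt{u^Tu}\le\Clarge'\Csmall\le\delta/2$ gives \eqref{eq:uinf}. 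For Part 1, passing to the limit in $\|u^R\|_{\Ht^{\NN}(\homMfd_{\zz})}\le\Clarge\Fconst^R_{\NN}(\zz)$ (with $\Fconst^R_{\NN}(\zz)\to\Fconst_{\NN}(\zz)$ by dominated convergence) gives \eqref{eq:uETangDer}, then \eqref{eq:HallHt} gives \eqref{eq:uEAllDer}, and $\Fconst_{\NN}(\zz)\le\kprop_{\NN}(\zz,0)\Fconst_{\NN}(0)$ follows from the semigroup identity $\kprop_{\NN}(\zz,\zz')=\kprop_{\NN}(\zz,0)\kprop_{\NN}(0,\zz')$, from $(1+|\zz-\zz'|)\le(1+|\zz'|)$ for $\zz'\le\zz\le0$, and nonnegativity of the integrand. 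For Part 0, if $u'$ also solves \eqref{eq:ueq}--\eqref{eq:udata} then $w=u-u'$ solves a linear homogeneous symmetric hyperbolic system (principal part $(\AmatLin^i+\AmatBil^i(u))X_i$, bounded zeroth order built from $\AmatBil,\BSHS,u,u'$), which is well-posed because \eqref{eq:Apos} holds along the small $u$; finite speed of propagation (backward domains of dependence to $\ttcoord=0$ being compact) gives $w\equiv0$. For Part 2, one inducts on $\kk$ from $\NN$ to $k$, applying Lemma \ref{lem:Nonlindiffenergyineq} at level $\kk$ with no new smallness (since $u$ is already known small); its hypothesis \eqref{eq:Cubound} at level $\kk$ is met from the lower-level bounds and \eqref{eq:sobolevMz}, again using $\NN\ge\MdimNEW+3$, and the assumptions \ref{item:Higherassp12}--\ref{item:Higherassp3} (noting $\Fconst_{\kk}$ is nondecreasing in $\kk$, so the level-$k$ hypotheses imply the intermediate ones); this yields \eqref{eq:HigherAbstract}, with $\Fconst_{\kk}(\zz)\le\kprop_{\kk}(\zz,0)\Fconst_{\kk}(0)$ as before. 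The main obstacle is precisely the replacement of the $u$-dependent propagator by the clean $\kprop_{\NN}$: that step couples the Lipschitz estimates of Lemma \ref{lem:ellkappaLip} and Lemma \ref{lem:kappaupperbound}, all the smallness/boundedness hypotheses \ref{item:Fsmall}--\ref{item:CNnormsSHS}, and the treatment of the non-compact $\zzeta$-direction, and must be arranged so the bootstrap closes.
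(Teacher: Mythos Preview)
Your approach is viable in outline but differs from the paper's in two structural ways, and there is one place where your bootstrap is not yet closed.

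\textbf{Different route.} The paper bootstraps in the $\zzeta$-direction, not in $\ttcoord$: for compactly supported $\Fvec$ it runs an open--closed argument in $\zzmax$, carrying as a priori hypotheses simultaneously (i) $\|u\|_{\Ct^{\lfloor(\NN+1)/2\rfloor}}\le\min\{\delta,\CLA\}$, (ii) $\|u\|_{\Ht^{\NN}(\homMfd_{\zz})}\le 6\Ca\CqEE\Fconst_{\NN}(\zz)$, and (iii) the propagator ratio $\kprop^{\LMat,\Amat(u),\Fvec,\Ccomp}_{\NN,u,\CqEE}(\zz_1,\zz_0)/\kprop_{\NN}(\zz_1,\zz_0)\le 3\Ca$, and shows each improves by a factor. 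The open step extends the solution to slightly larger $\zzeta$ by local well-posedness, with the boundaries $\ttcoord=0$ and $\ttcoord=1$ handled by a careful triangle construction (Figure~\ref{fig:triangles}). For the general (non-compactly-supported) $\Fvec$, instead of passing to a limit the paper patches the solutions $u_{\zcut}$ on the cones $\cone^{\Cpos}_{0,\tt_0}$ using Theorem~\ref{thm:AbstractUniqueness}, exploiting that on each such cone only finitely many $\zzeta$ are relevant. This avoids any convergence argument. Your route---bootstrap in $\ttcoord$ via an affine reparametrisation, then a direct limit $R\to\infty$---can be made to work, but the paper's is better adapted to the fact that the energies live on $\zzeta$-slices: no reparametrisation is needed, and the continuation criterion (in $\zzeta$) matches the estimate directly.

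\textbf{Gap.} Your replacement of the $u^R$-dependent propagator by $C\,\kprop_{\NN}$ is circular as written: to bound $\int\|u^R\|_{\Ct^{\lfloor(\NN+1)/2\rfloor}}$ you invoke ``the estimate being proved'', i.e.\ $\|u^R\|_{\Ht^{\NN}}\le\Clarge\Fconst_{\NN}$, but that estimate itself requires the propagator bound. The paper breaks this loop by making the propagator ratio an explicit bootstrap hypothesis with its own numerical margin (the constant $\Ca$ in \eqref{eq:CaCap}), and verifying separately that it improves; you need to do the same. A smaller point: your appeal to ``standard theory'' for local existence glosses over the boundaries at $\zzeta=0$ and $\ttcoord\to1$, which the paper treats via finite-speed-of-propagation patches along the $\zzeta$-direction; your $\ttcoord$-evolution would need analogous care at $\zzeta=0$.
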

Before we prove this, we provide a stronger, localized, uniqueness statement.
For each 
$\Cpos\ge1$ and  
$(\zz_0,\tt_0)\in(-\infty,0]\times(0,1)$ define the cone
(see Figure \ref{fig:conepic})
\begin{equation}\label{eq:eqcone}
\cone^{\Cpos}_{\zz_0,\tt_0}
\;=\;
\Big\{(\zz,\tt,p)\in \homMfd\mid
\zz\le \zz_0\,,\;
\tt  \le \tt_0+ \tfrac{1-\tt_0}{2\Cpos^2}(\zz-\zz_0)
\Big\}
\end{equation}
%-------------------------------%
\begin{figure}%
\centering%
\begin{tikzpicture}[inner sep=0pt,scale=0.85]
\node (tip) at (6,2.1) {}; % 
\node (l0) at (1,0) {}; % 
\node (r0) at (6,0) {}; % 
\fill[fill=gray!50] (tip.center)--(l0.center)--(r0.center);
   \path[draw,line width=1 pt] (tip)--(l0) node[midway,anchor=south,yshift=1mm] {$S_l$};
   \path[draw,line width=1 pt] (l0)--(r0) node[midway,anchor=north,yshift=-1mm] {$S_0$};
   \path[draw,line width=1 pt] (r0)--(tip) node[midway,anchor=west,xshift=1mm] {$S_r$};;
\draw[color=black, fill=black] (tip) circle (.05);
   \draw[color=black, fill=black]  (r0) circle (.05);
   \draw[color=black, fill=black]  (l0) circle (.05);
\draw[] (0,2.5) -- (8,2.5);
\draw[->] (0,0) -- (8,0) node[anchor=west,xshift=1mm] {$\zzeta$};
\draw[->] (7.7,-0.2) -- (7.7,3) node[anchor=south,yshift=0.5mm] {$\ttcoord$};
\draw[line width=0.7 pt,densely dashed] (4.55,1.5) -- (6,1.5);
\node[anchor=north,yshift=-0.5mm] at (5.275,1.5) {$S_{\tt}$};
%\draw[line width=0.5 pt,densely dashed] (6,1.5) -- (7,1.5) node[anchor=west] {$S_{\tt}$};
%
\draw[] (7.6,1.5) -- (7.8,1.5) 
	node[anchor=west,xshift=1mm] {$\tt$};
\node[yshift=-1mm] (c) at (4.5,0.7) {$\cone^{\Cpos}_{\zz_0,\tt_0}$};
\node[anchor=west,xshift=1mm] at (tip) {$(\zz_0,\tt_0)$};
\node[anchor=north,yshift=-1mm] at (l0) {$(\zz_0-\tfrac{2\Cpos^2t_0}{1-t_0},0)$};
\node[anchor=north,yshift=-1mm] at (r0) {$(\zz_0,0)$};
%\node[label=left:{$x$}] (n1) at (0,0) {};
\node[anchor=east,xshift=-1mm] at (0,0) {\footnotesize$\ttcoord=0$};
\node[anchor=east,xshift=-1mm] at (0,2.5) {\footnotesize$\ttcoord=1$};
\end{tikzpicture}
\captionsetup{width=115mm}
\caption{%
The gray domain depicts  
$\cone^{\Cpos}_{\zz_0,\tt_0}$,
with the factor $\Mcpt$ suppressed.
The sets $S_l$, $S_r$, $S_0$ are the boundary components.
The dashed line depicts $S_t=\cone^{\Cpos}_{\zz_0,\tt_0}\cap\ttcoord^{-1}(\{\tt\})$.
}%
\label{fig:conepic}%
\end{figure}%
%-------------------------------%
Note that for every fixed $\Cpos$, 
the union of all such cones is $\homMfd$, in fact
$$
\textstyle\bigcup_{\tt_0\in(0,1)}\cone^{\Cpos}_{0,\tt_0}
	=\homMfd
$$%
%%%%%%%%%%%%%%%%%%%%%%%%%
%With[{t0=0.4,\[Zeta]0=-4,q=2},
%RegionPlot[t<=t0+(\[Zeta]-\[Zeta]0)*(1-t0)/(2q^2),{\[Zeta],-10,\[Zeta]0},{t,0,1}]]
%%%%%%%%%%%%%%%%%%%%%%%%%
%-----------------------------
\begin{theorem}[Uniqueness on cones]\label{thm:AbstractUniqueness}
Let $\Mcpt, X_0,\dots,X_{\MdimNEW}, \muM$
be as in Section \ref{sec:AbstractGeom}.
For all $\nn\in\Z_{\ge1}$, all $\Cpos\ge1$,
all $(\zz_0,\tt_0) \in (-\infty,0]\times(0,1)$,
all $\AmatLin^i,\AmatBil^i,\LMat,\BSHS,\Fvec$ as in \eqref{eq:ATHMMats}
that are defined on $\cone^{\Cpos}_{\zz_0,\tt_0}$, and all
$$
u_1,u_2\in C^\infty(\cone^{\Cpos}_{\zz_0,\tt_0},\R^{\nn})
$$
if on $\cone^{\Cpos}_{\zz_0,\tt_0}$ one has
\begin{subequations}\label{eq:asspUniq}
\begin{align}
(\AmatLin^i+\AmatBil^i(u_\ell)) X_i u_\ell
	&= \LMat u_\ell + \BSHS(u_\ell,u_\ell) + \Fvec &&\text{for $\ell=1,2$}\label{eq:ueqUniq}\\
u_\ell|_{\ttcoord=0} 
	&= 0 &&\text{for $\ell=1,2$}\label{eq:udataUniq}\\
\Cpos^{-1}\one 
	&\le d\zzeta\big((\AmatLin^i+\AmatBil^i(u_1))X_i\big)
	\le \Cpos\one 
	\label{eq:AzetaposUniq}\\
%%%%%
(1-\ttcoord)\Cpos^{-1} \one 
	&\le d\ttcoord((\AmatLin^i+\AmatBil^i(u_1))X_i)
	\le \Cpos\one 
	\label{eq:AtposUniq}
\end{align}
\end{subequations}
then $u_1=u_2$.
\end{theorem}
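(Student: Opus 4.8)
The plan is to run a localized domain-of-dependence energy estimate on the compact region $\cone^{\Cpos}_{\zz_0,\tt_0}$. The slope $\tfrac{1-\tt_0}{2\Cpos^2}$ in \eqref{eq:eqcone} is chosen precisely so that, under \eqref{eq:AzetaposUniq}--\eqref{eq:AtposUniq}, the tilted face of the cone is spacelike for the symmetric hyperbolic system, so that its contribution to the energy identity carries a favorable sign and can be discarded. Note that $\cone^{\Cpos}_{\zz_0,\tt_0}$ is compact (it stays in $\zzeta\in[\zz_0-\tfrac{2\Cpos^2\tt_0}{1-\tt_0},\zz_0]$, in $\ttcoord\in[0,\tt_0]\subset[0,1)$, and $\Mcpt$ is closed), so all coefficient functions appearing below are bounded there.

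First I would set $w = u_1 - u_2$, $\a^i = \AmatLin^i + \AmatBil^i(u_1)$, $\a = \a^i X_i$, and subtract the two equations \eqref{eq:ueqUniq}. Since $\AmatBil^i$ and $\BSHS$ are (bi)linear, the difference rearranges to a \emph{linear} symmetric hyperbolic equation
\[
\a^i X_i w \;=\; \tilde\LMat w\,,\qquad
\tilde\LMat w \;=\; \LMat w - \AmatBil^i(w)(X_i u_2) + \BSHS(w,u_1) + \BSHS(u_2,w)
\]
on $\cone^{\Cpos}_{\zz_0,\tt_0}$, where $\tilde\LMat$ is $C^\infty$-linear in $w$ with coefficients smooth, hence bounded, on $\cone^{\Cpos}_{\zz_0,\tt_0}$; moreover $w|_{\ttcoord=0}=0$ by \eqref{eq:udataUniq}. (Observe that, as in Lemma \ref{lem:Lindiffenergyineq}, no positivity of $\a^0$ is needed for the uniqueness argument, only \eqref{eq:AzetaposUniq} and \eqref{eq:AtposUniq}.)

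Next I would form the current $\current = (w^T\a^i w) X_i$ and use $(\a^i)^T=\a^i$ together with the equation for $w$ to obtain $\div_{\muM}(\current) = 2 w^T\tilde\LMat w + w^T\div_{\muM}(\a)w$, which is $\lesssim \|w\|^2$ pointwise on the cone. For $\zz\le\zz_0$ write $\cone_{\zz} = \cone^{\Cpos}_{\zz_0,\tt_0}\cap\homMfd_{\zz}$ and $\cone_{\le\zz} = \cone^{\Cpos}_{\zz_0,\tt_0}\cap\homMfd_{\le\zz}$, the latter being compact and empty once $\zz\le\zz_0-\tfrac{2\Cpos^2\tt_0}{1-\tt_0}$. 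Applying Stokes' theorem to $\div_{\muM}(\current)\muMform$ over $\cone_{\le\zz}$ (with the orientation making $\muMform$ positive), the boundary splits into: the level set $\cone_{\zz}$, contributing $E(\zz):=\int_{\cone_\zz}w^Td\zzeta(\a)w\,\muMz \ge \tfrac1\Cpos\int_{\cone_\zz}\|w\|^2\,\muMz$ by \eqref{eq:AzetaposUniq}; the part of $\ttcoord=0$, which vanishes since $w=0$ there; and the tilted face $S_l = \{\ttcoord = \tt_0 + \tfrac{1-\tt_0}{2\Cpos^2}(\zzeta-\zz_0)\}$, whose outward conormal is proportional to $d\ttcoord - \tfrac{1-\tt_0}{2\Cpos^2}d\zzeta$. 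Since $\ttcoord\le\tt_0$ on $S_l$, one has by \eqref{eq:AtposUniq} and \eqref{eq:AzetaposUniq}
\[
d\ttcoord(\a) - \tfrac{1-\tt_0}{2\Cpos^2}d\zzeta(\a)
\;\ge\; (1-\tt_0)\tfrac1\Cpos\one - \tfrac{1-\tt_0}{2\Cpos^2}\Cpos\one
\;=\; \tfrac{1-\tt_0}{2\Cpos}\one \;\ge\; 0\,,
\]
so the $S_l$-contribution is $\ge0$ and may be dropped. Hence $E(\zz) \le \int_{\cone_{\le\zz}}|\div_{\muM}(\current)|\,\muM \lesssim \int_{\cone_{\le\zz}}\|w\|^2\,\muM \lesssim \int_{-\infty}^{\zz}E(\zz')\,d\zz'$, using $\|w\|^2\lesssim w^Td\zzeta(\a)w$ and Fubini ($\muM = |d\zzeta|\,\muMz$ on slices, exactly as in Lemma \ref{lem:Lindiffenergyineq}). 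Since $E$ is continuous, nonnegative, and vanishes for $\zz$ sufficiently negative, setting $G(\zz)=\int_{-\infty}^{\zz}E$ gives $G'\le C G$, $G\ge0$, $G=0$ near $-\infty$, whence $G\equiv0$, so $E\equiv0$ on $(-\infty,\zz_0]$ and $w=0$ on $\cone^{\Cpos}_{\zz_0,\tt_0}$, i.e.\ $u_1=u_2$.

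The step I expect to be the main obstacle is the bookkeeping at the tilted boundary $S_l$: one must track orientations and induced conormals carefully in Stokes' theorem (as in the proof of Lemma \ref{lem:Lindiffenergyineq}, but now with a genuine lateral face rather than the coordinate face $\ttcoord^{-1}(\{1\})$), in order to verify that the $S_l$-contribution really enters with the sign computed above and that the slope $\tfrac{1-\tt_0}{2\Cpos^2}$ is exactly what \eqref{eq:AzetaposUniq}--\eqref{eq:AtposUniq} accommodate. The remaining ingredients — the algebraic reduction to the linear equation for $w$, the divergence identity, and the Grönwall step — are routine.
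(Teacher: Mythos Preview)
Your proposal is correct and follows essentially the same strategy as the paper: derive a linear homogeneous symmetric hyperbolic equation for the difference, verify that the tilted face $S_l$ is spacelike via the slope computation $d\ttcoord(\a)-\tfrac{1-\tt_0}{2\Cpos^2}d\zzeta(\a)\ge \tfrac{1-\tt_0}{2\Cpos}\one$, and conclude by a standard energy--Gr\"onwall argument. The only difference is the choice of foliation: the paper slices the cone by $\ttcoord$-level sets $S_\tt$ (using the lower bound in \eqref{eq:AtposUniq} for the energy positivity, and dropping the favorable contributions from $S_l$ and the right face $S_r$), whereas you slice by $\zzeta$-level sets $\cone_\zz$ (using \eqref{eq:AzetaposUniq} for the energy positivity, and dropping the $S_l$ contribution); both foliations rest on the same $S_l$ computation and yield the same conclusion.
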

%---------------------------------%
This uniqueness theorem can be applied in particular 
with $u_1$ equal to the solution $u$ from Theorem \ref{thm:nonlinEE},
restricted to $\cone^{\Cpos}_{\zz_0,\tt_0}$.
This satisfies \eqref{eq:AzetaposUniq}, \eqref{eq:AtposUniq}
by \eqref{eq:uinf}, \eqref{eq:Azetapos}, \eqref{eq:Atpos}.
Hence the uniqueness statement in Theorem \ref{thm:nonlinEE} follows.
Further it follows that the restriction of $u$ to the cone $\cone^{\Cpos}_{\zz_0,\tt_0}$
only depends on the restriction of the data \eqref{eq:ATHMMats} to that cone.
This will be used to reduce the proof of 
Theorem \ref{thm:nonlinEE} to the case where $\Fvec$ has compact support.
%---------------------------------%
\begin{proof}[of Theorem \ref{thm:AbstractUniqueness}]
By \eqref{eq:ueqUniq} and \eqref{eq:udataUniq}, 
the difference $U=u_1-u_2$ satisfies 
the linear homogeneous symmetric hyperbolic system
\begin{align}\label{eq:linEqUUniq}
(\AmatLin^i + \AmatBil^i(u_1)) X_i U &= \tilde \LMat U
&
U|_{\ttcoord=0} &= 0
\end{align}
with the $C^\infty$-linear term 
\begin{align*}
\tilde \LMat U &= \LMat U - \AmatBil^i(U) X_i u_2 + \BSHS(u_1,U)+\BSHS(U,u_2)
\end{align*}
%%%%%%%%%%%%%%%%%%%%%%%%%%%%
%A[i_][u_+v_]:=A[i][u]+A[i][v];
%B[u_+v_,w_]:=B[u,w]+B[v,w];
%B[w_,u_+v_]:=B[w,u]+B[w,v];
%X[i_][u_+v_]:=X[i][u]+X[i][v];
%L[u_+v_]:=L[u]+L[v];
%L[-u_]:=-L[u];
%B[u_,-v_]:=-B[u,v];
%B[-u_,v_]:=-B[u,v];
%A[i_][-u_]:=-A[i][u];
%X[i_][-u_]:=-X[i][u];
%eq[u_]:=Sum[(a[i]+A[i][u])*X[i][u],{i,0,3}]-(L[u]+B[u,u]+F);
%U=u1-u2;
%atilde[i_]:=a[i]+A[i][u1];
%Ltilde[u_]:=L[u]-Sum[A[i][u]*X[i][u2],{i,0,3}]+B[u1,u]+B[u,u2];
%eqdiff=Sum[atilde[i]*X[i][U],{i,0,3}]-Ltilde[U];
%eqdiff-(eq[u1]-eq[u2])//Expand
%%%%%%%%%%%%%%%%%%%%%%%%%%%%
The boundary of the cone $\cone^{\Cpos}_{\zz_0,\tt_0}$ has three components
that we denote by $\p\cone^{\Cpos}_{\zz_0,\tt_0} = S_l\cup S_r \cup S_0$
as indicated in Figure \ref{fig:conepic}.
For $\tt\in[0,\tt_0)$ we define 
$S_{\tt}=\cone^{\Cpos}_{\zz_0,\tt_0}\cap \ttcoord^{-1}(\{\tt\})$,
which coincides with the boundary component $S_0$ when $\tt=0$.
We claim that $S_l,S_r,S_{\tt}$ are spacelike, with
\begin{subequations}\label{eq:conecausal}
\begin{align}
d\zzeta\big((\AmatLin^{i} + \AmatBil^{i}(u_1)) X_i\big) 
	&>0 \qquad \text{on $S_r$}
	\label{eq:conecausal_Sr}\\
\nu\big((\AmatLin^{i} + \AmatBil^{i}(u_1))X_i\big) &>0 \qquad \text{on $S_l$}
	\label{eq:conecausal_Sl}\\
d\ttcoord\big((\AmatLin^{i} + \AmatBil^{i}(u_1))X_i\big) 
	&>0 \qquad \text{on $S_{\tt}$ for each $\tt\in[0,t_0)$}
	\label{eq:conecausal_St}
\end{align}
\end{subequations}
where $\nu=d\ttcoord-\frac{1-\tt_0}{2\Cpos^2}d\zzeta$ 
is an outward pointing normal one-form on $S_l$.

Proof of \eqref{eq:conecausal}: 
For \eqref{eq:conecausal_Sr} use \eqref{eq:AzetaposUniq};
for \eqref{eq:conecausal_St} use \eqref{eq:AtposUniq} and $\tt_0<1$;
for \eqref{eq:conecausal_Sl} note that on $S_{l}$ we have
\begin{align*}
\nu\big((\AmatLin^{i} + \AmatBil^{i}(u_1))X_i\big)
	&=\textstyle
	d\ttcoord\big((\AmatLin^{i} + \AmatBil^{i}(u_1))X_i\big)
	-
	\tfrac{1-\tt_0}{2\Cpos^2} 
	d\zzeta\big((\AmatLin^{i} + \AmatBil^{i}(u_1))X_i\big)\\
	&\ge\textstyle
	\Cpos^{-1}((1-\ttcoord)-\frac{1}{2} (1-\tt_0) )
	\ge
	\frac12\Cpos^{-1}(1-\tt_0)
	>0
\end{align*}
where we use \eqref{eq:AzetaposUniq} and \eqref{eq:AtposUniq},
the fact that $\ttcoord\le\tt_0$ on $S_l$,
and $\tt_0<1$.

\new{Given \eqref{eq:conecausal}, 
one obtains $U=0$ using 
standard energy estimates for the linear homogeneous symmetric hyperbolic system \eqref{eq:linEqUUniq}, with energies over $S_{\tt}$.}\qed
\end{proof}
%---------------------------------%
\begin{proof}[of Theorem \ref{thm:nonlinEE} in compact support case]
We first prove Theorem \ref{thm:nonlinEE} under the additional assumption 
that $\Fvec$ has compact support,
\begin{equation}\label{eq:Fcomp}
\Fvec\in C^\infty_c(\bar\homMfd,\R^{\nn})
\end{equation}
which means that $\Fvec$ vanishes for all large negative $\zzeta$.
(Theorem \ref{thm:nonlinEE} in the general case,
that is, without the assumption \eqref{eq:Fcomp}, 
will be proved below, by reducing it to the theorem in the
compact support case.)

\proofheader{Proof of Part 0.}
It suffices to show $u=u'$ on every cone \eqref{eq:eqcone}.
We use Theorem \ref{thm:AbstractUniqueness} with $u_1=u$ and $u_2=u'$. 
Clearly the assumptions \eqref{eq:ueqUniq}, \eqref{eq:udataUniq} hold,
and \eqref{eq:AzetaposUniq}, \eqref{eq:AtposUniq}
hold by \eqref{eq:uinf}, \eqref{eq:Azetapos}, \eqref{eq:Atpos}.
Thus $u=u'$ on \eqref{eq:eqcone}.

\proofheader{Proof of existence and Part 1.}
We show that there exists
\begin{equation}\label{eq:ucpt}
u\in C^\infty_c(\bar\homMfd,\R^{\nn})
\end{equation}
that satisfies \eqref{eq:ubasicNEW} and Part 1.
Note that $u$ is unique by Part 0.

We will specify the constant $\Clarge$ during the proof.
We will not specify $\Csmall$, but 
make finitely many admissible smallness assumptions on $\Csmall$,
where admissible means that they depend only on 
\eqref{eq:AbstrExConstants}, $\delta$, $\CM$, see \eqref{eq:CM}.

We will use Lemma \ref{lem:Nonlindiffenergyineq} with the parameters in 
the second column of Table \ref{tab:NonlinAppN}. 
Let $\CqEEApp>0$ be the constant produced by 
Lemma \ref{lem:Nonlindiffenergyineq} (called $\CqEE$ there),
which depends only on \eqref{eq:AbstrExConstants} and $\CM$.
We need the following preliminaries:
\begin{itemize}
\item 
There exists $\CFsob>0$ that depends only on $\CM,\nn,\NN$,
such that for all $\zz\le0$:
\begin{equation}\label{eq:FSobolev}
\|\Fvec\|_{\sCX^{\lfloor\frac{\NN+1}{2}\rfloor-1}(\homMfd_{\zz})}
\le
\CFsob \|\Fvec\|_{\sHX^{\NN-1}(\homMfd_{\zz})}
\end{equation}
This holds by \eqref{eq:sobolevMz},
since $\lfloor\frac{\NN+1}{2}\rfloor-1+\frac{\MdimNEW}{2} < \NN-1$
using $\NN\ge\MdimNEW\new{+2}$.
%Table[Floor[(NN+1)/2]-1+m/2<NN-1,{m,1,10},{NN,m+2,100}]//Flatten//DeleteDuplicates
\item 
There exists $\Clk>0$ that depends only on \eqref{eq:AbstrExConstants} and $\CM$,
such that for all $\zzmax\le0$, all 
$u\in C^\infty_c(\bar\homMfd_{\le\zzmax},\R^{\nn})$ 
with $\sqrt{u^Tu}\le\delta$, and all $\zz\le\zzmax$,
\begin{align}\label{eq:lkdiff}
\begin{aligned}
|\LdivestNEW{\LMat,\AmatLin^iX_i}(\zz)
-\LdivestNEW{\LMat,\Amat(u)}(\zz)|
	&\le
	\Clk\|u\|_{\sCX^1(\homMfd_{\zz})}\\
%%%%%
|\max\{0,\CcomTang{\AmatLin^iX_i,\Ccomp}(\zz)\}
-\max\{0,\CcomTang{\Amat(u),\Ccomp}(\zz)\}|
	&\le
	\Clk\|u\|_{\sCX^1(\homMfd_{\zz})}\\
%%%%%
|\CcomTransv{\AmatLin^iX_i,\Ccomp}(\zz)
	-\CcomTransv{\Amat(u),\Ccomp}(\zz)|
	&\le
	\Clk\|u\|_{\sCX^1(\homMfd_{\zz})}
\end{aligned}
\end{align}
where we abbreviate $\Amat(u)=(\AmatLin^i+\AmatBil^i(u))X_i$.
This holds using Lemma \ref{lem:ellkappaLip} 
(applicable by \eqref{eq:Azetapos}) and
$\Amat(u)-\AmatLin^i X_i = \AmatBil^i(u)X_i$ and \ref{item:CNnormsSHS} (use $\NN\ge1$).
\end{itemize}
Define the constant
\begin{equation}\label{eq:CaCap}
\Ca \;=\; 
e^{
\CLA\CqEEApp(
1
+
\CFsob
+
\Caprime )
}
\qquad
\text{where} 
\qquad
\Caprime=\Clk+\NN\Clk+\CqEEApp\Clk+\CqEEApp
\end{equation}

%----------------------------------%
\begin{table}
\centering
\begin{tabular}{cc|c|c}
	&
	Parameters 
	&
	\multicolumn{2}{c}{Parameters used to invoke Lemma \ref{lem:Nonlindiffenergyineq}}
	\\
	&
	in Lemma \ref{lem:Nonlindiffenergyineq}
	&
	\textit{Existence and Part 1}
	&
	\textit{Part 2}
	\\
	\hline
Input
&$\Mcpt$, $X_0,\dots,X_{\MdimNEW}$, $\muM$
	&$\Mcpt$, $X_0,\dots,X_{\MdimNEW}$, $\muM$
	&$\Mcpt$, $X_0,\dots,X_{\MdimNEW}$, $\muM$\\
&$\nn$, $\NN$, $\Cpos$, $\CLA$ 
	& $\nn$, $\NN$, $\Cpos$, $\CLA$
	& $\nn$, $\kk$, $\Cpos$, \new{$\slashCHigherIn$ in \eqref{eq:slbchoice}}\\
%\hline
&$\zz_*$ 
	& $\zzmax$
	& $0$\\
&$u$ 
	& $u$ in \eqref{eq:uapriori}
	& $u$ in \eqref{eq:ucpt}\\
&$\AmatLin^i$, $\AmatBil^i$, $\LMat$, $\BSHS$, $\Fvec$
	& $\AmatLin^i$, $\AmatBil^i$, $\LMat$, $\BSHS$, $\Fvec$
	& $\AmatLin^i$, $\AmatBil^i$, $\LMat$, $\BSHS$, $\Fvec$\\
&$\Ccomp$ 
	& $\Ccomp$
	& $\Ccomp$\\
\hline
Output
&$\CqEE$
	& $\CqEEApp$
	& $\CqEEApp_{\kk}$
\end{tabular}
\captionsetup{width=115mm}
\caption{
The first column lists the input and output 
parameters of Lemma \ref{lem:Nonlindiffenergyineq}.
The second column specifies the choice of input parameters used
to invoke Lemma \ref{lem:Nonlindiffenergyineq} in the proof
of existence and Part 1 of Theorem \ref{thm:nonlinEE} in the compact support case,
in terms of the input parameters of Theorem \ref{thm:nonlinEE}
and the parameters introduced in this proof.
The output parameter produced by this invocation of Lemma \ref{lem:Nonlindiffenergyineq}
is denoted $\CqEEApp$, and it depends only on the parameters in the first two rows
of the second column.
Analogously for the third column, used to invoke Lemma \ref{lem:Nonlindiffenergyineq} 
in the proof of Part 2 in the compact support case.}
\label{tab:NonlinAppN}
\end{table}
%----------------------------------%

Will make an open-closed (bootstrap) argument, based on the next claim.

\claimheader{Claim:}
Let $\zzmax\le0$ and let
\begin{equation}\label{eq:uapriori}
u\in C^\infty_c(\bar\homMfd_{\le\zzmax},\R^{\nn}) 
\end{equation}
such that
\begin{subequations}\label{eq:apriori}
\begin{align}
(\AmatLin^{i}+\AmatBil^i(u)) X_i u &= \LMat u + \BSHS(u,u) + \Fvec
\label{eq:aprioriSHS}\\
u|_{\ttcoord=0} &= 0
\label{eq:aprioriData}\\
\|u\|_{\Ct^{\lfloor\frac{\NN+1}{2}\rfloor}(\homMfd_{\le\zzmax})}
&\le \min\{\delta,\CLA\}
\label{eq:apriori3NEW} \\
\|u\|_{\Ht^{\NN}(\homMfd_{\zz})} &\le 6 \Ca\CqEEApp \Fconst_{\NN}(\zz)
&&\text{for $\zz\le\zzmax$}
\label{eq:apriori1}\\
\qquad\qquad
\tfrac{\kprop^{\LMat,\Amat(u),\Fvec,\Ccomp}_{\NN,u,\CqEEApp}(\zz_1,\zz_0)}{\kprop_{\NN}(\zz_1,\zz_0)} &\le 3\Ca
&&\text{for $\zz_0\le\zz_1\le\zzmax$} 
\label{eq:apriori2}
\end{align}
\end{subequations}
where in \eqref{eq:apriori2} we use Definition \ref{def:PGammaKDef} 
and write $\Amat(u)=(\AmatLin^{i}+\AmatBil^i(u)) X_i$.
Then, under admissible smallness assumptions on $\Csmall$,
the inequalities hold with a gap:
\begin{subequations}\label{eq:apriorigap}
\begin{align}
\|u\|_{\Ct^{\lfloor\frac{\NN+1}{2}\rfloor}(\homMfd_{\le\zzmax})}
&\le \tfrac12 \min\{\delta,\CLA\} 
\label{eq:apriori3gapNEW}\\
%%%%%%%%%%%%%%%%%%
\|u\|_{\Ht^{\NN}(\homMfd_{\zz})} &\le 3\Ca \CqEEApp \Fconst_{\NN}(\zz)
&&\text{for $\zz\le\zzmax$}
\label{eq:apriori1gap}\\
%%%%%%%%%%%%%%%%%%
\qquad\tfrac{\kprop^{\LMat,\Amat(u),\Fvec,\Ccomp}_{\NN,u,\CqEEApp}(\zz_1,\zz_0)}{\kprop_{\NN}(\zz_1,\zz_0)}&\le 2\Ca
&&\text{for $\zz_0\le\zz_1\le\zzmax$}\qquad
\label{eq:apriori2gap}
\end{align}
\end{subequations}
Furthermore one has
\begin{align}\label{eq:uallder}
\|u\|_{\sHX^{\NN}(\homMfd_{\zz})} 
	&\le 
	\CqEEApp(3\Ca\CqEEApp \Fconst_{\NN}(\zz)+ \|\Fvec\|_{\sHX^{\NN-1}(\homMfd_{\zz})} )
	&&\text{for $\zz\le\zzmax$}
\end{align}

Note that
\eqref{eq:aprioriSHS}, \eqref{eq:aprioriData}, \eqref{eq:apriori3NEW}
determine $u$ uniquely (c.f.~proof of Part 0).

\claimheader{Proof of claim:}
We use Lemma \ref{lem:Nonlindiffenergyineq}
with the parameters in the second column of Table \ref{tab:NonlinAppN}.
We check that the assumptions of Lemma \ref{lem:Nonlindiffenergyineq} hold:
$\Fvec$ and $u$ have compact support 
by the assumptions \eqref{eq:Fcomp} and \eqref{eq:uapriori};
$\AmatLin^i$, $\AmatBil^i$ are symmetric;
\eqref{eq:SHSBil} holds by \eqref{eq:aprioriSHS};
\eqref{eq:DataBil} holds by \eqref{eq:aprioriData};
\eqref{eq:A(zeta)posBIL} holds by \eqref{eq:apriori3NEW} and \eqref{eq:Azetapos};
\eqref{eq:A(0)posBIL} holds by \eqref{eq:apriori3NEW} and \eqref{eq:A0pos};
\eqref{eq:A(t)posBIL} holds by \eqref{eq:apriori3NEW} and \eqref{eq:Atpos};
\eqref{eq:Cubound} holds by \eqref{eq:apriori3NEW};
for \eqref{eq:CFbound} note that by 
\eqref{eq:FSobolev} and \ref{item:InhSmall},
for all $\zz\le0$:
$$
\|\Fvec\|_{\sCX^{\lfloor\frac{\NN+1}{2}\rfloor-1}(\homMfd_{\zz})}
\le
\CFsob \Csmall
\le
\CLA
$$
where for the last step we make 
the admissible smallness assumption $\Csmall\le\CLA/\CFsob$;
and \eqref{eq:matbounds} holds by \ref{item:CNnormsSHS}.
Thus the assumptions of Lemma \ref{lem:Nonlindiffenergyineq} hold.

Using \eqref{eq:CtSobolev}
with $k=\lfloor\frac{\NN+1}{2}\rfloor$ 
(applicable by $\lfloor\frac{\NN+1}{2}\rfloor + \lfloor\frac{\MdimNEW}{2}\rfloor+1 \le \NN$, use $\NN\ge\MdimNEW+2$),
and then using \eqref{eq:apriori1},
for all $\zz\le\zzmax$ we have
\begin{align}\label{eq:uSobolev}
%\begin{aligned}
\|u\|_{\sCX^{\lfloor\frac{\NN+1}{2}\rfloor}(\homMfd_{\zz})} 
	&\le
	\CqEEApp(\|u\|_{\Ht^{\NN}(\homMfd_{\zz})} 
	+ 
	\|\Fvec\|_{\sHX^{\NN-1}(\homMfd_{\zz})})\nonumber\\
	&\le
	\CqEEApp(6\Ca\CqEEApp \Fconst_{\NN}(\zz)
	+ 
	\|\Fvec\|_{\sHX^{\NN-1}(\homMfd_{\zz})})
%\end{aligned}
\end{align}

We can now conclude \eqref{eq:apriorigap}:
%---------------------------%
\begin{itemize}
\item 
\eqref{eq:apriori3gapNEW}:
By \eqref{eq:uSobolev} and \ref{item:Fsmall} and \ref{item:InhSmall},
for all $\zz\le\zzmax$,
$$\|u\|_{\Ct^{\lfloor\frac{\NN+1}{2}\rfloor}(\homMfd_{\zz})}
\le
\CqEEApp(6\Ca\CqEEApp +1)\Csmall
\le \tfrac12\min\{\delta,\CLA\}
$$
where the last inequality holds under an admissible
smallness assumption on $\Csmall$,
using the fact that $\CqEEApp$ and $\Ca$ depend only on 
\eqref{eq:AbstrExConstants} and $\CM$.

\item 
\eqref{eq:apriori1gap}:
By Part 2 of Lemma \ref{lem:Nonlindiffenergyineq},
for all $\zz\le \zzmax$,
\begin{align*}
\|u\|_{\Ht^{\NN}(\homMfd_{\zz})}
&\le
\CqEEApp \tint_{-\infty}^{\zz} 
\kprop^{\LMat,\Amat(u),\Fvec,\Ccomp}_{\NN,u,\CqEEApp}(\zz,\zz')
(1 + |\zz-\zz'|)^{\NN}
\|\Fvec\|_{\sHX^{\NN}(\homMfd_{\zz'})}
d\zz'\\
\intertext{
which by \eqref{eq:apriori2} is bounded by
}
&\le
3\Ca\CqEEApp \tint_{-\infty}^{\zz} 
\kprop_{\NN}(\zz,\zz')
(1 + |\zz-\zz'|)^{\NN}
\|\Fvec\|_{\sHX^{\NN}(\homMfd_{\zz'})}
d\zz'\\
&=
3\Ca\CqEEApp \Fconst_{\NN}(\zz)
\end{align*}

\item 
\eqref{eq:apriori2gap}:
Write 
\begin{align}\label{eq:WW}
\tfrac{\kprop^{\LMat,\Amat(u),\Fvec,\Ccomp}_{\NN,u,\CqEEApp}(\zz_1,\zz_0)}{\kprop_{\NN}(\zz_1,\zz_0)}
=
\exp\big(\tint_{\zz_0}^{\zz_1}(
W_1(\zz)+W_2(\zz)+\dots+W_5(\zz))d\zz\big)
\end{align}
where
\begin{align*}
W_1(\zz)
	&=
	\LdivestNEW{\LMat,\Amat(u)}(\zz)-\LdivestNEW{\LMat,\AmatLin}(\zz)
	\\
%%%%%
W_2(\zz)
	&=
	\NN
	(\max\{0,\CcomTang{\Amat(u),\Ccomp}(\zz)\}-\max\{0,\CcomTang{\AmatLin,\Ccomp}(\zz)\})\\
%%%%%
W_3(\zz)
	&=
	\CqEEApp
	|\CcomTransv{\Amat(u),\Ccomp}(\zz)|
	\\
%%%%%
W_4(\zz)
	&=
	\CqEEApp
	\|u\|_{\Ct^{\lfloor \frac{\NN+1}{2} \rfloor}(\homMfd_{\zz})}
	\\
%%%%%
W_5(\zz)
	&=
	\CqEEApp
	\|\Fvec\|_{\sCX^{\lfloor \frac{\NN+1}{2}\rfloor-1}(\homMfd_{\zz})}
\end{align*}
By \eqref{eq:lkdiff},
\begin{align*}
|W_1(\zz)| &\le \Clk \|u\|_{\sCX^{1}(\homMfd_{\zz})}\\
|W_2(\zz)| &\le \NN\Clk\|u\|_{\sCX^{1}(\homMfd_{\zz})}\\
|W_3(\zz)| 
	&\le 
	\CqEEApp
	|\CcomTransv{\AmatLin,\Ccomp}(\zz)-\CcomTransv{\Amat(u),\Ccomp}(\zz)|
	+
	\CqEEApp
	|\CcomTransv{\AmatLin,\Ccomp}(\zz)|\\
	&\le
	\CqEEApp\Clk \|u\|_{\sCX^{1}(\homMfd_{\zz})}
	+
	\CqEEApp
	|\CcomTransv{\AmatLin,\Ccomp}(\zz)|
\intertext{
where for $W_3$ we also use the triangle inequality.
By \eqref{eq:FSobolev}, 
}
|W_5(\zz)|
	&\le
	\CqEEApp\CFsob \|\Fvec\|_{\sHX^{\NN-1}(\homMfd_{\zz})}
\end{align*}
Thus, using \smash{$\lfloor\frac{\NN+1}{2}\rfloor\ge1$}, 
\begin{align*}
\tsum_{i=1}^5|W_i(\zz)|
&\le
\CqEEApp
|\CcomTransv{\AmatLin,\Ccomp}(\zz)|
+
\Caprime\|u\|_{\sCX^{\lfloor\frac{\NN+1}{2}\rfloor}(\homMfd_{\zz})}
+
\CqEEApp\CFsob \|\Fvec\|_{\sHX^{\NN-1}(\homMfd_{\zz})}
\intertext{
with $\Caprime$ defined in \eqref{eq:CaCap}.
Together with \eqref{eq:uSobolev}, we obtain}
\tsum_{i=1}^5|W_i(\zz)|
&\le
6\Caprime\Ca\CqEEApp^2 \Fconst_{\NN}(\zz)
+
\CqEEApp|\CcomTransv{\AmatLin,\Ccomp}(\zz)|
+
\CqEEApp\big(\CFsob+\Caprime\big)\|\Fvec\|_{\sHX^{\NN-1}(\homMfd_{\zz})}
\end{align*}
Integrating over \smash{$\int_{\zz_0}^{\zz_1} d\zz$} we obtain, using
\ref{item:Fsmall}, \ref{item:InhSmall}, \ref{item:KappaTransv},
and $\zz_0\le\zz_1$:
\begin{align*}
\tint_{\zz_0}^{\zz_1}\tsum_{i=1}^5|W_i(\zz)|\,d\zz
\le
6\Caprime\Ca\CqEEApp^2 \Csmall
+
\CqEEApp \CLA
+
\CqEEApp\CLA\big(\CFsob+\Caprime\big)
\end{align*}
Thus \eqref{eq:WW} is bounded by 
$e^{6\Caprime\Ca\CqEEApp^2 \Csmall} \Ca$, see \eqref{eq:CaCap}.
Thus \eqref{eq:apriori2gap} holds
under the admissible smallness assumption
$\smash{(6\Caprime\Ca\CqEEApp^2)\Csmall \le \log(2)}$.
\end{itemize}
%---------------------------%
This proves \eqref{eq:apriorigap}.
Now \eqref{eq:uallder} follows from \eqref{eq:HallHt}, \eqref{eq:apriori1gap}.
Thus the claim holds.

For $\zzmax\le0$ let $Z(\zzmax)$ be the statement 
$$
Z(\zzmax) 
:\;
\text{
There exists $u\in C^\infty_c(\bar\homMfd_{\le\zzmax},\R^{\nn})$
that satisfies \eqref{eq:apriori}.}
$$

\claimheader{Claim:}
The statement $Z(0)$ is true.

\claimheader{Proof of claim:}
Define 
$$
I = 
\{\zzmax\in (-\infty,0] \mid \text{$Z(\zzmax)$ is true} \}
$$
Note that $\zzmax\in I$ implies 
$(-\infty,\zzmax]\subset I$.
We make an open-closed argument to show that $I=(-\infty,0]$:
\begin{itemize}
\item 
$I$ is nonempty:
Since $\Fvec$ has compact support, see \eqref{eq:Fcomp}, 
there exists $\zzmax\le0$ with $\Fvec|_{\bar\homMfd_{\le\zzmax}}=0$.
Then $u=0$ satisfies \eqref{eq:apriori} on $\bar\homMfd_{\le\zzmax}$,
where \eqref{eq:aprioriSHS}, \eqref{eq:aprioriData},
\eqref{eq:apriori3NEW}, \eqref{eq:apriori1} are immediate and
where \eqref{eq:apriori2} holds because
\[ 
\tfrac{\kprop^{\LMat,\Amat(0),0,\Ccomp}_{\NN,u,\CqEEApp}(\zz_1,\zz_0)}{\kprop_{\NN}(\zz_1,\zz_0)}
=
e^{\CqEEApp \int_{\zz_0}^{\zz_1} |\CcomTransv{\AmatLin,\Ccomp}(\zz)|d\zz}
\le
e^{\CqEEApp \CLA}
\le
\Ca
\]
by \ref{item:KappaTransv} and \eqref{eq:CaCap}.
Hence $\zzmax\in I$.
\item 
$I$ is open in $(-\infty,0]$: Let $\zzmax\in I$ with $\zzmax<0$
and let $u$ be the solution that satisfies \eqref{eq:apriori} on $\bar\homMfd_{\le\zzmax}$.
Then $u$ also satisfies \eqref{eq:apriorigap} on $\bar\homMfd_{\le\zzmax}$. 

\claimheader{Claim:} There exists $\zzmax'\in(\zzmax,0]$ and 
$u'\in C^\infty_c(\bar\homMfd_{\le\zzmax'},\R^{\nn})$
such that $u'=u$ on $\bar\homMfd_{\le\zzmax}$ and $u'$ satisfies
\eqref{eq:aprioriSHS}, \eqref{eq:aprioriData}, 
\eqref{eq:apriori3NEW} on $\bar\homMfd_{\le\zzmax'}$.

\claimheader{Proof of claim (sketch):}
This essentially follows from local well-posedness of symmetric
hyperbolic systems \cite[Section 16.1-16.2]{Taylor3}.
Here we indicate in particular how to deal with the
boundaries at $\ttcoord=0$ and $\ttcoord=1$.
We say that a one-form $\theta$ is positive at a point $p$, 
if for all $w\in\R^{\nn}$ with $\sqrt{w^Tw}\le\delta$ one has
$\theta((\AmatLin^i+\AmatBil^i(w))X_i)>0$ at $p$,
analogously for nonnegative.
We use a configuration of triangles $T_0,T_1,T_2\subset\bar\homMfd$
as indicated in Figure \ref{fig:triangles}.
The triangles are closed in $\bar\homMfd$,
and chosen sufficiently flat so that:
\begin{align}\label{eq:triangleflat}
\begin{aligned}
&\text{At every point on the boundary components indicated by the}\\[-1mm]
&\text{dashed lines, the outward pointing normal one-form is positive.}
\end{aligned}
\end{align}
This can be achieved by \ref{item:Apos}.
We construct $u'$ separately on each triangle, 
where we can use \cite[Section 16.1-16.2]{Taylor3}\footnote{%
The results in the reference are stated on an interval times the torus,
in our applications one can always reduce to this case 
by using partitions of unity and finite speed of propagation.}:
%---------------------------%
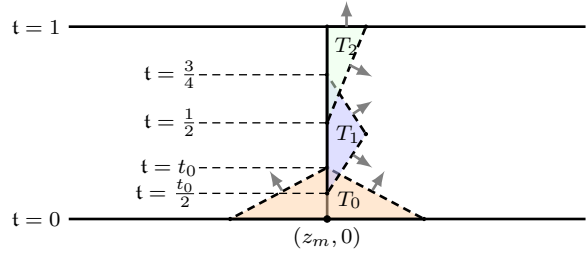
\begin{figure}
\centering
\begin{tikzpicture}[inner sep=0pt,scale=0.85]
\node (tip0) at (4,0.8) {}; % 
\node (l0) at (2.5,0) {}; % 
\node (r0) at (5.5,0) {}; % 
\node (cen0) at (4.15,0.3) {};
\node (tip1) at (4.6,1.325) {}; % 
\node (u1) at (4,2.25) {}; % 
\node (l1) at (4,0.4) {}; % 
\node (cen1) at (4.13,1.325) {};
\node (tip2) at (4.6,3) {}; % 
\node (u2) at (4,3) {}; % 
\node (l2) at (4,1.5) {}; % 
\node (cen2) at (4.11,2.7) {};
%%%%%%%%%%%%%%%%%%%%%%%%%%%%%%
\foreach \x in {0,3}
\draw[line width=1 pt] (0,\x) -- (8,\x);
\draw[line width=1 pt] (4,0) -- (4,3);
%%%%%%%%%%%%%%%%%%%%%%%%%%%%%%
\node[anchor=east,xshift=-1mm] at (0,0) {\footnotesize$\ttcoord=0$};
\node[anchor=east,xshift=-1mm] at (0,3) {\footnotesize$\ttcoord=1$};
%%%%%%%%%%%%%%%%%%%%%%%%%%%%%%
\fill[fill=orange!35,opacity=0.5] (tip0.center)--(l0.center)--(r0.center);
   \path[draw,line width=1 pt,densely dashed] (tip0)--(l0);
   \path[draw,line width=1 pt] (l0)--(r0);
   \path[draw,line width=1 pt,densely dashed] (r0)--(tip0);
\draw[color=black, fill=black] (tip0) circle (.025);
   \draw[color=black, fill=black]  (r0) circle (.025);
   \draw[color=black, fill=black]  (l0) circle (.025);
\draw[-latex,line width=1 pt,gray] (3.3,0.445) -- (3.11,0.75);
\draw[-latex,line width=1 pt,gray] (4.7,0.445) -- (4.89,0.75);
%%%%%%%%%%%%%%%%%%%%%%%%%%%%%%
\fill[fill=blue!25,opacity=0.5] (tip1.center)--(u1.center)--(l1.center);
   \path[draw,line width=1 pt,densely dashed] (tip1)--(u1);
   \path[draw,line width=1 pt] (u1)--(l1);
   \path[draw,line width=1 pt,densely dashed] (l1)--(tip1);
\draw[color=black, fill=black] (tip1) circle (.025);
   \draw[color=black, fill=black]  (l1) circle (.025);
   \draw[color=black, fill=black]  (u1) circle (.025);
\draw[-latex,line width=1 pt,gray] (4.4,1.64) -- (4.7,1.84);
\draw[-latex,line width=1 pt,gray] (4.4,1.) -- (4.7,0.8);
%%%%%%%%%%%%%%%%%%%%%%%%%%%%%%
\fill[fill=green!10,opacity=0.5] (tip2.center)--(u2.center)--(l2.center);
   \path[draw,line width=1 pt] (tip2)--(u2);
   \path[draw,line width=1 pt] (u2)--(l2);
   \path[draw,line width=1 pt,densely dashed] (l2)--(tip2);
\draw[color=black, fill=black] (tip2) circle (.025);
   \draw[color=black, fill=black]  (l2) circle (.025);
   \draw[color=black, fill=black]  (u2) circle (.025);
\draw[-latex,line width=1 pt,gray] (4.35,2.4) -- (4.7,2.22);
\draw[-latex,line width=1 pt,gray] (4.3,3) -- (4.3,3.4);
%%%%%%%%%%%%%%%%%%%%%%%%%%%%%%
\draw[line width=0.5 pt,densely dashed] (4,0.8) -- (2,0.8)
	node[anchor=east] {\footnotesize $\ttcoord=\tt_0$};
\draw[line width=0.5 pt,densely dashed] (4,0.4) -- (2,0.4)
	node[anchor=east] {\footnotesize $\ttcoord=\frac{\tt_0}{2}$};
\draw[line width=0.5 pt,densely dashed] (4,1.5) -- (2,1.5)
	node[anchor=east] {\footnotesize $\ttcoord=\frac12$};
\draw[line width=0.5 pt,densely dashed] (4,2.25) -- (2,2.25)
	node[anchor=east] {\footnotesize $\ttcoord=\frac34$};
%%%%%%%%%%%%%%%%%%%%%%%%%%%%%%
\node[anchor=west] at (cen0) {\footnotesize$T_0$};
\node[anchor=west] at (cen1) {\footnotesize$T_1$};
\node[anchor=west] at (cen2) {\footnotesize$T_2$};
\draw[color=black, fill=black] (4,0) circle (.05);
\node[anchor=north,yshift=-1mm] at (4,0) {\footnotesize{$(\zzmax,0)$}};
%%%%%%%%%%%%%%%%%%%%%%%%%%%%%%
\end{tikzpicture}
\captionsetup{width=115mm}
\caption{Depicted are the first two factors of $\bar\homMfd=(-\infty,0]\times[0,1]\times\Mcpt$,
with the (closed) triangles $T_0,T_1,T_2\subset\bar\homMfd$.
The arrows indicate outward pointing normal one-forms.
The triangles are sufficiently flat so that, 
on the dashed boundary components, the contraction of these
one-forms with $(\AmatLin^i+\AmatBil^i(w))X_i$
is positive whenever $\sqrt{w^Tw}\le\delta$.
On the boundary component of $T_2$ that intersects $\ttcoord=1$
this contraction is nonnegative by \eqref{eq:Atpos}.}
\label{fig:triangles}
\end{figure}%
%---------------------------%
\begin{itemize}
\item 
$T_0$:
Since $\AmatLin^i, \AmatBil^i$ are symmetric
and $d\ttcoord$ is positive along $\ttcoord=0$, 
there exists a triangle $T_0$ as in Figure \ref{fig:triangles} 
and $u'_0\in C^\infty(T_0,\R^{\nn})$
that satisfies \eqref{eq:aprioriSHS}, \eqref{eq:aprioriData},
and whose $\Ct^{\lfloor(\NN+1)/2\rfloor}$-norm is bounded by $\min\{\delta,\CLA\}$ 
(by continuity, \eqref{eq:aprioriData}, \eqref{eq:X1..Tang}).
We choose $T_0$ sufficiently flat, meaning that
$\tt_0>0$ in Figure \ref{fig:triangles} is sufficiently small,
so that it satisfies \eqref{eq:triangleflat}.
Then $u'_0=u$ on the overlap $T_0\cap\homMfd_{\le\zzmax}$,
by a finite speed of propagation argument using the fact that 
$d\zzeta$ is positive along $\homMfd_{\zzmax}$ (c.f.~Theorem \ref{thm:AbstractUniqueness}).
In particular, $u'_0$ extends $u$ smoothly.
\item
$T_1$: 
Since $d\zzeta$ is positive along $\homMfd_{\zzmax}$
there exists 
$u'_1\in C^\infty(T_1,\R^{\nn})$
that satisfies \eqref{eq:aprioriSHS}, $u'_1=u$ along $\homMfd_{\zzmax}$,
and whose $\Ct^{\lfloor(\NN+1)/2\rfloor}$-norm is bounded by $\min\{\delta,\CLA\}$ 
(by continuity and \eqref{eq:apriori3gapNEW}).
The triangle $T_1$ is chosen such that
it overlaps with $T_0$ (see Figure \ref{fig:triangles})
and such that it satisfies \eqref{eq:triangleflat}.
Then $u'_1=u'_0$ on $T_0\cap T_1$ by finite speed of propagation.
Clearly $u'_1$ extends $u$ smoothly.

\item 
$T_2$: 
Since $d\zzeta$ is positive along $\bar\homMfd_{\zzmax}$,
and $d\ttcoord$ is nonnegative along $\ttcoord=1$,
there exists a triangle $T_2$ (closed in $\bar\homMfd$) 
and $u'_2\in C^\infty(T_2,\R^{\nn})$
(in particular $u'_2$ is smooth up to $\ttcoord=1$)
that satisfies \eqref{eq:aprioriSHS}, $u'_2=u$ along $\homMfd_{\zzmax}$,
and whose $\Ct^{\lfloor(\NN+1)/2\rfloor}$-norm is bounded by $\min\{\delta,\CLA\}$ 
(by continuity and \eqref{eq:apriori3gapNEW})%
\footnote{%
%%%%%%%%%%%%%%%
To apply the standard local existence results in \cite{Taylor3},
smoothly extend \eqref{eq:aprioriSHS} across $\ttcoord=1$ 
such that $d\zzeta$ is positive on the extension,
and smoothly extend the initial data \new{$u|_{\homMfd_{\zzmax}}$}.
Since the pointwise $\ell^2$-norm of $u|_{{\homMfd}_{\zzmax}}$ is less than $\delta/2$,
we may assume that the pointwise $\ell^2$-norm of the extension is less than $\delta$.
The restriction of the solution to $T_2$ is independent of the
extension by finite speed of propagation, using the fact that
$d\ttcoord$ is nonnegative along $\ttcoord=1$.}.
%%%%%%%%%%%%%%%
We choose $T_2$ such that it overlaps with $T_1$
and satisfies \eqref{eq:triangleflat}.
Then $u'_2=u'_1$ on $T_1\cap T_2$ by finite speed of propagation.
Further $u'_2$ extends $u$ smoothly:
Clearly the extension is continuous, and smooth
away from the intersection of $\bar\homMfd_{\zzmax}$ with $\ttcoord=1$.
This implies that the extension is in fact smooth also at this intersection,
using the fact that both $u'_2$ and $u$ are smooth there.
\end{itemize}
Now the claim follows by choosing $\zzmax'\in(\zzmax,0]$ sufficiently small.

We check that $u'$ also satisfies 
\eqref{eq:apriori1}, \eqref{eq:apriori2},
where we make $\zzmax'$ smaller if necessary.
\eqref{eq:apriori1}: By \eqref{eq:apriori1gap}, continuity,
and making $\zzmax'$ smaller if necessary.
\eqref{eq:apriori2}:
This estimate depends on two parameters $\zz_0,\zz_1$,
thus we cannot just argue by continuity. Abbreviate
\[
Q(\zz_1,\zz_0) 
= \tfrac{\kprop^{\LMat,\Amat(u'),\Fvec,\Ccomp}_{\NN,u',\CqEEApp}(\zz_1,\zz_0)}{\kprop_{\NN}(\zz_1,\zz_0)}
\]
Since $u'$ agrees with $u$ on $\bar\homMfd_{\le\zzmax}$,
\eqref{eq:apriori2gap} yields
\begin{align}
	\text{For all $\zz_0\le\zz_1\le\zzmax$}:
	&&
	Q(\zz_1,\zz_0) &\le 2\Ca
	\label{eq:Qcase1}
\intertext{
We have 
$Q(\zz_1,\zz_0) = \exp(\int_{\zz_0}^{\zz_1} W(\zz)d\zz)$
where $W(\zz)$ is continuous, c.f.~\eqref{eq:WW}.
Thus by making $\zzmax'$ smaller if necessary we obtain that}
	\text{For all $\zzmax\le \zz_0\le \zz_1\le \zzmax'$}:
	&&
	Q(\zz_1,\zz_0) &\le 1+\tfrac{1}{10} \le 3\Ca
	\label{eq:110}
\intertext{
where the last inequality uses $\Ca\ge1$.
Further we obtain that
} 
	\text{For all $\zz_0\le \zzmax\le \zz_1\le \zzmax'$}:
	&&
	Q(\zz_1,\zz_0)
	&=
	Q(\zz_1,\zzmax)Q(\zzmax,\zz_0)\nonumber\\
	&&&\le
	( 1+\tfrac{1}{10})2\Ca 
	\le 
	3\Ca\nonumber
\end{align}
where we use \eqref{eq:110} to bound $Q(\zz_1,\zzmax)$
and \eqref{eq:Qcase1} to bound $Q(\zzmax,\zz_0)$.
This shows that $u'$ also satisfies \eqref{eq:apriori2}.
Thus $u'$ satisfies \eqref{eq:apriori}, which shows 
$\zzmax'\in I$. Then $(-\infty,\zzmax']\subset I$,
which shows that $I$ is open in $(-\infty,0]$.
%-------------------------------%
\item 
$I$ is closed in $(-\infty,0]$:
Let $\zzmax\in\bar{I}$.
Then there exists a smooth $u$ on 
$(-\infty,\zzmax)\times[0,1]\times \Mcpt$
that satisfies \eqref{eq:apriori} and vanishes for large negative $\zzeta$
(using standard uniqueness, c.f.~proof of Part 0).
Then a persistence of regularity argument 
(essentially the energy estimate \eqref{eq:EkkestX} of Part 2
restricted to $\zz<\zzmax$)
shows that $u$ extends smoothly to $\zz=\zzmax$.
Then \eqref{eq:apriori} holds up to $\zz=\zzmax$ by continuity.
Thus $\zzmax\in I$. 
\end{itemize}
Thus $I = (-\infty,0]$. 
Thus $0\in I$, and thus $Z(0)$ is true, which proves the claim.

Since $Z(0)$ is true, there exists $u$ as in \eqref{eq:ucpt}
that satisfies \eqref{eq:apriori} with $\zzmax=0$,
and thus also satisfies \eqref{eq:apriorigap} and \eqref{eq:uallder}
with $\zzmax=0$.
Thus $u$ satisfies \eqref{eq:ubasicNEW}, 
which concludes the proof of existence.
Further it satisfies \eqref{eq:uconclNEW} of Part 1 with 
$$
\Clarge = \max\{ 3\Ca\CqEEApp,  \CqEEApp(1+3\Ca\CqEEApp) \}
$$
It remains to check the last statement of Part 1.
Using the propagator property 
$\kprop_{\NN}(\zz,\zz')=\kprop_{\NN}(\zz,0)\kprop_{\NN}(0,\zz')$,
we indeed obtain 
\begin{align}\label{eq:fconstbd}
\begin{aligned}
\Fconst_{\NN}(\zz)
&=
\textstyle
\kprop_{\NN}(\zz,0) 
\int_{-\infty}^{\zz}\kprop_{\NN}(0,\zz') 
(1+|\zz-\zz'|)^{\NN}
\|\Fvec\|_{\sHX^{\NN}(\homMfd_{\zz'})}
\,d\zz'\\
&\le
\textstyle
\kprop_{\NN}(\zz,0) 
\int_{-\infty}^{0}\kprop_{\NN}(0,\zz') 
(1+|\zz'|)^{\NN}
\|\Fvec\|_{\sHX^{\NN}(\homMfd_{\zz'})}
\,d\zz'\\
&=
\kprop_{\NN}(\zz,0)\Fconst_{\NN}(0)
\end{aligned}
\end{align}

\proofheader{Proof of Part 2.}
We prove that $u$ in \eqref{eq:ucpt} satisfies Part 2 of the theorem. 
We proceed by induction in $\kk\ge\NN$.
For $\kk\ge\NN$ let $\PP_{\kk}$ be the statement
\begin{align*}
\PP_{\kk}:\;\;
\text{
For all $\CHigherIn\ge0$, if 
\ref{item:Higherassp12}$_{k,\CHigherIn}$,
\ref{item:InhSmallHigher}$_{k,\CHigherIn}$,
\ref{item:Higherassp3}$_{k,\CHigherIn}$
then \eqref{eq:HigherAbstract}$_{k,\CHigherIn}$
}
\end{align*}
where, for example, \ref{item:Higherassp12}$_{k,\CHigherIn}$
means \ref{item:Higherassp12} with parameters $k$ and $\CHigherIn$.
The base case $\PP_{\NN}$ holds by Part 1.
For the induction step we fix $\kk>\NN$, 
and show that $\PP_{\kk-1}$ implies $\PP_{\kk}$.
Let $\CHigherIn\ge0$ and assume that 
\ref{item:Higherassp12}$_{k,\CHigherIn}$,
\ref{item:InhSmallHigher}$_{k,\CHigherIn}$, 
\ref{item:Higherassp3}$_{k,\CHigherIn}$ hold.
Then also 
\ref{item:Higherassp12}$_{k-1,\CHigherIn}$,
\ref{item:InhSmallHigher}$_{k-1,\CHigherIn}$, 
\ref{item:Higherassp3}$_{k-1,\CHigherIn}$ hold,
using the fact that $\Fconst_{k}(\zz)$ is increasing in $k$.
Hence by the induction hypothesis \eqref{eq:HigherAbstract}$_{k-1,\CHigherIn}$ holds.

By \eqref{eq:sobolevMz} and 
$\lfloor\frac{k+1}{2}\rfloor+\lfloor\frac{\MdimNEW}{2}\rfloor < k-1$
(use $k-1\ge\NN\ge\MdimNEW+3$),
for all $\zz\le0$:
%Table[Table[Floor[(k+1)/2]+Floor[m/2]<k-1,{k,m+4,10m}],{m,1,10}]
%//Flatten//DeleteDuplicates
\begin{align}
\|u\|_{\sCX^{\lfloor\frac{k+1}{2}\rfloor}(\homMfd_{\zz})}
&\lesssim_{\CM,\nn,k}
\|u\|_{\sHX^{k-1}(\homMfd_{\zz})}\nonumber
\intertext{
Together with \eqref{eq:EkkestX}$_{k-1,\CHigherIn}$ and 
\ref{item:Higherassp12}$_{k,\CHigherIn}$,
\ref{item:InhSmallHigher}$_{k,\CHigherIn}$,
we obtain} 
\|u\|_{\sCX^{\lfloor\frac{k+1}{2}\rfloor}(\homMfd_{\zz})}
&\lesssim_{\CM,\nn,\kk,\Cpos,\CLA,\CHigherIn}
\Fconst_{\kk-1}(\zz) + \|\Fvec\|_{\sHX^{\kk-2}(\homMfd_{\zz})}
\le
2\CHigherIn\label{eq:uCtkF} 
\intertext{
Also by \eqref{eq:sobolevMz}, and then using 
\ref{item:InhSmallHigher}$_{k,\CHigherIn}$,}
\|\Fvec\|_{\sCX^{\lfloor\frac{k+1}{2}\rfloor-1}(\homMfd_{\zz})}
&\lesssim_{\CM,\nn,k}
\|\Fvec\|_{\sHX^{k-2}(\homMfd_{\zz})}
\le 
\CHigherIn\label{eq:Fsobk}
\end{align}
Thus there exists 
\begin{equation}\label{eq:slbchoice}
\slashCHigherIn
\ge
\max\{\CHigherIn,\CLA\}
\end{equation}
that depends only on $\CM,\nn,\kk,\Cpos,\CLA,\CHigherIn$ such that
for all $\zz\le0$:
\begin{equation}\label{eq:uCtk}
\|u\|_{\sCX^{\lfloor\frac{k+1}{2}\rfloor}(\homMfd_{\zz})},\ 
\|\Fvec\|_{\sCX^{\lfloor\frac{k+1}{2}\rfloor-1}(\homMfd_{\zz})}
\;\le\; \slashCHigherIn
\end{equation}

We use Lemma \ref{lem:Nonlindiffenergyineq} with the parameters 
in the third column of Table \new{\ref{tab:NonlinAppN}}.
Let $\CqEEApp_{\kk}$ be the constant produced by Lemma \ref{lem:Nonlindiffenergyineq} (called $\CqEE$ there), 
which depends only on $\CM,\nn,\kk,\Cpos,\CLA,\CHigherIn$.
We check that the assumptions \eqref{eq:nonlinassp} hold:
\eqref{eq:SHSBil} holds by \eqref{eq:ueq};
\eqref{eq:DataBil} holds by \eqref{eq:udata};
\eqref{eq:A(zeta)posBIL}, 
\eqref{eq:A(0)posBIL},
\eqref{eq:A(t)posBIL}
hold by \eqref{eq:uinf} and \ref{item:Apos};
\eqref{eq:Cubound}, \eqref{eq:CFbound} hold by \eqref{eq:uCtk};
\eqref{eq:matbounds} holds by \ref{item:Higherassp3}$_{k,\CHigherIn}$
and \ref{item:CNnormsSHS} (for $\Ccomp$) and \eqref{eq:slbchoice}.

%----------------------------------%
By Part 2 of Lemma \ref{lem:Nonlindiffenergyineq}, for all $\zz\le0$:
\begin{equation}\label{eq:induHest}
\|u\|_{\Ht^{\kk}(\homMfd_{\zz})}
\le
\CqEEApp_{\kk} \tint_{-\infty}^{\zz} 
\kprop^{\LMat,\Amat(u),\Fvec,\Ccomp}_{\kk,u,\CqEEApp_{\kk}}(\zz,\zz')
(1 + |\zz-\zz'|)^{\kk}
\|\Fvec\|_{\sHX^{\kk}(\homMfd_{\zz'})}
d\zz'
\end{equation}
In the following we abbreviate 
$\lesssim_{\CM,\nn,\kk,\Cpos,\CLA,\CHigherIn}$
by $\lesssim_{*}$.

Similarly to \eqref{eq:WW},
one obtains that for all $\zz_0\le\zz_1\le0$:
%check this
\begin{align}\label{eq:logpk}
&\log\Big(\tfrac{\kprop^{\LMat,\Amat(u),\Fvec,\Ccomp}_{\kk,u,\CqEEApp_{\kk}}(\zz_1,\zz_0)}{\kprop_{\kk}(\zz_1,\zz_0)}\Big)\\
&\;\;\;\lesssim_{*}
\tint_{\zz_0}^{\zz_1}
\Big(
|\CcomTransv{\AmatLin,\Ccomp}(\zz')|
+
\|u\|_{\sCX^{\lfloor \frac{k+1}{2} \rfloor}(\homMfd_{\zz'})}
+
\|\Fvec\|_{\sCX^{\lfloor \frac{k+1}{2}\rfloor-1}(\homMfd_{\zz'})}\Big)\,d\zz'
\nonumber
\end{align}
By \ref{item:KappaTransv} we have
$\int_{\zz_0}^{\zz_1}|\CcomTransv{\AmatLin,\Ccomp}(\zz')|d\zz'\le\CLA$.
By \eqref{eq:uCtkF}, \ref{item:Higherassp12}$_{k,\CHigherIn}$, \ref{item:InhSmallHigher}$_{k,\CHigherIn}$,
\begin{equation}
\tint_{\zz_0}^{\zz_1}
\|u\|_{\sCX^{\lfloor \frac{k+1}{2} \rfloor}(\homMfd_{\zz'})}
\;d\zz'
\lesssim_*
\tint_{\zz_0}^{\zz_1}
(\Fconst_{\kk-1}(\zz') + \|\Fvec\|_{\sHX^{\kk-2}(\homMfd_{\zz'})})
\;d\zz'
\le
2\CHigherIn
\end{equation}
By \eqref{eq:Fsobk} 
%%%%%%%%%%%%%%%%%%%%%%
%Table[Table[
%Floor[(k+1)/2]+(m-1)/2<k-1,
%{k,m+3,m+10}],{m,3,10}]//Flatten//DeleteDuplicates
%%%%%%%%%%%%%%%%%%%%%%
and \ref{item:InhSmallHigher}$_{k,\CHigherIn}$,
\begin{align*}
\tint_{\zz_0}^{\zz_1}
\|\Fvec\|_{\sCX^{\lfloor \frac{k+1}{2}\rfloor-1}(\homMfd_{\zz'})}\,d\zz'
\lesssim_*
\tint_{\zz_0}^{\zz_1}
\|\Fvec\|_{\sHX^{k-2}(\homMfd_{\zz'})}\,d\zz'
\le \CHigherIn
\end{align*}
Thus for all $\zz_0\le\zz_1\le0$:
\[ 
\log\Big(\tfrac{\kprop^{\LMat,\Amat(u),\Fvec,\Ccomp}_{\kk,u,\CqEEApp_{\kk}}(\zz_1,\zz_0)}{\kprop_{\kk}(\zz_1,\zz_0)}\Big)
\lesssim_* 1
\]
Together with \eqref{eq:induHest}, this implies that for all $\zz\le0$:
\begin{equation*}
\|u\|_{\Ht^{\kk}(\homMfd_{\zz})}
\lesssim_*
\tint_{-\infty}^{\zz} 
\kprop_{\kk}(\zz,\zz')
(1 + |\zz-\zz'|)^{\kk}
\|\Fvec\|_{\sHX^{\kk}(\homMfd_{\zz'})}
d\zz'
=
\Fconst_{\kk}(\zz)
\end{equation*}
This proves \eqref{eq:Ekkest}$_{k,\CHigherIn}$.
With \eqref{eq:HallHt} in Lemma \ref{lem:Nonlindiffenergyineq}
(see Table \new{\ref{tab:NonlinAppN}}) we obtain
\begin{align*}
\|u\|_{\sHX^{\kk}(\homMfd_{\zz})}
\lesssim_*
\Fconst_{\kk}(\zz) + \|\Fvec\|_{\sHX^{k-1}(\homMfd_{\zz})}
\end{align*}
which proves \eqref{eq:EkkestX}$_{k,\CHigherIn}$.
This concludes the induction step.

Analogously to \eqref{eq:fconstbd} one checks that 
\begin{equation}\label{eq:fconstbdk}
\Fconst_{\kk}(\zz)\le \kprop_{\kk}(\zz,0)\Fconst_{\kk}(0)
\end{equation}
which concludes the proof of Part 2.\qed
\end{proof}

Below we prove Theorem \ref{thm:nonlinEE} in the general case, by reducing 
it to Theorem \ref{thm:nonlinEE} in the compact support case,
that we have just proven.
To do this we cut off the source term $\Fvec$ at large negative $\zzeta$,
use the fact that the restriction of the solution $u$ to each cone in Figure \ref{fig:conepic} 
depends only on the restriction of $\Fvec$ to that cone,
and use the fact that the estimates for $u$ in the compact support case
are uniform in, i.e.~do not depend on, the size of the support of $\Fvec$.
%--------------------------%
\begin{proof}[of Theorem \ref{thm:nonlinEE}]
\proofheader{Proof of Part 0.} 
Same as the proof of Part 0 in the compact support case.

\proofheader{Proof of existence and Part 1.} 
Fix a smooth cutoff $\theta:\R \to [0,1]$ with
$\theta(x)=1$ when $x \ge\frac12$
and $\theta(x)=0$ when $x\le 0$.
Then for all $\zcut\le0$, all $\jj\in\Z_{\ge0}$, all $\zz\le0$
and all $f\in C^\infty(\bar\homMfd,\R^{\nn})$:
\begin{align}\label{eq:theta*f}
\|\theta(\zzeta-\zcut) f\|_{\sHX^{\jj}(\homMfd_{\zz})} 
	\lesssim_{\CM,\nn,\jj,\theta} \|f\|_{\sHX^{\jj}(\homMfd_{\zz})}
\end{align}
where we use 
$\|\theta(\zzeta-\zcut)\|_{\sCX^{\jj}(\homMfd_{\zz})}
	\lesssim_{\CM,\jj,\theta}1$ 
and the Leibniz rule.
For each $k_0\in\Z_{\ge0}$ fix a constant 
$C_{k_0,\theta}\ge1$ such that
\eqref{eq:theta*f} holds with inequality for all $\jj\le k_0$:
\begin{equation}\label{eq:CthetaN}
\|\theta(\zzeta-\zcut) f\|_{\sHX^{\jj}(\homMfd_{\zz})} 
	\le C_{k_0,\theta} \|f\|_{\sHX^{\jj}(\homMfd_{\zz})}
\qquad
\text{for all $\jj\le k_0$}
\end{equation}
The constant $C_{k_0,\theta}$ depends only on $\CM,\nn,k_0,\theta$.

For every $\zcut\le 0$ define 
$$
\Fvec_{\zcut} = \theta(\zzeta-\zcut) \Fvec \;\; \in\;\;  C^\infty_c(\bar\homMfd,\R^{\nn})
$$
which vanishes on $\bar\homMfd_{\le \zcut}$;
and define $\Fconst_{\zcut,\kk}(\zz)$ analogously to \eqref{eq:Fconstdef} but
with $\Fvec$ replaced by $\Fvec_{\zcut}$.
By \eqref{eq:CthetaN}, for all $\zz\le0$ and all $k_0\in\Z_{\ge0}$:
% and $k\le\NN$:
\begin{subequations}\label{eq:FvecFconstCut}
\begin{align}
\|\Fvec_{\zcut}\|_{\sHX^{\jj}(\homMfd_{\zz})} 
	\;&\le\; C_{k_0,\theta}\,\|\Fvec\|_{\sHX^{\jj}(\homMfd_{\zz})}
&& \text{for all $\jj\le k_0$}
\label{eq:datasnosFvec}\\
\Fconst_{\zcut,\jj}(\zz) \;&\le\; C_{k_0,\theta}\,\Fconst_{\jj}(\zz)
&& \text{for all $\jj\le k_0$}
\label{eq:datasnosFconst}
\end{align}
\end{subequations}

We will use Theorem \ref{thm:nonlinEE} in the compact support case
with the parameters in Table \ref{tab:reduction},
where the input $\Fvec_{\zcut}$ depends parametrically on $\zcut\le0$.
Let $\Clargecpt$ and $\Csmallcpt$ be the constants
produced by the theorem in the compact support case, 
where $\Clargecpt$ depends only on $\CM,\nn,\NN,\Cpos,\CLA$,
and $\Csmallcpt$ depends only on $\CM,\nn,\NN,\Cpos,\CLA,\delta$.
It is important that they do not depend on $\zcut$.
%-------------------------%
\begin{table}
\centering
\begin{tabular}{cc|c}
&
	\begin{tabular}{@{}c@{}}
	Parameters in Theorem \ref{thm:nonlinEE} \\ 
	in compact support case
	\end{tabular}
	&
	\begin{tabular}{@{}c@{}}
	Parameters used to invoke\\ 
	Theorem \ref{thm:nonlinEE} in compact support case
	\end{tabular}
	\\
\hline
Input
&$\Mcpt$, $X_0,\dots,X_{\MdimNEW}$, $\muM$
	&$\Mcpt$, $X_0,\dots,X_{\MdimNEW}$, $\muM$\\
&$\nn$, $\NN$, $\Cpos$, $\CLA$ 
	& $\nn$, $\NN$, $\Cpos$, $C_{\NN,\theta} \CLA$\\
&$\delta$ 
	& $\delta$ \\
&$\AmatLin^i$, $\AmatBil^i$, $\LMat$, $\BSHS$, $\Fvec$, $\Ccomp$
	& $\AmatLin^i$, $\AmatBil^i$, $\LMat$, $\BSHS$, $\Fvec_{\zcut}$, $\Ccomp$\\
	&$k$, $\CHigherIn$ (Part 2 only)
	& $k$, $C_{k,\theta}\CHigherIn$\\
\hline
Output
	&$\Clarge$, $\Csmall$
	& $\Clargecpt$, $\Csmallcpt$
\end{tabular}
\captionsetup{width=115mm}
\caption{%
The first column lists the input and output parameters of Theorem \ref{thm:nonlinEE}
in the compact support case. 
The second column specifies the choice of the input parameters used to invoke
Theorem \ref{thm:nonlinEE} in the compact support case,
in terms of the input parameters of Theorem \ref{thm:nonlinEE} \new{in the general case}
and the parameters introduced in this proof.
The output parameters produced by this invocation of
Theorem \ref{thm:nonlinEE} in the compact support case
are denoted $\Clargecpt$, $\Csmallcpt$,
where $\Clargecpt$ depends only on the parameters in the first two rows,
and $\Csmallcpt$ only on those in the first three rows,
in particular they do not depend on $\zcut$.}
\label{tab:reduction}
\end{table}
%-------------------------%
%
We show that Theorem \ref{thm:nonlinEE} \new{in the general case} holds with 
\begin{equation}\label{eq:Clargesmallgeneral}
\Clarge = C_{\NN,\theta}\Clargecpt
\qquad
\Csmall = \frac{\Csmallcpt}{C_{\NN,\theta}}
\end{equation}
This is an admissible choice because 
$C_{\NN,\theta}$ depends only on $\CM,\nn,\NN,\theta$.

We first check that the assumptions of Theorem \ref{thm:nonlinEE} \new{in the general case}
imply the assumptions of the theorem in the compact support case
for every $\zcut\le0$
(see Table \ref{tab:reduction}).
\ref{item:Fsmall}: use \eqref{eq:datasnosFconst} with $k_0=\NN$ and \eqref{eq:Clargesmallgeneral};
\ref{item:InhSmall}: use \eqref{eq:datasnosFvec} with $k_0=\NN$,
\eqref{eq:Clargesmallgeneral} and the choice $C_{\NN,\theta}\CLA$ for $\CLA$
in the compact support case;
\ref{item:KappaTransv}: use $\CLA\le C_{\NN,\theta}\CLA$;
\ref{item:Apos}: clear;
\ref{item:CNnormsSHS}: use $\CLA\le C_{\NN,\theta}\CLA$.

Thus by Theorem \ref{thm:nonlinEE} in the compact support case
(existence, Part 0, Part 1), 
for every $\zcut\le0$ there exists a unique
\begin{equation}\label{eq:solution}
u_{\zcut} \in C^\infty_c(\bar\homMfd,\R^{\nn})
\end{equation}
that satisfies
\begin{subequations}\label{eq:propusall}
\begin{align}
(\AmatLin^i+\AmatBil^i(u_{\zcut})) X_i u_{\zcut} &= \LMat u_{\zcut} + \BSHS(u_\zcut,u_\zcut) + \Fvec_\zcut
\label{eq:useq}\\
u_\zcut|_{\ttcoord=0} &= 0 
\label{eq:usdata}\\
\sqrt{u\smash{_\zcut^T}u\smash{_{\zcut}}^{\mathclap{\phantom{T}}}} &\le \delta
&&\text{on $\homMfd$}
\label{eq:usinf}\\
\|u_\zcut\|_{\Ht^{\NN}(\homMfd_{\zz})} &\le
\Clarge \Fconst_{\NN}(\zz)
&&\text{for $\zz\le0$}
\label{eq:usEE}\\
\|u_{\zcut}\|_{\sHX^{\NN}(\homMfd_{\zz})}
&\le
\Clarge(\Fconst_{\NN}(\zz) + \|\Fvec\|_{\sHX^{\NN-1}(\homMfd_{\zz})})
&& \text{for $\zz\le0$}
\label{eq:usE}
\end{align}
\end{subequations}
where for \eqref{eq:usEE} and \eqref{eq:usE} we use \eqref{eq:FvecFconstCut}
with $k_0=\NN$ and \eqref{eq:Clargesmallgeneral}.

Let 
$u\in C^\infty(\homMfd,\R^{\nn})$
be the unique function such that for every $\tt\in(0,1)$:
\begin{align}\label{eq:urestr}
u|_{\cone_{0,\tt}} = u_{\zcut(\tt)}|_{\cone_{0,\tt}} 
\qquad \text{where $\zcut(\tt) = -\tfrac{2\Cpos^2\tt}{1-\tt}-1$}
\end{align}
where we abbreviate $\cone_{0,\tt}^{\Cpos} = \cone_{0,\tt}$,
and where $u_{\zcut(\tt)}$ is the solution \eqref{eq:solution} with $\zcut=\zcut(\tt)$.
The number $\zcut(\tt)$ is chosen so that 
$\Fvec_{\zcut(\tt)}=\Fvec$ on $\cone_{0,\tt}$, see Figure \ref{fig:conepic}.
The function $u$ exists because if $\tt_0,\tt_1\in(0,1)$ with $\tt_0\le\tt_1$
then $u_{\zcut(\tt_0)}=u_{\zcut(\tt_1)}$ on the overlap 
$\cone_{0,\tt_0}\cap\cone_{0,\tt_1}=\cone_{0,\tt_0}$,
by Theorem \ref{thm:AbstractUniqueness} and by 
$\Fvec_{\zcut(\tt_0)}=\Fvec_{\zcut(\tt_1)}=\Fvec$ on 
$\cone_{0,\tt_0}$.
Further \eqref{eq:urestr} defines $u$ uniquely because $\cup_{\tt\in(0,1)}\cone_{0,\tt} =\homMfd$.

We check that $u$ satisfies \eqref{eq:ubasicNEW} and \eqref{eq:uconclNEW}.
Clearly, \eqref{eq:useq}, \eqref{eq:usdata}, \eqref{eq:usinf} 
imply \eqref{eq:ueq}, \eqref{eq:udata}, \eqref{eq:uinf} respectively.
\eqref{eq:uETangDer}: Fix $\zz\le0$. 
Then\footnote{
Here $\|{\cdot}\|_{\Ht^{\NN}(\homMfd_{\zz}^{\smallconstant})}$
is defined like $\|{\cdot}\|_{\Ht^{\NN}(\homMfd_{\zz})}$
in Definition \ref{def:sHXDefinition},
but integrating over $\homMfd_{\zz}^{\smallconstant}$.
} 
\begin{align}\label{eq:uMuMeps}
\|u\|_{\Ht^{\NN}(\homMfd_{\zz})}
=
\lim_{\smallconstant\downarrow0}\|u\|_{\Ht^{\NN}(\homMfd_{\zz}^{\smallconstant})}
\end{align}
where $\homMfd_{\zz}^{\smallconstant}\subset\homMfd_{\zz}$ is the subset
where $\ttcoord\le 1-\smallconstant$.
For each small $\smallconstant>0$, 
choose $\tt_{\smallconstant}\in(0,1)$ sufficiently close to $1$ such that
$\homMfd_{\zz}^{\smallconstant}\subset\cone_{0,\tt_{\smallconstant}}$.
Then by \eqref{eq:urestr},
\begin{align*}
\|u\|_{\Ht^{\NN}(\homMfd_{\zz}^{\smallconstant})}
&=
\|u_{\zcut(\tt_{\smallconstant})}\|_{\Ht^{\NN}(\homMfd_{\zz}^{\smallconstant})}
\le
\|u_{\zcut(\tt_{\smallconstant})}\|_{\Ht^{\NN}(\homMfd_{\zz})}
\le
\Clarge \Fconst_{\NN}(\zz)
\end{align*}
where the last step uses \eqref{eq:usEE}, and in this step $\smallconstant$ drops out (uniformity). 
Together with \eqref{eq:uMuMeps} this shows \eqref{eq:uETangDer}.
\eqref{eq:uEAllDer}: Analogously, using \eqref{eq:usE}.
Clearly \eqref{eq:fconstbd}
also holds without the support assumption on $\Fvec$.

\proofheader{Proof of Part 2.}
Fix $k\in\Z_{\ge\NN}$ and $\CHigherIn>0$.
We check that the assumptions in Part 2
of Theorem \ref{thm:nonlinEE} \new{in the general case}
imply the assumptions in Part 2 
of the theorem in the compact support case
for every $\zcut\le0$ (see Table \ref{tab:reduction}).
\ref{item:Higherassp12}: use \eqref{eq:datasnosFconst} with $k_0=k$ and the choice
$C_{k,\theta}\CHigherIn$ for $\CHigherIn$ in the
compact support case;
\ref{item:InhSmallHigher}: use \eqref{eq:datasnosFvec} with $k_0=k$ and the choice $C_{k,\theta}\CHigherIn$;
\ref{item:Higherassp3}: use $\CHigherIn\le C_{k,\theta}\CHigherIn$.
Thus by Part 2 in the compact support case,
for all $\zcut\le0$ the solution \eqref{eq:solution} satisfies
\begin{align*}
\|u_\zcut\|_{\Ht^{\kk}(\homMfd_{\zz})}
\;&\lesssim_{\CM,\nn,\kk,\Cpos,\CLA,\CHigherIn,\theta}\;
\Fconst_{\kk}(\zz)\\
\|u_\zcut\|_{\sHX^{\kk}(\homMfd_{\zz})}
\;&\lesssim_{\CM,\nn,\kk,\Cpos,\CLA,\CHigherIn,\theta}\;
\Fconst_{\kk}(\zz) + \|\Fvec\|_{\sHX^{\kk-1}(\homMfd_{\zz})}
\end{align*}
where we also use \eqref{eq:FvecFconstCut} with $k_0=k$,
and the fact that $C_{k,\theta}$ depends only on $\CM,\nn,k,\theta$.
Then, by repeating the argument \eqref{eq:uMuMeps},
one obtains \eqref{eq:HigherAbstract}. 
Clearly \eqref{eq:fconstbdk}
also holds without the support assumption on $\Fvec$.
\qed
\end{proof}
%----------------------------%

\section{Construction near spacelike infinity}
\label{sec:SpaceinfConstruction}

We consider the Einstein equations \eqref{eq:MC} 
near spacelike infinity, in the form 
\begin{equation}\label{eq:eqv+c}
\dg(v+c) + \tfrac12[v+c,v+c] = 0
\qquad
c|_{y^0=0}=0
\end{equation}
where $v$ is given and $c$ is the unknown. 
We assume in particular that $v$ 
is smooth including along future null infinity (not at spacelike infinity $\spaceinf$),
and that it asymptotes to a solution towards $\spaceinf$,
in the sense that 
\begin{equation}\label{eq:vdecay}
\text{$\dg v + \tfrac12[v,v]$ decays like a power of $\s=2y^0+|\vec{y}|$ towards $\spaceinf$}
\end{equation}
The main result of this section is Proposition \ref{prop:ApplySHS},
where we show existence of $c$,
that it is smooth away from null and spacelike infinity,
and that its regularity along null infinity increases 
linearly with the rate of decay in \eqref{eq:vdecay}.

Proposition \ref{prop:ApplySHS} will be proven as an application of 
Theorem \ref{thm:nonlinEE}. 
It will be used in the proof of Theorem \ref{thm:main}, 
to construct $u$ near spacelike infinity,
where $v$ is chosen to be $\kerr$ plus an extension 
of $\udata-\kerrdata$ to $y^0\ge0$, see \eqref{eq:u=v+c} and \eqref{eq:MCforc}.

Section \ref{sec:SpaceinfConstruction} is organized as follows.
In Section \ref{sec:GeometrySpacelikeInf} we 
introduce geometric quantities
that allow to pass to the abstract setting of Section \ref{sec:Abstract};
in Section \ref{sec:basis_i0}, \ref{sec:norms_i0} we fix bases and norms;
in Section \ref{sec:gauge_definition_i0} we define a gauge,
used in Section \ref{sec:reformSHS_i0} to show that \eqref{eq:eqv+c} is 
quasilinear symmetric hyperbolic including along null infinity, 
up to constraints that propagate;
Proposition \ref{prop:ApplySHS} is in Section \ref{sec:existence_i0}.
%----------------------------%
\begin{remark}\label{rem:local4}
Some definitions in
this section are labeled 
'local to Section  \ref{sec:SpaceinfConstruction}',
by which we mean that they are only valid in Section \ref{sec:SpaceinfConstruction}.
For example, the basis \smash{$(\eg^k_i)$} in \eqref{eq:gbasis_i0}
is 'local to Section  \ref{sec:SpaceinfConstruction}'.
In particular it is not to be confused with the basis
\smash{$(\eg^k_i)$} in \eqref{eq:gbasis_bulk} in Section \ref{sec:bulk},
which is 'local to Section \ref{sec:bulk}',
and which uses the same symbol but is a basis of a different space.
\end{remark}
%----------------------------%
\subsection{Geometry}
\label{sec:GeometrySpacelikeInf}

We explain the geometry near spacelike infinity,
and make precise the identification with the abstract geometric setup 
in Section \ref{sec:AbstractGeom} (Convention \ref{conv:ztXmu_NEW}).

Recall the neighborhood $\diamondy\subset\cyl$ of spacelike infinity $\spaceinf$
in \eqref{eq:diamondy}. On $\diamondy$, the functions $y$ in \eqref{eq:yycoords}
are smooth coordinates, with $\spaceinf$ at the origin $y=0$. 
Define
\begin{align*}
\Dspcl = 
\left(\diamondy \cap \overline{\diamond}_+\right) \setminus \spaceinf
\end{align*}
This set includes the portion of future null infinity
near $\spaceinf$, but excludes $\spaceinf$.
Define the smooth function $\s:\Dspcl\to\R$ given by 
\begin{equation}\label{eq:sdef}
\s = 2y^0 + |\vec{y}|
\end{equation}
where $\vec{y} = (y^1,y^2,y^3)$ and $|\vec{y}|=((y^1)^2+(y^2)^2+(y^3)^2)^{\frac12}$.
For $s>0$ define 
\begin{align}\label{eq:deltas}
\begin{aligned}
\begin{aligned}
\Dspcl_{\le s} &= \{p\in\Dspcl\mid\s(p)\le s \}\\
\Dspcl_{s} &= \{p\in\Dspcl\mid\s(p)= s \}
\end{aligned}
\qquad\qquad
\begin{aligned}
\Dspop_{\le s} &= \{p\in\Dspcl\cap\diamond_+\mid\s(p)\le s \}\\
\Dspop_{s} &= \{p\in\Dspcl\cap\diamond_+\mid\s(p)= s \}
\end{aligned}
\end{aligned}
\end{align}
\new{and analogously for $\le$ replaced by $<$.}
The sets $\Dspcl_{\le s},\Dspcl_{s}$ do intersect future null infinity,
whereas $\Dspop_{\le s},\Dspop_{s}$ do not intersect future null infinity.
%---------------------------%
\begin{remark}\label{rem:Dstau}
If $s\in(0,1]$
then on $\Dspcl_{\le s}$: $\tau\in[0,\arctan(\frac23s)]$ and
$\xi^4\in [\frac{1-s^2}{1+s^2},1)$.
\end{remark}
%---------------------------%
The analysis near spacelike infinity will make use of
the $\R_+$-action in Section \ref{sec:R+action}.
The diffeomorphism $\Scal_{\lambda}:\cyl\to\cyl$ with $\lambda>0$
in \eqref{eq:scall} restricts to\footnote{
By abuse of notation, we also denote this map by 
$\Scal_{\lambda}$, analogously for $\Scalg_{\lambda}$ in \eqref{eq:Scalg_i0}.}
\begin{equation}\label{eq:ScalD}
\Scal_{\lambda}:\;\Dspcl\to\Dspcl
\qquad\text{where}\qquad
\Scal_{\lambda}^{\new{*}}y = \tfrac{1}{\lambda}y
\end{equation}
Thus the $\R_+$-action in Definition \ref{def:scalg} restricts to
\begin{equation}\label{eq:Scalg_i0}
\Scalg_{\lambda}:\;\gx(\Dspcl)\to\gx(\Dspcl)
\end{equation}
where $\gx(\Dspcl)$ is the space of smooth sections of $\gx$ on $\Dspcl$, 
c.f.~Remark \ref{rem:gdiamond}.
Recall that the operations \eqref{eq:gop}
commute with this action, see Lemma \ref{lem:homogeneity}.
Furthermore, \eqref{eq:Scalg_i0} restricts
to a map on $\lx(\Dspcl)$ and to a map on $\I(\Dspcl)$.
%-----------------------------%
\begin{definition}[Homogeneous elements]\label{def:homog}
We say that $f\in C^\infty(\Dspcl)$ respectively $u\in \gx(\Dspcl)$
is homogeneous of degree $j\in\Z$ if and only if 
\begin{equation}\label{eq:homogeneouselements}
\Scal_{\lambda}^*f = \tfrac{1}{\lambda^j}f 
	\qquad\qquad \Scalg_{\lambda}u = \tfrac{1}{\lambda^j} u
\end{equation}
Analogously for vector fields, one-forms,
elements in $\lx(\Dspcl)$, elements in $\I(\Dspcl)$.
\end{definition}
%-----------------------------% 
Note that the coordinate functions $y^\mu$ are homogeneous of degree one.
%-----------------------------%
\begin{definition}\label{def:ztXmudef}
On $\Dspcl$ we define:
\begin{subequations}
\begin{align}
&\text{The smooth functions $\zzeta = \log(\s)$ and $\ttcoord = \tfrac{y^0}{|\vec{y}|}$.}
	\label{eq:zeta,t}\\
&\text{The smooth vector fields $X_\mu=\s\p_{y^\mu}$ for $\mu=0\dots3$.}
	\label{eq:Xmi}\\
&\text{The smooth 4-density 
	$\muW = \tfrac{1}{\s^4}|dy^0\wedge\cdots\wedge dy^3|$.}
	\label{eq:muW}
\end{align}
Moreover, for every $s>0$ we define the smooth 3-density 
\begin{align}
\muWs 
= \s^{-4}|y^\mu\intermult_{\p_{y^\mu}}(dy^0\wedge\cdots\wedge dy^3)|
\in \dens{3}(\Dspcl_{s})
	\label{eq:muWs}
\end{align}
\end{subequations}
\end{definition}
%-----------------------------%
Note that $X_\mu$, $\muW$  are homogeneous of degree zero.
Further $(\zzeta,\ttcoord,\frac{\vec{y}}{|\vec{y}|})$
are coordinates on $\Dspcl$,
and in these coordinates, \eqref{eq:ScalD} acts by translating $\zzeta$.
%-----------------------------%

%-----------------------------%
In order to apply the results in Section \ref{sec:Abstract}
we use the following convention.
%-----------------------------%
\begin{convention}\label{conv:ztXmu_NEW}
We make the following choices
for the geometric quantities in Section \ref{sec:AbstractGeom}.
%This allows us to pass to the abstract setting in Section \ref{sec:Abstract}.
%Concretely we will use Section \ref{sec:Abstract} with the following choices.
We choose $\Mcpt = S^2$.
Then 
$\homMfd = (-\infty,0] \times [0,1) \times S^2$.
We identify
\begin{align}\label{eq:ztcoords}
\Dspop_{\le1}\simeq \homMfd
\qquad 
\text{via the coordinates $(\zzeta ,\ttcoord,\tfrac{\vec{y}}{|\vec{y}|})$ on $\Dspop_{\le1}$}
\end{align}
using \eqref{eq:zeta,t},
see Figure \ref{fig:HomogeneousCoordinates}.
Via this identification one has, for all $s>0$,
\begin{equation}\label{eq:levelsets}
\Dspop_{\le s} \simeq \homMfd_{\le \log(s)}
\qquad
\Dspop_{s} \simeq \homMfd_{\log(s)}
\end{equation}
Furthermore:
\begin{itemize}
\item 
For the frame of vector fields \eqref{eq:abstractX} we choose \eqref{eq:Xmi}.
\item 
For the density \eqref{eq:AbstractmuM} we choose \eqref{eq:muW}.
The definition \eqref{eq:muWs} is compatible with \eqref{eq:AbstractmuM'},
in the sense that one has 
$\muWs = \s^{-4}|\intermult_{\p_{\zzeta}}(dy^0\wedge\cdots\wedge dy^3)|$.
\end{itemize}
We check that these are admissible choices:
$X_0,\dots,X_3$ satisfy \eqref{eq:translinv} 
because they are homogeneous of degree zero;
they satisfy \eqref{eq:X1..Tang} because 
$X_{i}(\ttcoord)=-\ttcoord y^i\s/|\vec{y}|^2$ for $i=1,2,3$;
and \eqref{eq:X1...Comm} because $[X_i,X_j]= \p_{y^i}(\s)X_j-\p_{y^j}(\s)X_i$
for $i,j=1,2,3$.
Further $\muW$ is translation invariant in $\zzeta$ because
it is homogeneous of degree zero.
\end{convention}
%---------------------------------%

\subsection{Homogeneous bases}
\label{sec:basis_i0}
We introduce $C^\infty$-bases of the modules 
$\Omega(\Dspcl)\otimesRR\Kil$ and $\I(\Dspcl)$ and $\gx(\Dspcl)$,
given by elements that are homogeneous of degree zero
in the sense of Definition \ref{def:homog}.
The choice of basis is in particular motivated
by gauge fixing, c.f.~Lemma \ref{lem:gaugebases_i0}.
%-----------------------------%

We decompose $\Kil = \boosts\oplus\transl$ where 
\begin{align}\label{eq:tbdef}
\begin{aligned}
\boosts &= \SPAN_{\R}\{B^{01},B^{02},B^{03},B^{12},B^{23},B^{31}\}
\\
\transl &= \SPAN_{\R}\{T^0,T^1,T^2,T^3\}
\end{aligned}
\end{align}
using the boosts and translations \eqref{eq:Kilbas}.
Note that the boosts are homogeneous of degree zero,
and the translations are homogeneous of degree one.
%-----------------------------%
\begin{definition}\label{def:elements_i0}
This definition is local to Section \ref{sec:SpaceinfConstruction}, 
see Remark \ref{rem:local4}.
Abbreviate $\theta^\mu = \frac{dy^\mu}{\s}$.
Define the numbers
\begin{equation}\label{eq:nm_i0}
\renewcommand{\arraystretch}{1.1}
\begin{array}{c|ccccc}
k & 0 & 1 & 2 & 3 & 4 \\
\hline
\ngG_k^{\Omega} & 1&3&3&1&0\\
\ngG_k^{\I} &\new{0}&\new{0}&10&6&0\\
\ngG_k & 10 & 40 & 36 & 10 & 0\\
\end{array} 
\qquad
\begin{array}{c|ccccc}
k & 0 & 1 & 2 & 3 & 4 \\
\hline
\ng_k^{\Omega} & 1&4&6&4&1\\
\ng_k^{\I} &0&0&10&16&6\\
\ng_k & 10 & 50 & 76 & 46 & 10\\
\end{array} 
\end{equation}
e.g.~$\ng_2=76$.
The numbers $\ng_k^{\Omega}$, $\ng_k^{\I}$, $\ng_k$
are, respectively, the $C^\infty$-ranks of 
$\Omega^k(\Dspcl)$, $\I^k(\Dspcl)$, $\gx^k(\Dspcl)$.
Observe that $\ng_k = \ngG_k + \ngG_{k-1}$,
analogously for $\ng_k^{\Omega},\ng_k^{\I}$.
\begin{itemize}
\item 
For $k=0\dots4$ define 
$(\eGO_i^k)_{i=1\dots\ngG_k^{\Omega}},
(\eO_i^k)_{i=1\dots\ng_k^{\Omega}}
\in\Omega^k(\Dspcl)$ by:
\begin{align}
\eGO_1^0 &= 1\nonumber\\
\eGO_1^1 &= \theta^1,\;\eGO_2^1 =\theta^2,\;\eGO_3^1 =\theta^3\nonumber\\
\eGO_1^2 &=\theta^1\wedge\theta^2,\;
    \eGO_2^2=\theta^2\wedge\theta^3,\;
    \eGO_3^2=\theta^3\wedge\theta^1\nonumber\\
\eGO_1^3 &=  \theta^1\wedge\theta^2\wedge\theta^3\nonumber\\
(\eO^k_i)_{i=1\dots\ng^\Omega_k}&:\; 
	\eGO^k_1,\eGO^k_2,\dots,
	\theta^0\wedge\eGO^{k-1}_1,\theta^0\wedge\eGO^{k-1}_2,\dots
%
%(\eO^k_i)_{i=1\dots\ng^\Omega_k}&:\; 
%	\NEW{{\eGO^k_1,\dots,\eGO^k_{\ngG^{\Omega}_k},\,
%	\theta^0\wedge\eGO^{k-1}_1,\dots,
%	\theta^0\wedge\eGO^{k-1}_{\ngG^{\Omega}_{k-1}}}}
%
%(\eO^k_i)_{i=1\dots\ng^\Omega_k}&:\; 
%	(\eGO^k_j)_{j=1\dots\ngG^{\Omega}_k},\ 
%	(\theta^0\wedge \eGO^k_j)_{j=1\dots\ngG^{\Omega}_{k-1}}
	\label{eq:Omegabasis_i0}
\end{align}
\item 
For $k=0\dots4$ define the following elements in $\Omega^k(\Dspcl)\otimesRR\boosts$:
\begin{align}
&(\eGB_i^k)_{i=1,\dots,6\ngG_k^{\Omega}}:\;\;\eGO_1^k\otimes B^{\mu\nu},\eGO_2^k\otimes B^{\mu\nu},\dots  \nonumber\\
&(\eB_i^k)_{i=1,\dots,6\ng_k^{\Omega}}:\;\; \eGB_1^k,\eGB_2^k,\dots,\theta^0\eGB_1^{k-1},\theta^0\eGB_2^{k-1},\dots
	\label{eq:OmegaBbasis_i0}
\end{align}
where $\mu\nu$ runs over $01,02,03,12,23,31$,
and using the multiplication \eqref{eq:lmod}.
\item 
For $k=0\dots4$ define the following elements in $\Omega^k(\Dspcl)\otimesRR\transl$:
\begin{align}
&(\eGT_i^k)_{i=1,\dots,4\ngG_k^{\Omega}}:\;\;\tfrac{1}{\s}\eGO_1^k\otimes T^{\mu},\tfrac{1}{\s}\eGO_2^k\otimes T^{\mu},\dots \nonumber\\
&(\eT_i^k)_{i=1,\dots,4\ng_k^{\Omega}}:\;\; \eGT_1^k,\eGT_2^k,\dots,\theta^0\eGT_1^{k-1},\theta^0\eGT_2^{k-1},\dots
	\label{eq:OmegaTbasis_i0}
\end{align}
where $\mu$ runs over $0\dots3$,
and using the multiplication \eqref{eq:lmod}.
\item 
Let $\cyclind=\{(123),(231),(312)\}$ be the cyclic index set.
For $(abc)\in\cyclind$ let\footnote{
The sign is as indicated because $dy^0,\dots,dy^3$
is negatively oriented, see Remark \ref{rem:orientation}.} 
$\theta_\pm^a = \frac12(\theta^0\wedge\theta^a\mp i\theta^b\wedge\theta^c)\in \Omega^2_{\pm}(\Dspcl)$. 
Set
\begin{align}\label{eq:traceless_symmetric_matrices}
\begin{aligned}
h_1 &= 
\left(\begin{smallmatrix}
0&1&0\\
1&0&0\\
0&0&0
\end{smallmatrix}\right),&
h_2 &= 
\left(\begin{smallmatrix}
0&0&0\\
0&0&1\\
0&1&0
\end{smallmatrix}\right),&
h_3 &= 
\left(\begin{smallmatrix}
0&0&1\\
0&0&0\\
1&0&0
\end{smallmatrix}\right),\\
h_4 &= 
\left(\begin{smallmatrix}
1&0&0\\
0&-1&0\\
0&0&0
\end{smallmatrix}\right),&
h_5 &= 
\tfrac{1}{\sqrt{3}}\left(\begin{smallmatrix}
1&0&0\\
0&1&0\\
0&0&-2
\end{smallmatrix}\right)
\end{aligned}
\end{align}
Define the following elements of $\I^2(\Dspcl)$ respectively $\I^3(\Dspcl)$:
\begin{align*}
	(\eGI_j^2)_{j=1\dots10}:\ & 
	\mu_{\etay}^{-1} \otimes
	(\tsum_{p,q=1}^3(h_\ell)_{pq}\theta_+^p\otimes \theta_+^q)
	\oplus
	cc, \\
	&\mu_{\etay}^{-1} \otimes
		(\tsum_{p,q=1}^3(ih_\ell)_{pq}\theta_+^p\otimes \theta_+^q)
		\oplus
		cc\ 
	\\
	(\eGI_j^3)_{j=1\dots6}:\ & 
	\tfrac{1}{2\sqrt{3}}\mu_{\etay}\otimes\left( 
	2\theta^1\theta^2\theta^3\otimes\theta_+^a
	\new{+} i \theta^0\theta^a(\theta^b\otimes\theta^b_+ + \theta^c\otimes\theta^c_+) \right)
	\oplus cc,\\
	&
	i\tfrac{1}{2\sqrt{3}}\mu_{\etay}\otimes\left( 
		2\theta^1\theta^2\theta^3\otimes\theta_+^a
		\new{+} i \theta^0\theta^a(\theta^b\otimes\theta^b_+ + \theta^c\otimes\theta^c_+) \right)
		\oplus cc
\end{align*}
where the index $\ell$ used for $(\eGI_j^2)$ runs over $1\dots5$,
the index $(abc)$ used for $(\eGI_j^3)$ runs over $\cyclind$.
Further $u\oplus cc$ stands for $u\oplus\bar u$;
we suppress the wedge sign;
$\mu_{\etay}^{-1}$ is the density associated to
$\etay=\eta_{\mu\nu}dy^\mu\otimes dy^\nu$ 
in \eqref{eq:diamondy} (see Remark \ref{rem:densityconv}).
For $k=2,3,4$ define the following elements in $\I^k(\Dspcl)$:
\begin{align}
(\eI^k_j)_{j=1\dots\ng^{\I}_k}:
	\;\;\eGI_1^k,\eGI_2^k,\dots,\theta^0\eGI_1^{k-1},\theta^0\eGI_2^{k-1},\dots
	\label{eq:Ibasis_i0}
\end{align}
where we use the multiplication in Definition \ref{def:Imod}.
\item 
For $k=0\dots4$
define the following elements of $\gx^k(\Dspcl)$:
\begin{align}
(\eGg^k_i)_{i=1\dots\ngG_k}:\;\;&
\eGB_1^k\oplus0\Ieps, \eGB_2^k\oplus0\Ieps,\dots, \nonumber\\
&\;\;\eGT_1^k\oplus0\Ieps, \eGT_2^k\oplus0\Ieps,\dots,\nonumber \\
&\;\;0\oplus\eGI_1^{k+1}\Ieps, 0\oplus\eGI_2^{k+1}\Ieps,\dots\label{eq:gGbasis_i0}\\
(\eg^k_i)_{i=1\dots\ng_k}:\;\;& 
\eGg^k_1,\eGg^k_2,\dots, \theta^0 \eGg^{k-1}_1,\theta^0 \eGg^{k-1}_2,\dots
	\label{eq:gbasis_i0}
\end{align}
where we use the multiplication \eqref{eq:gmod}
\end{itemize}
\end{definition}
%------------------------------%
\begin{lemma}[Homogeneous bases]\label{lem:bases_i0}
The elements introduced in Definition \ref{def:elements_i0}
are homogeneous of degree zero (Definition \ref{def:homog}),
and for each $k=0\dots4$:
\[ 
\def\arraystretch{1.}
\setlength{\tabcolsep}{3pt}
\begin{tabular}{c|ccccccccc}
Module & 
$\Omega^k(\Dspcl)$ &
$\Omega^k(\Dspcl)\otimesRR\boosts$ &
$\Omega^k(\Dspcl)\otimesRR\transl$ &
$\I^k(\Dspcl)$ &
$\gx^k(\Dspcl)$ \\
Rank &
$\ng_k^{\Omega}$ &
$6\ng_k^{\Omega}$ &
$4\ng_k^{\Omega}$ &
$\ng_k^{\I}$ &
$\ng_k$ \\
Basis &
$(\eO^k_i)_{i=1\dots \ng_k^{\Omega}}$ & 
$(\eB^k_i)_{i=1\dots 6\ng_k^{\Omega}}$ & 
$(\eT^k_i)_{i=1\dots 4\ng_k^{\Omega}}$ & 
$(\eI^k_i)_{i=1\dots \ng_k^{\I}}$ & 
$(\eg^k_i)_{i=1\dots \ng_k}$ & 
\end{tabular} 
\]
where, for example, the last column means that
$\ng_k$ is the $C^\infty$-rank of $\gx^k(\Dspcl)$,
and $(\eg^k_i)_{i=1\dots \ng_k}$ is a $C^\infty$-basis.
\end{lemma}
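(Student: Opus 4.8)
The statement to prove is Lemma \ref{lem:bases_i0}, which asserts two things for each $k=0\dots4$: (i) all the elements introduced in Definition \ref{def:elements_i0} are homogeneous of degree zero with respect to the $\R_+$-action $\Scalg_\lambda$ of \eqref{eq:Scalg_i0}; and (ii) the listed families $(\eO^k_i)$, $(\eB^k_i)$, $(\eT^k_i)$, $(\eI^k_i)$, $(\eg^k_i)$ are $C^\infty(\Dspcl)$-bases of the respective modules. The plan is to first establish homogeneity of all the building blocks, then deduce homogeneity of all composite elements by naturality of $\Scalg_\lambda$ with respect to wedge product, module multiplication, and direct sum, and finally prove the basis property by counting ranks and exhibiting linear independence (in fact invertibility of the change-of-basis matrix to a manifestly-free basis).

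First I would record homogeneity of the atomic pieces. The coordinate functions $y^\mu$ are homogeneous of degree one and $\s = 2y^0+|\vec y|$ is homogeneous of degree one (both by \eqref{eq:ScalD}, $\Scal_\lambda^* y = \lambda^{-1} y$), so $\theta^\mu = dy^\mu/\s$ is homogeneous of degree zero: $\Scal_\lambda^*(dy^\mu/\s) = \lambda^{-1} dy^\mu/(\lambda^{-1}\s) = dy^\mu/\s$. Hence each $\eGO^k_i$ (a wedge of $\theta^\mu$'s) is homogeneous of degree zero, and so is $\theta^0\wedge\eGO^{k-1}_j$; this gives homogeneity of the $\eO^k_i$. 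For the Killing factors, Definition \ref{def:scalg} records $\Scal^*_\lambda B^{\mu\nu}=B^{\mu\nu}$ and $\Scal^*_\lambda T_\mu = \lambda^{-1} T_\mu$; so the boosts are homogeneous of degree zero and the translations of degree one. Consequently $\eGO^k_i\otimes B^{\mu\nu}$ is homogeneous of degree zero (both factors degree zero, and $\Scalg_\lambda$ acts diagonally on $\lx = \Omega\otimesRR\Kil$ by Definition \ref{def:scalg}), and $\tfrac1\s \eGO^k_i\otimes T^\mu$ is homogeneous of degree zero because the extra $1/\s$ contributes a factor $\lambda$ that cancels the $\lambda^{-1}$ from $T^\mu$; this handles the $\eB^k_i,\eT^k_i$ and their $\theta^0$-multiples, using \eqref{eq:modmult_hom}. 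For the $\I$-side, $\mu_{\etay}^{\pm1}$ is homogeneous of degree $\mp$ one: indeed $\etay = \eta_{\mu\nu} dy^\mu\otimes dy^\nu$ pulls back under $\Scal_\lambda$ to $\lambda^{-2}\etay$, so by Remark \ref{rem:densityconv} the associated density $\mu_{\etay}=\mu^1_{\etay}$ scales with $|\lambda^{-2}|^{1/1}$... more precisely $\mu^s_{\etay}$ picks up $|\lambda|^{-2s/4\cdot 4}$; I would just compute directly that $\mu_{\etay}^{-1}\otimes \theta^p_+\otimes\theta^q_+$ is homogeneous of degree $+1$ and $\mu_{\etay}\otimes(\cdots)$ of degree $-1$, and then check that in the definition of $\Scalg_\lambda$ on $\I$ (the explicit $\lambda^{-1}$ factor in \eqref{eq:scalg}) these are exactly compensated, so that $\eGI^2_j$ and $\eGI^3_j$ are homogeneous of degree zero; the $\theta^0$-multiples inherit this. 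Finally the $\eGg^k_i$ are direct sums of the above and the $\eg^k_i$ their $\theta^0$-multiples, so degree zero follows from \eqref{eq:gmod} and \eqref{eq:modmult_hom}. This completes part (i).

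For part (ii), I would argue separately module by module but with a common scheme. The rank table \eqref{eq:nm_i0} is justified as follows: $\Omega^k(\Dspcl)$ has rank $\binom{4}{k}=\ng^\Omega_k$ because $\Dspcl$ is $4$-dimensional and parallelizable (it sits inside $\cyl$); then $\Omega^k\otimesRR\boosts$ has rank $6\ng^\Omega_k$ and $\Omega^k\otimesRR\transl$ rank $4\ng^\Omega_k$ since $\dim\boosts=6$, $\dim\transl=4$; the rank of $\I^k(\Dspcl)$ is $10,16,6$ for $k=2,3,4$ as recalled after Definition \ref{def:It}; and $\gx^k(\Dspcl)=\lx^k\oplus\I^{k+1}$ has rank $10\ng^\Omega_k + \ng^\I_{k+1} = \ng_k$. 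One checks the counts match the number of listed elements. To prove each listed family is a basis it then suffices to show it is a generating set, equivalently (over the local ring of germs, or by a partition-of-unity/pointwise argument) that at every point of $\Dspcl$ the elements are linearly independent in the fiber. For $\Omega^k$: $\{\theta^{\mu_1}\wedge\cdots\wedge\theta^{\mu_k}\}$ is the standard basis dual to $\{\p_{y^{\mu_1}}\wedge\cdots\wedge\p_{y^{\mu_k}}\}$ up to the nonvanishing scalar $\s^{-k}$, hence a basis; the reordering of \eqref{eq:Omegabasis_i0} into "$\eGO^k$ part, then $\theta^0\wedge\eGO^{k-1}$ part" is exactly the splitting $\Omega^k = \Lambda^k\langle\theta^1,\theta^2,\theta^3\rangle \oplus \theta^0\wedge\Lambda^{k-1}\langle\theta^1,\theta^2,\theta^3\rangle$. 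For $\Omega^k\otimesRR\boosts$ and $\Omega^k\otimesRR\transl$: tensoring a $C^\infty$-basis of $\Omega^k$ with an $\R$-basis of $\boosts$ (resp.\ a nonzero-scalar multiple $\tfrac1\s$ of such a tensor basis with $\transl$) yields a $C^\infty$-basis, since $\Kil$-factors are global constants. For $\I^k$: this is the substantive point — one must verify that the explicit elements $\eGI^2_j$, $\eGI^3_j$ built from the $h_\ell$ and the $\theta^a_\pm$ form, together with the real/imaginary ($u\oplus cc$, $iu\oplus cc$) doubling, a basis of the real module $\I^k(\Dspcl)$, which requires knowing that $\I^2_+ = \mathrm{Sym}^2(\Omega^2_+)$ restricted to the traceless locus has fiber dimension $5$ over $\C$ with the $h_1,\dots,h_5$ spanning traceless symmetric $3\times3$ matrices, and that $\I^3_+$ has fiber dimension $8$ over $\C$ with the six listed complex elements together with $\theta^0$-multiples accounting for the full $16$ real dimensions; I would verify this by checking the algebraic symmetry/trace conditions of Definition \ref{def:It} against the explicit formulas, using that $\{\theta^1_+,\theta^2_+,\theta^3_+\}$ is an $\Omega^2_+$-frame and $\{h_1,\dots,h_5\}$ spans the traceless symmetric matrices. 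This identification against \cite[Section 3.3]{Thesis} — confirming these particular explicit elements really do satisfy the defining conditions of $\I^k$ and span — is the main obstacle; everything else is linear algebra. Finally, for $\gx^k$: since $\gx^k(\Dspcl)=(\Omega^k\otimesRR\boosts)\oplus(\Omega^k\otimesRR\transl)\oplus\I^{k+1}$ as a $C^\infty$-module (from Theorem \ref{thm:gaxioms}, $\gx^k=\lx^k\oplus\I^{k+1}$ and $\lx^k=\Omega^k\otimesRR\boosts\oplus\Omega^k\otimesRR\transl$), the concatenation \eqref{eq:gGbasis_i0} of the three bases just obtained is a basis of the "$\eGg^k$ part", and \eqref{eq:gbasis_i0} adds the $\theta^0$-multiples of the degree $k-1$ version, which is again the direct-sum decomposition along $\Omega^k=\Lambda^k\langle\theta^{1,2,3}\rangle\oplus\theta^0\wedge\Lambda^{k-1}\langle\theta^{1,2,3}\rangle$ applied in each slot — hence a basis of $\gx^k(\Dspcl)$ of the asserted rank $\ng_k$. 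Assembling the rank count with the pointwise linear independence gives the lemma.
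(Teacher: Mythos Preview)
Your proposal is correct and is exactly what the paper's proof amounts to: the paper simply writes ``By direct inspection,'' and your argument is precisely that inspection spelled out --- homogeneity of $\theta^\mu$, $B^{\mu\nu}$, $\tfrac{1}{\s}T_\mu$, and the density-weighted $\I$-elements (using the explicit $\lambda^{-1}$ in \eqref{eq:scalg}), then the basis claims by rank count and the decomposition $\Omega^k = \Lambda^k\langle\theta^1,\theta^2,\theta^3\rangle \oplus \theta^0\wedge\Lambda^{k-1}\langle\theta^1,\theta^2,\theta^3\rangle$. One minor point to tidy: your density computation for $\eGI^3_j$ is stated loosely; the cleanest route is to note that $\Scal_\lambda^*\mu_{\etay}^{-1} = \lambda\,\mu_{\etay}^{-1}$ (since $\Scal_\lambda^*\etay = \lambda^{-2}\etay$ and Remark~\ref{rem:densityconv}), so that the raw pullback of each $\I$-basis element picks up exactly one factor of $\lambda$, which the $\lambda^{-1}$ in \eqref{eq:scalg} cancels.
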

%------------------------------%
\begin{proof}
By direct inspection.\qed
\end{proof}
%------------------------------%
\subsection{Homogeneous norms}
\label{sec:norms_i0}
We define the norms that are used near $\spaceinf$
(some of them are actually seminorms, but we refer to them
as norms for simplicity).
Recall Definition \ref{def:ztXmudef}.
%-----------------------------%
\begin{definition}[Norms near spacelike infinity]\label{def:norms_i0}
For every $k\in\Z_{\ge0}$ and $s>0$ and 
$f\in C^\infty(\Dspop_{\le s})$ define:
\begin{align}\label{eq:homognorms}
\begin{aligned}
\|f\|_{\nosHb^{k}(\Dspop_{\le s})}^2
&=\textstyle
\sum_{j=0}^k
\sum_{i_1,\dots,i_{j}=0}^{3} 
\int_{\Dspop_{\le s}}\big|X_{i_1} \cdots X_{i_{j}} f\big|^2\,\muW\\
%--------------------------
%\label{eq:sHbDs}
\|f\|_{\sHb^{k}(\Dspop_{s})}^2
&=\textstyle
\sum_{j=0}^k
\sum_{i_1,\dots,i_{j}=0}^{3}
\int_{\Dspop_{s}}\big|X_{i_1} \cdots X_{i_{j}} f\big|^2
\,\muWs\\
%--------------------------
\|f\|_{\nosCb^{k}(\Dspop_{\le s})}
&=\textstyle
\sum_{j=0}^k
\sum_{i_1,\dots,i_{j}=0}^{3} 
\sup_{p \in \Dspop_{\le s}}\big|X_{i_1} \cdots X_{i_{j}}f(p)\big|\\
%--------------------------
\|f\|_{\sCb^{k}(\Dspop_{s})}
&=\textstyle
\sum_{j=0}^k
\sum_{i_1,\dots,i_{j}=0}^{3} 
\sup_{p \in \Dspop_{s}}\big|X_{i_1} \cdots X_{i_{j}}f(p)\big|
\end{aligned}
\end{align}
For the norms on $\Dspop_{s}$ we make the same
definition when $f$ is only defined near $\Dspop_{s}$.
We make analogous definitions for vector- and matrix-valued functions, 
where we apply the norms componentwise 
and then take the $\ell^2$-sum of the components;
and for matrix differential
operators of the form $a^i X_i$,
where we apply the norms to the matrices $a^i$ and then sum over $i$;
and for elements in $\gx(\Dspop_{\le s})$,
where we use the homogeneous basis \eqref{eq:gbasis_i0} to identify
them with vector-valued functions on $\Dspop_{\le s}$.
For $k\in\Z_{<0}$ we declare \eqref{eq:homognorms} to be zero.
\end{definition}
%-----------------------------%
The norms in \eqref{eq:homognorms}
measure differentiability with respect
to all four vector fields $X_0,\dots,X_{3}$.
In particular, the slashed 
norms over the level sets $\Dspop_{s}$
are not determined by the restriction of $f$ to $\Dspop_{s}$.
%-----------------------------%

The norms in \eqref{eq:homognorms} are homogeneous
(c.f.~Melrose's b-calculus \cite{MelroseGreenBook}).
That is, for all $\lambda>0$ and $u\in\gx(\Dspop_{\le s})$,
\begin{align}\label{eq:norms_homog}
\|\Scalg_\lambda u\|_{\nosHb^{k}(\Dspop_{\le \lambda s})}
=
\|u\|_{\nosHb^{k}(\Dspop_{\le s})}
\end{align}
and analogously for 
$\|{\cdot}\|_{\sHb^{k}(\Dspop_{s})}$,
$\|{\cdot}\|_{\nosCb^{k}(\Dspop_{\le s})}$,
$\|{\cdot}\|_{\sCb^{k}(\Dspop_{s})}$.
This uses the fact that $X_\mu$, $\muW$ and the
basis elements \eqref{eq:gbasis_i0} are homogeneous of degree zero.

%-----------------------------%
\begin{remark}\label{rem:normsequality}
Using Convention \ref{conv:ztXmu_NEW}, 
in particular the identification \eqref{eq:ztcoords},
the norms \eqref{eq:Xnorms} in Definition \ref{def:sHXDefinition}
and the norms in Definition \ref{def:norms_i0} are equal,
in the sense that for all $s\in(0,1]$
and all $f\in C^\infty(\Dspop_{\le s})$:
\begin{align*}
%%%%%%%%
\|f\|_{\nosHb^k(\Dspop_{\le s})}
&=
\|f\|_{\HX^k(\homMfd_{\le\log(s)})} &
%%%%%%%%
\|f\|_{\sHb^k(\Dspop_{s})}
&=
\|f\|_{\sHX^k(\homMfd_{\log(s)})}\\
%%%%%%%%
\|f\|_{\nosCb^k(\Dspop_{\le s})}
&=
\|f\|_{\CX^k(\homMfd_{\le\log(s)})}&
%%%%%%%%
\|f\|_{\sCb^k(\Dspop_{s})}
&=
\|f\|_{\sCX^k(\homMfd_{\log(s)})}
%%%%%%%%
\end{align*}
\end{remark}
%-----------------------------%
%We conclude with some basic properties of these norms.
%-----------------------------%
%\begin{lemma}\label{lem:HkCkbasic}
%For all $k\in\Z_{\ge0}$, $s\in(0,1]$,
%$\mu=0\dots3$
%and $f,g\in C^\infty(\Dspcl_{\le 1})$:
%\begin{align*}
%\|X_{\mu}f\|_{\sHb^{k}(\Dspop_{s})}
%	&\lesssim_{k}
%	\|f\|_{\sHb^{k+1}(\Dspop_{s})}\\
%\|fg\|_{\sHb^{k}(\Dspop_{s})}
%	&\lesssim_{k}
%	\|f\|_{\sCb^{\lfloor\frac{k}{2}\rfloor}(\Dspop_{s})}
%	\|g\|_{\sHb^{k}(\Dspop_{s})}
%	+
%	\|g\|_{\sCb^{\lfloor\frac{k}{2}\rfloor}(\Dspop_{s})}
%	\|f\|_{\sHb^{k}(\Dspop_{s})}\\
%\|fg\|_{\sHb^{k}(\Dspop_{s})}
%	&\lesssim_{k}
%	\|f\|_{\sCb^{k}(\Dspop_{s})}
%	\|g\|_{\sHb^{k}(\Dspop_{s})}
%\end{align*}
%\end{lemma}
%%-----------------------------%
%\begin{proof}
%This is immediate from the definition.\qed
%\end{proof}
%-----------------------------%

\subsection{Gauge}
\label{sec:gauge_i0}

We define gauge fixing conditions under which the equation \eqref{eq:eqv+c}
contains a square system that is quasilinear symmetric hyperbolic including along null infinity.
The remaining equations are the constraints, 
which themselves solve a linear symmetric hyperbolic
system, i.e.~the constraints propagate.

The gauge that we define here is 
a special case of the gauges constructed in \cite[Section 3.5.3]{Thesis},
the concrete choice here is compatible with homogeneity.
The concept of gauges that we use here was introduced in \cite{RTgLa1,RTgLa2},
see \cite[Section 7]{RTgLa2} and \cite[Section 3.5]{Thesis}
for an overview and more conceptual discussion.

For concreteness, the definitions and statements in  
Section \ref{sec:gauge_i0} will be made on $\Dspcl$.
They hold analogously on the subsets $\Dspcl_{\le s}$, $\Dspop_{\le s}$
in \eqref{eq:deltas}, because all constructions are effectively fiberwise.

\subsubsection{Definition of gauge}
\label{sec:gauge_definition_i0}

We define a gauge (Definition \ref{def:gauges_i0}) and show basic properties (Lemma \ref{lem:gauge_mainprop_i0}).
See the start of Section \ref{sec:reformSHS_i0} for a brief
outline about how the gauge is used.

Fix the following homogeneous of degree zero 
vector field and metric on $\Dspcl$:
\begin{align*}
\Thom = \s \p_{y^0}\qquad\qquad
\ghom = \s^{-2} \etay
\end{align*}
where $\etay = \eta_{\mu\nu} dy^{\mu}\otimes dy^{\nu}$, see \eqref{eq:diamondy}.
The vector field $\Thom$ is future directed and timelike with respect to $[\gcyl]$,
and the metric $\ghom$ is a representative of $[\gcyl]$.

The following preliminary definitions are local to Section \ref{sec:SpaceinfConstruction}, 
see Remark \ref{rem:local4}.%
\begin{itemize}
\item 
Let
\begin{equation}\label{eq:Oiprdef}
\Oipr{k}{\cdot}{\cdot}:\Omega^k(\Dspcl)\times \Omega^k(\Dspcl)\to C^\infty(\Dspcl)
\end{equation}
be the nondegenerate symmetric $C^\infty$-bilinear 
form induced by $\ghom$, i.e.,
\begin{align}
\begin{aligned}
&\Oipr{k}{\omega_1\wedge\cdots\wedge\omega_k}{\omega_1'\wedge\cdots\wedge\omega_k'}\\
&\qquad=
\tsum_{\pi\in S_k}\sgn(\pi) 
\ghom^{-1}(\omega_1,\omega_{\pi(1)}')\cdots \ghom^{-1}(\omega_k,\omega_{\pi(k)}')
\label{eq:OmegaIPdef}
\end{aligned}
\end{align}
for all $\omega_1,\dots,\omega_k,\omega_1',\dots,\omega_k'\in\Omega^1(\Dspcl)$,
where $S_k$ is the symmetric group.
\item 
Let 
\begin{equation}\label{eq:Iiprdef}
\Iipr{k}{\cdot}{\cdot}:\I^k(\Dspcl)\times \I^k(\Dspcl)\to C^\infty(\Dspcl)
\end{equation}
be the nondegenerate symmetric $C^\infty$-bilinear form defined by 
\begin{subequations}\label{eq:Ibildefformulas}
\begin{align}\label{eq:Ibilpm}
\Iipr{k}{\up\oplus\um}{\up'\oplus\um'}
=
\Iiprp{k}{\up}{\up'}
+
\Iiprm{k}{\um}{\um'}
\end{align}
where \smash{$\Iiprpm{k}{\cdot}{\cdot}$} are $C^\infty$-bilinear and defined by
\begin{align}\label{eq:Ipmiprformula}
\Iiprpm{k}{\mu_{\ghom}^{-1}\otimes\omega\otimes\nu
	}{\mu_{\ghom}^{-1}\otimes\omega'\otimes\nu'}
	=
	(k-1)\Oipr{k}{\omega}{\omega'}\Oipr{2}{\nu}{\nu'}
\end{align}
\end{subequations}
for all $\mu_{\ghom}^{-1}\otimes\omega\otimes\nu,
\mu_{\ghom}^{-1}\otimes\omega'\otimes\nu'\in\I^k_{\pm}(\Dspcl)$,
using Sweedler's notation in Remark \ref{rem:Sweedler},
and where,
on the right hand side, the $\C$-linear extension of \eqref{eq:Oiprdef} is used.
Using the basis \eqref{eq:Ibasis_i0},
%in Lemma \ref{lem:bases_i0},
\begin{align}
\label{eq:IiprExplicit_i0}
\begin{aligned}
\Iipr{2}{\eI^2_i}{\eI^2_j}
&=
\tfrac{1}{\s^2}\left(\begin{smallmatrix}
\one_{5} & 0\\
0& -\one_{5}
\end{smallmatrix}\right)_{ij}
\\
\Iipr{3}{\eI^3_i}{\eI^3_j}
&=
\tfrac{1}{\s^2}\left(\begin{smallmatrix}
-\one_{3} & 0 &0&0\\
0& \one_{3} &0&0\\
0&0&-\one_5&0\\
0&0&0&\one_5
\end{smallmatrix}\right)_{ij}
\\
\Iipr{4}{\eI^4_i}{\eI^4_j}
&=
\tfrac{1}{\s^2}\left(\begin{smallmatrix}
\one_{3} & 0\\
0& -\one_{3}
\end{smallmatrix}\right)_{ij}
\end{aligned}
\end{align}
where $\one_{n}$ is the identity matrix of size $n$.
\item 
Let $\Iinter_{\Thom}:\I^{k+1}(\Dspcl)\to\I^{k}(\Dspcl)$ be the 
adjoint (relative to \eqref{eq:Iiprdef}) of the map 
$\I^{k}(\Dspcl)\to\I^{k+1}(\Dspcl)$, $u\mapsto \Thom^{\flat} u$ 
where $\Thom^{\flat}=\ghom(\Thom,{\cdot}) = -\smash{\frac{dy^0}{\s}}$,
and where we use the module multiplication in Definition \ref{def:Imod}.
That is,
\begin{equation}\label{eq:intmult}
\Iipr{k}{\Iinter_{\Thom}u}{u'} = \Iipr{k+1}{u}{\Thom^{\flat}u'}
\end{equation}
for all $u\in \I^{k+1}(\Dspcl)$ and $u'\in\I^k(\Dspcl)$.
Using the basis \eqref{eq:Ibasis_i0},
\begin{align}
\label{eq:IinterExplicit_i0}
\begin{aligned}
\Iinter_{\Thom} \eI^2_i
&=0 
\\
\Iinter_{\Thom} \eI^3_i
&=
\left(\begin{smallmatrix}
0_{10\times6} & \one_{10}
\end{smallmatrix}\right)_{ji} \eI^2_j 
\\
\Iinter_{\Thom} \eI^4_i
&=
\left(\begin{smallmatrix}
\one_{6} \\ 
0_{10\times6}
\end{smallmatrix}\right)_{ji} \eI^3_j
\end{aligned}
\end{align}
where $0_{n\times n'}$ is the zero matrix of size $n\times n'$,
and where we sum over $j$.
%%%%%%%%%%%%%%%%%%%%%%%%%%%%%%%%%%%
%BilForm[2]=1/s^2*DiagonalMatrix[{1,1,1,1,1,-1,-1,-1,-1,-1}];
%BilForm[3]=1/s^2*DiagonalMatrix[{-1,-1,-1,1,1,1,-1,-1,-1,-1,-1,1,1,1,1,1}];
%BilForm[4]=1/s^2*DiagonalMatrix[{1,1,1,-1,-1,-1}];
%zero[i_,j_]:=ConstantArray[0,{i,j}];
%one[i_]:=IdentityMatrix[i];
%TMult[3,2]=ArrayFlatten[{{zero[3,5],zero[3,5]},{zero[3,5],zero[3,5]},{-one[5],zero[5,5]},{zero[5,5],-one[5]}}];
%TMult[4,3]=ArrayFlatten[{{-one[3],zero[3,3],zero[3,5],zero[3,5]},{zero[3,3],-one[3],zero[3,5],zero[3,5]}}];
%InterMult[2,3]=ArrayFlatten[{{zero[5,3],zero[5,3],one[5],zero[5,5]},{zero[5,3],zero[5,3],zero[5,5],one[5]}}];
%InterMult[3,4]=ArrayFlatten[{{one[3],zero[3,3]},{zero[3,3],one[3]},{zero[5,3],zero[5,3]},{zero[5,3],zero[5,3]}}];
%Table[Transpose[InterMult[k,k+1]].BilForm[k]==BilForm[k+1].TMult[k+1,k],{k,2,3}]
%%%%%%%%%%%%%%%%%%%%%%%%%%%%%%%%%%%

\item 
Define $P_{\Thom}:\I^k(\Dspcl)\to\I^k(\Dspcl)$ as follows.
First let $P_{\Thom}^{\Omega}:\OmegaC^1(\Dspcl)\to\OmegaC^1(\Dspcl)$ be the
fiberwise reflection in the $\ghom$-orthogonal complement of $\Thom^\flat$
(this maps $dy^0\mapsto dy^0$ and $dy^i\mapsto -dy^i$ for $i=1,2,3$).
This induces a map $P_{\Thom}^{\pm}:\I_{\pm}(\Dspcl)\to\I_{\mp}(\Dspcl)$ which acts
trivially on the density. Set
\begin{equation}\label{eq:Pdefhom}
P_{\Thom}(\up\oplus \um )=(-1)^k 
( \overline{P^+_{\Thom}\up}\oplus \overline{P^-_{\Thom}\um} )
\end{equation}
Using the basis \eqref{eq:Ibasis_i0},
\begin{align}
\label{eq:PExplicit_i0}
\begin{aligned}
P_{\Thom}(\eI^2_i) 
&= 
\left(\begin{smallmatrix}
\one_{5} & 0\\
0& -\one_{5}
\end{smallmatrix}\right)_{ji}\eI^2_j
\\
P_{\Thom}(\eI^3_i)
&=
\left(\begin{smallmatrix}
-\one_{3} & 0 &0&0\\
0& \one_{3} &0&0\\
0&0&-\one_5&0\\
0&0&0&\one_5
\end{smallmatrix}\right)_{ji}\eI^3_j
\\
P_{\Thom}(\eI^4_i) 
&= 
\left(\begin{smallmatrix}
\one_{3} & 0\\
0& -\one_{3}
\end{smallmatrix}\right)_{ji}\eI^4_j
\end{aligned}
\end{align}
where we sum over $j$.
\end{itemize}
%----------------------------%
Recall the decomposition $\Kil=\boosts\oplus\transl$ in \eqref{eq:tbdef}.
%----------------------------%
\begin{definition}\label{def:gauges_i0}
This  definition is
local to Section \ref{sec:SpaceinfConstruction}, see Remark \ref{rem:local4}.
Define 
\begin{align}\label{eq:gaugespaces_i0}
\begin{aligned}
\OmegaG^k(\Dspcl) 
	&= 
	\{\omega\in\Omega^k(\Dspcl) \mid \intermult_{\Thom}\omega=0\}\\
\IG^k(\Dspcl) 
	&=
	\{u \in \I^{k}(\Dspcl)\mid\Iinter_{\Thom}u=0 \}\\
\gxG^k(\Dspcl) 
	&= \big(\OmegaG^k(\Dspcl)\otimesRR\Kil\big) \oplus \IG^{k+1}(\Dspcl)\Ieps
\end{aligned}
\end{align}
for $k=0\dots4$.
In the first line, $\intermult_{\Thom}$ is the interior multiplication by $\Thom$,
and in the second line, $\Iinter_{\Thom}$ is the map \eqref{eq:intmult}.
Define the $C^\infty$-bilinear forms\footnote{
In \cite[Section 3.5.3]{Thesis} the bilinear forms are denoted by $\textnormal{B}$,
not $\beta$. We use $\beta$ to avoid confusion with 
the $C^\infty$-bilinear map that appears in the symmetric hyperbolic system
\eqref{eq:ueq}.
}:
\begin{itemize}
\item 
$\OmegaBil^k:\OmegaG^k(\Dspcl)\times\Omega^{k+1}(\Dspcl)\to C^\infty(\Dspcl)$ by 
$$\OmegaBil^k(\omega,\omega') = \Oipr{k}{\omega}{\intermult_{\Thom}\omega'}$$
\item 
$\OmegaBBil^k:(\OmegaG^k(\Dspcl)\otimesRR\boosts)\times(\Omega^{k+1}(\Dspcl)\otimesRR\boosts)\to C^\infty(\Dspcl)$ by
\begin{align*}
\OmegaBBil^k(\omega\otimes B^{\mu\nu}, \omega'\otimes B^{\alpha\beta}) 
=
\OmegaBil^k(\omega,\omega') 
(\delta_{\mu\alpha}\delta_{\nu\beta}\new{-}\delta_{\mu\beta}\delta_{\nu\alpha})
\end{align*}
\item 
$\OmegaTBil^k:(\OmegaG^k(\Dspcl)\otimesRR\transl)\times(\Omega^{k+1}(\Dspcl)\otimesRR\transl)\to C^\infty(\Dspcl)$ by
\begin{align*}
\OmegaTBil^k(\omega\otimes T_\mu,\omega'\otimes T_\nu) 
=
\s^2\OmegaBil^k(\omega,\omega') \delta_{\mu\nu}
\end{align*}
\item 
$\IBil^k: \IG^{k}(\Dspcl)\times\I^{k+1}(\Dspcl)\to C^\infty(\Dspcl)$ by,
using $P_{\Thom}$ in \eqref{eq:Pdefhom},
\begin{align*}
\IBil^k(u,u')&= \s^{\new{2}} \Iipr{k}{P_{\Thom}u}{\Iinter_{\Thom}u'}
\end{align*}
\item 
$\gBil^k:\gxG^{k}(\Dspcl)\times\gx^{k+1}(\Dspcl)\to C^\infty(\Dspcl)$ by 
\begin{align}\label{eq:gbil}
\begin{aligned}
&\gBil^k\Big( (u_{\boosts} + u_{\transl}) \oplus \uI ,
	(u_{\boosts}' + u_{\transl}') \oplus \uI'  \Big)\\
&\qquad\qquad=
\OmegaBBil^k(u_{\boosts},u'_{\boosts})
+\OmegaTBil^k(u_{\transl},u'_{\transl})
+\IBil^{k+1}(\uI,\uI')
\end{aligned}
\end{align}
where 
$u_{\boosts}\in\OmegaG^{k}(\Dspcl)\otimesRR\boosts$,
$u_{\transl}\in\OmegaG^{k}(\Dspcl)\otimesRR\transl$,
$u_{\I}\in\IG^{k+1}(\Dspcl)$
and where
$u_{\boosts}'\in\Omega^{k+1}(\Dspcl)\otimesRR\boosts$,
$u_{\transl}'\in\Omega^{k+1}(\Dspcl)\otimesRR\transl$,
$u_{\I}'\in\I^{k+2}(\Dspcl)$.
\end{itemize}
\end{definition}
%----------------------------%

The module $\gxG^k(\Dspcl)$ is the module of smooth sections,
over $\Dspcl$, of a trivial vector bundle $\gxG^k$ defined on $\Dspcl$.
Further, $\gxG^k$ is a subbundle of $\gx^k$ on $\Dspcl$.
%----------------------------%
\begin{lemma}[Homogeneous bases for gauge spaces]\label{lem:gaugebases_i0}
Using the
elements from Definition \ref{def:elements_i0},
which are homogeneous of degree zero,
for each $k=0\dots4$:
\[ 
\def\arraystretch{1.1}
\setlength{\tabcolsep}{4pt}
\begin{tabular}{c|ccccccccc}
Module & 
$\OmegaG^k(\Dspcl)$ &
$\OmegaG^k(\Dspcl)\otimesRR\boosts$ &
$\OmegaG^k(\Dspcl)\otimesRR\transl$ &
$\IG^k(\Dspcl)$ &
$\gxG^k(\Dspcl)$ \\
Rank &
$\ngG_k^{\Omega}$ &
$6\ngG_k^{\Omega}$ &
$4\ngG_k^{\Omega}$ &
$\ngG_k^{\I}$ &
$\ngG_k$ \\
Basis &
$(\eGO^k_i)_{i=1\dots \ngG_k^{\Omega}}$ & 
$(\eGB^k_i)_{i=1\dots 6\ngG_k^{\Omega}}$ & 
$(\eGT^k_i)_{i=1\dots 4\ngG_k^{\Omega}}$ & 
$(\eGI^k_i)_{i=1\dots \ngG_k^{\I}}$ & 
$(\eGg^k_i)_{i=1\dots \ngG_k}$ 
\end{tabular} 
\]
where, for example, the last column means that
$\ngG_k$ is the $C^\infty$-rank of $\gxG^k(\Dspcl)$,
and $(\eGg^k_i)_{i=1\dots \ngG_k}$ is a $C^\infty$-basis.
\end{lemma}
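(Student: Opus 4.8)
The plan is to verify this lemma by direct inspection, exactly as the paper suggests for the analogous non-gauge Lemma \ref{lem:bases_i0}, but now keeping track of the defining constraints of the gauge subspaces. The lemma makes three claims that must be checked in tandem for each degree $k = 0,\dots,4$: (i) the listed elements lie in the claimed gauge module, (ii) they are homogeneous of degree zero, and (iii) they form a $C^\infty$-basis of the correct rank. Homogeneity is immediate from Lemma \ref{lem:bases_i0}, which already asserts that all the elements of Definition \ref{def:elements_i0} are homogeneous of degree zero; so the real content is (i) and (iii).

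First I would treat the one-form factor. The module $\OmegaG^k(\Dspcl)$ consists of $\omega$ with $\intermult_{\Thom}\omega = 0$, where $\Thom = \s\p_{y^0}$, so $\intermult_{\Thom}$ annihilates $\theta^0 = dy^0/\s$ and acts on the $\theta^i$ ($i=1,2,3$) trivially. The basis $(\eGO^k_i)$ is built purely from wedges of $\theta^1,\theta^2,\theta^3$, hence $\intermult_{\Thom}\eGO^k_i = 0$, confirming membership; and the full basis $(\eO^k_i)$ of $\Omega^k(\Dspcl)$ from \eqref{eq:Omegabasis_i0} is obtained by adjoining the elements $\theta^0\wedge\eGO^{k-1}_i$, which is precisely the complementary summand. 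This gives the decomposition $\Omega^k(\Dspcl) = \OmegaG^k(\Dspcl)\oplus\theta^0\OmegaG^{k-1}(\Dspcl)$, so $(\eGO^k_i)$ is a basis of $\OmegaG^k(\Dspcl)$ of rank $\ngG^\Omega_k$. The $\boosts$- and $\transl$-tensored versions follow by tensoring this decomposition with the respective fixed finite-dimensional vector space (noting $\OmegaG^k\otimesRR\boosts$ and $\OmegaG^k\otimesRR\transl$ have ranks $6\ngG^\Omega_k$ and $4\ngG^\Omega_k$), and the bases $(\eGB^k_i)$, $(\eGT^k_i)$ are exactly the corresponding tensor-product bases from \eqref{eq:OmegaBbasis_i0}, \eqref{eq:OmegaTbasis_i0} restricted to the $\eGO$-part.

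Next I would handle the $\I$-factor, which is where the linear-algebra bookkeeping is slightly more involved. Here $\IG^k(\Dspcl) = \ker(\Iinter_{\Thom})$ on $\I^k(\Dspcl)$, and the explicit matrices \eqref{eq:IinterExplicit_i0} computed in the basis $(\eI^k_i)$ from \eqref{eq:Ibasis_i0} do the work directly: $\Iinter_{\Thom}$ vanishes on all of $\I^2(\Dspcl)$ (so $\IG^2 = \I^2$, rank $10$), while on $\I^3$ and $\I^4$ it is surjective onto $\I^2$, resp. $\I^3$, with kernel of rank $16-10 = 6$, resp. $10-6=\ngG^\I_4 = 0$ — wait, one should instead read off that $\ngG^\I_3 = 10$ and $\ngG^\I_4 = 6$ from \eqref{eq:nm_i0}, and check against the rank-nullity: $\ker$ on $\I^3$ has rank $16 - 6 = 10$ (the map lands in $\I^2$ which has rank... no, the image of $\Iinter_\Thom:\I^3\to\I^2$), and on $\I^4$ has rank $6$. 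Concretely, from \eqref{eq:IinterExplicit_i0} the map on $\I^3$ in coordinates is $(0_{10\times6}\;\one_{10})$, whose kernel is the first $6$ coordinates, i.e. spanned by $\eGI^3_1,\dots,\eGI^3_6$; and on $\I^4$ the map is $\left(\begin{smallmatrix}\one_6\\0_{10\times6}\end{smallmatrix}\right)$ — actually this is $6\times 6$ composed into a $16$-dim target, so its kernel is trivial, matching $\ngG^\I_4 = 0$ only if... I would recheck the dimensions against \eqref{eq:nm_i0} carefully and confirm that $(\eGI^k_i)_{i=1\dots\ngG^\I_k}$ spans $\ker\Iinter_\Thom$ by the block structure, and that $(\eI^k_i)$ decomposes as this kernel plus $\theta^0\IG^{k-1}$. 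The main obstacle — really the only place requiring genuine care rather than pattern-matching — is getting these $\I$-module ranks and the kernel identifications consistent with \eqref{eq:nm_i0} and \eqref{eq:IinterExplicit_i0}; everything else is a routine consequence of the ``unslashed = slashed $\oplus$ $\theta^0\cdot$(slashed in one lower degree)'' pattern. Finally, for $\gxG^k(\Dspcl) = (\OmegaG^k\otimesRR\Kil)\oplus\IG^{k+1}$, the basis $(\eGg^k_i)$ from \eqref{eq:gGbasis_i0} is by construction the concatenation of $(\eGB^k_i)$, $(\eGT^k_i)$, and $(\eGI^{k+1}_i)$, so its rank is $6\ngG^\Omega_k + 4\ngG^\Omega_k + \ngG^\I_{k+1} = \ngG_k$ (consistent with $\ngG_k = \ngG^\Omega_k\cdot 10 + \ngG^\I_{k+1}$ read off \eqref{eq:nm_i0}), and it is a basis because it is assembled from bases of the direct summands. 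I would conclude by noting that since all constituent elements are homogeneous of degree zero (Lemma \ref{lem:bases_i0}), so are all the listed bases, completing the proof.
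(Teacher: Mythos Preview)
Your approach is correct and matches the paper's proof, which is simply ``The first three columns are immediate. The fourth column follows from \eqref{eq:IinterExplicit_i0}. The fifth column follows from the second, third and fourth.'' Your $\I$-module bookkeeping is muddled in places (you momentarily misread $\ngG^\I_3$ and $\ngG^\I_4$ from \eqref{eq:nm_i0} as $10$ and $6$ rather than $6$ and $0$, and the rank-nullity arithmetic wobbles), but you land on the right kernels --- $\ker\Iinter_\Thom$ on $\I^3$ is the span of $\eGI^3_1,\dots,\eGI^3_6$ and on $\I^4$ is trivial --- and these are exactly what \eqref{eq:IinterExplicit_i0} gives by inspection of the block matrices; just clean up that paragraph.
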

%----------------------------%
\begin{proof}
The first three columns are immediate.
The fourth column follows from \eqref{eq:IinterExplicit_i0}.
The fifth column follows from the second, third and fourth.\qed
\end{proof}
Note that the basis of $\gxG^k(\Dspcl)$
coincides with the first $\ngG_k$ elements of the basis 
$(\eg^k_i)_{i=1\dots \ng_k}$ of
$\gx^k(\Dspcl)$ in Lemma \ref{lem:bases_i0}.
Analogously for the other modules.
%----------------------------%
\begin{lemma}\label{lem:Binbases}
Relative to the bases in Lemma \ref{lem:bases_i0} and \ref{lem:gaugebases_i0}, 
the bilinear forms in Definition \ref{def:gauges_i0} are given as follows.
Set $\theta^\mu = \frac{dy^\mu}{\s}$.
For $k=0\dots4$, $\ell=1,2,3$:%
\begin{subequations}
\begin{align}
\OmegaBBil^k(\eGB^k_i,\eGB^{k+1}_j) &= 0 
	& \OmegaBBil^k(\eGB^k_i,\theta^0\eGB^{k}_j) &= \delta_{ij}
	& \OmegaBBil^k(\eGB^k_i,\theta^\ell\eGB^{k}_j) &= 0 
	\label{eq:OBbil}\\
\OmegaTBil^k(\eGT^k_i,\eGT^{k+1}_j) &= 0 
	& \OmegaTBil^k(\eGT^k_i,\theta^0\eGT^{k}_j) &= \delta_{ij}
	& \OmegaTBil^k(\eGT^k_i,\theta^\ell\eGT^{k}_j) &= 0 
	\label{eq:OTbil}
\intertext{
Further, for $k=2,3$ and $\ell=1,2,3$ one has 
(note that $\IBil^k=0$ for $k=0,1,4$):
}
\IBil^k(\eGI^k_i,\eGI^{k+1}_j) &= 0 
	& \IBil^k(\eGI^k_i,\theta^0\eGI^{k}_j) &= \delta_{ij}
	& \IBil^k(\eGI^k_i,\theta^\ell\eGI^{k}_j)
	&=
	\left(\begin{smallmatrix}
	0 & A_{k,\ell}\\
	A_{k,\ell}^T & 0
	\end{smallmatrix}\right)_{ij}
	\label{eq:Ibil}
\end{align}
\end{subequations}
where
\begin{align*}
A_{2,1}
	&=
	\left(
	\begin{smallmatrix}
	0 & 0 & -\frac{1}{2} & 0 & 0 \\
	0 & 0 & 0 & -\frac{1}{2} & \frac{\sqrt{3}}{2} \\
	\frac{1}{2} & 0 & 0 & 0 & 0 \\
	0 & \frac{1}{2} & 0 & 0 & 0 \\
	0 & -\frac{\sqrt{3}}{2} & 0 & 0 & 0  
	\end{smallmatrix}\right)
&
A_{2,2}
	&=
	\left(
	\begin{smallmatrix}
	0 & \frac{1}{2} & 0 & 0 & 0 \\
	-\frac{1}{2} & 0 & 0 & 0 & 0 \\
	0 & 0 & 0 & -\frac{1}{2} & -\frac{\sqrt{3}}{2} \\
	0 & 0 & \frac{1}{2} & 0 & 0 \\
	0 & 0 & \frac{\sqrt{3}}{2} & 0 & 0
	\end{smallmatrix}
	\right)
&
A_{2,3}
	&=
	\left(
	\begin{smallmatrix}
	0 & 0 & 0 & 1 & 0 \\
	0 & 0 & \frac{1}{2} & 0 & 0 \\
	0 & -\frac{1}{2} & 0 & 0 & 0 \\
	-1 & 0 & 0 & 0 & 0 \\
	0 & 0 & 0 & 0 & 0 \\
	\end{smallmatrix}
	\right)\\
%---------------------------
A_{3,1}
	&=
	\left(
	\begin{smallmatrix}
	0 & 0 & 0 \\
	 0 & 0 & -\frac{1}{2} \\
	 0 & \frac{1}{2} & 0
	\end{smallmatrix}\right)
&
A_{3,2}
	&=
	\left(
	\begin{smallmatrix}
	 0 & 0 & \frac{1}{2} \\
	 0 & 0 & 0 \\
	 -\frac{1}{2} & 0 & 0
	\end{smallmatrix}
	\right)
&
A_{3,3}
	&=
	\left(
	\begin{smallmatrix}
	 0 & -\frac{1}{2} & 0 \\
	 \frac{1}{2} & 0 & 0 \\
	 0 & 0 & 0 
	\end{smallmatrix}
	\right)
\end{align*}
\end{lemma}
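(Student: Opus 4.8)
The plan is to prove the lemma by direct computation from Definition \ref{def:gauges_i0}, exploiting the fact that the coframe $\theta^\mu=\tfrac{dy^\mu}{\s}$ is adapted to both $\ghom$ and $\Thom$. The three facts I would establish first are: since $\ghom=\s^{-2}\etay$ and $\etay=\eta_{\mu\nu}dy^\mu\otimes dy^\nu$ one has $\ghom^{-1}(\theta^\mu,\theta^\nu)=\eta^{\mu\nu}$, so $\theta^0,\dots,\theta^3$ is a $\ghom$-pseudo-orthonormal coframe; $\Thom^\flat=\ghom(\Thom,\cdot)=-\theta^0$, hence $\intermult_{\Thom}\theta^0=1$ and $\intermult_{\Thom}\theta^i=0$ for $i=1,2,3$; and, by Remark \ref{rem:densityconv}, $\mu_{\etay}^{\pm1}=\s^{\mp1}\mu_{\ghom}^{\pm1}$, which is the only source of $\s$-powers and is exactly compensated by the prefactors $\s^2$ in the definitions of $\OmegaTBil^k$ and $\IBil^k$ (these $\s$-powers are already visible in the explicit formulas \eqref{eq:IiprExplicit_i0}--\eqref{eq:PExplicit_i0}). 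In particular, with respect to $\Oipr{k}{\cdot}{\cdot}$ from \eqref{eq:OmegaIPdef} the spatial wedge products $\eGO^k_i$ form an orthonormal system in Euclidean signature.

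Given these, the $\Omega$-rows \eqref{eq:OBbil}, \eqref{eq:OTbil} follow at once from $\OmegaBil^k(\omega,\omega')=\Oipr{k}{\omega}{\intermult_{\Thom}\omega'}$: the basis elements $\eGO^{k+1}_j$ and the products $\theta^\ell\wedge\eGO^k_j$ with $\ell\in\{1,2,3\}$ contain no $\theta^0$ factor, so $\intermult_{\Thom}$ annihilates them and the corresponding entries vanish, while $\intermult_{\Thom}(\theta^0\wedge\eGO^k_j)=\eGO^k_j$ and $\Oipr{k}{\eGO^k_i}{\eGO^k_j}=\delta_{ij}$. Tensoring with $\boosts$ and tracking the antisymmetric factor $\delta_{\mu\alpha}\delta_{\nu\beta}-\delta_{\mu\beta}\delta_{\nu\alpha}$ over the index set $\{01,02,03,12,23,31\}$, resp. tensoring with $\transl$ and tracking the $\s$-powers in the definition of $\eGT^k_i$ against the $\s^2$ in $\OmegaTBil^k$, both reduce to $\delta_{ij}$.

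For the $\I$-part, $\IBil^k=0$ for $k=0,1,4$ because the relevant gauge modules vanish, so only $k=2,3$ matter. The first two cases in \eqref{eq:Ibil} follow the same pattern from the explicit matrices: by \eqref{eq:IinterExplicit_i0}, $\Iinter_{\Thom}$ annihilates each $\eGI^{k+1}_j$ and sends $\theta^0\eGI^k_j\mapsto\eGI^k_j$, while by \eqref{eq:PExplicit_i0} and \eqref{eq:IiprExplicit_i0} the maps $P_{\Thom}$ and $\s^2\Iipr{k}{\cdot}{\cdot}$ act on the $\eGI^k$ by one and the same diagonal sign matrix, whose square is the identity; hence $\IBil^k(\eGI^k_i,\eGI^{k+1}_j)=0$ and $\IBil^k(\eGI^k_i,\theta^0\eGI^k_j)=\delta_{ij}$. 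The only substantial case is $\IBil^k(\eGI^k_i,\theta^\ell\eGI^k_j)$ with $\ell\in\{1,2,3\}$, where $\theta^\ell\eGI^k_j$ is the module product (Definition \ref{def:Imod}) of the one-form $\theta^\ell$ with the element $\eGI^k_j$ built in \eqref{eq:traceless_symmetric_matrices} from $h_1,\dots,h_5$ and the forms $\theta_\pm^a=\tfrac12(\theta^0\wedge\theta^a\mp i\theta^b\wedge\theta^c)$. Here I would compute $\theta^\ell\eGI^k_j$, re-expand it in the $\I^{k+1}$-basis \eqref{eq:Ibasis_i0} (whose elements are the $\eGI^{k+1}_a$ together with the $\theta^0\eGI^{k}_b$), drop the $\eGI^{k+1}$-components since $\Iinter_{\Thom}$ kills them, apply $\Iinter_{\Thom}$ to the surviving $\theta^0\eGI^k$-components, and pair against $P_{\Thom}\eGI^k_i$ with the $\s^2$ prefactor. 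Because wedging with $\theta^\ell$ interchanges the ``$\theta^0\wedge\theta^a$'' and ``$\theta^b\wedge\theta^c$'' parts of $\theta_\pm^a$, the result has the stated block form $\left(\begin{smallmatrix}0 & A_{k,\ell}\\ A_{k,\ell}^T & 0\end{smallmatrix}\right)$; reading off the coefficients $(h_m)_{pq}$ and the signs yields the explicit $A_{2,\ell}$, $A_{3,\ell}$.

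The main obstacle is precisely this last computation: although it is finite-dimensional linear algebra, the bookkeeping is delicate, involving the complex decomposition $\OmegaC^2(\Dspcl)=\Omega^2_+(\Dspcl)\oplus\Omega^2_-(\Dspcl)$, the complex conjugation that swaps the two summands (and hence the $\oplus cc$ in the basis), the orientation sign entering the definition of $\theta_\pm^a$, and the $\s$-powers relating $\mu_{\etay}$ to $\mu_{\ghom}$, all of which must conspire to leave an $\s$-independent answer. I would organize the calculation by fixing a cyclic triple $(abc)\in\cyclind$, computing once and for all the products $\theta^\ell\wedge\theta^1\wedge\theta^2\wedge\theta^3$, $\theta^\ell\wedge\theta^0\wedge\theta^a$ and $\theta^\ell\wedge\theta^b\wedge\theta^c$ for $\ell=1,2,3$, assembling the $\eGI^k_j$ from these, and then collecting the resulting coefficients into the matrices displayed in the statement; everything else is substitution.
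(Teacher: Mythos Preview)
Your proposal is correct and follows essentially the same approach as the paper's proof. The only minor variation is in the last case $\IBil^k(\eGI^k_i,\theta^\ell\eGI^k_j)$: you propose to expand $\theta^\ell\eGI^k_j$ in the $\I^{k+1}$-basis, apply $\Iinter_{\Thom}$, and then pair in $\I^k$, whereas the paper instead invokes the adjoint relation \eqref{eq:intmult} to rewrite $\IBil^k(\eGI^k_i,\theta^\ell\eGI^k_j)=-\s^2\Iipr{k+1}{\theta^0 P_{\Thom}\eGI^k_i}{\theta^\ell\eGI^k_j}$ and pairs in $\I^{k+1}$ using \eqref{eq:IiprExplicit_i0}; both routes require the same expansion of $\theta^\ell\eGI^k_j$ (e.g.\ the paper's sample computation $\theta^1\eGI^2_8=\tfrac{\sqrt3}{2}\eI^3_3-\tfrac12\theta^0\eGI^2_1$) and are computationally equivalent.
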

%----------------------------%
\begin{proof}
The identities \eqref{eq:OBbil} and \eqref{eq:OTbil} follow from 
\begin{align*}
\OmegaBil^k(\eGO^k_i,\eGO^{k+1}_j)
&=
\Oipr{k}{\eGO^k_i}{\intermult_{\Thom}\eGO^{k+1}_j}
=0\\
\OmegaBil^k(\eGO^k_i,\theta^0\wedge\eGO^{k}_j) 
&=
\Oipr{k}{\eGO^k_i}{\intermult_{\Thom}(\theta^0\wedge\eGO^k_j)}
=
\Oipr{k}{\eGO^k_i}{\eGO^k_j}
=\delta_{ij}\\
\OmegaBil^k(\eGO^k_i,\theta^\ell\wedge\eGO^{k}_j) 
&=
\Oipr{k}{\eGO^k_i}{\intermult_{\Thom}(\theta^\ell\wedge\eGO^k_j)}
=
0
\end{align*}
where we use $\intermult_{\Thom}\eGO^{k}_j=0$, 
$\intermult_{\Thom}(\theta^0\wedge\eGO^{k}_j)=\eGO^{k}_j$,
$\intermult_{\Thom}(\theta^\ell\wedge\eGO^{k}_j)=0$,
respectively.
%%%%%%%%%%%%%%%%%%%%%%%%%%%%
%Table[
%KroneckerDelta[i[[1]],j[[1]]]KroneckerDelta[i[[2]],j[[2]]]-KroneckerDelta[i[[1]],j[[2]]]KroneckerDelta[i[[2]],j[[1]]],
%{i,{{0,1},{0,2},{0,3},{1,2},{2,3},{3,1}}},
%{j,{{0,1},{0,2},{0,3},{1,2},{2,3},{3,1}}}]//MatrixForm
%%%%%%%%%%%%%%%%%%%%%%%%%%%%

Consider \eqref{eq:Ibil}.
By \eqref{eq:IinterExplicit_i0} we have
$\Iinter_{\Thom}\eGI^{k+1}_j=0$ and 
$\Iinter_{\Thom}(\theta^0\eGI^{k}_j)=\eGI^{k}_j$.
Thus 
\begin{align*}
\IBil^k(\eGI^k_i,\eGI^{k+1}_j) 
&= 
\s^2\Iipr{k}{P_{\Thom}\eGI^k_i}{\Iinter_{\Thom}\eGI^{k+1}_j}
=0\\
%%%%%
\IBil^k(\eGI^k_i,\theta^0\eGI^{k}_j) 
&= 
\s^2\Iipr{k}{P_{\Thom}\eGI^k_i}{\Iinter_{\Thom}(\theta^0\eGI^{k}_j)}
=
\s^2\Iipr{k}{P_{\Thom}\eGI^k_i}{\eGI^{k}_j}
=\delta_{ij}
\end{align*}
using \eqref{eq:IiprExplicit_i0}, \eqref{eq:PExplicit_i0} in the last step.
For the third identity in \eqref{eq:Ibil} note that
\[ 
\IBil^k(\eGI^k_i,\theta^\ell \eGI^{k}_j) 
= 
\s^2\Iipr{k}{P_{\Thom}\eGI^k_i}{\Iinter_{\Thom}(\theta^\ell\eGI^{k}_j)}
=
\s^2\Iipr{k+1}{-\theta^0 P_{\Thom}\eGI^k_i}{\theta^\ell\eGI^{k}_j}
\]
where we use \eqref{eq:intmult}.
This can be evaluated explicitly, 
for example,
\begin{align*}
\IBil^2(\eGI^2_1,\theta^1 \eGI^{2}_8) 
&=
-\s^2\Iipr{3}{\theta^0 P_{\Thom}\eGI^2_1}{\theta^1\eGI^{2}_8}
=
-\s^2\Iipr{3}{\theta^0 \eGI^2_1}{\theta^1\eGI^{2}_8}
\intertext{
where we use \eqref{eq:PExplicit_i0}.
One has $\theta^1\eGI^{2}_8= \frac{\sqrt{3}}{2}\eI^{3}_3-\frac12\theta^0\eGI^{2}_1$
and thus by \eqref{eq:IiprExplicit_i0}:
}
\textstyle
\IBil^2(\eGI^2_1,\theta^1 \eGI^{2}_8) 
&=
\tfrac12\s^2\Iipr{3}{\theta^0 \eGI^2_1}{\theta^0\eGI^{2}_1}
=-\tfrac12
\end{align*}
which agrees with the $1,3$-entry of $A_{2,1}$.
\qed
\end{proof}
%----------------------------%
\begin{remark}\label{rem:EigenvaluesIbil}
Let $c_0,c_1,c_2,c_3\in\R$. 
For $k=2,3$
consider the symmetric 
$\ngG^{\I}_k\times\ngG^{\I}_k$-matrix whose $ij$-entry is given by 
\[ 
\IBil^k(\eGI^k_i,(c_0\theta^0 + \textstyle\sum_{\ell=1}^3 c_\ell\theta^\ell)\eGI^{k}_j)
\]
Its eigenvalues are
$c_0, c_0\pm |\vec{c}|, c_0\pm \frac{|\vec{c}|}{2}$ when $k=2$
respectively $c_0,c_0\pm \frac{|\vec{c}|}{2}$ when $k=3$,
where $\vec{c}=(c_1,c_2,c_3)$.
Hence for $k=2$ it is positive definite iff 
$c_0 - |\vec{c}| > 0$.
\end{remark}
%----------------------------%
The next lemma lists the main properties of $(\gxG(\Dspcl),\gBil)$.
It is adapted from \cite[Definition 8]{RTgLa2}, where these
properties are used to define abstractly the notion of a gauge.
The lemma will be used to show that \eqref{eq:eqv+c} 
contains a necessary symmetric hyperbolic system
and that the constraints propagate,
for example, \ref{item:gaugesymNEW} will be used to prove symmetry,
and \ref{item:gaugeposNEW} to prove positivity.
%----------------------------%

Define
\begin{equation}\label{eq:Omegafut}
\Omegafut(\Dspcl) = \{\omega\in\Omega^1(\Dspcl) \mid \gcyl^{-1}(\omega,\omega)<0,\,\omega(\p_{\tau})>0\}
\end{equation}
Recall that $X_\mu = \s\p_{y^\mu}$, see Definition \ref{def:ztXmudef}.
%----------------------------%
\begin{lemma}\label{lem:gauge_mainprop_i0}
The tuple $(\gxG(\Dspcl),\gBil)$ is a gauge for $\gx(\Dspcl)$, in the following sense.
For all $\omega\in\Omegafut(\Dspcl)$,
left-multiplication $\gxG(\Dspcl)\to\gx(\Dspcl)$, 
$u\mapsto \omega u$ is fiberwise injective,
and 
$\gx(\Dspcl) = \gxG(\Dspcl)\oplus\omega\gxG(\Dspcl)$,
where we use the module multiplication \eqref{eq:gmod}.
Moreover, for all $k=0\dots4$:
\begin{enumerate}[label={\textnormal{(G\arabic*)}}]
\item \label{item:gaugesymNEW}
$\gBil^k(\,\cdot\,,\omega\,\cdot\,)|_{\gxG^k(\Dspcl)\times\gxG^k(\Dspcl)}$ 
is symmetric for all $\omega\in \Omega^1(\Dspcl)$.
\item \label{item:gaugeposNEW}
For every $u\in\gxG^k(\Dspcl)$ and every $\omega\in\Omega^1(\Dspcl)$ one has
%(recall $\tfrac{dy^0}{\s}=-\Thom^{\flat}$)
\begin{align}
&\gBil^{k}(u,\omega u) \ge 
\big( 
\omega(X_0) - \big(\textstyle\sum_{i=1}^3|\omega(X_i)|^2\big)^{\frac12}
\big)
\gBil^{k}(u, \tfrac{dy^0}{\s} u)
\label{eq:posineq}
\end{align}
and if $u\neq0$ then $\gBil^{k}(u, \tfrac{dy^0}{\s} u)>0$.
Furthermore, 
\begin{equation}\label{eq:posiffofut}
\gBil^k(\,\cdot\,,\omega\,\cdot\,)|_{\gxG^k(\Dspcl)\times\gxG^k(\Dspcl)}>0
\quad\Leftrightarrow\quad
\omega\in \Omegafut(\Dspcl)
\end{equation}
\item \label{item:gaugekernelNEW}
$\gxG^{k+1}(\Dspcl) = \big\{u\in \gx^{k+1}(\Dspcl)\mid\gBil^k(\gxG^{k}(\Dspcl),u)=0 \big\}$
\end{enumerate}
\end{lemma}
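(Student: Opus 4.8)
\textbf{Proof plan for Lemma \ref{lem:gauge_mainprop_i0}.}

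The plan is to verify all the stated properties by working in the homogeneous bases of Lemma \ref{lem:bases_i0} and \ref{lem:gaugebases_i0}, using the explicit formulas for the bilinear forms in Lemma \ref{lem:Binbases} together with the eigenvalue computation in Remark \ref{rem:EigenvaluesIbil}. Since $\gx^k(\Dspcl)$ decomposes as a direct sum of the three blocks $\Omega^k(\Dspcl)\otimesRR\boosts$, $\Omega^k(\Dspcl)\otimesRR\transl$, and $\I^{k+1}(\Dspcl)$, and $\gBil^k$ is block-diagonal with respect to this decomposition by \eqref{eq:gbil}, it suffices to establish each claim for each of the three blocks separately. The boost block and translation block are identical in structure up to the factor $\s^2$, so effectively two model computations are needed: one for $\OmegaBBil^k$ (equivalently $\OmegaBil^k$) and one for $\IBil^k$.

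First I would address the splitting statement: for $\omega\in\Omegafut(\Dspcl)$, show that $u\mapsto\omega u$ is fiberwise injective on $\gxG^k(\Dspcl)$ and that $\gx^{k+1}(\Dspcl)=\gxG^{k+1}(\Dspcl)\oplus\omega\gxG^{k}(\Dspcl)$. Since everything is fiberwise and $C^\infty$-linear, this reduces to a pointwise linear-algebra statement. Writing $\omega = c_0\tfrac{dy^0}{\s}+\sum_\ell c_\ell\tfrac{dy^\ell}{\s}$ pointwise, one checks that multiplication by $\tfrac{dy^0}{\s}$ maps the gauge submodule isomorphically onto a complement of itself (this is visible from \eqref{eq:OBbil}, \eqref{eq:OTbil}, \eqref{eq:Ibil}: the matrices $\OmegaBBil^k(\eGB^k_i,\tfrac{dy^0}{\s}\eGB^k_j)=\delta_{ij}$ etc.\ are nondegenerate), and then perturb in the spacelike directions, using that $\Omegafut$ forbids $\omega$ from being null or spacelike — precisely the condition $c_0>|\vec c|$ that makes the relevant block matrices (after Remark \ref{rem:EigenvaluesIbil}) invertible. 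The dimension count $\ng_{k+1}=\ngG_{k+1}+\ngG_k$ from \eqref{eq:nm_i0} then forces the sum to be direct and exhausting.

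For \ref{item:gaugesymNEW}, symmetry of $\gBil^k(\,\cdot\,,\omega\,\cdot\,)$ restricted to $\gxG^k\times\gxG^k$ for arbitrary $\omega\in\Omega^1(\Dspcl)$: by bilinearity and the block decomposition it suffices to check $\omega\in\{\tfrac{dy^0}{\s},\tfrac{dy^1}{\s},\tfrac{dy^2}{\s},\tfrac{dy^3}{\s}\}$, and then read off from Lemma \ref{lem:Binbases} that each of the matrices $\delta_{ij}$, $0$, and $\left(\begin{smallmatrix}0&A_{k,\ell}\\ A_{k,\ell}^T&0\end{smallmatrix}\right)$ is symmetric. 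For \ref{item:gaugeposNEW}, the inequality \eqref{eq:posineq} follows from Remark \ref{rem:EigenvaluesIbil}: for $u$ in the $\I$-block, the smallest eigenvalue of the matrix representing $\gBil^k(\,\cdot\,,(\omega(X_0)\tfrac{dy^0}{\s}+\sum_\ell\omega(X_\ell)\tfrac{dy^\ell}{\s})\,\cdot\,)$ on $\IG$ is $\omega(X_0)-|\vec\omega|$ (for $k=2$) or $\omega(X_0)\pm|\vec\omega|/2$ (for $k=3$), bounded below by $(\omega(X_0)-|\vec\omega|)$ times the identity, which is exactly $\gBil^k(\,\cdot\,,\tfrac{dy^0}{\s}\,\cdot\,)$; the boost and translation blocks give the diagonal $\omega(X_0)\delta_{ij}$ which also dominates $(\omega(X_0)-|\vec\omega|)\delta_{ij}$. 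Positivity of $\gBil^k(u,\tfrac{dy^0}{\s}u)$ for $u\neq0$ is then the statement that $\delta_{ij}\oplus\delta_{ij}\oplus\one$ is positive definite, and \eqref{eq:posiffofut} follows since $\omega(X_0)-|\vec\omega|>0$ for all unit-normalized directions is equivalent to $\omega$ being future timelike for $\gcyl$ (note $\gcyl^{-1}$ and $\ghom^{-1}$ are conformal, and $X_\mu=\s\p_{y^\mu}$ is a rescaling of the $\etay$-orthonormal-up-to-conformal frame). Finally \ref{item:gaugekernelNEW}: the inclusion $\subseteq$ is the vanishing entries in \eqref{eq:OBbil}, \eqref{eq:OTbil}, \eqref{eq:Ibil} (namely $\gBil^k(\eGg^k_i,\eGg^{k+1}_j)=0$), and $\supseteq$ follows because $\gBil^k(\gxG^k(\Dspcl),\,\cdot\,)$ restricted to the complement $\tfrac{dy^0}{\s}\gxG^k(\Dspcl)$ is the nondegenerate pairing given by the identity matrix, so anything pairing to zero with all of $\gxG^k$ must lie in $\gxG^{k+1}(\Dspcl)$ by the splitting just established.

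\textbf{Main obstacle.} The computations are all routine once the bases and formulas of Lemma \ref{lem:Binbases} and Remark \ref{rem:EigenvaluesIbil} are in hand; the only genuinely delicate point is the bookkeeping in \ref{item:gaugeposNEW}, namely correctly tracking the conformal factors relating $\gcyl$, $\ghom=\s^{-2}\etay$, and the frame $X_\mu=\s\p_{y^\mu}$ so that the eigenvalue bound $\omega(X_0)-(\sum_i|\omega(X_i)|^2)^{1/2}$ matches the intrinsic condition $\omega\in\Omegafut(\Dspcl)$ defined via $\gcyl$ in \eqref{eq:Omegafut}; since $\Omegafut$ is defined conformally, one must verify that the defining inequality is insensitive to the choice of representative metric and to the homogeneous rescaling of the frame, which is where the $\s$-powers in \eqref{eq:OTbil} and the $\s^2$ in $\IBil^k$ must be shown to cancel against the $\s^{-2}$ implicit in passing from $\p_{y^\mu}$ to $X_\mu$.
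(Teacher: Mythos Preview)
Your proposal is correct and uses the same ingredients as the paper --- the block decomposition via \eqref{eq:gbil}, the explicit matrices in Lemma \ref{lem:Binbases}, and the eigenvalue computation in Remark \ref{rem:EigenvaluesIbil}. The one organizational difference worth noting: you prove the splitting $\gx=\gxG\oplus\omega\gxG$ first by a direct perturbation argument and then use it for \ref{item:gaugekernelNEW}, whereas the paper proves \ref{item:gaugeposNEW} first and then gets fiberwise injectivity of $u\mapsto\omega u$ for free --- if $\omega u=0$ at a point then the left side of \eqref{eq:posineq} vanishes there, while for $\omega\in\Omegafut$ the right side is strictly positive unless $u=0$. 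This avoids your perturbation step entirely, and the direct sum then follows from the rank count $\ng_k=\ngG_k+\ngG_{k-1}$ alone. The paper's proof of \ref{item:gaugekernelNEW} $\supseteq$ is likewise cleaner: decompose $u=v+\omega v'$ using the splitting, observe $\gBil^k(z,v)=0$ for $z\in\gxG^k$ by the $\subseteq$ direction, and set $z=v'$ to get $\gBil^k(v',\omega v')=0$, whence $v'=0$ by \ref{item:gaugeposNEW}.

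Your ``main obstacle'' about conformal bookkeeping is less of an obstacle than you suggest: since $\ghom=\s^{-2}\etay$ is a representative of $[\gcyl]$ and $X_\mu=\s\p_{y^\mu}$ is $\ghom$-orthonormal with signature $(-,+,+,+)$, the condition $\omega(X_0)>(\sum_i|\omega(X_i)|^2)^{1/2}$ is exactly $\ghom^{-1}(\omega,\omega)<0$ together with $\omega(X_0)>0$, which is conformally invariant and matches $\Omegafut$; no $\s$-cancellations need to be tracked.
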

\begin{proof}
This is proven, in a more general setting, in \cite[Proposition 13]{Thesis}
(for \eqref{eq:posineq} see the second displayed equation on page 93).
To make this more self-contained we include a proof here.
%Abbreviate $\theta^\mu = \frac{dy^\mu}{\s}$.

\ref{item:gaugesymNEW}: By \eqref{eq:gbil} and Lemma \ref{lem:Binbases}.

\ref{item:gaugeposNEW}: 
We check \eqref{eq:posineq}.
By \eqref{eq:gbil}, it suffices to check
\begin{align*}
\OmegaBBil^{k}(u,\omega u) 
	&\ge 
	(\omega(X_0) - \big(\textstyle\sum_{i=1}^3|\omega(X_i)|^2\big)^{\frac12})
	\OmegaBBil^{k}(u, \tfrac{dy^0}{\s} u)\\
\OmegaTBil^{k}(u,\omega u) 
	&\ge 
	(\omega(X_0) - \big(\textstyle\sum_{i=1}^3|\omega(X_i)|^2\big)^{\frac12})
	\OmegaTBil^{k}(u, \tfrac{dy^0}{\s} u)\\
\IBil^{\new{k+1}}(u,\omega u) 
	&\ge 
	(\omega(X_0) - \big(\textstyle\sum_{i=1}^3|\omega(X_i)|^2\big)^{\frac12})
	\IBil^{\new{k+1}}(u, \tfrac{dy^0}{\s} u)
\end{align*}
for $u$ in, respectively, 
$\OmegaG^k(\Dspcl)\otimesRR\boosts$, 
$\OmegaG^k(\Dspcl)\otimesRR\transl$,
$\IG^{k+1}(\Dspcl)$.
The first two inequalities hold by Lemma \ref{lem:Binbases},
the last holds by Lemma \ref{lem:Binbases} and Remark \ref{rem:EigenvaluesIbil}.

The statement
$\gBil^{k}(u, \tfrac{dy^0}{\s} u)>0$ for $u\neq0$ holds by
\eqref{eq:gbil} and Lemma \ref{lem:Binbases}.

We check \eqref{eq:posiffofut}.
$\Leftarrow$ follows from \eqref{eq:posineq}.
$\Rightarrow$ follows from Remark \ref{rem:EigenvaluesIbil}. 

We check fiberwise injectivity of left-multiplication:
Assume that 
$\omega u = 0$ at some point in $\Dspcl$.
Then \ref{item:gaugeposNEW} implies $u=0$ at that point,
because the left hand side of \eqref{eq:posineq} vanishes
and $\omega(X_0) - (\textstyle\sum_{i=1}^3|\omega(X_i)|^2)^{\frac12}>0$.

We check 
$\gx^k(\Dspcl) = \gxG^k(\Dspcl)\oplus\omega\gxG^{k-1}(\Dspcl)$:
Let $u\in \gxG^k(\Dspcl)\cap \omega\gxG^{k-1}(\Dspcl)$.
Then $u\in \gxG^{k}(\Dspcl)$ and $u=\omega u'$ with $u'\in \gxG^{k-1}(\Dspcl)$.
Thus $\omega u = \omega(\omega u') = 0$ using associativity,
thus $u=0$ by injectivity of left-multiplication.
Clearly 
$\gx^k(\Dspcl) \supset \gxG^k(\Dspcl)\oplus\omega\gxG^{k-1}(\Dspcl)$,
thus, to show equality, it remains to show that the ranks are equal.
The rank of 
$\gxG^k(\Dspcl)\oplus\omega\gxG^{k-1}(\Dspcl)$ 
is $\ngG_k+\ngG_{k-1}=\ng_k$, 
where $\ng_k$ is the rank of $\gx^{k}(\Dspcl)$,
using Lemma \ref{lem:gaugebases_i0}, 
injectivity of left-multiplication, 
and \eqref{eq:nm_i0}.

\ref{item:gaugekernelNEW}:
We check $\subset$:
This follows directly from Definition \ref{def:gauges_i0}.
We check $\supset$:
Let $u\in\gx^{k+1}(\Dspcl)$ such that $\gBil^k(\gxG^{k}(\Dspcl),u)=0$.
Fix an $\omega\in\Omegafut(\Dspcl)$ and 
decompose $u=v+\omega v'$ where 
$v\in\gxG^{k+1}(\Dspcl)$, $v'\in\gxG^{k}(\Dspcl)$.
Then for all $z\in \gxG^{k}(\Dspcl)$ we have
\[ 
0 = \gBil^k(z,u) = \gBil^k(z,v)+\gBil^k(z,\omega v') = \gBil^k(z,\omega v')
\]
Using this equality with $z=v'$ one obtains $v'=0$ by \ref{item:gaugeposNEW}.
\qed
\end{proof}
%----------------------------%

\subsubsection{MC-equation as a symmetric hyperbolic system}
\label{sec:reformSHS_i0}

This section serves as preparation for Section \ref{sec:existence_i0}.
In Section \ref{sec:existence_i0} we solve \eqref{eq:eqv+c} 
using the gauge $(\gxG(\Dspcl),\gBil)$ from Definition \ref{def:gauges_i0},
as follows:
\begin{itemize}
\item 
Impose the condition that $\cc$ is a section of $\gxG^1$,
%Impose $\cc\in \gxG^1(\Dspcl)$, 
which we interpret as a gauge fixing condition. 
One then solves the necessary subsystem 
\begin{equation}\label{eq:necSHS}
\gBil^{1}\big( \,\cdot\, , 
\dg (v+\cc) + \tfrac12[v+\cc,v+\cc]\big)
\;=\;0
\end{equation}
This is square: there are $\ngG_1$ equations and $\ngG_1$ unknowns,
by Lemma \ref{lem:gauge_mainprop_i0}.

\item 
Set $U=\dg (v+\cc) + \tfrac12[v+\cc,v+\cc]$.
The equation \eqref{eq:necSHS} and \ref{item:gaugekernelNEW} imply
that $U$ is a section of $\gxG^2$.
Further one has $\dg U + [v+\cc,U]=0$ by \eqref{eq:dgLaax}, in particular
$U$ solves the linear homogeneous system
\begin{equation}\label{eq:constraintsprop}
\gBil^{2}\big( \,\cdot\, , \dg U + [v+\cc,U]\big)
\;=\;0
\end{equation}
This is square: there are $\ngG_2$ equations and $\ngG_2$ unknowns.
Using the fact that $v$ solves the constraint equations,
one shows that $U$ vanishes along $y^0=0$.
Then one uses \eqref{eq:constraintsprop} to 
show that $U$ vanishes everywhere.
\end{itemize}
In this Section \ref{sec:reformSHS_i0} we use the homogeneous 
basis \eqref{eq:gbasis_i0} to rewrite the systems \eqref{eq:necSHS}
and \eqref{eq:constraintsprop} in matrix-vector form 
(Lemma \ref{lem:translationofeq}),
show that they are quasilinear respectively linear symmetric hyperbolic, 
including along null infinity (Lemma \ref{lem:aALproperties}, \ref{lem:qminkdelta}),
and show properties of the linear parts associated
to the Minkowski differential $\dg$ (Lemma \ref{lem:ellMinkest}).
%-----------------------------%
\step
%-----------------------------%
Recall Definition \ref{def:ztXmudef} and Convention \ref{conv:ztXmu_NEW}.
We will use the identifications
\begin{align}\label{eq:identify_g_vec_i0}
\begin{aligned}
\gxG^k(\Dspcl) &\simeq C^\infty(\Dspcl,\R^{\ngG_k}) 
& &\text{using the basis $(\eGg_{i}^k)_{i=1\dots\ngG_k}$ in \eqref{eq:gGbasis_i0}}\\
\gx^k(\Dspcl) &\simeq C^\infty(\Dspcl,\R^{\ng_k}) 
& &\text{using the basis $(\eg_{i}^k)_{i=1\dots\ng_k}$ in \eqref{eq:gbasis_i0}}
\end{aligned}
\end{align}
%-----------------------------%
\begin{definition}
This definition is
local to Section \ref{sec:SpaceinfConstruction}, see Remark \ref{rem:local4}.
Using the basis \eqref{eq:gbasis_i0},
for all $k=1,2$, $\mu=0\dots3$, $\ell=1\dots\ng_k$ define
\begin{equation}\label{eq:anchorformshom}
(\anchor_{k})_{\ell}^{\mu} = \tfrac{1}{\s} \anchorg(\eg^k_{\ell})(y^\mu)
\;\in\; \Omega^k(\Dspcl)
\end{equation}
where the anchor $\anchorg$ is defined in \eqref{eq:anchorgdef}.
\end{definition}
These $k$-forms are indeed smooth on $\Dspcl$, 
because $y^\mu$, $\frac{1}{\s}$, $\eg^k_{\ell}$, $\anchorg$ are smooth there.
With this definition, for all $f\in C^\infty(\Dspcl)$ one has\footnote{
To see this, expand 
$\eg^k_{\ell}=(\sum_{i=1}^{10} \omega_i \otimes \KilBasis_i) \oplus\uI$
where $\KilBasis_1,\dots,\KilBasis_{10}$ is a basis of $\Kil$,
where $\omega_i$ are $k$-forms, and $\uI\in\I^{k+1}(\Dspcl)$. 
Then by definition of $\anchorg$ in \eqref{eq:anchorgdef} and \eqref{eq:anchorLdef},
\begin{align*}
\anchorg(\eg^k_{\ell})(f) 
	&= 
	\tsum_{i=1}^{10} \omega_i \KilBasis_i(f)
	=
	\tfrac{1}{\s} \tsum_{i=1}^{10} \omega_i \KilBasis_i(y^\mu) X_\mu f\\
	&= 
	\tfrac{1}{\s} \anchorg(\eg^k_{\ell})(y^\mu) X_\mu f
	=
	(\anchor_k)^{\mu}_{\ell} X_\mu f
\end{align*}
}
\begin{align}\label{eq:anchorghom}
\anchorg(\eg^k_{\ell})(f) 
	= 
	(\anchor_{k})_{\ell}^{\mu} X_\mu f
\end{align}
%----------------------------%
\begin{definition}\label{def:SHSarrays_i0}
This definition is local to Section \ref{sec:SpaceinfConstruction}, 
see Remark \ref{rem:local4}.
For $\mu=0\dots3$ and $\ideg,k'=1,2$ define
\begin{align}\label{eq:minkarrays}
\begin{aligned}
\amink_{\ideg}^\mu &\in C^\infty(\Dspcl,\End(\R^{\ngG_{\ideg}}))\\
\Amink_{\ideg}^\mu &\in C^\infty(\Dspcl,\Hom(\R^{\ng_1},\End(\R^{\ngG_{\ideg}}))\\
\Lmink_{\ideg} &\in C^\infty(\Dspcl,\End(\R^{\ngG_{\ideg}}))\\
%----------------------------%
%
\Bmink_{\ideg} &\in  C^\infty(\Dspcl,\Hom(\R^{\ng_1}\otimes\R^{\ng_{\ideg}},\R^{\ngG_{\ideg}}))\\
\Amink_{k' k}^\mu &\in C^\infty(\Dspcl,\Hom(\R^{\ng_{k'}},\Hom(\R^{\ng_{k}},\R^{\ngG_{k+k'-1}})))\\
\SFmink &\in C^\infty(\Dspcl,\Hom(\R^{\ng_2},\R^{\ngG_1}))\\
\ginj_k &\in C^\infty(\Dspcl, \Hom(\R^{\ngG_k},\R^{\ng_k}))
\end{aligned}
\end{align}
as follows, using $\gBil^k$ in Definition \ref{def:gauges_i0},
the bases \eqref{eq:gGbasis_i0}, \eqref{eq:gbasis_i0}, 
and \eqref{eq:anchorformshom}:
\begin{align*}
(\amink_{\ideg}^\mu u)_i 
	&= (\amink_{\ideg}^\mu)_{ij} u_{j}
&& \text{where}& 
(\amink_{\ideg}^\mu)_{\outind j}&=\gBil^{\ideg}\big(\eGg_\outind^{\ideg}, \tfrac{dy^\mu}{\s} \eGg^{\ideg}_j\big)\\
%-----------------------------
%-----------------------------
(\Amink_{\ideg}^\mu (v) u)_i
	&= (\Amink_{\ideg}^\mu)_{\ell,ij} v_\ell u_j 
&& \text{where}& 
(\Amink_{\ideg}^\mu)_{\ell,\outind j} &= \gBil^{k}\big(\eGg^{\ideg}_\outind,
(\anchor_1)_\ell^\mu
\eGg^{\ideg}_j\big)\\
%-----------------------------
%-----------------------------
(\Lmink_{\ideg} u)_{i} &= (\Lmink_{\ideg})_{ij}u_j
&& \text{where}& 
(\Lmink_{\ideg})_{\outind j} &= -\gBil^{\ideg}\big(\eGg_\outind^{\ideg}, \dg \eGg^{\ideg}_j\big)\\
%-----------------------------
%-----------------------------
(\Bmink_{\ideg}(v,w))_i &= (\Bmink_{\ideg})_{ij\ell} v_j w_\ell 
&& \text{where}& 
(\Bmink_{\ideg})_{\outind j\ell} &= -\gBil^{\ideg}\big(\eGg_\outind^{\ideg}, [\eg_j^1,\eg_\ell^{\ideg}]\big)\\
%-----------------------------
%-----------------------------
(\Amink_{k'k}^\mu (w')w)_i  &= (\Amink_{k'k}^\mu)_{\ell,i j} w'_\ell w_j
&& \text{where}& 
(\Amink_{k'k}^\mu)_{\ell,i j} &= \gBil^{k+k'-1}\big(\eGg^{k+k'-1}_i, 
(\anchor_{k'})_\ell^\mu
\eg^k_j\big)\\
%-----------------------------
%-----------------------------
(\SFmink v')_i &= \SFmink_{ij} v'_j
&& \text{where}& 
\SFmink_{\outind j} &= -\gBil^{1}\big(\eGg_\outind^1, \eg^2_j\big)\\
(\ginj_k u)_i
&=
\delta_{ij} u_j
\end{align*}
with
$u\in C^\infty(\Dspcl,\R^{\ngG_k})$, 
$v\in C^\infty(\Dspcl,\R^{\ng_1})$,
$v'\in C^\infty(\Dspcl,\R^{\ng_2})$,
$w\in C^\infty(\Dspcl,\R^{\ng_k})$, 
$w'\in C^\infty(\Dspcl,\R^{\ng_{k'}})$,
and where the sum over the repeated indices $j$, $\ell$ is implicit.
\end{definition}
%------------------------------
The components of \eqref{eq:minkarrays} are indeed smooth on $\Dspcl$
(in particular along null infinity),
because
	$\frac{dy^\mu}{\s}$, 
	\eqref{eq:gGbasis_i0}, \eqref{eq:gbasis_i0},
	\eqref{eq:anchorformshom}, 
	$\gBil^k$ are smooth (for $\gBil^k$ use Lemma \ref{lem:Binbases}), 
	and $\dg$, $[\cdot,\cdot]$ 
	are differential operators with smooth coefficients on $\cyl$.

Note that $\ginj_k$ is the inclusion $\gxG^k(\Dspcl)\hookrightarrow \gx^k(\Dspcl)$,
via the identification \eqref{eq:identify_g_vec_i0}.
%--------------------------------
\begin{lemma}[Homogeneity]\label{lem:comphomog}
The differential forms \eqref{eq:anchorformshom},
and the components of \eqref{eq:minkarrays}, are 
homogeneous of degree zero in the sense of Definition \ref{def:homog}.
\end{lemma}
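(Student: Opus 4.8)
\textbf{Proof plan for Lemma \ref{lem:comphomog}.}

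The plan is to reduce everything to the two facts established earlier: that the operations $\dg$, $[\cdot,\cdot]$, $\anchorg$ on $\gx(\cyl)$ commute with the $\R_+$-action $\Scalg_\lambda$ (Lemma \ref{lem:homogeneity}), and that all the ingredients entering Definition \ref{def:SHSarrays_i0} --- namely the forms $\theta^\mu = dy^\mu/\s$, the module multiplication \eqref{eq:gmod}, the basis elements \eqref{eq:gGbasis_i0}, \eqref{eq:gbasis_i0} (homogeneous of degree zero by Lemma \ref{lem:bases_i0}), the vector fields $X_\mu = \s\p_{y^\mu}$ (homogeneous of degree zero), and the coordinate functions $y^\mu$ (homogeneous of degree one) --- have known homogeneity degrees. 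The only thing that is genuinely new is that $\gBil^k$ is compatible with the action, and that the bilinear forms $\OmegaBBil^k$, $\OmegaTBil^k$, $\IBil^k$, $\gBil^k$ of Definition \ref{def:gauges_i0} are ``homogeneous of degree zero'' in the appropriate sense; I would state and verify this first as a preliminary.

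First I would record the scaling behaviour of the auxiliary objects $\Thom = \s\p_{y^0}$, $\ghom = \s^{-2}\etay$, $\Thom^\flat = -dy^0/\s$, the inner products $\Oipr{k}{\cdot}{\cdot}$ and $\Iipr{k}{\cdot}{\cdot}$, the maps $\Iinter_{\Thom}$ and $P_{\Thom}$: since $\s$ is homogeneous of degree one and $\etay$ of degree two (being built from $dy^\mu\otimes dy^\nu$), $\ghom$ is homogeneous of degree zero, hence so are $\Oipr{k}{\cdot}{\cdot}$, $\Iipr{k}{\cdot}{\cdot}$ (note the $\s^2$ factors in \eqref{eq:Ipmiprformula}, \eqref{eq:Ibildefformulas} are designed precisely so), $\Iinter_{\Thom}$, and $P_{\Thom}$. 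Consequently $\OmegaBBil^k$, $\OmegaTBil^k$, $\IBil^k$, and $\gBil^k$ satisfy: for $u\in\gxG^k(\Dspcl)$, $u'\in\gx^{k+1}(\Dspcl)$ and $\lambda>0$, $\gBil^k(\Scalg_\lambda u,\Scalg_\lambda u') = \Scal_\lambda^*\big(\gBil^k(u,u')\big)$; the fastest way to see this is via the explicit formulas in Lemma \ref{lem:Binbases}, since the basis elements are homogeneous of degree zero and the entries of the matrices $A_{k,\ell}$ are constants. Alternatively one checks it directly from the definitions using the homogeneity of $\Oipr{k}{\cdot}{\cdot}$, $\Iinter_{\Thom}$, $P_{\Thom}$ and of $\s^2$.

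Next, for the anchor forms \eqref{eq:anchorformshom}: $(\anchor_k)_\ell^\mu = \frac{1}{\s}\anchorg(\eg^k_\ell)(y^\mu)$. Using $\Scal_\lambda^*\big(\anchorg(u)(\omega)\big) = \anchorg(\Scalg_\lambda u)(\Scal_\lambda^*\omega)$ from \eqref{eq:anchor_hom}, homogeneity of $\eg^k_\ell$ (degree zero), of $y^\mu$ (degree one), and of $\s$ (degree one), one gets $\Scal_\lambda^*(\anchor_k)_\ell^\mu = (\anchor_k)_\ell^\mu$, i.e.\ degree zero. Then I would go through the seven families of arrays in \eqref{eq:minkarrays} one by one: each is defined by applying $\gBil^k$ to basis elements, possibly after acting by $\dg$, by the bracket $[\cdot,\cdot]$, by multiplication with $\theta^\mu = dy^\mu/\s$ (degree zero), or with an anchor form $(\anchor_k)_\ell^\mu$ (degree zero). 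Since $\dg$, $[\cdot,\cdot]$ commute with $\Scalg_\lambda$ (Lemma \ref{lem:homogeneity}), module multiplication scales correctly by \eqref{eq:modmult_hom}, the basis elements are homogeneous of degree zero, and $\gBil^k$ is homogeneous of degree zero by the preliminary step, each component function is invariant under $\Scal_\lambda^*$, i.e.\ homogeneous of degree zero. For $\ginj_k$ this is trivial since its components are constants $\delta_{ij}$. I do not expect a genuine obstacle here --- the only mild care needed is bookkeeping the several $\s$-weights so they cancel (for instance the explicit $\frac{1}{\s}$ in \eqref{eq:anchorformshom} compensates the degree-one weight of $y^\mu$, and the $\s^2$ in the definition of $\IBil^k$ compensates the degree $-2$ of $\Iipr{k}{\cdot}{\cdot}$ on the relevant modules as seen in \eqref{eq:IiprExplicit_i0}); once the preliminary homogeneity of $\gBil^k$ is in hand, the rest is ``by direct inspection''.
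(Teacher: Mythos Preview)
Your proposal is correct and follows essentially the same approach as the paper. The paper is simply more terse: rather than establishing the invariance $\gBil^k(\Scalg_\lambda u,\Scalg_\lambda u') = \Scal_\lambda^*\big(\gBil^k(u,u')\big)$ as a separate preliminary, it invokes directly that the components of $\gBil^k$ in the homogeneous basis are \emph{constant} (by Lemma \ref{lem:Binbases}) --- which is exactly your ``fastest way'' --- and then runs the same computation you describe, using Lemma \ref{lem:homogeneity} and the known homogeneity degrees of $\s$, $y^\mu$, $\theta^\mu$, and the basis elements.
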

\begin{proof}
We check that \eqref{eq:anchorformshom} are homogeneous of degree zero:
\begin{align*}
\Scal^*_{\lambda}(\anchor_k)^\mu_\ell
&=
\Scal^*_\lambda
\big(\tfrac{1}{\s} \anchorg(\eg^k_{\ell})(y^\mu)\big)
=
\Scal^*_\lambda(\tfrac{1}{\s})
\anchorg(\Scalg_\lambda \eg^k_{\ell})(\Scal^*_\lambda y^\mu)
=
\tfrac{1}{\s}\anchorg(\eg^k_{\ell})(y^\mu)
=
(\anchor_k)^\mu_\ell
\end{align*}
by Lemma \ref{lem:homogeneity}, 
the fact that $\eg^k_\ell$ are homogeneous of degree zero
by Lemma \ref{lem:bases_i0}, and 
by $\Scal^*_{\lambda}(\tfrac{1}{\s})=\frac{\lambda}{\s}$
and $\Scal^*_{\lambda}y^\mu=\frac{y^\mu}{\lambda}$.

To check that the components of \eqref{eq:minkarrays} are 
homogeneous of degree zero
one proceeds similarly, using the fact that the components of $\gBil^k$
are constant by Lemma \ref{lem:Binbases},
using Lemma \ref{lem:homogeneity},
and using homogeneity of $\eg_\ell^{k}$.
For example,
\begin{align*}
\Scal_{\lambda}^*(\amink_{\ideg}^\mu)_{ij}
	&=
	\Scal_{\lambda}^* 
	(\gBil^{\ideg}\big(\eGg_\outind^{\ideg}, 
	\tfrac{dy^\mu}{\s} \eGg^{\ideg}_j\big))
	=
	\gBil^{\ideg}\big(\Scalg_{\lambda}(\eGg_\outind^{\ideg}), 
	\Scalg_{\lambda}(\tfrac{dy^\mu}{\s} \eGg^{\ideg}_j)\big)\\
	&\overset{(1)}{=}
	\gBil^{\ideg}\big(\Scalg_{\lambda}(\eGg_\outind^{\ideg}), 
	\Scal_{\lambda}^*(\tfrac{dy^\mu}{\s}) \Scalg_{\lambda}(\eGg^{\ideg}_j)\big)
	=
	\gBil^{\ideg}\big(\eGg_\outind^{\ideg}, 
		\tfrac{dy^\mu}{\s} \eGg^{\ideg}_j\big)
	=(\amink_{\ideg}^\mu)_{ij}
\end{align*}
where Lemma \ref{lem:homogeneity} is used in (1).
\qed
\end{proof}
%%--------------------------------
We now rewrite \eqref{eq:necSHS} 
and \eqref{eq:constraintsprop} in standard matrix-vector form.
%------------------------------
\begin{lemma}\label{lem:translationofeq}
For all $\cc\in\gxG^1(\Dspcl)$, 
$v\in\gx^1(\Dspcl)$ and
$U\in\gxG^2(\Dspcl)$:
\begin{align*}
\gBil^{1}\left( \eGg_{i}^1, 
\dg (v+\cc) + \tfrac12[v+\cc,v+\cc]\right) e^{\ngG_1}_i 
&=
\big(
\amink_1^{\mu}
+ \Amink_1^{\mu}(v)
+ \Amink_1^{\mu}(\ginj_1 \cc)
\big) X_{\mu}  \cc
\nonumber\\
&\;\;-
\big(\Lmink_1  \cc - \Amink_{11}^\mu(\ginj_1 \cc) X_{\mu} v + \Bmink_1(v,\ginj_1 \cc)\big) 
\nonumber\\
&\;\;- \tfrac12\Bmink_1(\ginj_1\cc,\ginj_1\cc) - \SFmink \big(\dg v+\tfrac12[v,v]\big)\nonumber \\[2mm]
%------------------------
\gBil^{2}\left( \eGg_{i}^2, \dg U + [v,U]\right)e^{\ngG_2}_i
&=
\left(
\amink_2^{\mu}
+ \Amink_2^{\mu}(v) \right) X_{\mu}  U\\
&\;\;-
\big(\Lmink_2  U \new{+} \Amink_{21}^\mu(\ginj_2 U) X_{\mu} v + \Bmink_2(v,\ginj_2 U)\big)
\nonumber
\end{align*}
where, on the left hand sides, 
$(e_{i}^{\ngG_k})_{i=1\dots\ngG_k}$ is the standard basis of $\R^{\ngG_k}$
and we sum over $i$,
and, on the right hand sides, the identification 
\eqref{eq:identify_g_vec_i0} is used.
\end{lemma}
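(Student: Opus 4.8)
The plan is to unwind both sides of each claimed identity by expanding in the homogeneous basis and then matching terms, using only the definitions of the arrays in Definition \ref{def:SHSarrays_i0}, the module decomposition $\gx^k(\Dspcl)=\gxG^k(\Dspcl)\oplus\tfrac{dy^0}{\s}\gxG^{k-1}(\Dspcl)$ (or any $\omega\in\Omegafut(\Dspcl)$) from Lemma \ref{lem:gauge_mainprop_i0}, and $C^\infty$-(bi)linearity of $\dg$, $[\cdot,\cdot]$, $\gBil^k$. First I would treat the second (simpler) identity: write $\dg U + [v,U]$ and pair it with $\gBil^2(\eGg_i^2,\,\cdot\,)$. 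The term $[v,U]$ has principal part governed by the anchor via \eqref{eq:anchorbracket}: $[v,\omega z] = \anchorg(v)(\omega)z + \dots$; applied with $v$ decomposed over the basis $\eg_\ell^1$ and $U$ over $\eGg_j^2$, and using \eqref{eq:anchorghom} to convert $\anchorg(\eg_\ell^1)(\,\cdot\,)$ into $(\anchor_1)_\ell^\mu X_\mu(\,\cdot\,)$, one reads off exactly the $\Amink_2^\mu(v)X_\mu U$ contribution (principal) plus $\Amink_{21}^\mu(\ginj_2 U)X_\mu v$ (the other "order of differentiation", where derivatives land on $v$) plus the zeroth-order remainder $\Bmink_2(v,\ginj_2 U)$. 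Similarly $\dg U$, paired against $\gBil^2$, gives $\amink_2^\mu X_\mu U - \Lmink_2 U$ by the Leibniz rule \eqref{eq:dgmodule}: differentiating the coefficients of $U$ against $\tfrac{dy^\mu}{\s}\eGg_j^2$ produces $\amink_2^\mu$, and differentiating the basis elements $\dg\eGg_j^2$ produces $-\Lmink_2$. This is exactly the bookkeeping encoded in the definitions, so it should go through directly once the sign conventions in \eqref{eq:dgdef}, \eqref{eq:[]gdef} are tracked.

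For the first identity I would proceed the same way but keep both $v$ (given, general section of $\gx^1$) and $\cc$ (in $\gxG^1$, identified via $\ginj_1$ with a vector in $\R^{\ng_1}$ supported on the first $\ngG_1$ entries). Expand $\dg(v+\cc)+\tfrac12[v+\cc,v+\cc] = (\dg v+\tfrac12[v,v]) + \dg\cc + [v,\cc] + \tfrac12[\cc,\cc]$ by bilinearity and \eqref{eq:bracketas}. Pairing with $\gBil^1(\eGg_i^1,\,\cdot\,)$: the $(\dg v+\tfrac12[v,v])$ term is a section of $\gx^2$ and produces $-\SFmink(\dg v+\tfrac12[v,v])$ by definition of $\SFmink$; the $\dg\cc$ term produces $\amink_1^\mu X_\mu\cc - \Lmink_1\cc$ exactly as in the second identity; the $[v,\cc]$ term produces $\Amink_1^\mu(v)X_\mu\cc - \Amink_{11}^\mu(\ginj_1\cc)X_\mu v + \Bmink_1(v,\ginj_1\cc)$ — note the minus sign on the middle term comes from antisymmetry $[v,\cc]=-[\cc,v]$ combined with the convention in the definition of $\Amink_{11}^\mu$; and the $\tfrac12[\cc,\cc]$ term produces $\Amink_1^\mu(\ginj_1\cc)X_\mu\cc + \tfrac12\Bmink_1(\ginj_1\cc,\ginj_1\cc)$ — here the quadratic $\tfrac12[\cc,\cc]$ unfolds into a principal part where one $\cc$-derivative appears (factor of $2$ from the two ways to put the derivative on cancels the $\tfrac12$, giving the asserted $\Amink_1^\mu(\ginj_1\cc)X_\mu\cc$) plus a zeroth-order half-remainder $\tfrac12\Bmink_1(\ginj_1\cc,\ginj_1\cc)$. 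Collecting and moving the non-principal pieces to the right with the stated signs yields the claimed formula.

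The main obstacle I expect is sign- and convention-bookkeeping rather than anything conceptual: the formulas \eqref{eq:dgdef}, \eqref{eq:[]gdef} carry degree-dependent signs $(-1)^{k+1}$, $(-1)^{k'}$; the anchor-Leibniz rule \eqref{eq:anchorbracket} carries $(-1)^{\degomega\degu}$; the definitions of $\Lmink_k$, $\Bmink_k$, $\SFmink$ in Definition \ref{def:SHSarrays_i0} carry an explicit minus sign while $\amink_k^\mu$, $\Amink_k^\mu$, $\Amink_{k'k}^\mu$ do not; and the factor-of-two in the quadratic term must be tracked carefully against the $\tfrac12$ in the Maurer-Cartan form. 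I would organize the computation by first recording the "principal symbol" identity — how $\gBil^k(\eGg_i^k,\,\cdot\,)$ acting on $\tfrac{dy^\mu}{\s}z$ and on $\anchorg(\ell)(\,\cdot\,)$-type terms reproduces $\amink_k^\mu$ and $\Amink_k^\mu$ — as a standalone sublemma, and only then substituting into the full expansion; the remaining (non-principal) terms are all $C^\infty$-linear algebraic contractions and can be grouped into the $\Lmink$, $\Bmink$, $\Amink_{k'k}$, $\SFmink$ arrays mechanically. Given that all the relevant operators are smooth on $\Dspcl$ (as noted after Definition \ref{def:SHSarrays_i0}) and the bases are fixed, there is no analytic subtlety; the identity is purely formal and the verification is a finite computation.
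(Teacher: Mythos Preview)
Your approach is exactly the paper's: expand $\dg(v+\cc)+\tfrac12[v+\cc,v+\cc]$ into $\dg\cc + [v,\cc] + \tfrac12[\cc,\cc] + (\dg v+\tfrac12[v,v])$, pair each piece with $\gBil^1(\eGg_i^1,\cdot)$, and read off the arrays from Definition \ref{def:SHSarrays_i0} using the Leibniz rules \eqref{eq:dgmodule}, \eqref{eq:anchorbracket} and the formula \eqref{eq:anchorghom}. The paper carries this out for the first identity and declares the second analogous.

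However, two of your intermediate claims are wrong, and the error is not a mere sign slip but a mistaken justification. You write that the $[v,\cc]$ term contributes $\ldots -\Amink_{11}^\mu(\ginj_1\cc)X_\mu v + \Bmink_1(v,\ginj_1\cc)$, and that the minus on $\Amink_{11}$ ``comes from antisymmetry $[v,\cc]=-[\cc,v]$''. But for degree-$1$ elements the graded bracket \eqref{eq:bracketas} is \emph{symmetric}: $[v,\cc] = -(-1)^{1\cdot 1}[\cc,v] = [\cc,v]$. Tracing through the paper's computation --- apply \eqref{eq:anchorbracket} to $[v,\cc_j\eGg_j^1]$, then use \eqref{eq:bracketas} and apply \eqref{eq:anchorbracket} again to $[\eGg_j^1, v_\ell\eg_\ell^1]$ --- the term carrying $X_\mu v$ emerges with a $+$ sign, while the zeroth-order piece picks up a $-$ from the explicit minus in the definition of $\Bmink_1$. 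The correct contributions are thus $+\Amink_{11}^\mu(\ginj_1\cc)X_\mu v - \Bmink_1(v,\ginj_1\cc)$ from $[v,\cc]$, and $-\tfrac12\Bmink_1(\ginj_1\cc,\ginj_1\cc)$ from $\tfrac12[\cc,\cc]$. With your signs the four pieces do not sum to the stated formula. (Also, the module decomposition from Lemma \ref{lem:gauge_mainprop_i0} that you list in your plan plays no role here.)
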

%------------------------
\begin{proof}
Here we prove the first identity of the lemma,
the proof of the second is analogous and thus omitted.
We will implicitly sum repeated indices
$i,j$ over $1\dots\ngG_1$,
and repeated indices $\ell$ over $1\dots\ng_1$,
and repeated indices $\mu$ over $0\dots3$.
Expand $\cc = \cc_j \eGg^1_j$ and $v =  v_\ell \eg^1_\ell$.

Using linearity of $\dg$, 
bilinearity of $[\cdot,\cdot]$, 
and $[v,\cc]=[\cc,v]$ by \eqref{eq:bracketas},
\begin{equation}\label{eq:mcuv}
\dg (v+\cc) + \tfrac12[v+\cc,v+\cc]
=
\dg \cc + [v,\cc] + \tfrac12[\cc,\cc] + \left(\dg v + \tfrac12[v,v]\right)
\end{equation}
Thus by $C^\infty$-bilinearity of $\gBil^1$, it suffices to show:
\begin{subequations}
\begin{align}
\gBil^1( \eGg^1_i,\dg \cc) e^{\ngG_1}_i
	&=
	\amink_{1}^\mu X_\mu \cc
	-
	\Lmink_{1} \cc\label{eq:dal}\\
\gBil^1( \eGg^1_i, [v,\cc]) e^{\ngG_1}_i
	&=
	\Amink_{1}^\mu(v) X_\mu \cc 
	+
	\Amink_{11}(\ginj_1 \cc) X_\mu v
	-
	\Bmink_1(v,\ginj_1 \cc)\label{eq:[vc]}\\
\gBil^1( \eGg^1_i, \tfrac12 [\cc,\cc])e^{\ngG_1}_i
	&=
	\Amink^\mu_1(\ginj_1\cc) X_\mu \cc
	-
	\tfrac12\Bmink_1(\ginj_1\cc,\ginj_1\cc)\label{eq:[cc]}\\
\gBil^{1}\left( \eGg_{i}^1, \dg v + \tfrac12[v,v]\right) e^{\ngG_1}_i
	&=
	-\SFmink(\dg v + \tfrac12[v,v])\label{eq:FF}
\end{align}
\end{subequations}

\eqref{eq:dal}: We have
\begin{align*}
\dg \cc
&=
\dg(\cc_j\eGg^1_j)
\overset{\smash{\eqref{eq:dgmodule}}}{=}
\ddR(\cc_j)\eGg^1_j +\cc_j\dg(\eGg^1_j)
=
(X_\mu \cc_j) \theta^\mu \eGg^1_j+\cc_j\dg(\eGg^1_j)
\intertext{
where we abbreviate $\theta^\mu = \frac{dy^\mu}{\s}$.
Using $C^\infty$-bilinearity of $\gBil^1$
and Definition \ref{def:SHSarrays_i0},}
\gBil^1( \eGg^1_i, \dg \cc)
&=
\gBil^1( \eGg^1_i, \theta^\mu \eGg^1_j) (X_\mu \cc_j)
+
\gBil^1( \eGg^1_i, \dg \eGg^1_j) \cc_j
=
(\amink_{1}^\mu)_{i j} X_\mu \cc_j
-
(\Lmink_{1})_{i j} \cc_j
\end{align*}
From this \eqref{eq:dal} follows,
using the identification \eqref{eq:identify_g_vec_i0}.
\eqref{eq:[vc]}: We have
\begin{align*}
%%%%%%%%%%%%%
[v,\cc]
&=
[v,\cc_j \eGg^1_j]
\overset{\eqref{eq:anchorbracket}}{=}
\anchorg(v)(\cc_j)\eGg^1_j + \cc_j[v,\eGg^1_j]\\
&\overset{\eqref{eq:anchorlin},\eqref{eq:anchorbracket},\eqref{eq:bracketas}}{=}
v_\ell\anchorg(\eg^1_\ell)(\cc_j)\eGg^1_j 
+
\cc_j \anchorg(\eGg^1_j)(v_\ell) \eg^1_\ell 
+
\cc_j v_\ell [\eg^1_\ell,\eGg^1_j]\\
&=
v_\ell (\rho_1)_\ell^\mu \eGg^1_j X_\mu \cc_j
+
\cc_j (\rho_1)^\mu_j \eg^1_\ell X_\mu v_\ell
+
\cc_j v_\ell [\eg^1_\ell,\eGg^1_j]
\end{align*}
where in the last line we use \eqref{eq:anchorghom} and 
$\eGg^1_j=\eg^1_j$ for $j=1\dots\ngG_1$.
Thus
\begin{align*}
\gBil^1( \eGg^1_i, [v,\cc])
&=
\gBil^1( \eGg^1_i, (\rho_1)_\ell^\mu \eGg^1_j) v_\ell (X_\mu \cc_j) \\
&\quad+
\gBil^1( \eGg^1_i, (\rho_1)^\mu_j \eg^1_\ell ) \cc_j (X_\mu v_\ell)\\
&\quad+
\gBil^1( \eGg^1_i, [\eg^1_\ell,\eGg^1_j]) \cc_j v_\ell\\
&=
(\Amink_{1}^\mu)_{\ell,i j}  v_\ell (X_\mu \cc_j)
+
(\Amink_{11}^\mu)_{j,i \ell}\cc_j (X_\mu v_\ell)
-
(\Bmink_{1})_{i \ell j} \cc_j v_\ell
\end{align*}
using Definition \ref{def:SHSarrays_i0}
and $\eGg^1_j=\eg^1_j$ for $j=1\dots\ngG_1$.
From this \eqref{eq:[vc]} follows, using \eqref{eq:identify_g_vec_i0}.
\eqref{eq:[cc]}: Use \eqref{eq:[vc]} with $v=\ginj_1\cc$
and
$\Amink_{11}^\mu(\ginj_1\cc)X_\mu\ginj_1\cc
=
\Amink_{1}^\mu(\ginj_1\cc)X_\mu \cc$.
\eqref{eq:FF}:
Expanding $\dg v + \frac12[v,v] = V_r \eg^2_r$
with implicit sum over $r=1\dots\ng_2$,
\begin{align}
\gBil^{1}\left( \eGg_{i}^1, \dg v + \tfrac12[v,v]\right) 
&=
\gBil^{1}\big( \eGg_{i}^1, \eg^2_r\big) V_r 
=
-\SFmink_{i r}V_r
\end{align}
using Definition \ref{def:SHSarrays_i0}.
From this \eqref{eq:FF} follows, using \eqref{eq:identify_g_vec_i0}.\qed
\end{proof}
%------------------------%
We prove basic properties of 
$\amink^\mu_k$, 
$\Amink^\mu_k$,
$\Lmink_k$. 
We use the decomposition
\begin{equation}\label{eq:RGdecomp}
\R^{\ngG_{k}} 
=
\R^{6\ngG^{\Omega}_k}
\oplus
\R^{4\ngG^{\Omega}_k}
\oplus 
\R^{\ngG^{\I}_{k+1}}
\end{equation}
Note that the basis \eqref{eq:gGbasis_i0} is compatible
with this decomposition, see Lemma \ref{lem:gaugebases_i0}.
%-------------------------------
\begin{lemma}\label{lem:aALproperties}
For every $\mu=0\dots3$ and $k=1,2$:
\begin{enumerate}[({f}1)]
\item \label{item:aprop}
$\amink^\mu_{k} \in C^\infty(\Dspcl,\End(\R^{\ngG_k}))$
is a symmetric matrix at every point on $\Dspcl$.
Its block decomposition relative to \eqref{eq:RGdecomp} is block diagonal:
\begin{align}\label{eq:ablockdiag}
\amink_{k}^{\mu}
=
\begin{pmatrix}
\one \delta_{\mu0} & 0 & 0\\
0 & \one \delta_{\mu0} & 0\\
0 & 0 & \amink_{k,\I}^{\mu}
\end{pmatrix}
\end{align}
where the entries of the matrix $\amink_{k,\I}^\mu$ are constant, i.e.~in $\R$.
Furthermore $\amink_{k,\I}^0=\one$,
and for every $\omega\in\Omega^1(\Dspcl)$:
\begin{equation}\label{eq:aineq}
\omega(\amink_{k,\I}^{\mu}X_\mu) \ge 
\big( \omega(X_0)-(
\textstyle\sum_{i=1}^3|\omega(X_i)|^2)^{\frac12} \big) \one
\end{equation}
\item \label{item:Aprop}
Let $(e_\ell^{\ng_1})_{\ell=1\dots\ng_1}$ be the standard basis of $\R^{\ng_1}$.
For every $\ell=1\dots\ng_1$, 
$\Amink^\mu_{k}(e_\ell^{\ng_1}) \in C^\infty(\Dspcl,\End(\R^{\ngG_k}))$
is a symmetric matrix at every point on $\Dspcl$.
Its block decomposition relative to \eqref{eq:RGdecomp} is block diagonal.
Furthermore, for every $\ell,k$ there exists
$\new{f_{\ell k}}\in C^\infty(\Dspcl,\End(\R^{\ngG_k}))$
whose components are homogeneous of degree zero and such that 
\begin{equation}\label{eq:A(1-t)}
d\ttcoord(\Amink_k^\mu(e_{\ell}^{\ng_{1}}) X_\mu) =
(1-\ttcoord) f_{\ell k}
\end{equation}
\item \label{item:Lprop}
$\Lmink_1\in C^\infty(\Dspcl,\End(\R^{\ngG_1}))$ is upper block triangular relative to \eqref{eq:RGdecomp}:
\begin{align*}
\Lmink_{1}
=
\begin{pmatrix}
\one d\zzeta(X_0) & 0 & \Lmink_{\boosts\I}\\
0 & 2\one d\zzeta(X_0) & \Lmink_{\transl\I}\\
0 & 0 & 4 d\zzeta(\amink_{1,\I}^{\mu} X_{\mu})
\end{pmatrix}
\end{align*}
In particular, the diagonal blocks are proportional 
to those of $d\zzeta(\amink_{k}^{\mu} X_\mu)$,
see \eqref{eq:ablockdiag}.
The entries of $\Lmink_{\boosts\I}$, $\Lmink_{\transl\I}$ 
are homogeneous of degree zero.
\end{enumerate}
\end{lemma}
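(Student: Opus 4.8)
\textbf{Proof plan for Lemma \ref{lem:aALproperties}.}
The strategy is to unwind each of the three items directly from Definition \ref{def:SHSarrays_i0} and the explicit formulas for the bilinear forms in Lemma \ref{lem:Binbases}, using the block structure of the bases in Lemma \ref{lem:gaugebases_i0}. The key point is that the basis $(\eGg^k_i)$ in \eqref{eq:gGbasis_i0} is adapted to the decomposition $\Kil = \boosts\oplus\transl$ of \eqref{eq:tbdef} and to the Weyl-curvature summand, which is precisely the decomposition \eqref{eq:RGdecomp}; thus $\gBil^k$ decomposes as a direct sum of $\OmegaBBil^k$, $\OmegaTBil^k$, $\IBil^{k+1}$ by \eqref{eq:gbil}, and I can read off the block structure block by block.

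For \ref{item:aprop}: symmetry of $\amink^\mu_k$ follows from \ref{item:gaugesymNEW} in Lemma \ref{lem:gauge_mainprop_i0} applied to $\omega=\frac{dy^\mu}{\s}$. Block-diagonality and the explicit entries are obtained from Lemma \ref{lem:Binbases}: the $\boosts$-block of $\gBil^k(\eGg^k_i,\frac{dy^\mu}{\s}\eGg^k_j)$ is $\OmegaBBil^k(\eGB^k_i,\theta^\mu\eGB^k_j)$, which by \eqref{eq:OBbil} equals $\delta_{ij}$ when $\mu=0$ and $0$ when $\mu=1,2,3$ — hence the $\one\delta_{\mu0}$ block; similarly for $\transl$ using \eqref{eq:OTbil}; and the $\I$-block is $\IBil^{k+1}(\eGI^{k+1}_i,\theta^\mu\eGI^{k+1}_j)$, which by \eqref{eq:Ibil} has constant entries, equals $\one$ when $\mu=0$, and whose combination $\theta^\mu(X_\mu)$-contraction gives a matrix whose eigenvalues are computed in Remark \ref{rem:EigenvaluesIbil}. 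The inequality \eqref{eq:aineq} on the $\I$-block is then exactly the eigenvalue bound in Remark \ref{rem:EigenvaluesIbil} with $c_\mu = \omega(X_\mu)$ (for $k=2$ the smallest eigenvalue is $c_0-|\vec c|$, which dominates $c_0 - |\vec c|$ trivially; for $k=3$ the smallest is $c_0-\frac{|\vec c|}{2}\ge c_0-|\vec c|$), and $\amink^0_{k,\I}=\one$ is the $\mu=0$ case. The homogeneity claim is Lemma \ref{lem:comphomog}.

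For \ref{item:Aprop}: symmetry again follows from \ref{item:gaugesymNEW}, now with $\omega = (\anchor_1)^\mu_\ell$; block-diagonality follows because $(\anchor_1)^\mu_\ell$ is a one-form and the same block-decomposition argument as in \ref{item:aprop} applies. The crucial identity \eqref{eq:A(1-t)} is where the real content lies: I need to show that $d\ttcoord((\anchor_1)^\mu_\ell X_\mu)$ vanishes on $\ttcoord=1$. Since $\ttcoord = y^0/|\vec y|$ and $\ttcoord=1$ is the locus $y^0=|\vec y|$, which contains future null infinity, I would expand $(\anchor_1)^\mu_\ell = \frac{1}{\s}\anchorg(\eg^1_\ell)(y^\mu)$ and compute $(\anchor_1)^\mu_\ell X_\mu \ttcoord = \frac{1}{\s}\anchorg(\eg^1_\ell)(y^\mu)\,X_\mu\ttcoord = \anchorg(\eg^1_\ell)(\ttcoord)$ using \eqref{eq:anchorghom}; then I must check that $\anchorg(\eg^1_\ell)(\ttcoord)$ is divisible by $(1-\ttcoord)$, i.e.\ vanishes at $\ttcoord=1$. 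For the $\boosts$ basis elements this follows from $\anchorL(\Vd^\mu\otimes B^{\alpha\beta})(\ttcoord) = \Vd^\mu\,B^{\alpha\beta}(\ttcoord)$ together with the fact that the boosts $B^{\alpha\beta}$ are tangent to $\ttcoord=1$ — this is the geometric statement that null infinity is ruled by the stabilizer of $\spaceinf$, and it can be verified from the explicit formulas \eqref{eq:TBy} by a direct computation of $B^{\alpha\beta}(y^0/|\vec y|)$; for the $\transl$ elements one uses the extra factor of $\frac{1}{\s}$ in \eqref{eq:OmegaTbasis_i0} and the formula for $T_\mu$ in \eqref{eq:TBy}. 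That $f_{\ell k}$ is homogeneous of degree zero follows from homogeneity of $(\anchor_1)^\mu_\ell$ and of $\ttcoord$, $\s$. I expect this divisibility verification to be the main obstacle — it is the one place a genuine coordinate computation is unavoidable — but it is routine given the explicit Killing-field formulas.

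For \ref{item:Lprop}: by Definition \ref{def:SHSarrays_i0}, $(\Lmink_1)_{ij} = -\gBil^1(\eGg^1_i,\dg\eGg^1_j)$, and I compute $\dg\eGg^1_j$ on each block. On the $\boosts$ block, $\eGg^1_j$ has the form $\theta^a\otimes B^{\mu\nu}$-type elements; applying $\dg$ via \eqref{eq:dgdef} and using $\ddR\theta^a$ together with $\inj$ of the $\I$-block, the $\boosts\boosts$-entry produces $\gBil^1(\eGB^1_i,(\ddR\log\s)\eGB^1_j) = \OmegaBBil^1(\eGB^1_i,\theta^\mu(X_\mu\log\s)\,\ldots)$, which by \eqref{eq:OBbil} and $d\zzeta = d\log\s$ equals $\one\,d\zzeta(X_0)$ on that block; the $\transl\transl$-entry acquires an extra factor $2$ from differentiating the $\frac{1}{\s}$ in \eqref{eq:OmegaTbasis_i0}; the $\I\I$-entry is $4\,d\zzeta(\amink^\mu_{1,\I}X_\mu)$ coming from the $\dI$-operator together with the $\s^2$ in $\IBil$ (the constant $4$ is extracted by the same kind of computation as in the proof of Lemma \ref{lem:Binbases}); and the off-diagonal blocks $\Lmink_{\boosts\I},\Lmink_{\transl\I}$ come from the $\inj$-term in \eqref{eq:dgdef} coupling $\I^2$ into $\Omega^1\otimes\Kil$, with vanishing lower-triangular part because $\dg$ preserves the grading and $\inj$ only maps the curvature summand into the frame/connection summand, not conversely. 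Homogeneity of the off-diagonal entries is Lemma \ref{lem:comphomog}. All of this is bookkeeping once the formulas of Lemma \ref{lem:Binbases} and Definitions \ref{def:dI}, \ref{def:inj} are in hand.
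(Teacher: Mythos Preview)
Your proposal is essentially correct and follows the same route as the paper for \ref{item:aprop} and \ref{item:Aprop}: symmetry from \ref{item:gaugesymNEW}, block-diagonality from \eqref{eq:gbil} and Lemma \ref{lem:Binbases}, the inequality from the eigenvalue computation (the paper invokes \eqref{eq:posineq} rather than Remark \ref{rem:EigenvaluesIbil} directly, but these are equivalent), and the divisibility $\anchorg(\eg^1_\ell)(\ttcoord)\in(1-\ttcoord)C^\infty$ via explicit computation of $B^{\alpha\beta}(\ttcoord)$ and $\tfrac1\s T_\nu(\ttcoord)$ from \eqref{eq:TBy}. One small slip: in \ref{item:Aprop} your $\Vd^\mu$ should be $\theta^\mu=dy^\mu/\s$ (you are in the homogeneous basis of Section \ref{sec:basis_i0}, not the global basis of Section \ref{sec:basis_bulk}); and do not forget the third case $\eg^1_\ell=0\oplus\eI^2_j$, where $\anchorg$ vanishes.

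For \ref{item:Lprop} your outline is correct but your explanation of the coefficient $4$ is off. The $\s^2$ in the definition of $\IBil^k$ is already absorbed in making Lemma \ref{lem:Binbases} come out cleanly; it does not produce the $4$. The paper's mechanism is cleaner: one observes (equation \eqref{eq:dzero}) that
\[
\dl(\s\,\eGB_j^1)=0,\qquad \dl(\s^2\,\eGT_j^1)=0,\qquad \dI(\s^4\,\eGI_j^2)=0,
\]
because these rescaled elements are constant-coefficient in $dy$-coordinates (for $\dI$ one uses Remark \ref{rem:dIprop} with the representative $\etay$). Then by Leibniz $\dl\eGB_j^1=-d\zzeta\,\eGB_j^1$, $\dl\eGT_j^1=-2d\zzeta\,\eGT_j^1$, $\dI\eGI_j^2=-4d\zzeta\,\eGI_j^2$, and the coefficients $1,2,4$ fall out immediately when you pair against $\gBil$ using Lemma \ref{lem:Binbases}. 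Your direct-computation approach would reach the same answer, but identifying the closed rescalings up front makes the structure transparent and avoids any risk of misattributing where the constants come from.
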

%---------------------------%
\begin{proof}
Abbreviate $\theta^\mu = \frac{dy^\mu}{\s}$.
\ref{item:aprop}:
Except for \eqref{eq:aineq}, the claim 
follows from \eqref{eq:gbil} and Lemma \ref{lem:Binbases}
(see also \ref{item:gaugesymNEW} of Lemma \ref{lem:gauge_mainprop_i0}).
For \eqref{eq:aineq} note that 
\begin{align*}
\omega((\amink_{k,\I}^{\mu})_{ij}X_\mu)
&=
\IBil^{k+1}(\eGI_{i}^{k+1},\theta^{\mu} \eGI_{j}^{k+1})
\omega(X_\mu)
=
\IBil^{k+1}(\eGI_{i}^{k+1},\omega \eGI_{j}^{k+1})
\end{align*}
Thus \eqref{eq:aineq} follows from 
\eqref{eq:posineq} restricted to $u\in\I^{k+1}(\Dspcl)$,
and $\amink_{k,\I}^{0}=\one$.

\ref{item:Aprop}:
By Definition \ref{def:SHSarrays_i0},
the components of $\Amink^\mu_{k}(e_\ell^{\ng_1})$ are
\begin{align*}
(\Amink^\mu_{k}(e_\ell^{\ng_1}))_{ij}
=
\gBil^{\new{k}}\left(\eGg^{k}_i, 
(\anchor_1)_\ell^\mu \eGg^{k}_j\right)
\end{align*}
where $(\anchor_1)_\ell^\mu\in\Omega^1(\Dspcl)$.
This is symmetric in $ij$ by \ref{item:gaugesymNEW} of Lemma \ref{lem:gauge_mainprop_i0}.
It is block diagonal by \eqref{eq:gbil}
and because
multiplication by a one-form maps each of 
the modules
$\Omega(\Dspcl)\otimesRR\boosts$, 
$\Omega(\Dspcl)\otimesRR\transl$, 
$\I(\Dspcl)$ back to itself.
We check \eqref{eq:A(1-t)}. 
The functions $f_{\ell k}$ are homogeneous of degree zero because
the left hand side of \eqref{eq:A(1-t)} is homogeneous of degree zero (by Lemma \ref{lem:comphomog} and homogeneity of $d\ttcoord$, $X_\mu$)
and because $1-\ttcoord$ is homogeneous of degree zero.
For smoothness, write
\begin{align}\label{eq:Aanchorcalc_i0}
\begin{aligned}
d\ttcoord\big((\Amink^\mu_{k}(e_\ell^{\ng_1}))_{ij}X_\mu\big)
	&=
	(\Amink^\mu_{k}(e_\ell^{\ng_1}))_{ij}d\ttcoord(X_\mu)\\
	&=
	\gBil^{\new{k}}\big(\eGg^{k}_i, 
	(\anchor_1)_\ell^\mu X_\mu(\ttcoord)\eGg^{k}_j\big)\\
	&=
	\gBil^{\new{k}}\big(\eGg^{k}_i, 
	\anchorg(\eg^1_\ell)(\ttcoord)\eGg^{k}_j\big)
\end{aligned}
\end{align}
where the last step uses \eqref{eq:anchorghom}.
There are three cases:
\begin{itemize}
\item 
If $\eg^1_\ell = (\theta^\mu\otimes B^{\alpha\beta})\oplus0\Ieps$
for some $\mu,\alpha,\beta=0\dots3$, then, using \eqref{eq:anchorLdef},
\[ 
\anchorg(\eg^1_\ell)(\ttcoord) 
=
\anchorL(\theta^\mu\otimes B^{\alpha\beta})(\ttcoord)
= 
\theta^\mu B^{\alpha\beta}(\ttcoord)
=
(1-\ttcoord)(1+\ttcoord)
\left(\begin{smallmatrix}
0 & \frac{\vec{y}^T}{|\vec{y}|}\\
-\frac{\vec{y}}{|\vec{y}|} & 0
\end{smallmatrix}\right)_{\alpha\beta}\theta^\mu
\]
\item 
If $\eg^1_\ell = (\frac{1}{\s}\theta^\mu\otimes T_\nu)\oplus0\Ieps$
for some $\mu,\nu=0\dots3$, then 
\[ 
\anchorg(\eg^1_\ell)(\ttcoord) 
=
\anchorL(\tfrac{1}{\s}\theta^\mu\otimes T_\nu)(\ttcoord)
= 
\theta^\mu \tfrac{1}{\s}T_\nu(\ttcoord)
=
(1-\ttcoord) 
\tfrac{1+\ttcoord}{1+2\ttcoord} 
\left(\begin{smallmatrix}
1\\ -\ttcoord \frac{\vec{y}}{|\vec{y}|}
\end{smallmatrix}\right)_{\nu}\theta^\mu 
\]
\item 
If $\eg^1_\ell = 0\oplus\eI^2_{j}\Ieps$ for some $j=1\dots10$,
then $\anchorg(\eg^1_\ell)=\anchorL(0)=0$.
\end{itemize}
%%%%%%%%%%%%%%%%%%%%%%%%%%%%%%
%ys=Array[y,4,0];
%Format[y[i_]]:=Subscript["y",ToString[i]];
%xs=Array[x,4,0];
%ETA=DiagonalMatrix[{-1,1,1,1}];
%eta[i_,j_]:=ETA[[i+1,j+1]];
%Q[v_]:=v.ETA.v
%kelvin[v_]:=v/Q[v];
%Bx[mu_,nu_][f_]:=
%Sum[x[mu]*eta[nu,sigma]*D[f,x[sigma]]-x[nu]*eta[mu,sigma]*D[f,x[sigma]],{sigma,0,3}];
%Tx[mu_][f_]:=D[f,x[mu]];
%By[mu_,nu_][f_]:=
%Sum[y[mu]*eta[nu,sigma]*D[f,y[sigma]]-y[nu]*eta[mu,sigma]*D[f,y[sigma]],{sigma,0,3}];
%Ty[mu_][f_]:=-2*Sum[eta[mu,nu]*y[nu]*y[sigma]*D[f,y[sigma]],{nu,0,3},{sigma,0,3}]+Q[ys]*D[f,y[mu]];
%fx=f@@xs;
%fy=f@@ys;
%Table[Bx[mu,nu][fx]-(By[mu,nu][fx/.Thread[xs->kelvin[ys]]]/.Thread[ys->kelvin[xs]]),{mu,0,3},{nu,0,3}]//Simplify
%Table[Tx[mu][fx]-(Ty[mu][fx/.Thread[xs->kelvin[ys]]]/.Thread[ys->kelvin[xs]]),{mu,0,3}]//Simplify
%yvec=Rest[ys];
%ry=Sqrt[yvec.yvec];
%t=y[0]/ry;
%s=2y[0]+ry;
%Mat={{0,y[1]/ry,y[2]/ry,y[3]/ry},
%{-y[1]/ry,0,0,0},
%{-y[2]/ry,0,0,0},
%{-y[3]/ry,0,0,0}};
%Vec={1,-t*y[1]/ry,-t*y[2]/ry,-t*y[3]/ry};
%Table[By[\[Alpha],\[Beta]][t]-(1-t)*(1+t)*Mat[[\[Alpha]+1,\[Beta]+1]],{\[Alpha],0,3},{\[Beta],0,3}]//Expand
%Table[1/s*Ty[\[Nu]][t]-(1-t)*(1+t)/(1+2*t)*Vec[[\[Nu]+1]],{\[Nu],0,3}]//FullSimplify
%%%%%%%%%%%%%%%%%%%%%%%%%%%%%%
From this and 
$\gBil^{\new{k}}\big(\eGg^{k}_i, 
	\theta^\mu\eGg^{k}_j\big) \in C^\infty(\Dspcl)$,
the claim follows.

\ref{item:Lprop}: 
By Definition \ref{def:SHSarrays_i0},
by the definition of $\dg$ in \eqref{eq:dgdef},
and by \eqref{eq:gbil}:
\begin{align*}
\Lmink_1
=
\begin{pmatrix}
\Lmink_{\boosts} & 0 & \Lmink_{\boosts\I}\\
0 & \Lmink_{\transl} & \Lmink_{\transl\I}\\
0 & 0 & \Lmink_{\I}
\end{pmatrix}
\qquad
\text{where}
\qquad
\begin{aligned}
(\Lmink_{\boosts})_{ij} 
	&= -\OmegaBBil^1\left(\eGB_i^1,\dl\eGB_j^1\right)\\
(\Lmink_{\transl})_{ij} 
	&= -\OmegaTBil^1\left(\eGT_i^1,\dl\eGT_j^1\right)\\
(\Lmink_{\I})_{ij} 
	&= -\IBil^2\left(\eGI_i^2,\dI\eGI_j^2\right)\\
(\Lmink_{\boosts\I})_{ij} 
	&= +\OmegaBBil^1\left(\eGB_i^1,\pi_{\boosts}\inj(\eGI_j^2)\right)\\
(\Lmink_{\transl\I})_{ij} 
	&= +\OmegaTBil^1\left(\eGT_i^1,\pi_{\transl}\inj(\eGI_j^2)\right)
\end{aligned}
\end{align*}
where $\pi_{\boosts}$, $\pi_{\transl}$ denote
the projections onto the two direct summand of 
$\Omega^2(\Dspcl)\otimesRR\Kil
	=(\Omega^2(\Dspcl)\otimesRR\boosts)\oplus (\Omega^2(\Dspcl)\otimesRR\transl)$.
Consider the diagonal blocks. 
We claim that 
\begin{align}\label{eq:dzero}
\dl (\s\eGB_j^1) &= 0&
\dl (\s^2\eGT_j^1) &= 0 &
\dI (\s^4 \eGI_j^2) &= 0
\end{align}
Proof of \eqref{eq:dzero}:
For the first note that $\s\eGB_j^1=dy^\mu \otimes B^{\alpha\beta}$
for some $\mu,\alpha,\beta=0\dots3$, hence
$\dl(\s\eGB^1_j)=\ddR (dy^\mu) \otimes B^{\alpha\beta}=0$.
For the second note that $\s^2\eGT^1_j = dy^\mu \otimes T_{\nu}$
for some $\mu,\nu=0\dots3$, 
hence $\dl(\s^2\eGT^1_j)=0$.
Consider the third.
Each element $\s^4\eGI_j^2$ is of the form $u\oplus\bar{u}$
where $u\in \I_+^2(\Dspcl)$ is a $\C$-linear combination of elements
\begin{align*}
\mu_{\etay}^{-1} \otimes (dy^{i_1}\wedge dy^{i_2}) \otimes (dy^{i_3}\wedge dy^{i_4}) 
\qquad
i_1,\dots,i_4=0\dots3
\end{align*}
We have $\dI(u)=0$:
Use the formula \eqref{eq:dIpm} for $\dI$,
where, by Remark \ref{rem:dIprop}, 
one can replace the representative $\gcyl$ of $[\gcyl]$
by the representative $\etay$,
then $\nabla^{\etay}_{\V_{\alpha}} dy^{\mu}=0$ for $\alpha,\mu=0\dots3$
yields $\dI(u)=0$. 
Thus $\dI(\s^4\eGI_j^2)=0$, which proves \eqref{eq:dzero}.

We conclude \ref{item:Lprop}.
Observe that
\begin{align*}
\dl(\eGB_j^1)
&=
\dl(\tfrac{1}{\s}\s\eGB_j^1)
=
(\ddR\tfrac{1}{\s})\s\eGB_j^1
+
\tfrac{1}{\s}\dl(\s\eGB_j^1)
\overset{\smash{(1)}}{=}
\s(\ddR\tfrac{1}{\s})\eGB_j^1
=
- d\zzeta \eGB_j^1\\
%-------------
\dl(\eGT_j^1)
&=
-2d\zzeta \eGT_j^1\\
%-------------
\dI(\eGI_j^2)
&=
\dI(\s^{-4}\s^4\eGI_j^2)
\overset{\smash{(2)}}{=}
(\ddR\s^{-4}) \s^4\eGI_j^2
+
\s^{-4} \dI(\s^4\eGI_j^2)
\overset{\smash{(1)}}{=}
\s^4 \ddR(\s^{-4}) \eGI_j^2
=
-4 d\zzeta  \eGI_j^2
\end{align*}
where in (1) we use \eqref{eq:dzero}, and in (2)
we use the Leibniz rule in Remark \ref{rem:dIprop}. 
Thus
\begin{align*}
(\Lmink_{\boosts})_{ij} 
&=
-\OmegaBBil^1\left(\eGB_i^1,\dl\eGB_j^1\right)
=
\OmegaBBil^1\left(\eGB_i^1, d\zzeta \eGB_j^1\right)
=
\OmegaBBil^1\left(\eGB_i^1, \theta^\mu \eGB_j^1\right) d\zzeta(X_\mu)
\overset{\smash{(1)}}{=}
\delta_{ij} d\zzeta(X_0)\\
(\Lmink_{\transl})_{ij}
&= 2\delta_{ij}d\zzeta(X_0)\\
(\Lmink_{\I})_{ij} 
&= 
4\IBil^2\left(\eGI_i^2,d\zzeta  \eGI_j^2\right)
=
4\IBil^2\left(\eGI_i^2,\theta^\mu  \eGI_j^2\right) d\zzeta(X_\mu)
\overset{\smash{(1)}}{=}
4(\amink_{1,\I}^{\mu})_{ij} d\zzeta(X_{\mu})
\end{align*}
as claimed,
where in (1) we use \ref{item:aprop}.
Homogeneity follows from Lemma \ref{lem:comphomog}.\qed
\end{proof}
%---------------------------%
\begin{lemma}\label{lem:qminkdelta}
There exist $\qmink\ge1$ and $\deltamink\in(0,1]$ such that for all $k=1,2$ and 
for all $u\in \R^{\ng_1}$ with $\sqrt{u^Tu}\le 2\deltamink$, 
at every point on $\Dspcl$ one has
\begin{subequations}\label{eq:minkcausal}
\begin{align}
\qmink^{-1} \one 
	&\le d\zzeta\left( 
	\big(
	\amink_k^{\mu}
	+ \Amink_k^{\mu}(u) 
	\big) X_{\mu} 
	\right)
	\le
	\qmink\one
	\label{eq:dzetapos}\\
\qmink^{-1}\one &\le 
	\;\;\;\;\;\;\; 
	\amink_k^{0} 
	+ \Amink_k^{0}(u) 
	\qquad\ \le \qmink\one
	\label{eq:d0pos}\\
(1-\ttcoord)\qmink^{-1} &\le 
	d\ttcoord\left( 
	\big(
	\amink_k^{\mu} 
	+ \Amink_k^{\mu}(u) 
	\big) X_{\mu} 
	\right)
	\le
	\qmink\one
	\label{eq:dtpos}
\end{align}
\end{subequations}
where the matrices in the middle are symmetric by Lemma \ref{lem:aALproperties}.
\end{lemma}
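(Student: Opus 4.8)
The plan is to write $\amink_k^\mu+\Amink_k^\mu(u)=\amink_k^\mu+\Amink_k^\mu(u)$ as the sum of the Minkowski part $\amink_k^\mu$ and a perturbation that is small once $\deltamink$ is small, to verify \eqref{eq:minkcausal} first for $u=0$ and then to absorb the perturbation. As a preliminary step I would compute the contractions of the frame $X_\mu=\s\p_{y^\mu}$ with $d\zzeta$ and $d\ttcoord$. Since $\zzeta=\log\s$ one has $d\zzeta(X_\mu)=\p_{y^\mu}\s$, so $d\zzeta(X_0)=2$ and $d\zzeta(X_i)=y^i/|\vec y|$ for $i=1,2,3$, whence $\sum_{i=1}^3|d\zzeta(X_i)|^2=1$ and $d\zzeta(X_0)-\big(\sum_{i=1}^3|d\zzeta(X_i)|^2\big)^{1/2}=1$. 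Likewise $d\ttcoord(X_0)=\s/|\vec y|=1+2\ttcoord$ and $d\ttcoord(X_i)=-\s\, y^0 y^i/|\vec y|^3$, so $\big(\sum_{i=1}^3|d\ttcoord(X_i)|^2\big)^{1/2}=\s\, y^0/|\vec y|^2=(1+2\ttcoord)\ttcoord$ and therefore $d\ttcoord(X_0)-\big(\sum_{i=1}^3|d\ttcoord(X_i)|^2\big)^{1/2}=(1+2\ttcoord)(1-\ttcoord)$. On $\Dspcl$ we have $\ttcoord=y^0/|\vec y|\in[0,1]$, with $\ttcoord=1$ exactly on future null infinity (cf.\ Remark \ref{rem:Dstau}); in particular all these functions are bounded, and $1+2\ttcoord\ge1\ge1-\ttcoord$ while $(1+2\ttcoord)(1-\ttcoord)\ge1-\ttcoord\ge0$.

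Next I would treat the Minkowski case $u=0$. By the block decomposition \eqref{eq:ablockdiag} relative to \eqref{eq:RGdecomp} together with $\amink_{k,\I}^0=\one$, one has $\amink_k^0=\one$ exactly, which gives \eqref{eq:d0pos} for $u=0$. For \eqref{eq:dzetapos} with $u=0$, the two off-$\I$ blocks of $d\zzeta(\amink_k^\mu X_\mu)$ equal $d\zzeta(X_0)\one=2\one$, while for the $\I$-block the inequality \eqref{eq:aineq} with $\omega=d\zzeta$ gives the lower bound $\big(d\zzeta(X_0)-(\sum_i|d\zzeta(X_i)|^2)^{1/2}\big)\one=\one$; the upper bound $d\zzeta(\amink_k^\mu X_\mu)\le C\one$ is immediate because the $\amink_k^\mu$ have constant entries and $d\zzeta(X_\mu)$ is bounded. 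For \eqref{eq:dtpos} with $u=0$, the two off-$\I$ blocks equal $d\ttcoord(X_0)\one=(1+2\ttcoord)\one$, which lies between $(1-\ttcoord)\one$ and $3\one$, while for the $\I$-block \eqref{eq:aineq} with $\omega=d\ttcoord$ gives the lower bound $(1+2\ttcoord)(1-\ttcoord)\one\ge(1-\ttcoord)\one$, and boundedness of $d\ttcoord(X_\mu)$ and of $\amink_{k,\I}^\mu$ gives $d\ttcoord(\amink_k^\mu X_\mu)\le C\one$.

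For the perturbation, I would first note that by \ref{item:Aprop} of Lemma \ref{lem:aALproperties} and Lemma \ref{lem:comphomog}, for each fixed $k$ and each standard basis vector $e^{\ng_1}_\ell\in\R^{\ng_1}$ the matrices $\amink_k^\mu$, $\Amink_k^\mu(e^{\ng_1}_\ell)$ and the matrices $f_{\ell k}$ of \eqref{eq:A(1-t)} are continuous on $\Dspcl$ and homogeneous of degree zero, as are $d\zzeta(X_\mu)$ and $d\ttcoord(X_\mu)$. Since, for $u$ fixed, every quantity appearing in \eqref{eq:minkcausal} (including $1-\ttcoord$) is homogeneous of degree zero in $p\in\Dspcl$, it suffices to prove the inequalities on the level set $\Dspcl_1=\{\s=1\}$, which under the identification \eqref{eq:ztcoords} — adjoined with the locus $\ttcoord=1$ of future null infinity — is the compact set $\{0\}\times[0,1]\times S^2$; hence there is a constant $\Ckerr$ bounding all the operator norms and absolute values just listed, uniformly on $\Dspcl$. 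Writing $u=\sum_\ell u_\ell e^{\ng_1}_\ell$ with $\sqrt{u^Tu}\le2\deltamink$, one gets $\|\Amink_k^\mu(u)\|\lesssim_{\Ckerr}\deltamink$, hence $\|d\zzeta(\Amink_k^\mu(u)X_\mu)\|$ and $\|\Amink_k^0(u)\|$ are $\lesssim_{\Ckerr}\deltamink$, and — the crucial point — by \eqref{eq:A(1-t)} one has $d\ttcoord(\Amink_k^\mu(u)X_\mu)=(1-\ttcoord)\sum_\ell u_\ell f_{\ell k}$, whose norm is $\lesssim_{\Ckerr}(1-\ttcoord)\deltamink$. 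Adding these bounds to the Minkowski estimates of the previous paragraph and choosing $\deltamink\in(0,1]$ small enough (depending only on $\Ckerr$), one obtains lower bounds $\ge\tfrac12\one$ for \eqref{eq:dzetapos} and \eqref{eq:d0pos} and $\ge\tfrac12(1-\ttcoord)\one$ for \eqref{eq:dtpos}, together with upper bounds bounded by a fixed multiple of $\one$; taking $\qmink$ to be the maximum of $2$ and these bounds finishes the proof.

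The one genuinely structural observation — and the reason the lower bound in \eqref{eq:dtpos} is permitted to degenerate like $1-\ttcoord$ at future null infinity rather than being uniform — is the coincidence that $d\ttcoord(X_0)-\big(\sum_i|d\ttcoord(X_i)|^2\big)^{1/2}$ factors as $(1+2\ttcoord)(1-\ttcoord)$ and that the perturbation $d\ttcoord(\Amink_k^\mu(u)X_\mu)$ carries precisely the matching weight $1-\ttcoord$ via \eqref{eq:A(1-t)}; everything else is routine matrix bookkeeping, so I do not expect a serious obstacle.
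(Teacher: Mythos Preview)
Your proposal is correct and follows essentially the same approach as the paper: establish \eqref{eq:minkcausal} first for $u=0$ via the block form \eqref{eq:ablockdiag} and the spectral inequality \eqref{eq:aineq}, obtain uniform bounds from homogeneity of degree zero plus compactness of a single $\s$-level set, and then absorb the perturbation using \eqref{eq:A(1-t)} for the crucial $(1-\ttcoord)$ weight in \eqref{eq:dtpos}. Your explicit computations of $d\zzeta(X_\mu)$ and $d\ttcoord(X_\mu)$ and the factorisation $(1+2\ttcoord)(1-\ttcoord)$ match the paper's, and your identification of the key structural point is exactly right.
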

\begin{proof}
We first consider only 
\begin{equation}\label{eq:ag}
d\zzeta(\amink_k^{\mu}X_\mu)
\qquad
\amink_k^{0}
\qquad
d\ttcoord(\amink_k^{\mu}X_\mu)
\end{equation}
By \ref{item:aprop} these matrices are constant,
and  $\amink_k^{0}=\one$.
Further \eqref{eq:ablockdiag} and \eqref{eq:aineq} yield
\begin{align*}
d\zzeta(\amink_{k}^{\mu}X_\mu)
&\ge
\big( d\zzeta(X_0)-(
\textstyle\sum_{i=1}^3|d\zzeta(X_i)|^2)^{\frac12} \big) \one
=
\one\\
%%%%%
d\ttcoord(\amink_{k}^{\mu}X_\mu)
&\ge
\big( d\ttcoord(X_0)-(
\textstyle\sum_{i=1}^3|d\ttcoord(X_i)|^2)^{\frac12} \big) \one
=
(1-\ttcoord)(1+2\ttcoord)\one
\ge
(1-\ttcoord)\one
\end{align*}
%%%%%%%%%%%%%%%%%%%%%%%%%%%%
%ys=Array[y,4,0];
%Format[y[i_]]:=Subscript["y",ToString[i]];
%ry=Sqrt[Rest[ys].Rest[ys]];
%t=y[0]/ry;
%s=2y[0]+ry;
%zeta=Log[2y[0]+ry];
%rep={ry^2->R^2,y[0]->R*T};
%simp[X_]:=Simplify[X,Assumptions->R>0&&y[0]>0&&T>0]
%dt=Table[D[t,y[i]],{i,0,3}]/.rep//simp
%dzeta=Table[D[zeta,y[i]],{i,0,3}]/.rep//simp
%ExprDT=Simplify[s*dt[[0+1]]-Sqrt[(s*dt[[1+1]])^2+(s*dt[[2+1]])^2+(s*dt[[3+1]])^2]]//.rep//simp//Factor
%ExprDZ=Simplify[s*dzeta[[0+1]]-Sqrt[(s*dzeta[[1+1]])^2+(s*dzeta[[2+1]])^2+(s*dzeta[[3+1]])^2]]//.rep//simp//Factor
%%%%%%%%%%%%%%%%%%%%%%%%%%%%
This gives a lower bound for the matrices \eqref{eq:ag}.
We also have an upper bound, 
using the fact that \eqref{eq:ag} 
%since $\amink_k^{\mu}$ are constant, and $d\zzeta$, $d\ttcoord$, $X_\mu$ 
are smooth on $\Dspcl$ and homogeneous of degree zero,
and compactness of $\Dspcl_{s}$.
Hence there exists $\qminkaux\ge1$ such that
at every point on $\Dspcl$:
\begin{align}\label{eq:qminkdefqaux}
\begin{aligned}
\one \;&\le\; d\zzeta(\amink_{k}^{\mu}X_\mu) \;\le\; \qminkaux \one\\
\one \;&\le\; \amink_{k}^{0} \;\le\;  \one\\
(1-\ttcoord) \one \;&\le\; d\ttcoord(\amink_{k}^{\mu}X_\mu) \;\le\; \qminkaux \one
\end{aligned}
\end{align}
Set $\qmink=2\qminkaux$.
Choose $\deltamink>0$ sufficiently small so that for $k=1,2$:
\begin{subequations}\label{eq:deltacondNEW}
\begin{align}
(2\deltamink)\tsum_{\ell=1}^{\ng_1}\|d\zzeta(\Amink_k^{\mu}(e_\ell^{\ng_1}) X_\mu)\|
	&\le \tfrac{1}{4\qminkaux} \label{eq:deltacondzetaNEW}\\
(2\deltamink)\tsum_{\ell=1}^{\ng_1}\|\Amink_k^{0}(e_\ell^{\ng_1})\|
	&\le \tfrac{1}{4\qminkaux} \label{eq:deltacond0NEW}\\	
(2\deltamink)\tsum_{\ell=1}^{\ng_1}\|\frac{1}{1-\ttcoord}d\ttcoord(\Amink_k^{\mu}(e_\ell^{\ng_1}) X_\mu)\|
	&\le \tfrac{1}{4\qminkaux} \label{eq:deltacondtNEW}
\end{align}
\end{subequations}
at every point on $\Dspcl$, 
where $\|{\cdot}\|$ is the $\ell^2$-matrix norm,
and $(e_\ell^{\ng_1})_{\ell=1\dots\ng_1}$
the standard basis of $\R^{\ng_1}$.
Such a $\deltamink>0$ exists
because $d\zzeta(\Amink_k^{\mu}(e_\ell^{\ng_1}) X_\mu)$,
$\Amink_k^{0}(e_\ell^{\ng_1})$, 
$\frac{1}{1-\ttcoord}d\ttcoord(\Amink_k^{\mu}(e_\ell^{\ng_1}) X_\mu)$
are smooth on $\Dspcl$ and homogeneous of degree zero (Lemma \ref{lem:comphomog}
and \ref{item:Aprop})
and by compactness of $\Dspcl_{s}$.

We conclude \eqref{eq:minkcausal}.
Expand $u = \sum_{\ell=1}^{\ng_1} u_\ell e_{\ell}^{\ng_1}$.
At every point on $\Dspcl$:
\begin{align*}
\|d\zzeta(\Amink_k^{\mu}(u)X_\mu)\|
&\le\textstyle
\sum_{\ell=1}^{\ng_1} |u_\ell|\|d\zzeta(\Amink_k^{\mu}(e_{\ell}^{\ng_1})X_\mu)\|\\
&\le\textstyle
2\deltamink
\sum_{\ell=1}^{\ng_1}\|d\zzeta(\Amink_k^{\mu}(e_{\ell}^{\ng_1})X_\mu)\|
\le
\tfrac{1}{4\qminkaux}
\end{align*}
by \eqref{eq:deltacondzetaNEW}.
Similarly
$\|\Amink_k^{0}(u)\| \le \tfrac{1}{4\qminkaux}$
and 
$\|d\ttcoord(\Amink_k^{\mu}(u)X_\mu)\|
\le \tfrac{1-\ttcoord}{4\qminkaux}$.
This implies
\begin{align*}
-\tfrac{1}{4\qminkaux}\one 
	&\le d\zzeta(\Amink_k^{\mu}(u)X_\mu) 
	\le \tfrac{1}{4\qminkaux}\one\\
-\tfrac{1}{4\qminkaux}\one 
	&\le \Amink_k^{0}(u)
	\le \tfrac{1}{4\qminkaux}\one\\
-\tfrac{1-\ttcoord}{4\qminkaux}\one 
	&\le d\ttcoord(\Amink_k^{\mu}(u)X_\mu) 
	\le \tfrac{1-\ttcoord}{4\qminkaux}\one
	\le \tfrac{1}{4\qminkaux}\one
\end{align*}
and from this \eqref{eq:minkcausal} follows, using $\qmink=2\qminkaux$.
\qed
\end{proof}
%---------------------------%
\begin{lemma}\label{lem:ellMinkest}
Using Convention \ref{conv:ztXmu_NEW}, 
the functions $\LdivestNEW{}$ and $\CcomTang{},\CcomTransv{}$ in Definition \ref{def:ellCcomdef} respectively Definition \ref{def:defCcom} have the following properties:
\begin{itemize}
\item 
Let $\chi>0$ and let $\conj\in\End(\R^{\ngG_1})$ be given by the matrix
\begin{equation*}
\conj 
	= 
	\left(\begin{smallmatrix}
	\one &0&0\\
	0&\one&0\\
	0&0&\chi\one
	\end{smallmatrix}\right)
\end{equation*}
using \eqref{eq:RGdecomp}.
Let $\qmink\ge1$ be as in Lemma \ref{lem:qminkdelta}.
Then for all $\zz\le0$:
\begin{align}\label{eq:ellmink}
\LdivestNEW{\conj^{-1} \Lmink_1 \conj,\amink_1^\mu X_\mu}(\zz)
\le
\tfrac52 + \tfrac12\chi\, \qmink \big(\|\Lmink_{\boosts\I}\|_{L^\infty(\Dspcl)}+\|\Lmink_{\transl\I}\|_{L^\infty(\Dspcl)}\big)
\end{align}
where the pointwise norm is the $\ell^2$-matrix norm,
with $\Lmink_{\boosts\I}$, $\Lmink_{\transl\I}$ from \ref{item:Lprop}.
\item 
Let $\Ccomp = (\Ccomp_1,\Ccomp_2,\Ccomp_3)$ with 
$
\Ccomp_i = \frac{y^i}{|\vec{y}|}
$.
Then for all $\zz\le0$:
\begin{align}\label{eq:CcomT0}
\CcomTang{\amink_1^\mu X_\mu,\Ccomp}(\zz)& \new{=} 1 
&
\CcomTransv{\amink_1^\mu X_\mu,\Ccomp}(\zz)& = 0
\end{align}
\end{itemize}
\end{lemma}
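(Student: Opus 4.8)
The plan is to reduce both claims to the block structure of $\Lmink_1$ and $\amink_1^\mu$ from Lemma \ref{lem:aALproperties}, combined with two elementary computations valid on $\Dspcl$: since $\zzeta=\log\s$ and $X_\mu=\s\p_{y^\mu}$ one has $d\zzeta(X_\mu)=\p_{y^\mu}\s$, hence $d\zzeta(X_0)=2$, $d\zzeta(X_i)=\Ccomp_i$ and $d\zzeta(\amink_1^\mu X_\mu)=\amink_1^\mu\p_{y^\mu}\s$; and, from $\muW=\s^{-4}|dy^0\wedge\cdots\wedge dy^3|$ together with $\div_{f\mu}(X)=\div_\mu(X)+X(f)/f$, one gets $\div_{\muW}(X_\mu)=\p_{y^\mu}\s-4\p_{y^\mu}\s=-3\p_{y^\mu}\s$. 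Writing $\a=\amink_1^\mu X_\mu$ and using that the entries of $\amink_1^\mu$ are constant by \ref{item:aprop}, the Leibniz rule for the divergence gives $\div_{\muW}(\a)=\amink_1^\mu\,\div_{\muW}(X_\mu)=-3\,\amink_1^\mu\p_{y^\mu}\s=-3\,d\zzeta(\a)$. Thus in Definition \ref{def:ellCcomdef} the argument of $\sig$ is $\conj^{-1}\Lmink_1\conj+\tfrac12\div_{\muW}(\a)=\conj^{-1}\Lmink_1\conj-\tfrac32 d\zzeta(\a)$, and since $\sig(S-cP,P)=\sig(S,P)-c$ for symmetric positive definite $P$ (here $d\zzeta(\a)$ is such, by \eqref{eq:dzetapos} with $u=0$), one obtains
\[
\LdivestNEW{\conj^{-1}\Lmink_1\conj,\,\a}(\zz)
\;=\;
\sup_{p\in\homMfd_\zz}\sig\!\bigl(\conj^{-1}\Lmink_1\conj|_p,\,d\zzeta(\a)|_p\bigr)\;-\;\tfrac32 .
\]

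It then remains to bound the pointwise $\sig$. By \ref{item:aprop} and \ref{item:Lprop}, in the decomposition \eqref{eq:RGdecomp} one has $d\zzeta(\a)=\left(\begin{smallmatrix}2\one&0&0\\0&2\one&0\\0&0&N\end{smallmatrix}\right)$ and $\conj^{-1}\Lmink_1\conj=\left(\begin{smallmatrix}2\one&0&\chi\Lmink_{\boosts\I}\\0&4\one&\chi\Lmink_{\transl\I}\\0&0&4N\end{smallmatrix}\right)$, where $N=d\zzeta(\amink_{1,\I}^\mu X_\mu)$ satisfies $N\ge\one$ by \eqref{eq:aineq} (since $d\zzeta(X_0)-(\sum_{i=1}^3|d\zzeta(X_i)|^2)^{1/2}=2-1=1$). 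Writing a vector in blocks as $v=(v_1,v_2,v_3)$ and passing to the symmetric part of $\conj^{-1}\Lmink_1\conj$, the diagonal contribution $2\|v_1\|^2+4\|v_2\|^2+4\,v_3^T N v_3$ is bounded by $4\,v^T d\zzeta(\a)\,v$, while the off-diagonal contribution $\chi\,v_1^T\Lmink_{\boosts\I}v_3+\chi\,v_2^T\Lmink_{\transl\I}v_3$ is, by Cauchy--Schwarz together with $\|v_1\|,\|v_2\|\le 2^{-1/2}(v^T d\zzeta(\a)v)^{1/2}$ and $\|v_3\|\le(v_3^T N v_3)^{1/2}\le(v^T d\zzeta(\a)v)^{1/2}$, bounded by $\tfrac{\chi}{\sqrt2}\bigl(\|\Lmink_{\boosts\I}\|+\|\Lmink_{\transl\I}\|\bigr)v^T d\zzeta(\a)v$. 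Hence $\sig(\conj^{-1}\Lmink_1\conj,d\zzeta(\a))\le 4+\tfrac{\chi}{\sqrt2}(\|\Lmink_{\boosts\I}\|+\|\Lmink_{\transl\I}\|)$ pointwise; taking the supremum over $p$ and bounding the matrix norms by $\|\Lmink_{\boosts\I}\|_{L^\infty(\Dspcl)}$, $\|\Lmink_{\transl\I}\|_{L^\infty(\Dspcl)}$ (finite since $\Lmink_{\boosts\I},\Lmink_{\transl\I}$ are homogeneous of degree zero by \ref{item:Lprop} and $\Dspcl_{s}$ is compact), and using $\qmink=2\qminkaux\ge 2\ge\sqrt2$ from the proof of Lemma \ref{lem:qminkdelta}, the displayed identity yields \eqref{eq:ellmink}.

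For the second part the key computation is the commutator $[X_i,\a]$ for $i=1,2,3$. Since $\amink_1^\mu$ is constant, $[X_i,\a]=\amink_1^\mu[X_i,X_\mu]$, and $[X_\mu,X_\nu]=\p_{y^\mu}(\s)X_\nu-\p_{y^\nu}(\s)X_\mu$ together with $\p_{y^i}\s=\Ccomp_i$ and $\amink_1^\mu\p_{y^\mu}\s=d\zzeta(\a)$ gives $[X_i,\a]=\Ccomp_i\a-d\zzeta(\a)X_i$; thus $[X_i,\a]-\Ccomp_i\a=-d\zzeta(\a)X_i$. The coefficient of this first-order operator relative to $\Xd^0$ vanishes (as $i\ne 0$), and its coefficient relative to $\Xd^j$, $j=1,2,3$, is $-d\zzeta(\a)\delta_{ij}$. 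Substituting into Definition \ref{def:defCcom}: the left-hand side of \eqref{eq:CcomTransvdef} is identically $0$, while its right-hand side is nonnegative and is positive for suitable $w$ (as $d\zzeta(\a)$ is positive definite), so the smallest admissible constant is $\CcomTransv{\a,\Ccomp}(\zz)=0$; and the left-hand side of \eqref{eq:CcomTangdef} equals $\sum_{i=1}^3(X_iw)^T d\zzeta(\a)(X_iw)=\sum_{i=1}^3\|X_iw\|_\a^2$, which is exactly the right-hand side with constant $1$, so $\CcomTang{\a,\Ccomp}(\zz)=1$. This proves \eqref{eq:CcomT0}.

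The one delicate point is the matrix estimate in the first part: the constant $4$ is sharp, being attained in the comparison of the diagonal block $4N$ against $N$ as $v_1,v_2\to 0$, so there is no slack there, and the off-diagonal terms must be absorbed into a multiple of $v^T d\zzeta(\a)v$ that is \emph{linear} in $\|\Lmink_{\boosts\I}\|$, $\|\Lmink_{\transl\I}\|$; the weighted Young inequality together with the lower bound $N\ge\one$ accomplishes exactly this, and the generous factor $\tfrac12\qmink$ in \eqref{eq:ellmink} leaves comfortable room.
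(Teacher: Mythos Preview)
Your proof is correct and follows essentially the same approach as the paper: the divergence identity $\div_{\muW}(\a)=-3\,d\zzeta(\a)$, the block structure of $\Lmink_1$ and $d\zzeta(\a)$ from Lemma \ref{lem:aALproperties}, and the commutator computation $[X_i,\a]-\Ccomp_i\a=-d\zzeta(\a)X_i$ are exactly the ingredients the paper uses. The one minor difference is in the off-diagonal estimate for \eqref{eq:ellmink}: the paper combines $\Lmink_\chi+\tfrac12\div_{\muW}(\a)$ first (so the diagonal reads $\diag(-\tfrac12,\tfrac12,\tfrac52)\,d\zzeta(\a)$, giving the $\tfrac52$ immediately) and then bounds the off-diagonal via $\|w\|^2\le\qmink\,w^Td\zzeta(\a)w$ from the \emph{statement} \eqref{eq:dzetapos}, whereas you split off $-\tfrac32$ via $\sig(S-cP,P)=\sig(S,P)-c$ and rely on $\qmink\ge2$ extracted from the \emph{proof} of Lemma \ref{lem:qminkdelta}; both are fine, though the paper's route avoids citing an internal proof detail.
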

%---------------------------%
\begin{proof}
We abbreviate $\amink_1=\amink_1^\mu X_\mu$.
Note that, as required by \eqref{eq:ALassp}, 
the matrix $d\zzeta(\amink_1)$ is positive 
at every point on $\Dspcl_{\le1}$ by \eqref{eq:dzetapos} with $u=0$.

\eqref{eq:ellmink}:
Abbreviate $\Lmink_{\chi}=\conj^{-1}\Lmink_1\conj$.
By definition,
\begin{align*}
\LdivestNEW{\Lmink_{\chi},\amink_1}(\zz)
=
\sup_{p\in\Dspop_{e^{\zz}}}
\sig\Big(\Lmink_{\chi}|_p+\tfrac12\div_{\muW}(\amink_1)|_p\,,\,d\zzeta(\amink_1)|_p\Big)
\end{align*}
where we use \eqref{eq:levelsets}.
We claim that 
\begin{align}\label{eq:Adiv}
\div_{\muW}(\amink_1) &= -3 d\zzeta(\amink_1)
\end{align}
Proof of \eqref{eq:Adiv}:
The matrices $\amink_1^\mu$ are constant by \ref{item:aprop} of Lemma \ref{lem:aALproperties}, thus 
$\div_{\muW}(\amink_1)
=\amink_1^\mu \div_{\muW}(X_\mu)$.
Using $\muW=\s^{-4}|dy^0\wedge\cdots\wedge dy^3|$
and the Leibniz rules for the divergence,
one obtains $\div_{\muW}(X_\mu) = -3 d\zzeta(X_\mu)$, 
and thus \eqref{eq:Adiv}.
%
%%%%%%%%%%%%%%%%%%%%%%%%%%%%%%
%Long version, cleaned up
%%%%%%%%%%%%%%%%%%%%%%%%%%%%%%
%
%The matrices $\amink_1^\mu$ are constant by \ref{item:aprop} of Lemma \ref{lem:aALproperties}. Thus 
%\begin{align}\label{eq:Adivaux}
%\div_{\muW}(\amink_1)
%=
%\amink_1^\mu \div_{\muW}(X_\mu)
%\end{align}
%Recall that $\muW=\s^{-4}\mu_{y}$, 
%where we abbreviate $\mu_{y}=|dy^0\wedge\cdots\wedge dy^3|$.
%Thus
%\begin{align*}
%\div_{\muW}(X_\mu)
%&=
%X_\mu(\log (\s^{-4})) + \div_{\mu_{y}}(X_\mu)\\
%&=
%\s \p_{y^\mu}(\log (\s^{-4})) + \div_{\mu_{y}}(\s \p_{y^\mu})\\
%&=
%\s \p_{y^\mu}(\log (\s^{-4})) + \p_{y^\mu}(\s) + \s \div_{\mu_{y}}(\p_{y^\mu})
%\intertext{
%by the Leibniz rules for the divergence.
%%\div_{f\mu}(V) &= V(\log f) + \div_{\mu}(V)\\
%%\div_{\mu}(fV) &= V(f) + f\div_{\mu}(V)
%The last term vanishes, thus}
%\div_{\muW}(X_\mu)
%&=
%-3\p_{y^\mu}(\s)
%=
%-3 \s\p_{y^\mu}(\log(\s))
%=
%-3 d\zzeta(X_\mu)
%\end{align*}
%Together with \eqref{eq:Adivaux} we obtain
%\begin{align*}
%\div_{\muW}(\amink_1)
%=
%-3 \amink_1^\mu d\zzeta(X_\mu)
%=
%-3 d\zzeta(\amink_1^\mu X_\mu)
%=
%-3 d\zzeta(\amink_1)
%\end{align*}
%This proves \eqref{eq:Adiv}.
%
%%%%%%%%%%%%%%%%%%%%%%%%%%%%%%

By \ref{item:Lprop},
$$
\Lmink_{\chi}
=
\begin{pmatrix}
\one & 0 & 0\\
0 & 2\one & 0\\
0 & 0 & 4\one
\end{pmatrix}
d\zzeta(\amink_1)
+
\chi\begin{pmatrix}
0 & 0 & \Lmink_{\boosts\I}\\
0 & 0 & \Lmink_{\transl\I}\\
0 & 0 & 0
\end{pmatrix}
$$
Together with \eqref{eq:Adiv} and \eqref{eq:ablockdiag} this yields
$$
\Lmink_{\chi} + \tfrac12\div_{\muW}(\ALinM)
=
\begin{pmatrix}
-\tfrac12\one&0&0\\
0&\tfrac12\one&0\\
0&0&\tfrac52\one
\end{pmatrix} d\zzeta(\amink_1)
+
\chi\begin{pmatrix}
0 & 0 & \Lmink_{\boosts\I}\\
0 & 0 & \Lmink_{\transl\I}\\
0 & 0 & 0
\end{pmatrix}
$$
Using the definition of $\sig$ in \eqref{eq:sigdef},
for each $p\in\Dspop_{e^{\zz}}$ we have
\begin{align*}
\sig\Big(\Lmink_{\chi}|_p+\tfrac12\div_{\muW}(\amink_1)|_p\,,\,d\zzeta(\amink_1)|_p\Big)
&\le
\frac52 +  \chi\sup_{w\in\R^{\ngG_1}}
 \frac{| w_{\boosts}^T\Lmink_{\boosts\I}|_p w_{\I}| + |w_{\transl}^T\Lmink_{\transl\I}|_p w_{\I}| }{w^Td\zzeta(\amink_1)|_p w }
\end{align*}
where we decompose $w=w_{\boosts}\oplus w_{\transl}\oplus w_{\I}$ using \eqref{eq:RGdecomp}.
We have
$|w_{\boosts}^T\Lmink_{\boosts\I}|_p w_{\I}| 
\le \|\Lmink_{\boosts\I}|_p\|\|w_{\boosts}\|\|w_{\I}\|
\le \frac12\|\Lmink_{\boosts\I}|_p\|\|w\|^2$ and 
$|w_{\transl}^T\Lmink_{\transl\I}|_p w_{\I}|\le\frac12 \|\Lmink_{\transl\I}|_p\|\|w\|^2$.
Thus
\begin{align*}
\sig\Big(\Lmink_{\chi}|_p+\tfrac12\div_{\muW}(\amink_1)|_p\,,\,d\zzeta(\amink_1)|_p\Big)
&\le
\tfrac52 + \tfrac12 \chi \qmink (\|\Lmink_{\boosts\I}|_p\|+\|\Lmink_{\transl\I}|_p\|)
\end{align*}
where we also use \eqref{eq:dzetapos} with $u=0$.
This shows \eqref{eq:ellmink}.

\eqref{eq:CcomT0}:
We claim that for each $\mu=0\dots3$ one has
\begin{subequations}\label{eq:comXa}
\begin{align}
&[X_\mu,\amink_1] 
=
d\zzeta(X_\mu)\amink_1-d\zzeta(\amink_1) X_\mu
	\label{eq:Xa1}\\
&d\zzeta(X_0) = 2\;,\;\;
d\zzeta(X_1) = \Ccomp_1\;,\;\;
d\zzeta(X_2) = \Ccomp_2\;,\;\;
d\zzeta(X_3) = \Ccomp_3
	\label{eq:dzetaX}
\end{align}
\end{subequations}
Proof of \eqref{eq:Xa1}:
Since the matrices $\amink_1^\mu$ are constant by \ref{item:aprop},
one has $[X_\mu,\amink_1]=\amink_1^\nu[X_\mu,X_\nu]$.
By direct calculation
$[X_{\mu},X_\nu]
=
d\zzeta(X_\mu)X_\nu-d\zzeta(X_\nu) X_\mu$,
from which \eqref{eq:Xa1} follows.
Proof of \eqref{eq:dzetaX}:
By direct calculation using $d\zzeta(X_\mu) = \p_{y^\mu}\s$.
%
%%%%%%%%%%%%%%%%%%%%%%%%%%%%%%
%Long version, cleaned up
%%%%%%%%%%%%%%%%%%%%%%%%%%%%%%
%
%We first show that for each $\mu=0\dots3$,
%\begin{align}\label{eq:comXa}
%\begin{aligned}
%&[X_\mu,\amink_1] 
%=
%d\zzeta(X_\mu)\amink_1-d\zzeta(\amink_1) X_\mu\\
%&d\zzeta(X_0) = 2\;,\;\;
%d\zzeta(X_1) = \Ccomp_1\;,\;\;
%d\zzeta(X_2) = \Ccomp_2\;,\;\;
%d\zzeta(X_3) = \Ccomp_3
%\end{aligned}
%\end{align}
%Proof of \eqref{eq:comXa}:
%Using the fact that the $\amink_1^\mu$ are constant and the Leibniz rule,
%\begin{align*}
%[X_\mu,\amink_1] 
%&=
%\amink_1^\nu [X_{\mu},X_\nu]\\
%%
%[X_{\mu},X_\nu]
%&=
%[\s \p_{y^\mu},\s\p_{y^\nu}]
%=
%\s (\p_{y^\mu}\s)\p_{y^\nu}-\s(\p_{y^\nu}\s) \p_{y^\mu}
%=
%d\zzeta(X_\mu)X_\nu-d\zzeta(X_\nu) X_\mu
%\end{align*}
%Thus
%\begin{align*}
%[X_\mu,\amink_1] 
%=
%d\zzeta(X_\mu)\amink_1^\nu X_\nu- d\zzeta(\amink_1^\nu X_\nu) X_\mu
%=
%d\zzeta(X_\mu)\amink_1-d\zzeta(\amink_1) X_\mu
%\end{align*}
%as claimed. Furthermore
%\begin{align*}
%d\zzeta(X_\mu) 
%%=
%%X_\mu(\log(\s))
%%=
%%\tfrac{1}{\s} X_\mu(\s)
%=
%\p_{y^\mu}\s
%=
%\begin{cases}
%2 & \text{if $\mu=0$}\\
%\frac{y^\mu}{|\vec{y}|} & \text{if $\mu=1,2,3$}
%\end{cases}
%%=
%%\p_{y^i}(|\vec{y}|)
%%=
%%\frac{y^i}{|\vec{y}|}
%%= 
%%\Ccomp_i
%\end{align*}
%This proves \eqref{eq:comXa}.
%%%%%%%%%%%%%%%%%%%%%%%%%%%%%%%%%%%%%%%

The identities \eqref{eq:comXa} imply that 
for each $i=1,2,3$,
\begin{equation*}
[X_i,\amink_1]-\Ccomp_i\amink_1
=
d\zzeta(X_i)\amink_1-d\zzeta(\amink_1) X_i -\Ccomp_i\amink_1
=
-d\zzeta(\amink_1) X_i
\end{equation*}
Thus for all $w\in C^\infty(\Dspcl,{\R^{\ngG_1}})$,
the left hand sides of \eqref{eq:CcomTangdef} 
respectively \eqref{eq:CcomTransvdef} are 
\begin{align*}
-\sum_{i,j=1}^{3} (X_iw)^T \Xd^j([X_i,\amink_1]-\Ccomp_i\amink_1) X_j w
&=\sum_{i,j=1}^{3} (X_iw)^T \Xd^j(d\zzeta(\amink_1) X_i) X_j w\\
&=\sum_{i=1}^{3} (X_iw)^T d\zzeta(\amink_1) X_i w
=\sum_{i=1}^{3} \|X_iw\|^2_{\amink_1}\\
%\end{align*}
%\begin{align*}
-\sum_{i=1}^{3} (X_iw)^T \Xd^0([X_i,\amink_1]-\Ccomp_i\amink_1) X_0 w
&=
\sum_{i=1}^{3} (X_iw)^T \Xd^0(d\zzeta(\amink_1) X_i) X_0 w
=0
\end{align*}
(Here $\Xd^j=\frac{dy^j}{\s}$.)
From this \eqref{eq:CcomT0} follows.
\qed
\end{proof}
%-------------------------------%

\subsection{Main existence result}
	\label{sec:existence_i0}

We state and prove Proposition \ref{prop:ApplySHS}, 
the main result of Section \ref{sec:SpaceinfConstruction}.

Let $(\gxG(\Dspcl),\gBil)$ be the gauge in Definition \ref{def:gauges_i0}.
Denote by $\gxG(\Dspop_{\le\sfix})$ the space of sections
of $\gxG$ over $\Dspop_{\le\sfix}$, c.f.~Remark \ref{rem:gdiamond}.
We use the norms in Definition \ref{def:norms_i0}.
%--------------------------------%
\begin{prop}\label{prop:ApplySHS}
For all 
\begin{equation}\label{eq:spconst}
\NN\in\Z_{\ge6}
\qquad
\epspower\in(0,1] 
\qquad
\CinGR>0
\end{equation}
there exist $\ClargeGR>0$ and $\CsmallGR\in (0,1]$ such
for all 
\[ 
s_*\in (0,1] \qquad\qquad v\in \gx^{1}(\Dspcl_{\le s_*})
\]
the following holds.
For every $\kk\in\N$ define $\Vconst_{\kk}(v)\in[0,\infty]$ by
\begin{align}\label{eq:defVkk}
\Vconst_{\kk}(v) = 
\int_{0}^{s_*} \Big(\frac{s_*}{s}\Big)^{\frac{5}{2}+\epspower+\kk} 
\left(1+|\log(\tfrac{s_*}{s})|\right)^{\kk}
\|\dg v + \tfrac12[v,v]\|_{\sHb^{\kk}(\Dspop_{s})} \frac{ds}{s}
\end{align}
If 
\begin{enumerate}[({h}1),leftmargin=10mm]
\item \label{item:vconstraints}
$\Pconstraints(v|_{y^0=0}) = 0$,
see Definition \ref{def:Pconstraints}
\item \label{item:CN+1v}
$\|v\|_{\nosCb^{\NN+1}(\Dspop_{\le s_*})} \le \CinGR$ 
\item \label{item:vint}
$\int_0^{s_*} \|v\|_{\sCb^{1}(\Dspop_{s})} \frac{ds}{s} \le \CinGR$
\item \label{item:vsmall}
$\|v\|_{\nosCb^0(\Dspop_{\le\sfix})} \le \CsmallGR$
\item \label{item:MC(v)small}
$\Vconst_{\NN}(v) \le \CsmallGR$ 
\end{enumerate}
Then there exists $c\in \gxG^{1}(\Dspop_{\le s_*})$ such that
\begin{subequations}\label{eq:existence_v+c}
\begin{align}
\dg(v+c) + \tfrac12[v+c,v+c] &= 0 \label{eq:v+cMC}\\
%---------------------%
c|_{y^0=0} &=0 \label{eq:cdata}
\end{align}
\end{subequations}
Furthermore:
\begin{itemize}
\item 
\textbf{Part 0.} $c$ is unique.
\item \textbf{Part 1.}
For all $s\in(0,s_*]$:
\begin{align}
\|c\|_{\sHb^{\NN}(\Dspop_{s})} 
	&\le \smash{
	\ClargeGR 
	\big(\tfrac{s}{s_*}\big)^{\frac{5}{2}+\epspower+\NN}
	\Vconst_{\NN}(v) 
	}
	\label{eq:cestimates}
\end{align}
\item \textbf{Part 2.}
For every $\kk\in\Z_{\ge\NN}$ and every $\bb\new{>}0$, if
\begin{enumerate}[({h}1),leftmargin=10mm,resume]
\item \label{item:GRHigherassp12} 
$\Vconst_{\kk}(v) < \infty$ and $\Vconst_{\kk-1}(v) \le \bb$
\item 
$\|v\|_{\nosCb^{\kk+1}(\Dspop_{\le s_*})} \le \bb$ \label{item:GRHigherassp3}
\end{enumerate}
then for all $s\in(0,s_*]$:
\begin{align}\label{eq:GRHigherconcl}
\|c\|_{\sHb^{\kk}(\Dspop_{s})} 
\;
\smash{\lesssim_{\kk,\epspower,\CinGR,\bb}
\big(\tfrac{s}{s_*}\big)^{\frac{5}{2}+\epspower+\kk}} \Vconst_{\kk}(v) 
\end{align}
\end{itemize}
\end{prop}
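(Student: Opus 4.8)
The plan is to impose the gauge condition $c\in\gxG^1$ and realize the resulting necessary square subsystem \eqref{eq:necSHS} as an instance of the abstract quasilinear hyperbolic system in Theorem \ref{thm:nonlinEE}, and then to recover the full equation \eqref{eq:v+cMC} by propagating the remaining (constraint) equations \eqref{eq:constraintsprop}. By rescaling with the $\R_+$-action (which commutes with $\dg$, $[\cdot,\cdot]$ by Lemma \ref{lem:homogeneity}, preserves $\{y^0=0\}$, rescales the norms by \eqref{eq:norms_homog}, and leaves $\Pconstraints$ and the quantities $\Vconst_\kk$ scale-invariant) we may assume $s_*=1$, so that $\Dspop_{\le s_*}=\homMfd$ under Convention \ref{conv:ztXmu_NEW}; throughout we then identify norms via Remark \ref{rem:normsequality} and use the homogeneous basis \eqref{eq:gGbasis_i0} to write $\gxG^1(\homMfd)\simeq C^\infty(\homMfd,\R^{\ngG_1})$. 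Fix the conjugating matrix $\conj=\diag(\one,\one,\chi\one)$ relative to the decomposition \eqref{eq:RGdecomp} for $k=1$, where $\chi\in(0,1]$ depends only on $\epspower$ (and the fixed constants $\qmink$, $\|\Lmink_{\boosts\I}\|_{L^\infty(\Dspcl)}$, $\|\Lmink_{\transl\I}\|_{L^\infty(\Dspcl)}$) and is small enough that the right-hand side of \eqref{eq:ellmink} is $\le\frac52+\epspower$; by Lemma \ref{lem:ellMinkest} this yields $\LdivestNEW{\conj^{-1}\Lmink_1\conj,\,\amink_1^\mu X_\mu}\le\frac52+\epspower$, while $\CcomTang{\amink_1^\mu X_\mu,\Ccomp}\equiv1$ and $\CcomTransv{\amink_1^\mu X_\mu,\Ccomp}\equiv0$ for $\Ccomp=(y^1/|\vec y|,y^2/|\vec y|,y^3/|\vec y|)$. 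Setting $\tilde c=\conj^{-1}c$ and multiplying the first identity of Lemma \ref{lem:translationofeq} by $\conj^{-1}$, the subsystem \eqref{eq:necSHS} takes the form $(\AmatLin^\mu+\AmatBil^\mu(\tilde c))X_\mu\tilde c=\LMat\tilde c+\BSHS(\tilde c,\tilde c)+\Fvec$ with $\tilde c|_{\ttcoord=0}=0$, where $\AmatLin^\mu=\amink_1^\mu+\Amink_1^\mu(v)$, $\AmatBil^\mu(\tilde c)=\Amink_1^\mu(\ginj_1\conj\tilde c)$ (conjugation by $\conj$ acts trivially on these block-diagonal matrices, by Lemma \ref{lem:aALproperties}), $\LMat=\conj^{-1}(\Lmink_1-\Amink_{11}^\mu(\ginj_1\,\cdot\,)X_\mu v+\Bmink_1(v,\ginj_1\,\cdot\,))\conj$, $\BSHS(\tilde c,\tilde c)=\tfrac12\conj^{-1}\Bmink_1(\ginj_1\conj\tilde c,\ginj_1\conj\tilde c)$, and $\Fvec=\conj^{-1}\SFmink(\dg v+\tfrac12[v,v])$.

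Next I apply Theorem \ref{thm:nonlinEE} with $\nn=\ngG_1$, the above $\Ccomp$, $\Cpos=\qmink$, $\CLA$ a fixed multiple of $\max\{1,\CinGR\}$, and $\delta>0$ small (depending only on $\epspower$). Symmetry of $\AmatLin^i,\AmatBil^i$ is Lemma \ref{lem:aALproperties}; the positivity conditions \eqref{eq:Azetapos}, \eqref{eq:A0pos} and the degeneracy condition \eqref{eq:Atpos} — the structural input making the system regular along $\fnullinf$, via the exact $(1-\ttcoord)$-factor in \eqref{eq:A(1-t)} — come from Lemma \ref{lem:qminkdelta} applied to $v+\ginj_1\conj\tilde c$ (whose pointwise norm is $\le2\deltamink$ once $\CsmallGR\le\deltamink$ and $\delta$ is small), using that conjugation leaves the principal symbol unchanged; the $\CX^\NN$-bounds \ref{item:CNnormsSHS} follow from homogeneity of the coefficients (Lemma \ref{lem:comphomog}), compactness of the level sets $\Dspcl_s$, and \ref{item:CN+1v} (supplying the $\NN+1$ derivatives of $v$ entering $\LMat$). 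The condition $\int_{-\infty}^0|\CcomTransv{\AmatLin,\Ccomp}|\,d\zz\le\CLA$ reduces, by Lemma \ref{lem:ellkappaLip} and $\CcomTransv{\amink_1^\mu X_\mu,\Ccomp}\equiv0$, to \ref{item:vint}; and by the same Lipschitz estimate and \ref{item:vint} one gets $\int_{\zz_0}^{\zz_1}\LdivestNEW{\LMat,\AmatLin}\,d\zz\le(\tfrac52+\epspower)(\zz_1-\zz_0)+C\CinGR$ for $\zz_0\le\zz_1\le0$, hence $\kprop_\kk(\zz_1,\zz_0)\lesssim_{\CinGR}e^{(\frac52+\epspower+\kk)(\zz_1-\zz_0)}$. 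Substituting this into $\Fconst_\kk$ and changing variables $\zz=\log s$ gives the key comparison
\[
\Fconst_\kk(\log s)\;\lesssim_{\CinGR}\;\big(\tfrac{s}{s_*}\big)^{\frac52+\epspower+\kk}\Vconst_\kk(v),
\qquad
\|\Fvec\|_{\sHX^{\kk-1}(\homMfd_{\log s})}\;\lesssim\;\big(\tfrac{s}{s_*}\big)^{\frac52+\epspower+\kk}\Vconst_\kk(v)
\]
for $0<s\le s_*$, the second bound via Lemma \ref{lem:LinfL2toL1L2}. Thus \ref{item:MC(v)small} (with \ref{item:vint}) makes the smallness hypotheses \ref{item:Fsmall}, \ref{item:InhSmall} hold once $\CsmallGR$ is small enough, and Theorem \ref{thm:nonlinEE} produces $\tilde c\in C^\infty(\homMfd,\R^{\ngG_1})$ solving the subsystem with $\sqrt{\tilde c^T\tilde c}\le\delta$; its Parts 1--2, combined with the displayed comparison, give the estimates \eqref{eq:cestimates}, \eqref{eq:GRHigherconcl} for $c=\conj\tilde c$, and Part 0, via Theorem \ref{thm:AbstractUniqueness} applied to the difference of two solutions of \eqref{eq:existence_v+c}, gives uniqueness of $c$.

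It remains to show that $c$ solves \eqref{eq:v+cMC}. Let $U=\dg(v+c)+\tfrac12[v+c,v+c]\in\gx^2(\homMfd)$. By \eqref{eq:necSHS} one has $\gBil^1(\gxG^1(\homMfd),U)=0$, so $U\in\gxG^2(\homMfd)$ by \ref{item:gaugekernelNEW} of Lemma \ref{lem:gauge_mainprop_i0}, and $\dg U+[v+c,U]=0$ by the identities \eqref{eq:dgdifferential}, \eqref{eq:dgbracket}, \eqref{eq:bracketas}, \eqref{eq:bracketjacobi}. Since $c|_{y^0=0}=0$, the independence of $\Pconstraints$ from the extension and the time function (Lemma \ref{lem:pext}) together with \ref{item:vconstraints} give $(dy^0+\anchorg(v+c)(y^0))U|_{y^0=0}=0$; as $dy^0+\anchorg(v+c)(y^0)\in\Omegafut(\Dspcl)$ (a small perturbation of the $[\gcyl]$-timelike, future-directed $dy^0$), fiberwise injectivity of left multiplication on $\gxG^2$ (Lemma \ref{lem:gauge_mainprop_i0}) forces $U|_{y^0=0}=0$. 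Writing $\gBil^2(\,\cdot\,,\dg U+[v+c,U])=0$ in the homogeneous basis via the second identity of Lemma \ref{lem:translationofeq} exhibits a linear homogeneous symmetric hyperbolic system for $U$ whose causality is supplied by Lemma \ref{lem:qminkdelta} (with $v+c$); since $U|_{y^0=0}=0$, Theorem \ref{thm:AbstractUniqueness} (with $u_1=U$, $u_2=0$, on the cones \eqref{eq:eqcone} exhausting $\homMfd$) gives $U\equiv0$, i.e.~\eqref{eq:v+cMC}.

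\textbf{Main obstacle.} The genuinely new structural facts --- symmetric hyperbolicity of \eqref{eq:necSHS} and, above all, the precise $(1-\ttcoord)$-degeneracy of $d\ttcoord$ applied to the principal symbol, which is what makes the system regular at null infinity --- are already isolated in Lemmas \ref{lem:aALproperties}, \ref{lem:qminkdelta}, \ref{lem:ellMinkest}, so the work here is mainly bookkeeping. The delicate points are (i) choosing $\conj$ so that the weighted energy growth is governed by the exponent $\tfrac52+\epspower$ and converting the propagator-based output $\Fconst_\kk$ of Theorem \ref{thm:nonlinEE} into the $\s$-weighted quantity $\Vconst_\kk(v)$ with exactly this exponent --- this is precisely where the integrability assumptions \ref{item:vint}, \ref{item:MC(v)small} and the constraint condition \ref{item:vconstraints} are used --- and (ii) the constraint-propagation step, which, while structurally the usual Bianchi-identity-plus-uniqueness argument, must be threaded carefully through the gauge properties of Lemma \ref{lem:gauge_mainprop_i0} and the timelikeness of $dy^0+\anchorg(v+c)(y^0)$.
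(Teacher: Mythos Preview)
Your proposal is correct and follows essentially the same route as the paper's own proof: the reduction to $s_*=1$ by the $\R_+$-action, the conjugation by $\conj=\diag(\one,\one,\chi\one)$ with $\chi$ chosen so that $\LdivestNEW{\conj^{-1}\Lmink_1\conj,\amink_1^\mu X_\mu}\le\tfrac52+\epspower$, the verification of the hypotheses of Theorem~\ref{thm:nonlinEE} via Lemmas~\ref{lem:aALproperties}, \ref{lem:qminkdelta}, \ref{lem:ellMinkest}, \ref{lem:ellkappaLip}, the key comparison $\Fconst_\kk(\log s)\lesssim(s/s_*)^{\frac52+\epspower+\kk}\Vconst_\kk(v)$, and the constraint-propagation argument through $U\in\gxG^2$ and Theorem~\ref{thm:AbstractUniqueness} all match the paper's argument step by step. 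Two minor remarks: for Part~0 the paper invokes Part~0 of Theorem~\ref{thm:nonlinEE} rather than Theorem~\ref{thm:AbstractUniqueness} directly (though the latter underlies the former), and the paper uses $d\ttcoord$ rather than $dy^0$ in the constraint-propagation step (either works by Lemma~\ref{lem:pext}).
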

%-------------------------------%
The proof will use
the strategy discussed at the beginning of Section \ref{sec:reformSHS_i0}.
%-------------------------------%
\begin{proof}
It suffices to prove this for the special case $\sfix=1$.
The reduction of the general case to the $\sfix=1$ case
goes as follows.
Given $\sfix\in (0,1]$ and $v\in \gx^1(\Dspcl_{\le\sfix})$,
define
$v' = \Scalg_{\lambda} v \in \gx^1(\Dspcl_{\le1})$
where
$\lambda := 1/\sfix$,
using \eqref{eq:Scalg_i0}.
The assumptions of the proposition for $v$ in the general case
imply the assumptions of the proposition for $v'$ in the $\sfix=1$ case,
by Lemma \ref{lem:homogeneity} and homogeneity of the norms \eqref{eq:norms_homog}. 
Let $c'\in\gxG^1(\Dspop_{\le1})$ be the solution produced 
by the proposition in the $\sfix=1$ case.
Then $c = \Scalg_{1/\lambda} c'\in \gxG^1(\Dspop_{\le\sfix})$
satisfies the conclusions in the general case, 
by Lemma \ref{lem:homogeneity} and \eqref{eq:norms_homog}.

We now prove the proposition for $\sfix=1$
where we abbreviate
\[ 
\Dspop_{\le1} = \Dspop
\]
Instead of specifying $\ClargeGR$ and $\CsmallGR$ upfront,
we will make finitely many admissible largeness respectively 
smallness assumptions as we go,
where admissible means that they depend only on \eqref{eq:spconst}.
The dependencies of the constants on the maps in Definition \ref{def:SHSarrays_i0}
will not be made explicit, since they are fixed once and for all.

We first consider the following necessary subsystem of \eqref{eq:v+cMC}:
\begin{align}\label{eq:MCnecessary}
\gBil^{1}(\,\cdot\,, \dg(v+c)+\tfrac12[v+c,v+c]) = 0
\end{align}
Abbreviate $V = \dg v + \tfrac12[v,v]$.
By Lemma \ref{lem:translationofeq} the system \eqref{eq:MCnecessary} is equivalent to 
\begin{align}\label{eq:cSHS}
\begin{aligned}
\big(
\amink_1^{\mu}
+ \Amink_1^{\mu}(v)
+ \Amink_1^{\mu}(\ginj_1 c)
\big) X_{\mu}  c
&=
\Lmink_1  c - \Amink_{11}^\mu(\ginj_1 c) X_{\mu} v + \Bmink_1(v,\ginj_1 c) \\
&\quad+\tfrac12\Bmink_1(\ginj_1c,\ginj_1c) + \SFmink V
\end{aligned}
\end{align}
where the identification \eqref{eq:identify_g_vec_i0} is used.
%where $c$, $v$, $V$ are now viewed as vectors with entries in $C^\infty(\Dspop)$,
%using \eqref{eq:idgvec}.

Fix $\qmink\ge1$ and $\deltamink\in(0,1]$ as in Lemma \ref{lem:qminkdelta}. 
Define
\begin{equation}\label{eq:chi}
\chi \;=\; \frac{\epspower}{1+\qmink
\big(\|\Lmink_{\boosts\I}\|_{L^\infty(\Dspcl)}
+\|\Lmink_{\transl\I}\|_{L^\infty(\Dspcl)}\big)}
\;\in\; (0,1]
\end{equation}
where the pointwise norm in $\|{\cdot}\|_{L^\infty(\Dspcl)}$
is the $\ell^2$-matrix norm. 
The norms $\|\Lmink_{\boosts\I}\|_{L^\infty(\Dspcl)}, \|\Lmink_{\transl\I}\|_{L^\infty(\Dspcl)}$ are finite by \ref{item:Lprop}.
Define $\conj\in\End(\R^{\ngG_1})$ by
\begin{equation*}
\conj 
= 
\left(\begin{smallmatrix}
\one & 0 & 0\\
0 & \one & 0\\
0 & 0 & \chi\one
\end{smallmatrix}\right)
\end{equation*}
using the decomposition \eqref{eq:RGdecomp}.
We conjugate \eqref{eq:cSHS} with $\conj$, i.e.~we replace 
\begin{equation}\label{eq:defctilde}
c=\conj\tilde{c}
\end{equation}
and apply $\conj^{-1}$ from the left. 
One has
$\Amink_1^{\mu}(\ginj_1 \conj\tilde{c})=\Amink_1^{\mu}(\ginj_1\tilde{c})$
and 
$\Amink_{11}^{\mu}(\ginj_1 \conj\tilde{c})=\Amink_{11}^{\mu}(\ginj_1\tilde{c})$
since 
$\anchorg(0\oplus\uI\Ieps)=0$ for all $\uI\in \I(\Dspcl)$, see \eqref{eq:anchorgdef}.
Further the matrices
$\amink_1^{\mu}$,
$\Amink_1^{\mu}(v)$,
$\Amink_1^{\mu}(\ginj_1\tilde{c})$
commute with $\conj$, 
since they are block diagonal relative to \eqref{eq:RGdecomp},
by Lemma \ref{lem:aALproperties}. 
Thus we obtain the following equation for $\tilde{c}$:
\begin{equation}\label{eq:SHSforctilde}
(\AmatLin^\mu+\AmatBil^\mu(\tilde{c})) X_\mu\tilde{c} \;=\; \LMat \tilde{c} + \BSHS(\tilde{c},\tilde{c}) + \Fvec
\end{equation}
where we define  
\begin{align}\label{eq:termsofconjSHS}
\begin{aligned}
\AmatLin^\mu
&= 
\amink_1^{\mu}
+ \Amink_1^{\mu}(v)\\
\AmatBil^\mu(\cdot)
&=
\Amink_1^{\mu}(\ginj_1 \,\cdot\,)\\
\LMat
&=
\conj^{-1}\Lmink_1  \conj 
- \conj^{-1}\Amink_{11}^\mu(\ginj_1  \,\cdot\,) X_{\mu} v
+ \conj^{-1}\Bmink_1(v,\ginj_1\conj \,\cdot\,)\\
\BSHS
&=
\tfrac12\conj^{-1}\Bmink_1(\ginj_1\conj \,\cdot\,,\ginj_1\conj \,\cdot\,)\\
\Fvec
&=
\conj^{-1}\SFmink V
\end{aligned}
\end{align}
We will abbreviate $\AmatLin=\AmatLin^\mu X_\mu$ and $\amink_1=\amink_1^\mu X_\mu$.

We apply Theorem \ref{thm:nonlinEE} using Convention \ref{conv:ztXmu_NEW}, 
and with the parameters in Table \ref{tab:Application1ofAbstractTheorem}.
The equality of norms in Remark \ref{rem:normsequality} will be used without further notice.
Let $\Capply{\Clarge}$, $\epsapply\Csmall$ be the constants produced by 
Theorem \ref{thm:nonlinEE}  (called $\Clarge,\Csmall$ there).
They depend only on \eqref{eq:spconst},
in particular $\ClargeGR$ and $\CsmallGR$ are allowed to depend on $\Capply{\Clarge}$ and  $\epsapply\Csmall$.

%-------------------------------
\begin{table}
\centering
\begin{tabular}{rc|c}
	&
	\begin{tabular}{@{}c@{}}
	Parameters \\
	in Theorem \ref{thm:nonlinEE} 
	\end{tabular}
	& 
	\begin{tabular}{@{}c@{}}
	Parameters\\ 
	used to invoke Theorem \ref{thm:nonlinEE} 
	\end{tabular}
	 \\
\hline
Input
	&
	$\Mcpt, X_0,\dots,X_{\MdimNEW}, \muM$
	&
	see Convention \ref{conv:ztXmu_NEW}\\
	&$\nn$, $\NN$ 
	& $\ngG_1$, $\NN$ \\
	&$\Cpos$
	& $\qmink$ 
%	from Lemma \ref{lem:qminkdelta} 
	\\
	&$\CLA$ 
	&
	$
%	\slashed\CLA = 
	\max\{2, C_{\NN,\epspower}(1+\CinGR),\new{C_0}\CinGR \}$
%	,
%	see \eqref{eq:CLA1}, \eqref{eq:CLA2} 
	\\
	&$\delta$ 
	& $\deltamink$ 
%	from Lemma \ref{lem:qminkdelta} 
	\\
	&$\AmatLin^\mu$, $\AmatBil^\mu$, $\LMat$, $\BSHS$, $\Fvec$
	& $\AmatLin^\mu$, $\AmatBil^\mu$, $\LMat$, $\BSHS$, $\Fvec$ in \eqref{eq:termsofconjSHS}\\
	&$\Ccomp$
	& $\Ccomp=(\frac{y^1}{|\vec{y}|},\frac{y^2}{|\vec{y}|},\frac{y^3}{|\vec{y}|})$ \\
%-------------------------------%
	&$k$, $\CHigherIn$ (Part 2 only) 
	& $k$, 
	$\max\{
	C'_{\kk,\epspower}(1+\bb),
	C'_{\kk,\epspower,\CinGR}\bb
	\}
$
%, see \eqref{eq:Chigher1}, \eqref{eq:Chigher2}
\\
%-------------------------------%
\hline
%-------------------------------%
%
Output
	& $\Clarge$, $\Csmall$ 
	& $\Capply{\Clarge}$, 
	$\epsapply{\Csmall}$
\end{tabular}
\captionsetup{width=119.6mm}
\caption{%
The first column lists the input and output parameters of Theorem \ref{thm:nonlinEE}. 
The second column specifies the choice of the input parameters used to invoke
Theorem \ref{thm:nonlinEE},
in terms of the input parameters of Proposition \ref{prop:ApplySHS}
and the parameters introduced in this proof.
The output parameters produced by this invocation of
Theorem \ref{thm:nonlinEE}
are denoted $\Capply{\Clarge}$, $\epsapply{\Csmall}$,
where $\Capply{\Clarge}$ depends only on the parameters in the first four rows,
and $\epsapply{\Csmall}$ only on those in the first five rows.}
\label{tab:Application1ofAbstractTheorem}
\end{table}
%-------------------------------

We check that the assumptions of Theorem \ref{thm:nonlinEE} are satisfied.
As required $\NN\ge 6$;
\eqref{eq:termsofconjSHS} are smooth on $\Dspcl_{\le1}$
since the maps in Definition \ref{def:SHSarrays_i0} and $v$ are smooth on $\Dspcl_{\le1}$;
$\Ccomp$ is smooth on $\Dspcl_{\le1}$;
the matrices $\AmatLin^\mu$, and $\AmatBil^\mu(w)$ for every $w$, 
are symmetric by Lemma \ref{lem:aALproperties}.
We check 
\ref{item:Apos}, 
\ref{item:CNnormsSHS}, 
\ref{item:KappaTransv},
\ref{item:Fsmall},
\ref{item:InhSmall}
in this order.

\ref{item:Apos}: 
We make the admissible smallness assumption on $\CsmallGR$ that:
\begin{align}\label{eq:epssmallcausal}
\CsmallGR\le \deltamink
\end{align}
Then \ref{item:Apos} follows from \ref{item:vsmall}
and $\sqrt{v^Tv}\le \|v\|_{\nosCb^0(\Dspop)}$,
and Lemma \ref{lem:qminkdelta} with $k=1$.

\ref{item:CNnormsSHS}: 
The components of the maps in Definition \ref{def:SHSarrays_i0} 
are smooth on $\Dspcl$ and homogeneous of degree zero by Lemma \ref{lem:comphomog}.
Thus there exists a constant $C_{\NN,\epspower}>0$ 
that depends only on $\NN,\epspower$, such that 
\begin{align}
&\max\big\{
\|\AmatLin^\mu\|_{\nosCb^{\NN}(\Dspop)},
\|\AmatBil^\mu\|_{\nosCb^{\NN}(\Dspop)},
\|\LMat\|_{\nosCb^{\NN}(\Dspop)},
\|\BSHS\|_{\nosCb^{\NN}(\Dspop)}
 \big\}\nonumber \\
&\quad\le
C_{\NN,\epspower} (1 + \|v\|_{\nosCb^{\NN+1}(\Dspop)})\nonumber\\
&\quad\le
C_{\NN,\epspower}(1+\CinGR) \label{eq:matcalc}
\end{align}
where the last step holds by  \ref{item:CN+1v}.
Further $\|\Ccomp\|_{\nosCb^0(\Dspop)} \le 2$,
see Table \ref{tab:Application1ofAbstractTheorem}.
Thus \ref{item:CNnormsSHS} holds, 
using the fourth row in Table \ref{tab:Application1ofAbstractTheorem},
and the equality of norms in Remark \ref{rem:normsequality}.

\ref{item:KappaTransv}:
By \eqref{eq:CcomT0}, for all $\zz\le0$:
\begin{align*}
|\CcomTransv{\AmatLin,\Ccomp}(\zz)|
=
|\CcomTransv{\AmatLin,\Ccomp}(\zz)
-\CcomTransv{\amink_1,\Ccomp}(\zz)|
\end{align*}
We use Lemma \ref{lem:ellkappaLip}
with $\a,\tilde\a,\Ccomp,\Cpos$ there 
given by $\amink_1,\AmatLin,\Ccomp,\qmink$ here,
where \eqref{eq:posasspaatilde} is satisfied by 
\eqref{eq:Azetapos} with $w=0$ and \eqref{eq:dzetapos} with $u=0$.
With $\AmatLin-\amink_1=\Amink_1^{\mu}(v)X_\mu$,
and using $\|\Ccomp\|_{\nosCb^0(\Dspop)} \le 2$,
$\|\amink_1^\mu\|_{\nosCb^1(\Dspop)}\lesssim1$, $\|\Amink_1^{\mu}\|_{\nosCb^1(\Dspop)}\lesssim1$,
we obtain that there exists a constant $\new{C_0}>0$
(not depending on any parameters), such that 
\begin{align}\label{eq:ctdiffmink}
|\CcomTransv{\AmatLin,\Ccomp}(\zz)
-\CcomTransv{\amink_1,\Ccomp}(\zz)|
\le
\new{C_0}\|v\|_{\sCb^1(\Dspop_{e^{\zz}})}
\end{align}
Thus
\begin{align}
\tint_{-\infty}^{0}|\CcomTransv{\AmatLin,\Ccomp}(\zz)| d\zz 
\le
\new{C_0}
\tint_{-\infty}^{0}\|v\|_{\sCb^1(\Dspop_{e^{\zz}})}d\zz
=
\new{C_0}\tint_{0}^{1}\|v\|_{\sCb^1(\Dspop_{s})}\tfrac{ds}{s}
\le 
\new{C_0}\CLA
\label{eq:CLA2}
\end{align}
where we substitute $s=e^{\zz}$ and use \ref{item:vint} in the last step.
Thus \ref{item:KappaTransv} holds, using the 
fourth row in Table \ref{tab:Application1ofAbstractTheorem}.

To check \ref{item:Fsmall} and \ref{item:InhSmall}
we need some preliminaries, which will also be useful for Part 2.
We claim that for all $\kk\in\Z_{\ge0}$ and all $\zz_0\le\zz_1\le0$ and $\zz\le0$:
\begin{subequations}\label{eq:FestimatesMink}
\begin{align}
\kprop_{\kk}(\zz_1,\zz_0) &\;\lesssim_{\kk,\epspower,\CinGR}\;
	e^{(\zz_1-\zz_0)(\frac52+\epspower+\kk)}
	\label{eq:propest} \\ 
\Fconst_{\kk}(\zz) &\;\lesssim_{\kk,\epspower,\CinGR}\; 
	e^{\zz(\frac52+\epspower+\kk)}\Vconst_{\kk}(v)
	\label{eq:Fconstest} \\
\tint_{-\infty}^0 \|\Fvec\|_{\sHb^{k}(\Dspop_{e^{\zz'}})} d\zz'
	&\;\lesssim_{\kk,\epspower,\CinGR}\;
	\Vconst_{\kk}(v)
	\label{eq:intFvec}\\
\|\Fvec\|_{\sHb^{k}(\Dspop_{e^{\zz}})}
	&\;\lesssim_{\kk,\epspower,\CinGR}\;
	e^{\zz(\frac52+\epspower+(\kk+1))}\Vconst_{\kk+1}(v)
	\label{eq:FvecToInt}
\end{align}
\end{subequations}
where $\kprop_{\kk}(\zz_1,\zz_0)$, $\Fconst_{\kk}(\zz)$
are defined in \eqref{eq:kpropdef}, \eqref{eq:Fconstdef}
in Theorem \ref{thm:nonlinEE}, 
using Table \ref{tab:Application1ofAbstractTheorem}.

Proof of \eqref{eq:propest}:
Adding and subtracting yields 
\begin{align*}
\kprop_{\kk}(\zz_1,\zz_0) 
&=
\textstyle
\exp\Big(\int_{\zz_0}^{\zz_1}
(\LdivestNEW{\conj^{-1}\Lmink_1 \conj,\amink_1}(\zz') + \kk\max\{0,\CcomTang{\amink_1,\Ccomp}(\zz')\})d\zz' \Big)\\
&\quad
\textstyle
\times \exp\Big(\int_{\zz_0}^{\zz_1} 
	(
	\LdivestNEW{\LMat,\AmatLin}(\zz')
	-
	\LdivestNEW{\conj^{-1}\Lmink_1 \conj,\amink_1}(\zz')
	)
	d\zz'\Big)\\
&\quad\textstyle
\times \exp\Big(\int_{\zz_0}^{\zz_1} 
	\kk
	(\max\{0,\CcomTang{\AmatLin,\Ccomp}(\zz')\}
	-
	\max\{0,\CcomTang{\amink_1,\Ccomp}(\zz')\})
	\,d\zz'\Big)
\end{align*}
By Lemma \ref{lem:ellMinkest} 
and the choice of $\chi$ in \eqref{eq:chi},
for all $\zz\le0$:
\begin{align*}
\LdivestNEW{\conj^{-1}\Lmink_1 \conj,\amink_1}(\zz) 
&\le \tfrac52+\epspower & 
\max\{0,\CcomTang{\amink_1,\Ccomp}(\zz)\} &= 1
\end{align*}
By Lemma \ref{lem:ellkappaLip} (using the positivity 
\eqref{eq:Azetapos} and \eqref{eq:dzetapos}), for all $\zz\le0$:
\begin{align*}
|\LdivestNEW{\LMat,\AmatLin}(\zz)
-
\LdivestNEW{\conj^{-1}\Lmink_1 \conj,\amink_1}(\zz)|
&\lesssim_{\epspower}
\|v\|_{\sCb^1(\Dspop_{e^{\zz}})}\\
%%%%%
|\max\{0,\CcomTang{\AmatLin,\Ccomp}(\zz)\}-\max\{0,\CcomTang{\amink_1,\Ccomp}(\zz)\}|
&\lesssim
\|v\|_{\sCb^1(\Dspop_{e^{\zz}})}
\end{align*}
similarly to \eqref{eq:ctdiffmink}.
Therefore, using $\zz_0\le\zz_1$, 
\begin{align*}
\kprop_{\kk}(\zz_1,\zz_0)
\le&\textstyle
	\exp\Big(
	\int_{\zz_0}^{\zz_1} (\tfrac{5}{2}+\epspower + \kk)d\zz'\Big)
	\exp\Big(
	C_{\kk,\epspower}\int_{\zz_0}^{\zz_1}
	\|v\|_{\sCb^1(\Dspop_{e^{\zz'}})}
	d\zz'\Big)
\end{align*}
for a constant $C_{\kk,\epspower}>0$ that depends only on $\kk,\epspower$.
Analogously to \eqref{eq:CLA2} one obtains
that the integral over $v$ is bounded by $\CinGR$
(use \ref{item:vint}), thus \eqref{eq:propest} follows.

Proof of \eqref{eq:Fconstest}:
Using \eqref{eq:propest},
\begin{align*}
\Fconst_{\kk}(\zz)
&=
\tint_{-\infty}^{\zz}\kprop_{\kk}(\zz,\zz') 
(1+|\zz-\zz'|)^{\kk}
\|\Fvec\|_{\sHb^{\kk}(\Dspop_{e^{\zz'}})}
\,d\zz'
\\
&\lesssim_{\kk,\epspower,\CinGR}
e^{\zz(\frac52+\epspower+\kk)}
\tint_{-\infty}^{\zz}
e^{-\zz'(\frac52+\epspower+\kk)}
(1+|\zz-\zz'|)^{\kk}
\|\Fvec\|_{\sHb^{\kk}(\Dspop_{e^{\zz'}})} \,d\zz'\\
&\le
e^{\zz(\frac52+\epspower+\kk)}
\tint_{-\infty}^{0}
e^{-\zz'(\frac52+\epspower+\kk)}
(1+|\zz'|)^{\kk}
\|\Fvec\|_{\sHb^{\kk}(\Dspop_{e^{\zz'}})} \,d\zz'\\
&=
e^{\zz(\frac52+\epspower+\kk)}
\tint_{0}^{1}
s^{-(\frac52+\epspower+\kk)} (1+|\log s|)^{\kk}
\|\Fvec\|_{\sHb^{\kk}(\Dspop_{s})} \,\tfrac{ds}{s}
\end{align*}
where in the last step we substitute $s=e^{\zz'}$.
Using $\Fvec=\conj^{-1}\SFmink V$ (see \eqref{eq:termsofconjSHS}),
and the fact that the components of $\SFmink$
are constant (see Lemma \ref{lem:Binbases}), 
\begin{align*}
\|\Fvec\|_{\sHb^{\kk}(\Dspop_{s})}
&
\lesssim_{\epspower}
\|V\|_{\sHb^{\kk}(\Dspop_{s})}
=
\|\dg v + \tfrac12[v,v]\|_{\sHb^{\kk}(\Dspop_{s})}
\end{align*}
Thus \eqref{eq:Fconstest} follows. 

Proof of \eqref{eq:intFvec}: We have
\begin{align*}
\tint_{-\infty}^0 \|\Fvec\|_{\sHb^{k}(\Dspop_{e^{\zz'}})} d\zz'
=
\tint_{0}^1 \|\Fvec\|_{\sHb^{k}(\Dspop_{s})} \frac{ds}{s}
\lesssim_{\epspower}
\tint_{0}^1 \|\dg v + \tfrac12[v,v]\|_{\sHb^{k}(\Dspop_{s})} \frac{ds}{s}
\le
\Vconst_{\kk}(v) 
\end{align*}
where the last inequality uses 
$1\le
(\tfrac{1}{s})^{\frac{5}{2}+\epspower+\kk} 
(1+|\log(\tfrac{1}{s})|)^{\kk}$.

Proof of \eqref{eq:FvecToInt}:
Set $p=\frac52+\epspower+(\kk+1)$.
Note that $p>0$.
By Lemma \ref{lem:LinfL2toL1L2},
\begin{align*}
\|\Fvec\|_{\sHb^{k}(\Dspop_{e^{\zz}})}
&\lesssim_{k}
\tint_{\zz-1}^{\zz} 
\|\Fvec\|_{\sHb^{k+1}(\Dspop_{e^{\zz'}})}
d\zz'
\intertext{
For all $\zz'\in[\zz-1,\zz]$ we have 
$1\le e^{(\zz-\zz')p}(1+|\zz'|)^{\kk+1}$, hence}
\|\Fvec\|_{\sHb^{k}(\Dspop_{e^{\zz}})}
&\lesssim_{k}
\tint_{\zz-1}^{\zz} 
e^{(\zz-\zz') p}
(1+|\zz'|)^{\kk+1}
\|\Fvec\|_{\sHb^{k+1}(\Dspop_{e^{\zz'}})}
d\zz'\\
&\le
e^{\zz p}
\tint_{-\infty}^{0} 
e^{-\zz' p}
(1+|\zz'|)^{\kk+1}
\|\Fvec\|_{\sHb^{k+1}(\Dspop_{e^{\zz'}})}
d\zz'\\
&\lesssim_{\epspower}
e^{\zz p}\Vconst_{\kk+1}(v)
\end{align*}
see the proof of \eqref{eq:Fconstest} for the last step.
This proves \eqref{eq:FvecToInt}.

\ref{item:Fsmall}:
By \eqref{eq:Fconstest}, for all $\zz\le0$,
\begin{align*}
\Fconst_{\NN}(\zz)
	&\lesssim_{\NN,\epspower,\CinGR}
	e^{\zz(\frac52+\epspower+\NN)}\Vconst_{\NN}(v)
	\le
	\CsmallGR 
\intertext{
where we use $e^{\zz(\frac52+\epspower+\NN)}\le1$ and \ref{item:MC(v)small}.
Also by \eqref{eq:Fconstest},}	
\tint_{-\infty}^{0}\Fconst_{\NN}(\zz) d\zz 
	&\lesssim_{\NN,\epspower,\CinGR} 
	\Vconst_{\NN}(v) \tint_{-\infty}^{0}e^{\zz(\frac52+\epspower+\NN)} d\zz
	\le
	\CsmallGR
\end{align*}
using $\NN,\epspower\ge0$ and \ref{item:MC(v)small}.
Thus an admissible smallness assumption on $\CsmallGR$ yields
$\sup_{\zz\in(-\infty,0]}\Fconst_{\NN}(\zz)  \le \epsapply{\Csmall}$
and 
$\tint_{-\infty}^{0}\Fconst_{\NN}(\zz) d\zz \le \epsapply{\Csmall}$,
which proves \ref{item:Fsmall}.

\ref{item:InhSmall}:
For all $\zz\le0$ we have,
using \eqref{eq:FvecToInt},
$e^{\zz(\frac52+\epspower+\NN)}\le1$ and \ref{item:MC(v)small},
\begin{align*}
\|\Fvec\|_{\sHb^{\NN-1}(\Dspop_{e^{\zz}})}
	&\lesssim_{\NN,\epspower,\CinGR}
	e^{\zz(\frac52+\epspower+\NN)}\Vconst_{\NN}(v)
	\le
	\CsmallGR
\intertext{	
Thus the first inequality in \ref{item:InhSmall}
holds under an admissible smallness assumption on $\CsmallGR$.
For the second, note that by
\eqref{eq:intFvec} and then \ref{item:MC(v)small},}
\tint_{-\infty}^0 \|\Fvec\|_{\sHb^{\NN-1}(\Dspop_{e^{\zz'}})} d\zz'
	&\;\lesssim_{\NN,\epspower,\CinGR}\;
	\Vconst_{\NN-1}(v)
	\le
	\Vconst_{\NN}(v)
	\le
	\CsmallGR
	\le
	1
\end{align*}
Thus the second inequality in \ref{item:InhSmall} holds 
using the fourth row in Table \ref{tab:Application1ofAbstractTheorem}.

We have checked the assumptions
\ref{item:Fsmall},
\ref{item:InhSmall}, 
\ref{item:KappaTransv},
\ref{item:Apos}, 
\ref{item:CNnormsSHS}
of Theorem \ref{thm:nonlinEE}.

\proofheader{Proof of Part 0.}
Suppose that $c_1,c_2\in\gxG^1(\Dspop)$ satisfy \eqref{eq:existence_v+c}.
Then they satisfy \eqref{eq:MCnecessary},
and then $\tilde{c}_1=\conj^{-1}c_1$ and $\tilde{c}_2=\conj^{-1}c_2$ satisfy 
\eqref{eq:SHSforctilde}. Then Part 0 of Theorem \ref{thm:nonlinEE}
implies $\tilde{c}_1=\tilde{c}_2$, hence $c_1=c_2$.

\proofheader{Proof of existence and Part 1, 
for \eqref{eq:MCnecessary} instead of \eqref{eq:v+cMC}.}
By Theorem \ref{thm:nonlinEE} (existence, Part 0, Part 1), there exists a unique
\[ 
\tilde{c} \;\in\; C^\infty(\Dspop,\R^{\ngG_1})
\qquad 
(\text{called $u$ in Theorem \ref{thm:nonlinEE}})
\]
that satisfies \eqref{eq:SHSforctilde}, $\tilde{c}|_{y^0=0}=0$,
$\sqrt{\tilde{c}^T\tilde{c}} \le \deltamink$ on $\Dspop$, and 
%by Part 1 of Theorem \ref{thm:nonlinEE},
such that for all $\zz\le0$:
\begin{align}\label{eq:apP1t5}
\|\tilde{c}\|_{\sHb^{\NN}(\Dspop_{e^{\zz}})}
	&\le
	\Capply{\Clarge} 
	(\Fconst_{\NN}(\zz) + \|\Fvec\|_{\sHb^{\NN-1}(\Dspop_{e^{\zz}})})
	\lesssim_{\NN,\epspower,\CinGR}
	e^{\zz(\frac52+\epspower+\NN)}\Vconst_{\NN}(v)
\end{align}
More precisely, the first inequality in \eqref{eq:apP1t5} holds by 
\eqref{eq:uEAllDer} (we do not use \eqref{eq:uETangDer} here),
and for the second inequality in \eqref{eq:apP1t5} we use 
\eqref{eq:Fconstest}, \eqref{eq:FvecToInt},
and the fact that $\Capply{\Clarge}$ depends only
on $\NN,\epspower,\CinGR$.
Set 
$$c=\conj\tilde{c}$$ 
as in \eqref{eq:defctilde}. 
Then $c$ solves \eqref{eq:cSHS}, $c|_{y^0=0}= 0$ and,
using $\chi\in (0,1]$, 
\begin{align}
\sqrt{c^Tc} 
	&\le
	\sqrt{\tilde c^T\tilde c} 
	\le \deltamink 
	&&
	\text{on $\Dspop$}
	\label{eq:csmallGR}\\
\|c\|_{\sHb^{\NN}(\Dspop_{e^{\zz}})}
	&\le
	\|\tilde{c}\|_{\sHb^{\NN}(\Dspop_{e^{\zz}})}
	\lesssim_{\NN,\epspower,\CinGR}
	e^{\zz(\frac52+\epspower+\NN)}\Vconst_{\NN}(v)	
	&&\text{for $\zz\le0$}
	\nonumber
\end{align}
By choosing $\ClargeGR$ sufficiently large,
depending only on $\NN,\epspower,\CinGR$, 
and replacing $z=\log(s)$, we obtain that for all $s\in(0,1]$:
\[ 
\|c\|_{\sHb^{\NN}(\Dspop_{s})}
\le 
\ClargeGR s^{(\frac52+\epspower+\NN)} \Vconst_{\NN}(v)
\]
Viewing $c$ as an element in $\gxG^1(\Dspop)$ via \eqref{eq:identify_g_vec_i0},
it satisfies \eqref{eq:MCnecessary}, \eqref{eq:cdata}, \eqref{eq:cestimates}.

\proofheader{Proof of existence and Part 1.}
It remains to show that $c$ solves \eqref{eq:v+cMC}, 
i.e.~that the constraints propagate.
Define
\begin{align}\label{eq:Udef}
U = \dg(v+c) + \tfrac12[v+c,v+c]\;\in\;\gx^2(\Dspop)
\end{align}
Our goal is to show $U=0$.
We claim that 
\begin{subequations}\label{eq:Uprop}
\begin{align}
U \in \gxG^2(\Dspop) \label{eq:Ugauged}\\
\dg U + [v+c,U]=0 \label{eq:Ueq}\\
U|_{\ttcoord=0}=0 \label{eq:Udata}
\end{align}
\end{subequations}

Proof of \eqref{eq:Ugauged}: By \eqref{eq:MCnecessary} and \ref{item:gaugekernelNEW}.

Proof of \eqref{eq:Ueq}: 
Abbreviate $u=v+c$.
Then, expanding the definition of $U$ and using linearity of 
the differential and bilinearity of the bracket,
\begin{align*}
\dg U + [u,U]
&=
\dg\left( \dg u + \tfrac12[u,u] \right)
+
[u,\dg u + \tfrac12[u,u]]\\
&=
\dg^2 u + \tfrac12\dg[u,u] 
+
[u,\dg u]
+
\tfrac12[u,[u,u]]\\
&=
\tfrac12\dg[u,u] 
+
[u,\dg u]
\intertext{
where in the last step we use \eqref{eq:dgdifferential}
and the graded Jacobi identity \eqref{eq:bracketjacobi}.
Now the Leibniz rule \eqref{eq:dgbracket} 
and graded antisymmetry of the bracket \eqref{eq:bracketas} yield
}
\dg U + [u,U]
&=
\tfrac12[\dg u,u]
-
\tfrac12[u,\dg u]
+
[u,\dg u]
%&=
%-\tfrac12[u,\dg u]
%-
%\tfrac12[u,\dg u]
%+
%[u,\dg u ]
=0
\end{align*}

Proof of \eqref{eq:Udata}: 
Let $p\in \Dspop\cap\diamonddata$.
By \ref{item:vconstraints}, \eqref{eq:cdata}
and Lemma \ref{lem:pext},
\begin{equation}\label{eq:Uconstr}
0
= 
\big((d\ttcoord + \anchorg(v)(\ttcoord)) U\big)|_{p}
\end{equation}
We claim that 
\begin{equation}\label{eq:dtrho}
(d\ttcoord + \anchorg(v)(\ttcoord))|_p\in\Omegafut|_p
\end{equation}
with $\Omegafut|_p$ the fiber of \eqref{eq:Omegafut} at $p$.
Proof of \eqref{eq:dtrho}: 
By Definition \ref{def:SHSarrays_i0} and \eqref{eq:anchorghom},
\begin{equation}\label{eq:abbb}
d\ttcoord(\AmatLin^\mu_{ij} X_\mu)|_p
=
\gBil^1(\eGg^1_{i}, (d\ttcoord + \anchorg(v)(\ttcoord))\eGg^1_{j})|_{p}
\end{equation}
with $\AmatLin^\mu_{ij}$ the components of the matrix
$\AmatLin^\mu$ in \eqref{eq:termsofconjSHS}.
By \eqref{eq:Atpos} with $w=0$, the matrix \eqref{eq:abbb} is positive definite,
which implies \eqref{eq:dtrho} by \ref{item:gaugeposNEW}.
We can now conclude $U|_{p}=0$: 
This follows from 
\eqref{eq:Uconstr}, \eqref{eq:dtrho}, \eqref{eq:Ugauged}
and fiberwise injectivity of 
left-multiplication in Lemma \ref{lem:gauge_mainprop_i0}.
Thus \eqref{eq:Udata} holds.

We conclude $U=0$.
By \eqref{eq:Ueq} we have 
$\gBil^{2}(\,\cdot\,, \dg U + [v+c,U])=0$.
By \eqref{eq:Ugauged} and Lemma \ref{lem:translationofeq},
this is equivalent to
\begin{equation}\label{eq:Ueqmat}
\slashed{\AmatLin}^{\mu} X_\mu U
=
\slashed{\LMat}U
\end{equation}
where we use the identification \eqref{eq:identify_g_vec_i0}, 
and where we define
\begin{align}\label{eq:ConstraintMat}
\begin{aligned}
\slashed{\AmatLin}^{\mu} &= \amink_2^{\mu}+ \Amink_2^{\mu}(v+\ginj_1c)\\
\slashed{\LMat} &= \Lmink_2  \new{+} \Amink_{21}^\mu(\ginj_2\,\cdot\,) X_{\mu} (v+\ginj_1c) + \Bmink_2(v+\ginj_1c,\ginj_2\,\cdot\,) 
\end{aligned}
\end{align}
To show $U=0$ it suffices to show that 
\smash{$U|_{\cone^{\qmink}_{0,\tt_0}}=0$} for every 
$\tt_0\in (0,1)$, with 
$\cone^{\qmink}_{0,\tt_0}$ defined in \eqref{eq:eqcone}.
For this we apply Theorem \ref{thm:AbstractUniqueness} 
using Convention \ref{conv:ztXmu_NEW}, and 
with the parameters in Table \ref{tab:Application2ofAbstractTheorem}.
\begin{table} 
\centering
\begin{tabular}{c|c}
	\begin{tabular}{@{}c@{}}
	Parameters \\
	in Theorem \ref{thm:AbstractUniqueness}
	\end{tabular}
	& 
	\begin{tabular}{@{}c@{}}
	Parameters \\
	used to invoke Theorem \ref{thm:AbstractUniqueness}
	\end{tabular} \\
\hline
	$\Mcpt, X_0,\dots,X_{\MdimNEW}, \muM$
	&
	see Convention \ref{conv:ztXmu_NEW}\\
$\nn$, $\Cpos$ 
	& $\ngG_2$, $\qmink$ \\
$(\zz_0,\tt_0)$ & $(0,\tt_0)$\\
$\AmatLin^\mu$, $\AmatBil^\mu$, $\LMat$, $\BSHS$, $\Fvec$
	& $\slashed{\AmatLin}^{\mu}$, $0$, $\slashed{\LMat}$, $0$, $0$
	in \eqref{eq:ConstraintMat}\\
$u_1$, $u_2$ 
	& $0$, $U$ in \eqref{eq:Udef} 
%$u_2$ 
%	& 
\end{tabular}
\captionsetup{width=115mm}
\caption{
The first column lists the input parameters of Theorem \ref{thm:AbstractUniqueness}. 
The second column specifies the choice of the input parameters used to invoke
Theorem \ref{thm:AbstractUniqueness}.}
\label{tab:Application2ofAbstractTheorem}
\end{table}

We check that the assumptions of Theorem \ref{thm:AbstractUniqueness} are satisfied:
Clearly $\slashed{\AmatLin}^\mu$, $\slashed{\LMat}$ are smooth on $\cone^{\qmink}_{0,\tt_0}\subset\Dspop$,
and $\slashed\AmatLin^\mu$ are symmetric by Lemma \ref{lem:aALproperties}.
\eqref{eq:ueqUniq}: 
The $\ell=1$ case is clear;
the $\ell=2$ case holds by \eqref{eq:Ueqmat}.
\eqref{eq:udataUniq}:
The $\ell=1$ case is clear, the $\ell=2$ case holds by \eqref{eq:Udata}.
\eqref{eq:AzetaposUniq}, \eqref{eq:AtposUniq}:
By Lemma \ref{lem:qminkdelta} with $k=2$
and $\|v+\ginj_1c\| \le \|v\|+\|c\| \le 2\deltamink$,
using \eqref{eq:epssmallcausal} and \ref{item:vsmall} for $v$,
and \eqref{eq:csmallGR} for $c$.

We have checked that the assumptions of Theorem \ref{thm:AbstractUniqueness} hold, thus \smash{$U|_{\cone^{\qmink}_{0,\tt_0}}=0$.}
Therefore $U=0$, equivalently \eqref{eq:v+cMC} holds.

\proofheader{Proof of Part 2.}
Let $\kk\in\Z_{\ge\NN}$ and $\bb>0$
and assume that \ref{item:GRHigherassp12}, \ref{item:GRHigherassp3} hold.
We check that assumptions 
\ref{item:Higherassp12},
\ref{item:InhSmallHigher},
\ref{item:Higherassp3} 
of Part 2 in Theorem \ref{thm:nonlinEE} hold with the parameters 
in Table \ref{tab:Application1ofAbstractTheorem}.
Using \eqref{eq:FestimatesMink} and \ref{item:GRHigherassp12},
one obtains that there exists a constant 
$C'_{\kk,\epspower,\CinGR}>0$ that depends only on $\kk,\epspower,\CinGR$,
such that for all $\zz\le0$:
\begin{align*}
\Fconst_{\kk}(0) 
	&\le C'_{\kk,\epspower,\CinGR} \Vconst_{\kk}(v) < \infty\\
\Fconst_{\kk-1}(\zz) 
	&\le C'_{\kk,\epspower,\CinGR} \Vconst_{\kk-1}(v)
	\le C'_{\kk,\epspower,\CinGR} \bb\\
\tint_{-\infty}^0\Fconst_{\kk-1}(\zz')d\zz'
	&\le
	C'_{\kk,\epspower,\CinGR} \Vconst_{\kk-1}(v) 
	\le 
	C'_{\kk,\epspower,\CinGR} \bb\\
\|\Fvec\|_{\sHb^{k-2}(\Dspop_{e^{\zz}})}
	&\le
	C'_{\kk,\epspower,\CinGR}\Vconst_{\kk-1}(v) 
	\le 
	C'_{\kk,\epspower,\CinGR} \bb\\
\tint_{-\infty}^0 \|\Fvec\|_{\sHb^{k-2}(\Dspop_{e^{\zz'}})} d\zz'
	&\le
	C'_{\kk,\epspower,\CinGR} \Vconst_{\kk-2}(v) 
	\le C'_{\kk,\epspower,\CinGR} \Vconst_{\kk-1}(v) 
	\le C'_{\kk,\epspower,\CinGR} \bb
\end{align*}
This implies \ref{item:Higherassp12}, \ref{item:InhSmallHigher},
using the second last row in Table \ref{tab:Application1ofAbstractTheorem}.
Analogously to \eqref{eq:matcalc}, and using \ref{item:GRHigherassp3},
one checks that there exists a constant 
$C'_{\kk,\epspower}>0$ that depends only on $\kk,\epspower$,
such that 
\begin{align*}
&\max\big\{
\|\AmatLin^\mu\|_{\nosCb^{\kk}(\Dspop)},
\|\AmatBil^\mu\|_{\nosCb^{\kk}(\Dspop)},
\|\LMat\|_{\nosCb^{\kk}(\Dspop)},
\|\BSHS\|_{\nosCb^{\kk}(\Dspop)} \big\}
	\le
	C'_{\kk,\epspower}(1+\bb)
\end{align*}
This implies \ref{item:Higherassp3},
using the second last row in Table \ref{tab:Application1ofAbstractTheorem}.

We have checked that the assumptions of Part 2 in Theorem \ref{thm:nonlinEE} hold.
Hence \eqref{eq:EkkestX} holds (we do not use \eqref{eq:Ekkest} here), 
that is, for all $\zz\le0$:
\begin{align*}
\|\tilde c\|_{\sHb^{\kk}(\Dspop_{e^{\zz}})}
\lesssim_{\kk,\epspower,\CinGR,\bb}
\Fconst_{\kk}(\zz) + \|\Fvec\|_{\sHb^{\kk-1}(\Dspop_{e^{\zz}})}
\lesssim_{\kk,\epspower,\CinGR}\; 
e^{\zz(\frac52+\epspower+\kk)}\Vconst_{\kk}(v)
\end{align*}
where for the second inequality we use 
\eqref{eq:Fconstest}, \eqref{eq:FvecToInt}. 
Using 
$\|c\|_{\sHb^{\kk}(\Dspop_{e^{\zz}})}
\le \|\tilde c\|_{\sHb^{\kk}(\Dspop_{e^{\zz}})}$,
and replacing $\zz=\log(s)$ with $s\in(0,1]$, one obtains \eqref{eq:GRHigherconcl}.
\qed
\end{proof}
%----------------------------------%

\section{Construction away from spacelike infinity}
\label{sec:bulk}

Assume that $v$ is an element in $\gx^1(\diamond_+)$ that solves
the Einstein equations \eqref{eq:MC} near spacelike infinity.
The main result of this section is Proposition \ref{prop:MainCpt},
where we prove existence of an element $c$, such that 
$v+c$ is a solution of \eqref{eq:MC} globally on $\diamond_+$.
Upon gauge fixing, the equation for $c$ is 
quasilinear symmetric hyperbolic including along null and timelike infinity,
and by finite speed of propagation the solution $c$ vanishes near
spacelike infinity.
Thus this is a problem on a compact domain, 
simpler and more routine
than the problem considered in Section \ref{sec:SpaceinfConstruction}.

Proposition \ref{prop:MainCpt} will be used in the proof of Theorem \ref{thm:main},
to construct $u$ away from $\spaceinf$
(after having constructed the solution near $\spaceinf$
using Proposition \ref{prop:ApplySHS}).

Section \ref{sec:bulk} is organized as follows.
In Section \ref{sec:Foliation} we define auxiliary subsets of $\diamond_+$,
useful for energy estimates;
in Section \ref{sec:basis_bulk}, \ref{sec:norms_bulk} 
we fix bases and norms;
in Section \ref{sec:Gauge_cyl} we fix a gauge and show
that the relevant equations are symmetric hyperbolic in this gauge;
in Section \ref{sec:existence_bulk} we state and prove
Proposition \ref{prop:MainCpt}.
%-----------------------------%
\begin{remark}\label{rem:local5}
Some definitions in 
this section are labeled 
'local to Section  \ref{sec:bulk}'
by which we mean that they are only valid in Section \ref{sec:bulk}.
See also Remark \ref{rem:local4}.
\end{remark}
%-----------------------------%

\subsection{Spacelike exhaustion of $\diamond_+$}
\label{sec:Foliation}

As preparation for the energy estimates, we define several subsets of $\diamond_+$. 
In particular we define an exhaustion of $\diamond_+$, given 
by subsets whose boundary is spacelike for the
conformal background metric $[\gcyl]$, see Lemma \ref{lem:Dexhaustion}.

%-----------------------------%
\begin{definition}\label{def:diamonds,stau}
For $s>0$ and $\tau\in[0,\pi)$ define 
\begin{align*}
\diamond_{s} &= \diamond_+\setminus \Dspop_{\new{<}\frac{s}{6}}\\
\diamond_{\tau,s} &= (\{\tau\}\times S^3)\cap \diamond_{s}
\end{align*}
where $\Dspop_{\new{<}\frac{s}{6}}$ is defined in \eqref{eq:deltas}.
The factor $\frac16$ is for later convenience.
\end{definition}
%-----------------------------%
Recall from Section \ref{sec:geomconfcpt} that
$\cyl$ is given by all points 
$(\tau,\xi)$ with $\tau\in\R$ and $\xi\in\R^{4}$ with $|\xi|=1$.
Further $\timeinf=(\pi,(0,0,0,-1))$, $\spaceinf=(0,(0,0,0,1))$.
Define
\begin{equation}\label{eq:phidef}
\fctfol \;=\; \left(\frac{1-\cos\tau}{1-\xi^4}\right)^{\frac12}
	\;\in\; 
	C^\infty\big(\overline\diamond_+ \setminus\spaceinf\big) 
\end{equation}
This is indeed smooth including along $\tau=0$ since
$1-\cos\tau=2\sin(\frac{\tau}{2})^2$.
%---------------------------------%
Define
\begin{equation}\label{eq:Phidef}
\Phi = d\fctfol(\V_0) - \sqrt{\tsum_{i=1}^3|d\fctfol(\V_i)|^2}
\end{equation}
using $\V_{0},\dots,\V_3$ in \eqref{eq:defV}.
This is smooth away from $\xi^4=-1$, and continuous along $\xi^4=-1$
(one has $d\fctfol(\V_i)=\xi^i\frac{\fctfol}{2(1-\xi^4)}$ 
which vanishes along $\xi^4=-1$).
%---------------------------------%
\begin{lemma}\label{lem:dphiproperties}
On $\diamond_+$ one has
\begin{equation}\label{eq:Phi/h}
\frac{\Phi}{\nullgen} 
	= \frac{1}{\sqrt{2}(1-\xi^4)}\frac{1}{\sqrt{1-\xi^4}\cos(\frac\tau2)+\sqrt{1+\xi^4}\sin(\frac\tau2)}
	\;>\;0
\end{equation}
where $\nullgen=\cos(\tau)-\xi^4$, see \eqref{eq:nullgendef}.
In particular, $d\fctfol$ is future directed ($d\fctfol(\p_{\tau})>0$)
and timelike relative to $[\gcyl]$ on $\diamond_+$.
Furthermore, for each $\mu,\nu=0\dots3$:
\begin{equation}\label{eq:KilBounds}
|d\fctfol(B^{\mu\nu})| \le 3\Phi
\qquad
|d\fctfol(T_\mu)| \le 3\Phi
\qquad
\text{on $\diamond_+$}
\end{equation}
using the boosts and translations \eqref{eq:Kilbas}.
\end{lemma}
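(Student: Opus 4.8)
The plan is to turn everything into a direct computation in the coordinates $(\tau,\xi)$ on $\cyl$, using that $\fctfol$ depends on $S^3$ only through the single function $\xi^4$.

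First I would record a closed form for $\fctfol$: since $1-\cos\tau=2\sin^2(\tfrac\tau2)$ and $\tau\in[0,\pi)$ on $\diamond_+$ (so $\sin(\tfrac\tau2)\ge0$), while $\xi^4<\cos\tau\le 1$ on $\diamond$ (so $1-\xi^4>0$), one has $\fctfol=\sqrt2\,\sin(\tfrac\tau2)\,(1-\xi^4)^{-1/2}$ on $\diamond_+$. Next I evaluate $d\fctfol$ against the frame \eqref{eq:defV}. Since $\V_0=\p_\tau$, $\V_i(\tau)=0$ and $\V_i(\xi^4)=\xi^i$ for $i=1,2,3$ (read off directly from \eqref{eq:defV}), the chain rule gives $d\fctfol(\V_0)=\tfrac{1}{\sqrt2}\cos(\tfrac\tau2)(1-\xi^4)^{-1/2}$ and $d\fctfol(\V_i)=\tfrac{\xi^i}{\sqrt2}\sin(\tfrac\tau2)(1-\xi^4)^{-3/2}$. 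Using $(\xi^1)^2+(\xi^2)^2+(\xi^3)^2=1-(\xi^4)^2=(1-\xi^4)(1+\xi^4)$ I then get $\sum_{i=1}^3|d\fctfol(\V_i)|^2=\tfrac{1+\xi^4}{2}\sin^2(\tfrac\tau2)(1-\xi^4)^{-2}$, hence $\bigl(\sum_{i=1}^3|d\fctfol(\V_i)|^2\bigr)^{1/2}=\tfrac{1}{\sqrt2}\sqrt{1+\xi^4}\,\sin(\tfrac\tau2)(1-\xi^4)^{-1}$ (all factors nonnegative), and therefore from \eqref{eq:Phidef}
\[
\Phi=\frac{1}{\sqrt2\,(1-\xi^4)}\bigl(\sqrt{1-\xi^4}\cos(\tfrac\tau2)-\sqrt{1+\xi^4}\sin(\tfrac\tau2)\bigr).
\]
The key algebraic identity is the factorization $\nullgen=\cos\tau-\xi^4=\bigl(\sqrt{1-\xi^4}\cos(\tfrac\tau2)-\sqrt{1+\xi^4}\sin(\tfrac\tau2)\bigr)\bigl(\sqrt{1-\xi^4}\cos(\tfrac\tau2)+\sqrt{1+\xi^4}\sin(\tfrac\tau2)\bigr)$, which I verify by expanding the right-hand side via $\cos\tau=\cos^2(\tfrac\tau2)-\sin^2(\tfrac\tau2)$ and $\cos^2+\sin^2=1$. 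Dividing $\Phi$ by $\nullgen$ cancels the common first factor and yields \eqref{eq:Phi/h}. Positivity on $\diamond_+$ then follows: $1-\xi^4>0$, the "$+$" factor is $\ge\sqrt{1-\xi^4}\cos(\tfrac\tau2)>0$ since $\cos(\tfrac\tau2)>0$ and $\sin(\tfrac\tau2),\,1+\xi^4\ge0$, and since the product of the two factors is $\nullgen>0$ the "$-$" factor is positive as well, so $\Phi>0$ and $\Phi/\nullgen>0$. Future-directedness is just $d\fctfol(\p_\tau)=d\fctfol(\V_0)>0$; timelikeness relative to $[\gcyl]$ follows from $\gcyl^{-1}(d\fctfol,d\fctfol)=-d\fctfol(\V_0)^2+\sum_{i=1}^3 d\fctfol(\V_i)^2=-\Phi\bigl(d\fctfol(\V_0)+(\sum_{i=1}^3 d\fctfol(\V_i)^2)^{1/2}\bigr)<0$, using $\Phi>0$ together with $d\fctfol(\V_0)+(\sum_{i=1}^3 d\fctfol(\V_i)^2)^{1/2}=\tfrac{1}{\sqrt2(1-\xi^4)}\bigl(\sqrt{1-\xi^4}\cos(\tfrac\tau2)+\sqrt{1+\xi^4}\sin(\tfrac\tau2)\bigr)>0$.

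For the Killing-field bounds \eqref{eq:KilBounds} the same chain rule gives, for any $\KilEl\in\Kil$, $d\fctfol(\KilEl)=\KilEl(\tau)\tfrac{\cos(\tau/2)}{\sqrt2\sqrt{1-\xi^4}}+\KilEl(\xi^4)\tfrac{\sin(\tau/2)}{\sqrt2(1-\xi^4)^{3/2}}$, so it suffices to control $\KilEl(\tau)$ and $\KilEl(\xi^4)$. I would compute $B^{\mu\nu}(\tau),B^{\mu\nu}(\xi^4),T_\mu(\tau),T_\mu(\xi^4)$ explicitly by direct calculation, using the boost/translation formulas \eqref{eq:TB} together with the coordinate change \eqref{eq:xxcoords} (equivalently using \eqref{eq:TBy}), exactly as in the computation of Remark \ref{rem:Kil(h)}; in particular the spatial rotations $B^{ij}$ ($i,j\in\{1,2,3\}$) annihilate both $\tau$ and $\xi^4$, so $d\fctfol(B^{ij})=0$, and one is left with the seven one-forms $d\fctfol(B^{0i}),d\fctfol(T_\mu)$. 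One may also substitute $\fctfol^2=1-\tfrac{\nullgen}{1-\xi^4}$ and use $\KilEl(\nullgen)/\nullgen$ from Remark \ref{rem:Kil(h)} to rewrite $d\fctfol(\KilEl)/\nullgen$ in a form parallel to $\Phi/\nullgen$. The estimate $|d\fctfol(\KilEl)|\le3\Phi$ must then be checked as a pointwise inequality between the explicit trigonometric expressions, not as a crude absolute bound, because $\Phi$ degenerates: it vanishes on $\fnullinf$ (where $\fctfol\equiv1$) and the relevant ratios also degenerate as one approaches $\spaceinf$ and $\timeinf$. The main obstacle is precisely this last verification — showing that the cancellation built into $\KilEl(\tau),\KilEl(\xi^4)$ forces $d\fctfol(\KilEl)$ to be dominated by $3\Phi$ uniformly up to $\fnullinf\cup\timeinf$ and near $\spaceinf$ — and the slack factor $3$ (rather than $1$) is what keeps the bookkeeping in that check manageable; everything else is routine trigonometric algebra.
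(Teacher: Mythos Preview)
Your treatment of \eqref{eq:Phi/h} is correct and is exactly what the paper means by ``direct calculation'': the half-angle rewrite of $\fctfol$, the evaluation of $d\fctfol$ on the frame $\V_\mu$, and the factorization of $\nullgen$ are the natural steps, and your positivity and timelikeness arguments are fine.

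For \eqref{eq:KilBounds} there is a gap: you set up the chain-rule expression for $d\fctfol(\KilEl)$ but stop short of the verification, calling it ``the main obstacle'' and worrying about degeneration near $\fnullinf$, $\spaceinf$, $\timeinf$. In fact there is no obstacle. The paper's point is that the \emph{ratio} $d\fctfol(\KilEl)/\Phi$ simplifies dramatically: using your own expressions for $d\fctfol(\V_0)$ and $\sum_i d\fctfol(\V_i)^2$ together with the Killing-field formulas one gets, with $\xivec=(\sum_{i=1}^3(\xi^i)^2)^{1/2}$,
\[
\tfrac{d\fctfol(B^{0i})}{\Phi}=\tfrac12\bigl(\xi^i(1+\cos\tau)+\tfrac{\xi^i}{\xivec}(1+\xi^4)\sin\tau\bigr),\qquad
\tfrac{d\fctfol(B^{ij})}{\Phi}=0,
\]
\[
\tfrac{d\fctfol(T_0)}{\Phi}=\tfrac12\bigl((1+\cos\tau)(1-\xi^4)+\xivec\sin\tau\bigr),\qquad
\tfrac{d\fctfol(T_i)}{\Phi}=-\tfrac12\bigl(\tfrac{\xi^i}{\xivec}(1-\cos\tau)(1+\xi^4)+\xi^i\sin\tau\bigr).
\]
Each of these is a smooth bounded function on $\overline{\diamond}_+\setminus\spaceinf$ (the apparent $\xivec^{-1}$ is harmless since it multiplies $\xi^i$), and the crude bounds $|\xi^i|\le1$, $|\xi^i/\xivec|\le1$, $|1\pm\xi^4|\le2$, $|1\pm\cos\tau|\le2$, $|\sin\tau|\le1$ give $|d\fctfol(\KilEl)/\Phi|\le\tfrac52<3$ in every case. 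So the degeneration you anticipated is exactly cancelled by the zeros of $d\fctfol(\KilEl)$, and the whole thing is a two-line estimate once the ratios are written down; no delicate near-boundary analysis is needed.
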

%----------------------------%
\begin{proof}
One obtains \eqref{eq:Phi/h} by direct calculation.
We check \eqref{eq:KilBounds}:
Abbreviate $\xivec = (\sum_{i=1}^3 (\xi^i)^2)^{\frac12}$.
By direct calculation, for $i,j=1,2,3$:
\begin{align*}
\tfrac{d\fctfol(B^{0i})}{\Phi}
	&=\textstyle
	\frac12\big(
	\xi^i(1+\cos\tau)
	+
	\frac{\xi^i}{\xivec}(1+\xi^4)\sin\tau
	\big)\\
\tfrac{d\fctfol(B^{ij})}{\Phi}
	&=0\\
\tfrac{d\fctfol(T^{0})}{\Phi}
	&=\textstyle
	\frac12\big(  
	(1+\cos\tau)(1-\xi^4)+\xivec\sin\tau
	\big)\\
\tfrac{d\fctfol(T^{i})}{\Phi}
	&=\textstyle
	-\frac12
	\big(
	\frac{\xi^i}{\xivec}(1-\cos\tau)(1+\xi^4)
	+\xi^i \sin\tau
	\big)
\end{align*}
Each term is bounded
above and below by $3$, thus the claim follows.\qed
\end{proof}
%----------------------------%
For $\tau_0\in(0,\pi)$ define the following auxiliary subsets of $\diamond_+$:
\begin{align*}
	\diamondaux^{\tau_0} 
	&= 
	\{ p\in \diamond_+ 
	\mid 
	\fctfol(p) \mathbin{\new{\le}} (\tfrac{1-\cos\tau_0}{2})^{\frac12}  \}\\
	\nullinfaux^{\tau_0} 
	&= \{ p\in \diamond_+ 
	\mid \fctfol(p) = (\tfrac{1-\cos\tau_0}{2})^{\frac12}  \}
\end{align*}
%----------------------------%
\begin{definition}\label{def:diamondtau*}
For $\tau_*\in(0,\pi)$ define 
\begin{align*}
	\diamond^{\tau_*} 
	&=
	\diamondaux^{\frac12(\tau_*+\pi)}  \cap ([0,\tau_*]\times S^3)\\
	\nullinf^{\tau_*} 
	&=
	\nullinfaux^{\frac12(\tau_*+\pi)}\cap ([0,\tau_*]\times S^3) 
\end{align*} 
\end{definition}
%----------------------------%
See Figure \ref{fig:Foliation}.
Note that 
$\overline{\diamond^{\tau_*}} = \diamond^{\tau_*} \cup\spaceinf$.
By Lemma \ref{lem:dphiproperties}, 
the lateral boundary component $\nullinf^{\tau_*}$ of $\diamond^{\tau_*}$
is spacelike relative to $[\gcyl]$.
Furthermore the sets $\diamond^{\tau_*}$ are an exhaustion of
$\diamond_+$, in the following sense.
%----------------------------%
\begin{figure}
\centering
\begin{tikzpicture}[inner sep=0pt,scale=0.93]
\def\mycoordinatesround{(-4.,0.) (-3.9,0.095) (-3.8,0.19) (-3.7,0.285) (-3.6,0.38) (-3.5,0.475) (-3.4,0.57) (-3.3,0.665) (-3.2,0.76) (-3.1,0.854) (-3.,0.949) (-2.9,1.043) (-2.8,1.137) (-2.7,1.231) (-2.6,1.324) (-2.5,1.418) (-2.4,1.511) (-2.3,1.603) (-2.2,1.695) (-2.1,1.787) (-2.,1.878) (-1.9,1.969) (-1.8,2.059) (-1.7,2.148) (-1.6,2.236) (-1.5,2.323) (-1.4,2.408) (-1.3,2.492) (-1.2,2.575) (-1.1,2.655) (-1.,2.733) (-0.9,2.807) (-0.8,2.878) (-0.7,2.945) (-0.6,3.006) (-0.5,3.061) (-0.4,3.109) (-0.3,3.147) (-0.2,3.176) (-0.1,3.194) (0.,3.2) (0.1,3.194) (0.2,3.176) (0.3,3.147) (0.4,3.109) (0.5,3.061) (0.6,3.006) (0.7,2.945) (0.8,2.878) (0.9,2.807) (1.,2.733) (1.1,2.655) (1.2,2.575) (1.3,2.492) (1.4,2.408) (1.5,2.323) (1.6,2.236) (1.7,2.148) (1.8,2.059) (1.9,1.969) (2.,1.878) (2.1,1.787) (2.2,1.695) (2.3,1.603) (2.4,1.511) (2.5,1.418) (2.6,1.324) (2.7,1.231) (2.8,1.137) (2.9,1.043) (3.,0.949) (3.1,0.854) (3.2,0.76) (3.3,0.665) (3.4,0.57) (3.5,0.475) (3.6,0.38) (3.7,0.285) (3.8,0.19) (3.9,0.095) (4.,0.)};
\def\mycoordinatesedgeA{(-4.,0.) (-3.9,0.095) (-3.8,0.19) (-3.7,0.285) (-3.6,0.38) (-3.5,0.475) (-3.4,0.57) (-3.3,0.665) (-3.2,0.76) (-3.1,0.854) (-3.,0.949) (-2.9,1.043) (-2.8,1.137) (-2.7,1.231) (-2.6,1.324) (-2.5,1.418) (-2.4,1.511) (-2.3,1.603) (-2.2,1.695) (-2.1,1.787) (-2.,1.878) (-1.9,1.969) (-1.8,2.059) (-1.7,2.148) (-1.6,2.236) (-1.5,2.323) (-1.40967,2.4)} ;
\def\mycoordinatesedgeB {
(-1.40967,2.4) (1.40967,2.4)
};

\def\mycoordinatesedgeC{(1.40967,2.4) (1.41,2.4) (1.51,2.314) (1.61,2.227) (1.71,2.139) (1.81,2.05) (1.91,1.96) (2.01,1.869) (2.11,1.778) (2.21,1.686) (2.31,1.594) (2.41,1.502) (2.51,1.409) (2.61,1.315) (2.71,1.222) (2.81,1.128) (2.91,1.034) (3.01,0.939) (3.11,0.845) (3.21,0.75) (3.31,0.656) (3.41,0.561) (3.51,0.466) (3.61,0.371) (3.71,0.276) (3.81,0.181) (3.91,0.086) (4,0)} ;
\node (tip) at (0,4) {}; % 
\node (l0) at (-4,0) {}; % 
\node (r0) at (4,0) {}; % 

\draw[line width=1.5pt,ufogreen] (l0.center)--(tip.center);
\draw[line width=1.5pt,ufogreen] (r0.center)--(tip.center);
\fill[gray!60] plot[smooth] coordinates {\mycoordinatesround};
\fill[pattern={Lines[angle=45,distance=5pt]},pattern color=black]
	plot[smooth] coordinates {
	\mycoordinatesedgeA \mycoordinatesedgeB \mycoordinatesedgeC};
\draw [very thick,color=gray] plot [smooth] coordinates {\mycoordinatesround}; 
\draw [very thick,black] plot [smooth] coordinates {\mycoordinatesedgeA}; 
\draw [very thick,black] plot [smooth] coordinates {\mycoordinatesedgeB}; 
\draw [very thick,black] plot [smooth] coordinates {\mycoordinatesedgeC}; 
\draw[->] (-4.5,-0.2) -- (-4.5,4.5) node[anchor=south east,yshift=1mm] {$\tau$};
  % Add ticks at every 1 unit interval
  \foreach \y in {0,2.4,3.2,4}
    \draw (-4.6,\y) -- (-4.4,\y);
	\node[left,thick] at (-4.65,0) {$0$};
	\node[left,thick] at (-4.65,2.4) {$\frac{3\pi}{5}$};
	\node[left,thick] at (-4.65,3.2) {$\frac{4\pi}{5}$};
	\node[left,thick] at (-4.65,4) {$\pi$};
	
\node[anchor=south west,xshift=-3mm,yshift=1mm] at (l0) {$\spaceinf$};
\node[anchor=south east,xshift=3mm,yshift=1mm] at (r0) {$\spaceinf$};
\draw[->] (-4.65,0) -- (5,0) node[anchor=north,xshift=2mm,yshift=-1.1mm] {$\arccos\xi^4$};
  % Add ticks at every 1 unit interval
  \foreach \x in {-4,0,4}
    \draw (\x,-0.1) -- (\x,0.1);
	\node[below,thick] at (-4,-0.2) {$0$};
	\node[below,thick] at (0,-0.2) {$\pi$};
	\node[below,thick] at (4,-0.2) {$0$};
\node[anchor=south west,yshift=1mm] at (2,2) {$\fnullinf$};
\draw[color=black,thick, fill=white] (l0) circle (.05);
\draw[color=black,thick, fill=white] (tip) circle (.05);
\draw[color=black,thick, fill=white] (r0) circle (.05);
\node[anchor=south,yshift=1mm] at (tip) {$\timeinf$};

 \draw[line width=1.5pt, black] (-3.95,0) -- (3.95,0);
\end{tikzpicture}
\captionsetup{width=115mm}
\caption{
Depicted is a cross-section of $\diamond_+$,
using $\tau$ and $\arccos\xi^4$ as coordinates.
The gray shaded region depicts the subset $\smash{\diamondaux}^{\frac{4\pi}{5}}$, 
its upper boundary component is $\smash{\nullinfaux}^{\frac{4\pi}{5}}$.
The hatched region depicts the subset $\smash{\diamond}^{\frac{3\pi}{5}}$,
its lateral boundary component is $\smash{\nullinf}^{\frac{3\pi}{5}}$.}
\label{fig:Foliation}
\end{figure}
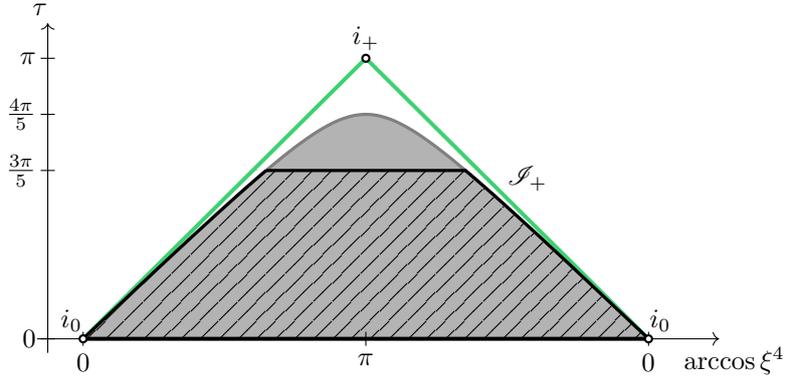
%-----------------------------%
\begin{lemma}\label{lem:Dexhaustion}
One has 
$$
\textstyle
\bigcup_{\tau_*\in[\frac\pi2,\pi)} \diamond^{\tau_*}
=
\diamond_+$$
and for all $0< \tau_0<\tau_1< \pi$ one has 
$\diamond^{\tau_0}\subsetneq\diamond^{\tau_1}\subsetneq \diamond_+$.
\end{lemma}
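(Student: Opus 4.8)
\textbf{Proof plan for Lemma \ref{lem:Dexhaustion}.}

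The plan is to reduce both claims to elementary properties of the function $\fctfol$ in \eqref{eq:phidef}. First I would record the shape of $\fctfol$: on $\overline\diamond_+\setminus\spaceinf$ it is smooth, nonnegative, vanishes exactly on the initial slice $\{\tau=0\}$ (since $1-\cos\tau=2\sin^2(\tau/2)$), equals $1$ exactly on $\fnullinf$ (where $\cos\tau=\xi^4$), and takes the value $(\tfrac{1-\cos\tau_*}{2})^{1/2}<1$ on the hypersurface $\nullinfaux^{\tau_*}$. The key monotonicity input is \eqref{eq:Phi/h} of Lemma \ref{lem:dphiproperties}, which gives $d\fctfol(\p_\tau)=\Phi + (\text{terms})>0$; more directly, along a fixed spatial point $\xi$ with $\xi^4<1$, the map $\tau\mapsto\fctfol(\tau,\xi)=\big(\tfrac{1-\cos\tau}{1-\xi^4}\big)^{1/2}$ is strictly increasing on $[0,\pi)$ because $1-\cos\tau$ is. This is the fact I would use repeatedly.

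For the exhaustion identity, the inclusion $\diamond^{\tau_*}\subset\diamond_+$ is immediate from the definition. For the reverse: given $p=(\tau_0,\xi_0)\in\diamond_+$, I need $\tau_*\in[\tfrac\pi2,\pi)$ with $\tau_0\le\tau_*$ and $\fctfol(p)\le(\tfrac{1-\cos(\frac{\tau_*+\pi}{2})}{2})^{1/2}$. Since $p\in\diamond_+$, one has $0\le\tau_0<\pi$ and $\nullgen(p)=\cos\tau_0-\xi_0^4>0$, hence $\fctfol(p)<1$. As $\tau_*\uparrow\pi$, the quantity $\tfrac{\tau_*+\pi}{2}\uparrow\pi$ and so $(\tfrac{1-\cos(\frac{\tau_*+\pi}{2})}{2})^{1/2}\uparrow1$; therefore for $\tau_*$ close enough to $\pi$ (and at least $\max\{\tfrac\pi2,\tau_0\}$) both conditions hold, so $p\in\diamond^{\tau_*}$. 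This proves $\bigcup_{\tau_*\in[\frac\pi2,\pi)}\diamond^{\tau_*}=\diamond_+$.

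For the strict nesting $\diamond^{\tau_0}\subsetneq\diamond^{\tau_1}\subsetneq\diamond_+$ when $0<\tau_0<\tau_1<\pi$: the inclusion $\diamond^{\tau_0}\subset\diamond^{\tau_1}$ follows because $[0,\tau_0]\times S^3\subset[0,\tau_1]\times S^3$ and $\diamondaux^{\frac{\tau_0+\pi}{2}}\subset\diamondaux^{\frac{\tau_1+\pi}{2}}$, the latter since the threshold $(\tfrac{1-\cos s}{2})^{1/2}$ is increasing in $s$ and $\tfrac{\tau_0+\pi}{2}<\tfrac{\tau_1+\pi}{2}$. Strictness of $\diamond^{\tau_0}\subsetneq\diamond^{\tau_1}$: pick a point on the slice $\{\tau=\tau_1\}$ inside $\diamond_+$ that lies in $\diamondaux^{\frac{\tau_1+\pi}{2}}$ — e.g. take $\xi$ with $\xi^4$ close to $1$, so $\fctfol(\tau_1,\xi)$ is small and in particular below the $\tau_1$-threshold; such a point is in $\diamond^{\tau_1}$ but not in $\diamond^{\tau_0}$ because its $\tau$-coordinate exceeds $\tau_0$. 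Strictness of $\diamond^{\tau_1}\subsetneq\diamond_+$: take any point with $\tau>\tau_1$ in $\diamond_+$ (nonempty since $\tau_1<\pi$), which is excluded from $\diamond^{\tau_1}$ by the constraint $\tau\le\tau_1$. The only mild care needed — and the one place I would slow down — is checking that the candidate points I exhibit actually lie in $\diamond_+$, i.e.\ satisfy $\cos\tau-\xi^4>0$; choosing $\xi^4\uparrow1$ makes $\fctfol$ small while keeping $\nullgen>0$ as long as $\tau$ is bounded away from $\pi$, which is guaranteed here since $\tau_1<\pi$. This completes the proof.
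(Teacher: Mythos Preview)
Your overall strategy is correct and far more detailed than the paper's proof (which is simply ``By construction''). However, there is a genuine error in your strictness argument for $\diamond^{\tau_0}\subsetneq\diamond^{\tau_1}$: you propose taking $\xi^4$ close to $1$ to make $\fctfol(\tau_1,\xi)$ small, but this has the direction backwards. Since $\fctfol(\tau_1,\xi)=\big(\tfrac{1-\cos\tau_1}{1-\xi^4}\big)^{1/2}$ and the denominator $1-\xi^4\to 0$ as $\xi^4\uparrow 1$, one gets $\fctfol\to\infty$, not $\fctfol\to 0$. Moreover, for such $\xi$ the point $(\tau_1,\xi)$ typically fails to lie in $\diamond_+$: the condition $\nullgen=\cos\tau_1-\xi^4>0$ forces $\xi^4<\cos\tau_1$, which for $\tau_1>\tfrac\pi2$ means $\xi^4<0$, ruling out $\xi^4$ near $1$ entirely.

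The fix is simple: take $\xi^4=-1$ instead, i.e.\ the point $(\tau_1,(0,0,0,-1))$. Then $\nullgen=\cos\tau_1+1>0$ for all $\tau_1<\pi$, so the point is in $\diamond_+$; and $\fctfol(\tau_1,(0,0,0,-1))=\big(\tfrac{1-\cos\tau_1}{2}\big)^{1/2}<\big(\tfrac{1-\cos((\tau_1+\pi)/2)}{2}\big)^{1/2}$ since $\tau_1<\tfrac{\tau_1+\pi}{2}<\pi$ and cosine is strictly decreasing on $[0,\pi]$. Hence this point lies in $\diamond^{\tau_1}$ but not in $\diamond^{\tau_0}$ (its $\tau$-coordinate exceeds $\tau_0$). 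The rest of your argument is fine.
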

%-----------------------------%
\begin{proof}
By construction.\qed
\end{proof}
%-----------------------------%
Using Definition \ref{def:diamonds,stau}, 
for $s>0$ and $\tau_*\in(0,\pi)$
define
\begin{subequations}\label{eq:dsttNEW}
\begin{align}
\diamond^{\tau_*}_{s} 
	&=
	\diamond_{s}\cap\diamond^{\tau_*}
	\label{eq:dtaus}\\
\diamond^{\tau_*}_{\tau,s} 
	&= \diamond_{\tau,s}\cap\diamond^{\tau_*}
	&&\text{for $\tau\in[0,\tau_*]$}
	\label{eq:dtautaus}\\
\diamond^{\tau_*}_{\le\tau,s} 
	&= 
	\diamond^{\tau_*}_{s}\cap ([0,\tau] \times S^3)
	\;=\;
	\textstyle
	\bigcup_{\tau'\in[0,\tau]} \diamond^{\tau_*}_{\tau',s} 
	&&
	\text{for $\tau\in(0,\tau_*]$}
	\label{eq:dtauletaus}
\end{align}
\end{subequations}
The sets $\diamond^{\tau_*}_{\le\tau,s}$ are closed, 
contained in $\diamond_+$,
and one has $\diamond^{\tau_*}_{\le\tau_*,s}=\diamond^{\tau_*}_{s}$.
The sets $\diamond^{\tau_*}_{\tau,s}$ are non-empty
and diffeomorphic to a \new{closed} three-dimensional Euclidean ball,
see also Lemma \ref{lem:lowerboundballsNEW}.
%The sets $\diamond^{\tau_*}_{\tau,s}$
They are an exhaustion of $\diamond_{\tau,s}$, in the following sense.
%-----------------------------%
\begin{lemma}\label{lem:Dtauexhaustion}
For every $s>0$ and $\tau\in[0,\pi)$ one has
\begin{align*}
\textstyle
\bigcup_{\tau_*\in\new{(}\tau,\pi)}\diamond^{\tau_*}_{\tau,s}
&\;=\;
\diamond_{\tau,s} 
\end{align*}
where for all $\tau<\tau_0\le \tau_1 < \pi$ one has
$\diamond^{\tau_0}_{\tau,s}
\subset
\diamond^{\tau_1}_{\tau,s}
\subset \diamond_{\tau,s}$.
\end{lemma}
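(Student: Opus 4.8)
Looking at this statement, it's Lemma \ref{lem:Dtauexhaustion}, which claims that the sets $\diamond^{\tau_*}_{\tau,s}$ exhaust $\diamond_{\tau,s}$ as $\tau_* \uparrow \pi$, with monotonicity in $\tau_*$.

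The plan is to prove this directly from the definitions in \eqref{eq:dsttNEW}, \eqref{eq:Dtaus}, together with the already-established exhaustion property for the full sets $\diamond^{\tau_*}$ in Lemma \ref{lem:Dexhaustion}.

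First I would unwind the definition: by \eqref{eq:dtautaus} one has $\diamond^{\tau_*}_{\tau,s} = \diamond_{\tau,s} \cap \diamond^{\tau_*}$ for $\tau \in [0,\tau_*]$, so $\diamond^{\tau_*}_{\tau,s} \subset \diamond_{\tau,s}$ is immediate. For the monotonicity, if $\tau < \tau_0 \le \tau_1 < \pi$, then it suffices to show $\diamond^{\tau_0} \subset \diamond^{\tau_1}$; this is precisely the containment $\diamond^{\tau_0} \subsetneq \diamond^{\tau_1}$ from Lemma \ref{lem:Dexhaustion} (intersecting both sides with $\diamond_{\tau,s}$ then gives $\diamond^{\tau_0}_{\tau,s} \subset \diamond^{\tau_1}_{\tau,s}$, and both are contained in $\diamond_{\tau,s}$). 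So the only real content is the union statement $\bigcup_{\tau_* \in (\tau,\pi)} \diamond^{\tau_*}_{\tau,s} = \diamond_{\tau,s}$.

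For the union, the inclusion $\subset$ is already covered by the previous paragraph. For $\supset$, let $p \in \diamond_{\tau,s}$. Then $p \in \diamond_+$, and $p$ lies on the time slice $\{\tau\}\times S^3$ (by Definition \ref{def:diamonds,stau}, $\diamond_{\tau,s} = (\{\tau\}\times S^3)\cap \diamond_s$). By Lemma \ref{lem:Dexhaustion}, since $\diamond_+ = \bigcup_{\tau_* \in [\pi/2,\pi)} \diamond^{\tau_*}$, there exists $\tau_* \in [\pi/2,\pi)$ with $p \in \diamond^{\tau_*}$. The point to check is that one may take $\tau_* > \tau$: indeed, recalling $\diamond^{\tau_*} = \diamondaux^{\frac12(\tau_*+\pi)} \cap ([0,\tau_*]\times S^3)$ from Definition \ref{def:diamondtau*}, membership $p \in \diamond^{\tau_*}$ forces the $\tau$-coordinate of $p$ to be $\le \tau_*$, i.e.\ $\tau \le \tau_*$; if $\tau = \tau_*$ one can slightly increase $\tau_*$ using monotonicity of the exhaustion (the sets $\diamond^{\tau_*}$ are increasing in $\tau_*$, and $p$ is in the interior in the relevant sense since its $\tau$-coordinate equals $\tau < \pi$). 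Combining, $p \in \diamond^{\tau_*}$ with $\tau_* > \tau$, and since also $p \in \diamond_{\tau,s}$, we get $p \in \diamond_{\tau,s}\cap\diamond^{\tau_*} = \diamond^{\tau_*}_{\tau,s}$, which is in the union.

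I expect the only mildly delicate point to be the strictness $\tau_* > \tau$ (as opposed to $\tau_* \ge \tau$): one must argue that a point on the slice $\tau < \pi$ which lies in some $\diamond^{\tau_*}$ actually lies in $\diamond^{\tau_*'}$ for some $\tau_*' \in (\tau,\pi)$. This follows because $\diamondaux^{\frac12(\tau_*+\pi)}$ is defined by a non-strict inequality $\fctfol(p) \le (\tfrac{1-\cos\tau_*}{2})^{1/2}$ wait—by $\le (\tfrac{1-\cos\frac12(\tau_*+\pi)}{2})^{1/2}$, which is strictly increasing in $\tau_*$, so if $p$ satisfies it for $\tau_*$ it satisfies it (strictly) for any slightly larger value; and the constraint $\tau\text{-coordinate} \le \tau_*$ is likewise preserved under increasing $\tau_*$. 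Everything else is a direct bookkeeping of the definitions, so the proof is short; I would simply write ``By construction and Lemma \ref{lem:Dexhaustion}'' filling in these observations.

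\begin{proof}
The inclusion $\diamond^{\tau_*}_{\tau,s}\subset\diamond_{\tau,s}$ for $\tau\in(\tau,\pi)$ is immediate from \eqref{eq:dtautaus}, so $\bigcup_{\tau_*\in(\tau,\pi)}\diamond^{\tau_*}_{\tau,s}\subset\diamond_{\tau,s}$.

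For the reverse inclusion let $p\in\diamond_{\tau,s}$. Then $p\in\diamond_+$ and the $\tau$-coordinate of $p$ equals $\tau<\pi$. By Lemma \ref{lem:Dexhaustion} there is $\tau_0\in[\tfrac\pi2,\pi)$ with $p\in\diamond^{\tau_0}$. By Definition \ref{def:diamondtau*}, $\diamond^{\tau_0}\subset[0,\tau_0]\times S^3$, so $\tau\le\tau_0$. Since the defining inequality of $\diamondaux^{\frac12(\tau_*+\pi)}$ (see the definition preceding Definition \ref{def:diamondtau*}) is non-strict and its right-hand side is strictly increasing in $\tau_*$, and since the time-slice constraint $\{\text{$\tau$-coordinate}\}\le\tau_*$ is preserved when $\tau_*$ increases, membership $p\in\diamond^{\tau_0}$ implies $p\in\diamond^{\tau_*}$ for every $\tau_*\in[\tau_0,\pi)$; in particular we may pick $\tau_*\in(\tau,\pi)$ with $p\in\diamond^{\tau_*}$ (taking $\tau_*=\tau_0$ if $\tau_0>\tau$, and any $\tau_*\in(\tau,\pi)$ otherwise, using $\tau<\pi$). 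Since also $p\in\diamond_{\tau,s}$ we obtain $p\in\diamond_{\tau,s}\cap\diamond^{\tau_*}=\diamond^{\tau_*}_{\tau,s}$ by \eqref{eq:dtautaus}. This proves $\diamond_{\tau,s}\subset\bigcup_{\tau_*\in(\tau,\pi)}\diamond^{\tau_*}_{\tau,s}$.

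Finally, let $\tau<\tau_0\le\tau_1<\pi$. By Lemma \ref{lem:Dexhaustion} we have $\diamond^{\tau_0}\subset\diamond^{\tau_1}$. Intersecting with $\diamond_{\tau,s}$ and using \eqref{eq:dtautaus} gives $\diamond^{\tau_0}_{\tau,s}\subset\diamond^{\tau_1}_{\tau,s}$, and $\diamond^{\tau_1}_{\tau,s}\subset\diamond_{\tau,s}$ was shown above. This completes the proof.\qed
\end{proof}
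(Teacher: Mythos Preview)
Your proof is correct and is essentially the same approach as the paper's, which simply says ``By construction.'' You have unpacked the definitions and invoked Lemma \ref{lem:Dexhaustion} in exactly the way the paper leaves implicit; the only minor slip is the typo ``for $\tau\in(\tau,\pi)$'' in the first line of your formal proof, which should read ``for $\tau_*\in(\tau,\pi)$''.
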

%--------------------------------%
\begin{proof}
By construction.\qed
\end{proof}
%--------------------------------%
%The next lemma will be used for uniformity in Sobolev estimates.
\begin{lemma}\label{lem:lowerboundballsNEW}
For all 
$s\in(0,1]$ and 
$\taufix\in[\frac\pi2,\pi)$ and
$\tau_0\in[0,\frac\pi2]$, 
there exists $r\in [-\frac12,1)$ such that 
\begin{equation}\label{eq:diamondBall}
\diamond_{\tau_0,s}^{\taufix}
\;=\;
\{\tau_0\}
\times 
\big\{  
(\xi^1,\xi^2,\xi^3,\xi^4)\in S^3 
\mid
-1\le\xi^4 \le r
\big\}
\end{equation}
\end{lemma}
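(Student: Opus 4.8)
The plan is to unpack the definitions of $\diamond_{\tau_0,s}^{\taufix}$ and reduce the claim to an elementary statement about the function $\xi^4$ on the $\tau=\tau_0$ slice. By Definition \ref{def:diamondtau*} and \eqref{eq:dtautaus}, one has
\[
\diamond_{\tau_0,s}^{\taufix}
=
(\{\tau_0\}\times S^3)\cap \diamond_s \cap \diamondaux^{\frac12(\taufix+\pi)},
\]
where I have used $\tau_0\le\frac\pi2\le\taufix<\pi$ so that the truncation $[0,\taufix]\times S^3$ in Definition \ref{def:diamondtau*} is harmless at time $\tau_0$. Thus a point $(\tau_0,\xi)\in\{\tau_0\}\times S^3$ lies in $\diamond_{\tau_0,s}^{\taufix}$ iff the following three conditions hold: (i) $(\tau_0,\xi)\in\diamond_+$, i.e.\ $\nullgen(\tau_0,\xi)=\cos\tau_0-\xi^4>0$, equivalently $\xi^4<\cos\tau_0$; (ii) $(\tau_0,\xi)\notin\Dspop_{<s/6}$, i.e.\ either $\s\ge s/6$ or the point is not in the neighborhood $\diamondy$ of $\spaceinf$; (iii) $\fctfol(\tau_0,\xi)\le(\frac{1-\cos\frac12(\taufix+\pi)}{2})^{1/2}$, which by \eqref{eq:phidef} reads $\frac{1-\cos\tau_0}{1-\xi^4}\le \frac{1-\cos\frac12(\taufix+\pi)}{2}$. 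The point is that each of (i)–(iii) restricts $\xi^4$ to a sub-interval of $[-1,1)$ while imposing no constraint on the direction $(\xi^1,\xi^2,\xi^3)/\xivec$, because $\nullgen$, $\s$, and $\fctfol$ on the slice $\tau=\tau_0$ all depend on $\xi$ only through $\xi^4$ (for $\s$ note $|\vec y|^2 = \frac{1-(\xi^4)^2}{\ynullgen^2}$ and $y^0=\frac{\sin\tau_0}{\ynullgen}$ with $\ynullgen=\cos\tau_0+\xi^4$, so $\s$ is a function of $\xi^4$ alone at fixed $\tau_0$). Hence the set $\{\xi^4 : (\tau_0,\xi)\in\diamond_{\tau_0,s}^{\taufix}\}$ is an intersection of intervals, hence an interval $[-1,r]$ or $[-1,r)$ for some $r$, and since the constraints are closed except for (i) one checks the endpoint is attained, giving the closed form in \eqref{eq:diamondBall}. (If the set were empty one would take $r=-\frac12$ vacuously, but I would note it is in fact nonempty since $\xi=(0,0,0,-1)$ satisfies (i)–(iii) as $\fctfol\to$ a finite value and $\s$ stays bounded below there.)

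Next I would extract the explicit bound $r\in[-\frac12,1)$. The upper cutoff $r<1$ is immediate from (i): $\xi^4<\cos\tau_0\le 1$, and in fact $\xi^4<\cos\tau_0$, but combined with (iii) the binding constraint is $1-\xi^4\ge \frac{2(1-\cos\tau_0)}{1-\cos\frac12(\taufix+\pi)}$; since $\tau_0\le\frac\pi2$ and $\frac12(\taufix+\pi)\ge\frac{3\pi}{4}$, the right side is a fixed positive number, giving $r\le\cos\tau_0 - \text{(positive)}<1$. For the lower bound $r\ge-\frac12$: I must show that every $\xi$ with $\xi^4\le-\frac12$ and $(\tau_0,\xi)\in\diamond_+$ satisfies (ii) and (iii). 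Condition (iii) holds because $1-\xi^4\ge\frac32$ while the right side of the $\fctfol$-inequality, using $\tau_0\le\frac\pi2$ (so $1-\cos\tau_0\le1$) and $\taufix<\pi$, is bounded by a constant $\le\frac32$ — here I would verify the numerical inequality $\frac{2(1-\cos\frac\pi2)}{1-\cos\frac{3\pi}{4}}=\frac{2}{1+\frac{1}{\sqrt2}}\le\frac32$, i.e.\ $2\le\frac32+\frac{3}{2\sqrt2}$, which is true. Condition (ii) holds because when $\xi^4\le-\frac12$ (so $\xi$ is bounded away from $(0,0,0,1)$) the point lies outside the small neighborhood $\diamondy$ of $\spaceinf$ where $\diamondy$ is defined by $\ynullgen=\cos\tau_0+\xi^4>0$; with $\xi^4\le-\frac12$ and $\tau_0\le\frac\pi2$ one has $\ynullgen\le \cos\tau_0 - \frac12\le\frac12$, so either $\ynullgen\le 0$ (point not in $\diamondy$, condition (ii) automatic) or a short computation with the formulas \eqref{eq:yycoords} shows $\s=2y^0+|\vec y|\ge s/6$ in the relevant range — I would record this as the estimate that on the $\tau=\tau_0$ slice $\s$ is bounded below by an explicit function of $\xi^4$ that exceeds $\frac16$ once $\xi^4\le-\frac12$, $s\le1$.

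The main obstacle is the verification of (ii), i.e.\ keeping careful track of the small neighborhood $\diamondy$ of spacelike infinity and of the function $\s$ near $\xi^4=-1$, where the $y$-coordinates degenerate. The cleanest route is: on $\{\tau=\tau_0\}\cap\diamondy$ express $\s$ purely in terms of $\xi^4$ (using $\ynullgen = \cos\tau_0+\xi^4$, $|\vec y| = \frac{(1-(\xi^4)^2)^{1/2}}{\ynullgen}$, $y^0 = \frac{\sin\tau_0}{\ynullgen}$), obtain a monotone expression, and show it is $\ge\frac16\ge s/6$ for all $\xi^4\in[-1,-\frac12]$ with $\tau_0\in[0,\frac\pi2]$. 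Everything else is routine interval bookkeeping. I would present the argument by first establishing the "depends only on $\xi^4$" structure (so the answer set is an interval in $\xi^4$), then proving the two-sided bound on its right endpoint via the three explicit inequalities above, and finally noting closedness of the endpoint; the rotational invariance in $(\xi^1,\xi^2,\xi^3)$ is what produces the stated ball description $\{-1\le\xi^4\le r\}$ in $S^3$.
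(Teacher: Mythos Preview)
Your approach is correct and structurally the same as the paper's: both identify that the three constraints (i)--(iii) depend only on $\xi^4$ at fixed $\tau_0$, hence cut out an interval in $\xi^4$, and both verify the numerical inequality for the $\fctfol$-constraint (iii) at the worst-case parameters $\tau_0=\tfrac\pi2$, $\taufix=\tfrac\pi2$. The one substantive difference is your treatment of (ii): you flag the explicit lower bound $\s\ge\tfrac16$ on the region $\xi^4\le-\tfrac12$ as ``the main obstacle'' and plan a direct computation of $\s$ in terms of $\xi^4$. The paper sidesteps this entirely by invoking Remark~\ref{rem:Dstau}, which gives $\xi^4\in[\tfrac{1-s^2}{1+s^2},1)$ on $\Dspcl_{\le s}$; since $s\le1$ this forces $\Dspop_{<s/6}\subset\{\xi^4>0\}$, so restricting to $\xi^4\le 0$ makes (ii) automatic with no computation. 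Your route would also work, but the paper's observation is cleaner and avoids the case analysis on the sign of $\ynullgen$ that you anticipate.
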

%--------------------------------%
\begin{proof}
Clearly there exists 
$r\in [-1,1)$ such that \eqref{eq:diamondBall} holds.
We show $-\frac12\le r$.
%for all points $(\tau,\xi)$ on $\Dspop_{<\frac{s}{6}}$ 
%one has $0<\xi^4\le1$.
Using $\Dspop_{<\frac{s}{6}}
\subset \{ (\tau,\xi)\in\cyl \mid 0 < \xi^4 \le 1 \}$
by $s\le1$ and by Remark \ref{rem:Dstau},
%%%%%%%%%%%%%%%%%%%%%%
%s <= 1
%2y0 + ry <= 1
%2 sin[t] + rxi <= cos[t]+xi4
%rxi <= 1 + xi4
%sqrt[1-xi4^2] <= 1+xi4
%sqrt[(1+xi4)(1-xi4)] <= 1+xi4
%sqrt[1-xi4] <= sqrt[1+xi4]
%1-xi4 <= 1+xi4
%0 <= 2xi4
%0 <= xi4
%%%%%%%%%%%%%%%%%%%%%%
\begin{align*}
\diamond^{\taufix}_{\tau_0,s} 
&\supseteq \textstyle
\{\tau_0\}\times 
\{ \xi\in S^3 \mid 
\tfrac{1-\cos\tau_0}{1-\xi^4} \le \tfrac{1-\cos(\frac12(\taufix+\pi))}{2},\;
-1 \le \xi^4 \le 0 \}\\
&= \textstyle
\{\tau_0\}\times 
\{ \xi\in S^3 \mid 
-1 \le \xi^4 \le \min\{0
,
1-\frac{2(1-\cos(\tau_0))}{1-\cos(\frac12(\taufix+\pi))}
\} \}
\end{align*}
One has
\[ 
\textstyle
1-2\tfrac{1-\cos(\tau_0)}{1-\cos(\frac12(\taufix+\pi))}
\ge
1-2\tfrac{1-\cos(\frac\pi2)}{1-\cos(\frac\pi4)}
\ge
-\frac12
\]
using $\taufix\in[\frac\pi2,\pi)$ and $\tau_0\in[0,\frac\pi2]$,
and from this the claim follows.
\qed
\end{proof}

\subsection{Bases}
\label{sec:basis_bulk}

We fix a global $C^\infty$-basis of $\gx(\cyl)$.
Recall the positively oriented frame of vector fields 
$\V_0,\V_1,\V_2,\V_3$ in \eqref{eq:defV},
and the dual frame of one-forms $\Vd^0,\Vd^1,\Vd^2,\Vd^3$,
where $\V_0=\p_{\tau}$ and $\Vd^0=d\tau$.
%-----------------------------%
The following definition parallels Definition \ref{def:elements_i0}.
%-----------------------------%
\begin{definition}\label{def:elements_bulk}
This definition is local to Section \ref{sec:bulk}, see Remark \ref{rem:local5}.
Define the numbers 
$\ngG_k^{\Omega}$,
$\ngG_k^{\I}$
$\ngG_k$,
$\ng_k^{\Omega}$,
$\ng_k^{\I}$, 
$\ng_k$ 
as in \eqref{eq:nm_i0}.

\begin{itemize}
\item 
For $k=0\dots4$ define 
$(\eGO_i^k)_{i=1\dots\ngG_k^{\Omega}},
(\eO_i^k)_{i=1\dots\ng_k^{\Omega}}
\in\Omega^k(\cyl)$ by:
\begin{align*}
\eGO_1^0 &= 1\nonumber\\
\eGO_1^1 &= \Vd^1,\;\eGO_2^1 =\Vd^2,\;\eGO_3^1 =\Vd^3\nonumber\\
\eGO_1^2 &=\Vd^1\wedge\Vd^2,\;
    \eGO_2^2=\Vd^2\wedge\Vd^3,\;
    \eGO_3^2=\Vd^3\wedge\Vd^1\nonumber\\
\eGO_1^3 &=  \Vd^1\wedge\Vd^2\wedge\Vd^3
\\
(\eO^k_i)_{i=1\dots\ng^\Omega_k}&:\; 
	\eGO^k_1,\eGO^k_2,\dots,
	d\tau\wedge\eGO^{k-1}_1,d\tau\wedge\eGO^{k-1}_2,\dots
\end{align*}
\item 
Let $\cyclind=\{(123),(231),(312)\}$ be the cyclic index set.
For $(abc)\in\cyclind$ let
$\Vdpm^a = \frac12(\Vd^0\wedge\Vd^a\pm i\Vd^b\wedge\Vd^c)\in \Omega^2_{\pm}(\cyl)$. 
Define $h_1,\dots,h_5$ exactly as in \eqref{eq:traceless_symmetric_matrices}.
Define the following elements of $\I^2(\cyl)$ respectively $\I^3(\cyl)$:\footnote{
The formula for $\eGI_j^3$ here differs from the analogous 
formula in Definition \ref{def:elements_i0} by a sign,
this is because the basis $dy^0,\dots,dy^3$ is negatively oriented
and $\Vd^0,\dots,\Vd^3$ is positively oriented.}
\begin{align*}
	(\eGI_j^2)_{j=1\dots10}:\ & 
	\mu_{\gcyl}^{-1} \otimes
	(\tsum_{p,q=1}^3(h_\ell)_{pq} \Vdp^p\otimes \Vdp^q)
	\oplus
	cc, \\
	&\mu_{\gcyl}^{-1} \otimes
		(\tsum_{p,q=1}^3(ih_\ell)_{pq}\Vdp^p\otimes \Vdp^q)
		\oplus
		cc\ 
	\\
	(\eGI_j^3)_{j=1\dots6}:\ & 
	\tfrac{1}{2\sqrt{3}}\mu_{\gcyl}^{-1}\otimes\left( 
	2\Vd^1\Vd^2\Vd^3\otimes\Vdp^a
	\new{-} i \Vd^0\Vd^a(\Vd^b\otimes\Vdp^b + \Vd^c\otimes\Vdp^c) \right)
	\oplus cc,\\
	&
	i\tfrac{1}{2\sqrt{3}} \mu_{\gcyl}^{-1}\otimes\left( 
		2\Vd^1\Vd^2\Vd^3\otimes\Vdp^a
		\new{-} i \Vd^0\Vd^a(\Vd^b\otimes\Vdp^b + \Vd^c\otimes\Vdp^c) \right)
		\oplus cc
\end{align*}
where the index $\ell$ used for $(\eGI_j^2)$ runs over $1\dots5$,
the index $(abc)$ used for $(\eGI_j^3)$ runs over $\cyclind$,
and where we use notation analogous to Definition \ref{def:elements_i0}.
For $k=2,3,4$ define the following elements in $\I^k(\cyl)$:
\begin{align}\label{eq:Ibasis_cyl}
(\eI^k_j)_{j=1\dots\ng^{\I}_k}:
	\;\;\eGI_1^k,\eGI_2^k,\dots,d\tau\eGI_1^{k-1},d\tau\eGI_2^{k-1},\dots
\end{align}
where we use the module multiplication in Definition \ref{def:Imod}.
\item 
For $k=0\dots4$
define the following elements of $\gx^k(\cyl)$:
\begin{align}
(\eGg^k_i)_{i=1\dots\ngG_k}:\;\;&
(\eGO^k_1\otimes \KilBasis_\ell)\oplus0\Ieps,\, 
(\eGO^k_2\otimes \KilBasis_\ell)\oplus0\Ieps,\dots,\nonumber\\
&\ 0\oplus\eGI_1^{k+1}\Ieps,\, 0\oplus\eGI_2^{k+1}\Ieps,\dots\label{eq:gGbasis_bulk}\\
(\eg^k_i)_{i=1\dots\ng_k}:\;\;& 
\eGg^k_1,\, \eGg^k_2,\dots, d\tau \eGg^{k-1}_1,\,d\tau\eGg^{k-1}_2,\dots
	\label{eq:gbasis_bulk}
\end{align}
where $\ell$ runs over $1\dots10$, 
where 
$\KilBasis_1,\dots,\KilBasis_{10}$ is the basis of $\Kil$ in \eqref{eq:Kilbas},
and where we use the module multiplication \eqref{eq:gmod}.
\end{itemize}
\end{definition}
%-----------------------------%
The following lemma parallels Lemma \ref{lem:bases_i0}.
%-----------------------------%
\begin{lemma}\label{lem:bases_bulk}
Using the elements in Definition \ref{def:elements_bulk}, 
for $k=0\dots4$ one has:
\[ 
\def\arraystretch{1.}
\begin{tabular}{c|ccccccccc}
Module & 
$\Omega^k(\cyl)$ &
$\I^k(\cyl)$ &
$\gx^k(\cyl)$ \\
Rank &
$\ng_k^{\Omega}$ &
$\ng_k^{\I}$ &
$\ng_k$ \\
Basis &
$(\eO^k_i)_{i=1\dots \ng_k^{\Omega}}$ & 
$(\eI^k_i)_{i=1\dots \ng_k^{\I}}$ & 
$(\eg^k_i)_{i=1\dots \ng_k}$ 
\end{tabular} 
\]
\end{lemma}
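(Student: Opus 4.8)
The plan is to verify, exactly as for Lemma~\ref{lem:bases_i0}, that each of the three listed families is pointwise linearly independent at every point of $\cyl$ and has cardinality equal to the $C^\infty(\cyl)$-rank of the module it is meant to span. Since $S^3$, hence $\cyl$, is parallelizable, all three modules are free of the ranks already recorded in Definition~\ref{def:It} and in the discussion around \eqref{eq:gx1Dintro} — namely $\Omega^k(\cyl)$ of rank $\binom{4}{k}=\ng_k^{\Omega}$, the module $\I^k(\cyl)$ of rank $\ng_k^{\I}$ (with $\ng_2^{\I},\ng_3^{\I},\ng_4^{\I}=10,16,6$), and $\gx^k(\cyl)=\lx^k(\cyl)\oplus\I^{k+1}(\cyl)=(\Omega^k(\cyl)\otimesRR\Kil)\oplus\I^{k+1}(\cyl)$ of rank $\ng_k=10\,\ng_k^{\Omega}+\ng_{k+1}^{\I}$. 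Because every element in Definition~\ref{def:elements_bulk} is written out explicitly in terms of the global frame $\V_0,\dots,\V_3$ and its dual coframe $\Vd^0=d\tau,\Vd^1,\Vd^2,\Vd^3$, of the nowhere-vanishing density $\mu_{\gcyl}^{-1}\in\dens{-1}(\cyl)$ (see Remark~\ref{rem:densityconv}), and of the fixed $\R$-basis $\KilBasis_1,\dots,\KilBasis_{10}$ of $\Kil$, all the independence claims reduce to a single finite piece of linear algebra in one fiber, valid uniformly over $\cyl$; this is what ``direct inspection'' amounts to here.

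For $\Omega^k(\cyl)$ the wedge products $\Vd^{i_1}\wedge\dots\wedge\Vd^{i_k}$ with $i_1<\dots<i_k$ form a $C^\infty(\cyl)$-basis, and the family $(\eO^k_i)$ is precisely this basis reordered according to whether the factor $\Vd^0=d\tau$ occurs: the elements $\eGO^k_i$ are the $\binom{3}{k}=\ngG_k^{\Omega}$ wedge products of $\Vd^1,\Vd^2,\Vd^3$ alone, and the remaining ones are $d\tau\wedge\eGO^{k-1}_i$. Hence $(\eO^k_i)$ is a $C^\infty(\cyl)$-basis and $\ng_k^{\Omega}=\ngG_k^{\Omega}+\ngG_{k-1}^{\Omega}$.

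For $\I^k(\cyl)$, recall that it is the real subspace of $\I^k_+(\cyl)\oplus\I^k_-(\cyl)$ consisting of the elements $u_+\oplus\overline{u_+}$, so a $C^\infty(\cyl,\C)$-basis $(e_j)$ of $\I^k_+(\cyl)$ of complex rank $r$ yields the $C^\infty(\cyl)$-basis $(e_j\oplus\overline{e_j},\, ie_j\oplus\overline{ie_j})$ of $\I^k(\cyl)$ of real rank $2r$. For $k=2$, $\Omega^2_+(\cyl)$ is free of complex rank $3$ with basis $\Vdp^1,\Vdp^2,\Vdp^3$, so $S^2(\Omega^2_+(\cyl))$ is free of rank $6$ with basis the symmetrized tensors $\Vdp^p\otimes\Vdp^q$ ($1\le p\le q\le 3$); under this identification the traceless condition of Definition~\ref{def:It} cuts out the $5$-dimensional space of traceless symmetric $3\times3$ matrices, of which $h_1,\dots,h_5$ in \eqref{eq:traceless_symmetric_matrices} are a basis, and tensoring with $\mu_{\gcyl}^{-1}$ turns the resulting basis of $\It^2_+(\cyl)$ into one of $\I^2_+(\cyl)$; thus the ten elements $\eGI^2_j=\eI^2_j$ are a $C^\infty(\cyl)$-basis of $\I^2(\cyl)$. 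For $k=3$ one runs the analogous computation inside $\OmegaC^3(\cyl)\otimesCinf\Omega^2_+(\cyl)$ (of rank $4\cdot 3=12$, cut by the trace condition) and checks that $\eGI^3_1,\dots,\eGI^3_6$ together with $d\tau\,\eGI^2_1,\dots,d\tau\,\eGI^2_{10}$ are $16$ pointwise linearly independent elements of the $16$-dimensional fiber of $\I^3(\cyl)$; for $k=4$, $\I^4_+(\cyl)=\OmegaC^4(\cyl)\otimesCinf\Omega^2_+(\cyl)$ is free of complex rank $3$ and the six $d\tau\,\eGI^3_j$ clearly span $\I^4(\cyl)$ over $\R$. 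This gives $\ng_k^{\I}=\ngG_k^{\I}+\ngG_{k-1}^{\I}$.

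For $\gx^k(\cyl)$, use $\gx^k(\cyl)=(\Omega^k(\cyl)\otimesRR\Kil)\oplus\I^{k+1}(\cyl)$: a $C^\infty(\cyl)$-basis of the first summand is obtained by tensoring the $\Omega^k(\cyl)$-basis of the second paragraph with $\KilBasis_1,\dots,\KilBasis_{10}$, which together with the $\I^{k+1}(\cyl)$-basis of the previous paragraph is exactly the reordering recorded as $(\eg^k_i)$; the cardinalities add up to $10\,\ng_k^{\Omega}+\ng_{k+1}^{\I}=\ng_k$. The only genuinely computational point in all of this is the $k=3$ case of the $\I$-module — confirming that the algebraic trace condition of Definition~\ref{def:It} has the expected corank and that the specific $\C$-combinations chosen for $\eGI^3_j$ (and for $\eGI^2_j$, which re-enter via $d\tau\,\eGI^2_j$) remain independent in the fiber; this is a one-time $16\times16$ linear-algebra check, routine but the step where a stray sign or coefficient would cause trouble.
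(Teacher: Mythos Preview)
Your proof is correct and is precisely an unpacking of the paper's ``By direct inspection'': you verify pointwise linear independence and the correct cardinality for each of the three families, using the global coframe $\Vd^0,\dots,\Vd^3$, the density $\mu_{\gcyl}^{-1}$, and the basis of $\Kil$. The paper gives no further detail, so your elaboration is the same argument written out in full.
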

\begin{proof}
By direct inspection.\qed
\end{proof}
%-----------------------------%

\subsection{Norms}
\label{sec:norms_bulk}

We define the norms that we use away from $\spaceinf$
(some of them are actually seminorms, but we refer to them
as norms for simplicity).

Define the following densities:
\begin{align}\label{eq:bulkdensities}
\begin{aligned}
\mucyl &= |\Vd^0\wedge\Vd^1\wedge\Vd^2\wedge\Vd^3|
	\;\in\; \dens{4}(\cyl)\\
\mucylS &= |\Vd^1\wedge\Vd^2\wedge\Vd^3|
	\qquad\,\;\;\in\; \dens{3}(S^3)
\end{aligned}
\end{align}
We will also use the fact that $\mucylS$ defines a $3$-density
on every level set of $\tau$.
%-----------------------------%
\begin{definition}[Norms away from spacelike infinity]\label{def:bulknorms}
For every $k\in\Z_{\ge0}$ and 
$s>0$ and $\tau\in [0,\pi)$ and
$f\in C^\infty(\diamond_s)$ define:
\begin{align}\label{eq:CHtaulevelset}
\begin{aligned}
\|f\|_{\H^k(\diamond_{\tau,s})}^2 
&=\textstyle
	\sum_{j=0}^k
	\sum_{i_1,\dots,i_{j}=1}^{3}
	\int_{\diamond_{\tau,s}}\big|\V_{i_1}\cdots\V_{i_j} f\big|^2
	\,\mucylS\\
\|f\|_{\sH^k(\diamond_{\tau,s})}^2 
&=\textstyle
	\sum_{j=0}^k
	\sum_{i_1,\dots,i_{j}=0}^{3}
	\int_{\diamond_{\tau,s} }\big|\V_{i_1}\cdots\V_{i_j} f\big|^2
	\,\mucylS\\
\|f\|_{C^k(\diamond_{\tau,s})}
&=\textstyle
	\sum_{j=0}^k
	\sum_{i_1,\dots,i_{j}=1}^{3}
	\sup_{p\in\diamond_{\tau,s}}
	\big|\V_{i_1}\cdots\V_{i_j} f(p)\big|\\
\|f\|_{\sC^k(\diamond_{\tau,s})}
&=\textstyle
	\sum_{j=0}^k
	\sum_{i_1,\dots,i_{j}=0}^{3}
	\sup_{p\in\diamond_{\tau,s}}
	\big|\V_{i_1}\cdots\V_{i_j} f(p)\big|
\end{aligned}
\intertext{%
We make the same definitions when $f$ is 
only defined near $\diamond_{\tau,s}$. Further define}
\label{eq:CHdiamond}
\begin{aligned}
\|f\|_{\nosH^k(\diamond_{s})}^2 
&=\textstyle
	\sum_{j=0}^k
	\sum_{i_1,\dots,i_{j}=0}^{3}
	\int_{\diamond_{s} }\big|\V_{i_1}\cdots\V_{i_j} f\big|^2
	\,\mucyl\\
\|f\|_{\nosC^k(\diamond_{s})}
&=\textstyle
	\sum_{j=0}^k
	\sum_{i_1,\dots,i_{j}=0}^{3}
	\sup_{p\in\diamond_{s}}
	\big|\V_{i_1}\cdots\V_{i_j} f(p)\big|
\end{aligned}
\end{align}
We make analogous definitions for vector-
and matrix-valued functions, 
where we apply the norms componentwise and then take the $\ell^2$-sum of the 
components;
and for elements in $\gx(\diamond_s)$,
where we use the basis \eqref{eq:gbasis_bulk} to identify
them with vector-valued functions on $\diamond_s$.
\end{definition}
The slashed norms in \eqref{eq:CHtaulevelset}
measure differentiability with respect to all
vector fields
$\V_0,\dots,\V_3$. Note that $\V_0$ is not tangential to $\diamond_{\tau,s}$, 
so the slashed norms are not determined by 
the restriction of $f$ to $\diamond_{\tau,s}$.
Further, in \eqref{eq:CHtaulevelset}
the $\sH^0$- and the $\H^0$-norms are equal,
and the $\sC^0$- and the $C^0$-norms are equal.

%------------------------%
Recall the sets in \eqref{eq:dsttNEW}.
%------------------------%
\begin{definition}\label{def:taufixnorms}
For all $k\in\Z_{\ge0}$ and 
$s>0$,
$\taufix\in (0,\pi)$,
$\tau\in[0,\taufix]$
define
\[ 
\|{\cdot}\|_{\H^k(\diamond_{\tau,s}^{\taufix})} 
\qquad
\|{\cdot}\|_{\sH^k(\diamond_{\tau,s}^{\taufix})}
\qquad
\|{\cdot}\|_{C^k(\diamond_{\tau,s}^{\taufix})}
\qquad
\|{\cdot}\|_{\sC^k(\diamond_{\tau,s}^{\taufix})}
\]
analogously to \eqref{eq:CHtaulevelset},
with $\diamond_{\tau,s}$ replaced by $\diamond_{\tau,s}^{\taufix}$.
Further, for $\tau\in(0,\taufix]$, define
\[ 
\|{\cdot}\|_{\nosH^k(\diamond_{\le\tau,s}^{\taufix})}
\qquad
\|{\cdot}\|_{\nosC^k(\diamond_{\le\tau,s}^{\taufix})}
\]
analogously to \eqref{eq:CHdiamond},
with $\diamond_{s}$ replaced by $\diamond_{\le\tau,s}^{\taufix}$.
\end{definition}
%------------------------%

We now prove auxiliary inequalities for these norms.
It will be important that the constants in the inequalities
are independent of $\taufix$ and $\tau$.
%------------------------%
\begin{lemma}\label{lem:SobolevCNEW}
For all $k\in\Z_{\ge0}$, $s\in(0,\new{1}]$, $\tau_*\in[\frac\pi2,\pi)$
and $f\in C^\infty(\diamond_s^{\tau_*})$:
\begin{itemize}
\item 
For all $\tau\in[0,\tfrac\pi2]$:
\begin{subequations}
\begin{align}
\|f\|_{\sC^k(\diamond_{\tau,s}^{\tau_*})}
&\lesssim_k 
\|f\|_{\sH^{k+2}(\diamond_{\tau,s}^{\tau_*})}
\label{eq:CH1}\\
\|f\|_{\sH^k(\diamond_{\tau,s}^{\tau_*})} 
&\lesssim_{k,s}
\tint_{0}^{\frac\pi2} \|f\|_{\sH^{k+1}(\diamond_{\tau',s}^{\tau_*})} d\tau'
\label{eq:LinfL2L1L2}\\
\|f\|_{\sC^k(\diamond_{\tau,s}^{\tau_*})} 
&\lesssim_{k}
\tint_{0}^{\frac\pi2} \|f\|_{\sH^{k+3}(\diamond_{\tau',s}^{\tau_*})} d\tau'
\label{eq:Ckint}
\end{align}
\end{subequations}
\item 
For all $\tau\in[0,\tau_*]$:
\begin{align}
\|f\|_{\sC^k(\diamond_{\tau,s}^{\tau_*})}
&\lesssim_k
\sup_{\tau'\in[0,\tau]}\|f\|_{\sH^{k+3}(\diamond_{\tau',s}^{\tau_*})}
\label{eq:SobolevC} 
\end{align}
\end{itemize}
\end{lemma}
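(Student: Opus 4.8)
For all $k\in\Z_{\ge0}$, $s\in(0,1]$, $\tau_*\in[\tfrac\pi2,\pi)$, $f\in C^\infty(\diamond_s^{\tau_*})$, and all $\tau\in[0,\tau_*]$,
$$
\|f\|_{\sC^k(\diamond_{\tau,s}^{\tau_*})}\;\lesssim_k\;\sup_{\tau'\in[0,\tau]}\|f\|_{\sH^{k+3}(\diamond_{\tau',s}^{\tau_*})}.
$$

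The plan is to reduce the estimate over a general slice $\diamond_{\tau,s}^{\tau_*}$ with $\tau\le\tau_*$ to the already-established estimate \eqref{eq:Ckint}, which handles only $\tau\in[0,\tfrac\pi2]$. The obstacle to a direct application is that for $\tau>\tfrac\pi2$ the relevant slice lies in the region where the earlier argument (which used Lemma \ref{lem:lowerboundballsNEW}, valid for $\tau_0\in[0,\tfrac\pi2]$) does not literally apply; the slices $\diamond_{\tau,s}^{\tau_*}$ become smaller geodesic balls approaching $\timeinf$, and one must control the Sobolev embedding constant uniformly as the ball shrinks.

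First I would reduce to proving a $\tau$-local Sobolev embedding: it suffices to show, for each $\tau\in[0,\tau_*]$,
$$
\|f\|_{\sC^k(\diamond_{\tau,s}^{\tau_*})}\;\lesssim_k\;\|f\|_{\sH^{k+3}(\diamond_{\tau,s}^{\tau_*})},
$$
since the right-hand side is bounded by $\sup_{\tau'\in[0,\tau]}\|f\|_{\sH^{k+3}(\diamond_{\tau',s}^{\tau_*})}$ trivially. (Note this drops one derivative compared to \eqref{eq:CH1}; I keep $k+3$ for uniformity with the statement, which is harmless.) So the real content is a slicewise Sobolev inequality with a constant independent of $\tau$ and $\tau_*$.

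Next I would establish this slicewise inequality by a covering/scaling argument. Each $\diamond_{\tau,s}^{\tau_*}$ is, by construction \eqref{eq:dtautaus} together with Lemma \ref{lem:lowerboundballsNEW} when $\tau\le\tfrac\pi2$ and an analogous computation when $\tfrac\pi2<\tau\le\tau_*$, a closed geodesic ball in the round $S^3$ of radius bounded below by a fixed constant $r_0>0$ depending only on nothing (using $s\le1$ and $\tau_*<\pi$ is \emph{not} enough near $\timeinf$ — this is the subtle point, see below). The vector fields $\V_1,\V_2,\V_3$ form a global orthonormal frame on $S^3$, so the norms $\|\cdot\|_{\sH^{k+3}}$, $\|\cdot\|_{\sC^k}$ restricted to a slice are, up to universal constants, the standard $H^{k+3}$, $C^k$ norms of the round metric (the $\V_0$-derivatives contribute terms of the same type since $f$ is defined on a neighborhood of the slice). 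Then the standard Sobolev embedding $H^{k+3}(B)\hookrightarrow C^k(B)$ on a Euclidean/geodesic ball $B$ of a fixed size gives the inequality with a constant depending only on $k$; extension to a fixed reference domain and a uniform extension operator can be used to make the constant independent of which subball $B\subset S^3$ we are on, since all such balls are isometric.

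The hard part will be the uniform lower bound on the "size" of $\diamond_{\tau,s}^{\tau_*}$ as $\tau\uparrow\tau_*$ and as $\tau_*\uparrow\pi$: near future timelike infinity these slices degenerate. I expect the resolution is that the statement's supremum on the right is taken over \emph{all} $\tau'\in[0,\tau]$, so one does not need a Sobolev inequality on the tiny slice alone — instead one uses a trace/integration argument in the $\tau$-direction (as in the derivation of \eqref{eq:LinfL2L1L2} via Lemma \ref{lem:LinfL2toL1L2}) to reconstruct $f$ on $\diamond_{\tau,s}^{\tau_*}$ from its values on fixed-size slices with $\tau'\le\tfrac\pi2$, combined with finite speed of propagation along $\nullinf^{\tau_*}$. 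Concretely: fix $\tau$, integrate $\p_\tau$ of $f$ (cut off appropriately) from a reference slice at $\tau'\le\tfrac\pi2$ up to $\tau$, pick up at most $\pi$ factors, apply Minkowski's integral inequality in the slice-Sobolev norm, and then invoke \eqref{eq:Ckint} at the reference slices. This routes everything through the already-proven $\tau\le\tfrac\pi2$ estimates and the uniform-in-$\tau_*$ structure of the sets $\diamond^{\tau_*}$ from Lemma \ref{lem:Dexhaustion} and \ref{lem:Dtauexhaustion}. I would carry out the cutoff-and-integrate step carefully, since that is where the uniformity of the implied constant in $\tau$ and $\tau_*$ is actually secured; the rest is routine.
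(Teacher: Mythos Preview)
Your diagnosis of the difficulty is correct: the slices $\diamond_{\tau,s}^{\tau_*}$ degenerate as $\tau\uparrow\tau_*$, so a slicewise three-dimensional Sobolev embedding cannot have a constant uniform in $\tau$. But your proposed fix does not escape this degeneracy. Integrating $\p_\tau f$ from a reference slice at $\tau'\le\tfrac\pi2$ gives, for $|I|\le k$,
\[
|V^I f(\tau,\xi)| \;\le\; |V^I f(\tfrac\pi2,\xi)| \;+\; \int_{\pi/2}^{\tau}\bigl|V_0 V^I f(\tau',\xi)\bigr|\,d\tau'.
\]
The first term is controlled by \eqref{eq:CH1} at $\tau'=\tfrac\pi2$. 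But the integrand on the right is a pointwise quantity on the \emph{late} slice $\diamond_{\tau',s}^{\tau_*}$ with $\tau'>\tfrac\pi2$, and to bound it by an $\sH$-norm on that slice you again need a slicewise Sobolev embedding with uniform constant --- exactly what you identified as failing. Invoking ``Minkowski's integral inequality in the slice-Sobolev norm'' does not help here: Minkowski controls $L^2$-in-$\xi$ norms of an integral-in-$\tau$, whereas you need sup-in-$\xi$ control of the integral. The circularity is not broken.

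The paper's argument sidesteps the slice degeneracy entirely by passing to a \emph{four-dimensional} Sobolev inequality on the spacetime region $\diamond_{\le\tau,s}^{\tau_*}$. For $\tau\ge\tfrac\pi2$ this region always contains the fixed slab $\diamond^{\tau_*}_{\le\pi/2,s}$, so its volume and geometry are uniformly bounded below, and the embedding $\nosH^{k+3}\hookrightarrow \nosC^k$ holds with constant depending only on $k$. One then writes, by Fubini,
\[
\|f\|_{\nosH^{k+3}(\diamond_{\le\tau,s}^{\tau_*})}^2
= \int_0^\tau \|f\|_{\sH^{k+3}(\diamond_{\tau',s}^{\tau_*})}^2\,d\tau'
\le \pi\,\sup_{\tau'\in[0,\tau]}\|f\|_{\sH^{k+3}(\diamond_{\tau',s}^{\tau_*})}^2,
\]
which gives \eqref{eq:SobolevC} directly. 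The point is that the spacetime region does not degenerate even though the individual late-time slices do.
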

%-------------------------------------%
\begin{proof}
\eqref{eq:CH1}:
This follows from a standard three-dimensional Sobolev inequality,
the constant is independent of $s,\taufix,\tau$ by Lemma \ref{lem:lowerboundballsNEW}.

\eqref{eq:LinfL2L1L2}: 
This is checked similarly to Lemma \ref{lem:LinfL2toL1L2}, hence we omit the details.

\eqref{eq:Ckint}: This follows from \eqref{eq:CH1} and \eqref{eq:LinfL2L1L2}. 

\eqref{eq:SobolevC}:
Given \eqref{eq:CH1}, it remains to check this for $\tau\in[\frac\pi2,\taufix]$.
Then 
\begin{align*}
\|f\|_{\sC^k(\diamond_{\tau,s}^{\tau_*})}
\le
\|f\|_{\nosC^k(\diamond_{\le\tau,s}^{\tau_*})}
\lesssim_{k}
\|f\|_{\nosH^{k+3}(\diamond_{\le\tau,s}^{\tau_*})}
\end{align*}
where we use a standard four-dimensional Sobolev inequality.
The constant in the Sobolev inequality is independent of $\tau$
because $\tau\ge\frac\pi2$.
By Fubini,
\[ 
\|f\|_{\nosH^{k+3}(\diamond_{\le\tau,s}^{\tau_*})}^2
=
\tint_{0}^{\tau}\|f\|_{\sH^{k+3}(\diamond_{\tau',s}^{\tau_*})}^2\,d\tau'
\le
\tau \sup_{\tau'\in[0,\tau]} \|f\|_{\sH^{k+3}(\diamond_{\tau',s}^{\tau_*})}^2
\]
Using $\tau\le\pi$ and taking the square root, the claim follows.
\qed
\end{proof}
%-------------------------------------%

\subsection{Gauge}
\label{sec:Gauge_cyl}

Similarly to Section \ref{sec:gauge_i0} we define a gauge,
and show that relative to this gauge, the Einstein equations \eqref{eq:MC} 
are quasilinear symmetric hyperbolic up to constraints that propagate,
including along $\p\diamond_+$.
The gauge that we define here equals that in 
\cite[Section 3.5.3]{Thesis} with $\mathrm{T}$, $g$ 
there chosen as in \eqref{eq:Tg} below.

The constructions in Section \ref{sec:Gauge_cyl} 
parallel those in Section \ref{sec:gauge_i0}.
For clarity we will nevertheless write them down explicitly,
hence there will be repetitions.

Many of the definitions and statements
will be made globally on $\cyl$. 
They can be made analogously on $\diamond_+$ and on $\diamond^{\taufix}$ in Definition \ref{def:diamondtau*}, 
since all constructions are effectively fiberwise.

\subsubsection{Definition of gauge}
\label{sec:gauge_definition_bulk}

We define a gauge (Definition \ref{def:gauge_bulk}) and show basic properties 
(Lemma \ref{lem:gauge_mainprop_bulk}).

The construction uses the following smooth vector field and metric on $\cyl$:
\begin{align}\label{eq:Tg}
\V_0=\p_{\tau} \qquad\qquad \gcyl
\end{align}
%---------------------------------%

The next preliminary definitions are local to 
to Section \ref{sec:bulk}, see Remark \ref{rem:local5}.%
\begin{itemize}
\item 
Let $\Oipr{k}{\cdot}{\cdot}:
\Omega^k(\cyl)\times\Omega^k(\cyl)\to C^\infty(\cyl)$ 
be the nondegenerate symmetric $C^\infty$-bilinear form induced by $\gcyl$,
defined using
the formula 
\eqref{eq:OmegaIPdef} 
with $\ghom$ replaced by $\gcyl$.
\item
Let  
$\smash{\Iipr{k}{\cdot}{\cdot}}:\I^k(\cyl)\times \I^k(\cyl)\to C^\infty(\cyl)$ 
be the nondegenerate symmetric $C^\infty$-bilinear form defined using 
the formula 
\eqref{eq:Ibildefformulas}
with $\mu_{\ghom}^{-1}$ replaced by $\mu_{\gcyl}^{-1}$.
Explicitly, using the basis \eqref{eq:Ibasis_cyl},
\begin{align}
\label{eq:IiprExplicit_bulk}
\begin{aligned}
\Iipr{2}{\eI^2_i}{\eI^2_j}
&=
\left(\begin{smallmatrix}
\one_{5} & 0\\
0& -\one_{5}
\end{smallmatrix}\right)_{ij}
\\
\Iipr{3}{\eI^3_i}{\eI^3_j}
&=
\left(\begin{smallmatrix}
-\one_{3} & 0 &0&0\\
0& \one_{3} &0&0\\
0&0&-\one_5&0\\
0&0&0&\one_5
\end{smallmatrix}\right)_{ij}
\\
\Iipr{4}{\eI^4_i}{\eI^4_j}
&=
\left(\begin{smallmatrix}
\one_{3} & 0\\
0& -\one_{3}
\end{smallmatrix}\right)_{ij}
\end{aligned}
\end{align}

\item 
Let $\Iinter_{\V_0}:\I^{k+1}(\cyl)\to\I^{k}(\cyl)$ be
the adjoint (relative to \smash{$\Iipr{k}{\cdot}{\cdot}$}) of the map 
$\I^{k}(\cyl)\to \I^{k+1}(\cyl)$, $u\mapsto \V_0^\flat u$
where $\V_0^\flat=\gcyl(\V_0,{\cdot})=-d\tau$,
and where we use the module multiplication in Definition \ref{def:Imod}. That is,
\begin{equation}
\label{eq:intmult_bulk}
\Iipr{k}{\Iinter_{\V_0}u}{u'} = \Iipr{k+1}{u}{\V_0^\flat u'}
\end{equation}
for all $u\in \I^{k+1}(\cyl)$ and $u'\in\I^k(\cyl)$.
Explicitly,
\begin{align}
\label{eq:IinterExplicit_bulk}
\begin{aligned}
\Iinter_{\V_0} \eI^2_i
&=0 
\\
\Iinter_{\V_0} \eI^3_i
&=
\left(\begin{smallmatrix}
0_{10\times6} & \one_{10}
\end{smallmatrix}\right)_{ji} \eI^2_j 
\\
\Iinter_{\V_0} \eI^4_i
&=
\left(\begin{smallmatrix}
\one_{6} \\ 
0_{10\times6}
\end{smallmatrix}\right)_{ji} \eI^3_j
\end{aligned}
\end{align}
where we sum over $j$.

\item 
Define $P_{\V_0}:\I^k(\cyl)\to\I^k(\cyl)$ 
using the formula \eqref{eq:Pdefhom} and the preceding paragraph, with $\Thom$ and $\ghom$ replaced by 
$\V_0$ and $\gcyl$, respectively.
Explicitly,
\begin{align}
\label{eq:PExplicit_bulk}
\begin{aligned}
P_{\V_0}(\eI^2_i) 
&= 
\left(\begin{smallmatrix}
\one_{5} & 0\\
0& -\one_{5}
\end{smallmatrix}\right)_{ji}\eI^2_j
\\
P_{\V_0}(\eI^3_i)
&=
\left(\begin{smallmatrix}
-\one_{3} & 0 &0&0\\
0& \one_{3} &0&0\\
0&0&-\one_5&0\\
0&0&0&\one_5
\end{smallmatrix}\right)_{ji}\eI^3_j
\\
P_{\V_0}(\eI^4_i) 
&= 
\left(\begin{smallmatrix}
\one_{3} & 0\\
0& -\one_{3}
\end{smallmatrix}\right)_{ji}\eI^4_j
\end{aligned}
\end{align}
where we sum over $j$.
\end{itemize}
%----------------------------------%
\begin{definition}\label{def:gauge_bulk}
This definition is local to Section \ref{sec:bulk}, 
see Remark \ref{rem:local5}.
Define 
\begin{align}\label{eq:gaugespaces_bulk}
\begin{aligned}
\OmegaG^k(\cyl) 
	&=
	\{\omega\in\Omega^k(\cyl) \mid \intermult_{\V_0}\omega=0 \} \\ 
\IG^k(\cyl) 
	&=
	\{u \in \I^{k}(\cyl) \mid \Iinter_{\V_0} u =0\}\\
\gxG^k(\cyl) 
	&= \big(\OmegaG^k(\cyl)\otimesRR\Kil\big) \oplus \IG^{k+1}(\cyl)\Ieps
\end{aligned}
\end{align}
for $k=0\dots4$. In the first line, $\intermult_{\V_0}$
is the interior multiplication with $\V_0$, 
in the second line, $\Iinter_{\V_0}$ is the map in \eqref{eq:intmult_bulk}.
Define the $C^\infty$-bilinear forms:
\begin{itemize}
\item 
$\OmegaBil^k:\OmegaG^k(\cyl)\times\Omega^{k+1}(\cyl)\to C^\infty(\cyl)$ by 
$\OmegaBil^k(\omega,\omega') = \Oipr{k}{\omega}{\intermult_{\V_0}\omega'}$.
\item 
$\IBil^k: \IG^{k}(\cyl)\times\I^{k+1}(\cyl)\to C^\infty(\cyl)$ by
$\IBil^k(u,u')= \Iipr{k}{P_{\V_0}u}{\Iinter_{\V_0}u'}$.
\item 
$\gBil^k:\gxG^{k}(\cyl)\times\gx^{k+1}(\cyl)\to C^\infty(\cyl)$ by 
\begin{align}
\label{eq:cylgbil}
&\gBil^k( (\omega\otimes \KilBasis_{\ell})\oplus \uI\Ieps,(\omega'\otimes \KilBasis_{\ell'})\oplus \uI'\Ieps  )
=
\OmegaBil^k(\omega,\omega')\delta_{\ell\ell'} + \IBil^{k+1}(\uI,\uI')
\end{align}
where $\omega\in\OmegaG^k(\cyl)$, $\omega'\in\Omega^{k+1}(\cyl)$
and 
$\uI\in\IG^{k+1}(\cyl)$, $\uI'\in\I^{k+2}(\cyl)$,
and where $\KilBasis_{1},\dots,\KilBasis_{10}$ 
is the basis of $\Kil$ in \eqref{eq:Kilbas}.
\end{itemize}
\end{definition}
%----------------------------%

The module $\gxG^k(\cyl)$ is the module of smooth sections,
over $\cyl$, of a trivial vector bundle $\gxG^k$ defined on $\cyl$.
Further, $\gxG^k$ is a subbundle of $\gx^k$.
%----------------------------%
\begin{lemma}\label{lem:gaugebases_bulk}
Using the elements from Definition \ref{def:elements_bulk},
for each $k=0\dots4$:
\[ 
\def\arraystretch{1.1}
\begin{tabular}{c|ccccccccc}
Module & 
$\OmegaG^k(\cyl)$ &
$\IG^k(\cyl)$ &
$\gxG^k(\cyl)$ \\
Rank &
$\ngG_k^{\Omega}$ &
$\ngG_k^{\I}$ &
$\ngG_k$ \\
Basis &
$(\eGO^k_i)_{i=1\dots \ngG_k^{\Omega}}$ & 
$(\eGI^k_i)_{i=1\dots \ngG_k^{\I}}$ & 
$(\eGg^k_i)_{i=1\dots \ngG_k}$ 
\end{tabular} 
\]
\end{lemma}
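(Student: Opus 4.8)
The proof of Lemma~\ref{lem:gaugebases_bulk} is entirely parallel to that of Lemma~\ref{lem:gaugebases_i0}, so the plan is to transcribe that argument with $\Thom,\ghom$ replaced by $\V_0,\gcyl$ and the homogeneous basis elements of Definition~\ref{def:elements_i0} replaced by the global ones of Definition~\ref{def:elements_bulk}. Concretely, for each $k=0\dots4$ one must exhibit, for each of the three modules $\OmegaG^k(\cyl)$, $\IG^k(\cyl)$, $\gxG^k(\cyl)$, a generating set of the claimed cardinality consisting of smooth sections on $\cyl$, and check $C^\infty$-linear independence; since each of these modules is the section module of a trivial vector bundle, it suffices to check at every point of $\cyl$ that the listed elements form a basis of the fiber, and by translation/rotation invariance of all the data it in fact suffices to check this at a single point.

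The first column, $\OmegaG^k(\cyl)=\{\omega\in\Omega^k(\cyl)\mid \intermult_{\V_0}\omega=0\}$: recall $\V_0=\p_\tau$ and $\Vd^0=d\tau$, so $\intermult_{\V_0}\omega=0$ cuts out exactly the span of wedge products of $\Vd^1,\Vd^2,\Vd^3$. The elements $(\eGO^k_i)_{i=1\dots\ngG_k^\Omega}$ in Definition~\ref{def:elements_bulk} are precisely these, and a dimension count using $\ngG_k^\Omega=\binom{3}{k}$ (read off from the table \eqref{eq:nm_i0}, noting the entries $1,3,3,1,0$) shows they are a basis. For the fourth column one uses the splitting $\I^k(\cyl)=\IG^k(\cyl)\oplus\Vd^0\,\IG^{k-1}(\cyl)$, equivalently the explicit block form of $\Iinter_{\V_0}$ in \eqref{eq:IinterExplicit_bulk}: $\Iinter_{\V_0}\eGI^k_j=0$ for the elements $\eGI^k_j$ of Definition~\ref{def:elements_bulk}, while $\Iinter_{\V_0}(\Vd^0\eGI^{k-1}_j)=\eGI^{k-1}_j$, so the $\eGI^k_j$ span the kernel and the decomposition $(\eI^k_j)=(\eGI^k_1,\dots,\Vd^0\eGI^{k-1}_1,\dots)$ of Lemma~\ref{lem:bases_bulk} is adapted to it; the ranks then match by $\ngG_k^{\I}$ being $0,0,10,6,0$ as tabulated. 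Finally the fifth column follows by combining the first (tensored with the $10$-dimensional $\Kil$) and the fourth, using $\gxG^k(\cyl)=(\OmegaG^k(\cyl)\otimesRR\Kil)\oplus\IG^{k+1}(\cyl)$ and $\ngG_k=10\,\ngG_k^\Omega+\ngG_{k+1}^{\I}$ (e.g.\ $\ngG_1=10\cdot3+10=40$, consistent with \eqref{eq:nm_i0}).

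There is essentially no obstacle here: everything reduces to the computation of $\Iinter_{\V_0}$ on the basis, which is recorded in \eqref{eq:IinterExplicit_bulk}, plus the trivial identification of $\intermult_{\V_0}$-closed forms, plus bookkeeping of ranks via \eqref{eq:nm_i0}. The only point requiring a moment's care is that the elements $\eGI^k_j$ of Definition~\ref{def:elements_bulk} differ by a sign from their counterparts in Definition~\ref{def:elements_i0} (because $\Vd^0,\dots,\Vd^3$ is positively oriented whereas $dy^0,\dots,dy^3$ is negative, cf.\ Remark~\ref{rem:orientation}), but this sign does not affect spanning or linear independence, so it plays no role. Accordingly the proof is:

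\begin{proof}
The first column is immediate: since $\V_0=\p_\tau$ with dual one-form $\Vd^0=d\tau$, a $k$-form $\omega$ satisfies $\intermult_{\V_0}\omega=0$ if and only if $\omega$ is a $C^\infty(\cyl)$-linear combination of the $\binom{3}{k}$ wedge products of $\Vd^1,\Vd^2,\Vd^3$, which are exactly the $(\eGO^k_i)_{i=1\dots\ngG_k^\Omega}$ of Definition~\ref{def:elements_bulk}; by \eqref{eq:nm_i0} these span a free module of rank $\ngG_k^\Omega$ and are linearly independent over $C^\infty(\cyl)$.

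The fourth column follows from \eqref{eq:IinterExplicit_bulk}: the latter shows $\Iinter_{\V_0}\eGI^k_j=0$, while $\Iinter_{\V_0}(\Vd^0\eGI^{k-1}_j)=\eGI^{k-1}_j$, so in the basis $(\eI^k_i)_{i=1\dots\ng^{\I}_k}$ of $\I^k(\cyl)$ from Lemma~\ref{lem:bases_bulk} the kernel of $\Iinter_{\V_0}$ is spanned by $(\eGI^k_i)_{i=1\dots\ngG_k^{\I}}$, these being linearly independent over $C^\infty(\cyl)$ since they form part of a basis. The rank equals $\ngG_k^{\I}$ by \eqref{eq:nm_i0}.

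The fifth column follows from the first, fourth, and the definition \eqref{eq:gaugespaces_bulk} of $\gxG^k(\cyl)$: tensoring the basis of $\OmegaG^k(\cyl)$ with the basis $\KilBasis_1,\dots,\KilBasis_{10}$ of $\Kil$ from \eqref{eq:Kilbas} and adjoining the basis of $\IG^{k+1}(\cyl)$ yields the elements $(\eGg^k_i)_{i=1\dots\ngG_k}$ of Definition~\ref{def:elements_bulk}, which are then a $C^\infty(\cyl)$-basis of $\gxG^k(\cyl)$ of rank $10\,\ngG_k^\Omega+\ngG_{k+1}^{\I}=\ngG_k$ by \eqref{eq:nm_i0}.
\end{proof}
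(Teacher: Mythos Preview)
Your proof is correct and follows the same approach as the paper's (which simply says ``The first column is immediate; the second column follows from \eqref{eq:IinterExplicit_bulk}; the third column follows from the first two''). Note however that the table in Lemma~\ref{lem:gaugebases_bulk} has only three columns, not five: you refer to the ``fourth'' and ``fifth'' columns where you mean the second ($\IG^k(\cyl)$) and third ($\gxG^k(\cyl)$), evidently carrying over the layout of Lemma~\ref{lem:gaugebases_i0}; the mathematical content is unaffected.
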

%----------------------------%
\begin{proof}
The first column is immediate;
the second column follows from \eqref{eq:IinterExplicit_bulk};
the third column follows from the first two.\qed
\end{proof}
%-----------------------------%
The following lemma parallels Lemma \ref{lem:Binbases}.
%-----------------------------%
\begin{lemma}\label{lem:BilBasisCpt}
Relative to the bases in Lemma \ref{lem:bases_bulk} and \ref{lem:gaugebases_bulk}, 
the bilinear forms in Definition \ref{def:gauge_bulk} are given as follows.
For $k=0\dots4$ and $\ell=1,2,3$ one has:
\begin{align*}
\OmegaBil^k(\eGO^k_i,\eGO^{k+1}_j) &= 0 
	& \OmegaBil^k(\eGO^k_i,d\tau\wedge\eGO^{k}_j) &= \delta_{ij}
	& \OmegaBil^k(\eGO^k_i,\Vd^\ell\wedge\eGO^{k}_j) &= 0\\
\intertext{
Further, for $k=2,3$ and $\ell=1,2,3$ one has
(note that $\IBil^k=0$ for $k=0,1,4$):
}
\IBil^k(\eGI^k_i,\eGI^{k+1}_j) &= 0 
	& \IBil^k(\eGI^k_i,d\tau\eGI^{k}_j) &= \delta_{ij}
	& \IBil^k(\eGI^k_i,\Vd^\ell\eGI^{k}_j)
	&=
	\left(\begin{smallmatrix}
	0 & -A_{k,\ell}\\
	(-A_{k,\ell})^T & 0
	\end{smallmatrix}\right)_{ij}
\end{align*}
where the matrices $A_{k,\ell}$ are defined exactly
like in Lemma \ref{lem:Binbases}.
\end{lemma}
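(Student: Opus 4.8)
The plan is to follow exactly the same computational template used to prove Lemma~\ref{lem:Binbases}, which establishes the corresponding identities for the gauge near spacelike infinity. Indeed, the bilinear forms $\OmegaBil^k$ and $\IBil^k$ in Definition~\ref{def:gauge_bulk} are defined by the same formulas as their counterparts in Definition~\ref{def:gauges_i0}, with $\Thom$, $\ghom$, $\theta^\mu=dy^\mu/\s$ replaced by $\V_0$, $\gcyl$, $\Vd^\mu$, and crucially \emph{without} the overall powers of $\s$ that appear in $\OmegaTBil^k$ and $\IBil^k$ (since here $\Kil$ is not decomposed into boosts and translations, and no rescaling is needed). So I would simply transcribe that proof.

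First I would verify the three identities for $\OmegaBil^k$. Using $\intermult_{\V_0}\eGO^{k+1}_j=0$, $\intermult_{\V_0}(d\tau\wedge\eGO^k_j)=\eGO^k_j$, and $\intermult_{\V_0}(\Vd^\ell\wedge\eGO^k_j)=0$ for $\ell=1,2,3$ (these hold because $\V_0^\flat=-d\tau$ is $\gcyl$-orthogonal to $\Vd^1,\Vd^2,\Vd^3$, exactly as in the $\Dspcl$ case), together with $\Oipr{k}{\eGO^k_i}{\eGO^k_j}=\delta_{ij}$ from \eqref{eq:OmegaIPdef} applied to the $\gcyl$-orthonormal coframe $\Vd^0,\dots,\Vd^3$, the three claimed equalities follow immediately. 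Second, for $\IBil^k$ I would use \eqref{eq:IinterExplicit_bulk} to get $\Iinter_{\V_0}\eGI^{k+1}_j=0$ and $\Iinter_{\V_0}(d\tau\,\eGI^k_j)=\eGI^k_j$, which gives the first two identities after invoking \eqref{eq:IiprExplicit_bulk} and \eqref{eq:PExplicit_bulk}. For the third, use the adjoint property \eqref{eq:intmult_bulk} to rewrite $\IBil^k(\eGI^k_i,\Vd^\ell\eGI^k_j)=\Iipr{k+1}{-d\tau\,P_{\V_0}\eGI^k_i}{\Vd^\ell\eGI^k_j}$ and evaluate explicitly using the formulas for $\eGI^k_j$ in Definition~\ref{def:elements_bulk} and the inner products \eqref{eq:IiprExplicit_bulk}; the entries assemble into the off-diagonal block $\bigl(\begin{smallmatrix}0&-A_{k,\ell}\\(-A_{k,\ell})^T&0\end{smallmatrix}\bigr)$.

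The one point that requires genuine care — and is the main ``obstacle'' — is tracking the sign relative to Lemma~\ref{lem:Binbases}: there the answer is $A_{k,\ell}$, here it is $-A_{k,\ell}$. This sign flip is forced by the footnote in Definition~\ref{def:elements_bulk}, namely that the formula for $\eGI^3_j$ here differs from the analogous formula in Definition~\ref{def:elements_i0} by a sign because $dy^0,\dots,dy^3$ is negatively oriented whereas $\Vd^0,\dots,\Vd^3$ is positively oriented (Remark~\ref{rem:orientation}); correspondingly $\Vdpm^a=\tfrac12(\Vd^0\wedge\Vd^a\pm i\Vd^b\wedge\Vd^c)$ carries the opposite sign convention to $\theta^a_\pm$. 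Propagating this single sign through the explicit evaluation of $\Iipr{k+1}{-d\tau\,P_{\V_0}\eGI^k_i}{\Vd^\ell\eGI^k_j}$ — for instance redoing the sample computation $\IBil^2(\eGI^2_1,\Vd^1\eGI^2_8)$ from the proof of Lemma~\ref{lem:Binbases} with the new basis — yields $-A_{k,\ell}$ in place of $A_{k,\ell}$, and I would record this as the explanation. Everything else is ``by direct inspection'' / direct calculation, entirely parallel to the earlier lemma, so the proof can be kept very short.

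\begin{proof}
The identities for $\OmegaBil^k$ follow exactly as in the proof of Lemma~\ref{lem:Binbases}, using $\intermult_{\V_0}\eGO^{k+1}_j=0$, $\intermult_{\V_0}(d\tau\wedge\eGO^k_j)=\eGO^k_j$, $\intermult_{\V_0}(\Vd^\ell\wedge\eGO^k_j)=0$ for $\ell=1,2,3$, and $\Oipr{k}{\eGO^k_i}{\eGO^k_j}=\delta_{ij}$ (by \eqref{eq:OmegaIPdef} applied to the $\gcyl$-orthonormal coframe $\Vd^0,\dots,\Vd^3$).

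For $\IBil^k$, by \eqref{eq:IinterExplicit_bulk} we have $\Iinter_{\V_0}\eGI^{k+1}_j=0$ and $\Iinter_{\V_0}(d\tau\,\eGI^k_j)=\eGI^k_j$, hence
\[
\IBil^k(\eGI^k_i,\eGI^{k+1}_j)=\Iipr{k}{P_{\V_0}\eGI^k_i}{\Iinter_{\V_0}\eGI^{k+1}_j}=0,
\qquad
\IBil^k(\eGI^k_i,d\tau\,\eGI^k_j)=\Iipr{k}{P_{\V_0}\eGI^k_i}{\eGI^k_j}=\delta_{ij}
\]
using \eqref{eq:IiprExplicit_bulk} and \eqref{eq:PExplicit_bulk} in the last step. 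For the third identity, \eqref{eq:intmult_bulk} gives
\[
\IBil^k(\eGI^k_i,\Vd^\ell\,\eGI^k_j)=\Iipr{k+1}{-d\tau\,P_{\V_0}\eGI^k_i}{\Vd^\ell\,\eGI^k_j},
\]
which is evaluated explicitly using Definition~\ref{def:elements_bulk} and \eqref{eq:IiprExplicit_bulk}. The computation is identical to the one in the proof of Lemma~\ref{lem:Binbases}, except that the basis elements $\eGI^2_j,\eGI^3_j$ here differ from those in Definition~\ref{def:elements_i0} by the sign convention for $\Vdpm^a$ versus $\theta^a_\pm$ (see the footnote in Definition~\ref{def:elements_bulk} and Remark~\ref{rem:orientation}). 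Propagating this sign through the evaluation replaces $A_{k,\ell}$ by $-A_{k,\ell}$, giving the asserted off-diagonal block. The vanishing $\IBil^k=0$ for $k=0,1,4$ follows from $\IG^k(\cyl)$ being trivial (or $\IG^{k+1}(\cyl)$ for the relevant degree), by Lemma~\ref{lem:gaugebases_bulk} and $\ngG^{\I}_k$ in \eqref{eq:nm_i0}.
\qed
\end{proof}
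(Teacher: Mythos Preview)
Your proof is correct and follows essentially the same approach as the paper's own proof, which simply says ``Analogous to the proof of Lemma~\ref{lem:Binbases}, using \eqref{eq:IiprExplicit_bulk}, \eqref{eq:IinterExplicit_bulk}, \eqref{eq:PExplicit_bulk}'' and then notes the sign flip in $\IBil^k(\eGI^k_i,\Vd^\ell\eGI^k_j)$ arises because $dy^0,\dots,dy^3$ is negatively oriented while $\Vd^0,\dots,\Vd^3$ is positively oriented. You have spelled out the same argument in more detail and correctly identified the orientation as the source of the sign change.
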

%-----------------------------%
\begin{proof}
Analogous to the proof of Lemma \ref{lem:Binbases},
using \eqref{eq:IiprExplicit_bulk}, 
\eqref{eq:IinterExplicit_bulk}, 
\eqref{eq:PExplicit_bulk}.
Note that $ \IBil^k(\eGI^k_i,\Vd^\ell\eGI^{k}_j)$ 
differs from the corresponding expression in Lemma \ref{lem:Binbases}
by a sign, this is due to the fact that
the frame of one-forms $dy^0,\dots,dy^3$ used there is negatively oriented, 
while $\Vd^0,\dots,\Vd^3$ used here is positively oriented. \qed
\end{proof}
%-----------------------------%
The following remark parallels Remark \ref{rem:EigenvaluesIbil}.
\begin{remark}\label{rem:EigenvaluesIbil_bulk}
Let $c_0,c_1,c_2,c_3\in\R$. 
For $k=2,3$
consider the symmetric 
$\ngG^{\I}_k\times\ngG^{\I}_k$-matrix whose $ij$-entry is given by 
(recall that $\Vd^0=d\tau$)
\[ 
\IBil^k(\eGI^k_i,(c_0d\tau + \textstyle\sum_{\ell=1}^3 c_\ell\Vd^\ell)\eGI^{k}_j)
\]
Its eigenvalues are
$c_0, c_0\pm |\vec{c}|, c_0\pm \frac{|\vec{c}|}{2}$ when $k=2$
respectively $c_0,c_0\pm \frac{|\vec{c}|}{2}$ when $k=3$,
where $\vec{c}=(c_1,c_2,c_3)$.
Hence for $k=2$ it is positive definite iff 
$c_0 - |\vec{c}| > 0$.
\end{remark}
%-----------------------------%
The following lemma parallels Lemma \ref{lem:gauge_mainprop_i0}.
Define
\begin{equation}\label{eq:Omegafutcyl}
\Omegafut(\cyl) = \{\omega\in\Omega^1(\cyl) \mid \gcyl^{-1}(\omega,\omega)<0,\,\omega(\p_{\tau})>0\}
\end{equation}
%-----------------------------%
\begin{lemma}\label{lem:gauge_mainprop_bulk}
The tuple $(\gxG(\cyl),\gBil)$ is a gauge for $\gx(\cyl)$, in the following sense.
For all $\omega\in\Omegafut(\cyl)$,
left-multiplication $\gxG(\cyl)\to\gx(\cyl)$, $u\mapsto \omega u$ is 
fiberwise injective,
and 
$\gx(\cyl) = \gxG(\cyl)\oplus\omega\gxG(\cyl)$,
where we use the module multiplication \eqref{eq:gmod}.
Moreover, for all $k=0\dots4$:
\begin{enumerate}[label={\textnormal{(G'\arabic*)}}]
\item \label{item:gaugesymNEW_bulk}
$\gBil^k(\,\cdot\,,\omega\,\cdot\,)|_{\gxG^k(\cyl)\times\gxG^k(\cyl)}$ 
is symmetric for all $\omega\in \Omega^1(\cyl)$.
\item \label{item:gaugeposNEW_bulk}
For every $u\in\gxG^k(\cyl)$ and every $\omega\in\Omega^1(\cyl)$ one has
\begin{align}
&\gBil^{k}(u,\omega u) \ge 
\big( 
\omega(\V_0) - \big(\textstyle\sum_{i=1}^3|\omega(\V_i)|^2\big)^{\frac12}
\big)
\gBil^{k}(u, d\tau u)
\label{eq:posineq_bulk}
\end{align}
and if $u\neq0$ then $\gBil^{k}(u, d\tau u)>0$. 
Furthermore, 
\begin{equation}\label{eq:posiffofut_bulk}
\gBil^k(\,\cdot\,,\omega\,\cdot\,)|_{\gxG^k(\cyl)\times\gxG^k(\cyl)}>0
\quad\Leftrightarrow\quad
\omega\in \Omegafut(\cyl)
\end{equation}
\item \label{item:gaugekernelNEW_bulk}
$\gxG^{k+1}(\cyl) = \big\{u\in \gx^{k+1}(\cyl)\mid\gBil^k(\gxG^{k}(\cyl),u)=0 \big\}$
\end{enumerate}
\end{lemma}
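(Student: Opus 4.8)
The plan is to mirror the proof of Lemma \ref{lem:gauge_mainprop_i0}, which treats the analogous statement near spacelike infinity, and to reduce all computations to Lemma \ref{lem:BilBasisCpt} and Remark \ref{rem:EigenvaluesIbil_bulk}. Indeed, the whole point of having written out Definition \ref{def:gauge_bulk} explicitly, together with the explicit bilinear-form identities in Lemma \ref{lem:BilBasisCpt}, is that the proof becomes a word-for-word adaptation of the earlier one, with $\Thom$, $\ghom$, $X_\mu$, $\frac{dy^0}{\s}$, $\Dspcl$ replaced by $\V_0$, $\gcyl$, $\V_\mu$, $d\tau$, $\cyl$ respectively, and with the homogeneous bases replaced by the global bases from Definition \ref{def:elements_bulk}. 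As in the earlier lemma, one may also simply invoke \cite[Proposition 13]{Thesis}, which proves these gauge axioms in a general setting (with \eqref{eq:posineq_bulk} being the second displayed equation on page 93 there); I would state that and then include the self-contained argument below for completeness.

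\textbf{Key steps, in order.} First I would prove \ref{item:gaugesymNEW_bulk}: by the defining formula \eqref{eq:cylgbil} it suffices to check symmetry of $\OmegaBil^k(\,\cdot\,,\omega\,\cdot\,)$ and $\IBil^{k+1}(\,\cdot\,,\omega\,\cdot\,)$ on the gauge submodules, which follows directly from the explicit entries in Lemma \ref{lem:BilBasisCpt} (note in particular that the off-diagonal blocks $\left(\begin{smallmatrix}0 & -A_{k,\ell}\\ (-A_{k,\ell})^T & 0\end{smallmatrix}\right)$ are symmetric). Second, \ref{item:gaugeposNEW_bulk}: by \eqref{eq:cylgbil} the inequality \eqref{eq:posineq_bulk} reduces to the corresponding inequalities for $\OmegaBil^k$ and $\IBil^{k+1}$ separately; the $\Omega$-piece holds by Lemma \ref{lem:BilBasisCpt} (the relevant matrix is $\omega(\V_0)\one$ plus spatial pieces orthogonal in the right sense), and the $\I$-piece holds by Lemma \ref{lem:BilBasisCpt} together with Remark \ref{rem:EigenvaluesIbil_bulk}, whose eigenvalues $c_0, c_0\pm|\vec c|, c_0\pm\tfrac{|\vec c|}{2}$ (with $c_0=\omega(\V_0)$, $\vec c=(\omega(\V_1),\omega(\V_2),\omega(\V_3))$) are all $\ge c_0-|\vec c|$. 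Positivity of $\gBil^k(u,d\tau\,u)$ for $u\neq0$ follows by setting $\omega=d\tau$. The equivalence \eqref{eq:posiffofut_bulk}: "$\Leftarrow$" is immediate from \eqref{eq:posineq_bulk} since $\omega\in\Omegafut(\cyl)$ forces $\omega(\V_0)-(\sum|\omega(\V_i)|^2)^{1/2}>0$; "$\Rightarrow$" follows from Remark \ref{rem:EigenvaluesIbil_bulk}, since if $c_0\le|\vec c|$ some eigenvalue is $\le0$.

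\textbf{The remaining structural claims} then follow formally. Fiberwise injectivity of left-multiplication $u\mapsto\omega u$: if $\omega u=0$ at a point then the left side of \eqref{eq:posineq_bulk} vanishes there, and since $\omega\in\Omegafut(\cyl)$ the coefficient $\omega(\V_0)-(\sum|\omega(\V_i)|^2)^{1/2}$ is strictly positive, forcing $\gBil^k(u,d\tau\,u)=0$ and hence $u=0$ at that point. For the direct sum decomposition, $\gxG^k(\cyl)\cap\omega\gxG^{k-1}(\cyl)=0$ by associativity and injectivity, the sum is contained in $\gx^k(\cyl)$ trivially, and equality of ranks follows from Lemma \ref{lem:gaugebases_bulk} and the rank identities $\ng_k=\ngG_k+\ngG_{k-1}$ recorded in Definition \ref{def:elements_bulk} (via \eqref{eq:nm_i0}). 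Finally \ref{item:gaugekernelNEW_bulk}: "$\subset$" is immediate from the definition of $\IG$ and the interior-multiplication conditions; "$\supset$" is the decomposition argument — write $u=v+\omega v'$ with $v\in\gxG^{k+1}(\cyl)$, $v'\in\gxG^k(\cyl)$, fix $\omega\in\Omegafut(\cyl)$, and from $0=\gBil^k(z,u)=\gBil^k(z,\omega v')$ for all $z\in\gxG^k(\cyl)$ deduce $v'=0$ by taking $z=v'$ and using the strict positivity in \ref{item:gaugeposNEW_bulk}. I do not anticipate a genuine obstacle here: the content has been front-loaded into Lemma \ref{lem:BilBasisCpt} and Remark \ref{rem:EigenvaluesIbil_bulk}, so the only mild care needed is bookkeeping the orientation sign that distinguishes $\Vd^0,\dots,\Vd^3$ (positively oriented) from the $dy^\mu$ used near $\spaceinf$, which already accounts for the sign discrepancies in Lemma \ref{lem:BilBasisCpt} and therefore causes no trouble in the positivity estimates.
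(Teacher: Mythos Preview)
Your proposal is correct and matches the paper's approach exactly: the paper's proof is a single line stating that it is analogous to the proof of Lemma~\ref{lem:gauge_mainprop_i0}, using Lemma~\ref{lem:BilBasisCpt} and Remark~\ref{rem:EigenvaluesIbil_bulk}. You have simply spelled out that analogy in full detail, including the same reduction to the $\Omega$- and $\I$-pieces, the same use of the eigenvalue remark for positivity and for the reverse implication in \eqref{eq:posiffofut_bulk}, and the same formal arguments for injectivity, the direct sum decomposition, and \ref{item:gaugekernelNEW_bulk}.
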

%-----------------------------%
\begin{proof}
This is analogous to the proof of Lemma \ref{lem:gauge_mainprop_i0},
using Lemma \ref{lem:BilBasisCpt} and Remark \ref{rem:EigenvaluesIbil_bulk}.\qed
\end{proof}
%-----------------------------%

\subsubsection{MC-equation as a symmetric hyperbolic system}
This section serves as preparation for Section \ref{sec:existence_bulk}.
We show, using the gauge $(\gxG(\cyl),\gBil)$ 
from Definition \ref{def:gauge_bulk}, that the Einstein equations \eqref{eq:MC} 
are quasilinear symmetric hyperbolic including along 
null and timelike infinity, up to constraints that propagate 
(Lemma \ref{lem:translationofeqcyl}, \ref{lem:aAminkcpt}, \ref{lem:deltacausalcpt}). 

We will use the identifications
\begin{align}\label{eq:identify_g_vec_bulk}
\begin{aligned}
\gxG^k(\cyl) &\simeq C^\infty(\cyl,\R^{\ngG_k}) 
& &\text{using the basis $(\eGg_{i}^k)_{i=1\dots\ngG_k}$ in \eqref{eq:gGbasis_bulk}}\\
\gx^k(\cyl) &\simeq C^\infty(\cyl,\R^{\ng_k}) 
& &\text{using the basis $(\eg_{i}^k)_{i=1\dots\ng_k}$ in \eqref{eq:gbasis_bulk}}
\end{aligned}
\end{align}
%----------------------------%
\begin{definition}
This definition is local to Section \ref{sec:bulk}, 
see Remark \ref{rem:local5}.
For $\mu=0\dots3$ and $\ideg=1,2$ and $\ell=1\dots\ng_k$ let
\begin{align}\label{eq:anchorforms}
(\anchor_{k})_{\ell}^\mu \in \Omega^k(\cyl)
\end{align}
be the unique $k$-form such that for all $f\in C^\infty(\cyl)$:
$\anchorg(\eg_{\ell}^k)(f) 
=
(\anchor_{k})_{\ell}^\mu \V_{\mu}f$.
\end{definition}
%----------------------------%
The $k$-forms \eqref{eq:anchorforms}
are explicitly given as follows. 
Write $\eg_{\ell}^k=(\sum_{i=1}^{10}\omega_i \otimes \KilBasis_i)\oplus\uI\Ieps$,
where $\KilBasis_1,\dots,\KilBasis_{10}$ is a basis of $\Kil$, and where $\omega_i$ are $k$-forms.
Then 
$$(\anchor_{k})_{\ell}^\mu = \tsum_{i=1}^{10} \omega_i \Vd^\mu(\KilBasis_i)$$
which follows \eqref{eq:anchorgdef} and \eqref{eq:anchorLdef}.
In particular, \eqref{eq:anchorforms} are indeed smooth on $\cyl$.
%----------------------------%

%----------------------------%
The following definition parallels Definition \ref{def:SHSarrays_i0}.
%----------------------------%
\begin{definition}\label{def:SHSarrays_bulk}
This definition is local to Section \ref{sec:bulk}, 
see Remark \ref{rem:local5}.
For $\mu=0\dots3$ and $\ideg,k'=1,2$ define
\begin{align}\label{eq:minkarrays_bulk}
\begin{aligned}
\amink_{\ideg}^\mu &\in C^\infty(\cyl,\End(\R^{\ngG_{\ideg}}))\\
\Amink_{\ideg}^\mu &\in C^\infty(\cyl,\Hom(\R^{\ng_1},\End(\R^{\ngG_{\ideg}}))\\
\Lmink_{\ideg} &\in C^\infty(\cyl,\End(\R^{\ngG_{\ideg}}))\\
%----------------------------%
%
\Bmink_{\ideg} &\in  C^\infty(\cyl,\Hom(\R^{\ng_1}\otimes\R^{\ng_{\ideg}},\R^{\ngG_{\ideg}}))\\
\Amink_{k' k}^\mu &\in C^\infty(\cyl,\Hom(\R^{\ng_{k'}},\Hom(\R^{\ng_{k}},\R^{\ngG_{k+k'-1}})))\\
\SFmink &\in C^\infty(\cyl,\Hom(\R^{\ng_2},\R^{\ngG_1}))\\
\ginj_k &\in C^\infty(\cyl, \Hom(\R^{\ngG_k},\R^{\ng_k}))
\end{aligned}
\end{align}
as follows, using $\gBil^k$ in Definition \ref{def:gauge_bulk},
the bases 
\eqref{eq:gGbasis_bulk}, \eqref{eq:gbasis_bulk}, and \eqref{eq:anchorforms}:
\begin{align*}
%-----------------------------
(\amink_{\ideg}^\mu u)_i 
	&= (\amink_{\ideg}^\mu)_{ij} u_{j}
&& \text{where}& 
(\amink_{\ideg}^\mu)_{\outind j}&=\gBil^{\ideg}\big(\eGg_\outind^{\ideg}, \Vd^\mu \eGg^{\ideg}_j\big)\\
%-----------------------------
%-----------------------------
(\Amink_{\ideg}^\mu (v) u)_i
	&= (\Amink_{\ideg}^\mu)_{\ell,ij} v_\ell u_j 
&& \text{where}& 
(\Amink_{\ideg}^\mu)_{\ell,\outind j} &= \gBil^{k}\big(\eGg^{\ideg}_\outind,
(\anchor_1)_\ell^\mu
\eGg^{\ideg}_j\big)\\
%-----------------------------
%-----------------------------
(\Lmink_{\ideg} u)_{i} &= (\Lmink_{\ideg})_{ij}u_j
&& \text{where}& 
(\Lmink_{\ideg})_{\outind j} &= -\gBil^{\ideg}\big(\eGg_\outind^{\ideg}, \dg \eGg^{\ideg}_j\big)\\
%-----------------------------
%-----------------------------
(\Bmink_{\ideg}(v,w))_i &= (\Bmink_{\ideg})_{ij\ell} v_j w_\ell 
&& \text{where}& 
(\Bmink_{\ideg})_{\outind j\ell} &= -\gBil^{\ideg}\big(\eGg_\outind^{\ideg}, [\eg_j^1,\eg_\ell^{\ideg}]\big)\\
%-----------------------------
%-----------------------------
(\Amink_{k'k}^\mu (w')w)_i  &= (\Amink_{k'k}^\mu)_{\ell,i j} w'_\ell w_j
&& \text{where}& 
(\Amink_{k'k}^\mu)_{\ell,i j} &= \gBil^{k+k'-1}\big(\eGg^{k+k'-1}_i, 
(\anchor_{k'})_\ell^\mu
\eg^k_j\big)\\
%-----------------------------
%-----------------------------
(\SFmink v')_i &= \SFmink_{ij} v'_j
&& \text{where}& 
\SFmink_{\outind j} &= -\gBil^{1}\big(\eGg_\outind^1, \eg^2_j\big)\\
(\ginj_k u)_i
&=
\delta_{ij} u_j
\end{align*}
with
$u\in C^\infty(\cyl,\R^{\ngG_k})$, 
$v\in C^\infty(\cyl,\R^{\ng_1})$,
$v'\in C^\infty(\cyl,\R^{\ng_2})$,
$w\in C^\infty(\cyl,\R^{\ng_k})$, 
$w'\in C^\infty(\cyl,\R^{\ng_{k'}})$,
and where the sum over the repeated indices $j$, $\ell$ is implicit.
\end{definition}
%-----------------------------
The components of \eqref{eq:minkarrays_bulk} are indeed smooth
on $\cyl$ (in particular they are smooth on $\overline\diamond$)
because 
$\Vd^\mu$,
\eqref{eq:gGbasis_bulk}, \eqref{eq:gbasis_bulk}, \eqref{eq:anchorforms},
$\gBil^k$ are smooth 
(for $\gBil^k$ use Lemma \ref{lem:BilBasisCpt}), 
and $\dg$, $[\cdot,\cdot]$ 
are differential operators with smooth coefficients on $\cyl$.

Note that $\ginj_k$ is the inclusion $\gxG^k(\cyl)\hookrightarrow \gx^k(\cyl)$,
via the identification \eqref{eq:identify_g_vec_bulk}.

The following lemma parallels Lemma \ref{lem:translationofeq}.
%-------------------------------------%
\begin{lemma}\label{lem:translationofeqcyl}
For all $\cc\in\gxG^1(\cyl)$, 
$v\in\gx^1(\cyl)$ and
$U\in\gxG^2(\cyl)$:
\begin{align*}
\gBil^{1}\left( \eGg_{i}^1, 
\dg (v+\cc) + \tfrac12[v+\cc,v+\cc]\right) e^{\ngG_1}_i 
&=
\big(
\amink_1^{\mu}
+ \Amink_1^{\mu}(v)
+ \Amink_1^{\mu}(\ginj_1 \cc)
\big) \V_{\mu}  \cc
\nonumber\\
&\;\;-
\big(\Lmink_1  \cc - \Amink_{11}^\mu(\ginj_1 \cc) \V_{\mu} v + \Bmink_1(v,\ginj_1 \cc)\big) 
\nonumber\\
&\;\;- \tfrac12\Bmink_1(\ginj_1\cc,\ginj_1\cc) - \SFmink \big(\dg v+\tfrac12[v,v]\big)\nonumber \\[2mm]
%------------------------
\gBil^{2}\left( \eGg_{i}^2, \dg U + [v,U]\right)e^{\ngG_2}_i
&=
\left(
\amink_2^{\mu}
+ \Amink_2^{\mu}(v) \right) \V_{\mu}  U\\
&\;\;-
\big(\Lmink_2  U \new{+} \Amink_{21}^\mu(\ginj_2 U) \V_{\mu} v + \Bmink_2(v,\ginj_2 U)\big)
\nonumber
\end{align*}
where, on the left hand sides, 
$(e_{i}^{\ngG_k})_{i=1\dots\ngG_k}$ is the standard basis of $\R^{\ngG_k}$
and we sum over $i$,
and, 
on the right hand sides, the identification \eqref{eq:identify_g_vec_bulk} is used.
\end{lemma}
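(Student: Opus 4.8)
The plan is to transcribe the proof of Lemma~\ref{lem:translationofeq} into the bulk setting: throughout, replace the homogeneous vector fields $X_\mu=\s\p_{y^\mu}$ by $\V_\mu$, the one-forms $\theta^\mu=\frac{dy^\mu}{\s}$ by $\Vd^\mu$, the basis \eqref{eq:gbasis_i0} by \eqref{eq:gbasis_bulk}, Definition~\ref{def:SHSarrays_i0} by Definition~\ref{def:SHSarrays_bulk}, and the identity \eqref{eq:anchorghom} by the displayed formula for $(\anchor_k)_\ell^\mu$ following \eqref{eq:anchorforms}; the identification \eqref{eq:identify_g_vec_bulk} is used to pass between $\gxG^k(\cyl)$ and $\R^{\ngG_k}$-valued functions. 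As in Lemma~\ref{lem:translationofeq}, I would prove the first displayed identity in detail and obtain the second by the same argument applied to $\dg U+[v,U]$ in place of $\dg(v+\cc)+\tfrac12[v+\cc,v+\cc]$.

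\textbf{Step 1 (expansion).} Writing $u=v+\cc$ and using $\R$-linearity of $\dg$, $\R$-bilinearity of $[\cdot,\cdot]$, and $[v,\cc]=[\cc,v]$ from \eqref{eq:bracketas},
\[
\dg(v+\cc)+\tfrac12[v+\cc,v+\cc]=\dg\cc+[v,\cc]+\tfrac12[\cc,\cc]+\bigl(\dg v+\tfrac12[v,v]\bigr).
\]
By $C^\infty$-bilinearity of $\gBil^1$ it then suffices to establish the bulk analogues of \eqref{eq:dal}, \eqref{eq:[vc]}, \eqref{eq:[cc]} and \eqref{eq:FF}, i.e.\ to rewrite $\gBil^1(\eGg^1_i,\dg\cc)$, $\gBil^1(\eGg^1_i,[v,\cc])$, $\gBil^1(\eGg^1_i,\tfrac12[\cc,\cc])$ and $\gBil^1(\eGg^1_i,\dg v+\tfrac12[v,v])$ in terms of the arrays \eqref{eq:minkarrays_bulk}.

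\textbf{Step 2 (the four sub-identities).} For $\dg\cc$: expand $\cc=\cc_j\eGg^1_j$, apply \eqref{eq:dgmodule} to write $\dg\cc=\ddR(\cc_j)\,\eGg^1_j+\cc_j\,\dg\eGg^1_j$, substitute $\ddR(\cc_j)=(\V_\mu\cc_j)\Vd^\mu$, and read off $\amink_1^\mu\V_\mu\cc-\Lmink_1\cc$ from the definitions of $\amink_1^\mu$ and $\Lmink_1$. For $[v,\cc]$: expand $v=v_\ell\eg^1_\ell$, $\cc=\cc_j\eGg^1_j$ and use \eqref{eq:anchorlin}, \eqref{eq:anchorbracket}, \eqref{eq:bracketas} to write
\[
[v,\cc]=v_\ell\,\anchorg(\eg^1_\ell)(\cc_j)\,\eGg^1_j+\cc_j\,\anchorg(\eGg^1_j)(v_\ell)\,\eg^1_\ell+\cc_j v_\ell\,[\eg^1_\ell,\eGg^1_j];
\]
applying the anchor formula following \eqref{eq:anchorforms}, using $\eGg^1_j=\eg^1_j$ for $j=1\dots\ngG_1$ (Lemma~\ref{lem:gaugebases_bulk}), and the definitions of $\Amink_1^\mu$, $\Amink_{11}^\mu$, $\Bmink_1$ yields $\Amink_1^\mu(v)\V_\mu\cc+\Amink_{11}^\mu(\ginj_1\cc)\V_\mu v-\Bmink_1(v,\ginj_1\cc)$. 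The $\tfrac12[\cc,\cc]$ contribution follows from the $[v,\cc]$ identity with $v$ replaced by $\ginj_1\cc$, together with $\Amink_{11}^\mu(\ginj_1\cc)\V_\mu\ginj_1\cc=\Amink_1^\mu(\ginj_1\cc)\V_\mu\cc$. Finally, for $\dg v+\tfrac12[v,v]$: expand $\dg v+\tfrac12[v,v]=V_r\eg^2_r$ and invoke the definition of $\SFmink$ to get $-\SFmink(\dg v+\tfrac12[v,v])$. Collecting the four contributions gives the first displayed identity.

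\textbf{Step 3 (second identity; main difficulty).} For the second identity I would run the identical four-term expansion on $\dg U+[v,U]$ with $U\in\gxG^2(\cyl)$, using \eqref{eq:dgmodule}, \eqref{eq:anchorbracket}, \eqref{eq:anchorlin}, \eqref{eq:bracketas}, and reading off $\amink_2^\mu$, $\Amink_2^\mu$, $\Lmink_2$, $\Amink_{21}^\mu$, $\Bmink_2$ from Definition~\ref{def:SHSarrays_bulk}, exactly as in Lemma~\ref{lem:translationofeq}. There is no substantive obstacle: the computation is a routine repackaging of the differential graded Lie algebroid identities of Theorem~\ref{thm:gaxioms}, and the only points requiring care are bookkeeping — keeping the index ranges straight, using the coincidence of the first $\ngG_k$ elements of $(\eg^k_i)$ with $(\eGg^k_i)$, and checking that the signs inherited from \eqref{eq:bracketas} and \eqref{eq:dgmodule} land correctly.
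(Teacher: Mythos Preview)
Your proposal is correct and matches the paper's approach exactly: the paper's proof consists solely of the sentence ``Analogous to the proof of Lemma~\ref{lem:translationofeq},'' and you have carried out precisely that transcription. The only minor imprecision is calling the argument for the second identity a ``four-term expansion'' --- there are only two terms $\dg U$ and $[v,U]$ --- but you correctly flag the sign bookkeeping from \eqref{eq:bracketas} as the place requiring care, which is indeed where the degree-$2$ case differs (the sign in front of $\Amink_{21}^\mu$ flips relative to $\Amink_{11}^\mu$).
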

\begin{proof}
Analogous to the proof of Lemma \ref{lem:translationofeq}. \qed
\end{proof}
%-------------------------------------%
In the remainder we show symmetry and positivity properties. 
%-------------------------------------%
\begin{lemma}\label{lem:aAminkcpt}
For every $\mu=0\dots3$ and $k=1,2$:
\begin{enumerate}[({i}1)]
\item \label{item:aminkcpt}
$\amink^\mu_{k} \in C^\infty(\cyl,\End(\R^{\ngG_k}))$
is a symmetric matrix at every point on $\cyl$,
and its entries are constant, i.e.~in $\R$.
Further $\amink^0_{k}=\one$, and for every $\omega\in\Omega^1(\cyl)$:
\begin{equation}\label{eq:aineqcyl}
\textstyle
\omega(\amink^\mu_{k}\V_\mu) \ge 
\big( \omega(\V_0)-\big(\sum_{i=1}^3|\omega(\V_i)|^2\big)^{\frac12} \big) \one
\end{equation}

\item \label{item:Aminkcpt}
Let $(e_\ell^{\ng_1})_{\ell=1\dots\ng_1}$ be the standard basis of $\R^{\ng_1}$.
For every $\ell=1\dots\ng_1$, 
$\Amink^\mu_{k}(e_\ell^{\ng_1}) \in C^\infty(\cyl,\End(\R^{\ngG_k}))$
is a symmetric matrix at every point on $\cyl$.
\end{enumerate}
\end{lemma}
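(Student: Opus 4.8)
The plan is to prove Lemma \ref{lem:aAminkcpt} exactly as the analogous Lemma \ref{lem:aALproperties} was proved near spacelike infinity, replacing $\Dspcl$ by $\cyl$, the one-forms $\frac{dy^\mu}{\s}$ by $\Vd^\mu$, the vector fields $X_\mu = \s\p_{y^\mu}$ by $\V_\mu$, and the gauge data of Definition \ref{def:gauges_i0} by the gauge data of Definition \ref{def:gauge_bulk}. The whole lemma is purely algebraic and fiberwise, so it reduces to unwinding the definitions in Definition \ref{def:SHSarrays_bulk} together with the explicit form of the bilinear forms in Lemma \ref{lem:BilBasisCpt}.

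For \ref{item:aminkcpt}: by Definition \ref{def:SHSarrays_bulk} the components are $(\amink_k^\mu)_{ij} = \gBil^k(\eGg_i^k, \Vd^\mu \eGg_j^k)$. Symmetry in $ij$ follows from \ref{item:gaugesymNEW_bulk} of Lemma \ref{lem:gauge_mainprop_bulk} (applied with $\omega = \Vd^\mu$). That the entries are constants (elements of $\R$) follows from Lemma \ref{lem:BilBasisCpt}: writing $\Vd^\mu = \delta_{\mu0}d\tau + \sum_{\ell=1}^3 \delta_{\mu\ell}\Vd^\ell$ and using the block decomposition $\R^{\ngG_k} = \R^{6\ngG_k^\Omega}\oplus\R^{4\ngG_k^\Omega}\oplus\R^{\ngG_{k+1}^\I}$ compatible with the basis \eqref{eq:gGbasis_bulk}, the expressions $\OmegaBil^k(\eGO_i^k,\Vd^\mu\eGO_j^k)$ and $\IBil^{k+1}(\eGI_i^{k+1},\Vd^\mu\eGI_j^{k+1})$ are all constant by Lemma \ref{lem:BilBasisCpt}, and the cross terms between boosts, translations and $\I$ vanish because multiplication by a one-form preserves each of the three direct summands $\Omega(\cyl)\otimesRR\boosts$, $\Omega(\cyl)\otimesRR\transl$, $\I(\cyl)$. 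Evaluating at $\mu=0$ gives $\OmegaBil^k(\eGO_i^k,d\tau\eGO_j^k)=\delta_{ij}$ and $\IBil^{k+1}(\eGI_i^{k+1},d\tau\eGI_j^{k+1})=\delta_{ij}$, hence $\amink_k^0 = \one$. Finally \eqref{eq:aineqcyl} follows from \eqref{eq:posineq_bulk} of Lemma \ref{lem:gauge_mainprop_bulk} restricted to $u\in\gxG^k(\cyl)$: one has $\omega((\amink_k^\mu)_{ij}\V_\mu) = \gBil^k(\eGg_i^k,\omega\eGg_j^k)$ as a matrix-valued identity, and $\gBil^k(u,\frac{d\tau}{1}u) = \gBil^k(u,d\tau u)$ corresponds to $\amink_k^0 = \one$, so \eqref{eq:posineq_bulk} with $X_0,X_i$ there replaced by $\V_0,\V_i$ and $\frac{dy^0}{\s}$ replaced by $d\tau$ gives precisely the claimed matrix inequality.

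For \ref{item:Aminkcpt}: by Definition \ref{def:SHSarrays_bulk} the components of $\Amink_k^\mu(e_\ell^{\ng_1})$ are $\gBil^k(\eGg_i^k, (\anchor_1)_\ell^\mu \eGg_j^k)$, where $(\anchor_1)_\ell^\mu\in\Omega^1(\cyl)$ is smooth on $\cyl$ (in particular on $\overline\diamond$) by the discussion following \eqref{eq:anchorforms}. Symmetry in $ij$ is again \ref{item:gaugesymNEW_bulk} of Lemma \ref{lem:gauge_mainprop_bulk}, applied with the one-form $\omega = (\anchor_1)_\ell^\mu$. Smoothness on $\cyl$ is immediate since $\gBil^k$ has smooth (indeed locally constant) coefficients by Lemma \ref{lem:BilBasisCpt} and $(\anchor_1)_\ell^\mu$ is smooth.

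I do not anticipate a serious obstacle: everything is a direct translation of the corresponding arguments in Section \ref{sec:gauge_i0}, and the paper itself flags (''The constructions in Section \ref{sec:Gauge_cyl} parallel those in Section \ref{sec:gauge_i0}'') that these are routine repetitions. The only point requiring a little care is making the correspondence between the matrix identity $\omega(\amink_k^\mu\V_\mu) = (\gBil^k(\eGg_i^k,\omega\eGg_j^k))_{ij}$ and the scalar inequality \eqref{eq:posineq_bulk} fully precise — i.e.\ verifying that \eqref{eq:posineq_bulk} is really an inequality of quadratic forms that one may read componentwise after choosing the basis \eqref{eq:gGbasis_bulk} — but this is exactly the same bookkeeping as in the proof of \eqref{eq:aineq} in Lemma \ref{lem:aALproperties}, using $\amink_k^0 = \one$ to identify the right-hand side. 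Note that, unlike the spacelike-infinity case, there is no homogeneity statement to prove here and no block-triangular structure of $\Lmink$ to establish (the lemma only asserts symmetry and the positivity inequality), so the argument is strictly shorter than its counterpart.
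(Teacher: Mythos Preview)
Your proposal is correct and follows essentially the same approach as the paper: symmetry via \ref{item:gaugesymNEW_bulk}, constancy and $\amink_k^0=\one$ via Lemma \ref{lem:BilBasisCpt} and \eqref{eq:cylgbil}, the inequality \eqref{eq:aineqcyl} via the identity $\omega((\amink_k^\mu)_{ij}\V_\mu)=\gBil^k(\eGg_i^k,\omega\eGg_j^k)$ together with \eqref{eq:posineq_bulk} and $\amink_k^0=\one$, and \ref{item:Aminkcpt} via \ref{item:gaugesymNEW_bulk} applied with $\omega=(\anchor_1)_\ell^\mu$. The paper's proof is just a terser version of what you wrote.
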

%-------------------------------------%
\begin{proof}
\ref{item:aminkcpt}: 
Except for \eqref{eq:aineqcyl} this follows from 
\eqref{eq:cylgbil} and Lemma \ref{lem:BilBasisCpt}.
The inequality \eqref{eq:aineqcyl} follows from 
$
\omega((\amink^\mu_{k})_{ij}\V_\mu)
=
\gBil^{\ideg}\big(\eGg_\outind^{\ideg}, \omega \eGg^{\ideg}_j\big)
$ and \eqref{eq:posineq_bulk} and $\amink^0_{k}=\one$. 
\ref{item:Aminkcpt}: This follows from 
$(\Amink^\mu_{k}(e_\ell^{\ng_1}))_{ij}
=
\gBil^{\new{k}}(\eGg^{k}_i, 
(\anchor_1)_\ell^\mu \eGg^{k}_j)$ and 
\ref{item:gaugesymNEW_bulk}.
\qed
\end{proof}
%-----------------------------%
\begin{lemma}\label{lem:deltacausalcpt}
There exists $\deltabulk\in(0,1]$
such that for all $k=1,2$ and 
for all $u\in\R^{\ng_1}$ with $\sqrt{u^Tu}\le 2\deltabulk$ one has
\begin{subequations}
\begin{align}
\tfrac12 \one 
	&\le 
	\amink_k^{0}
	+ \Amink_k^{0}(u) 
	\le
	\new{2}\one
	&& \text{at every point on $\diamond_+$}
	\label{eq:dtauposcpt}\\
\tfrac{1}{20} \one  &\le 
	d\s\left( 
	\big(
	\amink_k^{\mu} 
	+ \Amink_k^{\mu}(u) 
	\big) \V_{\mu} 
	\right)
	&& \text{at every point on  $\Dspop_{\le1}$}
	\label{eq:dsposcpt}\\
0 &< 
	d\fctfol\left( 
	\big(
	\amink_k^{\mu} 
	+ \Amink_k^{\mu}(u) 
	\big) \V_{\mu} 
	\right)
	&& \text{at every point on  $\diamond_+$}
	\label{eq:dphiposcpt}
\end{align}
\end{subequations}
where $\fctfol$ is defined in \eqref{eq:phidef},
and $\s=2y^0+|\vec{y}|$ in \eqref{eq:sdef}.
%The matrices in the middle are symmetric by Lemma \ref{lem:aAminkcpt}.
\end{lemma}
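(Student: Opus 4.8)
The plan is to follow the template of Lemma~\ref{lem:qminkdelta}: for each of the three matrices appearing in \eqref{eq:dtauposcpt}--\eqref{eq:dphiposcpt} I would split off the leading part, built from $\amink_k^\mu$, and estimate the perturbation $\Amink_k^\mu(u)$ by something small once $\sqrt{u^Tu}$ is small, the leading part being controlled by the causality inequality \eqref{eq:aineqcyl} of Lemma~\ref{lem:aAminkcpt}. Only finitely many smallness requirements on $\deltabulk$ will arise (two values of $k$ times three estimates), so one takes $\deltabulk$ to be their minimum at the end. For \eqref{eq:dtauposcpt} this is immediate: Lemma~\ref{lem:aAminkcpt}\ref{item:aminkcpt} gives $\amink_k^0=\one$, and $\Amink_k^0(u)=\sum_{\ell=1}^{\ng_1}u_\ell\Amink_k^0(e_\ell^{\ng_1})$ has $\ell^2$-operator norm $\le C\sqrt{u^Tu}$ on the compact set $\overline{\diamond}_+$ since the components of $\Amink_k^0(e_\ell^{\ng_1})$ are smooth on $\cyl$; hence $\amink_k^0+\Amink_k^0(u)\in[\tfrac12\one,2\one]$ for $\deltabulk$ small.

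For \eqref{eq:dsposcpt} I would apply \eqref{eq:aineqcyl} with $\omega=d\s$, obtaining $d\s(\amink_k^\mu\V_\mu)\ge f\one$ on $\Dspop_{\le1}$ with $f=d\s(\V_0)-\big(\sum_{i=1}^3|d\s(\V_i)|^2\big)^{1/2}$. The decisive computation — the point of the factor $2$ in $\s=2y^0+|\vec y|$ — is the identity $\eta^{\mu\nu}\partial_{y^\mu}\s\,\partial_{y^\nu}\s=-4+1=-3$ in $y$-coordinates, whence $-\gcyl^{-1}(d\s,d\s)=3\ynullgen^{-2}$ and
\[
f \;=\; \frac{3\ynullgen^{-2}}{d\s(\V_0)+\big(\sum_{i=1}^3|d\s(\V_i)|^2\big)^{1/2}}\,.
\]
One checks $d\s(\V_0)=\partial_\tau\s>0$ on $\Dspop_{\le1}$ directly, and that $f$ extends continuously and positively to the compact set $\overline{\Dspop_{\le1}}$: by Remark~\ref{rem:Dstau} this set lies in a fixed compact region of $\cyl$, $\ynullgen$ stays bounded above there, $\vec y$ vanishes on it only at $\spaceinf$ (so $\s$, hence $d\s$ paired with the smooth frame $\V_\mu$, is bounded, $\s$ being Lipschitz near $\spaceinf$), and $f$ in fact tends to $\tfrac12$ at $\spaceinf$. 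Thus $c_0:=\inf_{\Dspop_{\le1}}f$ is a fixed positive number, which one verifies exceeds $\tfrac1{20}$. Since $\|d\s(\Amink_k^\mu(u)\V_\mu)\|\le C\sqrt{u^Tu}\,\max_\mu\sup_{\Dspop_{\le1}}|d\s(\V_\mu)|$, shrinking $\deltabulk$ gives $d\s\big((\amink_k^\mu+\Amink_k^\mu(u))\V_\mu\big)\ge\tfrac1{20}\one$ on $\Dspop_{\le1}$.

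For \eqref{eq:dphiposcpt} I would again use \eqref{eq:aineqcyl}, now with $\omega=d\fctfol$: by the definition \eqref{eq:Phidef} of $\Phi$ this gives $d\fctfol(\amink_k^\mu\V_\mu)\ge\Phi\one$ on $\diamond_+$. Since $\Phi$ degenerates near $\timeinf$, the perturbation must vanish at the same rate, and this is precisely the content of \eqref{eq:KilBounds}. Expanding through the anchor $(\anchor_1)_\ell^\mu=\sum_{i'}\omega_{i'}\Vd^\mu(\KilBasis_{i'})$ (Definition~\ref{def:SHSarrays_bulk}), the $(i,j)$ entry of $\Amink_k^\mu(u)\,d\fctfol(\V_\mu)$ equals $\sum_\ell u_\ell\,\gBil^k\big(\eGg^k_i,\big(\sum_{i'}\omega_{i'}\,d\fctfol(\KilBasis_{i'})\big)\eGg^k_j\big)$, where each $\KilBasis_{i'}$ is a boost or a translation, so $|d\fctfol(\KilBasis_{i'})|\le3\Phi$; as the basis one-forms $\omega_{i'}$ are bounded on $\overline{\diamond}_+$ and $\gBil^k$ has constant coefficients (Lemma~\ref{lem:BilBasisCpt}), $\|d\fctfol(\Amink_k^\mu(u)\V_\mu)\|\le C\sqrt{u^Tu}\,\Phi$. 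Hence $d\fctfol\big((\amink_k^\mu+\Amink_k^\mu(u))\V_\mu\big)\ge(1-C\sqrt{u^Tu})\Phi\one>0$ on $\diamond_+$ once $\deltabulk<1/(4C)$, using $\Phi>0$ on $\diamond_+$ from \eqref{eq:Phi/h}. Taking $\deltabulk$ to be the minimum of the finitely many bounds found above finishes the proof.

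The hard part is \eqref{eq:dsposcpt}, i.e. the quantitative spacelikeness of the level sets of $\s$ for $[\gcyl]$, expressed in the frame $\V_\mu$: the algebraic identity $\eta^{\mu\nu}\partial_{y^\mu}\s\,\partial_{y^\nu}\s=-3$ makes positivity of $f$ transparent, but confirming the numerical margin $\inf_{\Dspop_{\le1}}f>\tfrac1{20}$ together with the boundedness of $d\s(\V_\mu)$ up to $\spaceinf$ and up to the portion of $\fnullinf$ contained in $\Dspcl_{\le1}$ is where the bookkeeping lies. Parts \eqref{eq:dtauposcpt} and \eqref{eq:dphiposcpt} are essentially immediate from Lemmas~\ref{lem:aAminkcpt} and \ref{lem:dphiproperties}.
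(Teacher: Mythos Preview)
Your approach is correct and matches the paper's in structure: split each matrix into the $\amink_k^\mu$-part, controlled by \eqref{eq:aineqcyl}, and the $\Amink_k^\mu(u)$-perturbation, absorbed by smallness of $u$. Parts \eqref{eq:dtauposcpt} and \eqref{eq:dphiposcpt} are handled essentially identically to the paper; in particular your use of \eqref{eq:KilBounds} to make the perturbation in \eqref{eq:dphiposcpt} vanish at the same rate $\Phi$ as the leading term is exactly what the paper does.

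For \eqref{eq:dsposcpt} you take a slightly different route. The paper computes $d\s(\V_0)-\big(\sum_i|d\s(\V_i)|^2\big)^{1/2}$ directly in $(\tau,\xi)$ coordinates, obtaining the closed form $\ynullgen^{-2}\big(1+\cos\tau\,\xi^4-\sin\tau\sqrt{1-(\xi^4)^2}\big)$, which it bounds below by $\tfrac1{10}$ using the ranges in Remark~\ref{rem:Dstau}. You instead factor $a-b=(a^2-b^2)/(a+b)$ and use the conformal identity $-\gcyl^{-1}(d\s,d\s)=3\ynullgen^{-2}$; this makes positivity of $f$ transparent but leaves the numerical bound $\inf f>\tfrac1{20}$ (and the uniform boundedness of $d\s(\V_\mu)$ up to $\spaceinf$, where $|\vec y|$ is only Lipschitz) to a compactness argument you sketch rather than carry out. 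The paper's explicit formula yields the constant directly. For the perturbation, the paper rewrites $d\s\big(\Amink_k^\mu(u)\V_\mu\big)_{ij}$ via the anchor as $\sum_\ell u_\ell\,\gBil^k(\eGg_i^k,\omega_\ell\eGg_j^k)\,\KilBasis_\ell(\s)$ and bounds $|\KilBasis_\ell(\s)|\lesssim 1$ on $\Dspop_{\le1}$ from \eqref{eq:TBy}; your cruder bound via $\sup|d\s(\V_\mu)|$ also works but is less structural. Either way the argument goes through.
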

%-----------------------------%
\begin{proof}
We show separately that
\eqref{eq:dtauposcpt}, \eqref{eq:dsposcpt}, \eqref{eq:dphiposcpt}
hold for all sufficiently small $\deltabulk$.
\eqref{eq:dtauposcpt}:
By Lemma \ref{lem:aAminkcpt} we have $\amink^0_{k}=\one$.
Choose $\deltabulk$ sufficiently small so that 
$2\deltabulk\tsum_{\ell=1}^{\ng_1}
	\|\Amink_k^{0}(e_\ell^{\ng_1})\|
	\le \tfrac{1}{4}$ at every point on $\diamond_+$,
using the $\ell^2$-matrix norm.
Such a $\deltabulk$ exists because 
$\Amink_k^{0}(e_\ell^{\ng_1})$ is smooth on $\cyl$
and $\overline{\diamond}_+\subset\cyl$ is compact.
Then \eqref{eq:dtauposcpt} holds by a calculation analogous
to that in the proof of Lemma \ref{lem:qminkdelta}.

\eqref{eq:dsposcpt}:
\claimheader{Claim:} At every point on $\Dspop_{\le1}$,
\begin{subequations}
\begin{align}
d\s(\amink^\mu_{k}\V_\mu)
	&\ge \tfrac{1}{10}\one\label{eq:dsaC}\\
\|d\s(\Amink_k^{\mu}(u) \V_{\mu})\|
	&\lesssim 
	\sqrt{u^Tu}
	\label{eq:dsAcpt}
\end{align}
\end{subequations}
%where the subscript $ij$ refers to the matrix entries.
Proof of \eqref{eq:dsaC}: By \eqref{eq:aineqcyl}, 
\begin{align*}
d\s(\amink^\mu_{k}\V_\mu)
&\ge
\textstyle
\big(d\s(\V_0)-
\big(\sum_{i=1}^3|d\s(\V_i)|^2\big)^{\frac12}\big) \one \\
&\overset{(1)}{=}
\tfrac{1}{\ynullgen^2}
\big(
1 + \cos(\tau)\xi^4 - \sin(\tau) \sqrt{1-(\xi^4)^2}
\big) \one \\
&\overset{(2)}{\ge}
\tfrac{1}{\ynullgen^2} 
(1-\sin(\arctan(\tfrac23))) \one 
\overset{(3)}{\ge}
\tfrac{1}{10}\one
\end{align*}
where (1) holds by direct calculation using \eqref{eq:yycoords};
%%%%%%%%%%%%%%%%%%%%%%%%%%%%%%
%VS3[i_,j_][f_]:=\[Xi][i]*D[f,\[Xi][j]]-\[Xi][j]*D[f,\[Xi][i]];
%v[0][f_]:=D[f,\[Tau]];
%v[1][f_]:=VS3[4,1][f]+VS3[2,3][f];
%v[2][f_]:=VS3[4,2][f]+VS3[3,1][f];
%v[3][f_]:=VS3[4,3][f]+VS3[1,2][f];
%ys=Array[y,4,0];
%yvec=Rest[ys];
%ry=Sqrt[yvec.yvec];
%hslash=Cos[\[Tau]]+\[Xi][4];
%yreplace={y[0]->Sin[\[Tau]]/hslash,y[i:1|2|3]:>\[Xi][i]/hslash};
%s=2y[0]+ry;
%simp[X_]:=Simplify[X,Assumptions->hslash>0];
%s\[Tau]\[Xi]=s/.yreplace/.\[Xi][3]->Sqrt[1-(\[Xi][1]^2+\[Xi][2]^2+\[Xi][4]^2)]//simp
%v[0][s\[Tau]\[Xi]]//simp
%v[1][s\[Tau]\[Xi]]//simp
%v[2][s\[Tau]\[Xi]]//simp
%v[3][s\[Tau]\[Xi]]//simp
%%%%%%%%%%%%%%%%%%%%%%%%%%%%%%
for (2) we use 
Remark \ref{rem:Dstau};
%%%%%%%%%%%%%%%%%%%%%%%%%%%%%%
%s<=1
%2y0+ry<=1
%2sin[tau]+rxi <= xi4+cos[tau] 
%Sqrt[1-xi4^2] <= 1 + xi4
%Get 0 <= xi4
%
%On D cap D':
%0 < cos[tau] - xi4
%0 < cos[tau] + xi4
%0 < cos[tau] 
%Hence tau in [0,pi/2]
%
%3y0 <= 2y0 + ry = s <= 1
%y0 <= 1/3
%sin[tau] <= 1/3 * (cos[tau] + xi4)
%On D: 0 < (cos[tau] - xi4) => xi4 <= cos[tau]
%Thus
%sin[tau] <= 2/3 * cos[tau]
%Thus
%tan[tau] <= 2/3
%For tau in [0,pi/2] tan is increasing.
%%%%%%%%%%%%%%%%%%%%%%%%%%%%%%
for (3) we use $\ynullgen =\cos(\tau)+\xi^4 \le2$.

Proof of \eqref{eq:dsAcpt}:
Let $(\Amink_k^{\mu}(u))_{ij}$ be the components
of the matrix $\Amink_k^{\mu}(u)$.
Expand $u= u_{\ell} e^{\ng_1}_\ell$ with implicit 
sum over $\ell$.
Analogously to \eqref{eq:Aanchorcalc_i0} one has
\begin{align*}
d\s\big((\Amink_k^{\mu}(u))_{ij} \V_{\mu}\big)
=
u_\ell \gBil^{\new{k}}\big(\eGg^{k}_i, \anchorg(\eg^1_{\ell})(\s)\eGg^{k}_j \big)
\end{align*}
For each $\ell$ one has 
$\eg^1_{\ell}=(\omega_\ell\otimes \KilBasis_\ell)\oplus u_{\I,\ell}\Ieps$,
where $\omega_\ell$ is either zero or one of $\Vd^\mu$, 
and $\KilBasis_\ell$ is a basis element \eqref{eq:Kilbas},
see \eqref{eq:gbasis_bulk}.
Then, using \eqref{eq:anchorgdef},
\begin{align}\label{eq:dsAA}
d\s\big((\Amink_k^{\mu}(u))_{ij} \V_{\mu}\big)
&=
u_\ell \gBil^{\new{k}}\left(\eGg^{k}_i, \omega_\ell \eGg^{k}_j \right) \KilBasis_\ell(\s)
\end{align}
Using \eqref{eq:TBy} one obtains
$|\KilBasis_{\ell}(\s)| \lesssim 1$ on $\Dspop_{\le1}$.
Thus for each $i,j$,
\begin{align*}
|d\s\big((\Amink_k^{\mu}(u))_{ij} \V_{\mu}\big)|
\lesssim 
|u_\ell|
|\gBil^{\new{k}}\left(\eGg^{k}_i, \omega_\ell \eGg^{k}_j\right)|
\lesssim
\sqrt{u^Tu}
\end{align*}
where the last inequality holds by 
\eqref{eq:cylgbil} and Lemma \ref{lem:BilBasisCpt}.
This implies \eqref{eq:dsAcpt}.

Choose $\deltabulk\in(0,1]$ sufficiently small so that 
for all $u\in\R^{\ng_1}$ with $\sqrt{u^Tu}\le 2\deltabulk$,
\[ 
\|d\s\big((\Amink_k^{\mu}(u)) \V_{\mu}\big)\| \le \tfrac{1}{20}
\]
Such a $\deltabulk$ exists by \eqref{eq:dsAcpt}.
Together with \eqref{eq:dsaC} this yields \eqref{eq:dsposcpt}.

\eqref{eq:dphiposcpt}:
By \eqref{eq:aineqcyl} and \eqref{eq:Phidef},
\begin{equation}\label{eq:phiaV}
	\Phi \one 
	\le
	d\fctfol
	\big(
	\amink_k^{\mu} 
	\V_{\mu}\big)
\end{equation}
Analogously to \eqref{eq:dsAA} one has
\[ 
d\fctfol\big((\Amink_k^{\mu}(u))_{ij} \V_{\mu}\big)
=
u_\ell \gBil^{\new{k}}\left(\eGg^{k}_i, \omega_\ell \eGg^{k}_j \right) \KilBasis_\ell(\fctfol)
\]
where we sum over $\ell$,
and where for each $\ell$, 
the one-form $\omega_\ell$ is either zero or one of $\Vd^{\mu}$,
and $\KilBasis_\ell$ is a basis element \eqref{eq:Kilbas}.
Together with \eqref{eq:KilBounds} this yields
\begin{align}\label{eq:phiAV}
|d\fctfol\big((\Amink_k^{\mu}(u))_{ij} \V_{\mu}\big)|
\lesssim
\sqrt{u^Tu}\, \Phi
\end{align}
By \eqref{eq:phiaV} and \eqref{eq:phiAV}, 
and by choosing $\deltabulk\in(0,1]$ sufficiently small,
one obtains
\[ 
\tfrac12 \Phi \one 
\le
d\fctfol\left( 
	\big(
	\amink_k^{\mu} 
	+ \Amink_k^{\mu}(u) 
	\big) \V_{\mu} 
	\right)
\]
By \eqref{eq:Phi/h} we have $\Phi>0$ on $\diamond_+$,
thus \eqref{eq:dphiposcpt} follows.\qed
\end{proof}
%-------------------------------------%

\subsection{Main existence result}
\label{sec:existence_bulk}

We state and prove Proposition \ref{prop:MainCpt}, 
the main result of Section \ref{sec:bulk}.

Let $(\gxG(\cyl),\gBil)$ be the gauge in Definition \ref{def:gauge_bulk}.
Denote by $\gxG(\diamond_+)$ the space of sections of 
$\gxG$ over $\diamond_+$, c.f.~Remark \ref{rem:gdiamond}.
We use the norms in Definition \ref{def:bulknorms}.
%------------------------------------%
\begin{prop}\label{prop:MainCpt}
For all 
\[ 
\NN\in\Z_{\ge\new{7}}
\qquad
s_*\in(0,1]
\]
there exist 
$\ClargeCpt>0$
and
$\CsmallCpt\in(0,1]$ such that for all 
$v \in \gx^1(\diamond_+)$, if
\begin{enumerate}[({j}1)]
\item \label{item:vconstraintsD}
$\Pconstraints(v|_{\tau=0})=0$, see Definition \ref{def:Pconstraints}
\item \label{item:MCvsp}
$\dg v + \frac12[v,v]=0$ on $\Dspop_{\le s_*}$
\item \label{item:vsupp}
$v|_{\tau\ge\frac\pi2}=0$
\item \label{item:vNewBound}
$\int_{0}^{\new{\frac\pi2}}\|v\|_{\sH^{\NN+1}(\diamond_{\tau',s_*})}d\tau'
	\le \CsmallCpt$
\end{enumerate}
then there exists $c\in \gxG^1(\diamond_+)$ such that 
\begin{subequations}\label{eq:Cptcprop}
\begin{align}
\dg(v+c) + \tfrac12[v+c,v+c] 
	&= 0 \label{eq:ceqD}\\
c|_{\tau=0} 
	&=0 \label{eq:cdataD}\\
c|_{\Dspop_{\le s_*}} 
	&= 0 \label{eq:czeroD}
\end{align}
\end{subequations}
Furthermore:
\begin{itemize}
\item 
\new{\textbf{Part 0.} $c$ is unique.}
\item 
\textbf{Part 1.}
For all $\tau\in[0,\pi)$:
\begin{subequations}\label{eq:cbulkP1}
\begin{align}
\|c\|_{\H^{\NN}(\diamond_{\tau,s_*})} 
	&\le \ClargeCpt 
	\tint_0^{\tau}\|v\|_{\sH^{\NN+1}(\diamond_{\tau',s_*})}d\tau'
	\label{eq:Hcbulk}
	\\
\|c\|_{\sH^{\NN}(\diamond_{\tau,s_*})} 
	&\le \ClargeCpt 
	(\tint_0^{\tau}\|v\|_{\sH^{\NN+1}(\diamond_{\tau',s_*})}d\tau'
	+
	\|v\|_{\sH^{\NN}(\diamond_{\tau,s_*})})
	\label{eq:SHcEEDNEW}
\end{align}
\end{subequations}
Moreover, 
$\|v\|_{\sH^{\NN}(\diamond_{\tau,s_*})}
\lesssim_{\NN,\sfix}
\tint_0^{\frac\pi2}\|v\|_{\sH^{\NN+1}(\diamond_{\tau',s_*})}d\tau'$.
\item 
\textbf{Part 2.}
For every $\kk\in\Z_{\ge\NN}$ and every $\CHigherCpt\new{>}0$, if 
\begin{enumerate}[({j}1),resume]
\item \label{item:vHigherNewBound}
$\tint_0^{\new{\frac\pi2}}\|v\|_{\sH^{\kk+1}(\diamond_{\tau',s_*})}d\tau'
	\le \CHigherCpt$
\end{enumerate}
then for all $\tau\in[0,\pi)$:
\begin{subequations}\label{eq:bulkh12}
\begin{align}
\|c\|_{\H^{\kk}(\diamond_{\tau,s_*})} 
	&\lesssim_{\kk,s_*,\CHigherCpt} 
	\tint_0^{\tau}
	\|v\|_{\sH^{\kk+1}(\diamond_{\tau',s_*})}d\tau'
	\label{eq:chigherEECptNEW}\\
\|c\|_{\sH^{\kk}(\diamond_{\tau,s_*})} 
	&\lesssim_{\kk,s_*,\CHigherCpt} 
	\tint_0^{\tau}
	\|v\|_{\sH^{\kk+1}(\diamond_{\tau',s_*})}d\tau'
	+
	\|v\|_{\sH^{\kk}(\diamond_{\tau,s_*})} 
	\label{eq:chigherEECptNEW1}
\end{align}
\end{subequations}
Moreover, 
$\|v\|_{\sH^{k}(\diamond_{\tau,s_*})}
\lesssim_{k,\sfix}
\tint_0^{\frac\pi2}\|v\|_{\sH^{k+1}(\diamond_{\tau',s_*})}d\tau'$.
\end{itemize}
\end{prop}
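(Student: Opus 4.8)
The plan is to run the strategy sketched at the start of Section~\ref{sec:reformSHS_i0}, now in the gauge $(\gxG(\cyl),\gBil)$ of Definition~\ref{def:gauge_bulk}. First I would impose the gauge condition $c\in\gxG^1(\diamond_+)$ and pass to the necessary square subsystem $\gBil^1(\,\cdot\,,\dg(v+c)+\tfrac12[v+c,v+c])=0$, which by Lemma~\ref{lem:translationofeqcyl} reads, after the identification \eqref{eq:identify_g_vec_bulk},
\[
(\amink_1^\mu+\Amink_1^\mu(v)+\Amink_1^\mu(\ginj_1 c))\,\V_\mu c
=\Lmink_1 c-\Amink_{11}^\mu(\ginj_1 c)\V_\mu v+\Bmink_1(v,\ginj_1 c)+\tfrac12\Bmink_1(\ginj_1 c,\ginj_1 c)+\SFmink\big(\dg v+\tfrac12[v,v]\big),
\]
together with $c|_{\tau=0}=0$. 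By Lemma~\ref{lem:aAminkcpt} the principal part is symmetric; and by Lemma~\ref{lem:deltacausalcpt}, once $\sqrt{(v+\ginj_1 c)^T(v+\ginj_1 c)}\le2\deltabulk$, the symbol is positive on $d\tau$ (so $\{\tau=\text{const}\}$ are admissible initial hypersurfaces), positive on $d\s$ over $\Dspop_{\le1}$ (so the $\s$-level sets are spacelike for the system), and positive on $d\fctfol$ over $\diamond_+$ (so the caps $\nullinf^{\tau_*}$ of Definition~\ref{def:diamondtau*} are spacelike). Crucially, assumption \ref{item:vNewBound}, the support assumption \ref{item:vsupp} and the Sobolev inequalities \eqref{eq:LinfL2L1L2} and \eqref{eq:SobolevC} give $\|v\|_{\sC^0(\diamond_{\tau,s_*})}\lesssim\int_0^{\pi/2}\|v\|_{\sH^{\NN+1}(\diamond_{\tau',s_*})}d\tau'\le\CsmallCpt$, so choosing $\CsmallCpt$ small makes the system genuinely symmetric hyperbolic throughout $\diamond_{s_*}$, i.e.\ on the region bounded away from $\spaceinf$.

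For existence I would use the exhaustion $\diamond_+=\bigcup_{\tau_*\in[\pi/2,\pi)}\diamond^{\tau_*}$ of Lemma~\ref{lem:Dexhaustion}. Each $\diamond^{\tau_*}_{s_*}=\diamond_{s_*}\cap\diamond^{\tau_*}$ is a compact subset of $\diamond_+$ whose past boundary is the union of the $\tau=0$ face and the spacelike incoming face $\Dspop_{s_*/6}\cap\diamond^{\tau_*}$, and whose future boundary is the spacelike cap $\nullinf^{\tau_*}$; standard local well-posedness (\cite[Section 16.1-16.2]{Taylor3}) plus the a priori energy estimates over the ball-shaped slices $\diamond^{\tau_*}_{\tau,s_*}$ (here, as in Part~1 of Lemma~\ref{lem:Lindiffenergyineq}, the $\V_0$-derivatives are recovered from the equation to pass from $\H$ to $\sH$), together with the smallness from \ref{item:vNewBound}, produce a solution $c$ on $\diamond^{\tau_*}_{s_*}$ with zero data on the past boundary. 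Because $v$ solves the Maurer--Cartan equation on $\Dspop_{\le s_*}$ by \ref{item:MCvsp}, the element $c\equiv0$ solves the same subsystem there with matching Cauchy data, so a finite-speed-of-propagation / uniqueness-on-cones argument (cf.\ Theorem~\ref{thm:AbstractUniqueness}, available since the $\s$-level sets are spacelike) forces $c=0$ on $\Dspop_{\le s_*}\cap\diamond^{\tau_*}_{s_*}$; hence $c$ extends smoothly by $0$ across $\Dspop_{<s_*/6}$ to a solution $c\in\gxG^1(\diamond^{\tau_*})$ satisfying $c|_{\Dspop_{\le s_*}}=0$. The solutions for different $\tau_*$ agree on overlaps by the same uniqueness, so they glue to $c\in\gxG^1(\diamond_+)$, and since the energy and Sobolev constants of Lemma~\ref{lem:SobolevCNEW} are uniform in $\tau_*$, the estimates survive the limit $\tau_*\uparrow\pi$; using that, by \ref{item:vsupp} (and smoothness of $v$) and \ref{item:MCvsp}, the source $\SFmink(\dg v+\tfrac12[v,v])$ is supported in $\{\tau<\pi/2\}\cap(\diamond_+\setminus\Dspop_{\le s_*})$, the energy estimates take the form \eqref{eq:cbulkP1}, and \eqref{eq:LinfL2L1L2} with the same support fact gives the stated bound on $\|v\|_{\sH^{\NN}(\diamond_{\tau,s_*})}$. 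Uniqueness of $c$ (Part~0) is again the uniqueness-on-cones statement, since the cones cover $\diamond_+$.

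It remains to upgrade the necessary subsystem to the full equation \eqref{eq:ceqD}, i.e.\ to propagate the constraints. Set $U=\dg(v+c)+\tfrac12[v+c,v+c]\in\gx^2(\diamond_+)$. Then $U\in\gxG^2(\diamond_+)$ by the necessary subsystem and \ref{item:gaugekernelNEW_bulk}; $\dg U+[v+c,U]=0$ by the dgLa identities \eqref{eq:dgLaax} (verbatim as in the proof of Proposition~\ref{prop:ApplySHS}); and $U|_{\tau=0}=0$, because by \ref{item:vconstraintsD}, \eqref{eq:cdataD} and Lemma~\ref{lem:pext} one has $(d\tau+\anchorg(v)(\tau))U|_{\tau=0}=0$, while $d\tau+\anchorg(v)(\tau)$ is future timelike with respect to $[\gcyl]$, so fiberwise injectivity of left-multiplication together with \ref{item:gaugeposNEW_bulk} (Lemma~\ref{lem:gauge_mainprop_bulk}) forces $U|_{\tau=0}=0$. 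By Lemma~\ref{lem:translationofeqcyl} the relation $\gBil^2(\,\cdot\,,\dg U+[v+c,U])=0$ for $U\in\gxG^2$ is a linear homogeneous symmetric hyperbolic system (symmetry and positivity by Lemmas~\ref{lem:aAminkcpt}, \ref{lem:deltacausalcpt}), so zero initial data gives $U=0$ by the energy estimate on the exhaustion. Finally Part~2 is proved by the same persistence-of-regularity induction in $\kk$ as Part~2 of Theorem~\ref{thm:nonlinEE}, feeding in \ref{item:vHigherNewBound} and \eqref{eq:SobolevC}.

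I expect the main difficulty to be the bookkeeping forced by the bulk gauge being badly behaved near $\spaceinf$ (there $v$ need not be small): one must organize the exhaustion and finite-speed-of-propagation so that the symmetric-hyperbolic solve is only ever carried out on the compact pieces $\diamond^{\tau_*}_{s_*}$ where $v$ is controlled, patch by $0$ across $\Dspop_{<s_*/6}$, and keep all energy and Sobolev constants independent of $\tau_*$ so that the limit $\tau_*\uparrow\pi$ controls $c$ up to $\fnullinf\cup\timeinf$. As the text notes, once this causal separation from $\spaceinf$ is in place the remaining work is the routine symmetric-hyperbolic machinery already developed for Theorem~\ref{thm:nonlinEE}.
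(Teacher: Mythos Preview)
Your proposal is correct and follows essentially the same strategy as the paper: the paper factors the argument through an auxiliary lemma (Lemma~\ref{lem:bulkpropaux}) that carries out the bootstrap/energy-estimate construction on each $\diamond^{\tau_*}$ with constants uniform in $\tau_*$, and a separate energy-estimate lemma (Lemma~\ref{lem:EEbulk}), but the content---gauge-fixed symmetric hyperbolic subsystem via Lemma~\ref{lem:translationofeqcyl}, causal control from Lemma~\ref{lem:deltacausalcpt}, finite-speed-of-propagation to force $c=0$ on $\Dspop_{\le s_*}$, constraint propagation via the dgLa identities and Lemma~\ref{lem:gauge_mainprop_bulk}, and gluing over the exhaustion---matches your outline. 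One small correction: the uniqueness/finite-speed-of-propagation step here is done directly by energy estimates on the compact domains $\diamond^{\tau_*}_{\le\tau,s_*}$ rather than by invoking Theorem~\ref{thm:AbstractUniqueness}, which is tied to the abstract setup of Section~\ref{sec:Abstract}.
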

%------------------------------------%
The proof of Proposition \ref{prop:MainCpt} is at the end of this section.
Instead of constructing the solution $c$ directly on $\diamond_+$,
we use the exhaustion 
$\diamond_+=\cup_{\taufix\in[\frac\pi2,\pi)}\diamond^{\taufix}$
in Lemma \ref{lem:Dexhaustion},
and construct $c$ separately on each $\diamond^{\taufix}$, with estimates
that are uniform in (i.e.~independent of) $\taufix$.
The advantage is that $v$ and $c$ are smooth on 
$\diamond^{\taufix} = \overline{\diamond^{\taufix}}\setminus\spaceinf$.
The construction of $c$ on these smaller sets is in the next lemma.
Let $\gxG(\diamond^{\tau_*})$ be the sections of $\gxG$ over $\diamond^{\tau_*}$.
We use the norms in Definition \ref{def:taufixnorms}.
%------------------------------------%
\begin{lemma}\label{lem:bulkpropaux}
For all 
$\NN\in\Z_{\ge\new{7}}$, $\sfix\in(0,1]$
there exist 
$\ClargeCpt>0$,
$\CsmallCpt\in(0,1]$
such that for all $v \in \gx^1(\diamond_+)$, 
if 
\ref{item:vconstraintsD}, 
\ref{item:MCvsp}, 
\ref{item:vsupp}, 
\ref{item:vNewBound} hold
then for all $\taufix\in[\frac\pi2,\pi)$
the following holds:
There exists $c\in \gxG^1(\diamond^{\taufix})$ 
that satisfies \eqref{eq:Cptcprop}.
Furthermore:
\begin{itemize}
\item 
\new{\textbf{Part 0.} $c$ is unique.}
\item 
\textbf{Part 1.}
For all $\tau\in[0,\taufix]$:
\begin{subequations}\label{eq:cNhsNH}
\begin{align}
&\|c\|_{\H^{\NN}(\diamond_{\tau,\sfix}^{\taufix})} 
	\le \ClargeCpt 
	\tint_0^{\tau}\|v\|_{\sH^{\NN+1}(\diamond_{\tau',\sfix}^{\taufix})}d\tau'
	\label{eqq:HNtau}
	\\
&\|c\|_{\sH^{\NN}(\diamond_{\tau,\sfix}^{\taufix})} 
	\le \ClargeCpt 
	(\tint_0^{\tau}\|v\|_{\sH^{\NN+1}(\diamond_{\tau',\sfix}^{\taufix})}d\tau'
	+
	\|v\|_{\sH^{\NN}(\diamond_{\tau,\sfix}^{\taufix})})
	\label{eqq:sHNtau}
\end{align}
\end{subequations}
\item 
\new{\textbf{Part 2.}}
For every $\kk\in\Z_{\ge\NN}$ and $\CHigherCpt\new{>}0$, if 
\ref{item:vHigherNewBound} holds then for all $\tau\in[0,\taufix]$:%
\begin{subequations}\label{eq:bulkauxhall}
\begin{align}
\|c\|_{\H^{\kk}(\diamond_{\tau,\sfix}^{\taufix})} 
	&\lesssim_{\kk,\sfix,\CHigherCpt} 
	\tint_0^{\tau}
	\|v\|_{\sH^{\kk+1}(\diamond_{\tau',\sfix}^{\taufix})}d\tau'
	\label{eq:bulktaufixhigherH}\\
\|c\|_{\sH^{\kk}(\diamond_{\tau,\sfix}^{\taufix})} 
	&\lesssim_{\kk,\sfix,\CHigherCpt} 
	\tint_0^{\tau}
	\|v\|_{\sH^{\kk+1}(\diamond_{\tau',\sfix}^{\taufix})}d\tau'
	+
	\|v\|_{\sH^{\kk}(\diamond_{\tau,\sfix}^{\taufix})} 
	\label{eq:bulktaufixhighersH}
\end{align}
\end{subequations}
\end{itemize}
\end{lemma}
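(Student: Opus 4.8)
The plan is to follow the strategy outlined at the start of Section~\ref{sec:reformSHS_i0}, now using the gauge $(\gxG(\cyl),\gBil)$ of Definition~\ref{def:gauge_bulk}. First I would impose the gauge fixing condition that $c$ be a section of $\gxG^1$, and pass to the necessary square subsystem $\gBil^{1}(\,\cdot\,,\dg(v+c)+\tfrac12[v+c,v+c])=0$. By Lemma~\ref{lem:translationofeqcyl} this becomes the quasilinear system $(\amink_1^\mu+\Amink_1^\mu(v)+\Amink_1^\mu(\ginj_1 c))\V_\mu c=\Lmink_1 c-\Amink_{11}^\mu(\ginj_1 c)\V_\mu v+\Bmink_1(v,\ginj_1 c)+\tfrac12\Bmink_1(\ginj_1 c,\ginj_1 c)+\SFmink(\dg v+\tfrac12[v,v])$, and Lemmas~\ref{lem:aAminkcpt} and~\ref{lem:deltacausalcpt} show that, for $c$ with small pointwise $\ell^2$-norm, this is symmetric hyperbolic with the $\tau$-level sets as admissible Cauchy hypersurfaces (the positivity at $\tau=0$ comes from \ref{item:vconstraintsD} via \ref{item:gaugeposNEW_bulk} and Lemma~\ref{lem:pext}), with the $\s$-level sets in $\Dspop_{\le\sfix}$ uniformly spacelike, and with the lateral boundary $\nullinf^{\taufix}$ --- spacelike by Lemma~\ref{lem:dphiproperties} --- contributing an energy flux of favourable sign, so that no boundary data is needed there.

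Next I would construct $c$ on each $\diamond^{\taufix}$. Since \ref{item:vsupp} makes the source $\SFmink(\dg v+\tfrac12[v,v])$ supported in $\{\tau\le\tfrac\pi2\}$ and \ref{item:MCvsp} makes it vanish on $\Dspop_{\le\sfix}$, I would invoke the standard local well-posedness of symmetric hyperbolic systems \cite{Taylor3}, handling the corner of $\diamond^{\taufix}$ by partitions of unity and finite speed of propagation exactly as in the proof of Theorem~\ref{thm:nonlinEE} (the boundary piece over $\nullinf^{\taufix}$ requires no data, the inner boundary $\s=\tfrac\sfix6$ is irrelevant since $c$ will vanish near it), and combine this with an open--closed bootstrap in $\tau$ propagating the a priori bound $\sqrt{c^Tc}\le 2\deltabulk$, fed by the energy estimates below and the smallness $\CsmallCpt$ of the source in \ref{item:vNewBound}. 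A finite-speed-of-propagation argument, as in Theorem~\ref{thm:AbstractUniqueness} and using that $\{\s=\text{const}\}$ are uniformly spacelike (Lemma~\ref{lem:deltacausalcpt}) and that $c\equiv 0$ solves the system on $\Dspop_{\le\sfix}$ (zero source there by \ref{item:MCvsp}, zero data), yields \eqref{eq:czeroD}. Uniqueness (Part~0) follows from the linear homogeneous energy estimate for the difference of two solutions.

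For constraint propagation I would set $U=\dg(v+c)+\tfrac12[v+c,v+c]\in\gx^2(\diamond^{\taufix})$. By the necessary subsystem and \ref{item:gaugekernelNEW_bulk}, $U\in\gxG^2(\diamond^{\taufix})$; by the dgLa identities \eqref{eq:dgdifferential}, \eqref{eq:dgbracket}, \eqref{eq:bracketjacobi}, \eqref{eq:bracketas} one gets $\dg U+[v+c,U]=0$; and by Lemma~\ref{lem:pext} together with \ref{item:vconstraintsD} and \eqref{eq:cdataD}, the element $U$ vanishes along $\{\tau=0\}$ (here one uses fiberwise injectivity of left-multiplication in Lemma~\ref{lem:gauge_mainprop_bulk}). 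Then the second identity of Lemma~\ref{lem:translationofeqcyl} rewrites $\gBil^2(\,\cdot\,,\dg U+[v+c,U])=0$ as a linear homogeneous symmetric hyperbolic system for $U$, whose well-posedness (Lemmas~\ref{lem:aAminkcpt}, \ref{lem:deltacausalcpt} with $k=2$, same boundary analysis) forces $U=0$ on $\diamond^{\taufix}$, i.e.~\eqref{eq:ceqD}. This completes existence.

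Finally, for Parts~1 and~2 I would run energy estimates on the foliation $\{\diamond^{\taufix}_{\tau,\sfix}\}_{\tau\in[0,\taufix]}$ (recall \eqref{eq:dtautaus}), differentiating the equation at most $\NN$ times with respect to the vector fields $\V_1,\V_2,\V_3$ tangential to the $\tau$-slices: the $\{\tau=0\}$ term vanishes, the flux over $\nullinf^{\taufix}$ is dropped by its sign, and the inner boundary contributes nothing since $c$ vanishes near it, so Gr\"onwall bounds $\|c\|_{\H^{\NN}(\diamond^{\taufix}_{\tau,\sfix})}$ by $\int_0^\tau\|v\|_{\sH^{\NN+1}(\diamond^{\taufix}_{\tau',\sfix})}\,d\tau'$ (the extra derivative on $v$ coming from one commutator with the source term), and the $\sH^{\NN}$-bound \eqref{eqq:sHNtau} follows by expressing $\V_0 c$ through the equation; Part~2 is then an induction on $\kk\ge\NN$ as in Part~2 of Theorem~\ref{thm:nonlinEE}, using the already-established pointwise ($C^k$-type) control of $c$. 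The main obstacle, and the step requiring the most care, is that every constant here --- the Sobolev constants of Lemma~\ref{lem:SobolevCNEW}, the Poincar\'e/divergence constants on the balls $\diamond^{\taufix}_{\tau,\sfix}$, and the causality threshold $\deltabulk$ --- must be independent of $\taufix$; this uniformity rests on Lemma~\ref{lem:lowerboundballsNEW} (uniform non-degeneracy of $\diamond^{\taufix}_{\tau,\sfix}$ for $\tau\le\tfrac\pi2$) together with the fact that for $\tau\ge\tfrac\pi2$ the background is $\taufix$-independent, so that the resulting bounds pass to the limit along the exhaustion $\diamond_+=\bigcup_{\taufix}\diamond^{\taufix}$ of Lemma~\ref{lem:Dexhaustion} when one later deduces Proposition~\ref{prop:MainCpt}.
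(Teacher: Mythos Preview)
Your proposal is correct and follows essentially the same route as the paper: gauge-fix via $\gxG^1$, rewrite the necessary subsystem through Lemma~\ref{lem:translationofeqcyl}, use Lemmas~\ref{lem:aAminkcpt} and~\ref{lem:deltacausalcpt} for symmetric hyperbolicity and the boundary signs, run an open--closed bootstrap in $\tau$ fed by energy estimates (packaged in the paper as Lemma~\ref{lem:EEbulk}), get \eqref{eq:czeroD} by finite speed of propagation, then propagate constraints for $U$ exactly as you describe, and finish Part~2 by induction on $k$. One small slip: the positivity of $d\tau$ as an initial hypersurface comes directly from \eqref{eq:dtauposcpt} in Lemma~\ref{lem:deltacausalcpt}, not from \ref{item:vconstraintsD} --- the constraints are only used later, as you correctly say, to show $U|_{\tau=0}=0$.
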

%------------------------------------%
Before we prove Lemma \ref{lem:bulkpropaux}, we will derive 
the relevant energy estimates in Lemma \ref{lem:EEbulk} below.
For this consider the following necessary subsystem of \eqref{eq:ceqD}:
\begin{equation}\label{eq:cSHSD}
\gBil^{1}\big(\,\cdot\,, \dg(v+c)+\tfrac12[v+c,v+c]\big) = 0
\end{equation}
By Lemma \ref{lem:translationofeqcyl}, 
the system \eqref{eq:cSHSD} is equivalent to
\begin{equation}\label{eq:cSHSDMatrices}
(\AmatLin^\mu+\AmatBil^\mu(c)) \V_\mu c \;=\; \LMat c 
+ \BSHS(c,c) + \Fvec
\end{equation}
where we use the identification \eqref{eq:identify_g_vec_bulk}
and define, using Definition \ref{def:SHSarrays_bulk}, 
%Using Definition \ref{def:SHSarrays_bulk},
%and the identification \eqref{eq:identify_g_vec_bulk}, 
\begin{align}\label{eq:aALBFbulk}
\begin{aligned}
\AmatLin^\mu
&= 
\amink_1^{\mu}
+ \Amink_1^{\mu}(v)\\
\AmatBil^\mu(\cdot)
&=
\Amink_1^{\mu}(\ginj_1 \,\cdot\,)\\
\LMat
&=
\Lmink_1   
- \Amink_{11}^\mu(\ginj_1 \,\cdot\,) \V_{\mu} v
+ \Bmink_1(v,\ginj_1 \,\cdot\,)\\
\BSHS
&=
\tfrac12\Bmink_1(\ginj_1\,\cdot\,,\ginj_1\,\cdot\,)\\
\Fvec
&=
\SFmink (\dg v+\tfrac12[v,v])
\end{aligned}
\end{align}
Here and below, 
the restriction of the maps in Definition \ref{def:SHSarrays_bulk}
to suitable subsets of $\cyl$ is left implicit.
Beware that \eqref{eq:aALBFbulk} depend on $v$.
%

%-------------------------------------%

For the remainder of this section we fix 
$\deltabulk\in(0,1]$
as in Lemma \ref{lem:deltacausalcpt}.

%----------------------------%
\begin{lemma}\label{lem:EEbulk}
For all 
\begin{equation}\label{eq:EEconstbulk}
\NN\in\Z_{\ge\new{1}}
\qquad
\sfix\in(0,1]
\qquad
\bbulk > 0
\end{equation}
there exists $\Cbulk>0$ such that for all
\[ 
\taufix\in [\tfrac\pi2,\pi)
\qquad
\taum\in (0,\taufix]
\]
and all
\begin{equation}\label{eq:cveetau}
c\;\in\; C^\infty(\diamond_{\le\taum,\sfix}^{\taufix},\R^{\ngG_1})
\qquad
v\in C^\infty(\diamond_{\le\taum,\sfix}^{\taufix},\R^{\ng_1})
\end{equation}
the following holds. Associated to $v$ define the maps \eqref{eq:aALBFbulk}.
If 
\begin{subequations}\label{eq:EEbulkassp}
\begin{align}
(\AmatLin^\mu+\AmatBil^\mu(c)) \V_\mu c 
	&= \LMat c + \BSHS(c,c) + \Fvec 
	\label{eq:ceqbulk}\\
c|_{\tau=0} 
	&=0
	\label{eq:cdatabulk} \\
c|_{\Dspop_{\le s_*}} 
	&= 0 
	\label{eq:czerobulk}\\
\sqrt{c^Tc},\ \sqrt{v^Tv}
	&\le 
	\deltabulk
	\;\ \quad
	\text{on $\diamond_{\le\taum,\sfix}^{\taufix}$}
	\label{eq:cvdeltabulk}\\
	\|c\|_{C^{\lfloor \frac{\NN+1}{2} \rfloor}(\diamond_{\tau,\sfix}^{\taufix})}
	&\le 
	\bbulk
	\qquad\text{for all $\tau\in[0,\taum]$}
	\label{eq:cCbulk}\\
\|v\|_{\sC^{\lfloor \frac{\NN+1}{2} \rfloor}(\diamond_{\tau,\sfix}^{\taufix})}
	&\le 
	\bbulk
	\qquad\text{for all $\tau\in[0,\taum]$}
	\label{eq:vCbulk}
\end{align}
\end{subequations}
Then:
\begin{itemize}
\item \textbf{Part 1.}
For all $k\in\Z_{\ge0}$ with $k\le\NN$ and all $\tau\in[0,\taum]$:
\begin{subequations}\label{eq:sCHtoCHbulk}
\begin{align}
\|c\|_{\sC^k(\diamond_{\tau,\sfix}^{\taufix})} 
	&\le \Cbulk(\|c\|_{C^k(\diamond_{\tau,\sfix}^{\taufix})} 
		+ \|v\|_{\sC^k(\diamond_{\tau,\sfix}^{\taufix})}) 
	\label{eq:sCtoCbulk}\\
\|c\|_{\sH^k(\diamond_{\tau,\sfix}^{\taufix})} 
	&\le \Cbulk(\|c\|_{\H^k(\diamond_{\tau,\sfix}^{\taufix})} 
		+ \|v\|_{\sH^k(\diamond_{\tau,\sfix}^{\taufix})})
	\label{eq:sHtoHbulk}
\end{align}
\end{subequations}
\item 
\textbf{Part 2.} For all $\tau\in[0,\taum]$:
\begin{align}\label{eq:ekktauint}
\|c\|_{\H^{\NN}(\diamond_{\tau,\sfix}^{\taufix})}
&\le
\Cbulk 
\tint_{0}^{\tau} \|v\|_{\sH^{\NN+1}(\diamond_{\tau',s_*}^{\taufix})}  d\tau'
\end{align}
\end{itemize}
\end{lemma}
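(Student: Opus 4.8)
The plan is to adapt the proof of the abstract quasilinear energy estimate (Lemma \ref{lem:Lindiffenergyineq} together with Lemma \ref{lem:Nonlindiffenergyineq}) to the compact setting, using $\tau$ as the time function, the slices $\diamond^{\taufix}_{\tau,\sfix}$ as the foliation of $\diamond^{\taufix}_{\le\tau,\sfix}$, and $\fctfol$ to control the lateral boundary. Since the domain is compact, no refined propagator is needed: a plain Gronwall inequality in $\tau$ replaces the propagator analysis of Section \ref{sec:Abstract}. The causality input comes entirely from Lemma \ref{lem:deltacausalcpt} applied to the principal symbol $\AmatLin^\mu+\AmatBil^\mu(c)$ of \eqref{eq:ceqbulk} (legitimate with $k=1$ since $\sqrt{c^Tc}\le\deltabulk$ by \eqref{eq:cvdeltabulk}): \eqref{eq:dtauposcpt} makes $\tau=0$ an admissible initial hypersurface and makes $\AmatLin^0+\AmatBil^0(c)$ invertible; \eqref{eq:dphiposcpt} makes $\nullinf^{\taufix}$, a level set of $\fctfol$ (future timelike by Lemma \ref{lem:dphiproperties}), spacelike with a favorably signed boundary term; and on the inner lateral face $\{\s=\sfix/6\}$ of $\diamond^{\taufix}_{\le\tau,\sfix}$ the integrand vanishes identically by the support assumption \eqref{eq:czerobulk}. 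Symmetry of $\amink_1^\mu$ and $\Amink_1^\mu(\cdot)$ from Lemma \ref{lem:aAminkcpt} is used throughout.

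For Part 1, I would use \eqref{eq:ceqbulk} to solve for $\V_0 c_{\indJ}$ in terms of the tangential derivatives $\V_1,\V_2,\V_3$ of $c_{\indJ}$ and lower-order terms, inverting $\AmatLin^0+\AmatBil^0(c)$ by \eqref{eq:dtauposcpt}; differentiating by $\V^{\indJ}$ and iterating on the number of $\V_0$-factors yields the pointwise inequality corresponding to Part 1 of Lemma \ref{lem:Lindiffenergyineq}, with the abstract $X_i$ replaced by $\V_i$. Feeding this into an induction over pairs $(k_0,k)$ exactly as in the proofs of \eqref{eq:CallCt} and \eqref{eq:HallHt}, absorbing quadratic terms into linear ones via the Sobolev estimate \eqref{eq:CH1} and the a priori bounds \eqref{eq:cCbulk}, \eqref{eq:vCbulk}, produces \eqref{eq:sCtoCbulk} and \eqref{eq:sHtoHbulk}.

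For Part 2, for each multi-index $\indI$ with $\countzero(\indI)=0$ and $|\indI|\le\NN$ I would set $c_{\indI}=\V^{\indI}c$ and form the current $\current_{\indI}=\big(c_{\indI}^T(\AmatLin^\mu+\AmatBil^\mu(c))c_{\indI}\big)\V_\mu$; since $\current_{\indI}$ has compact support in $\diamond^{\taufix}_{\le\tau,\sfix}$, Stokes' theorem over this domain has four boundary contributions: the top slice $\diamond^{\taufix}_{\tau,\sfix}$, comparable to $\|c\|_{\H^{\NN}(\diamond^{\taufix}_{\tau,\sfix})}^2$ by \eqref{eq:dtauposcpt}; the bottom slice, vanishing by \eqref{eq:cdatabulk}; the face on $\nullinf^{\taufix}$, of favorable sign by \eqref{eq:dphiposcpt}; and the face $\{\s=\sfix/6\}$, vanishing by \eqref{eq:czerobulk}. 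Differentiating the divergence, using symmetry of $\AmatLin^\mu+\AmatBil^\mu(c)$ and the $\V^{\indI}$-differentiated equation, and bounding commutator and source terms as in Lemma \ref{lem:Lindiffenergyineq} (with slashed norms of $c$ replaced by their $\H$-analogues via \eqref{eq:sHtoHbulk} and the smallness bounds), I would obtain a differential inequality of the form $\tfrac{d}{d\tau}\|c\|_{\H^{\NN}(\diamond^{\taufix}_{\tau,\sfix})}^2 \le C\big(\|c\|_{\H^{\NN}}+\|v\|_{\sH^{\NN}}\big)\|c\|_{\H^{\NN}}+\|\Fvec\|_{\sH^{\NN}}\|c\|_{\H^{\NN}}$; Gronwall then gives $\|c\|_{\H^{\NN}(\diamond^{\taufix}_{\tau,\sfix})}\lesssim\int_0^\tau\|\Fvec\|_{\sH^{\NN}(\diamond^{\taufix}_{\tau',\sfix})}\,d\tau'$, and finally $\|\Fvec\|_{\sH^{\NN}}=\|\SFmink(\dg v+\tfrac12[v,v])\|_{\sH^{\NN}}\lesssim_{\bbulk}\|v\|_{\sH^{\NN+1}}$, using that $\SFmink$ has constant components (Lemma \ref{lem:BilBasisCpt}), that $\dg$ and $[\cdot,\cdot]$ are first order, and that $v$ is small by \eqref{eq:vCbulk}, which yields \eqref{eq:ekktauint}. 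Uniformity of all constants in $\taufix$ and $\taum$ is ensured by Lemma \ref{lem:SobolevCNEW} and Lemma \ref{lem:lowerboundballsNEW}, which give $\taufix$-independent Sobolev constants on the sets $\diamond^{\taufix}_{\tau,\sfix}$.

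I expect the main obstacle to be the careful bookkeeping of the Stokes' theorem boundary terms on the non-cylindrical domain $\diamond^{\taufix}_{\le\tau,\sfix}$ — in particular verifying that the inner lateral face contributes nothing and the $\nullinf^{\taufix}$ face contributes with the correct sign, all uniformly in $\taufix$ — rather than the commutator estimates, which are essentially identical to those already carried out in Lemma \ref{lem:Lindiffenergyineq}.
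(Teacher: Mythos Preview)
Your proposal is correct and follows essentially the same route as the paper: Part 1 via induction on the number of $\V_0$-derivatives using invertibility of $\AmatLin^0+\AmatBil^0(c)$ from \eqref{eq:dtauposcpt}, and Part 2 via the current $\current_{\indI}$, Stokes' theorem on $\diamond^{\taufix}_{\le\tau,\sfix}$ with the four boundary contributions you identify, and a Gronwall inequality. One small wording correction: the current $\current_{\indI}$ is not compactly supported in the interior of $\diamond^{\taufix}_{\le\tau,\sfix}$; rather, the domain itself is compact (closed in $\diamond_+$), so Stokes' theorem applies directly and the boundary terms must be analyzed exactly as you describe.
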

%-------------------------------------%
Recall that the sets used in \eqref{eq:cveetau} 
are closed, see \eqref{eq:dtauletaus}.
%-------------------------------------%
\begin{proof}[of Lemma \ref{lem:EEbulk}]
Instead of specifying $\Cbulk$ upfront, we will make 
finitely many admissible largeness assumptions on $\Cbulk$,
where admissible means that they depend only on \eqref{eq:EEconstbulk}
(the dependencies on the fixed maps in
%Definition \ref{def:SHSarrays_bulk} 
\eqref{eq:minkarrays_bulk} will be suppressed).
We will repeatedly use the following fact:
\begin{align}\label{eq:mapunifbounds}
\begin{aligned}
&\text{The components of the maps \eqref{eq:minkarrays_bulk},
and their derivatives}\\[-1mm]
&\text{relative to $\V_0,\dots,\V_3$, are bounded in absolute value on \smash{$\diamond_{\le\taum,\sfix}^{\taufix}$},} \\[-1mm]
&\text{and the bounds are independent of $\sfix,\taufix,\taum$.}
\end{aligned}
\end{align}
This holds because the maps \eqref{eq:minkarrays_bulk} are smooth on $\cyl$.

By Lemma \ref{lem:deltacausalcpt}
(with 
$k=1$, $u=v+\ginj_1c$)
and \eqref{eq:cvdeltabulk},
on \smash{$\diamond_{\le\taum,\sfix}^{\taufix}$} one has:%
\begin{subequations}
\begin{align}
\tfrac12\one
	&\le \AmatLin^0+\AmatBil^0(c) \le 2\one
	\label{eq:dtaupos1Ap}\\
0 &< d\fctfol\big((\AmatLin^\mu+\AmatBil^\mu(c)) \V_\mu\big)
	\label{eq:cnullnonneg1Ap}	
\end{align}
\end{subequations}

We make definitions analogous to \eqref{eq:indsetdef}:
For $k_0,k\in\Z_{\ge0}$, let 
$$
\indset_{k_0,k}\subset\{0,1,2,3\}^{k_0+k}
$$
be given by all $\indI=(i_1,\dots,i_{k_0+k})$ such that 
precisely $k_0$ of the $i_1,\dots,i_{k_0+k}$ are equal to $0$.
Further set
$\indset_{k_0,\le k} = \cup_{k'\le k} \indset_{k_0,k'}$
and
$\indset_{\le k_0,\le k} = \cup_{k_0'\le k_0} \indset_{k_0',\le k}$.

Analogously to \eqref{eq:indexnotation}, for
$\indI=(i_1,\dots,i_{k})$ and functions $f$ we denote
\begin{align}\label{eq:Vfi}
\V^{\indI} &= \V_{i_1}\cdots \V_{i_{k}} & 
|\indI| &= k & 
f_{\indI} = \V^{\indI} f = \V_{i_1}\cdots \V_{i_{k}}f
\end{align}

\proofheader{Proof of Part 1.}
This is similar to the proof of Part 1 of 
Lemma \ref{lem:Nonlindiffenergyineq}, so we are brief.
Using \eqref{eq:ceqbulk}, \eqref{eq:dtaupos1Ap}, \eqref{eq:cCbulk}, \eqref{eq:vCbulk}, \eqref{eq:mapunifbounds}
one derives the following pointwise bound,
for all \new{$k_0\in\Z_{\ge1}$}, $j\in\Z_{\ge0}$ with $k_0+j \le\NN$
and $\indI\in\indset_{k_0,j}$:
\begin{align*}
\|c_{\indI}\|
	&\lesssim_{\NN,\bbulk}
	\textstyle
	\sum_{\indJ\in\indset_{\le k_0-1,\le j+1}} \|c_{\indJ}\|
	+
	\sum_{\indJ\in\indset_{k_0,\le j-1}} \|c_{\indJ}\|
	+
	\sum_{\substack{\indJ\in\indset_{\le k_0,\le j+1} \\
		|\indJ|\le k_0+j}} \|v_{\indJ}\|
	\\
	&
	\qquad
	+\textstyle
	\sum_{
	\substack{
	\indJ,\indK \in \indset_{\le k_0,\le j+1}\\
	|\indJ|+|\indK|\le k_0+j\\
	|\indJ|,|\indK|\le k_0+j-1\\
	n_0(\indJ)+n_0(\indK) \le k_0
	}}
	\big(
	\|c_{\indJ}\|
	+
	\|v_{\indJ}\|
	+
	\sum_{\mu=0}^3\|\V^\mu v_{\indJ}\|
	\big)\|c_{\indK}\|
\end{align*}
where $n_0(\indJ)$ is equal to the number of entries in $\indJ$
that are equal to $0$, 
and where $\|{\cdot}\|$ denotes the $\ell^2$-vector norm.
To derive this,
one must in particular use the fact that for $\indJ\in\indset_{\le k_0-1,\le j}$:
\begin{align}\label{eq:Fv}
\|\Fvec_{\indJ}\| 
=
\|\V^{\indJ} \SFmink (\dg v+\tfrac12[v,v])\|
\lesssim_{\NN,\bbulk}
\tsum_{
\substack{\indJ\in\indset_{\le k_0,\le j+1} \\
		|\indJ|\le k_0+j}}
\|v_{\indJ}\|
\end{align}
which uses \eqref{eq:vCbulk}, \eqref{eq:mapunifbounds} and the fact that $\dg$ and $[\cdot,\cdot]$
are smooth first order linear respectively
bilinear differential operators on $\cyl$.
Now \eqref{eq:sCtoCbulk} and \eqref{eq:sHtoHbulk} follow 
by an inductive argument
similar to \eqref{eq:Pk0kt} respectively \eqref{eq:inductionHk},
using \eqref{eq:cCbulk}, \eqref{eq:vCbulk},
and an admissible largeness assumption on $\Cbulk$.

\proofheader{Proof of Part 2.}
By \eqref{eq:sCtoCbulk} and $\lfloor\frac{\NN+1}{2}\rfloor\le\NN$,
for all $\tau\in[0,\taum]$:
\begin{align}\label{eq:scCbulk}
\begin{aligned}
\|c\|_{\sC^{\lfloor \frac{\NN+1}{2} \rfloor}(\diamond_{\tau,\sfix}^{\taufix})}
	&\lesssim_{\NN,\sfix,\bbulk}
	\|c\|_{C^{\lfloor \frac{\NN+1}{2} \rfloor}(\diamond_{\tau,\sfix}^{\taufix})}
	+
	\|v\|_{\sC^{\lfloor \frac{\NN+1}{2} \rfloor}(\diamond_{\tau,\sfix}^{\taufix})}
	\lesssim_{\bbulk}
	1
\end{aligned}
\end{align}
where the last step holds by \eqref{eq:cCbulk} and \eqref{eq:vCbulk}.

For $k_0\in\{0,1\}$, $k\le\NN$, $\indI\in\indset_{k_0,k}$ 
and $\tau\in[0,\taum]$ define\footnote{%
%%%%%%%%%%%%%%%%%%%%%%%%%
Beware that $\diamond^{\tau_*}_{\sfix}$ has a corner 
along the intersection of 
$\Dspop_{\sfix/6}$ and $\nullinf^{\tau_*}$.
Still $E_{\indI}(\tau)$ is differentiable in $\tau$ by \eqref{eq:czerobulk}.
}\textsuperscript{,}\footnote{%
Beware that the index $\indI$ is used in two different ways,
in $c_{\indI}$ it stands for the derivative of $c$ (see \eqref{eq:Vfi}),
while in $E_{\indI}$, and in $\current_{\indI}$ below, it is part of the name.} 
%%%%%%%%%%%%%%%%%%%%%%%%%
\begin{align*}
E_{\indI}(\tau) 
	&= 
	\tint_{\diamond_{\tau,\sfix}^{\tau_*}} 
	c_{\indI}^T (\AmatLin^0+\AmatBil^0(c)) c_{\indI}
	\, \mucylS\\
	E_{k_0,\le k}(\tau) 
	&=
	\tsum_{\indI\in\indset_{k_0,\le k}} E_{\indI}(\tau)\\
	e_{\le\NN}(\tau)
	&=
	\sqrt{E\smash{_{0,\le\NN}}(\tau)}
\end{align*}
By \eqref{eq:dtaupos1Ap} and \eqref{eq:sHtoHbulk}:
\begin{align}
\sqrt{E\smash{_{1,\le\NN-1}}(\tau)}
&\lesssim_{\NN}
\|c\|_{\sH^{\NN}(\diamond^{\taufix}_{\tau,\sfix})} \nonumber\\
&\lesssim_{\NN,\sfix,\bbulk}
\|c\|_{\H^{\NN}(\diamond_{\tau,\sfix}^{\taufix})} 
		+ \|v\|_{\sH^{\NN}(\diamond_{\tau,\sfix}^{\taufix})} \nonumber\\
&\lesssim_{\NN}
e_{\le\NN}(\tau)
		+ \|v\|_{\sH^{\NN}(\diamond_{\tau,\sfix}^{\taufix})}
		\label{eq:E1E0bulk}
\end{align}

Let $\indI\in\indset_{0,\le\NN}$.
Define the current 
\[ 
\current_{\indI} 
	= c_{\indI}^T (\AmatLin^\mu+\AmatBil^\mu(c))c_{\indI} \V_\mu
\]
%We will use Stokes' theorem. 
For each $\tau\in{\new(}0,\taum]$,
\begin{align}
\label{eq:intmucylform}
\tint_{\diamond_{\le\tau,\sfix}^{\taufix}} \div_{\mucyl}(\current_{\indI})\mucyl
=
\tint_{\diamond_{\le\tau,\sfix}^{\taufix}} \div_{\mucyl}(\current_{\indI})\mucylform
\end{align}
where, on the right hand side, we integrate
relative to the positive volume form $\mucylform = \Vd^0\wedge\dots\wedge\Vd^3$,
using the fixed orientation on $\cyl$, see Remark \ref{rem:orientation}.
We have
\begin{align*}
\p\diamond_{\le\tau,\sfix}^{\taufix}
=
\diamond_{0,\sfix}^{\taufix}
\;\cup\;
(\p\diamond_{\le\tau,\sfix}^{\taufix}\cap\Dspop_{\frac\sfix6})
\;\cup\;
(\p\diamond_{\le\tau,\sfix}^{\taufix}\cap\nullinf^{\taufix})
\;\cup\;
\diamond_{\tau,\sfix}^{\taufix}
\end{align*}
where for small $\tau$, the third boundary component is empty.
\new{The union is disjoint up to lower-dimensional sets.}
The function $c_{\indI}$ vanishes on the first boundary component,
by \eqref{eq:cdatabulk} and the fact that $\V^{1},\V^2,\V^3$
are tangential to $\tau=0$.
Further $c_{\indI}$ vanishes on the second boundary component by \eqref{eq:czerobulk}.
Thus Stokes' theorem, applied to the right hand side of \eqref{eq:intmucylform}, yields
\begin{equation}
\label{eq:EEcptqForms}
\tint_{\diamond_{\le\tau,\sfix}^{\taufix}} \div_{\mucyl}(\current_{\indI})\mucyl
=
\tint_{\diamond_{\tau,\sfix}^{\taufix}} \intermult_{\current_{\indI}}\mucylform
+
\tint_{\p\diamond_{\le\tau,\sfix}^{\taufix}\cap\nullinf^{\taufix}} \intermult_{\current_{\indI}}\mucylform
\end{equation}
where, on the right hand side, we use the induced orientation.
Note that the equality also holds for $\tau=0$, then both sides are zero.
One has:
\begin{itemize}
\item 
$\tint_{\diamond_{\tau,\sfix}^{\taufix}} \intermult_{\current_{\indI}}\mucylform
=\tint_{\diamond_{\tau,\sfix}^{\taufix}} 
c_{\indI}^T (\AmatLin^0+\AmatBil^0(c)) c_{\indI} 
\,\intermult_{\p_{\tau}}\mucylform = E_{\indI}(\tau)$,
using $|\intermult_{\p_{\tau}}\mucylform|=\mucylS$
and the fact that $\intermult_{\p_{\tau}}\mucylform$
is positive with respect to the induced orientation.
\item 
The second term on the left hand side of \eqref{eq:EEcptqForms}
is increasing in $\tau$ by \eqref{eq:cnullnonneg1Ap} and 
by $d\fctfol(\p_\tau)>0$, see Lemma \ref{lem:dphiproperties}.
%%%%%%%%%%%%%%%%%%%%%%%%%%%%%%%
%Call the boundary S.
%Let p be a point on S.
%
%Let X be an outward pointing vector at p.
%	Note: X(phi) >= 0 because \p_tau(phi)>=0.
%
%Near p we can write
%	muE = dphi wedge mu3 for a 3-form mu3.
%
%Claim: mu3 is positive with respect to the induced orientation.
%Proof: 
%	i_X muE is positive (by definition of the induced orientation)
%	i_X muE = X(phi) mu3 - dphi i_X (mu3)
%	The second term on the right vanishes when integrated over S
%	Since X(phi) >=0 the claim follows.
%	
%Thus
%int_S i_j muE 
%	= int_S i_j (dphi wedge mu3) 
%	= int_S j(phi) mu3 >= 0
%%%%%%%%%%%%%%%%%%%%%%%%%%%%%%%
\end{itemize}
Thus differentiating \eqref{eq:EEcptqForms} in $\tau$, 
and using Fubini and $\mucyl=|d\tau|\mucylS$, 
we obtain that for all $\tau\in[0,\taum]$:
$$
\tfrac{d}{d\tau}E_{\indI}(\tau)
\le
\tfrac{d}{d\tau}
\tint_{\diamond_{\le\tau,\sfix}^{\taufix}} \div_{\mucyl}(\current_{\indI})\mucyl
=
\tint_{\diamond^{\taufix}_{\tau,\sfix}} \div_{\mucyl}(\current_{\indI})\mucylS
\le
\tint_{\diamond^{\taufix}_{\tau,\sfix}} |\div_{\mucyl}(\current_{\indI})|\mucylS
$$
Abbreviate $\Amat^\mu(c) = \AmatLin^\mu + \AmatBil^\mu(c)$
and $\Amat(c) = \Amat^\mu(c)\V_\mu$.
We have 
\begin{align*}
\div_{\mucyl}(\current_{\indI})
&=
2 c_{\indI}^T \Amat(c)  c_{\indI}
+
c_{\indI}^T\div_{\mucyl}(\Amat(c))c_{\indI}
\end{align*}
using the fact that the matrices $\Amat^\mu(c)$ are
symmetric, by Lemma \ref{lem:aAminkcpt}.
Thus
\begin{align*}
\begin{aligned}
\tfrac{d}{d\tau} E_{\indI}(\tau)
&\lesssim 
\|\Amat(c) c_{\indI}\|_{L^2(\diamond^{\taufix}_{\tau,\sfix})}
\|c_{\indI}\|_{L^2(\diamond^{\taufix}_{\tau,\sfix})}\\
&\quad+
\|\div_{\mucyl}(\Amat(c))\|_{C^0(\diamond^{\taufix}_{\tau,s_*})}
\|c_{\indI}\|_{L^2(\diamond^{\taufix}_{\tau,\sfix})}^2
\end{aligned}
\end{align*}
where the $L^2$-norm is defined with respect to $\mucylS$.
By \eqref{eq:dtaupos1Ap} and $\indI\in\indset_{0,\le\NN}$ we have
$\|c_{\indI}\|_{L^2(\diamond^{\taufix}_{\tau,\sfix})}\lesssim_{\NN} e_{\le \NN}(\tau)$. Using \eqref{eq:mapunifbounds} and then 
\eqref{eq:scCbulk}, \eqref{eq:vCbulk} and $\NN\ge1$:
\begin{equation*}
\|\div_{\mucyl}(\Amat(c))\|_{C^0(\diamond_{\tau,s_*}^{\taufix})} 
\lesssim
1+\|v\|_{\sC^1(\diamond^{\taufix}_{\tau,s_*})}
 +\|c\|_{\sC^1(\diamond^{\taufix}_{\tau,s_*})}
\lesssim_{\sfix,\bbulk}
1
\end{equation*}
Thus for all $\tau\in[0,\taum]$:
\begin{align}
\tfrac{d}{d\tau} E_{\indI}(\tau)
&\lesssim_{\NN,\sfix,\bbulk}
e_{\le \NN}(\tau)\big(\|\Amat(c) c_{\indI}\|_{L^2(\diamond^{\taufix}_{\tau,\sfix})}
+
e_{\le \NN}(\tau)\big)
\label{eq:d/dtEIc}
\end{align}
Differentiating \eqref{eq:ceqDapriori} with respect to $\V^{\indI}$ yields
\begin{align*}
\Amat(c)c_{\indI} 
	&= 
	-[\V^{\indI}, \Amat(c)] c 
	+\V^{\indI}\LMat c 
	+\V^{\indI}\BSHS(c,c) 
	+\V^{\indI}\Fvec
\end{align*}
We claim that for all $\tau\in[0,\taum]$:
\begin{align}\label{eq:EqEst}
\|\Amat(c)c_{\indI}\|_{L^2(\diamond^{\taufix}_{\tau,\sfix})}
\lesssim_{\NN,\sfix,\bbulk}
e_{\le\NN}(\tau)+\|v\|_{\sH^{\NN+1}(\diamond^{\taufix}_{\tau,\sfix})}
\end{align}
Proof of \eqref{eq:EqEst}:
\begin{itemize}
\item 
By definition of $\Amat^\mu(c)$,
at every point on $\diamond^{\taufix}_{\tau,\sfix}$ we have
\begin{equation*}
\|[\V^{\indI}, \Amat^\mu(c) \V_\mu] c \|
	\le
	\|[\V^{\indI}, \amink_1^\mu \V_\mu] c \|
	+
	\|[\V^{\indI}, \Amink_1^{\mu}(v+\ginj_1 c) \V_\mu] c \|
\end{equation*}
Using the Leibniz rule and \eqref{eq:mapunifbounds},
\begin{align*}
\|[\V^{\indI}, \amink_1^\mu \V_\mu] c \|
	&\lesssim_{\NN}
	\tsum_{\indJ \in \indset_{0,\le \NN}\cup\indset_{1,\le \NN-1}} \|c_{\indJ}\|\\
\|[\V^{\indI}, \Amink_1^\mu(v+\ginj_1 c) \V_\mu] c \|
	&\lesssim_{\NN}
	\tsum_{\substack{
	\indJ\in\indset_{0,\le\NN}\\ 
	\indK\in\indset_{0,\le\NN}\cup\indset_{1,\le\NN-1}\\
	|\indJ|+|\indK|\le\NN+1}}
	(\|v_{\indJ}\|+\|c_{\indJ}\|)\|c_{\indK}\|
\end{align*}
Taking the $L^2$-norm and using \eqref{eq:dtaupos1Ap},
\eqref{eq:scCbulk}, \eqref{eq:vCbulk} and then \eqref{eq:E1E0bulk},
\begin{align*}
\|[\V^{\indI}, \Amat^\mu(c) \V_\mu] c \|_{L^2(\diamond^{\taufix}_{\tau,\sfix})}
	&\lesssim_{\NN}
	\sqrt{E\smash{_{0,\le\NN}}(\tau)}
	+ \sqrt{E\smash{_{1,\le\NN-1}}(\tau)}
	+ \|v\|_{\sH^{\NN}(\diamond_{\tau,\sfix}^{\taufix})}\\
	&\lesssim_{\NN,\sfix,\bbulk}
	e_{\le\NN}(\tau) + \|v\|_{\sH^{\NN}(\diamond_{\tau,\sfix}^{\taufix})}
\end{align*}
\item 
Recall from \eqref{eq:aALBFbulk} that $\LMat$ depends on $v$ and its first derivatives.
With \eqref{eq:mapunifbounds} we obtain that 
at every point on $\diamond^{\taufix}_{\tau,\sfix}$:
\begin{align*}
\|\V^{\indI} \LMat c\| 
\lesssim_{\NN}
\tsum_{
\substack{
\indJ\in \indset_{0,\le\NN+1}\cup \indset_{1,\le\NN}\\
\indK\in \indset_{0,\le\NN}\\
|\indJ|+|\indK| \le\NN+1
}}
(1+\|v_{\indJ}\|) \|c_{\indK}\|
\end{align*}
Thus with \eqref{eq:dtaupos1Ap},
\eqref{eq:cCbulk}, \eqref{eq:vCbulk} we obtain
\begin{align*}
\|\V^{\indI} \LMat c\|_{L^2(\diamond^{\taufix}_{\tau,\sfix})}
&\lesssim_{\NN,\bbulk}
e_{\le\NN}(\tau) + \|v\|_{\sH^{\NN+1}(\diamond^{\taufix}_{\tau,\sfix})}
\end{align*}

\item 
Using \eqref{eq:mapunifbounds}, 
\eqref{eq:dtaupos1Ap} and
\eqref{eq:cCbulk},
$\|\V^{\indI}\BSHS(c,c)\|_{L^2(\diamond^{\taufix}_{\tau,\sfix})}
\lesssim_{\NN,\bbulk}
e_{\le\NN}(\tau)$.
\item 
Similarly to \eqref{eq:Fv}, 
$
\|\V^{\indI}\Fvec\|_{L^2(\diamond^{\taufix}_{\tau,\sfix})}
\lesssim_{\NN,\bbulk}
\|v\|_{\sH^{\NN+1}(\diamond^{\taufix}_{\tau,\sfix})}$.
\end{itemize}
Collecting terms yields \eqref{eq:EqEst}.
The estimates \eqref{eq:d/dtEIc} and \eqref{eq:EqEst} yield
\begin{align*}
\tfrac{d}{d\tau} E_{\indI}(\tau)
&\lesssim_{\NN,\sfix,\bbulk}
e_{\le\NN}(\tau)\big(
e_{\le\NN}(\tau)
+
\|v\|_{\sH^{\NN+1}(\diamond^{\taufix}_{\tau,\sfix})}
\big)
\end{align*}
We now sum over $\indI\in\indset_{0,\le\NN}$,
which yields the same estimate but where $ E_{\indI}(\tau)$
on the left is replaced by $E_{0,\le\NN}(\tau)=e_{\le\NN}(\tau)^2$.
Thus we obtain that for all $\tau\in[0,\taum]$
(see also the footnote preceding 
\eqref{eq:EnergyEstimateEsquareroot}):
\begin{align*}
\tfrac{d}{d\tau} e_{\le \NN}(\tau)
&\lesssim_{\NN,\sfix,\bbulk}
e_{\le\NN}(\tau)
+
\|v\|_{\sH^{\NN+1}(\diamond^{\taufix}_{\tau,\sfix})}
\end{align*}
Integrating this inequality in $\tau$
yields that for all $\tau\in[0,\taum]$:
\begin{align*}
e_{\le \NN}(\tau)
\lesssim_{\NN,\sfix,\bbulk}
\tint_{0}^{\tau} \|v\|_{\sH^{\NN+1}(\diamond^{\taufix}_{\tau',\sfix})}\,d\tau'
\end{align*}
where we use compactness in $\tau$, 
and $e_{\le\NN}(0)=0$
by \eqref{eq:cdatabulk} and the fact that $\V^1,\V^2,\V^3$
are tangential to $\tau=0$.
This implies \eqref{eq:ekktauint},
by \eqref{eq:dtaupos1Ap} and an 
admissible largeness assumption on $\Cbulk$.\qed
\end{proof}
%-------------------------------------%
\begin{proof}[of Lemma \ref{lem:bulkpropaux}]
We will specify $\ClargeCpt$ during the proof.
Instead of specifying $\CsmallCpt$ explicitly,
we will make finitely many admissible smallness assumptions on $\CsmallCpt$,
where admissible means that they depend only on $\NN$ and $\sfix$
(the dependencies on the fixed maps in
Definition \ref{def:SHSarrays_bulk} will be suppressed).

As a preliminary, note that 
for all $k\in\Z_{\ge\new{4}}$,
% and all $\tau\in[0,\taufix]$,
\begin{equation}\label{eq:vpointwise}
\|v\|_{\nosC^{\lfloor\frac{k+1}{2}\rfloor}(\new{\diamond^{\taufix}_{\sfix}})}
	\;\lesssim_{k}\;  
	\tint_{0}^{\frac\pi2}\|v\|_{\sH^{\new{k+1}}(\diamond_{\tau',\sfix})} d\tau'
\end{equation}
Proof of \eqref{eq:vpointwise}:
By \eqref{eq:Ckint}, for all $\tau\in[0,\frac\pi2]$ we have 
\begin{align*}
\|v\|_{\sC^{\lfloor\frac{k+1}{2}\rfloor}(\diamond^{\taufix}_{\tau,\sfix})}
	\lesssim_{k}
	\tint_{0}^{\frac\pi2}
	\|v\|_{\sH^{\lfloor\frac{k+1}{2}\rfloor+3}(\diamond^{\taufix}_{\tau',\sfix})} d\tau'
	\le
	\tint_{0}^{\frac\pi2}\|v\|_{\sH^{\new{k+1}}(\diamond_{\tau',\sfix})} d\tau'
\end{align*}	
where in the second step we use 
\new{$\lfloor\frac{k+1}{2}\rfloor+3\le k+1$ (use $k\ge4$)} 
%Table[Floor[(k + 1)/2] + 3 <= k + 1, {k, 4, 20}]
and $\diamond^{\taufix}_{\tau',\sfix}\subset \diamond_{\tau',\sfix}$.
This implies \eqref{eq:vpointwise} by \ref{item:vsupp}
and by $\diamond^{\taufix}_{\sfix}=\cup_{\tau\in[0,\taufix]}\diamond^{\taufix}_{\tau,\sfix}$.

\proofheader{Proof of existence and Part 1, 
	for \eqref{eq:cSHSD} instead of \eqref{eq:ceqD}.}
More precisely, here we prove the following:
\begin{align}\label{eq:ep1}
\begin{aligned}
&\text{There exists \smash{$c\in\gxG^1(\diamond^{\taufix})
\simeq C^\infty(\diamond^{\taufix},\R^{\ngG_1})
$} that satisfies}\\[-1mm]
&\text{\eqref{eq:cSHSD}, \eqref{eq:cdataD}, \eqref{eq:czeroD}, \eqref{eq:cNhsNH},
and $\sqrt{c^Tc}\le \deltabulk$ on $\diamond^{\taufix}$.}
\end{aligned}
\end{align}
where we use the identification \eqref{eq:identify_g_vec_bulk},
and where we recall that $\deltabulk\in (0,1]$ has been 
fixed as in Lemma \ref{lem:deltacausalcpt}.
Associated to $v$
define $\AmatLin^\mu$, $\AmatBil^\mu$, $\LMat$, $\BSHS$, $\Fvec$ as in 
\eqref{eq:aALBFbulk}. 
Recall that \eqref{eq:cSHSD} is equivalent to \eqref{eq:cSHSDMatrices}.
By \eqref{eq:vpointwise} with $k=\NN$ and \ref{item:vNewBound},
\begin{align}
\|v\|_{\nosC^{\lfloor\frac{\NN+1}{2}\rfloor}(\new{\diamond^{\taufix}_{\sfix}})} 
&\lesssim_{\NN} 
\CsmallCpt
\nonumber
\intertext{
Thus under an admissible smallness assumption on $\CsmallCpt$,}
\|v\|_{\nosC^{\lfloor\frac{\NN+1}{2}\rfloor}(\new{\diamond^{\taufix}_{\sfix}})} 
&\le \deltabulk
\label{eq:vdelta}
\end{align}
Then for all $u \in \R^{\ngG_1}$ with $\sqrt{u^Tu}\le\deltabulk$:%
\begin{subequations}\label{eq:posc}
\begin{align}
\tfrac12\one
	&\le \AmatLin^0+\AmatBil^0(u) \le 2\one
	&&\text{at every point on $\diamond^{\taufix}_{\sfix}$}
	\label{eq:dtaupos1ApMain}\\
0
	&< d\s((\AmatLin^\mu+\AmatBil^\mu(u)) \V_\mu) 
	&&\text{at every point on $\Dspop_{\le \new{1}}\cap\diamond^{\taufix}_{\sfix}$}
	\label{eq:dspos1ApMAin}\\
0
	&< d\fctfol((\AmatLin^\mu+\AmatBil^\mu(u)) \V_\mu)
	&&\text{at every point on $\diamond^{\taufix}_{\sfix}$}
	\label{eq:cnullnonneg1ApMain}	
\end{align}
\end{subequations}
by Lemma \ref{lem:deltacausalcpt} with $k=1$ 
and with $u$ there given by $v+\ginj_1 u$ here.

We will use Lemma \ref{lem:EEbulk} with the parameters in Table \ref{tab:EEbulkAppN}.
Let $\Capply{\Cbulk}$ be the constant produced by Lemma \ref{lem:EEbulk}
(called $\Cbulk$ there).
It depends only on $\NN$, $\sfix$
(in particular it is independent of $\tau_*$),
thus $\ClargeCpt,\CsmallCpt$ are allowed to depend on $\Capply{\Cbulk}$.
Set $\ClargeCpt=\Capply{\Cbulk}(\Capply{\Cbulk}+1)$.

\claimheader{Claim:}
For all $\taum\in(0,\taufix]$ and all 
\begin{equation}\label{eq:ccpt}
c\in C^\infty(\diamond_{\le\taum,\sfix}^{\taufix},\R^{\ngG_1})
\end{equation}
if 
\begin{subequations}\label{eq:cDapriori}
\begin{align}
(\AmatLin^\mu+\AmatBil^\mu(c)) \V_\mu c 
	&= \LMat c + \BSHS(c,c) + \Fvec \label{eq:ceqDapriori}\\
c|_{\tau=0} 
	&=0 \label{eq:cdataDapriori}\\
c|_{\Dspop_{\le s_*}} 
	&= 0 \label{eq:czeroDapriori}\\
\|c\|_{C^{\lfloor\frac{\NN+1}{2}\rfloor}(\diamond_{\tau,s_*}^{\taufix})}
	&\le \deltabulk 
	\qquad\qquad\;\;\,
	\text{for all $\tau\in[0,\taum]$}
	 \label{eq:csmallDapriori}\\
\|c\|_{\H^{\NN}(\diamond_{\tau,s_*}^{\taufix})} 
	&\le 2\Capply{\Cbulk} \CsmallCpt
	\qquad\qquad
	\text{for all $\tau\in[0,\taum]$} 
	\label{eq:cEEDapriori}
\end{align}
\end{subequations}
then,
under an admissible smallness assumption on $\CsmallCpt$,
for all $\tau\in[0,\taum]$:
\begin{subequations}\label{eq:cDapriorigap}
\begin{align}
\|c\|_{C^{\lfloor\frac{\NN+1}{2}\rfloor}(\diamond_{\tau,s_*}^{\taufix})}
	&\le \tfrac12\deltabulk 
	 \label{eq:csmallDapriorigap}\\
\|c\|_{\H^{\NN}(\diamond^{\taufix}_{\tau,s_*})} 
	&\le \Capply{\Cbulk} 
	\tint_0^{\tau}\|v\|_{\sH^{\NN+1}(\diamond^{\taufix}_{\tau',s_*})}d\tau'
	\le \Capply{\Cbulk} \CsmallCpt 
	\label{eq:cEEDapriorigapepsNEW}\\
\|c\|_{\sH^{\NN}(\diamond^{\taufix}_{\tau,s_*})} 
	&\le
	\Capply{\Cbulk}(\Capply{\Cbulk}+1)
	\big(\tint_0^{\tau}\|v\|_{\sH^{\NN+1}(\diamond^{\taufix}_{\tau',s_*})}d\tau' 
	+
	\|v\|_{\sH^{\NN}(\diamond^{\taufix}_{\tau,s_*})}\big)
	\label{eq:Hallder}
\end{align}
\end{subequations}

\claimheader{Proof of claim:}
We check that the assumptions of Lemma \ref{lem:EEbulk} hold 
with the parameters in Table \ref{tab:EEbulkAppN}:
$v$ is smooth on $\diamond^{\taufix}_{\le\taum,s_*}$ 
because it is smooth on $\diamond_+$;
$c$ is smooth there by \eqref{eq:ccpt};
\eqref{eq:ceqbulk}, 
\eqref{eq:cdatabulk},
\eqref{eq:czerobulk}
hold by 
\eqref{eq:ceqDapriori}
\eqref{eq:cdataDapriori}
\eqref{eq:czeroDapriori};
\eqref{eq:cvdeltabulk} holds by \eqref{eq:vdelta} and \eqref{eq:csmallDapriori};
\eqref{eq:cCbulk} holds by \eqref{eq:csmallDapriori} and $\deltabulk\le1$;
\eqref{eq:vCbulk} holds by \eqref{eq:vdelta} and $\deltabulk\le1$.
Thus the assumptions hold.
We now show \eqref{eq:cDapriorigap}.

%-----------------------------%
\begin{table} 
\centering
\begin{tabular}{cc|c|c}
	&
	Parameters 
	&
	\multicolumn{2}{c}{Parameters used to invoke Lemma \ref{lem:EEbulk}}
	\\
	&
	in Lemma \ref{lem:EEbulk}
	&
	\textit{Existence and Part 1}
	&
	\textit{Part 2}
	\\
	\hline
Input
	& $\NN$, $\sfix$, $\bbulk$
	& $\NN$, $\sfix$, $1$
	& $k$, $\sfix$, $C_{k,\sfix,\CHigherCpt}$ in \eqref{eq:vck}\\
	& $\taufix$, $\taum$
	& $\taufix$, $\taum$
	& $\taufix$, $\taufix$ \\
	& $c$, $v$
	& $c$ in \eqref{eq:ccpt}, $v$
	& $c$ in \eqref{eq:ep1}, $v$ \\
	\hline
Output
	&$\Cbulk$
	&$\Capply{\Cbulk}$
	&$\Capply{\Cbulk}_k$
\end{tabular}
\captionsetup{width=115mm}
\caption{
The first column lists the input and output parameters of Lemma \ref{lem:EEbulk}. 
The second column specifies the choice of input parameters used to invoke
Lemma \ref{lem:EEbulk}, in the proof of 
existence and Part 1 of Lemma \ref{lem:bulkpropaux},
in terms of the input parameters of Lemma \ref{lem:bulkpropaux}
and the parameters introduced in this proof.
The output parameter produced by this invocation of
Lemma \ref{lem:EEbulk} is denoted $\Capply{\Cbulk}$,
it depends only on the parameters in the first row.
Analogously for the third column, used to invoke Lemma \ref{lem:EEbulk}
in the proof of Part 2.}
\label{tab:EEbulkAppN}
\end{table}
%-----------------------------%

\eqref{eq:csmallDapriorigap}:
For all $\tau\le\taum$, by \eqref{eq:SobolevC}
and $\lfloor\frac{\NN+1}{2}\rfloor+3\le\NN$ (use $\NN\ge6$),
%Table[Floor[(n + 1)/2] + 3 <= n, {n, 6, 20}]
\begin{align*}
\|c\|_{C^{\lfloor\frac{\NN+1}{2}\rfloor}(\diamond_{\tau,s_*}^{\taufix})}
	&\lesssim_{\NN}
	\sup_{\tau'\in [0,\taum]} \|c\|_{\sH^{\NN}(\diamond_{\tau',s_*}^{\taufix})}
\intertext{
Using \eqref{eq:sHtoHbulk} with $k=\NN$, 
and the fact that $\Capply{\Cbulk}$ depends only on $\NN$, $\sfix$,
we obtain}	
\|c\|_{C^{\lfloor\frac{\NN+1}{2}\rfloor}(\diamond_{\tau,s_*}^{\taufix})}
	&\lesssim_{\NN,\sfix}
	\sup_{\tau'\in [0,\taum]}
	(\|c\|_{\H^{\NN}(\diamond_{\tau',\sfix}^{\taufix})} 
	+ \|v\|_{\sH^{\NN}(\diamond_{\tau',\sfix}^{\taufix})})
\end{align*}
The $c$-term is bounded by $2\Capply{\Cbulk} \CsmallCpt$, 
by \eqref{eq:cEEDapriori}.
By \eqref{eq:LinfL2L1L2} and \ref{item:vsupp},
\begin{align*}
\sup_{\tau'\in [0,\taum]}
\|v\|_{\sH^{\NN}(\diamond_{\tau',\sfix}^{\taufix})}
&\lesssim_{\NN,\sfix}
\tint_{0}^{\frac\pi2}
\|v\|_{\sH^{\NN+1}(\diamond_{\tau',\sfix}^{\taufix})}
d\tau'
\le \CsmallCpt
\end{align*}
where the last step holds by \ref{item:vNewBound}
and $\diamond_{\tau',\sfix}^{\taufix}\subset\diamond_{\tau',\sfix}$.
Thus 
$$
\|c\|_{C^{\lfloor\frac{\NN+1}{2}\rfloor}(\diamond_{\tau,s_*}^{\taufix})}
\lesssim_{\NN,\sfix}
\CsmallCpt
$$
which implies \eqref{eq:csmallDapriorigap}
under an admissible smallness assumption on $\CsmallCpt$.

\eqref{eq:cEEDapriorigapepsNEW}:
The first inequality holds by \eqref{eq:ekktauint},
the second by \ref{item:vsupp}, \ref{item:vNewBound}.

\eqref{eq:Hallder}:
This follows from 
\eqref{eq:sHtoHbulk} with $k=\NN$ and from \eqref{eq:cEEDapriorigapepsNEW}.
This concludes the proof of the claim.

Define
\[ 
I = \Big\{ \taum \in (0,\taufix] \mid 
\text{There exists $c\in C^\infty(\diamond_{\le\taum,\sfix}^{\taufix},\R^{\ngG_1})$
that staifies \eqref{eq:cDapriori}} \Big\}
\]
Note that $\taum\in I$ implies $(0,\taum]\subset I$.

\claimheader{Claim:} $I=(0,\taufix]$.

\claimheader{Proof of claim:}
This is similar to, but easier than, the open-closed argument in the
proof of Theorem \ref{thm:nonlinEE}.
We have:
\begin{itemize}
\item 
$I$ is nonempty:
By local well-posedness of symmetric hyperbolic systems \cite[Section 16.1-16.2]{Taylor3},
using the fact that $\AmatLin^\mu$, $\AmatBil^\mu$ are symmetric
(Lemma \ref{lem:aAminkcpt})
and the positivity \eqref{eq:dtaupos1ApMain}, 
one obtains that there exists a closed trapezoidal domain 
$T\subset \diamond_{\sfix}^{\taufix}$ as indicated
in Figure \ref{fig:bulknonempty}
and $c\in C^{\infty}(T,\R^{\ngG_1})$
that satisfies \eqref{eq:ceqDapriori} and \eqref{eq:cdataDapriori}.
On the intersection $T\cap\Dspop_{\le\sfix}$ also the 
zero solution satisfies \eqref{eq:ceqDapriori} and \eqref{eq:cdataDapriori}
since $\Fvec=0$ by \ref{item:MCvsp}.
Finite speed of propagation applied to $T\cap\Dspop_{\le\sfix}$
(the inner lateral boundary component of this intersection 
is positive for the zero solution by \eqref{eq:dspos1ApMAin} with $u=0$, 
we may assume that the outer lateral boundary component is also 
positive by \eqref{eq:dtaupos1ApMain} with $u=0$ and by choosing $T$
sufficiently flat)
implies that $c$ coincides with the zero solution on this intersection.
We can therefore extend $c$ by zero to get a smooth solution 
on $\diamond_{\le\taum,\sfix}^{\taufix}$ for a small $\taum>0$.
By construction \eqref{eq:czeroDapriori} holds.
Since the left hand sides of \eqref{eq:csmallDapriori} and \eqref{eq:cEEDapriori}
are zero for $\tau=0$ (by \eqref{eq:cdataDapriori}
and the fact that $\V_1,\V_2,\V_3$ are tangential to $\tau=0$),
and continuous in $\tau$, \eqref{eq:csmallDapriori} and \eqref{eq:cEEDapriori} hold by making $\taum$ smaller if necessary.
Then $\taum\in I$.

\item 
$I$ is open in $(0,\taufix]$:
Let $\taum\in I$ with $\taum<\taufix$
(if $\taum=\taufix$ then we are done),
and let $c$ be the solution on $\diamond_{\le\taum,\sfix}^{\taufix}$
that satisfies \eqref{eq:cDapriori}.
Then $c$ also satisfies \eqref{eq:cDapriorigap}.
We show that there exists $\taum'\in(\taum,\taufix]$
with $\taum'\in I$.
For this let $\tau_0$ be the value of $\tau$
at which $\Dspop_{\sfix}$ and $\nullinf^{\taufix}$ intersect,
see Figure \ref{fig:bulknonempty}
(one has $\tau_0\in(0,\taufix)$ using $\taufix\ge\frac\pi2$ 
and Remark \ref{rem:Dstau}).
There are two cases:
\begin{itemize}
\item 
$\taum<\tau_0$:
By an argument analogous to the first item
(using local well-posedness and finite speed of propagation),
$c$ extends as a solution of \eqref{eq:ceqDapriori}
to $\diamond_{\le\taum',\sfix}^{\taufix}$ for some $\taum'>\taum$,
that satisfies \eqref{eq:czeroDapriori}.
Then \eqref{eq:csmallDapriori}, \eqref{eq:cEEDapriori} hold
because $c$ satisfies \eqref{eq:csmallDapriorigap}, \eqref{eq:cEEDapriorigapepsNEW}
on $\diamond_{\le\taum,\sfix}^{\taufix}$,
by continuity,
and making $\taum'>\taum$ smaller if necessary.
Then $\taum'\in I$.
\item 
$\taum\ge\tau_0$:
By local well-posedness, $c$ extends as a solution of \eqref{eq:ceqDapriori}
to $\diamond_{\le\taum',\sfix}^{\taufix}$ for some $\taum'>\taum$,
where one may extend the symmetric hyperbolic system and the initial data 
smoothly across the boundary of $\diamond_{\sfix}^{\taufix}$,
and use \eqref{eq:cnullnonneg1ApMain} to show that the solution is independent of the extension.
The solution satisfies \eqref{eq:cDapriori},
where for \eqref{eq:csmallDapriori}, \eqref{eq:cEEDapriori}
one must use \eqref{eq:csmallDapriorigap}, \eqref{eq:cEEDapriorigapepsNEW}
and continuity. Then $\taum'\in I$.
\end{itemize}
\item 
$I$ is closed in $(0,\taufix]$:
Let $\taum\in\bar{I}$.
Then there exists a smooth solution $c$ on 
$\diamond_{\sfix}^{\taufix}\cap([0,\taum)\times S^3)$ 
that satisfies \eqref{eq:cDapriori}
(this uses a standard uniqueness argument, c.f.~the proof of \eqref{eq:stronguniqueness} below).
A persistence of regularity argument
(essentially the energy estimates \eqref{eq:bulktaufixhighersH} 
restricted to $\tau\in[0,\taum)$)
shows that $c$ extends smoothly to $\tau=\taum$.
Then \eqref{eq:ceqDapriori}, \eqref{eq:czeroDapriori},
\eqref{eq:csmallDapriori}, \eqref{eq:cEEDapriori}
hold up to $\tau=\taum$ by continuity.
Thus $\taum\in I$.
\end{itemize}
Thus $I=(0,\taufix]$, which proves the claim.

\claimheader{We conclude \eqref{eq:ep1}:}
We have $\taufix\in I$, 
hence there exists a smooth solution 
$c$ on $\diamond_{\le\taufix,\sfix}^{\taufix}=\diamond_{\sfix}^{\taufix}$
that satisfies \eqref{eq:cDapriori}.
By \eqref{eq:czeroDapriori} we can extend $c$ by zero
to obtain a smooth solution on $\diamond^{\taufix}$,
which satisfies \eqref{eq:ceqDapriori} on $\diamond^{\taufix}$ by \ref{item:MCvsp}.
Clearly this satisfies the properties stated in \eqref{eq:ep1},
where we use the choice $\ClargeCpt=\Capply{\Cbulk}(\Capply{\Cbulk}+1)$.

%-----------------------------------%
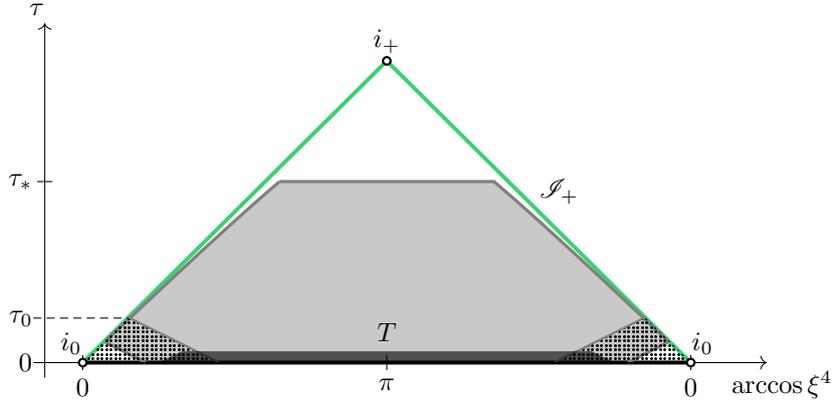
\begin{figure}
\centering
\begin{tikzpicture}[inner sep=0pt,scale=1]
\def\mycoordinatesedgeA{
%(-4.,0.) (-3.9,0.095) (-3.8,0.19) 
(-3.7,0.285) 
(-3.6,0.38) (-3.5,0.475) (-3.4,0.57)
(-3.3,0.665) (-3.2,0.76) (-3.1,0.854) (-3.,0.949) (-2.9,1.043) (-2.8,1.137) (-2.7,1.231) (-2.6,1.324) (-2.5,1.418) (-2.4,1.511) (-2.3,1.603) (-2.2,1.695) (-2.1,1.787) (-2.,1.878) (-1.9,1.969) (-1.8,2.059) (-1.7,2.148) (-1.6,2.236) (-1.5,2.323) (-1.40967,2.4)} ;
\def\mycoordinatesedgeB {
(-1.40967,2.4) (1.40967,2.4)
};

\def\mycoordinatesedgeC{(1.40967,2.4) (1.41,2.4) (1.51,2.314) (1.61,2.227) (1.71,2.139) (1.81,2.05) (1.91,1.96) (2.01,1.869) (2.11,1.778) (2.21,1.686) (2.31,1.594) (2.41,1.502) (2.51,1.409) (2.61,1.315) (2.71,1.222) (2.81,1.128) (2.91,1.034) (3.01,0.939) (3.11,0.845) (3.21,0.75) (3.31,0.656) 
(3.41,0.561) (3.51,0.466) (3.61,0.371) (3.6,0.38) (3.71,0.276)
}; 
\def\mycoordinatesedgeD{ (3.172,0) (-3.172,0) };
\node (tip) at (0,4) {}; % 
\node (l0) at (-4,0) {}; % 
\node (r0) at (4,0) {}; % 
\fill[gray!70, opacity=0.6]
	plot[smooth] coordinates {
	(-3.172,0)
	\mycoordinatesedgeA 
	\mycoordinatesedgeB 
	\mycoordinatesedgeC 
	(3.172,0)
	};

\draw[line width=1.5pt,ufogreen] (l0.center)--(tip.center);
\draw[line width=1.5pt,ufogreen] (r0.center)--(tip.center);
\draw[line width=1.5pt, black] (-3.95,0) -- (3.95,0);	

\draw [very thick,gray] plot [smooth] coordinates {\mycoordinatesedgeA}; 
\draw [very thick,gray] plot [smooth] coordinates {\mycoordinatesedgeB}; 
\draw [very thick,gray] plot [smooth] coordinates {\mycoordinatesedgeC}; 
\draw [very thick,gray] plot [] coordinates {(3.172,0) (3.71,0.276)};
\draw [very thick,gray] plot [] coordinates {(-3.172,0) (-3.71,0.276)};
\fill[pattern={Dots[distance=1.7pt]}, pattern color=black, opacity=1] 
	(-4,0) -- (-3.4,0.6) -- (-2.2,0) ;
\draw[gray,thick] (-3.4,0.6) -- (-2.2,0) ;
\fill[pattern={Dots[distance=1.7pt]}, pattern color=black, opacity=1] 
	(4,0) -- (3.4,0.6) -- (2.2,0) ;
\draw[gray,thick] (3.4,0.6) -- (2.2,0) ;

\fill[black,opacity=0.6] (-3,0) -- (-2.7,0.15) -- (2.7,0.15) -- (3,0) ;
\draw[] (0,0.4) node {$T$};

\node[anchor=south west,xshift=-3mm,yshift=1mm] at (l0) {$\spaceinf$};
\node[anchor=south east,xshift=3mm,yshift=1mm] at (r0) {$\spaceinf$};

\draw[->] (-4.5,-0.2) -- (-4.5,4.5) node[anchor=south east,yshift=1mm] {$\tau$};
    \draw (-4.6,2.4) -- (-4.4,2.4);
	\node[left,thick] at (-4.65,0) {$0$};
	\node[left,thick] at (-4.65,2.4) {$\tau_*$};
	
\draw[->] (-4.65,0) -- (5,0) node[anchor=north,xshift=2mm,yshift=-1.1mm] {$\arccos\xi^4$};

\foreach \x in {-4,0,4}
	\draw (\x,-0.1) -- (\x,0.1);
	\node[below,thick] at (-4,-0.2) {$0$};
	\node[below,thick] at (0,-0.2) {$\pi$};
	\node[below,thick] at (4,-0.2) {$0$};

\draw[color=black,thick, fill=white] (l0) circle (.05);
\draw[color=black,thick, fill=white] (tip) circle (.05);
\draw[color=black,thick, fill=white] (r0) circle (.05);
\node[anchor=south,yshift=1mm] at (tip) {$\timeinf$};
\node[anchor=south west,yshift=1mm] at (2,2) {$\fnullinf$};

\draw[densely dashed] (-4.65,0.59) node[anchor=east] {$\tau_0$} -- (-3.4,0.59);

\end{tikzpicture}
\captionsetup{width=115mm}
\caption{The gray shaded region depicts $\diamond^{\tau_*}_{\sfix}$.
The dotted region depicts $\Dspop_{\le\sfix}$, 
here one has $\Fvec=0$ by \ref{item:MCvsp}.
The trapezoidal domain $T$ is chosen sufficiently flat
so that its lateral boundary component is 
positive (spacelike) for the zero solution,
in the sense that the contraction of the 
outward pointing normal one-form 
with $\AmatLin^\mu \V_\mu$ is positive.}
\label{fig:bulknonempty}
\end{figure}
%------------------------%

\proofheader{Proof of Part 0, for \eqref{eq:cSHSD} instead of \eqref{eq:ceqD}.}
More precisely, we prove:
\begin{align}\label{eq:stronguniqueness}
\begin{aligned}
&\text{Let $c$ be as in \eqref{eq:ep1}.
If \smash{$c'\in\gxG^1(\diamond^{\taufix})
\simeq C^\infty(\diamond^{\taufix},\R^{\ngG_1})$}}\\ 
&\text{satisfies
\eqref{eq:cSHSD}, 
\eqref{eq:cdataD}, 
\eqref{eq:czeroD} then $c'=c$.}
\end{aligned}
\end{align}
This shows in particular that $c$ in \eqref{eq:ep1}
is unique.

Proof of \eqref{eq:stronguniqueness}:
By \eqref{eq:czeroD} we have $c'=c$ on $\Dspop_{\le\sfix}$.
Thus it remains to show $c'=c$ on $\diamond_{\sfix}^{\taufix}$.
By \eqref{eq:cSHSD}, equivalently \eqref{eq:cSHSDMatrices},
the difference $c-c'$ satisfies a linear homogeneous 
symmetric hyperbolic system on $\diamond_{\sfix}^{\taufix}$,
with principal term
$$(\AmatLin^\mu+\AmatBil^\mu(c)) \V_\mu$$
(see the proof of Theorem \ref{thm:AbstractUniqueness} for details).
Since $\sqrt{c^Tc}\le\deltabulk$,
for this principal term we control the causal structure 
by \eqref{eq:posc} with $u=c$.
Then standard energy estimates similar to those in Lemma \ref{lem:EEbulk},
using the fact that $c-c'$ vanishes along $\tau=0$ and on $\Dspop_{\le\sfix}$,
imply $c-c'=0$ on $\diamond_{\sfix}^{\taufix}$.
This proves \eqref{eq:stronguniqueness}.

\proofheader{Proof of Part 0.}
This follows from \eqref{eq:stronguniqueness}, 
since \eqref{eq:ceqD} implies \eqref{eq:cSHSD}.

\proofheader{Proof of existence and Part 1.}
It remains to show that $c$ in \eqref{eq:ep1} solves \eqref{eq:ceqD},
i.e.~that the constraints propagate.
Define
\begin{align*}
U = \dg(v+c)+\tfrac12[v+c,v+c] \;\in\;\gx^2(\diamond^{\taufix})
\end{align*}
Our goal is to show that $U=0$. 
By \ref{item:MCvsp} and \eqref{eq:czeroD},
\begin{align}\label{eq:Uzerosp}
U|_{\Dspop_{\le s_*}} =0
\end{align}
Hence it remains to show $U=0$ on $\diamond^{\taufix}_{\sfix}$.
Analogously to \eqref{eq:Uprop} one checks that
\begin{subequations}\label{eq:UpropCpt}
\begin{align}
U \in \gxG^2(\diamond^{\taufix}) \label{eq:UgaugedCpt}\\
\dg U + [v+c,U]=0 \label{eq:UeqCpt}\\
U|_{\tau=0}=0 \label{eq:UdataCpt}
\end{align}
\end{subequations}
Briefly, \eqref{eq:UgaugedCpt} follows from \eqref{eq:cSHSD}
and \ref{item:gaugekernelNEW_bulk} of Lemma \ref{lem:gauge_mainprop_bulk};
\eqref{eq:UeqCpt} follows from \eqref{eq:dgLaax};
for \eqref{eq:UdataCpt} note that
$(d\tau+\anchorg(v)(\tau))U|_{\tau=0}=0$ by \ref{item:vconstraintsD}
and \eqref{eq:cdataD},
that $d\tau+\anchorg(v)(\tau) \in \Omegafut(\cyl)$ along $\tau=0$
by \eqref{eq:dtaupos1ApMain} and \eqref{eq:posiffofut_bulk},
and then conclude $U|_{\tau=0}=0$ using 
\eqref{eq:UgaugedCpt} and injectivity of left-multiplication in 
Lemma \ref{lem:gauge_mainprop_bulk}.

By \eqref{eq:UeqCpt} we have
$\gBil^2(\,\cdot\,,\dg U + [v+c,U])=0$.
By Lemma \ref{lem:translationofeqcyl},
and using the identification \eqref{eq:identify_g_vec_bulk},
this is equivalent to
\begin{equation}\label{eq:Ueqmatbulk}
\slashed{\AmatLin}^{\mu} \V_\mu U
=
\slashed{\LMat}U
\end{equation}
where we define
\begin{align*}
\slashed{\AmatLin}^{\mu} 
	&= \amink_2^{\mu}+ \Amink_2^{\mu}(v+\ginj_1c)\\
\slashed{\LMat} 
	&= \Lmink_2  + \Amink_{21}^\mu(\ginj_2\,\cdot\,) \V_{\mu} (v+\ginj_1c) + \Bmink_2(v+\ginj_1c,\ginj_2\,\cdot\,)
\end{align*}
By Lemma \ref{lem:aAminkcpt},
$\slashed{\AmatLin}^{\mu}$ is a symmetric matrix
at every point on $\diamond^{\taufix}_{\sfix}$.
By Lemma \ref{lem:deltacausalcpt} 
(with $k=2$ and $u=v+\ginj_1 c$)
and \eqref{eq:vdelta} and 
$\sqrt{c^Tc}\le\delta_0$,
\begin{align}\label{eq:poscon}
\begin{aligned}
\tfrac12\one
	&\le \slashed{\AmatLin}^{0} \le 2\one
	&&\text{at every point on $\diamond_{\sfix}^{\taufix}$}\\
0
	&< d\s(\slashed{\AmatLin}^{\mu}\V_\mu) 
	&&\text{at every point on $\Dspop_{\le 1}\cap\diamond^{\taufix}_{\sfix}$}\\
0 &< d\fctfol(\slashed{\AmatLin}^{\mu} \V_\mu)
	&&\text{at every point on $\diamond_{\sfix}^{\taufix}$}
\end{aligned}
\end{align}
Given \eqref{eq:poscon},
standard energy estimates for the linear
homogeneous symmetric hyperbolic system \eqref{eq:Ueqmatbulk},
using \eqref{eq:Uzerosp} and \eqref{eq:UdataCpt},
yield $U=0$ on $\diamond_{\sfix}^{\taufix}$.
Thus $c$ satisfies \eqref{eq:ceqD}.
This concludes the proof of existence and Part 1.

\proofheader{Proof of Part 2.} We prove this by induction in $k\ge\NN$.
Let $\PP_k$ be the statement
\[ 
\PP_{\kk}:\;\;
\text{For all $\CHigherCpt\new{>}0$, 
if \ref{item:vHigherNewBound}$_{k,\CHigherCpt}$
then \eqref{eq:bulkauxhall}$_{k,\CHigherCpt}$}
\]
where, for example, \ref{item:vHigherNewBound}$_{k,\CHigherCpt}$
means \ref{item:vHigherNewBound} with parameters $k$ and $\CHigherCpt$.
The base case $\PP_{\NN}$ holds by Part 1.
For the induction step we fix $k>\NN$,
and show that $\PP_{k-1}$ implies $\PP_{k}$.
Let $\CHigherCpt>0$ and assume that 
\ref{item:vHigherNewBound}$_{k,\CHigherCpt}$ holds.
Then also \ref{item:vHigherNewBound}$_{k-1,\CHigherCpt}$ holds,
hence by the induction hypothesis 
\eqref{eq:bulkauxhall}$_{k-1,\CHigherCpt}$ holds.

We claim that for all $\tau\in[0,\taufix]$:
\begin{subequations}\label{eq:vck}
\begin{align}
\|v\|_{\sC^{\lfloor\frac{k+1}{2}\rfloor}(\diamond^{\taufix}_{\tau,\sfix})} 
&\le C_{k,\sfix,\CHigherCpt} \label{eq:vk}\\
\|c\|_{\new{C}^{\lfloor\frac{k+1}{2}\rfloor}(\diamond_{\tau,\sfix}^{\taufix})}
&\le C_{k,\sfix,\CHigherCpt}\label{eq:ck}
\end{align}
\end{subequations}
for a constant $C_{k,\sfix,\CHigherCpt}>0$ that depends only on $k,\sfix,\CHigherCpt$.
\eqref{eq:vk}: This follows from 
\eqref{eq:vpointwise} and \ref{item:vHigherNewBound}$_{k,\CHigherCpt}$.
\eqref{eq:ck}:
By \eqref{eq:SobolevC} and $\lfloor\frac{k+1}{2}\rfloor+3\le k-1$
(use $k>\NN\ge7$),
\begin{align*}
\|c\|_{C^{\lfloor\frac{k+1}{2}\rfloor}(\diamond_{\tau,\sfix}^{\taufix})}
	&\lesssim_{k}
	\sup_{\tau'\in[0,\taufix]}
	\|c\|_{\sH^{k-1}(\diamond_{\tau',\sfix}^{\tau_*})}\\
	&\lesssim_{k,\sfix,\CHigherCpt}
	\sup_{\tau'\in[0,\taufix]}
	\big(
	\tint_0^{\tau'}
		\|v\|_{\sH^{\kk}(\diamond_{\tau'',\sfix}^{\taufix})}d\tau''
		+
		\|v\|_{\sH^{\kk-1}(\diamond_{\tau',\sfix}^{\taufix})} 
	\big)\\
	&\le
	\tint_0^{\frac\pi2}\|v\|_{\sH^{\kk}(\diamond_{\tau'',\sfix}^{\taufix})}d\tau''
	+
	\sup_{\tau'\in[0,\frac\pi2]}\|v\|_{\sH^{\kk-1}(\diamond_{\tau',\sfix}^{\taufix})} 
\intertext{
where the second inequality holds by \eqref{eq:bulktaufixhighersH}$_{k-1,\CHigherCpt}$,
and the third by \ref{item:vsupp}.
Thus}
\|c\|_{C^{\lfloor\frac{k+1}{2}\rfloor}(\diamond_{\tau,\sfix}^{\taufix})}
	&\lesssim_{k,\sfix,\CHigherCpt}
	\tint_0^{\frac\pi2}
	\|v\|_{\sH^{\kk}(\diamond_{\tau',\sfix}^{\taufix})}d\tau'
	\le
	\CHigherCpt
\end{align*}
using \eqref{eq:LinfL2L1L2} and \ref{item:vHigherNewBound}$_{k,\CHigherCpt}$.
This proves \eqref{eq:ck}.

We use Lemma \ref{lem:EEbulk} with the parameters in Table \ref{tab:EEbulkAppN}.
We check that the assumptions \eqref{eq:EEbulkassp} are satisfied:
\eqref{eq:ceqbulk}, 
\eqref{eq:cdatabulk},
\eqref{eq:czerobulk}
hold by 
\eqref{eq:ceqD},
\eqref{eq:cdataD},
\eqref{eq:czeroD};
\eqref{eq:cvdeltabulk} holds by \eqref{eq:ep1} and \eqref{eq:vdelta};
\eqref{eq:cCbulk}, \eqref{eq:vCbulk} hold by \eqref{eq:vck}.
Now \eqref{eq:ekktauint} implies \eqref{eq:bulktaufixhigherH}$_{k,\CHigherCpt}$,
which together with \eqref{eq:sHtoHbulk} implies \eqref{eq:bulktaufixhighersH}$_{k,\CHigherCpt}$.
\qed
\end{proof}
%-----------------------------%
\begin{proof}[of Proposition \ref{prop:MainCpt}]
We use Lemma \ref{lem:bulkpropaux} with $\NN$, $\sfix$ as in Proposition \ref{prop:MainCpt}. Let $\ClargeCpt$, $\CsmallCpt$ be the constants produced by
Lemma \ref{lem:bulkpropaux}, which depend only on $\NN$, $\sfix$. 
We show that Proposition \ref{prop:MainCpt} holds with the same constants
$\ClargeCpt$, $\CsmallCpt$.
For each $\taufix\in[\frac\pi2,\pi)$ let 
$c_{\taufix}\in \gxG^{1}(\diamond^{\taufix})$ be the solution produced by 
Lemma \ref{lem:bulkpropaux}.
Define:
\begin{equation}\label{eq:defcprop}
c\in \gxG^1(\diamond_+)
\quad
\text{such that}
\quad
c|_{\diamond^{\taufix}} = c_{\taufix}
\quad
\text{for all $\taufix\in[\tfrac\pi2,\pi)$}
\end{equation}
Such a $c$ exists by Part 0 of Lemma \ref{lem:bulkpropaux} (uniqueness),
and it is unique because the $\diamond^{\taufix}$ exhaust $\diamond_+$ 
by Lemma \ref{lem:Dexhaustion}.
Clearly $c$ satisfies \eqref{eq:Cptcprop}.
We conclude Part 0,1,2 of Proposition \ref{prop:MainCpt}.
Part 0:
Suppose that $c'\in\gxG^1(\diamond_+)$ 
satisfies \eqref{eq:Cptcprop}.
Then for every $\taufix\in[\frac\pi2,\pi)$, 
the restriction $c'|_{\diamond^{\taufix}}$ 
also satisfies \eqref{eq:Cptcprop}, 
hence $c'|_{\diamond^{\taufix}}=c_{\taufix}$ 
by Part 0 of Lemma \ref{lem:bulkpropaux}.
Hence $c'=c$ by \eqref{eq:defcprop}.
Part 1 and Part 2:
This follows from Part 1 respectively Part 2 of Lemma \ref{lem:bulkpropaux}
and from Lemma \ref{lem:Dtauexhaustion}.
The last statements in Part 1 and 2 follow from
\ref{item:vsupp}, \eqref{eq:LinfL2L1L2} and Lemma \ref{lem:Dtauexhaustion}.
\qed
\end{proof}

%-------------------------------%
\section{Construction on $\diamond_+$}
	\label{sec:global}

We combine the results from Section \ref{sec:SpaceinfConstruction}
and \ref{sec:bulk} to prove Theorem \ref{thm:main}.

\subsection{Norms for initial data near spacelike infinity}
	\label{sec:DataNorms_i0}

We define norms for the initial data near spacelike infinity (Definition \ref{def:normsdatasp}), used in Theorem \ref{thm:main}.
Further we define an operator that extends the initial data
near $\spaceinf$ to $y^0\ge0$ (Definition \ref{def:extensionspinf}), 
and show continuity properties (Lemma \ref{lem:PextCont}).

For $s>0$ define 
\begin{align} \label{eq:udeltas}
\Dspdata_{\le s} 
&= \Dspop_{\le s} \cap \diamonddata
\end{align}
where $\Dspop_{\le s}$ was introduced in \eqref{eq:deltas},
and where $\diamonddata$ is the initial hypersurface \eqref{eq:diamonddatadef}.
Using the coordinates $y$ in \eqref{eq:yycoords}, 
this is equivalently given by all points in $\diamondy$ with 
$y^0=0$ and $0<|\vec{y}|\le s$.
In particular, $(y^1,y^2,y^3)$ are smooth coordinates on \eqref{eq:udeltas}.
Analogously 
define $\Dspdata_{<s}=\Dspop_{<s} \cap \diamonddata$.
For $0<s_0<s_1$ define
\[ 
\Dspdata_{s_0,s_1}
=
\Dspdata_{\le s_1} \setminus \Dspdata_{<s_0}
\]
which is equivalently given by all points in $\diamondy$ with
$y^0=0$ and $s_0\le|\vec{y}|\le s_1$.

Recall the bundle $\gxdata$ in Definition \ref{def:gxdata}.
For the spaces of sections 
\begin{equation}\label{eq:gxdatasp}
\gxdata(\Dspdata_{\le s})
\qquad\qquad
\gxdata(\Dspdata_{s_0,s_1})
\end{equation}
we again use the homogeneous basis \eqref{eq:gbasis_i0},
now restricted to $y^0=0$.
%-----------------------------%
\begin{definition}[Norms for data near spacelike infinity]\label{def:normsdatasp}
For every $k\in\Z_{\ge0}$ and $\sfix>0$ 
and $f\in C^\infty(\Dspdata_{\le\sfix})$ define:
\begin{align*}
\|f\|_{\Cb^{k}(\Dspdata_{\le \sfix})}
&=
\tsum_{j=\new{0}}^{k} \sum_{i_1,\dots,i_j=1}^3 
\sup_{p\in\Dspdata_{\le \sfix}} 
\big| 
(|\vec{y}|\p_{y^{i_1}}) \cdots (|\vec{y}|\p_{y^{i_j}}) f(p)
\big| \\
%%%%%
\|f\|_{\Hb^{k}(\Dspdata_{\le \sfix})}^2
&=
\tsum_{j=\new{0}}^{k} \sum_{i_1,\dots,i_j=1}^3 
\tint_{\Dspdata_{\le \sfix}}
|(|\vec{y}|\p_{y^{i_1}}) \cdots (|\vec{y}|\p_{y^{i_j}}) f|^2\,\muDu
\end{align*}
where we define \orient{$\muDu=|\vec{y}|^{-3} |dy^1\wedge dy^2\wedge dy^3|$}.
For $0<s_0<s_1\le\sfix$ define 
\[ 
\|f\|_{\Cb^{k}(\Dspdata_{s_0,s_1})}
\qquad
\|f\|_{\Hb^{k}(\Dspdata_{s_0,s_1})}
\]
analogously, with $\Dspdata_{\le\sfix}$ replaced by $\Dspdata_{s_0,s_1}$.
For $k\ge1$ and every $a\ge0$ define:
\begin{align*}
\|f\|_{\HdataNEW^{a,k}(\Dspdata_{\le\sfix})}
&=
\int_{0}^{\sfix} \big(\frac{\sfix}{s}\big)^{a+(k-1)} 
\big(1+|\log\big(\frac{\sfix}{s}\big)|\big)^{k-1}
\|f\|_{\Hb^{k}(\Dspdata_{\frac{s}{\new{3}},s})} \frac{ds}{s}
\end{align*}
We make analogous definitions for vector-valued functions, where we 
apply the norms componentwise and then take the $\ell^2$-sum
of the components;
and for elements in \eqref{eq:gxdatasp},
where we use the homogeneous basis \eqref{eq:gbasis_i0} 
to identify them with vector-valued functions.
\end{definition}
%-----------------------------%
\begin{lemma}\label{lem:DataNorms}
For all 
$k\in\Z_{\ge0}$, 
$a\ge0$, 
$0<s\le\sfix\le1$,
and $f\in C^\infty(\Dspdata_{\le s_*})$:%
\begin{subequations}
\begin{align}
\|f\|_{\Hb^{k}(\Dspdata_{\frac{s}{3},s})}
	&\lesssim_{k}
	\big(\tfrac{s}{\sfix}\big)^{a+k} \|f\|_{\HdataNEW^{a,k+1}(\Dspdata_{\le s_*})}
	\label{eq:fHfHdata}\\
\|f\|_{\Cb^{k}(\Dspdata_{\frac{s}{3},s})}
	&\lesssim_{k}
	\|f\|_{\Hb^{k+2}(\Dspdata_{\frac{s}{3},s})}
	\label{eq:SobolevData}\\
%%%%%%%%%%%%%%%%%%%%%%
\|f\|_{\Cb^{k}(\Dspdata_{\le s})}
	&\lesssim_{k}
	\big(\tfrac{s}{\sfix}\big)^{a+k+2} \|f\|_{\HdataNEW^{a,k+3}(\Dspdata_{\le s_*})}
	\label{eq:fCfH1}
\end{align}
\end{subequations}
\end{lemma}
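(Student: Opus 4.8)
The plan is to pass to the radial logarithmic coordinate $\ell=\log|\vec y|$ and the angular coordinate $\omega=\vec y/|\vec y|\in S^2$ (recall $\Dspdata\subset\diamonddata=\{y^0=0\}$, so $|\vec y|$ is the full radial variable and these sets are $3$-dimensional). In these coordinates $\Dspdata_{\le s_*}$ becomes $(-\infty,\log s_*]\times S^2$, the dyadic shell $\Dspdata_{\sigma/3,\sigma}$ becomes the band $[\log\sigma-\log3,\log\sigma]\times S^2$, the measure $\muDu$ becomes the translation-invariant $d\ell\,dS$ (with $dS$ the round measure on $S^2$), and the fields $X_i:=|\vec y|\p_{y^i}$ $(i=1,2,3)$ become a frame of $TS^2$ on each slice $\ell=\mathrm{const}$ together with $|\vec y|\p_{|\vec y|}=\p_\ell=\sum_i(y^i/|\vec y|)X_i$; in particular $X_1,X_2,X_3$, their commutators, and $\muDu$ are all translation invariant in $\ell$, so every dyadic shell is, with all relevant structure, a copy of the fixed compact $3$-manifold with boundary $\mathbb{B}:=[0,\log3]\times S^2$ shifted in $\ell$. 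Given this, \eqref{eq:SobolevData} is just the Sobolev embedding $H^{k+2}(\mathbb{B})\hookrightarrow C^k(\mathbb{B})$ (valid since $k+2>k+\tfrac32$), transported by the $\ell$-translation; this is the exact analogue of \eqref{eq:sobolevMz}, with a $k$-dependent constant.

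For \eqref{eq:fHfHdata} I would run an FTC/Minkowski argument in the spirit of Lemma \ref{lem:LinfL2toL1L2}. Fix $g=X^{\indI}f$ with $|\indI|\le k$ and a cutoff $\psi(\ell)=\chi(\ell-\log s)$ with $\chi\equiv1$ on $[-\log3,\infty)$ and $\chi\equiv0$ on $(-\infty,-2\log3]$; writing $g(\ell_1,\omega)=\int^{\ell_1}\p_\ell(\psi g)\,d\ell$ for $\ell_1\in[\log(s/3),\log s]$, integrating over $\omega$ and $\ell_1$ and using Minkowski's integral inequality gives, after summing over $|\indI|\le k$, $\|f\|_{\Hb^k(\Dspdata_{s/3,s})}\lesssim_k\int_{\log(s/9)}^{\log s}h(\ell)\,d\ell$, where $h(\ell)=\sum_{|\indJ|\le k+1}\|X^{\indJ}f(\ell,\cdot)\|_{L^2(S^2)}$ (one uses $\p_\ell X^{\indI}f=\sum_i(y^i/|\vec y|)X_iX^{\indI}f$ to absorb the radial derivative). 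On the other side, Cauchy--Schwarz over the finitely many $\indJ$ and a reverse Cauchy--Schwarz over a band of width $\log3$ give $\|f\|_{\Hb^{k+1}(\Dspdata_{\sigma/3,\sigma})}\gtrsim_k\int_{\log(\sigma/3)}^{\log\sigma}h(\ell)\,d\ell$, whence (dropping the logarithmic weight, which is $\ge1$) $\|f\|_{\HdataNEW^{a,k+1}(\Dspdata_{\le s_*})}\gtrsim_k\int_0^{s_*}(s_*/\sigma)^{a+k}\big(\int_{\log(\sigma/3)}^{\log\sigma}h\big)\tfrac{d\sigma}{\sigma}$. For $s\le s_*/3$ one restricts the $\sigma$-integral to $\sigma\in[s/9,3s]\subset(0,s_*]$, on which $(s_*/\sigma)^{a+k}\ge3^{-(a+k)}(s_*/s)^{a+k}$, and a short Fubini computation gives $\int_{s/9}^{3s}\big(\int_{\log(\sigma/3)}^{\log\sigma}h\big)\tfrac{d\sigma}{\sigma}\ge(\log3)\int_{\log(s/9)}^{\log s}h$; combining proves \eqref{eq:fHfHdata} for $s\le s_*/3$, and since any shell $\Dspdata_{s/3,s}$ with $s_*/3<s\le s_*$ is contained in $\Dspdata_{s_*/3,s_*}\cup\Dspdata_{s_*/9,s_*/3}$, the general case reduces to the single outermost shell $\Dspdata_{s_*/3,s_*}$.

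The step I expect to be the main obstacle is precisely this outermost shell: there the FTC cannot reach to radii $>s_*$, and the naive passage to an $L^1$-in-$\ell$ bound above is too lossy — it degrades by a power of the concentration scale of the jet of $f$ near $|\vec y|=s_*$, which $\HdataNEW^{a,k+1}$ does not see near $\sigma=s_*$ with enough weight. To close it one must keep the estimate in $L^2$ in $\ell$: bound $\|f\|_{\Hb^k(\Dspdata_{s_*/3,s_*})}$ via the Poincaré--Wirtinger inequality on the window $[\log(s_*/3),\log s_*]$ (the window-mean of $\|f(\ell,\cdot)\|_{H^k(S^2)}$ plus a multiple of $\|\p_\ell f\|_{L^2_\ell H^k(S^2)}$ over it), and then estimate the window-mean by a one-dimensional Hardy-type inequality against the weighted radial integral defining $\HdataNEW^{a,k+1}$; the extra derivative carried by $\HdataNEW^{a,k+1}$ relative to the claimed $\Hb^k$ is exactly what balances the weight. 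Finally, \eqref{eq:fCfH1} follows by composition: \eqref{eq:SobolevData} gives $\|f\|_{\Cb^k(\Dspdata_{\sigma/3,\sigma})}\lesssim_k\|f\|_{\Hb^{k+2}(\Dspdata_{\sigma/3,\sigma})}$, then \eqref{eq:fHfHdata} with $k$ replaced by $k+2$ gives $\lesssim_k(\sigma/s_*)^{a+k+2}\|f\|_{\HdataNEW^{a,k+3}(\Dspdata_{\le s_*})}$, and since $\Dspdata_{\le s}=\bigcup_{j\ge0}\Dspdata_{3^{-j-1}s,3^{-j}s}$ and the $\Cb^k$-norm over $\Dspdata_{\le s}$ is, up to a $k$-dependent constant, the supremum of the $\Cb^k$-norms over these shells, the supremum of $(3^{-j}s/s_*)^{a+k+2}$ over $j\ge0$ is attained at $j=0$, yielding \eqref{eq:fCfH1}.
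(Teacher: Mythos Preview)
Your treatment of \eqref{eq:SobolevData} and \eqref{eq:fCfH1} is correct and matches the paper's argument (Sobolev on a fixed model annulus, transported by translation in $\ell=\log|\vec y|$; then compose \eqref{eq:SobolevData} with \eqref{eq:fHfHdata} and take the supremum over dyadic shells).

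For \eqref{eq:fHfHdata} your strategy is right but your execution creates a difficulty that the paper avoids entirely. By passing through the pointwise-in-$\ell$ quantity $h(\ell)$ and then re-aggregating into band norms via Fubini, you are forced to confront the vanishing overlap near $\ell=\log s_*$; your proposed Poincar\'e--Wirtinger/Hardy fix is vague and not obviously correct. The paper instead proves a band-to-band inequality directly: after the cutoff, it applies the fundamental theorem of calculus to $|f|^2$ rather than $|f|$ (so that $\partial_\ell |f|^2 = 2f\,\partial_\ell f$), integrates over the band and over the angular sphere, uses Cauchy--Schwarz, and then cancels one factor of $\|f\|_{L^2}$. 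This yields
\[
\|f\|_{\Hb^{k}(\Dspdata_{s/3,s})}\;\lesssim_k\;\int_{s/27}^{s}\|f\|_{\Hb^{k+1}(\Dspdata_{s'/3,s'})}\,\frac{ds'}{s'},
\]
with \emph{band} norms already on the right. Since $[s/27,s]\subset(0,s_*]$ for every $s\le s_*$, there is no outermost-shell issue: one simply inserts $1=(s'/s)^{a+k}(s/s')^{a+k}$, uses $s'\le s$ to pull out $(s/s_*)^{a+k}$ (after rescaling $s_*=1$), drops the logarithmic factor (which is $\ge1$), and extends the $s'$-integral to $(0,s_*]$ to recognize $\|f\|_{\HdataNEW^{a,k+1}}$. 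The ``main obstacle'' you identified is thus an artifact of going through $h(\ell)$; the square-then-cancel variant of the FTC step eliminates it.
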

%----------------------------%
\begin{proof}
After rescaling, it suffices to prove the lemma for $\sfix=1$.
Denote $\homMfddata=(-\infty,0]\times S^2$.
It is convenient to identify
\begin{equation}\label{eq:identdata}
\Dspdata_{\le1}\simeq \homMfddata
\qquad
\vec{y}\mapsto \big(\log(|\vec{y}|),\tfrac{\vec{y}}{|\vec{y}|}\big)
\end{equation}
which is the identification \eqref{eq:ztcoords} 
restricted to $y^0=0$ respectively $\ttcoord=0$.
Accordingly, we denote 
the coordinate on the first factor $(-\infty,0]$ of $\homMfddata$ by 
$\zzetadata=\log(|\vec{y}|)$.
For $\zz\le0$ let $\homMfddata_{\zz}=\{\zz\}\times S^2$
and for $\zz_0<\zz_1\le0$ let
$\homMfddata_{\zz_0,\zz_1}=[\zz_0,\zz_1]\times S^2$.

We prove \eqref{eq:fHfHdata}.
Set $\qq=\log(3)$.
We first show that for all $\zz_0\le0$:
\begin{align}\label{eq:dataminkineq}
\|f\|_{L^2(\homMfddata_{\zz_0-\qq,\zz_0})}
\lesssim
\tint_{\zz_0-3\qq}^{\zz_0}
(\|f\|_{L^2(\homMfddata_{\zz-\qq,\zz})}
+
\|\p_{\zzetadata}f\|_{L^2(\homMfddata_{\zz-\qq,\zz})})
d\zz
\end{align}
where the $L^2$-norm is defined using 
$|d\zzetadata\wedge\mu_{S^2}|$, 
which is equal to $\muDu$ via \eqref{eq:identdata}.

Proof of \eqref{eq:dataminkineq}:
By translating in $\zzetadata$, it suffices to prove this for $\zz_0=0$.
By using a cutoff function that is equal to one on $[-\qq,0]$
and zero for $\zz\le -\frac32\qq$, it suffices to prove
the inequality under the additional assumption that 
\begin{align}
\supp(f)&\subset [-\tfrac{3}{2}\qq,0]\times S^2
	\label{eq:suppfzero}
\end{align}
We now prove \eqref{eq:dataminkineq} with $\zz_0=0$
and under the additional assumption \eqref{eq:suppfzero}.

Analogously to \eqref{eq:hl1} one obtains that for all $(\zz,p)\in\homMfddata$:
\begin{align*}
|f(\zz,p)|^2 
	\lesssim 
	\tint_{\zz-\qq}^{\zz} (|f(\zz',p)|^2 + |f(\zz',p)||(\p_{\zzetadata}f)(\zz',p)|)\,d\zz'
\end{align*}
Integrating over $p\in S^2$ relative to $|\mu_{S^2}|$, and using Fubini, one obtains
\begin{align*}
\|f\|_{L^2(\homMfddata_{\zz})}^2
	&\lesssim 
	\|f\|_{L^2(\homMfddata_{\zz-\qq,\zz})}^2
	+
	\|f\p_{\zzetadata}f\|_{L^1(\homMfddata_{\zz-\qq,\zz})}\\
	&\le
	\|f\|_{L^2(\homMfddata_{\zz-\qq,\zz})}(\|f\|_{L^2(\homMfddata_{\zz-\qq,\zz})}
	+
	\|\p_{\zzetadata}f\|_{L^2(\homMfddata_{\zz-\qq,\zz})})
\end{align*}
where the second step uses Cauchy Schwarz.
Integrating over $\zz\in[-3\qq,0]$,
\begin{align*}
\|f\|_{L^2(\homMfddata_{-3\qq,0})}^2
	&\lesssim
	\tint_{-3\qq}^{0}
	\|f\|_{L^2(\homMfddata_{\zz-\qq,\zz})}
	(\|f\|_{L^2(\homMfddata_{\zz-\qq,\zz})}
	+
	\|\p_{\zzetadata}f\|_{L^2(\homMfddata_{\zz-\qq,\zz})})
	d\zz\\
	&\le
	\|f\|_{L^2(\homMfddata_{-3\qq,0})}
	\tint_{-3\qq}^{0}
	(\|f\|_{L^2(\homMfddata_{\zz-\qq,\zz})}
	+
	\|\p_{\zzetadata}f\|_{L^2(\homMfddata_{\zz-\qq,\zz})})
	d\zz
\end{align*}
using \eqref{eq:suppfzero} in the last step.
Canceling yields 
\[ 
\|f\|_{L^2(\homMfddata_{-3\qq,0})}
	\lesssim
	\tint_{-3\qq}^{0}
	(\|f\|_{L^2(\homMfddata_{\zz-\qq,\zz})}
	+
	\|\p_{\zzetadata}f\|_{L^2(\homMfddata_{\zz-\qq,\zz})})
	d\zz
\]
The left hand side bounds $\|f\|_{L^2(\homMfddata_{-\qq,0})}$,
hence this proves \eqref{eq:dataminkineq}.

Via the identification \eqref{eq:identdata},
the inequality
\eqref{eq:dataminkineq} implies that for all $s\in(0,1]$:
\[ 
\|f\|_{\Hb^{0}(\Dspdata_{\frac{s}{3},s})}
\lesssim
\tint_{\frac{s}{27}}^{s}
\|f\|_{\Hb^{1}(\Dspdata_{\frac{s'}{3},s'})} \frac{ds'}{s'}
\]
Using this inequality also for the derivatives of $f$
with respect to the vector fields $|\vec{y}|\p_{y^1},|\vec{y}|\p_{y^2},|\vec{y}|\p_{y^3}$,
one obtains that for all $s\in(0,1]$ and all $k\in\Z_{\ge0}$:
\begin{align*}
\|f\|_{\Hb^{k}(\Dspdata_{\frac{s}{3},s})}
\lesssim_{k}
\tint_{\frac{s}{27}}^{s}
\|f\|_{\Hb^{k+1}(\Dspdata_{\frac{s'}{3},s'})}\frac{ds'}{s'}
\end{align*}
To obtain \eqref{eq:fHfHdata}
we multiply and divide with the polynomial weight, that is,
\begin{align*}
\|f\|_{\Hb^{k}(\Dspdata_{\frac{s}{3},s})}
&\lesssim_{k}\textstyle
\tint_{\frac{s}{27}}^{s}
{s'}^{a+k}(\tfrac{1}{s'})^{a+k}
\|f\|_{\Hb^{k+1}(\Dspdata_{\frac{s'}{3},s'})}\frac{ds'}{s'}\\
&\le\textstyle
s^{a+k}
\tint_{\frac{s}{27}}^{s}
(\tfrac{1}{s'})^{a+k}
\|f\|_{\Hb^{k+1}(\Dspdata_{\frac{s'}{3},s'})}\frac{ds'}{s'}\\
&\le\textstyle
s^{a+k}
\tint_{\frac{s}{27}}^{s}
(\tfrac{1}{s'})^{a+k} (1+|\log(\frac{1}{s'})|)^{k}
\|f\|_{\Hb^{k+1}(\Dspdata_{\frac{s'}{3},s'})}\frac{ds'}{s'}\\
&\le
s^{a+k} \|f\|_{\HdataNEW^{a,k+1}(\Dspdata_{\le 1})}
\end{align*}
where, for the last inequality, we extend the 
domain of integration to $s'\in(0,1]$.
%This concludes the proof of \eqref{eq:fHfHdata}.

\eqref{eq:SobolevData}:
Via \eqref{eq:identdata}, this is a standard three-dimensional Sobolev inequality. 

\eqref{eq:fCfH1}:
Using \eqref{eq:SobolevData} and then \eqref{eq:fHfHdata},
for all $s'\in(0,s]$ we have
\begin{align*} 
\|f\|_{\Cb^{k}(\Dspdata_{\frac{s'}{3},s'})}
&\lesssim_{k}
\|f\|_{\Hb^{k+2}(\Dspdata_{\frac{s'}{3},s'})}
\lesssim_{k}
{s'}^{a+k+2} \|f\|_{\HdataNEW^{a,k+3}(\Dspdata_{\le1})}\\
&\le
s^{a+k+2} \|f\|_{\HdataNEW^{a,k+3}(\Dspdata_{\le1})}
\end{align*}
This implies the claim because 
$\Dspdata_{\le s}
=
\cup_{s'\in(0,s]} \Dspdata_{\frac{s'}{3},s'}$.\qed
\end{proof}
%-----------------------------------%
\begin{definition}\label{def:extensionspinf}
We define an $\R$-linear extension operator 
\begin{equation}\label{eq:Pextdef}
\Pext:\;\gxdata^k(\Dspdata_{\le 1})\to\gx^k(\Dspcl_{\le 1})
\end{equation}
as follows.
Let $(\eg^k_i)_{i=1\dots\ng_k}$ be the basis \eqref{eq:gbasis_i0},
which is given by the elements
\[ 
(\eB_i^k \oplus 0\Ieps)_{i=1\dots6\ng^{\Omega}_k},
\;\;
(\eT_i^k \oplus 0\Ieps)_{i=1\dots4\ng^{\Omega}_k},
\;\;
(0 \oplus \eI^{k+1}_i\Ieps)_{i=1\dots\ng^{\I}_k}
\]
For $\uf\in C^\infty(\Dspdata_{\le1})$ define 
\begin{align*}
\Pext \big((\eGB_i^k \oplus 0\Ieps) \uf \big)
	&=
	\big(\tfrac{\s}{|\vec{y}|}\big)^{k}(\eGB_i^k \oplus 0\Ieps)f
	\\
\Pext \big((\eGT_i^k \oplus 0\Ieps)\uf\big)
	&=
	\big(\tfrac{\s}{|\vec{y}|}\big)^{k+1}(\eGT_i^k \oplus 0\Ieps)f
	\\
\Pext\big((0\oplus\eI^{k+1}_i\Ieps)\uf\big)
	&=
	\big(\tfrac{\s}{|\vec{y}|}\big)^{k+3}(0\oplus\eI^{k+1}_i\Ieps)f
\end{align*}
where $f\in C^\infty(\Dspcl_{\le1})$ is defined by $f(y^0,\vec{y})=\uf(\vec{y})$,
and on the left hand sides the restriction of the basis elements
to $y^0=0$ is implicit.
For all $s\in(0,1]$ the map \eqref{eq:Pextdef} restricts to 
a map $\gxdata^k(\Dspdata_{\le s})\to\gx^k(\Dspcl_{\le s})$,
that we also denote by $\Pext$.
\end{definition}
%--------------------------%
Note that elements in the image of $\Pext$ are indeed
smooth on $\Dspcl_{\le1}$, in particular they are smooth along null infinity, 
because $\frac{\s}{|\vec{y}|}$ is smooth there.

The operator $\Pext$ is an extension operator, in the sense that
$
\Pext(\udata)|_{y^0=0}=\udata
$,
since $\s|_{y^0=0}=|\vec{y}|$.
One has for example
\begin{equation}\label{eq:ExPext}
\textstyle
\Pext\big( 
	(\frac{1}{|\vec{y}|} \frac{dy^0}{|\vec{y}|}\otimes T^1)\oplus0\Ieps 
	\big)
	=\big( (
	\frac{\s}{|\vec{y}|})^2\frac{1}{\s} \frac{dy^0}{\s}\otimes T^1\big)\oplus0\Ieps
	= \big(\frac{1}{|\vec{y}|} \frac{dy^0}{|\vec{y}|}\otimes T^1\big)\oplus0\Ieps
\end{equation}
%-----------------------------%

The specific definition of $\Pext$ is motivated by 
Appendix \ref{ap:ConstructionOnD},
where we construct solutions as in Theorem \ref{thm:main} on $\diamond$
(not only on $\diamond_+$).
%--------------------------%
\begin{lemma}\label{lem:PextCont}
For all $s\in(0,1]$, all $k\in\Z_{\ge0}$ and all $\udata\in\gxdata^k(\Dspdata_{\le s})$, 
\begin{subequations}
\begin{align}
\|\Pext(\udata)\|_{\sCb^{k}(\Dspop_{s})}
&\lesssim_{k}
\|\udata\|_{\Cb^{k}(\Dspdata_{\frac s3,s})}
\label{eq:extensionopCk}
\\
%%%%%
\|\Pext(\udata)\|_{\sHb^{k}(\Dspop_{s})}
&\lesssim_{k}
\|\udata\|_{\Hb^{k}(\Dspdata_{\frac s3,s})}
\label{eq:extensionopHk}\\
%%%%%
\|\Pext(\udata)\|_{\nosCb^{k}(\Dspop_{\le s})}
&\lesssim_{k}
\|\udata\|_{\Cb^{k}(\Dspdata_{\le s})}
\label{eq:extensionopCkfull}
\end{align}
\end{subequations}
where, on the left hand sides, the norms in Definition \ref{def:norms_i0} are used.
\end{lemma}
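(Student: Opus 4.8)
The plan is to argue componentwise with respect to the homogeneous basis \eqref{eq:gbasis_i0}. Writing $\udata=\tsum_i \uf_i\,(\eg^k_i|_{y^0=0})$ with $\uf_i\in C^\infty(\Dspdata_{\le s})$, Definition \ref{def:extensionspinf} and $\R$-linearity of $\Pext$ give
\[
\Pext(\udata)=\tsum_i \big(\tfrac{\s}{|\vec y|}\big)^{p_i} f_i\,\eg^k_i ,\qquad p_i\in\{k,k+1,k+3\},
\]
where $f_i\in C^\infty(\Dspcl_{\le s})$ is the constant-in-$y^0$ extension $f_i(y^0,\vec y)=\uf_i(\vec y)$ and $p_i$ depends only on which of the three blocks (boosts, translations, curvature) the element $\eg^k_i$ lies in. Hence the components of $\Pext(\udata)$ relative to \eqref{eq:gbasis_i0} are $(\s/|\vec y|)^{p_i}f_i$. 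Since $\s=2y^0+|\vec y|$ and $|\vec y|>0$ on $\Dspcl$, the functions $\s/|\vec y|$ and $y^i/\s$ are smooth on $\Dspcl$ and homogeneous of degree zero; by Convention \ref{conv:ztXmu_NEW} one has $\s/|\vec y|=1+2\ttcoord\in[1,3)$ on $\Dspop_{\le1}$, and, by homogeneity together with compactness of $\Dspcl_1$, any smooth homogeneous of degree zero function on $\Dspcl$, together with all of its $X_0,\dots,X_3$-derivatives (again smooth and homogeneous of degree zero), is bounded on $\Dspop_{\le1}$ by a constant depending only on the number of derivatives taken.

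The key step is the following commutator identity: for every multi-index $\indI\in\indset_{\le k}$ and every $f\in C^\infty(\Dspcl)$ with $X_0 f=0$ (equivalently $\p_{y^0}f=0$),
\[
X^{\indI}f=\tsum_{\indJ}\, b_{\indI\indJ}\,(|\vec y|\p_{y})^{\indJ}f ,
\]
where the sum runs over multi-indices $\indJ$ with entries in $\{1,2,3\}$ and $|\indJ|\le|\indI|$, and each $b_{\indI\indJ}\in C^\infty(\Dspcl)$ is homogeneous of degree zero and depends only on $\indI$. This is proved by induction on $|\indI|$, peeling off the leftmost vector field: if it is $X_i=(\s/|\vec y|)\,|\vec y|\p_{y^i}$ with $i\in\{1,2,3\}$ one uses the Leibniz rule and prepends $i$ to each $\indJ$; if it is $X_0$ one uses $[X_0,|\vec y|\p_{y^i}]=-\tfrac{y^i}{\s}X_0$ to commute $X_0$ past all the $|\vec y|\p_{y^i}$, after which it annihilates $f$, so that only the terms in which $X_0$ fell on a coefficient $b_{\indI\indJ}$ survive. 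Combining this with the Leibniz rule applied to $(\s/|\vec y|)^{p_i}$ and the uniform bounds from the previous paragraph yields, on $\Dspop_{\le1}$ and for all $|\indI|\le k$, the pointwise estimate
\[
\big|X^{\indI}\big((\tfrac{\s}{|\vec y|})^{p_i}f_i\big)\big|\ \lesssim_{k}\ \tsum_{\indJ}\big|(|\vec y|\p_{y})^{\indJ}f_i\big| ,
\]
with $\indJ$ as above.

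To finish I would combine this with two elementary observations. First, $(|\vec y|\p_y)^{\indJ}f_i$ is the constant-in-$y^0$ extension of $(|\vec y|\p_y)^{\indJ}\uf_i$, and every point of $\Dspop_s$ satisfies $|\vec y|\in(\tfrac s3,s]$ (from $\s=|\vec y|(1+2\ttcoord)$ with $\ttcoord\in[0,1)$, cf.\ Convention \ref{conv:ztXmu_NEW}), so the projection $\pi\colon(y^0,\vec y)\mapsto(0,\vec y)$ maps $\Dspop_s$ into $\Dspdata_{\frac s3,s}$; taking suprema over $\Dspop_s$ and summing over the $\ng_k$ components gives \eqref{eq:extensionopCk}. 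Second, for \eqref{eq:extensionopHk} the same pointwise reduction applies, and it remains to compare $\int_{\Dspop_s}|h\circ\pi|^2\,\muWs$ with $\int_{\Dspdata_{\frac s3,s}}|h|^2\,\muDu$; in the coordinates $(\ttcoord,\vec y/|\vec y|)$ on $\Dspop_s$ and $(|\vec y|,\vec y/|\vec y|)$ on $\Dspdata_{\frac s3,s}$ the map $\pi$ reads $\ttcoord\mapsto|\vec y|=s/(1+2\ttcoord)$ and is a diffeomorphism onto its image, while $\muWs$ (a smooth density on the compact $\Dspcl_s$) and $\pi^{*}\muDu=\tfrac{2}{1+2\ttcoord}\,d\ttcoord\wedge\mu_{S^2}$ both have densities bounded above and below; since $\muWs$ and $\muDu$ are homogeneous of degree zero the comparison constant is uniform in $s$, which gives \eqref{eq:extensionopHk}. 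Finally \eqref{eq:extensionopCkfull} follows from \eqref{eq:extensionopCk} via $\|{\cdot}\|_{\nosCb^k(\Dspop_{\le s})}=\sup_{s'\in(0,s]}\|{\cdot}\|_{\sCb^k(\Dspop_{s'})}$ (as $\Dspop_{\le s}=\bigsqcup_{s'\le s}\Dspop_{s'}$) together with $\Dspdata_{\frac{s'}{3},s'}\subset\Dspdata_{\le s}$ for $s'\le s$.

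The main obstacle is the commutator identity in the second paragraph: one must see that the $X_0$-derivatives are harmless because $X_0f=0$ and $[X_0,|\vec y|\p_{y^i}]$ is a bounded multiple of $X_0$, and keep track that every coefficient produced along the way stays smooth and homogeneous of degree zero, hence uniformly bounded on $\Dspop_{\le1}$; the remaining steps are routine change-of-variables and bookkeeping.
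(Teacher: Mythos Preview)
Your proof is correct and follows essentially the same approach as the paper: both reduce to a scalar estimate for $(\s/|\vec y|)^n f$ with $f$ independent of $y^0$, exploit that on $\Dspop_s$ one has $|\vec y|\in(\tfrac s3,s]$ so the projection lands in $\Dspdata_{\frac s3,s}$, and compare the homogeneous measures $\muWs$ and $\muDu$; the paper passes through $\s^j\partial_y^j$ and then drops $\partial_{y^0}$ using $\partial_{y^0}f=0$ and $\s\le 3|\vec y|$, while you package the same reduction as a commutator identity. Your treatment of the $X_0$ case is slightly more elaborate than needed---since $(|\vec y|\partial_y)^J f$ manifestly depends only on $\vec y$, one has $X_0(|\vec y|\partial_y)^J f=0$ directly without commuting---but the argument is sound.
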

%--------------------------%
\begin{proof}
We first show the following inequalities:
Let $\uf \in C^\infty(\Dspdata_{\le s})$ and define 
$f \in C^\infty(\Dspcl_{\le s})$ by $\uf(y^0,\vec{y})=f(\vec{y})$.
Then for all $n\in\Z_{\ge0}$:
\begin{subequations}
\begin{align}
\|\big(\tfrac{\s}{|\vec{y}|}\big)^{n}f\|_{\sCb^{k}(\Dspop_{s})} &\lesssim_{n,k} \|\uf\|_{\Cb^{k}(\Dspdata_{\frac{s}{3},s})}
\label{eq:Ext3f}\\
\|\big(\tfrac{\s}{|\vec{y}|}\big)^{n}f\|_{\sHb^{k}(\Dspop_{s})} &\lesssim_{n,k} \|\uf\|_{\Hb^{k}(\Dspdata_{\frac{s}{3},s})}
\label{eq:Ext4f}
\end{align}
\end{subequations}
Proof of \eqref{eq:Ext3f}:
First note that:
\begin{equation}\label{eq:Dsproj}
\text{For all $s\in(0,1]$ and all points in $\Dspop_{s}$ one has 
$\tfrac{s}{3}\le|\vec{y}|\le s$.}
\end{equation}
Using the Leibniz rule
and the fact that 
$\s/|\vec{y}|$ is homogeneous of degree zero 
in the sense of Definition \ref{def:homog},
one has
\[ 
\|\big(\tfrac{\s}{|\vec{y}|}\big)^{n}f\|_{\sCb^{k}(\Dspop_{s})}
\lesssim_{k}
\|\big(\tfrac{\s}{|\vec{y}|}\big)^{n}\|_{\sCb^{k}(\Dspop_{s})}
\|f\|_{\sCb^{k}(\Dspop_{s})}
\lesssim_{n,k}
\|f\|_{\sCb^{k}(\Dspop_{s})}
\]
By the Leibniz rule
and the fact that $\s$ is homogeneous of degree one,
\begin{align}
\|f\|_{\sCb^{k}(\Dspop_{s})}
	&\lesssim_{k} \textstyle
	\sum_{j=0}^k
	\sum_{i_1,\dots,i_{j}=0}^{3} 
	\sup_{p \in \Dspop_{s}}\big|\s^j\p_{y^{i_1}}\cdots\p_{y^{i_j}}f(p)\big|
	\nonumber
\intertext{
Since $\p_{y^0}f=0$, and since $\s=s\le3|\vec{y}|$ on $\Dspop_{s}$
by \eqref{eq:Dsproj}, we obtain 
}
\|f\|_{\sCb^{k}(\Dspop_{s})}
	&\lesssim_{k} \textstyle
	\sum_{j=0}^k
	\sum_{i_1,\dots,i_{j}=1}^{3} 
	\sup_{p \in \Dspop_{s}}\big||\vec{y}|^j\p_{y^{i_1}}\cdots\p_{y^{i_j}}f(p)\big|
	\nonumber\\
	&\lesssim_{k} \textstyle
	\sum_{j=0}^k
	\sum_{i_1,\dots,i_{j}=1}^{3} 
	\sup_{p \in \Dspdata_{\frac{s}{3},s}}\big||\vec{y}|^j\p_{y^{i_1}}
	\cdots\p_{y^{i_j}}\uf(p)\big|
	\label{eq:syppp}
\end{align}
where the second step follows from \eqref{eq:Dsproj}.
The expression \eqref{eq:syppp} is bounded by 
%(actually comparable to) 
$\|\uf\|_{\Cb^{k}(\Dspdata_{s/3,s})}$, 
by the Leibniz rule and homogeneity of $|\vec{y}|$.
This proves \eqref{eq:Ext3f}.

Proof of \eqref{eq:Ext4f}:
By direct calculation,
\[ 
%\tfrac{1}{54}
\tint_{\Dspdata_{\frac{s}{3},s}}|\uf|^2\,\muDu
\;\lesssim\;
\tint_{\Dspop_{s}}|f|^2\,\muWs
\;\lesssim\;
%\tfrac12
\tint_{\Dspdata_{\frac{s}{3},s}}|\uf|^2\,\muDu
\]
where $\muWs$ is the density \eqref{eq:muWs},
used in the definition of $\|{\cdot}\|_{\sHb^{k}(\Dspop_{s})}$.
%%%%%%%%%%%%%%%%%%%%%%%%%
%1) \p_\zeta = y^0\p_{y^0} + r\p_r
%2) i_{\p_\zeta}(-1/s^4 dy^0...dy^3) = -r^2/s^4(y^0 dr - r dy^0) mu_{S2}
%3) Chain: [s/3,s] x S2 -> Delta_s, (x,w) -> (1/2(s-x), x w)
%4) Pullback of y^0 dr - r dy^0 along chain is 1/2 s dx
%5) int_{Delta_s} |f|^2 i_{\p_\zeta}(-1/s^4 dy^0...dy^3)
%	= int_{[s/3,s]} int_S2 |f|^2 (-r^2/s^4 1/2 s dx mu_{S2})
%	= 1/2 int_{[s/3,s]} int_S2 |f|^2 (-r^2/s^3 dx mu_{S2})
%	= 1/2 int_{Deltau_{s/3,s}} |f|^2 (-1/s^3 dy^1 dy^2 dy^3)
%%%%%%%%%%%%%%%%%%%%%%%%%
From this, and a calculation similar to the proof of \eqref{eq:Ext3f},
the inequality \eqref{eq:Ext4f} follows.

We conclude the lemma.
The inequalities \eqref{eq:extensionopCk} and \eqref{eq:extensionopHk}
are immediate from \eqref{eq:Ext3f} respectively \eqref{eq:Ext4f}.
For \eqref{eq:extensionopCkfull}, note that \eqref{eq:extensionopCk} yields 
\begin{align*}
\|\Pext(\udata)\|_{\nosCb^{k}(\Dspop_{\le s})}
&=
\sup_{s'\in(0,s]}\|\Pext(\udata)\|_{\sCb^{k}(\Dspop_{s'})}
\lesssim_{k}
\sup_{s'\in(0,s]}\|\udata\|_{\Cb^{k}(\Dspdata_{\frac{s'}{3},s'})}
\le
\|\udata\|_{\Cb^{k}(\Dspdata_{\le s})}
\end{align*}
using $\Dspdata_{\frac{s'}{3},s'}\subset \Dspdata_{\le s}$
for $s'\in(0,s]$.
\qed
\end{proof}
%----------------------------------%

\subsection{Norms for initial data away from spacelike infinity}
\label{sec:norms_data_bulk}

We define norms for the initial data away from spacelike infinity (Definition \ref{def:normdataawayfromi0}), used in Theorem \ref{thm:main}.
We define an operator that extends the data away from $\spaceinf$
to $x^0\ge0$ (Definition \ref{def:normdataawayfromi0}), and show continuity properties
(Lemma \ref{lem:Pextbulkest}).

Recall $\diamond_{0,s}\subset\diamonddata$
in Definition \ref{def:diamonds,stau}, where $s>0$.
Using the coordinates $x$ in \eqref{eq:xxcoords},
it is equivalently given by all points in 
$\diamond_+$ with $x^0=0$ and $|\vec{x}| \le \tfrac{6}{s}$,
see \eqref{eq:Kelvin}.
For the spaces of sections $\gxdata(\diamonddata)$ and $\gxdata(\diamond_{0,s})$
we use the basis \eqref{eq:gbasis_bulk}, now restricted to $\tau=0$.
The following definition is a special case of Definition \ref{def:bulknorms}.
%-------------------------------%
\begin{definition}[Norms for data away from spacelike infinity]
\label{def:normdataawayfromi0}
For every $k\in\Z_{\ge0}$ and $s>0$ and $f\in C^\infty(\diamond_{0,s})$
define:
\begin{align*}
\|f\|_{\H^k(\diamond_{0,s})}^2 
&=\textstyle
	\sum_{j=0}^k
	\sum_{i_1,\dots,i_{j}=1}^{3}
	\int_{\diamond_{0,s}}\big|\V_{i_1}\cdots\V_{i_j} f\big|^2
	\,\mucylS
\end{align*}
using \eqref{eq:defV} and \eqref{eq:bulkdensities}.
We make an analogous definition for vector-valued functions,
where we apply the norms componentwise and then take the $\ell^2$-sum of the 
components;
and for elements in $\new{\gxdata(\diamond_{0,s})}$,
where we use the basis \eqref{eq:gbasis_bulk}
to identify them with vector-valued functions.
\end{definition}
%-------------------------------%
\begin{definition}\label{def:bulkextension}
We define an $\R$-linear extension operator
\[ 
\Pextbulk :\;\gxdata^k(\diamonddata)\to\gx^k(\overline\diamond_+\setminus\spaceinf)
\]
as follows. Using the basis $(\eg^k_i)_{i=1\dots\ng_k}$ in \eqref{eq:gbasis_bulk},
for all $\uf\in C^\infty(\diamonddata)$ set
\[ 
\Pextbulk(\uf \eg^k_i) = f \eg^k_i
\]
where $f$ is defined by extending $\uf$ constantly in $\tau$ 
(i.e.~$f|_{\tau=0}=\uf$ and $\p_{\tau}f=0$),
and on the left hand side the restriction of the
basis elements to $\tau=0$ is implicit.
\end{definition}
%-------------------------------%
The operator $\Pextbulk$ is an extension operator, 
that is, $\Pextbulk(\udata)|_{\tau=0}=\udata$.
%-------------------------------%
\begin{lemma}\label{lem:Pextbulkest}
For all $k\in\Z_{\ge0}$, all $\new{s\in(0,1]}$, all $\tau\in[0,\pi)$
and $\udata\in\gxdata(\diamonddata)$:
\begin{align*}
\|\Pextbulk(\udata)\|_{\sH^{k}(\diamond_{\tau,4s})}
\le
\|\udata\|_{\H^{k}(\diamond_{0,s})}
\end{align*}
where, on the left hand side, we use the norms in Definition \ref{def:bulknorms}.
\end{lemma}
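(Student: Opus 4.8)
The plan is to unwind both sides of the asserted inequality using the explicit definitions. On the left, by Definition~\ref{def:bulkextension}, $\Pextbulk(\udata)$ is obtained by extending $\udata$ constantly in $\tau$, componentwise with respect to the basis $(\eg^k_i)$ in \eqref{eq:gbasis_bulk} (whose restriction to $\tau=0$ agrees with the basis used to define $\gxdata(\diamonddata)$). So it suffices to prove the scalar inequality: for $f\in C^\infty(\overline\diamond_+\setminus\spaceinf)$ with $\V_0 f=\p_\tau f = 0$, one has $\|f\|_{\sH^k(\diamond_{\tau,4s})}\le\|f|_{\tau=0}\|_{\H^k(\diamond_{0,s})}$. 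Note the left side uses the slashed norm (derivatives along $\V_0,\dots,\V_3$) while the right uses the unslashed norm (derivatives along $\V_1,\V_2,\V_3$ only), so the constant-in-$\tau$ property is exactly what makes the $\V_0$-derivatives drop out.

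First I would observe that since $\p_\tau f = 0$, any word $\V_{i_1}\cdots\V_{i_j} f$ with some $i_\ell=0$ vanishes: the vector fields $\V_1,\V_2,\V_3$ are vector fields on $S^3$ (they involve no $\p_\tau$, see \eqref{eq:defV}), hence commute with $\V_0=\p_\tau$, so we may move any $\V_0$ to act directly on $f$ and get zero. Therefore in the sum defining $\|f\|_{\sH^k(\diamond_{\tau,4s})}^2$ only the terms with $i_1,\dots,i_j\in\{1,2,3\}$ survive, and
\[
\|f\|_{\sH^k(\diamond_{\tau,4s})}^2
=\tsum_{j=0}^k\tsum_{i_1,\dots,i_j=1}^3\tint_{\diamond_{\tau,4s}}|\V_{i_1}\cdots\V_{i_j}f|^2\,\mucylS
=\|f\|_{\H^k(\diamond_{\tau,4s})}^2 .
\]
Next, again using $\p_\tau f=0$ and $[\V_0,\V_i]=0$, each function $\V_{i_1}\cdots\V_{i_j}f$ with indices in $\{1,2,3\}$ is itself constant in $\tau$, so its restriction to $\diamond_{\tau,4s}=(\{\tau\}\times S^3)\cap\diamond_{4s}$ is the same function of $\xi\in S^3$ as its restriction to the $\tau=0$ slice, pulled back along the canonical identification of $\{\tau\}\times S^3$ with $\{0\}\times S^3$; and $\mucylS=|\Vd^1\wedge\Vd^2\wedge\Vd^3|$ is the same $3$-density on each slice (it lives on $S^3$, independent of $\tau$).

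Then the only remaining point is the relation between the regions $\diamond_{\tau,4s}$ and $\diamond_{0,s}$ under this identification. By Definition~\ref{def:diamonds,stau}, $\diamond_{\tau,4s}=(\{\tau\}\times S^3)\cap(\diamond_+\setminus\Dspop_{<\frac{4s}{6}})$ and $\diamond_{0,s}=(\{0\}\times S^3)\cap(\diamond_+\setminus\Dspop_{<\frac{s}{6}})$. I would check that, identifying $\{\tau\}\times S^3$ with $\{0\}\times S^3$ via fixed $\xi$-coordinates, the set $\diamond_{\tau,4s}$ is contained in (the image of) $\diamond_{0,s}$; concretely, if $(\tau,\xi)\in\diamond_+$ then $(0,\xi)\in\diamond_+$ as well (the Minkowski diamond condition $|\tau'|<\distS(\xi,(0,0,0,1))$ is easiest to satisfy at $\tau'=0$), and one must verify $\Dspop_{<\frac{4s}{6}}$ at time $\tau$ is contained, after the identification, in $\Dspop_{<\frac{s}{6}}$ at time $0$ — i.e.\ that a point with $\s=2y^0+|\vec y|<\frac{4s}{6}$ maps to a point with $|\vec y|<\frac s6$ on the initial slice; this should follow from Remark~\ref{rem:Dstau} (which controls how $y^0$ compares to $|\vec y|$, hence how $\s$ compares to $|\vec y|$, on $\Dspcl_{\le 1}$, using $s\le 1$), the factor $4$ being chosen precisely to absorb this comparison. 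Granting this inclusion, the integral of $|\V_{i_1}\cdots\V_{i_j}f|^2$ over $\diamond_{\tau,4s}$ is at most the integral over $\diamond_{0,s}$, and summing over $j$ and the indices gives $\|f\|_{\H^k(\diamond_{\tau,4s})}\le\|f|_{\tau=0}\|_{\H^k(\diamond_{0,s})}$, completing the proof (with constant $1$).

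The main obstacle is the geometric region comparison in the last paragraph: one has to be careful about the identification of slices and to confirm that the numerical factor $4$ in $\diamond_{\tau,4s}$ versus $\diamond_{0,s}$ is indeed large enough, uniformly in $\tau\in[0,\pi)$ and $s\in(0,1]$, using the explicit bounds of Remark~\ref{rem:Dstau} on $\tau$ and $\xi^4$ over $\Dspcl_{\le s}$. Everything else is a direct consequence of $\p_\tau$-invariance of $\Pextbulk(\udata)$ and the commutativity $[\V_0,\V_i]=0$.
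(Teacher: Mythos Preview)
Your approach is the same as the paper's: use $\p_\tau$-invariance and $[\V_0,\V_i]=0$ to collapse $\sH^k$ to $\H^k$, then reduce to the region inclusion $\diamond_{\tau,4s}\subset\diamond_{0,s}$ under the $\xi$-identification. There is, however, one genuine error and one gap in the final step.

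The error: you state the correct inclusion $\diamond_{\tau,4s}\subset\diamond_{0,s}$, but then unwind it wrongly in terms of the excised regions. What is needed is that the \emph{small} ball at time $0$ maps into the \emph{larger} ball at time $\tau$: if $(0,\xi)\in\Dspop_{<s/6}$ and $(\tau,\xi)\in\diamond_+$, then $(\tau,\xi)\in\Dspop_{<4s/6}$. You claim the reverse, namely ``$\Dspop_{<\frac{4s}{6}}$ at time $\tau$ is contained in $\Dspop_{<\frac s6}$ at time $0$'', equivalently ``$\s<\frac{4s}{6}\Rightarrow|\vec y|<\frac s6$ on the initial slice''. This is false already at $\tau=0$, where it would read $\s<\frac{4s}{6}\Rightarrow\s<\frac s6$.

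The gap: even with the direction fixed, Remark~\ref{rem:Dstau} by itself does not deliver the inclusion. The paper instead characterizes $\diamond_{0,s}$ by $\xi^4\le\frac{1-(s/6)^2}{1+(s/6)^2}$, and for $(\tau,\xi)\in\diamond_{\tau,4s}$ with $\xi^4\in(0,1)$ computes
\[
\tfrac{4s}{6}\le\s(\tau,\xi)=\tfrac{2\sin\tau+\sqrt{1-(\xi^4)^2}}{\cos\tau+\xi^4}\le\tfrac32\tfrac{\sqrt{1-(\xi^4)^2}}{\xi^4},
\]
the last step using that $\s$ is increasing in $\tau$ and $\tau\le\arccos\xi^4$ on $\diamond_+$. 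This yields $\xi^4\le(1+(4s/9)^2)^{-1/2}$, which one checks is $\le\frac{1-(s/6)^2}{1+(s/6)^2}$ for $s\in(0,1]$; the case $\xi^4\le0$ is trivial. This explicit computation is where the factor $4$ is actually used.
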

\begin{proof}
Since $\Pextbulk$ extends constantly in $\tau$,
\[ 
\|\Pextbulk(\udata)\|_{\sH^{k}(\diamond_{\tau,4s})} 
=
\|\Pextbulk(\udata)\|_{\H^{k}(\diamond_{\tau,4s})} 
\le
\|\udata\|_{\H^{k}(\diamond_{0,s})} 
\]
where the last step follows from:
\begin{equation}\label{eq:D4s,s}
(\tau,\xi) \in \diamond_{\tau,4s}
\quad\Rightarrow\quad
(0,\xi)\in \diamond_{0,s}
\end{equation}
We prove \eqref{eq:D4s,s}:
First note that (c.f. Remark \ref{rem:Dstau})
\begin{equation}\label{eq:D0xi4}
(0,\xi)\in \diamond_{0,s}
\quad
\Leftrightarrow
\quad
\xi^4 \in [-1,\smash{\tfrac{1-(\frac{s}{6})^2}{1+(\frac{s}{6})^2}}]
\end{equation}
Now let $(\tau,\xi) \in \diamond_{\tau,4s}$.
We distinguish two cases:
\begin{itemize}
\item 
$\xi^4 \in [-1,0]$: Then $(0,\xi) \in \diamond_{0,s}$ by \eqref{eq:D0xi4}.
\item 
$\xi^4 \in (0,1)$:
Then $(\tau,\xi)\in \diamond_+\cap\diamondy$, where $\s$ is defined.
We have
\[ 
\textstyle
\frac{4}{6} s
\le
\s(\tau,\xi) 
=
\frac{2\sin(\tau) + \sqrt{1-(\xi^4)^2}}{\cos(\tau)+\xi^4}
\le
\frac32\frac{\sqrt{1-(\xi^4)^2}}{\xi^4}
\]
where we use the fact that $\s$ is increasing in $\tau$
and $\tau\le\arccos \xi^4$ 
(because $\cos(\tau) - \xi^4>0$ by definition of $\diamond_+$).
This implies $\xi^4 \le \smash{\frac{1}{\sqrt{1 + (\frac{4s}{9})^2}}}$,
which, together with \eqref{eq:D0xi4}, implies $(0,\xi) \in \diamond_{0,s}$.
\qed
\end{itemize}
\end{proof}
%-----------------------------------%
\subsection{Estimates for the frame}\label{sec:estframe}
In this section we prove estimates for 
the endomorphism $\frame{\uo}$ in Definition \ref{def:Fu}, 
using the norms in Definition \ref{def:norms_i0} and \ref{def:bulknorms}.
Recall in particular that for $u\in \gx^1(\diamond_+)$,
the norms $\|u\|_{\nosCb^k(\Dspop_{\le s})}$ are defined
using the homogeneous basis \eqref{eq:gbasis_i0},
the norms $\|u\|_{\nosC^k(\diamond_{s})}$ are defined
using the basis \eqref{eq:gbasis_bulk}, which is regular on $\overline\diamond_+$.
%-----------------------------------%
\begin{lemma}\label{lem:framehomogbasis}
Let $(\eB_i^1)_{i=1,\dots,6\cdot4}$, $(\eT_i^1)_{i=1,\dots,4\cdot4}$ be the basis of 
$\Omega^1(\Dspcl)\otimesRR\Kil$ defined in \eqref{eq:OmegaBbasis_i0}
and \eqref{eq:OmegaTbasis_i0}, explicitly given by the elements
\[ 
\tfrac{dy^\mu}{\s}\otimes B^{\alpha\beta}
\qquad
\qquad
\qquad
\tfrac{1}{\s}\tfrac{dy^\mu}{\s}\otimes T_\nu
\]
For each $i$, the components of $\frame{\eB_i^1}$
and of $\frame{\eT_i^1}$ 
with respect to the basis $\p_{y^0},\dots,\p_{y^3}$
are smooth on $\Dspcl$ and homogeneous of degree zero
(Definition \ref{def:homog}). Explicitly,
\begin{align}\label{eq:FrameBT}
\frame{\frac{dy^\mu}{\s}\otimes B^{\alpha\beta}}(\p_{y^{\gamma}})
	&=
	\delta^{\mu}_{\gamma} \tfrac{B^{\alpha\beta}(y^{\sigma})}{\s}\p_{y^{\sigma}}
	&
%%%%%%%%%%%%%%%%%%%%%%%%%
\frame{\frac{1}{\s}\frac{dy^\mu}{\s}\otimes T_\nu}(\p_{y^{\gamma}})
	&= 
	\delta^{\mu}_{\gamma} \tfrac{T_{\nu}(y^\sigma)}{\s^2} \p_{y^\sigma}
\end{align}
\end{lemma}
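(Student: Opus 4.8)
The plan is to unpack the definition of $\frame{\cdot}$ in Definition \ref{def:Fu} and compute directly. Recall that for $\uo = \sum_i \omega_i \otimes \KilBasis_i$ one has $\frame{\uo}(X) = \sum_i \omega_i(X)\KilBasis_i$, and that this is $C^\infty$-linear in $\uo$. So for a single basis element of the form $\theta \otimes \KilEl$ with $\theta\in\Omega^1(\Dspcl)$ and $\KilEl\in\Kil$, one simply has $\frame{\theta\otimes\KilEl}(X) = \theta(X)\,\KilEl$. Applying this with $\theta = \frac{dy^\mu}{\s}$ and $X = \p_{y^\gamma}$ gives $\theta(\p_{y^\gamma}) = \frac{1}{\s}\delta^\mu_\gamma$, hence $\frame{\frac{dy^\mu}{\s}\otimes B^{\alpha\beta}}(\p_{y^\gamma}) = \frac{1}{\s}\delta^\mu_\gamma B^{\alpha\beta}$, and then I would expand the Killing field $B^{\alpha\beta}$ in the coordinate frame as $B^{\alpha\beta} = B^{\alpha\beta}(y^\sigma)\p_{y^\sigma}$ (valid on $\cyl\setminus\{\ynullgen=0\}\supset\Dspcl$, using \eqref{eq:TBy}), which yields the first formula in \eqref{eq:FrameBT}. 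The second formula is identical except that $\theta = \frac{1}{\s}\frac{dy^\mu}{\s}$ contributes an extra factor $\frac{1}{\s}$ and $\KilEl = T_\nu$ is expanded as $T_\nu = T_\nu(y^\sigma)\p_{y^\sigma}$.

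Next I would verify the two asserted regularity/homogeneity properties of the components. For smoothness on $\Dspcl$: the functions $B^{\alpha\beta}(y^\sigma)$ and $T_\nu(y^\sigma)$ are smooth on $\Dspcl$ since $B^{\alpha\beta}$, $T_\nu$ are smooth vector fields on all of $\cyl$ (see \eqref{eq:TBy}) and $y^\sigma$ are smooth coordinates on $\diamondy\supset\Dspcl$, while $\frac{1}{\s}$ and $\frac{1}{\s^2}$ are smooth on $\Dspcl$ because $\s = 2y^0+|\vec{y}|$ is smooth and strictly positive there (in particular $\Dspcl$ excludes $\spaceinf$, where $\s$ vanishes). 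For homogeneity of degree zero: I would use that $\Scal_\lambda^*y^\sigma = \frac{1}{\lambda}y^\sigma$ and $\Scal_\lambda^*\s = \frac{1}{\lambda}\s$, so $\Scal_\lambda^*\!\big(\tfrac{B^{\alpha\beta}(y^\sigma)}{\s}\big)$ has the right scaling provided $B^{\alpha\beta}(y^\sigma)$, viewed as a function, scales with degree one; this follows since $B^{\alpha\beta}$ is homogeneous of degree zero as a vector field (one has $\Scal^*_\lambda B^{\mu\nu} = B^{\mu\nu}$ by Definition \ref{def:scalg}) and $y^\sigma$ is homogeneous of degree one, so $B^{\alpha\beta}(y^\sigma)$ is homogeneous of degree one and the quotient by $\s$ is degree zero. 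Similarly $\frac{T_\nu(y^\sigma)}{\s^2}$: since $\Scal^*_\lambda T_\nu = \lambda^{-1}T_\nu$ and $y^\sigma$ is degree one, $T_\nu(y^\sigma)$ is homogeneous of degree two, so the quotient by $\s^2$ is degree zero. Alternatively, and more cleanly, I would invoke Lemma \ref{lem:homogeneity}: $\frame{\cdot}$ is built from the module structure and the pairing, and the basis elements $\frac{dy^\mu}{\s}\otimes B^{\alpha\beta}$, $\frac{1}{\s}\frac{dy^\mu}{\s}\otimes T_\nu$ are homogeneous of degree zero by Lemma \ref{lem:bases_i0}, so $\frame{\cdot}$ applied to them produces endomorphisms whose components are homogeneous of degree zero.

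There is essentially no obstacle here: the statement is a direct computation from the definitions, and the only thing to be slightly careful about is keeping track of the powers of $\s$ in the two cases and confirming that the coordinate expansions \eqref{eq:TBy} of the Killing fields are valid on $\Dspcl$ (which they are, since $\Dspcl\subset\diamondy\subset\cyl\setminus\{\ynullgen=0\}$). I would present this as a one-paragraph proof: state $\frame{\theta\otimes\KilEl}(X)=\theta(X)\KilEl$, plug in the two basis elements and $X=\p_{y^\gamma}$, expand the Killing fields in coordinates to get \eqref{eq:FrameBT}, and then note smoothness and degree-zero homogeneity by the arguments above, ending with a reference to Lemma \ref{lem:bases_i0} and Lemma \ref{lem:homogeneity} for the homogeneity claim.
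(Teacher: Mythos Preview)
Your proposal is correct and takes essentially the same approach as the paper: the paper's proof simply says the formulas \eqref{eq:FrameBT} are immediate from Definition \ref{def:Fu}, and then invokes \eqref{eq:TBy} to conclude that $\tfrac{B^{\alpha\beta}(y^{\sigma})}{\s}$ and $\tfrac{T_{\nu}(y^\sigma)}{\s^2}$ are smooth on $\Dspcl$ and homogeneous of degree zero. You have filled in all the details the paper leaves implicit.
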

%-----------------------------------%
\begin{proof}
The formulas \eqref{eq:FrameBT} are immediate from Definition \ref{def:Fu}.
By \eqref{eq:TBy}, the functions 
$\tfrac{B^{\alpha\beta}(y^{\sigma})}{\s}$, 
$\tfrac{T_{\nu}(y^\sigma)}{\s^2}$ are smooth on $\Dspcl$ and homogeneous of degree zero. 
\qed
\end{proof}
%-----------------------------------%
\begin{lemma}\label{lem:FrameInvertibility}
\framechange{Let $s\in (0,1]$ and let $\uo\in\Omega^1(\diamond_+)\otimesRR\Kil$. Then:}
\begin{itemize}
\item 
At every point on $\Dspop_{\le s}$:
Denoting by $\|\frame{\uo}\|$ the $\ell^2$-matrix norm of $\frame{\uo}$
with respect to the basis $\s\p_{y^0},\dots,\s\p_{y^3}$,
one has \smash{$\|\frame{\uo}\|\lesssim \|\uo\oplus0\|_{\nosCb^0(\Dspop_{\le s})}$}.
\item 
At every point on $\diamond_{s}$:
Denoting by $\|\frame{\uo}\|$ the $\ell^2$-matrix norm of $\frame{\uo}$
with respect to the basis $\V_0,\dots,\V_3$,
one has $\|\frame{\uo}\|\lesssim \|\uo\oplus0\|_{\nosC^0(\diamond_{s})}$.
\end{itemize}
Furthermore, for every $k\in\Z_{\ge0}$:
\begin{subequations}
\begin{align}
\|\frame{\uo}\|_{\nosCb^k(\Dspop_{\le s})}
	&\lesssim_{k}
	\|\uo\oplus0\|_{\nosCb^k(\Dspop_{\le s})}\label{eq:Fuestsp}
	\\
\|\frame{\uo}\|_{\nosC^k(\diamond_{s})}
	&\lesssim_{k}
	\|\uo\oplus0\|_{\nosC^k(\diamond_{s})}\label{eq:Fuestbulk}
\end{align}
\end{subequations}
where, on the left hand sides, the norms are taken 
componentwise with respect to the basis 
$\s\p_{y^0},\dots,\s\p_{y^3}$ in \eqref{eq:Fuestsp},
respectively the basis $\V_0,\dots,\V_3$
in \eqref{eq:Fuestbulk}.
\end{lemma}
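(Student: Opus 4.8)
The plan is to reduce everything to the two explicit basis lemmas, namely Lemma \ref{lem:framehomogbasis} (which expresses $\frame{e_i}$ in the homogeneous frame near $\spaceinf$) and Lemma \ref{lem:frameglobalbasis} (which handles the $\V_\mu$-frame away from $\spaceinf$), together with the fact that $\uo\mapsto\frame{\uo}$ is $C^\infty$-linear. First I would prove the two pointwise matrix-norm bounds. Near $\spaceinf$: expand $\uo=\sum_i u_{0i}\eB_i^1 + \sum_i u'_{0i}\eT_i^1$ in the basis of $\Omega^1(\Dspcl)\otimesRR\Kil$ from \eqref{eq:OmegaBbasis_i0}, \eqref{eq:OmegaTbasis_i0} (noting that only the $\boosts$- and $\transl$-parts of $\uo\oplus0$ contribute to $\frame{\cdot}$, since $\frame{\cdot}$ ignores the $\I$-component). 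By $C^\infty$-linearity $\frame{\uo}=\sum u_{0i}\frame{\eB_i^1}+\sum u'_{0i}\frame{\eT_i^1}$, and by \eqref{eq:FrameBT} each $\frame{\eB_i^1}$, $\frame{\eT_i^1}$ has components that are homogeneous of degree zero and smooth on $\Dspcl$, hence (by compactness of $\Dspcl_{s}$ and homogeneity) uniformly bounded in the $\s\p_{y^\mu}$-frame on all of $\Dspop_{\le s}$, with a bound independent of $s$. Then the $\ell^2$-matrix norm of $\frame{\uo}$ in that frame is $\lesssim \sum_i(|u_{0i}|+|u'_{0i}|)$, and since the $u_{0i},u'_{0i}$ are the components of $\uo\oplus0$ relative to the homogeneous basis \eqref{eq:gbasis_i0}, this is $\lesssim \|\uo\oplus0\|_{\nosCb^0(\Dspop_{\le s})}$. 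Away from $\spaceinf$ the same argument runs with the basis \eqref{eq:auxglobbasis}: by Lemma \ref{lem:frameglobalbasis} the $\frame{e_i}$ have $\V_\mu$-components smooth on $\overline\diamond$, hence uniformly bounded on $\diamond_s$, giving $\|\frame{\uo}\|\lesssim\|\uo\oplus0\|_{\nosC^0(\diamond_s)}$ in the $\V_\mu$-frame.

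Next I would prove the higher-order bounds \eqref{eq:Fuestsp} and \eqref{eq:Fuestbulk}. The key point is that the norms $\|\cdot\|_{\nosCb^k}$ differentiate components (relative to a homogeneous basis) with respect to $X_\mu=\s\p_{y^\mu}$, which are homogeneous of degree zero and closed under the relevant algebra; likewise $\|\cdot\|_{\nosC^k}$ differentiates with respect to $\V_0,\dots,\V_3$, a regular frame on $\overline\diamond$. Applying the Leibniz rule to $\frame{\uo}=\sum u_{0i}\frame{e_i}$ (using either basis), one gets that each $X^{\indI}$-derivative (resp.\ $\V^{\indI}$-derivative) of a component of $\frame{\uo}$ is a finite sum of products $(X^{\indJ}u_{0i})(X^{\indK}(\text{component of }\frame{e_i}))$. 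The derivatives of the components of $\frame{e_i}$ are, by Lemma \ref{lem:framehomogbasis} (resp.\ Lemma \ref{lem:frameglobalbasis}), smooth and homogeneous of degree zero on $\Dspcl$ (resp.\ smooth on $\overline\diamond$), hence uniformly bounded on $\Dspop_{\le s}$ (resp.\ $\diamond_s$) with constants independent of $s$; therefore the contribution is controlled by $\sum_{\indJ,|\indJ|\le k}|X^{\indJ}u_{0i}|\lesssim_k\|\uo\oplus0\|_{\nosCb^k(\Dspop_{\le s})}$, and summing over $\indI$ with $|\indI|\le k$ and over $i$ gives the claim; analogously for the bulk.

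I do not expect a genuine obstacle here — this is a bookkeeping lemma whose content is entirely in the two preceding explicit lemmas plus $C^\infty$-linearity of $\frame{\cdot}$. The one point that requires a little care is uniformity of the constants in $s$: one must observe that the relevant multiplier functions (the components of $\frame{e_i}$ in the chosen frame, and their $X_\mu$-derivatives near $\spaceinf$) are homogeneous of degree zero, so their sup-norms over $\Dspop_{\le s}$ equal their sup-norms over a single level set $\Dspcl_1$ (which is compact) and hence are independent of $s$; in the bulk the analogous functions are smooth on the compact set $\overline\diamond$, so no issue arises. A secondary, purely notational, subtlety is that the norms on $\frame{\uo}$ are taken on the matrix entries relative to $\s\p_{y^\mu}$ (resp.\ $\V_\mu$) while the norms on $\uo\oplus0$ are taken relative to the section basis \eqref{eq:gbasis_i0} (resp.\ \eqref{eq:gbasis_bulk}); but the passage between these is exactly the expansion $\frame{\uo}=\sum u_{0i}\frame{e_i}$, so this is automatic.
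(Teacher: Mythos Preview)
Your proposal is correct and follows essentially the same approach as the paper: expand $\uo$ in the appropriate basis, use $C^\infty$-linearity of $\frame{\cdot}$, and invoke Lemma~\ref{lem:framehomogbasis} near $\spaceinf$ and Lemma~\ref{lem:frameglobalbasis} away from $\spaceinf$ (the paper cites the latter via Lemma~\ref{lem:Fglobbasic} for the second pointwise item, which is itself just a repackaging). Your explicit remarks on uniformity in $s$ via homogeneity of degree zero and the Leibniz-rule argument for the higher-order bounds spell out exactly what the paper leaves implicit.
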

%-----------------------------------%
\begin{proof}
First item and \eqref{eq:Fuestsp}:
By Lemma \ref{lem:framehomogbasis}
and $C^\infty$-linearity of $\frame{\uo}$ in $\uo$.
Second item:
By the second item in Lemma \ref{lem:Fglobbasic}.
\eqref{eq:Fuestbulk}:
By Lemma \ref{lem:frameglobalbasis}
and $C^\infty$-linearity of $\frame{\uo}$ in $\uo$.
\qed
\end{proof}
%-------------------------------%
Recall that the function $\nullgen=\cos(\tau)-\xi^4$ in \eqref{eq:nullgendef} is positive
on $\diamond$,
vanishes first order along null infinity,
and second order at spacelike and timelike infinity.
%-------------------------------%
\begin{lemma}\label{lem:Fdhlemma}
Let $s\in(0,1]$ and let $\uo\in\Omega^1(\diamond_+)\otimesRR\Kil$. Then:
\begin{itemize}
\item 
At every point on $\Dspop_{\le s}$:
Denoting by $\|\dualframe{\uo}(d\nullgen)/\nullgen\|$
the $\ell^2$-vector norm relative to the 
basis ${dy^0}/{\s},\dots,{dy^3}/{\s}$,
one has $\|\dualframe{\uo}(d\nullgen)/\nullgen\|
\lesssim\|\uo\oplus0\|_{\nosCb^0(\Dspop_{\le s})}$.
\item 
At every point on $\diamond_{s}$:
Denoting by $\|\dualframe{\uo}(d\nullgen)/\nullgen\|$
the $\ell^2$-vector norm relative to the 
basis $\Vd^0,\dots,\Vd^3$,
one has $\|\dualframe{\uo}(d\nullgen)/\nullgen\|
\lesssim\|\uo\oplus0\|_{\nosC^0(\diamond_{s})}$.
\end{itemize}
\end{lemma}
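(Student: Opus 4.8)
The plan is to reduce both items of Lemma \ref{lem:Fdhlemma} to the corresponding items of Lemma \ref{lem:FrameInvertibility}, by expressing $\dualframe{\uo}(d\nullgen)/\nullgen$ in terms of $\frame{\uo}$ and the data $\uo$, and then bounding the extra $\nullgen^{-1}$ factor using the explicit formulas of Remark \ref{rem:Kil(h)} (in $x$-coordinates, equivalently after Kelvin inversion in $y$-coordinates). First I would observe that, by $C^\infty$-linearity of $\dualframe{\uo}$ in $\uo$ (Definition \ref{def:Fu}), it suffices to prove the bound for $\uo$ equal to each element of a convenient homogeneous basis, namely the basis $(\eB^1_i),(\eT^1_i)$ of $\Omega^1(\Dspcl)\otimesRR\Kil$ used in Lemma \ref{lem:framehomogbasis} for the first item, and the basis $(e_i)$ of \eqref{eq:auxglobbasis} for the second item. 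For a basis element of the form $\theta\otimes\KilEl$ (with $\theta$ a one-form and $\KilEl\in\Kil$ a boost or translation), one has $\dualframe{\theta\otimes\KilEl}(d\nullgen)=\theta\cdot d\nullgen(\KilEl)=\theta\cdot\KilEl(\nullgen)$, so
\[
\tfrac{1}{\nullgen}\dualframe{\theta\otimes\KilEl}(d\nullgen)
=
\tfrac{\KilEl(\nullgen)}{\nullgen}\,\theta .
\]

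The key step is then to control the scalar factor $\KilEl(\nullgen)/\nullgen$ in the appropriate norm. For the second item (near $\fnullinf\cup\timeinf$, away from $\spaceinf$), this is exactly what Lemma \ref{lem:frameglobalbasis} provides: the components of $\tfrac{1}{\nullgen}\dualframe{e_i}(d\nullgen)$ relative to $\Vd^0,\dots,\Vd^3$ are smooth on $\overline\diamond$, hence uniformly bounded on the compact set $\overline\diamond$; thus the $\ell^2$-vector norm of $\dualframe{\uo}(d\nullgen)/\nullgen$ relative to $\Vd^0,\dots,\Vd^3$ is $\lesssim\|\uo\oplus0\|_{\nosC^0(\diamond_{s})}$ by expanding $\uo$ in the basis $(e_i)$ and using the $C^0$-bound on its components. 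For the first item (near $\spaceinf$, where we work in $y$-coordinates and use the homogeneous basis and the vector fields $X_\mu=\s\p_{y^\mu}$), I would instead use the explicit homogeneous basis $(\eB^1_i),(\eT^1_i)$: for $\eB^1_i=\tfrac{dy^\mu}{\s}\otimes B^{\alpha\beta}$ one gets $\tfrac{1}{\nullgen}\dualframe{\eB^1_i}(d\nullgen)=\tfrac{B^{\alpha\beta}(\nullgen)}{\nullgen}\tfrac{dy^\mu}{\s}$, and for $\eT^1_i=\tfrac{1}{\s}\tfrac{dy^\mu}{\s}\otimes T_\nu$ one gets $\tfrac{1}{\s}\tfrac{T_\nu(\nullgen)}{\nullgen}\tfrac{dy^\mu}{\s}$. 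By Remark \ref{rem:Kil(h)} the functions $B^{\mu\nu}(\nullgen)/\nullgen$ and $T_\mu(\nullgen)/\nullgen$ are (smooth and) bounded on $\diamond$; moreover, the combination $\tfrac{1}{\s}T_\nu(\nullgen)/\nullgen$ is homogeneous of degree zero and smooth on $\Dspcl_{\le s}$, hence bounded there by compactness of $\Dspcl_s$, using that the $\eT^1_i$ are the homogeneous-of-degree-zero basis elements and that $\tfrac{dy^\mu}{\s}$ is a homogeneous-of-degree-zero one-form with $\|\tfrac{dy^\mu}{\s}\|_{\ell^2}$ bounded relative to $\tfrac{dy^0}{\s},\dots,\tfrac{dy^3}{\s}$. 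Expanding $\uo$ in the homogeneous basis $(\eB^1_i),(\eT^1_i)$ and using the pointwise bound on the components (which is controlled by $\|\uo\oplus0\|_{\nosCb^0(\Dspop_{\le s})}$ since that norm, at the $k=0$ level, is the $\ell^2$-sum of the supremum of the components in the homogeneous basis) then gives the claimed inequality.

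The part requiring care is checking that the factor $\tfrac{1}{\s}T_\nu(\nullgen)/\nullgen$ is indeed bounded uniformly near $\spaceinf$: $T_\nu(\nullgen)/\nullgen$ by itself, while bounded on $\diamond$ by Remark \ref{rem:Kil(h)}, is multiplied by an extra $1/\s$ which blows up at $\spaceinf$, so one needs the additional decay of $T_\nu(\nullgen)$ near $\spaceinf$ — this is where homogeneity does the work, since $T_\nu$ is homogeneous of degree one (Scal$^*_\lambda T_\nu=\lambda^{-1}T_\nu$) and $\nullgen=\ynullgen(\eta_{\mu\nu}y^\mu y^\nu)$ vanishes to appropriate order, making $\tfrac{1}{\s}T_\nu(\nullgen)/\nullgen$ homogeneous of degree zero and smooth on $\Dspcl$, hence bounded on each $\Dspcl_s$ and, by homogeneity, uniformly on $\Dspcl_{\le s}$. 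I expect no other obstacles: everything else is expansion in a basis, the Leibniz rule, and invoking the already-established boundedness statements of Remark \ref{rem:Kil(h)}, Lemma \ref{lem:frameglobalbasis}, and Lemma \ref{lem:framehomogbasis}.
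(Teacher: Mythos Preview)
Your overall strategy---expand $\uo$ in the homogeneous basis $(\eB^1_i),(\eT^1_i)$ for the first item and in the global basis \eqref{eq:auxglobbasis} for the second, evaluate $\dualframe{}$ on each basis element as $\tfrac{1}{\nullgen}\dualframe{\theta\otimes\KilEl}(d\nullgen)=\tfrac{\KilEl(\nullgen)}{\nullgen}\theta$, and bound the scalar factor---matches the paper's proof exactly, and your treatment of the second item via Lemma \ref{lem:frameglobalbasis} is correct.

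However, your justification of the key bound in the first item has a gap: the function $\tfrac{1}{\s}T_\nu(\nullgen)/\nullgen$ is \emph{not} homogeneous of degree zero. While $T_\nu$ is a degree-one vector field and $\s$ is degree-one, the function $\nullgen$ is not homogeneous under $\Scal_\lambda$ (one has $\nullgen=\ynullgen\cdot\eta_{\mu\nu}y^\mu y^\nu$ with $\ynullgen=2+O(|y|^2)$ near $\spaceinf$, and the $O(|y|^2)$ correction spoils homogeneity). Consequently $T_\nu(\nullgen)/\nullgen$ is not degree-one, and your compactness-of-$\Dspcl_s$-plus-homogeneity argument does not close. The paper instead bounds $|T_\nu(\nullgen)/(\s\nullgen)|$ by direct computation: from Remark \ref{rem:Kil(h)} and \eqref{eq:yycoords} one has for instance $T_0(\nullgen)/\nullgen=\xi^4\sin\tau=\xi^4\,y^0\,\ynullgen$, hence $T_0(\nullgen)/(\s\nullgen)=\xi^4\,(y^0/\s)\,\ynullgen$, which is bounded on $\Dspop_{\le1}$ since $0\le y^0/\s\le\tfrac12$ and $|\xi^4|,|\ynullgen|\le2$; the cases $T_i$ are handled the same way via $\xi^i=y^i\ynullgen$. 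So your intuition that the vanishing of $T_\nu(\nullgen)/\nullgen$ at $\spaceinf$ compensates the $1/\s$ is correct, but the mechanism is an explicit first-order vanishing in $y$, not exact homogeneity.
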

%-------------------------------%
\begin{proof}
\proofheader{First item.}
By definition we have, for all $\mu,\nu,\alpha,\beta=0\dots3$, 
\begin{align*}
\tfrac{1}{\nullgen}\dualframe{\frac{dy^\mu}{\s}\otimes B^{\alpha\beta}}(d\nullgen)
&=
\tfrac{B^{\alpha\beta}(\nullgen)}{\nullgen} \tfrac{dy^\mu}{\s}
&
\tfrac{1}{\nullgen}\dualframe{\frac{dy^\mu}{\s^2}\otimes T_\nu}(d\nullgen)
&=
\tfrac{T_\nu(\nullgen)}{\s\nullgen} \tfrac{dy^\mu}{\s}
\end{align*}
Using Remark \ref{rem:Kil(h)} and \eqref{eq:yycoords}
one checks that 
$|\tfrac{B^{\alpha\beta}(\nullgen)}{\nullgen}|\lesssim 1$ and
$|\tfrac{T_\nu(\nullgen)}{\s\nullgen}|\lesssim 1$ on $\Dspop_{\le1}$.
Then together with $C^\infty$-linearity of $\dualframe{\uo}$ in $\uo$
the claim follows.

\proofheader{Second item.}
By linearity of $\dualframe{\uo}$ in $\uo$ and Lemma \ref{lem:frameglobalbasis}.
\qed
\end{proof}

%-------------------------------%
\subsection{Proof of Theorem \ref{thm:main} 
and of Theorem \ref{thm:mainpointwise}}
\label{sec:ProofTheoremMain}
%-------------------------------%
In this section we prove Theorem \ref{thm:main},
and prove Theorem \ref{thm:mainpointwise} as a corollary.

We start with some preliminary estimates.
\begin{lemma}\label{lem:normsbulktosp}
For all $k\in\Z_{\ge0}$, 
all $\sfix\in(0,1]$,
and all $f\in C^\infty(\Dspop_{\le s_*})$:
\begin{align}
\label{eq:normsbulktosp}
\tint_{0}^{\frac\pi2} \|f\|_{\sH^k(\diamond_{\tau,\frac\sfix2}\cap\Dspop_{\le\sfix})} d\tau
\lesssim_{k,\sfix}
\tint_{\frac{\sfix}{12}}^{\sfix} \|f\|_{\sHb^{k+1}(\Dspop_{s})} ds
\end{align}
in the sense that if the right hand side is finite,
then the left hand side is finite and the inequality holds.
The $\sH^k(\diamond_{\tau,\frac\sfix2}\cap\Dspop_{\le\sfix})$ norm 
is defined analogously to $\sH^k(\diamond_{\tau,s})$ in 
Definition \ref{def:bulknorms},
and it is understood to be zero
if the intersection $\diamond_{\tau,\frac\sfix2}\cap\Dspop_{\le\sfix}$
is empty (i.e.~if $\tau \ge \arctan(2\sfix/3)$, see Remark \ref{rem:Dstau}).
\end{lemma}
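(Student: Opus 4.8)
The plan is to compare the two sides by passing through a common intermediate: an $L^2$-bound over the region $\diamond_{\tau,\sfix/2}\cap\Dspop_{\le\sfix}$ in terms of the homogeneous $\sHb$-norms over the level sets $\Dspop_{s}$, integrated in $s$. The key observation is that $\diamond_{\tau,\sfix/2}\cap\Dspop_{\le\sfix}$ is contained in $\Dspcl_{\le\sfix}$, and on this region the two coordinate systems and the two sets of vector fields are comparable: the frame $\V_0,\dots,\V_3$ and the homogeneous frame $X_0,\dots,X_3$ (with $X_\mu=\s\p_{y^\mu}$) span the same tangent space at each point, with transition coefficients that are smooth and uniformly bounded on $\Dspcl_{\le\sfix}$ (away from $\spaceinf$), and the two measures $\mucylS$ and the homogeneous measures used for $\sHb$ are mutually bounded by smooth positive factors there. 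This is because $\s$ is bounded above and below by a constant times $|\vec y|$ (or equivalently $\nullgen$, $\ynullgen$) on the relevant region, by Remark \ref{rem:Dstau} and direct inspection.

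First I would observe that it suffices to prove the inequality for $k=0$: for general $k$ one applies the $k=0$ case to each of the finitely many derivatives $\V_{i_1}\cdots\V_{i_j}f$ with $j\le k$, and then re-expresses these $\V$-derivatives in terms of the $X$-derivatives of $f$ using the bounded transition coefficients and the Leibniz rule; since each $\V$-derivative of order $\le k$ is a smooth-coefficient combination of $X$-derivatives of order $\le k$, and conversely, the order-$(k+1)$ on the right absorbs the loss. Actually, to be careful, the right-hand side has $\sHb^{k+1}$ precisely to allow this change of frame plus the slack needed below, so I would track the orders but not worry about optimality.

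For the $k=0$ case, the main step is a coarea/Fubini argument. On $\diamond_{\tau,\sfix/2}\cap\Dspop_{\le\sfix}$ we use $\tau$ (equivalently, a function comparable to it, like $\arccos\xi^4$ along a fixed level set, or better: slice by $\tau$) together with the level sets of $\s$. Concretely, I would write the integral of $|f|^2$ over the union $\bigcup_{\tau\in[0,\pi/2]}(\diamond_{\tau,\sfix/2}\cap\Dspop_{\le\sfix})$ — which is a fixed subset of $\Dspcl_{\le\sfix}$ contained in $\Dspop_{\sfix/12,\sfix}$ by the geometry (the factor $1/12$ comes from: $\diamond_{\tau,\sfix/2}$ excludes $\Dspop_{<\sfix/12}$, and we intersect with $\Dspop_{\le\sfix}$) — as an iterated integral first over $\tau\in[0,\pi/2]$ and then observe this equals (up to bounded Jacobian factors) an integral over $s\in[\sfix/12,\sfix]$ of the $L^2$-mass on $\Dspop_{s}$. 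Then by Minkowski's inequality for integrals (exactly as in the proof of Lemma \ref{lem:LinfL2toL1L2} and Lemma \ref{lem:DataNorms}), one first uses Lemma \ref{lem:LinfL2toL1L2} (or its direct analogue) to bound $\|f\|_{\sH^0(\diamond_{\tau,\sfix/2}\cap\Dspop_{\le\sfix})}$ by $\int \|f\|_{\sHb^1(\Dspop_{s})}$-type integrals, absorbing the one-derivative loss, and then integrates in $\tau$ and changes variables to $s$.

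The main obstacle I expect is bookkeeping the geometry of the overlap region precisely — confirming that $\bigcup_{\tau\in[0,\pi/2]}(\diamond_{\tau,\sfix/2}\cap\Dspop_{\le\sfix})\subset\Dspop_{\sfix/12,\sfix}$, that the map $(\tau,\text{point on }\Dspop_{s})\mapsto$ point is a diffeomorphism onto this region with bounded Jacobian (using that $\s$-level sets are spacelike and $\tau$ is a good transversal coordinate, by the choice of the factor $2$ in $\s=2y^0+|\vec y|$, cf.\ the remark after \eqref{eq:sdefintro}), and that the homogeneous-degree-zero measures appearing in $\sHb^k(\Dspop_{s})$ are comparable to the measure induced by slicing $\mucyl$ by $\s$ — all uniformly in $\sfix\in(0,1]$. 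The uniformity in $\sfix$ follows from the homogeneity of all the structures involved (the $\R_+$-action, Lemma \ref{lem:homogeneity}, and the homogeneity of the norms, \eqref{eq:norms_homog}), so after rescaling one may reduce to $\sfix=1$, where everything is a fixed compact computation; this reduction I would do first to avoid carrying $\sfix$ through the estimates.
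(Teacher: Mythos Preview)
Your approach is essentially the same as the paper's: both pass through the intermediate region $\Dspop_{\le\sfix}\setminus\Dspop_{<\sfix/12}$, use the comparability of the $\V_\mu$-frame and $\mucylS$-measure with the homogeneous frame $X_\mu$ and measure $\muW$ on that compact-in-$\s$ region, Fubini to switch from $\tau$-slicing to $\s$-slicing, and invoke Lemma~\ref{lem:LinfL2toL1L2} for the one-derivative loss. The paper does Cauchy--Schwarz in $\tau$ first (passing through $\|f\|_{\nosH^k}$ on the region) rather than Minkowski, but this is cosmetic.

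Two points to flag. First, your proposed reduction to $\sfix=1$ by the $\R_+$-action does not work: the sets $\diamond_{\tau,\sfix/2}$ are defined via $\tau$-level sets, and $\tau$ is not homogeneous under $\Scal_\lambda$, so the left-hand side does not rescale cleanly. This does not matter, since the lemma only claims $\lesssim_{k,\sfix}$ and you can simply carry $\sfix$ through; but your claim that ``uniformity in $\sfix$ follows from homogeneity'' is incorrect and should be dropped. Second, and more substantively, you do not address the ``in the sense that'' clause: $f$ is only assumed smooth on the open set $\Dspop_{\le\sfix}$, not up to null infinity $\Dspcl_{\le\sfix}$, so finiteness of the left-hand side is not a priori clear. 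The paper handles this by first proving the inequality for $f\in C^\infty(\Dspcl_{\le\sfix})$, then for general $f$ restricting to the region $\ttcoord\le 1-\eps$ (where $f$ is smooth up to the boundary), proving the truncated inequality uniformly in $\eps$, and passing to the limit $\eps\downarrow 0$ by monotone convergence. Your argument needs this step, or an equivalent density argument.
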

\begin{proof}
We first prove the inequality in the special case
when $f$ is smooth on $\Dspcl_{\le\sfix}$.
First using Cauchy Schwarz in $\tau$ and then using Fubini,
\begin{align}
\tint_{0}^{\frac\pi2} 
\|f\|_{\sH^k(\diamond_{\tau,\frac\sfix2}\cap\Dspop_{\le\sfix})} d\tau
	&\le
	(\tfrac\pi2)^{\frac12}
	(\tint_{0}^{\frac\pi2} 
	\|f\|_{\sH^k(\diamond_{\tau,\frac\sfix2}\cap\Dspop_{\le\sfix})}^2 d\tau)^{\frac12}
	\nonumber\\
	&= 
	(\tfrac\pi2)^{\frac12}
	\|f\|_{\nosH^k(\Dspop_{\le\sfix}\setminus\Dspop_{<\frac{\sfix}{12}})}
	\label{eq:Hkspec}\\
	&\lesssim_{k,\sfix}
	\|f\|_{\nosHb^k(\Dspop_{\le\sfix}\setminus\Dspop_{<\frac{\sfix}{12}})}
	\label{eq:Hbkspec}
\intertext{
(the norm in \eqref{eq:Hkspec} is defined analogously
to $\nosH^k(\diamond_{s})$ in Definition \ref{def:bulknorms},
the norm in \eqref{eq:Hbkspec} is defined analogously to 
$\nosHb^k(\Dspop_{\le s})$ in Definition \ref{def:norms_i0})
where we use the fact that the norms \eqref{eq:Hkspec} and
\eqref{eq:Hbkspec} are comparable
with a comparability constant that depends only on $k,\sfix$.
By Fubini, we obtain}
\tint_{0}^{\frac\pi2} 
\|f\|_{\sH^k(\diamond_{\tau,\frac\sfix2}\cap\Dspop_{\le\sfix})} d\tau
	&\lesssim_{k,\sfix}
	(\tint_{\frac{\sfix}{12}}^{\sfix} \|f\|_{\sHb^k(\Dspop_{s})}^2\, \frac{ds}{s})^{\frac12}
	\lesssim
	\sup_{s\in[\frac{\sfix}{12},\sfix]} \|f\|_{\sHb^k(\Dspop_{s})}
	\nonumber
\end{align}
By Lemma \ref{lem:LinfL2toL1L2} (using Convention \ref{conv:ztXmu_NEW},
see also Remark \ref{rem:normsequality}),
for all $s\in[\frac{\sfix}{12},\sfix]$,
\[ 
\|f\|_{\sHb^k(\Dspop_{s})}
\lesssim_{k}
\tint_{\frac{\sfix}{12}}^{\sfix} \|f\|_{\sHb^{k+1}(\Dspop_{s})}\,\frac{ds}{s}
\lesssim_{\sfix}
\tint_{\frac{\sfix}{12}}^{\sfix} \|f\|_{\sHb^{k+1}(\Dspop_{s})}\,ds
\]
This proves the lemma in the special case $f\in C^\infty(\Dspcl_{\le\sfix})$.

For general $f\in C^\infty(\Dspop_{\le\sfix})$ the lemma is proven as follows:
For $0<\eps\le \frac{1}{10}$
denote by \eqref{eq:normsbulktosp}$_{\eps}$ 
the inequality obtained by taking \eqref{eq:normsbulktosp} 
and replacing
$\diamond_{\tau,\frac\sfix2}\cap\Dspop_{\le\sfix}$ and $\Dspop_{s}$ 
by their intersection with the subset of $\Dspcl$ 
given by all points with $\frac{y^0}{|\vec y|} \le 1-\eps$ 
(equivalently $\ttcoord\le 1-\eps$ using Convention \ref{conv:ztXmu_NEW}).
Note that $f$ is smooth on these intersections, 
up to and including the boundary.
The inequality \eqref{eq:normsbulktosp}$_{\eps}$ is then proven
analogously to the proof of the special case of \eqref{eq:normsbulktosp} above.
In the limit $\eps\downarrow0$ the inequality \eqref{eq:normsbulktosp}$_{\eps}$
implies \eqref{eq:normsbulktosp} by monotone convergence.
\qed
\end{proof}
%-------------------------------%
\begin{lemma}\label{lem:SobolevDandW}
Recall the norms in Definition \ref{def:norms_i0}
and Definition \ref{def:bulknorms}.
For all $k\in\Z_{\ge3}$, all $\sfix\in(0,1]$
and all $f\in C^\infty(\diamond_+)$:
\begin{itemize}
\item
If for all $s\in(0,\sfix]$,
\begin{equation}\label{eq:sasspsoblem}
\sup_{s'\in[\frac{s}{2},s]}
\|f\|_{\sHb^{k}(\Dspop_{s'})} < \infty
\end{equation}
then $f$ extends in $C^{k-3}$ to 
$\Dspcl_{\le\sfix}$.
For all $s\in(0,\sfix]$,
\begin{align}\label{eq:Wsbded}
\|f\|_{\sCb^{k-2}(\Dspop_{s})}
\lesssim_{k}
\|f\|_{\sHb^{k}(\Dspop_s)}
\end{align}
\item 
If 
\begin{equation}
\sup_{\tau\in[0,\pi)} \|f\|_{\sH^{k}(\diamond_{\tau,\sfix})} <\infty
\label{eq:tauasspsop}
\end{equation}
then $f$ extends in $C^{k-3}$ to $\overline{\diamond}_{\sfix}$.
Furthermore
\begin{align}\label{eq:supHbded}
\|f\|_{\nosC^{k-3}(\diamond_{\sfix})}
\lesssim_{k,\sfix}
\sup_{\tau\in[0,\pi)} \|f\|_{\sH^{k}(\diamond_{\tau,\sfix})}
\end{align}
\end{itemize}
\end{lemma}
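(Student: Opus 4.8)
The plan is to prove both parts as soft Sobolev-embedding statements, reducing the first to the abstract framework of Section~\ref{sec:Abstract} via Convention~\ref{conv:ztXmu_NEW} and Remark~\ref{rem:normsequality}.

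For Part~1, I would first establish the pointwise estimate \eqref{eq:Wsbded}. Under the identification $\Dspop_{\le\sfix}\simeq\homMfd_{\le\log\sfix}$, $\Dspop_{s}\simeq\homMfd_{\log s}$ of Convention~\ref{conv:ztXmu_NEW}, and using the equality of norms in Remark~\ref{rem:normsequality}, the claim becomes $\|f\|_{\sCX^{k-2}(\homMfd_{\log s})}\lesssim_{k}\|f\|_{\sHX^{k}(\homMfd_{\log s})}$, which is exactly \eqref{eq:sobolevMz} applied on the level set $\homMfd_{\log s}$ (of dimension $\MdimNEW=3$) with $d=2>\MdimNEW/2$, since $(k-2)+2=k$; the constant there is independent of $\log s$, so independent of $s$. (Finiteness of the right-hand side follows from \eqref{eq:sasspsoblem} with $s'=s$, and $k\ge3$ makes $\sCX^{k-2}$ meaningful.) For the $C^{k-3}$-extension, combining \eqref{eq:sasspsoblem} with \eqref{eq:Wsbded} at each $s'\in[\tfrac{s}{2},s]$ gives, for every $s\in(0,\sfix]$, a finite bound on $\sup|X^{\indJ}f|$ over the slab $\{\tfrac{s}{2}\le\s\le s\}$ for all multi-indices $\indJ$ with $|\indJ|\le k-2$. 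Because $X_0,\dots,X_{\MdimNEW}$ is a smooth frame on $\bar\homMfd$ that is translation invariant in $\zzeta$, these bounds control all coordinate derivatives of $f$ of order $\le k-2$ on that slab, and since the slab together with its face $\ttcoord=1$ is a compact manifold with corners, $f$ extends in $C^{k-3}$ to its closure. Patching over $s\in(0,\sfix]$ yields a $C^{k-3}$-extension to $\bar\homMfd_{\le\log\sfix}\simeq\Dspcl_{\le\sfix}$, which excludes $\spaceinf$.

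For Part~2, I would run the same scheme in the bulk. The key intermediate step is the slicewise Sobolev inequality $\|f\|_{\sC^{k-2}(\diamond_{\tau,\sfix})}\lesssim_{k,\sfix}\|f\|_{\sH^{k}(\diamond_{\tau,\sfix})}$, uniformly in $\tau\in[0,\pi)$, proved by applying a standard three-dimensional Sobolev estimate ($C^0\hookrightarrow W^{2,2}$, $2>3/2$) to each $\V^{\indJ}f$ with $|\indJ|\le k-2$ on the slice $\diamond_{\tau,\sfix}\subset S^3$ and absorbing commutators among $\V_1,\V_2,\V_3$; the constant is uniform in $\tau$ since the domains $\diamond_{\tau,\sfix}$ are uniformly bi-Lipschitz to a fixed model — all of $S^3$ for $\tau$ bounded away from $0$ (because $\Dspop_{<\sfix/6}$ sits near $\tau=0$ by Remark~\ref{rem:Dstau}) and a ball-like region for small $\tau$, cf.\ Lemma~\ref{lem:lowerboundballsNEW}. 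Taking the supremum over $\tau$ then bounds all $|\V^{\indJ}f|$ with $|\indJ|\le k-2$ on $\diamond_{\sfix}$ by $\lesssim_{k,\sfix}\sup_{\tau\in[0,\pi)}\|f\|_{\sH^{k}(\diamond_{\tau,\sfix})}$; since $\V_0,\dots,\V_3$ span $T\cyl$ with smooth coefficients and $\overline\diamond_{\sfix}$ is a compact manifold with corners in which $\diamond_{\sfix}$ is dense (its limit points outside $\diamond_\sfix$ lie on $\fnullinf\cup\{\timeinf\}$, reached in the limit), $f$ extends in $C^{k-3}$ to $\overline\diamond_{\sfix}$, and $\|f\|_{\nosC^{k-3}(\diamond_{\sfix})}\le\sup_{\tau}\|f\|_{\sC^{k-3}(\diamond_{\tau,\sfix})}\le\sup_{\tau}\|f\|_{\sC^{k-2}(\diamond_{\tau,\sfix})}\lesssim_{k,\sfix}\sup_{\tau}\|f\|_{\sH^{k}(\diamond_{\tau,\sfix})}$, giving \eqref{eq:supHbded}.

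The content is entirely standard Sobolev theory, so there is no deep obstacle; the only points needing care are the uniformity of the Sobolev constants (in $\zzeta$ for Part~1, which is built into \eqref{eq:sobolevMz} via translation invariance; in $\tau$ for Part~2, which rests on the uniform geometry of the slices $\diamond_{\tau,\sfix}$) and the passage from ``all derivatives up to order $k-2$ bounded on the open region'' to ``$C^{k-3}$ on the closed manifold with corners.'' It is precisely this last step that accounts for the loss of a third derivative: two derivatives are lost in the slicewise Sobolev embedding and one more in the boundedness-to-continuity step on the closure.
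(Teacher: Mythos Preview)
Your Part~1 is correct and close to the paper: \eqref{eq:Wsbded} is exactly \eqref{eq:sobolevMz} via Convention~\ref{conv:ztXmu_NEW}, as you say. For the extension the paper takes a slightly different route (Fubini over $s'\in[s/2,s]$ to get a bulk $\nosHb^k$ bound on the slab, then a single four-dimensional Sobolev embedding), but your slicewise argument is also fine.

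Your Part~2, however, has a genuine gap. You claim the slicewise Sobolev constant in $\|f\|_{\sC^{k-2}(\diamond_{\tau,\sfix})}\lesssim\|f\|_{\sH^{k}(\diamond_{\tau,\sfix})}$ is uniform in $\tau$ because the domains are ``all of $S^3$ for $\tau$ bounded away from $0$.'' This is false: $\diamond_{\tau,\sfix}=(\{\tau\}\times S^3)\cap\diamond_{\sfix}$ is contained in $\diamond_+$, which imposes $\xi^4<\cos\tau$. As $\tau\uparrow\pi$ this cap shrinks to the single point $\timeinf$, and the three-dimensional Sobolev constant on a ball of radius $r$ blows up like $r^{-3/2}$. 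The paper explicitly flags this: ``we will not use an estimate analogous to \eqref{eq:Wsbded} over $\tau$-level sets, since the constant would degenerate as $\tau\uparrow\pi$.''

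The fix is to avoid the slicewise estimate altogether: integrate in $\tau$ (Fubini) to get $\|f\|_{\nosH^k(\diamond_{\sfix})}\le\pi^{1/2}\sup_{\tau}\|f\|_{\sH^k(\diamond_{\tau,\sfix})}<\infty$, then apply a single four-dimensional Sobolev embedding on the compact region $\overline{\diamond}_{\sfix}$, whose boundary (including the corner at $\timeinf$) is Lipschitz. This yields both the $C^{k-3}$ extension and \eqref{eq:supHbded} in one stroke.
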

%-------------------------------%
\begin{proof}
\proofheader{First item:}
For all $s\in(0,\sfix]$, using Fubini and \eqref{eq:sasspsoblem} we have
\[ 
\|f\|_{\nosHb^{k}(\Dspop_{\le s} \setminus \Dspop_{<\frac{s}{2}})}
=
{(\tint_{\frac{s}{2}}^{s} \|f\|_{\sHb^{k}(\Dspop_{s'})}^2 \frac{ds'}{s'})^{\frac12}}
\le 
\smash{\sup_{s'\in[\frac{s}{2},s]}\|f\|_{\sHb^{k}(\Dspop_{s'})}}
<\infty
\]
Now using Convention \ref{conv:ztXmu_NEW} 
(see also the equality of norms in Remark \ref{rem:normsequality}),
a standard four-dimensional Sobolev embedding
(e.g.~\cite[Proposition 4.3]{Taylor1}) implies
that $f$ extends in $C^{k-3}$ to 
$\Dspcl_{\le s} \setminus \Dspcl_{<\frac{s}{2}}$.
Hence
$f$ extends in $C^{k-3}$ to $\Dspcl_{\le\sfix}$, using
$$
\Dspcl_{\le\sfix}
=
\textstyle\bigcup_{s\in(0,\sfix]}
\Dspcl_{\le s} \setminus \Dspcl_{<\frac{s}{2}}
$$
The inequality \eqref{eq:Wsbded} follows from 
\eqref{eq:sobolevMz}, using Convention \ref{conv:ztXmu_NEW}.

\proofheader{Second item:}
Using Fubini and \eqref{eq:tauasspsop} we have 
\[ 
\|f\|_{\nosH^k(\diamond_{\sfix})}
=
{(\tint_{0}^{\pi}
\|f\|_{\sH^k(\diamond_{\tau,\sfix})}^2\,d\tau)^{\frac12}}
\le
\smash{\pi^{\frac12} \sup_{\tau\in[0,\pi)}}
\|f\|_{\sH^k(\diamond_{\tau,\sfix})}
<\infty
\]
Hence the claim follows from a standard Sobolev embedding,
using the fact that the boundary of $\diamond_{\sfix}$ is Lipschitz.

We remark that we will not use an estimate analogous to \eqref{eq:Wsbded}
over $\tau$-level sets, since the constant would degenerate
as $\tau\uparrow\pi$ (i.e.~at timelike infinity).
\qed
\end{proof}
%--------------------------------%
\begin{proof}[of Theorem \ref{thm:main}]
Instead of specifying $\Clargemain$, $\Csmallmain$ upfront,
we will make finitely many admissible largeness assumptions on $\Clargemain$,
respectively smallness assumptions on $\Csmallmain$,
where admissible means that they depend only on \eqref{eq:mainconst}.

Recall that we abbreviate 
$\Dspop = \Dspop_{\le\sfix}$ and 
$\Dspdata=\Dspdata_{\le\sfix}$ and 
$\kerrdata = \kerr|_{\tau=0}$.

\proofheader{Construction of $u$ near spacelike infinity.}
Define
\begin{equation}
\label{eq:v0def}
v_0 = \kerr + \Pext(\udata-\kerrdata) \;\in\; \gx^1(\Dspcl_{\le\sfix})
\end{equation}
where we use the extension operator in Definition \ref{def:extensionspinf},
and where the restriction of $\udata$ to $\Dspdata$ is implicit.
Observe that 
\begin{align}
v_0|_{y^0=0}
	&=\udata
	\label{eq:v0atzero}
\end{align}

We will correct $v_0$ to a solution of
\eqref{eq:MCandudatamain} near spacelike infinity 
using Proposition \ref{prop:ApplySHS}.
For this we need some preliminary estimates.

\claimheader{Claim:} 
For all $k\in\Z_{\ge0}$ and $s\in(0,\sfix]$:
\begin{subequations}\label{eq:v0uk}
\begin{align}
%%%%%%%%%%%%%%%%%%%
\|v_0\|_{\nosCb^{k}(\Dspop)}
	&\lesssim_{k}
	\|\kerr\|_{\nosCb^{k}(\Dspop)}
	+
	\|\udata-\kerrdata\|_{\Cb^{k}(\Dspdata)}
	\label{eq:v0Ck}\\
%%%%%%%%%%%%%%%%%%%%
\|\udata-\kerrdata\|_{\Cb^k(\Dspdata)}
	&\lesssim_{k}
	\|\udata\|_{\Cb^k(\Dspdata)} + \|\kerr\|_{\nosCb^k(\Dspop)}
	\label{eq:uktruk}\\
%%%%%%%%%%%%%%%%%%%
\|\Pext(\udata-\kerrdata)\|_{\sHb^{k}(\Dspop_{s})}
	&\lesssim_{k}
	\big(\tfrac{s}{s_*}\big)^{\frac{5}{2}+\epspower+k}
	\|\udata-\kerrdata\|_{\HdataNEW^{\frac{5}{2}+\epspower,k+1}(\Dspdata)}
	\label{eq:Pu-kbasic}
\end{align}
\end{subequations}

Proof of \eqref{eq:v0Ck}: 
This holds by the triangle inequality and \eqref{eq:extensionopCkfull}.

Proof of \eqref{eq:uktruk}: 
This holds by the triangle inequality and $\Dspdata\subset\Dspop$.

Proof of \eqref{eq:Pu-kbasic}: By \eqref{eq:extensionopHk},
\begin{align*}
\|\Pext(\udata-\kerrdata)\|_{\sHb^{k}(\Dspop_{s})}
\lesssim_{k}
\|\udata-\kerrdata\|_{\Hb^{k}(\Dspdata_{\frac{s}{3},s})}
\end{align*}
Now the claim follows from 
\eqref{eq:fHfHdata} with $a=\frac{5}{2}+\epspower$.

\claimheader{Claim:} 
For all $k\in\Z_{\ge0}$ one has
\begin{align}
&\Vconst_{k}(v_0)
	\lesssim_k
	\big(1
	+
	\|\kerr\|_{\nosCb^{k+1}(\Dspop)} 
	+
	\|\udata-\kerrdata\|_{\Cb^{\lfloor\frac{k+1}{2}\rfloor}(\Dspdata)}
	\big)
	\|\udata-\kerrdata\|_{\HdataNEW^{\frac52+\epspower,k+1}(\Dspdata)}
	\label{eq:v0Vconstk}
\end{align}
where $\Vconst_{k}(v_0)$ is defined exactly as in \eqref{eq:defVkk}.

Proof of \eqref{eq:v0Vconstk}:
Abbreviate $v_0' = \Pext(\udata-\kerrdata)$. 
By linearity and bilinearity of the differential respectively
the bracket, graded antisymmetry \eqref{eq:bracketas}
and \ref{item:KMCmain},
\begin{align*}
\dg v_0 + \tfrac12[v_0,v_0]
&=
\dg\kerr + \tfrac12[\kerr,\kerr]
+
\dg v_0'
+
[\kerr,v_0']
+
\tfrac12[v_0',v_0']\\
&=
\dg v_0'
+
[\kerr,v_0']
+
\tfrac12[v_0',v_0']
\end{align*}
Thus for all $s\in(0,\sfix]$:
\begin{align*}
\|\dg v_0 + \tfrac12[v_0,v_0]\|_{\sHb^{k}(\Dspop_{s})}
\le
\|\dg v_0'\|_{\sHb^{k}(\Dspop_{s})}
+
\|[\kerr,v_0']\|_{\sHb^{k}(\Dspop_{s})}
+
\|[v_0',v_0']\|_{\sHb^{k}(\Dspop_{s})}
\end{align*}
The maps $\dg$, $[{\cdot},{\cdot}]$ 
are first order linear respectively bilinear
differential operators, and 
by Lemma \ref{lem:homogeneity},
their coefficients with respect to the homogeneous basis \eqref{eq:gbasis_i0}
and the vector fields $\s\p_{y^0},\dots,\s\p_{y^3}$ 
are homogeneous of degree zero.
This yields,
together with standard properties of the norms in Definition \ref{def:norms_i0},
\begin{align*}
\|\dg v_0'\|_{\sHb^{k}(\Dspop_{s})}
	&\lesssim_{k}
	\|v_0'\|_{\sHb^{k+1}(\Dspop_{s})}\\
\|[\kerr,v_0']\|_{\sHb^{k}(\Dspop_{s})}
	&\lesssim_{k}
	\|\kerr\|_{\sCb^{k+1}(\Dspop_{s})}\|v_0'\|_{\sHb^{k+1}(\Dspop_{s})}\\
\|[v_0',v_0']\|_{\sHb^{k}(\Dspop_{s})}
	&\lesssim_{k}
	\|v_0'\|_{\sCb^{\lfloor\frac{k+1}{2}\rfloor}(\Dspop_{s})}\|v_0'\|_{\sHb^{k+1}(\Dspop_{s})}
\intertext{
Using \eqref{eq:extensionopCk} and \eqref{eq:extensionopHk} and 
$\Dspdata_{\frac{s}{3},s}\subset\Dspdata$ 
and $\Dspop_{s}\subset\Dspop$,
}
\|v_0'\|_{\sHb^{k+1}(\Dspop_{s})}
	&\lesssim_{k}
	\|\udata-\kerrdata\|_{\Hb^{k+1}(\Dspdata_{\frac{s}{3},s})}\\
\|v_0'\|_{\sCb^{\lfloor\frac{k+1}{2}\rfloor}(\Dspop_{s})}
	&\lesssim_{k}
	\|\udata-\kerrdata\|_{\Cb^{\lfloor\frac{k+1}{2}\rfloor}(\Dspdata)}
	\\
\|\kerr\|_{\sCb^{k+1}(\Dspop_{s})}
	&\le
	\|\kerr\|_{\nosCb^{k+1}(\Dspop)}
\end{align*}
Thus for all $s\in(0,\sfix]$:
\begin{align*}
&\|\dg v_0 + \tfrac12[v_0,v_0]\|_{\sHb^{k}(\Dspop_{s})}\\
	&\qquad\lesssim_{k}
	\big(1
	+
	\|\kerr\|_{\nosCb^{k+1}(\Dspop)}
	+
	\|\udata-\kerrdata\|_{\Cb^{\lfloor\frac{k+1}{2}\rfloor}(\Dspdata)}
	\big)
	\|\udata-\kerrdata\|_{\Hb^{k+1}(\Dspdata_{\frac{s}{3},s})}
\end{align*}
Plugging this into the definition of $\Vconst_{k}(v_0)$ yields \eqref{eq:v0Vconstk}.

We use Proposition \ref{prop:ApplySHS} with
the parameters in Table \ref{tab:ApplyApplySHS}.
Let $\Capply{\ClargeGR}_0$, $\epsapply{\CsmallGR}_0$ be the constants
produced by Proposition \ref{prop:ApplySHS} (called $\ClargeGR$, $\CsmallGR$ there).
They depend only on $\NN,\epspower,\Cinmain$,
in particular $\Clargemain$, $\Csmallmain$ are allowed to depend on 
$\Capply{\ClargeGR}_0$, $\epsapply{\CsmallGR}_0$.

%-------------------------------%
\begin{table}
\centering
\begin{tabular}{cc|c}
	&
	\begin{tabular}{@{}c@{}}
	Parameters \\
	in Proposition \ref{prop:ApplySHS}
	\end{tabular}
	& 
	\begin{tabular}{@{}c@{}}
	Parameters\\ 
	used to invoke Proposition \ref{prop:ApplySHS}
	\end{tabular}
	 \\
\hline
Input
	&$\NN$, $\epspower$, $\CinGR$
	&$\NN+2$, $\epspower$, 
	$\max\{\Cauxmain_{1,\NN,\Cinmain},\Cauxmain_{2,\Cinmain}\}$ \\
%	(see \eqref{eq:bchoice1} and \eqref{eq:L1v0con})\\
%
	&$\sfix$ 
	& $\sfix$\\
	&$v$ 
	& $v_0 = \kerr + \Pext(\udata-\kerrdata)$ \\
%	(see \eqref{eq:v0def})\\
%
	&$k$, $\bb$ (\new{Part 2 only})
	& $k+2$, 
	$\max\{
	\Cauxmain_{3,k,\CinmainHigher},
	\Cauxmain_{4,k,\CinmainHigher}
	\}$ \\
%	(see \eqref{eq:bhigher1}, \eqref{eq:bhigher2})\\
\hline
Output
	& $\ClargeGR$, $\CsmallGR$ 
	& $\Capply{\ClargeGR}_0$, $\epsapply{\CsmallGR}_0$	
\end{tabular}
\captionsetup{width=115mm}
\caption{%
The first column lists the input and output parameters of Proposition \ref{prop:ApplySHS}. 
The second column specifies the choice of the input parameters used to invoke
Proposition \ref{prop:ApplySHS},
in terms of the input parameters of Theorem \ref{thm:main}
and the parameters introduced in this proof.
The output parameters produced by this invocation of
Proposition \ref{prop:ApplySHS}
are denoted $\Capply{\ClargeGR}_0$, $\epsapply{\CsmallGR}_0$	.
They depend only on the parameters in the first row.}
\label{tab:ApplyApplySHS}
\end{table}
%-------------------------------%
We check that the assumptions of Proposition \ref{prop:ApplySHS} hold.
As required $\NN+2\ge\new{9}\ge\new{6}$; 
and  
$v_0$ is smooth on $\Dspcl_{\le\sfix}$
because $\kerr$ is smooth by \eqref{eq:KerrMain}
and $\Pext(\udata-\kerrdata)$ is smooth by Definition \ref{def:extensionspinf}.

\ref{item:vconstraints}:
By \eqref{eq:v0atzero} and \ref{item:uconstrMain}
we have
$\Pconstraints(v_0|_{y^0=0})
=\Pconstraints(\udata)
=0 $.

\ref{item:CN+1v}:
Using \eqref{eq:v0Ck} with $k=\NN+3$,
\eqref{eq:uktruk} with $k=\NN+3$,
and \ref{item:KBoundMain}, \ref{item:uBoundMain},
\begin{align*}
\|v_0\|_{\nosCb^{\NN+3}(\Dspop)}
	&\lesssim_{\NN}
	\|\kerr\|_{\nosCb^{\NN+3}(\Dspop)}
	+
	\|\udata-\kerrdata\|_{\Cb^{\NN+3}(\Dspdata)}\\
	&\lesssim_{\NN}
	\|\kerr\|_{\nosCb^{\NN+3}(\Dspop)}
	+
	\|\udata\|_{\Cb^{\NN+3}(\Dspdata)}
	\le
	2\Cinmain
\end{align*}
Thus there exists a constant $\Cauxmain_{1,\NN,\Cinmain}>0$
that depends only on $\NN$, $\Cinmain$, such that 
\begin{align*}
\|v_0\|_{\nosCb^{\NN+3}(\Dspop)} \le \Cauxmain_{1,\NN,\Cinmain}
\end{align*}
Thus \ref{item:CN+1v} holds, using Table \ref{tab:ApplyApplySHS}.

\ref{item:vint}:
By the triangle inequality,
\begin{align*}
\tint_0^{s_*} \|v_0\|_{\sCb^{1}(\Dspop_{s})} \frac{ds}{s} 
&\le 
\tint_0^{s_*} \|\kerr\|_{\sCb^{1}(\Dspop_{s})} \frac{ds}{s} 
+
\tint_0^{s_*} \|\Pext(\udata-\kerrdata)\|_{\sCb^{1}(\Dspop_{s})} \frac{ds}{s} 
\end{align*}
By \ref{item:KL1Main}, the first term is bounded by $\Cinmain$.
Using \eqref{eq:extensionopCk} and \eqref{eq:SobolevData}, 
\begin{align*}
\tint_0^{s_*} \|\Pext(\udata-\kerrdata)\|_{\sCb^{1}(\Dspdata_{s})} \frac{ds}{s} 
&\lesssim
\tint_0^{s_*} \|\udata-\kerrdata\|_{\Cb^{1}(\Dspdata_{\frac{s}{3},s})} \frac{ds}{s}\\
&\lesssim
\tint_0^{s_*} \|\udata-\kerrdata\|_{\Hb^{3}(\Dspdata_{\frac{s}{3},s})} \frac{ds}{s}\\
&\le
\|\udata-\kerrdata\|_{\HdataNEW^{\frac52+\epspower,3}(\Dspdata)}\\
&\le1
\end{align*}
where the third inequality is clear from the definition of $\HdataNEW$,
and the fourth inequality holds by \ref{item:diffsmallmain} and $\Csmallmain\le1$.
Thus there exists a constant $\Cauxmain_{2,\Cinmain}>0$
that depends only on $\Cinmain$, such that
\begin{equation*}
\tint_0^{s_*} \|v_0\|_{\sCb^{1}(\Dspop_{s})} \frac{ds}{s} 
\le
\Cauxmain_{2,\Cinmain}
\end{equation*}
Thus \ref{item:vint} holds, using Table \ref{tab:ApplyApplySHS}.

\ref{item:vsmall}:
By \eqref{eq:v0Ck} with $k=0$, 
\begin{align*}
\|v_0\|_{\nosCb^{0}(\Dspop)}
	&\lesssim
	\|\kerr\|_{\nosCb^{0}(\Dspop)}
	+
	\|\udata-\kerrdata\|_{\Cb^{0}(\Dspdata)}\\
\intertext{
Using \eqref{eq:fCfH1} with $k=0$, $s=\sfix$, $a=\frac52+\epspower$,
and then \ref{item:KsmallMain} and \ref{item:diffsmallmain},
we obtain}
\|v_0\|_{\nosCb^{0}(\Dspop)}
	&\lesssim
	\|\kerr\|_{\nosCb^{0}(\Dspop)}
	+
	\|\udata-\kerrdata\|_{\HdataNEW^{\frac52+\epspower,3}(\Dspdata)}
	\le
	2\Csmallmain
\end{align*}
This implies \ref{item:vsmall} under an admissible
smallness assumption on $\Csmallmain$.

\ref{item:MC(v)small}: 
By \eqref{eq:v0Vconstk} with $k=\NN+2$,
$\lfloor\frac{\NN+3}{2}\rfloor\le \NN+3$,
and \eqref{eq:uktruk} with $k=\NN+3$,
\begin{align}
\Vconst_{\NN+2}(v_0)
	&\lesssim_{\NN}
	\big(1
	+
	\|\kerr\|_{\nosCb^{\NN+3}(\Dspop)} 
	+
	\|\udata\|_{\Cb^{\NN+3}(\Dspdata)}
	\big)
	\|\udata-\kerrdata\|_{\HdataNEW^{\frac52+\epspower,\NN+3}(\Dspdata)}
	\nonumber
\intertext{
Now \ref{item:KBoundMain}, \ref{item:uBoundMain}, \ref{item:diffsmallmain} yield
}
\Vconst_{\NN+2}(v_0)
	&\lesssim_{\NN}
	(1+\Cinmain)
	\|\udata-\kerrdata\|_{\HdataNEW^{\frac52+\epspower,\NN+3}(\Dspdata)}
	\le
	(1+\Cinmain) \Csmallmain
	\label{eq:VN}
\end{align}
This implies \ref{item:MC(v)small} under 
an admissible smallness assumption on $\Csmallmain$.

We have checked that the assumptions 
\ref{item:vconstraints},
\ref{item:CN+1v},
\ref{item:vint},
\ref{item:vsmall},
\ref{item:MC(v)small} of Proposition \ref{prop:ApplySHS} hold.
Hence by Proposition \ref{prop:ApplySHS} there exists a unique
\begin{equation}\label{eq:c0}
c_0\in\gxG^1(\Dspop) \qquad \text{(called $c$ in Proposition \ref{prop:ApplySHS})}
\end{equation}
(using the gauge in Definition \ref{def:gauges_i0})
that satisfies \eqref{eq:existence_v+c}.
Further $c_0$ satisfies \eqref{eq:cestimates} with $\NN$ replaced by $\NN+2$.
Define
\[ 
u_0 = v_0 + c_0 \;\in\; \gx^1(\Dspop)
\]
By \eqref{eq:v+cMC}, respectively by \eqref{eq:v0atzero} and \eqref{eq:cdata},
\begin{subequations}
\begin{align}
\dg u_0 + \tfrac12[u_0,u_0] &= 0 \label{eq:u0MC}\\
u_0|_{y^0=0} &= \udata \label{eq:u0data}
\end{align}
\end{subequations}

\claimheader{Claim:} For all $s\in(0,\sfix]$:
\begin{align}
\|u_0-\kerr\|_{\sHb^{\NN+2}(\Dspop_{s})}
	&\lesssim_{\NN,\epspower,\Cinmain} \big(\tfrac{s}{\sfix}\big)^{\frac92+\epspower+\NN}
	\|\udata-\kerrdata\|_{\HdataNEW^{\frac52+\epspower,\NN+3}(\Dspdata)}
	\label{eq:u0est}
\end{align}
Proof of \eqref{eq:u0est}: By definition of $u_0$ and $v_0$,
\begin{equation}\label{eq:u0-k}
u_0 -\kerr  = \Pext(\udata-\kerrdata) + c_0
\end{equation}
We use the triangle inequality and estimate the two terms separately.
By \eqref{eq:cestimates} (with $\NN$ replaced by $\NN+2$),
\begin{align*}
\|c_0\|_{\sHb^{\NN+2}(\Dspop_{s})} 
&\le
\Capply{\ClargeGR}_0
\big(\tfrac{s}{s_*}\big)^{\frac{9}{2}+\epspower+\NN}\Vconst_{\NN+2}(v_0) \\
\intertext{
Using the first inequality of \eqref{eq:VN},
and the fact that $\Capply{\ClargeGR}_0$
depends only on $\NN,\epspower,\Cinmain$,}
\|c_0\|_{\sHb^{\NN+2}(\Dspop_{s})} 
&\lesssim_{\NN,\epspower,\Cinmain}
\big(\tfrac{s}{s_*}\big)^{\frac{9}{2}+\epspower+\NN} 
\|\udata-\kerrdata\|_{\HdataNEW^{\frac52+\epspower,\NN+3}(\Dspdata)}
\intertext{
Furthermore, by \eqref{eq:Pu-kbasic} with $k=\NN+2$,
}
\|\Pext(\udata-\kerrdata)\|_{\sHb^{\NN+2}(\Dspop_{s})}
&\lesssim_{\NN}
\big(\tfrac{s}{s_*}\big)^{\frac{9}{2}+\epspower+\NN}
\|\udata-\kerrdata\|_{\HdataNEW^{\frac{5}{2}+\epspower,\NN+3}(\Dspdata)}
\end{align*}
Thus \eqref{eq:u0est} holds. 

\proofheader{Construction of $u$ away from spacelike infinity.}
Fix functions:
\begin{itemize}
\item 
A cutoff $\phi_0\in C^\infty(\overline\diamond_+,[0,1])$
that has the following properties\footnote{%
The property \eqref{eq:cutoffrefl} is motivated by Appendix \ref{ap:ConstructionOnD},
where we construct solutions as in Theorem \ref{thm:main}
on $\diamond$ (not only $\diamond_+$).
}:
\begin{align}
&
\text{$\phi_0 = 1$ on $\Dspop_{\le\frac46\sfix}$
and $\phi_0 = 0$ on $\diamond_{5\sfix}$}
	\label{eq:supp_phi0}\\
&\text{$\phi_0 = \chi(|\vec{y}|)$ 
	on $\Dspop\cap\{\tfrac{y^0}{|\vec{y}|}\le \tfrac{1}{100}\}$
	for some $\chi\in C^\infty(\R,[0,1])$}
	\label{eq:cutoffrefl}
\end{align}
\item 
$\phi_1 = 1-\phi_0$. One has
\begin{align}
\text{$\phi_1 = 0$ on $\Dspop_{\le\frac46\sfix}$
and $\phi_1 = 1$ on $\diamond_{5\sfix}$}
\label{eq:supp_phi1}
\end{align}
\item 
A cutoff $\psi\in C^\infty(\overline{\diamond}_+,[0,1])$
that is equal to $1$ for $\tau\le \frac\pi4$ and equal to $0$ for $\tau\ge\frac\pi3$.
One has $\psi=1$ on $\Dspop_{\le1}$ by Remark \ref{rem:Dstau}, 
hence $\psi\phi_0=\phi_0$.
\end{itemize}

In the following we suppress the dependencies of constants
on the functions $\phi_0,\phi_1,\psi$, because they are fixed once and for all.

Define 
\begin{align}\label{eq:vmaindef}
v = \phi_0 u_0 + \psi \phi_1 \Pextbulk(\udata) \;\in\; \gx^1(\diamond_+)
\end{align}
using the extension operator in Definition \ref{def:bulkextension}.
(Note that $u_0$ is only defined on $\Dspop$.
The product $\phi_0u_0$ is understood to be zero on the complement
of $\Dspop$.)

We will correct $v$ to a solution of \eqref{eq:MCandudatamain}
using Proposition \ref{prop:MainCpt}.
For this it is convenient to first check some basic properties of $v$.

\claimheader{Claim:} For all $k\in\Z_{\ge0}$:
\begin{subequations}\label{eq:vprel}
\begin{align}
v|_{\tau=0} 
	&= \udata \label{eq:v0data}\\
\dg v + \tfrac12[v,v] 
	&=0 \;\;\text{on $\Dspop_{\le\frac\sfix2}$}\label{eq:v0MC}\\
v|_{\tau\ge\frac\pi2} 
	&=0 \label{eq:vzerotimel}\\
%%%%%%%%%%
\tint_{0}^{\frac\pi2}\|v\|_{\sH^k(\diamond_{\tau,\frac\sfix2})}d\tau
	&\lesssim_{\new{k,\sfix}}
	\tint_{\frac{\sfix}{12}}^{\sfix} 
	\|u_0-\kerr\|_{\sHb^{k+1}(\Dspop_{s})} ds \nonumber \\
%	&\qquad
	&\qquad+\|\kerr\|_{\nosCb^k(\Dspop)}
	+\|\udata\|_{\H^k(\diamond_{0,\sfix})}
	\label{eq:vHkCpt}
\end{align}
\end{subequations}

Proof of \eqref{eq:v0data}:
By \eqref{eq:u0data}, $\Pextbulk(\udata)|_{\tau=0}=\udata$,
$\psi|_{\tau=0}=1$ and $\phi_0+\phi_1=1$.

Proof of \eqref{eq:v0MC}: 
We have 
$v = u_0$ on $\Dspop_{\le\frac\sfix2}$
by \eqref{eq:supp_phi0}, \eqref{eq:supp_phi1},
now use \eqref{eq:u0MC}.

Proof of \eqref{eq:vzerotimel}: 
Because $\phi_0=\phi_0\psi$ and because and $\psi$ vanishes for $\tau\ge\frac\pi2$.

Proof of \eqref{eq:vHkCpt}:
By \eqref{eq:vmaindef} and the triangle inequality,
and using \eqref{eq:supp_phi0}, \eqref{eq:supp_phi1},
\begin{align*}
\|v\|_{\sH^k(\diamond_{\tau,\frac\sfix2})}
	&\le
	\|\phi_0u_0\|_{\sH^k(\diamond_{\tau,\frac\sfix2})}
	+
	\| \psi \phi_1 \Pextbulk(\udata)\|_{\sH^k(\diamond_{\tau,\frac\sfix2})}
	\\
	&=
	\|\phi_0u_0\|_{\sH^k(\diamond_{\tau,\frac\sfix2}\cap\Dspop)}
	+
	\| \psi \phi_1 \Pextbulk(\udata)\|_{\sH^k(\diamond_{\tau,4\sfix})}
\intertext{
Using $\|\phi_0\|_{\sC^k(\diamond_{\tau,\frac\sfix2}\cap\Dspop)}\lesssim_{k,\sfix}1$
and $\|\psi \phi_1\|_{\sC^k(\diamond_{\tau,4\sfix})}\lesssim_{k,\sfix}1$,
we obtain}
\|v\|_{\sH^k(\diamond_{\tau,\frac\sfix2})}
	&\lesssim_{k,\sfix}
	\|u_0\|_{\sH^k(\diamond_{\tau,\frac\sfix2}\cap\Dspop)}
	+
	\|\Pextbulk(\udata)\|_{\sH^k(\diamond_{\tau,4\sfix})}
%\end{align*}
\intertext{
By Lemma \ref{lem:Pextbulkest}
we have 
$\|\Pextbulk(\udata)\|_{\sH^k(\diamond_{\tau,4\sfix})}
\le \|\udata\|_{\H^k(\diamond_{0,\sfix})}$.
Thus we obtain }
\tint_{0}^{\frac\pi2} \|v\|_{\sH^k(\diamond_{\tau,\frac\sfix2})}d\tau
	&\lesssim_{k,\sfix}
	\tint_{0}^{\frac\pi2}\|u_0\|_{\sH^k(\diamond_{\tau,\frac\sfix2}\cap\Dspop)}d\tau
	+
	\|\udata\|_{\H^k(\diamond_{0,\sfix})}
\end{align*}
We bound the first term on the right hand side. 
By the triangle inequality,
\begin{align*}
\tint_{0}^{\frac\pi2}
\|u_0\|_{\sH^k(\diamond_{\tau,\frac\sfix2}\cap\Dspop)} d\tau
	&\le
	\tint_{0}^{\frac\pi2}
	\|u_0-\kerr\|_{\sH^k(\diamond_{\tau,\frac\sfix2}\cap\Dspop)} d\tau
	+
	\tint_{0}^{\frac\pi2}
		\|\kerr\|_{\sH^k(\diamond_{\tau,\frac\sfix2}\cap\Dspop)} d\tau
\end{align*}
By Lemma \ref{lem:normsbulktosp} we have
\begin{align}
\tint_{0}^{\frac\pi2}
\|u_0-\kerr\|_{\sH^k(\diamond_{\tau,\frac\sfix2}\cap\Dspop)} d\tau
	&\lesssim_{k,\sfix}
	\tint_{\frac{\sfix}{12}}^{\sfix} \|u_0-\kerr\|_{\sHb^{k+1}(\Dspop_{s})} ds
	\nonumber
\intertext{Further we have}
\tint_{0}^{\frac\pi2}
		\|\kerr\|_{\sH^k(\diamond_{\tau,\frac\sfix2}\cap\Dspop)} d\tau
	&\lesssim_{k,\sfix}
	\|\kerr\|_{\nosC^k(\Dspop\setminus \Dspop_{<\frac{\sfix}{12}})}
	\label{eq:Cspec}\\
	&\lesssim_{k,\sfix}
	\|\kerr\|_{\nosCb^k(\Dspop\setminus \Dspop_{<\frac{\sfix}{12}})}
	\label{eq:Cbspec}\\
	&\le
	\|\kerr\|_{\nosCb^k(\Dspop)}
	\nonumber
\end{align}
(the norm in \eqref{eq:Cspec} is defined analogously
to $\nosC^k(\diamond_{s})$ in Definition \ref{def:bulknorms},
the norm in \eqref{eq:Cbspec} is defined analogously to 
$\nosCb^k(\Dspop_{\le s})$ in Definition \ref{def:norms_i0})
where for the first inequality we use compactness in $\tau$
and the fact that the volume of $\diamond_{\tau,\frac\sfix2}\cap\Dspop$
relative to $\mucylS$ is bounded independently of $\tau,\sfix$,
for the second inequality we use 
the fact that the norms \eqref{eq:Cspec} and
\eqref{eq:Cbspec} are comparable
with a comparability constant that depends only on $k,\sfix$.
This proves \eqref{eq:vHkCpt}.

We use Proposition \ref{prop:MainCpt} with the parameters 
in Table \ref{tab:MainCptApplication}. 
Let $\Capply{\ClargeCpt}_1$, $\epsapply{\CsmallCpt}_1$
be the constants produced by Proposition \ref{prop:MainCpt} 
(called $\ClargeCpt,\CsmallCpt$ there).
They depend only on $\NN$, $\sfix$,
in particular $\Clargemain$, $\Csmallmain$ are allowed to depend on 
$\Capply{\ClargeCpt}_1$, $\epsapply{\CsmallCpt}_1$.

%-------------------------------%
\begin{table}
\centering
\begin{tabular}{cc|c}
	&
	\begin{tabular}{@{}c@{}}
	Parameters \\
	in Proposition \ref{prop:MainCpt}
	\end{tabular}
	& 
	\begin{tabular}{@{}c@{}}
	Parameters\\ 
	used to invoke Proposition \ref{prop:MainCpt}
	\end{tabular}
	 \\
\hline
	Input
	&$\NN$, $s_*$ 
	& $\NN$, $\frac12\sfix$\\
	&$v$ 
	&$v=\phi_0 u_0 + \psi \phi_1 \Pextbulk(\udata)$\\
%	(see \eqref{eq:vmaindef}) \\
%
	&$k$, $\CHigherCpt$ (Part 2 only)
	& $k$, $C_{5,k,\epspower,\sfix,\Cinmain,\CinmainHigher}$ \\
%	(see \eqref{eq:C5bulk}) \\
\hline
	Output
	&$\ClargeCpt$, $\CsmallCpt$
	&$\Capply{\ClargeCpt}_1$, $\epsapply{\CsmallCpt}_1$
\end{tabular}
\captionsetup{width=115mm}
\caption{The first column lists the input and output parameters of Proposition \ref{prop:MainCpt}. 
The second column specifies the choice of the input parameters used to invoke
Proposition \ref{prop:MainCpt},
in terms of the input parameters of Theorem \ref{thm:main}
and the parameters introduced in this proof.
The output parameters produced by this invocation of
Proposition \ref{prop:MainCpt}
are denoted $\Capply{\ClargeCpt}_1$, $\epsapply{\CsmallCpt}_1$	.
They depend only on the parameters in the first row.}
\label{tab:MainCptApplication}
\end{table}
%-------------------------------%

We check the assumptions of Proposition \ref{prop:MainCpt}.
As required $\NN\ge7$.
\ref{item:vconstraintsD}: By \eqref{eq:v0data}, \ref{item:uconstrMain}.
\ref{item:MCvsp}: By \eqref{eq:v0MC}.
\ref{item:vsupp}: By \eqref{eq:vzerotimel}.
\ref{item:vNewBound}: 
By \eqref{eq:vHkCpt} with $k=\NN+1$, 
\begin{align*}
&\tint_{0}^{\frac\pi2}\|v\|_{\sH^{\NN+1}(\diamond_{\tau,\frac\sfix2})}d\tau\\
	&\qquad\lesssim_{\NN,\sfix}
	\tint_{\frac{\sfix}{12}}^{\sfix} 
	\|u_0-\kerr\|_{\sHb^{\NN+2}(\Dspop_{s})} ds	
	+\|\kerr\|_{\nosCb^{\NN+1}(\Dspop)}
	+\|\udata\|_{\H^{\NN+1}(\diamond_{0,\sfix})}
\end{align*}
We bound the first term on the right using \eqref{eq:u0est}
and $\frac{s}{\sfix}\le1$, which yields 
\begin{align}\label{eq:intvest}
\begin{aligned}
&\tint_{0}^{\frac\pi2}\|v\|_{\sH^{\NN+1}(\diamond_{\tau,\frac\sfix2})}d\tau\\
	&\quad\lesssim_{\NN,\epspower,\sfix,\Cinmain}
	\|\udata-\kerrdata\|_{\HdataNEW^{\frac52+\epspower,\NN+3}(\Dspdata)}
	+\|\kerr\|_{\nosCb^{\NN+1}(\Dspop)}
	+\|\udata\|_{\H^{\NN+1}(\diamond_{0,\sfix})}
	\le3\Csmallmain	
\end{aligned}
\end{align}
where the last step uses 
\ref{item:diffsmallmain},
\ref{item:KsmallMain},
\ref{item:usmallMain}.
This implies \ref{item:vNewBound} under an admissible
smallness assumption on $\Csmallmain$.

We have checked the assumptions 
\ref{item:vconstraintsD},
\ref{item:MCvsp},
\ref{item:vsupp},
\ref{item:vNewBound}
of Proposition \ref{prop:MainCpt}.
Thus there exists a unique $c\in\gxG^{1}(\diamond_+)$ 
(using the gauge in Definition \ref{def:gauge_bulk})
that satisfies \eqref{eq:Cptcprop} 
with $\sfix$ replaced by $\frac12\sfix$. 
Further $c$ satisfies \eqref{eq:cbulkP1}
with $\sfix$ replaced by $\frac12\sfix$.

Define
\begin{equation}\label{eq:defu}
u = v + c \;\in\; \gx^{1}(\diamond_+)
\end{equation}
This satisfies \eqref{eq:MCandudatamain} 
by \eqref{eq:ceqD} respectively \eqref{eq:cdataD} and \eqref{eq:v0data}.

\proofheader{Proof of Part 1.}
\eqref{eq:uspBoundMain}: 
By \eqref{eq:czeroD} we have $c=0$ on $\Dspop_{\le\frac\sfix2}$.
Hence $u=v$ on $\Dspop_{\le\frac\sfix2}$.
Together with \eqref{eq:vmaindef}, \eqref{eq:supp_phi0}, \eqref{eq:supp_phi1} we obtain
\begin{align}\label{eq:uu0}
u=u_0\;\; \text{on}\;\; \Dspop_{\le\frac\sfix2}
\end{align}
Thus \eqref{eq:u0est} yields that for all $s\in(0,\frac\sfix2]$:
\begin{align*}
\|u-\kerr\|_{\sHb^{\NN+2}(\Dspop_{s})}
	\lesssim_{\NN,\epspower,\Cinmain} \big(\tfrac{s}{\sfix}\big)^{\frac92+\epspower+\NN}
	\|\udata-\kerrdata\|_{\HdataNEW^{\frac52+\epspower,\NN+3}(\Dspdata)}
\end{align*}
This implies \eqref{eq:uspBoundMain}, 
by the first item in Lemma \ref{lem:SobolevDandW} with $k=\NN+2$
and $\sfix$ there given by $\sfix/2$ here
(the assumption \eqref{eq:sasspsoblem} holds by 
$s\le\sfix/2$ and by \ref{item:diffsmallmain}),
and by an admissible largeness assumption on $\Clargemain$.
Further Lemma \ref{lem:SobolevDandW} implies that
$u-\kerr$ extends in $C^{\NN-1}$ to \smash{$\Dspcl_{\le\frac\sfix2}$},
where we also use the fact that 
the basis elements \eqref{eq:gbasis_i0} are smooth on $\Dspcl$.
Thus $u$ extends in $C^{\NN-1}$ to \smash{$\Dspcl_{\le\frac\sfix2}$},
using \eqref{eq:KerrMain}.

\eqref{eq:uCptBoundMain}: 
For all $\tau\in [0,\pi)$:
\begin{align}
\|u\|_{\sH^{\NN}(\diamond_{\tau,\sfix})}
&\overset{(1)}{\le}
\|u\|_{\sH^{\NN}(\diamond_{\tau,\frac\sfix2})}
\overset{(2)}{\le}
\|v\|_{\sH^{\NN}(\diamond_{\tau,\frac\sfix2})}
+
\|c\|_{\sH^{\NN}(\diamond_{\tau,\frac\sfix2})}\label{eq:uvcintN}\\
&\overset{(3)}{\lesssim}_{\NN,\sfix}
	\tint_0^{\tau}\|v\|_{\sH^{\NN+1}(\diamond_{\tau',\frac\sfix2})}d\tau'
	+
	\|v\|_{\sH^{\NN}(\diamond_{\tau,\frac\sfix2})}\nonumber\\
&\overset{(4)}{\lesssim}_{\NN,\sfix}
\tint_0^{\frac\pi2}\|v\|_{\sH^{\NN+1}(\diamond_{\tau',\frac\sfix2})}d\tau'
	\nonumber
\end{align}
In (1) we use $\diamond_{\tau,\sfix}\subset \diamond_{\tau,\frac\sfix2}$;
in (2) we use \eqref{eq:defu};
in (3) we use \eqref{eq:SHcEEDNEW}
and the fact that $\Capply{\ClargeCpt}_1$ depends only on $\NN,\sfix$;
and in (4) we use the last statement in Part 1 of Proposition \ref{prop:MainCpt}
and \eqref{eq:vzerotimel}.
Together with \eqref{eq:intvest} we obtain
\begin{align*}
\sup_{\tau\in[0,\pi)}
\|u\|_{\sH^{\NN}(\diamond_{\tau,\sfix})}
&\lesssim_{\NN,\epspower,\sfix,\Cinmain}
	\|\udata-\kerrdata\|_{\HdataNEW^{\frac52+\epspower,\NN+3}(\Dspdata)}\\
	&\qquad\qquad+\|\kerr\|_{\nosCb^{\NN+1}(\Dspop)}
	+\|\udata\|_{\H^{\NN+1}(\diamond_{0,\sfix})}
\end{align*}
This implies \eqref{eq:uCptBoundMain}, 
by the second item in Lemma \ref{lem:SobolevDandW} with $k=\NN$
and $\sfix$ there given by $\sfix$ here
(the assumption \eqref{eq:tauasspsop} holds by \eqref{eq:intvest}),
by 
$
\|u\|_{\nosC^{\NN-3}(\diamond_{\sfix})}
=
\sup_{\tau\in[0,\pi)}\|u\|_{\sC^{\NN-3}(\diamond_{\tau,\sfix})}$, 
and under an admissible largeness assumption on $\Clargemain$.
Moreover Lemma \ref{lem:SobolevDandW} implies that
$u$ extends in \smash{$C^{\NN-3}$} to \smash{$\overline{\diamond}_{\sfix}$},
where we use the fact that the basis elements \eqref{eq:gbasis_bulk}
are smooth on $\cyl$.

We have shown that $u$ extends in $C^{\NN-3}$ to
$\Dspcl_{\le\frac\sfix2}$ and to $\overline{\diamond}_{\sfix}$, 
thus it extends in $C^{\NN-3}$ to 
$\overline{\diamond}_+\setminus\spaceinf
=
\Dspcl_{\le\frac\sfix2}\cup\overline{\diamond}_{\sfix}$.
This concludes the proof of Part 1.

\proofheader{Proof of Part 2.}
Let $k\ge\NN$, $\CinmainHigher>0$ so that 
\ref{item:KHigherBoundMain}, 
\ref{item:uCkHigherBoundMain},
\ref{item:uHkHigherBoundMain},
\ref{item:diffHighersmallmain} hold.

We prove \eqref{eq:uspBoundMain} 
with $\NN$, $\Clargemain$ replaced by $k$, $\ClargemainHigher$,
where we use Part 2 of Proposition \ref{prop:ApplySHS}
with the parameters in Table \ref{tab:ApplyApplySHS}.
We check that the assumptions hold.

\ref{item:GRHigherassp12}:
By \eqref{eq:v0Vconstk} with $k$ replaced by $k+2$,
by $\lfloor\frac{k+3}{2}\rfloor\le k+3$,
and by \eqref{eq:uktruk},
\begin{align}
\Vconst_{k+2}(v_0)
	&\lesssim_k
	\big(1
	+
	\|\kerr\|_{\nosCb^{k+3}(\Dspop)} 
	+
	\|\udata\|_{\Cb^{k+3}(\Dspdata)}
	\big)
	\|\udata-\kerrdata\|_{\HdataNEW^{\frac52+\epspower,k+3}(\Dspdata)}
	\nonumber\\
	&\lesssim
	(1+\CinmainHigher)\|\udata-\kerrdata\|_{\HdataNEW^{\frac52+\epspower,k+3}(\Dspdata)}
	\label{eq:Vk+2}
\end{align}
using \ref{item:KHigherBoundMain}, \ref{item:uCkHigherBoundMain}
for the second inequality.
Together with \ref{item:diffHighersmallmain}, 
we obtain that there exists a constant $\Cauxmain_{3,k,\CinmainHigher}>0$
that depends only on $k$, $\CinmainHigher$, such that 
\begin{align*}
\Vconst_{k+1}(v_0)
\le
\Vconst_{k+2}(v_0) 
\le 
\Cauxmain_{3,k,\CinmainHigher}
\end{align*}
where the first inequality is clear from \eqref{eq:defVkk}.
This proves \ref{item:GRHigherassp12}, using Table \ref{tab:ApplyApplySHS}.

\ref{item:GRHigherassp3}:
By \eqref{eq:v0Ck} and \eqref{eq:uktruk} with $k$ replaced by $k+3$,
\begin{align*}
\|v_0\|_{\nosCb^{k+3}(\Dspop)}
	&\lesssim_{k}
	\|\kerr\|_{\nosCb^{k+3}(\Dspop)}
	+
	\|\udata\|_{\Cb^{k+3}(\Dspdata)}
	\le
	2\CinmainHigher
\end{align*}
using \ref{item:KHigherBoundMain}, \ref{item:uCkHigherBoundMain}
for the second inequality.
Thus there exists a constant 
$\Cauxmain_{4,k,\CinmainHigher}>0$ that depends only on $k$, $\CinmainHigher$,
such that 
\begin{align*}
\|v_0\|_{\nosCb^{k+3}(\Dspop)}
	\le
	\Cauxmain_{4,k,\CinmainHigher}
\end{align*}
This proves \ref{item:GRHigherassp3}, using Table \ref{tab:ApplyApplySHS}.

We have checked the assumptions \ref{item:GRHigherassp12}, \ref{item:GRHigherassp3}
of Part 2 in Proposition \ref{prop:ApplySHS}.
Thus $c_0$ satisfies \eqref{eq:GRHigherconcl} with $k$ replaced by $k+2$,
that is, for all $s\in(0,\sfix]$:
\begin{align}\label{eq:GRHigherconclApp}
\|c_0\|_{\sHb^{\kk+2}(\Dspop_{s})} 
\lesssim_{\kk,\epspower,\Cinmain,\CinmainHigher}
\big(\tfrac{s}{s_*}\big)^{\frac{9}{2}+\epspower+\kk}
\Vconst_{\kk+2}(v_0)
\end{align}
Using \eqref{eq:u0-k}, for all $s\in(0,\sfix]$ we have
\begin{align}
\|u_0-\kerr\|_{\sHb^{k+2}(\Dspop_{s})}
	&\le
	\|\Pext(\udata-\kerrdata) \|_{\sHb^{k+2}(\Dspop_{s})}
	+
	\|c_0\|_{\sHb^{k+2}(\Dspop_{s})} \nonumber
\intertext{
We bound the first term using \eqref{eq:Pu-kbasic}
with $k$ replaced by $k+2$, and the second term 
using \eqref{eq:GRHigherconclApp} and \eqref{eq:Vk+2}.
This yields that for all $s\in(0,\sfix]$:}
\|u_0-\kerr\|_{\sHb^{k+2}(\Dspop_{s})} 
	&\lesssim_{\kk,\epspower,\Cinmain,\CinmainHigher}
	\big(\tfrac{s}{s_*}\big)^{\frac{9}{2}+\epspower+k}
	\|\udata-\kerrdata\|_{\HdataNEW^{\frac{5}{2}+\epspower,k+3}(\Dspdata)}
	\label{eq:u0-K,k+1der}
\end{align}
By \eqref{eq:uu0} this implies that for all $s\in(0,\frac\sfix2]$:
\begin{align*}
\|u-\kerr\|_{\sHb^{k+2}(\Dspop_{s})} 
	&\lesssim_{\kk,\epspower,\Cinmain,\CinmainHigher}
	\big(\tfrac{s}{s_*}\big)^{\frac{9}{2}+\epspower+k}
	\|\udata-\kerrdata\|_{\HdataNEW^{\frac{5}{2}+\epspower,k+3}(\Dspdata)}
\end{align*}
This implies \eqref{eq:uspBoundMain}
with $\NN$, $\Clargemain$ replaced by $k$, $\ClargemainHigher$, 
by the first item in Lemma \ref{lem:SobolevDandW} with $k$, $\sfix$
there given by $k+2$, $\sfix/2$ here 
(the assumption \eqref{eq:sasspsoblem} holds by 
$s\le\sfix/2$ and by \ref{item:diffHighersmallmain}),
and under a largeness assumption on $\ClargemainHigher$
that depends only on $\kk,\epspower,\Cinmain,\CinmainHigher$.
Moreover, Lemma \ref{lem:SobolevDandW} implies that
$u-\kerr$ extends in $C^{k-1}$ to $\Dspcl_{\le\frac\sfix2}$,
and thus $u$ extends in $C^{k-1}$ to $\Dspcl_{\le\frac\sfix2}$ by \eqref{eq:KerrMain}.

We prove \eqref{eq:uCptBoundMain} 
with $\NN$, $\Clargemain$ replaced by $k$, $\ClargemainHigher$,
where we use Part 2 of Proposition \ref{prop:MainCpt}
with the parameters in Table \ref{tab:MainCptApplication}.

We check \ref{item:vHigherNewBound}:
By \eqref{eq:vHkCpt} with $k$ replaced by $k+1$,
\begin{align}
&\tint_{0}^{\frac\pi2}\|v\|_{\sH^{k+1}(\diamond_{\tau,\frac\sfix2})}d\tau
	\nonumber\\
	&\quad
	\lesssim_{\new{k,\sfix}}
	\tint_{\frac{\sfix}{12}}^{\sfix} 
	\|u_0-\kerr\|_{\sHb^{k+2}(\Dspop_{s})} ds 
	+\|\kerr\|_{\nosCb^{k+1}(\Dspop)}
	+\|\udata\|_{\H^{k+1}(\diamond_{0,\sfix})}\nonumber\\
	&\quad
	\lesssim_{k,\epspower,\sfix,\Cinmain,\CinmainHigher}
	\|\udata-\kerrdata\|_{\HdataNEW^{\frac{5}{2}+\epspower,k+3}(\Dspdata)}
	+\|\kerr\|_{\nosCb^{k+1}(\Dspop)}
	+\|\udata\|_{\H^{k+1}(\diamond_{0,\sfix})}
	\label{eq:intvk+1}
\end{align}
where in the second step we use \eqref{eq:u0-K,k+1der}.
By \ref{item:KHigherBoundMain},
\ref{item:uHkHigherBoundMain}, 
\ref{item:diffHighersmallmain}, 
each of the three terms on the right hand side is bounded by $\CinmainHigher$.
Hence there exists a constant $C_{5,k,\epspower,\sfix,\Cinmain,\CinmainHigher}>0$
that depends only on $k,\epspower,\sfix,\Cinmain,\CinmainHigher$,
such that 
\begin{align*}
%\label{eq:C5bulk}
\tint_{0}^{\frac\pi2}\|v\|_{\sH^{k+1}(\diamond_{\tau,\frac\sfix2})}d\tau
	\le
	C_{5,k,\epspower,\sfix,\Cinmain,\CinmainHigher}
\end{align*}
This proves \ref{item:vHigherNewBound}, using Table \ref{tab:MainCptApplication}.
Hence $c$ satisfies \eqref{eq:bulkh12} with $\sfix$ replaced by $\frac\sfix2$.

Analogously to \eqref{eq:uvcintN} we have, for all $\tau\in[0,\pi)$:
\begin{align*}
\|u\|_{\sH^{k}(\diamond_{\tau,\sfix})}
&\le
\|u\|_{\sH^{k}(\diamond_{\tau,\frac\sfix2})}
\le
\|v\|_{\sH^{k}(\diamond_{\tau,\frac\sfix2})}
+
\|c\|_{\sH^{k}(\diamond_{\tau,\frac\sfix2})}
\intertext{
With \eqref{eq:chigherEECptNEW1},
the last statement in Part 2 of 
Proposition \ref{prop:MainCpt}, and \eqref{eq:vzerotimel}, we obtain 
}
\|u\|_{\sH^{k}(\diamond_{\tau,\sfix})}
&\lesssim_{k,\epspower,\sfix,\Cinmain,\CinmainHigher}
	\tint_0^{\tau}\|v\|_{\sH^{k+1}(\diamond_{\tau',\frac\sfix2})}d\tau'
	+
	\|v\|_{\sH^{k}(\diamond_{\tau,\frac\sfix2})}\\
&\lesssim_{k,\sfix}
\tint_0^{\frac\pi2}\|v\|_{\sH^{k+1}(\diamond_{\tau',\frac\sfix2})}d\tau'
\intertext{With \eqref{eq:intvk+1} this implies}
\sup_{\tau\in[0,\pi)}\|u\|_{\sH^{k}(\diamond_{\tau,\sfix})}
	&\lesssim_{k,\epspower,\sfix,\Cinmain,\CinmainHigher}
	\|\udata-\kerrdata\|_{\HdataNEW^{\frac{5}{2}+\epspower,k+3}(\Dspdata)}\\
	&\qquad\qquad\quad
		+\|\kerr\|_{\nosCb^{k+1}(\Dspop)}
		+\|\udata\|_{\H^{k+1}(\diamond_{0,\sfix})}
\end{align*}
This implies \eqref{eq:uCptBoundMain} with
$\NN$, $\Clargemain$ replaced by $k$, $\ClargemainHigher$, 
by the second item in Lemma \ref{lem:SobolevDandW}
with $k,\sfix$ there given by $k,\sfix$ here
(the assumption \eqref{eq:tauasspsop} holds by 
\ref{item:KHigherBoundMain}, \ref{item:uHkHigherBoundMain}, \ref{item:diffHighersmallmain}),
and under a largeness assumption on $\ClargemainHigher$
that depends only on $k,\epspower,\sfix,\Cinmain,\CinmainHigher$.
Moreover, Lemma \ref{lem:SobolevDandW} implies that
$u$ extends in $C^{k-3}$ to $\overline{\diamond}_{\sfix}$.

We have shown that $u$ extends in $C^{k-3}$
to $\Dspcl_{\le\frac\sfix2}$ and to 
$\overline{\diamond}_{\sfix}$, hence it extends in 
$C^{k-3}$ to 
$\overline{\diamond}_+\setminus\spaceinf
=
\Dspcl_{\le\frac\sfix2}\cup\overline{\diamond}_{\sfix}$.
This concludes the proof of Part 2.
%------------------------------------%

%------------------------------------%
\proofheader{Proof of Part 3.}
We check that the assumptions of Proposition \ref{prop:metricregularity} hold with parameters \eqref{eq:ksu0} given by $\NN-3$, $\sfix$, $\uo$,
respectively by $k-3$, $\sfix$, $\uo$ under the assumptions of Part 2.
The element $u$, and hence $\uo$, extend in $C^{\NN-3}$
to $\overline\diamond_+\setminus\spaceinf$ by Part 1,
respectively in $C^{k-3}$ under the assumptions of Part 2 by Part 2.
We check \ref{item:F1/16assp} and \ref{item:Fdh1/16assp},
note that these assumptions are independent of $k$ in \eqref{eq:ksu0}.
By \eqref{eq:uspBoundMain}, \ref{item:KsmallMain}, \ref{item:diffsmallmain},
respectively by \eqref{eq:uCptBoundMain}, \ref{item:KsmallMain}, 
\ref{item:usmallMain}, \ref{item:diffsmallmain},
\begin{align*}
\|u\|_{\nosCb^0(\Dspop_{\le\frac\sfix2})} 
&\le 
\|u-\kerr\|_{\nosCb^0(\Dspop_{\le\frac\sfix2})} 
+
\|\kerr\|_{\nosCb^0(\Dspop_{\le\frac\sfix2})} 
\le (1+\Clargemain)\Csmallmain\\
\|u\|_{\nosC^0(\diamond_{\sfix})} 
&\le
3\Clargemain\Csmallmain
\end{align*}
Thus \ref{item:F1/16assp} follows from Lemma \ref{lem:FrameInvertibility},
the fact that the change of bases
between $\p_{y^\mu}$ and $\V_{\mu}$ is smooth 
on $\Dspcl_{\le1}\cup\spaceinf$,
and an admissible smallness assumption on $\Csmallmain$.
Further \ref{item:Fdh1/16assp} follows from 
Lemma \ref{lem:Fdhlemma}, the fact that 
the change of bases between $dy^\mu/\s$
and $\Vd^\mu$ is smooth on $\Dspcl_{\le \sfix}\setminus\Dspcl_{<\frac\sfix2}$, and an admissible smallness assumption on $\Csmallmain$.

Now Part 3, except for the statement that $g$ is Ricci-flat, 
follows from Proposition \ref{prop:metricregularity}.
Ricci-flatness follows from Proposition \ref{prop:metricdef}
(with $\diamond$ replaced by $\diamond_+$).
\qed
\end{proof}
%---------------------------------%
\begin{proof}[of Theorem \ref{thm:mainpointwise}]
Instead of specifying $C$ and $\epsO$ up front, 
we will make finitely many admissible largeness assumptions on $C$,
respectively smallness assumptions on $\epsO$, 
where admissible means that they depend only on $\NN,\sfix$.

We use Theorem \ref{thm:main} with the parameters in Table \ref{tab:MainCorollary}.
Let $\Capply{\Clargemain}$, $\epsapply{\Csmallmain}$ be the constants
produced by Theorem \ref{thm:main}
(called $\Clargemain$, $\Csmallmain$ there).
They depend only on $\NN$, $\sfix$,
in particular $C$, $\epsO$ are allowed to depend on 
$\Capply{\Clargemain}$, $\epsapply{\Csmallmain}$.
%------------------------------%
\begin{table}
\centering
\begin{tabular}{cc|c}
	&
	\begin{tabular}{@{}c@{}}
	Parameters \\
	in Theorem \ref{thm:main}
	\end{tabular}
	& 
	\begin{tabular}{@{}c@{}}
	Parameters\\ 
	used to invoke Theorem \ref{thm:main}
	\end{tabular}
	 \\
\hline
	Input
	&$\NN$, $\epspower$, $s_*$, $\Cinmain$
	& $\NN$, $1/4$, $\sfix$, $1$\\
	&$\kerr$, $\udata$ 
	&$\kerr$, $\udata$\\
	&$k$, $\CinmainHigher$ (Part 2 only)
	&$k$, $\slCinmainHigher$ \\
\hline
	Output
	&$\Clargemain$, $\Csmallmain$
	&$\Capply{\Clargemain}$, $\epsapply{\Csmallmain}$
\end{tabular}
\captionsetup{width=115mm}
\caption{%
The first column lists the input and output parameters of Theorem \ref{thm:main}. 
The second column specifies the choice of the input parameters used to invoke
Theorem \ref{thm:main},
in terms of the input parameters of Theorem \ref{thm:mainpointwise}
and the parameters introduced in this proof.
The output parameters produced by this invocation of
Theorem \ref{thm:main}
are denoted $\Capply{\Clargemain}$, $\epsapply{\Csmallmain}$	.
They depend only on the parameters in the first row.}
\label{tab:MainCorollary}
\end{table}
%------------------------------%
%
We check that the assumptions of Theorem \ref{thm:main} hold.
\ref{item:KMCmain}: By \ref{item:kerrmc}.
\ref{item:uconstrMain}: By \ref{item:ucons}.
%
%%%%%%%%%%
%
\ref{item:KBoundMain},
\ref{item:KL1Main},
\ref{item:KsmallMain}:
Note that 
\begin{equation}\label{eq:ys}
\text{$\tfrac{|y|}{\s}$ is homogeneous of degree zero and $|y| \le \s \le \sqrt{6} |y|$}
\end{equation}
The assumption \ref{item:KPts} implies, 
together with the Leibniz rule and \eqref{eq:ys},
$$
|(\s\p_{y})^{\le\NN+3}\kerr|\lesssim_{\NN} \eps \s(1+|\log\s|)
\qquad
\text{on $\Dspop_{\le\sfix}$}
$$
Thus for all $s\in(0,\sfix]$:
\begin{equation*}
\|\kerr\|_{\sCb^{\NN+3}(\Dspop_{s})} \lesssim_{\NN} \eps s (1+|\log s|)
\end{equation*}
Since $\sup_{s\in(0,1]} s(1+|\log s|) = 1$
and $\int_0^{1} s(1+|\log s|) \frac{ds}{s}=2$, this implies 
\begin{subequations}
\begin{align}
\|\kerr\|_{\nosCb^{\NN+3}(\Dspop_{\le\sfix})} &\lesssim_{\NN} \eps 
	\label{eq:kN+3T1}
\\
\tint_{0}^{\sfix} \|\kerr\|_{\sCb^{1}(\Dspop_{s})} \frac{ds}{s} 
	&\lesssim \eps 
	\label{eq:uintT1}
\end{align}
\end{subequations}
Thus \ref{item:KBoundMain},
\ref{item:KL1Main},
\ref{item:KsmallMain} follow under an admissible smallness assumption on $\epsO$,
using $\eps\le\epsO$ 
and Table \ref{tab:MainCorollary}.
\ref{item:uBoundMain}:
By the triangle inequality,
\begin{align}\label{eq:udataCbN}
\begin{aligned}
\|\udata\|_{\Cb^{\NN+3}(\Dspdata_{\le\sfix})}
&\le
\|\udata-\kerrdata\|_{\Cb^{\NN+3}(\Dspdata_{\le\sfix})}
+
\|\kerrdata\|_{\Cb^{\NN+3}(\Dspdata_{\le\sfix})}\\
&\lesssim_{\NN}
\|\udata-\kerrdata\|_{\Cb^{\NN+3}(\Dspdata_{\le\sfix})}
+
\|\kerr\|_{\Cb^{\NN+3}(\Dspop_{\le\sfix})}
\lesssim_{\NN}
\eps
\end{aligned}
\end{align}
where the last inequality holds by \ref{item:K-uPts} and \eqref{eq:kN+3T1}.
This implies \ref{item:uBoundMain} under an admissible smallness assumption 
on $\epsO$, 
using $\eps\le\epsO$
and Table \ref{tab:MainCorollary}.
\ref{item:usmallMain}:
We have
\[ 
\|\udata\|_{\H^{\NN+1}(\diamond_{0,\sfix})} 
\lesssim_{\NN}
\|\udata\|_{C^{\NN+1}(\diamond_{0,\sfix})} 
\]
using the fact that the volume of $\diamond_{0,\sfix}$
with respect to $\mucylS$ is bounded, independently of $\sfix$.
Recall that the $C^{\NN+1}(\diamond_{0,\sfix})$ norm is defined using the
vector fields $\V_1,\V_2,\V_3$.
The change of bases between $\V_1,\V_2,\V_3$ and $\p_{x^1},\p_{x^2},\p_{x^3}$
is smooth on the compact set $\diamond_{0,\sfix}$. 
Hence \ref{item:uxptw} yields 
\begin{equation}\label{eq:uepsT1}
\|\udata\|_{\H^{\NN+1}(\diamond_{0,\sfix})} \lesssim_{\NN,\sfix} \eps
\end{equation}
which implies \ref{item:usmallMain} under an admissible smallness assumption 
on $\epsO$, using $\eps\le\epsO$.
\ref{item:diffsmallmain}: 
For every $s\in(0,\sfix]$ one has 
\begin{align}\label{eq:prelHdataN}
\|\udata-\kerrdata\|_{\Hb^{\NN+3}(\Dspdata_{\frac{s}{\new{3}},s})}
&\lesssim_{\NN}
\|\udata-\kerrdata\|_{\Cb^{\NN+3}(\Dspdata_{\frac{s}{\new{3}},s})}
\lesssim_{\NN}
\eps s^{\NN+5}
\end{align}
where in the first step we use
the fact that the volume of $\Dspdata_{\frac{s}{\new{3}},s}$
with respect to the homogeneous measure $\muDu$ is bounded
independently of $s$, and in the second step we use \ref{item:K-uPts}.
Therefore, with $\epspower=\frac14$ (see Table \ref{tab:MainCorollary}),
\begin{align}
\|\udata-\kerrdata\|_{\HdataNEW^{\frac52+\epspower,\NN+3}(\Dspdata)}
	&\lesssim_{\NN}
	\eps 
	\tint_{0}^{\sfix} \big(\frac{\sfix}{s}\big)^{\frac52+\epspower+(\NN+2)} 
	\left(1+|\log(\tfrac{\sfix}{s})|\right)^{\NN+2}
	s^{\NN+5} \frac{ds}{s} \nonumber\\
	&\lesssim_{\NN,\sfix}
	\eps 
	\tint_{0}^{1} s^{\frac12-\epspower} 
	\left(1+|\log s|\right)^{\NN+2} \frac{ds}{s}
	\lesssim_{\NN}
	\eps\label{eq:u-keps}
\end{align}
Thus \ref{item:diffsmallmain} follows under an admissible smallness assumption on $\eps$, using $\eps\le\epsO$.

We have checked that the assumptions of Theorem \ref{thm:main} hold.
Let $u\in\gx^1(\diamond_+)$ be a solution as in Theorem \ref{thm:main}.
This satisfies \eqref{eq:MCT1} by \eqref{eq:MCandudatamain}

\proofheader{Proof of Part 1.}
By Part 1 of Theorem \ref{thm:main}, $u$ extends in $C^{\NN-3}$ to $\overline\diamond_+\setminus\spaceinf$.
We check \eqref{eq:uconclNptwsp}:
By \eqref{eq:ys}, for all $s\in(0,\sfix/2]$ 
and at every point on $\Dspop_{s}$:
\begin{align}
|y|^{-(\NN+4)} |(|y|\p_y)^{\le\NN}(u-\kerr)| 
	&\lesssim_{\NN}
	s^{-(\NN+4)}\|u-\kerr\|_{\sCb^{\NN}(\Dspop_{s})}
	\nonumber
\intertext{
By \eqref{eq:uspBoundMain},
the fact that $\Capply{\Clargemain}$ depends only on $\NN$, $\sfix$,
and by \eqref{eq:u-keps}, we obtain}
|y|^{-(\NN+4)} |(|y|\p_y)^{\le\NN}(u-\kerr)| 
	&\lesssim_{\NN,\sfix}
	s^{\frac12+\gamma}
	\|\udata-\kerrdata\|_{\HdataNEW^{\frac52+\epspower,\NN+3}(\Dspdata)}
	\lesssim_{\NN,\sfix}
	\eps
	\label{eq:proofp1nT1}
\end{align}
This implies \eqref{eq:uconclNptwsp}
using $\cup_{s\in(0,\sfix/2]}\Dspop_{s}=\Dspop_{\le\sfix/2}$,
and under an admissible largeness assumption on $C$.
We check \eqref{eq:uconclNptwCN}:
By \eqref{eq:uCptBoundMain}
and \eqref{eq:u-keps}, \eqref{eq:kN+3T1}, \eqref{eq:uepsT1},
\begin{align*}
\|u\|_{C^{\NN-3}(\diamond_{\sfix})} 
\lesssim_{\NN,\sfix} \eps
\end{align*}
where we also use the fact that $\Capply{\Clargemain}$
depends only on $\NN$, $\sfix$.
Thus \eqref{eq:uconclNptwCN} holds under an admissible 
largeness assumption on $C$
(recall that $\diamond_{\sfix}=\diamond_+\setminus\Dspop_{<\frac\sfix6}$).

\proofheader{Proof of Part 2.}
Let $k\ge\NN$ such that \eqref{eq:SimpHigher} holds.
As a preliminary, we check that there exists $\slCinmainHigher>0$ such that
\begin{subequations}\label{eq:T1P2basic}
\begin{align}
\|\kerr\|_{\nosCb^{k+3}(\Dspop_{\le\sfix})} 
	&\le \slCinmainHigher
	\label{eq:kerrk+3}\\
\|\udata\|_{\Cb^{k+3}(\Dspdata_{\le\sfix})} 
	&\le \slCinmainHigher
	\label{eq:datak+3}\\
\|\udata\|_{\Hb^{k+1}(\diamond_{0,\sfix})} 
	&\le \slCinmainHigher
	\label{eq:datak+1,H} \\
\|\udata-\kerrdata\|_{\HdataNEW^{\frac52+\epspower,k+3}(\Dspdata_{\le\sfix})} 
	&\le \slCinmainHigher
	\label{eq:diff}
\end{align}
\end{subequations}
Proof of \eqref{eq:T1P2basic}:
\eqref{eq:kerrk+3}:
This follows from \eqref{eq:KPtsHigher}, \eqref{eq:ys},
and $|y|(1+|\log|y||)\lesssim1$ on $\Dspcl_{\le\sfix}$.
\eqref{eq:datak+3}:
Analogously to \eqref{eq:udataCbN}, this follows from the triangle inequality, 
\eqref{eq:kerrk+3} and \eqref{eq:K-uPtsHigher}.
\eqref{eq:datak+1,H}:
This holds because $\udata$ is smooth on $\diamonddata$
and $\diamond_{0,\sfix}\subset\diamonddata$ is compact.
\eqref{eq:diff}:
This is checked analogously to \eqref{eq:prelHdataN} and \eqref{eq:u-keps},
using \eqref{eq:K-uPtsHigher}.

By \eqref{eq:T1P2basic} and Table \ref{tab:MainCorollary}, the assumptions
\ref{item:KHigherBoundMain},
\ref{item:uCkHigherBoundMain},
\ref{item:uHkHigherBoundMain},
\ref{item:diffHighersmallmain}
of Part 2 in Theorem \ref{thm:main} hold.
Thus $u$ extends in $C^{k-3}$ to $\overline\diamond_+\setminus\spaceinf$.
Further by \eqref{eq:uspBoundMain} of Part 2, 
and a calculation analogous to \eqref{eq:proofp1nT1},
one has $\|\tfrac{(|y|\p_y)^{\le k}(u-\kerr)}{|y|^{k+4}}\|_{L^\infty(\Dspop_{\le\frac\sfix2})}<\infty$.

\proofheader{Proof of Part 3.}
This follows from Part 3 of Theorem \ref{thm:main}.\qed
\end{proof}
%---------------------------------%

\appendix

\section{Construction on $\diamond$}
\label{ap:ConstructionOnD}

In Theorem \ref{thm:main} we construct smooth solutions on $\diamond_+$.
Here we explain how the construction can be used to obtain 
smooth solutions on $\diamond$,
including estimates analogous to Theorem \ref{thm:main},
which in particular control the regularity along past null and timelike infinity.

Let $\refl:\cyl\to\cyl$ be the reflection $(\tau,\xi)\mapsto (-\tau,\xi)$.
We denote the restriction of $\refl$ to subsets of $\cyl$ also by $\refl$.
The map $\refl$ naturally induces a map 
$$\refl^{\gx}:\gx(\cyl)\to\gx(\cyl)$$
that commutes with the operations \eqref{eq:gop}, in particular with 
$\dg$ and with $[\cdot,\cdot]$.
Analogously one obtains maps $\gx(\diamond_+)\to\gx(\refl(\diamond_+))$
and $\gx(\Dspcl_{\le\sfix})\to\gx(\refl(\Dspcl_{\le\sfix}))$
for every $\sfix>0$, that we also denote by $\reflg$.
The map $\refl$ acts as the identity on $\diamonddata$, 
and it induces a map $\refldatag:\gxdata(\diamonddata)\to\gxdata(\diamonddata)$,
which in particular maps solutions of the constraints \eqref{eq:1constraints}
to solutions of the constraints.

In Theorem \ref{thm:main}, suppose that one is given an element
\begin{equation*}
\kerr\in\gx^1(\Dspcl_{\le\sfix}\cup\refl(\Dspcl_{\le\sfix}))
\end{equation*}
We claim that 
if \ref{item:KMCmain} holds for this element $\kerr$,
and if \ref{item:KBoundMain}, \ref{item:KL1Main},
\ref{item:KsmallMain} also hold with $\kerr$ replaced by $\reflg(\kerr)$
\new{(the idea is that one may take $\kerr$ to be equal to a Kerr element $\kerr(m,\vec{a})$, as in the case of Theorem \ref{thm:main})},
then there exists
\begin{equation}\label{eq:uonD}
u\in\gx^1(\diamond)
\end{equation} 
for which \eqref{eq:MCandudatamain} holds on $\diamond$,
for which Part 1 and 2 
hold on $\diamond_+$ also
with $u$, $\kerr$ replaced by $\reflg(u)$, $\reflg(\kerr)$ respectively,
and for which Part 3
holds on $\diamond_+$ also with $u$ replaced by $\reflg(u)$.
Hence \eqref{eq:gthm} defines a metric 
$g$ everywhere on $\diamond$ (on $\diamond_+$, the metric $g$
is associated to $u$, and $\refl^*g$ to $\reflg(u)$).
The metric $g$ on $\diamond$ is
null geodesically complete, and the locus of future and past 
null infinity is $\fnullinf$ respectively $\pnullinf$
(this holds because both $g$ and $\refl^*g$ 
satisfy \ref{item:metric_nullcompleteness} on $\diamond_+$).

The element \eqref{eq:uonD} may be constructed as follows.
Apply the construction in the proof of Theorem \ref{thm:main}
once to $\kerr$ and $\udata$, which yields an element $u_+\in\gx^1(\diamond_+)$,
and once to $\reflg(\kerr)$ and $\refldatag(\udata)$, 
which yields an element $u_-\in\gx^1(\diamond_+)$.
Define
\begin{align*}
u = 
\begin{cases}
u_+ & \text{on $\diamond_+$}\\
\reflg(u_-) & \text{on $\refl(\diamond_+)$}
\end{cases}
\end{align*}
using $\diamond_+\cup\refl(\diamond_+)=\diamond$.
It only remains 
to show that $u$ is smooth also along $\diamonddata$:
\begin{itemize}
\item 
Smoothness along $\Dspdata_{\le\frac\sfix2}$:
On $\Dspop_{\le\frac\sfix2}$ one has
$u_\pm = v_{0\pm} + c_{0\pm}$,
where $v_{0\pm}$ and $c_{0\pm}$ are the elements in 
\eqref{eq:v0def} respectively \eqref{eq:c0}.
Define $v_0$ and $c_0$ by 
\begin{align*} 
v_0 = 
\begin{cases}
v_{0+} & \text{on $\Dspop_{\le\frac\sfix2}$}\\
\reflg(v_{0-}) & \text{on $\refl(\Dspop_{\le\frac\sfix2})$}
\end{cases}
&&
c_0 = 
\begin{cases}
c_{0+} & \text{on $\Dspop_{\le\frac\sfix2}$}\\
\reflg(c_{0-}) & \text{on $\refl(\Dspop_{\le\frac\sfix2})$}
\end{cases}
\end{align*}
Then $v_0$ is smooth along $\Dspdata_{\le\frac\sfix2}$ because 
$\kerr$ is smooth, and because,
denoting $\underline{w}=\udata-\kerrdata$, the element defined by 
\begin{align*}
w
=
\begin{cases}
\Pext(\underline{w}) 
	& \text{on $\Dspop_{\le\frac\sfix2}$}\\
(\reflg\circ\Pext\circ\refldatag)(\underline{w})
	& \text{on $\refl(\Dspop_{\le\frac\sfix2})$}
\end{cases}
\end{align*}
is smooth along $\Dspdata_{\le\frac\sfix2}$
(this follows from Definition \ref{def:extensionspinf},
see also example \eqref{eq:ExPext}).
We show that $c_0$ is smooth along $\Dspdata_{\le\frac\sfix2}$.
By construction,
$c_0$ is continuous along $\Dspdata_{\le\frac\sfix2}$
and smooth separately on $\Dspop_{\le\frac\sfix2}$ and on
$\refl(\Dspop_{\le\frac\sfix2})$.
Moreover, separately on $\Dspop_{\le\frac\sfix2}$ and on $\refl(\Dspop_{\le\frac\sfix2})$,
\begin{equation}\label{eq:v0c0D}
\dg (v_0+c_0) + \tfrac12[v_0+c_0,v_0+c_0] = 0
\end{equation}
using the fact that $\reflg$ commutes with $\dg$ and $[\cdot,\cdot]$.
Then smoothness of $c_0$ follows from the fact that 
\eqref{eq:v0c0D}, viewed as an equation for $c_0$
an open neighborhood of $\Dspdata_{\le\frac\sfix2}$ in $\diamond$, 
contains a symmetric hyperbolic subsystem with smooth coefficients. 
To see this, note that the gauge in Definition \ref{def:gauges_i0}
is actually defined on the subset of 
$\Dspcl\cup\refl(\Dspcl)$ given by $\s>0$ (not only on $\Dspcl$),
which contains an open neighborhood of $\Dspdata_{\le\frac\sfix2}$
in $\diamond$, and use the fact that
$c_0$ is contained in the gauge subspace \eqref{eq:gaugespaces_i0},
since the gauge subspace is, by construction,
invariant under $\reflg$ near $\Dspdata_{\le\frac\sfix2}$.
\item 
Smoothness along $\diamond_{0,\sfix}$:
This is shown very similarly to the first item.
Further recall the cutoff functions used in \eqref{eq:vmaindef}.
One must use the fact that the function
defined by $\phi_0$ on $\diamond_+$, and by $\refl^*\phi_0$
on $\refl(\diamond_+)$, is smooth,
which holds by \eqref{eq:cutoffrefl};
analogously for $\phi_1$, $\psi$.
\end{itemize}

%---------------------------------%

{\footnotesize
}

%---------------------------------%

\footnotesize
%\bigskip
\step

  \noindent\textsc{Department of Mathematics, Stanford University, Stanford CA, USA}\par\nopagebreak
  \noindent\textsc{Present affiliation: 
  Department of Mathematics, EPFL, Lausanne, Switzerland}\par\nopagebreak
 \noindent\textit{Email address:} \texttt{anuetzi@stanford.edu}

\end{document}